\newcommand{\bl}{\begin{lemma}}
\newcommand{\el}{\end{lemma}}
\def\beaa{\begin{eqnarray*}}
\def\eeaa{\end{eqnarray*}}
\def\ba{\begin{array}}
\def\ea{\end{array}}
\def\be#1{\begin{equation} \label{#1}}
\def \eeq{\end{equation}}
\newcommand{\gd}{{g \mkern-8mu /\ \mkern-5mu }}
\newcommand{\di}{\mbox{$d \mkern-9.2mu /$\,}}
\def\a{{\alpha}}
\def\be{{\beta}}
\def\ga{\gamma}
\def\Ga{\Gamma}
\def\de{\delta}
\def\la{\lambda}
\def\si{\sigma}
\def\om{\omega}
\def\Om{\Omega}
\def\th{\theta}
\def\ze{\zeta}
\def\varep{\varepsilon}
\def\pr{{\partial}}
\def\les{\lesssim}
\def\rh{{\rho}}
\def\ind{{\in \mkern-16mu /\ \mkern-4mu}}
\def\uQQ{{\underline{\mathcal{Q}}}}
\def\MMf{{\mathfrak{M}}}
\def\xdmf{{\dot{\mathfrak{X}}}}
\def\tth{{\tilde{\theta}}}
\def\XX{{\mathcal{X}}}
\def\YY{{\mathcal{Y}}}
\def\ZZ{{\mathcal{Z}}}
\def\xd{{\dot{x}}}
\def\mfq{{\dot{\mathfrak{c}}}}
\def\CCd{{\dot{\CC}}}
\def\gdcd{{\dot{\gd_c}}}
\def\Omd{{\dot{\Om}}}
\providecommand{\lrpar}[1]{\left( #1\right)}
\def\etabd{{\dot{\etab}}}
\def\etad{{\dot{\eta}}}
\def\rhd{{\dot{\rho}}}
\def\Kd{{\dot{K}}}
\def\Ldo{{\overset{\circ}{\Ld}}}
\def\phid{{\dot{\phi}}}
\def\omd{{\dot{\om}}}
\def\abd{{\dot{\ab}}}
\def\betad{{\dot{\beta}}}
\def\ombd{{\dot{\omb}}}
\def\ad{{\dot{\alpha}}}
\def\CC{{\mathcal C}}
\def\MM{{\mathcal M}}
\def\II{{\mathcal I}}
\def\FF{{\mathcal F}}
\def\HH{{\mathcal H}}
\def\GG{{\mathcal G}}
\def\OO{{\mathcal O}}
\def\DD{{\mathcal D}}
\def\PP{{\mathcal P}}
\def\QQ{{\mathcal Q}}
\def\HHb{\underline{\mathcal H}}
\def\D{{\bf D}}
\def\g{{\bf g}}
\def\SSS{{\mathbb{S}}}
\def\RRR{{\mathbb R}}
\def\f12{{\frac 1 2}}
\DeclareMathOperator{\Div}{\mathrm{div}}
\DeclareMathOperator*{\Curl}{\mathrm{curl}}
\def\half{\frac{1}{2}}
\newcommand{\pd}{\pd \mkern-9mu/\ \mkern-7mu}
\newcommand{\Lied}{\mathcal{L} \mkern-9mu/\ \mkern-7mu}
\newcommand{\DDd}{\DD \mkern-10mu /\ \mkern-5mu}
\newcommand{\Du}{\underline{D}}
\newcommand{\Divd}{\Div \mkern-17mu /\ }
\newcommand{\Divdo}{{\overset{\circ}{\Div \mkern-17mu /\ }}}
\newcommand{\Curld}{\Curl \mkern-17mu /\ }
\newcommand{\Nd}{\nabla \mkern-13mu /\ }
\newcommand{\Ld}{\triangle \mkern-12mu /\ }
\newcommand{\iin}{\in \mkern-16mu /\ \mkern-5mu}
\newcommand{\trchi}{{\tr \chi}}
\newcommand{\trchib}{{\tr \chib}}
\newcommand{\chihd}{{\dot{\chih}}}
\newcommand{\chibhd}{{\dot{\chibh}}}
\newcommand{\omtrchid}{\dot{(\Om \tr \chi)}}
\newcommand{\omtrchibd}{\dot{(\Om \tr \chib)}}
\def\ni{\noindent}
\def\Lb{{\,\underline{L}}}
\def\tr{\mathrm{tr}}
\def\chih{{\widehat \chi}}
\def\chib{{\underline \chi}}
\def\chibh{{\underline{\chih}}}
\def\etab{{\underline \eta}}
\def\omb{{\underline{\om}}}
\def\aa{{\underline{\a}}}
\def\th{\theta}
\def\f{\widetilde{f}}
\def\Rbf{{\mathbf{R}}}
\newcommand{\gac}{{\overset{\circ}{\ga}}}
\newcommand{\ab}{{\underline{\alpha}}}
\newcommand{\beb}{{\underline{\beta}}}
\newcommand{\DDdo}{{\overset{\circ}{\DDd}}}
\newcommand{\ilr}{{\int\limits_1^v}}
\newtheorem{theorem}{Theorem}[section]
\newtheorem{lemma}[theorem]{Lemma}
\newtheorem{proposition}[theorem]{Proposition}
\newtheorem{corollary}[theorem]{Corollary}
\newtheorem{definition}[theorem]{Definition}
\newtheorem{remark}[theorem]{Remark}
\numberwithin{equation}{section}
\def\@setthanks{\vspace{-\baselineskip}\def\thanks##1{\@par##1\@addpunct.}\thankses}
\begin{document}

\title[The characteristic gluing problem. Linear and non-linear analysis]{The characteristic gluing problem  \\ for the Einstein vacuum equations. \\ Linear and non-linear analysis}
\author[S. Aretakis, S. Czimek and I. Rodnianski]{Stefanos Aretakis $^{(1)}$, Stefan Czimek $^{(2)}$, and Igor Rodnianski $^{(3)}$} 

\thanks{\noindent$^{(1)}$ Department of Mathematics, University of Toronto, 40 St George Street, Toronto, ON, Canada, \texttt{aretakis@math.toronto.edu}. \\
$^{(2)}$ Institute for Computational and Experimental Research in Mathematics, Brown University, 121 South Main Street, Providence, RI 02903, USA,  \texttt{stefan\_czimek@brown.edu}. \\
$^{(3)}$ Department of Mathematics, Princeton University, Fine Hall, Washington Road, Princeton, NJ 08544, USA, \texttt{irod@math.princeton.edu}. }

\begin{abstract} This is the second paper in a series of papers adressing the characteristic gluing problem for the Einstein vacuum equations. We solve the codimension-$10$ characteristic gluing problem for characteristic data which are close to the Minkowski data. We derive an infinite-dimensional space of gauge-dependent charges and a $10$-dimensional space of gauge-invariant charges that are conserved by the linearized null constraint equations and act as obstructions to the gluing problem. The gauge-dependent charges can be matched by applying angular and transversal gauge transformations of the characteristic data. By making use of a special hierarchy of radial weights of the null constraint equations, we construct the null lapse function and the conformal geometry of the characteristic hypersurface, and we show that the aforementioned charges are in fact the only obstructions to the gluing problem. Modulo the gauge-invariant charges, the resulting solution of the null constraint equations is $C^{m+2}$ for any specified integer $m\geq0$ in the tangential directions and $C^2$ in the transversal directions to the characteristic hypersurface. We also show that higher-order (in all directions) gluing is possible along bifurcated characteristic hypersurfaces (modulo the gauge-invariant charges). 
\end{abstract}

\maketitle
\setcounter{tocdepth}{3}
\tableofcontents
\section{Introduction} \label{SECintroduction} 

\ni The \emph{gluing problem} in general relativity asks to connect two given spacetimes across a gluing region. Technically speaking, one aims at solving the constraint equations with two prescribed initial data sets. The obstructions to gluing provide insights into the intrinsic rigidity of the Einstein equations. In their ground-breaking work, Corvino \cite{Corvino} and Corvino--Schoen \cite{CorvinoSchoen} pioneered the study of the Riemannian gluing problem (for spacelike initial data sets). In particular, their gluing construction shed light on the importance of the interplay between the rigidity and the flexibility of the geometric character of the Einstein equations.

In \cite{ACR3} we initiated the study of \emph{the gluing problem for characteristic initial data} for the Einstein vacuum equations. The characteristic gluing problem exhibits various novel features. For example, gluing along characteristic hypersurfaces is based on solving the so-called \emph{null constraint equations} which are of transport character, whereas the previously studied gluing problem for spacelike initial data requires to analyze the elliptic Riemannian constraint equations. Moreover, in the characteristic gluing construction, the null lapse function and the conformal geometry of the characteristic hypersurface can be freely prescribed.

The present paper provides the full details on the gluing of characteristic data which are close to the Minkowski data. Working close to Minkowski spacetime is natural in the sense that by rescaling, it corresponds to gluing near spacelike infinity in an asymptotically flat spacetime, see \cite{ACR3,ACR2}. We identify all obstructions to gluing (at the level of $C^2$-gluing for the metric components) and show that they are stemming from \emph{conservation laws of the linearized null constraint equations}. We show that these conservation laws determine a $10$-dimensional space of so-called \emph{gauge-invariant charges} and an infinite-dimensional space of so-called \emph{gauge-dependent charges}. We prove that the gauge-dependent charges can be matched by applying transversal perturbations and gauge transformations to the characteristic data. In particular, gauge transformations alone are not sufficient to match all gauge-dependent charges. In \cite{ACR3,ACR2} we geometrically interpret the remaining $10$-dimensional space of gauge-invariant charges by relating them to the ADM energy, linear momentum, angular momentum and center-of-mass, and we use this identification to glue asymptotically flat spacetimes to a member of the Kerr family.

This introduction is structured as follows: In Section \ref{SECnontechnicalINTRO} we introduce the characteristic gluing problem and present our main results in non-technical terms. In Section \ref{SEChistory9999} we provide an overview of the literature for the gluing problem. In Section \ref{SECnullGeometryintro9999} we set up the null geometry framework and the characteristic initial value problem for the Einstein vacuum equations and in Section \ref{SECintroStatementMainTheorem} we present a more formal version of our main theorem. In Sections \ref{SEClinearizedCHARgluing999intro} and \ref{SECintroNONLINEARgluing} we provide the main ideas of our methods and in Section \ref{SECintroTransversalDerivativesStatement} we discuss characteristic gluing along two null hypersurfaces bifurcating from an auxiliary sphere.

\subsection{Introduction to the characteristic gluing problem and overview of results} \label{SECnontechnicalINTRO}
In this section we introduce in a colloquial, non-technical way the characteristic gluing problem and the main results of this paper.

Consider the null hypersurfaces $(\HH_1, \HHb_1)$ and $(\HH_2, \HHb_2)$ emanating from two spheres $S_1$ and $S_2$, respectively, in two vacuum spacetimes $\MM_1$ and $\MM_2$. %
\begin{figure}[H]
\begin{center}
\includegraphics[width=9.5cm]{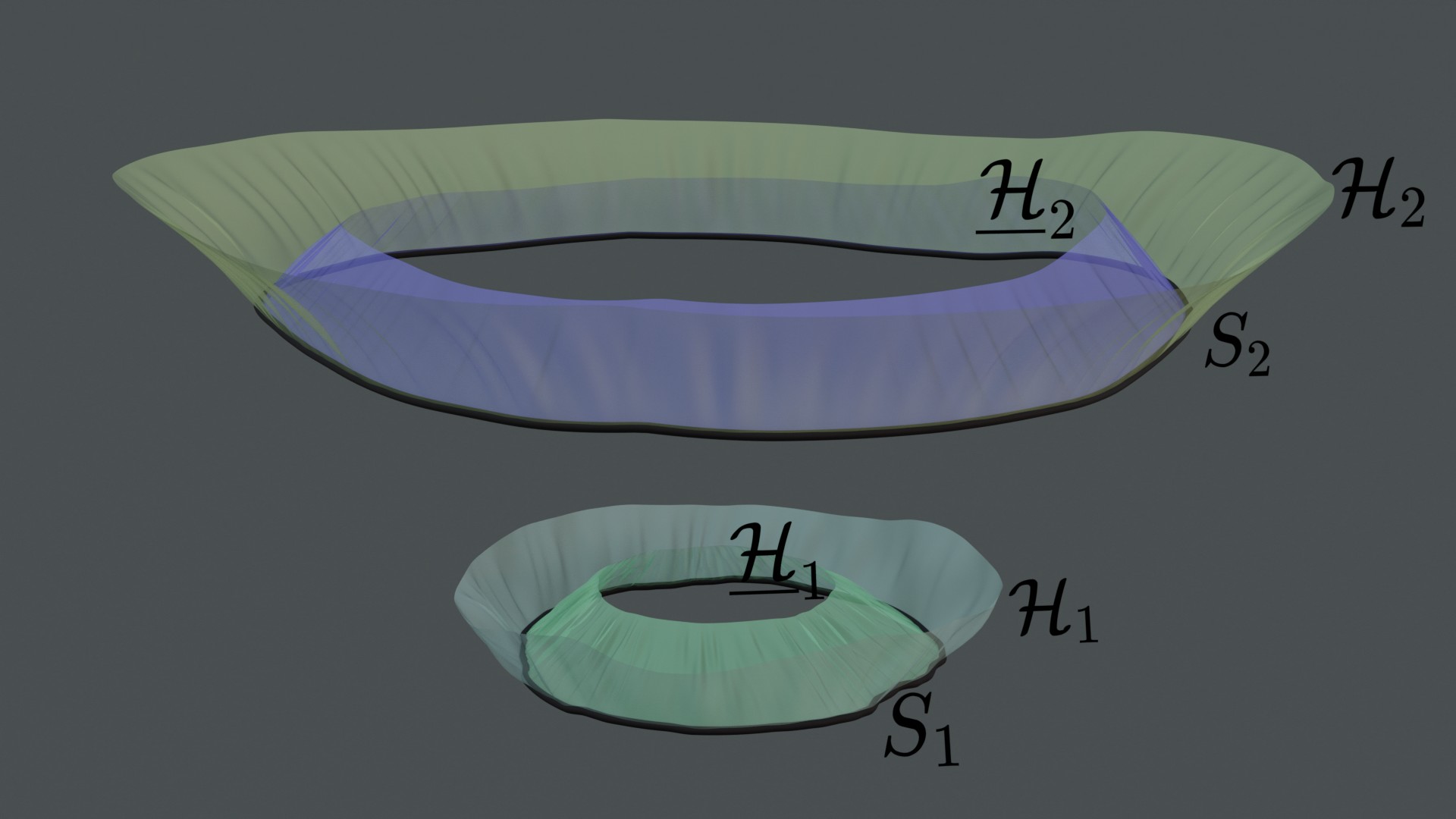} 
\vspace{0.4cm}
\caption{The null hypersurfaces $(\HH_1, \HHb_1)$ and $(\HH_2, \HHb_2)$ emanating from two spheres $S_1$ and $S_2$, in the vacuum spacetimes $\MM_1$ and $\MM_2$.}
\end{center}
\label{FIG1}
\end{figure}
\vspace{-0.8cm}

\ni A first formulation of the characteristic gluing problem asks if there exists a characteristic hypersurface $\HH$ that satisfies the null constraint equations whose characteristic data agrees on its past boundary with the data on $\HH_1$, and on its future boundary with the data on $\HH_2$? If not, what are the obstructions to the existence of such a hypersurface? 

\begin{figure}[H]
\begin{center}
\includegraphics[width=9.5cm]{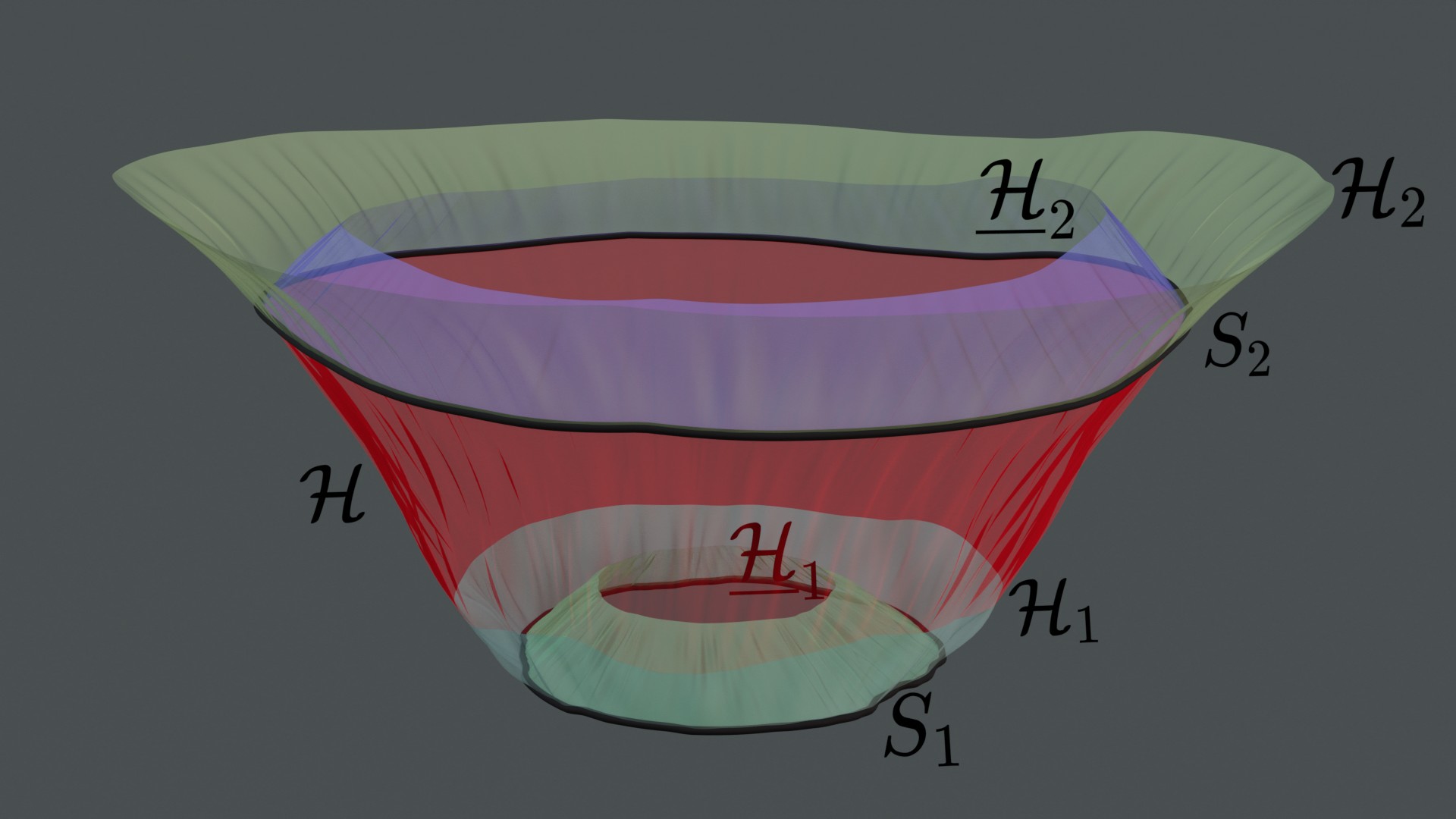} 
\vspace{0.4cm}
\caption{The red hypersurface glues the characteristic initial data for the two vacuum spacetimes $\MM_1$ and $\MM_2$.}
\end{center}
\label{FIG2}
\end{figure}
\vspace{-0.8cm}

\ni Let us denote by $x_1$ and $x_2$ the restriction of the metric components, the Christoffel symbols and the Riemann curvature components of the spacetime metrics of $\MM_1$ and $\MM_2$ to the spheres $S_1$ and $S_2$, respectively (with respect to local double null coordinate systems). We will refer to $x_1$ and $x_2$ as the \emph{sphere data} on $S_1$ and $S_2$ respectively.

\begin{figure}[H]
\begin{center}
\includegraphics[width=9.5cm]{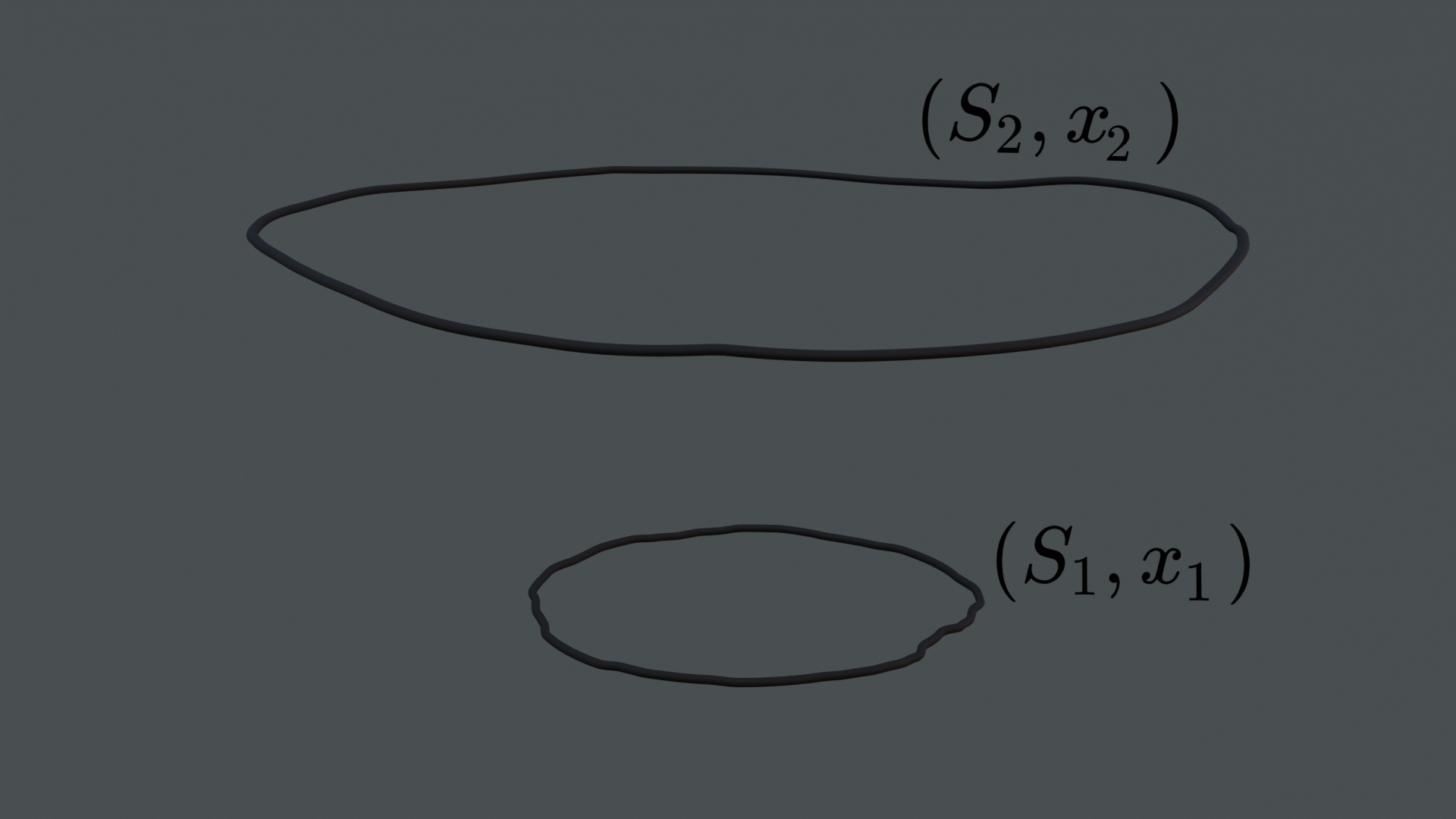} 
\vspace{0.4cm}
\caption{On each sphere $S_1$ and $S_2$ we consider the restrictions $x_1$ and $x_2$ of the metric components, the Christoffel symbols and the Riemann curvature components of the metrics of the ambient spacetimes $\MM_1$ and $\MM_2$.}
\end{center}
\label{FIG3}
\end{figure}
\vspace{-0.8cm}

\ni A reduction of the gluing problem can be formulated as follows: Given two spheres $S_1$ and $S_2$ equipped with sphere data $x_1$ and $x_2$, respectively, construct a solution to the null constraint equations along a null hypersurface $\HH_{[1,2]}$ whose boundary sections admit the sphere data $x_1$ and $x_2$. Such characteristic gluing of sphere data is at the level of $C^2$-gluing for the metric components.

\begin{figure}[H]
\begin{center}
\includegraphics[width=9.5cm]{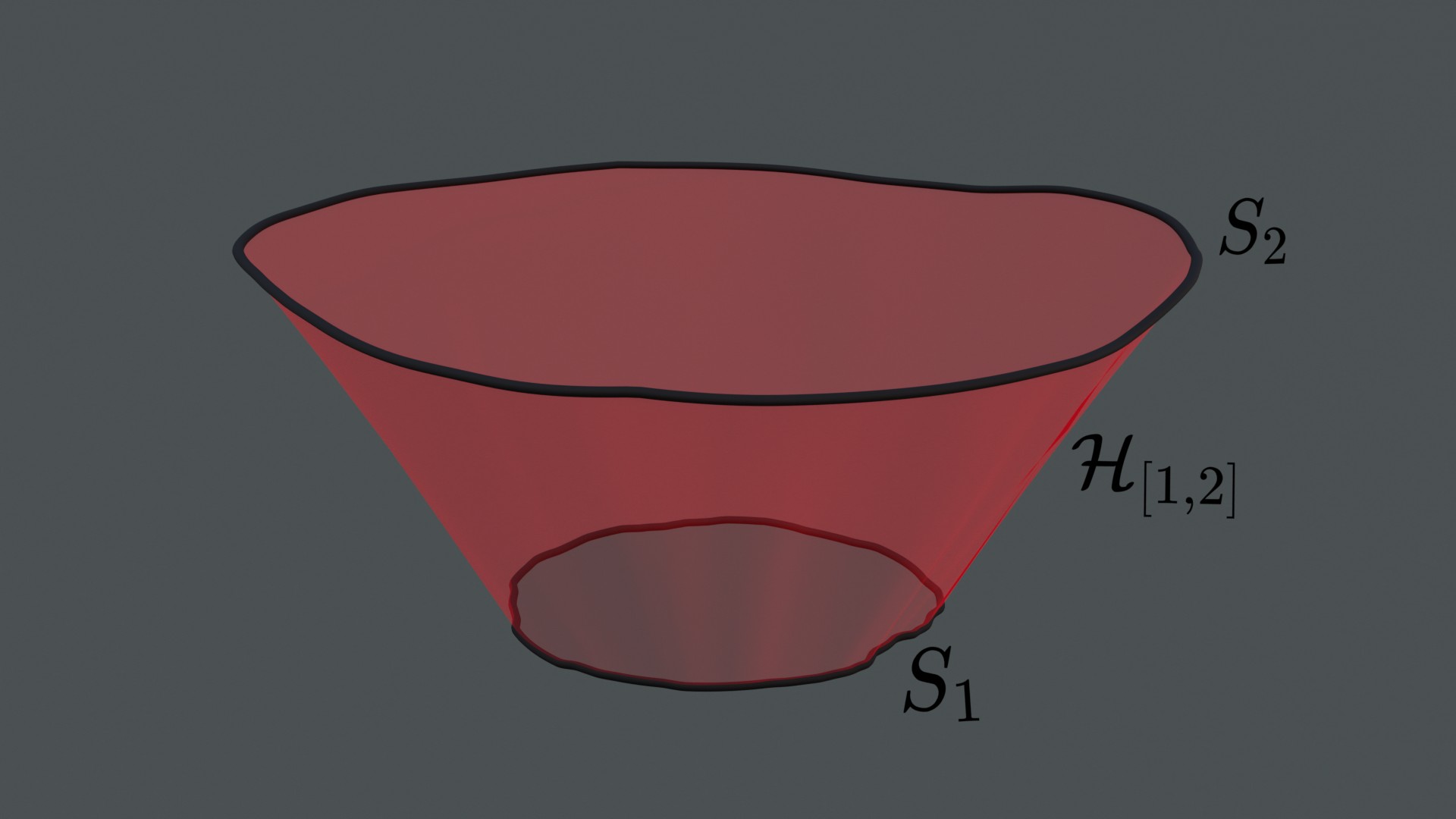} 
\vspace{0.2cm}
\caption{The characteristic gluing problem for sphere data $x_1$ and $x_2$.}
\end{center}
\label{FIG4}
\end{figure}
\vspace{-0.5cm}

\ni The characteristic gluing problem as stated above can not always be solved. To start, one obvious obstruction is imposed by the monotonicity property of the Raychaudhuri equation along null hypersurfaces. A more subtle obstruction is imposed by the existence of an infinite-dimensional space of conservation laws for the linearized null constraint equations at Minkowski spacetime.

To address this hindrance, we need to take into account the change of sphere data under \emph{sphere perturbations} and \emph{sphere diffeomorphisms}. We define \emph{sphere perturbations} of the sphere data $x_2$ on $S_2$ in the vacuum spacetime $\MM_2$ as follows. Consider the null hypersurface $\HHb_2$ in $\MM_2$ through $S_2$ that is conjugate (that is, transversal) to the null hypersurface $\HH_2$ which the gluing hypersurface should attach to. Then the sphere data $x'_2$ on a section $S_2'$ of $\HHb_2$ is called a sphere perturbation of $x_2$ on $S_2$. Sphere diffeomorphisms of sphere data is defined by pulling back the sphere data under a diffeomorphism of the sphere.

\begin{figure}[H]
\begin{center}
\includegraphics[width=9.5cm]{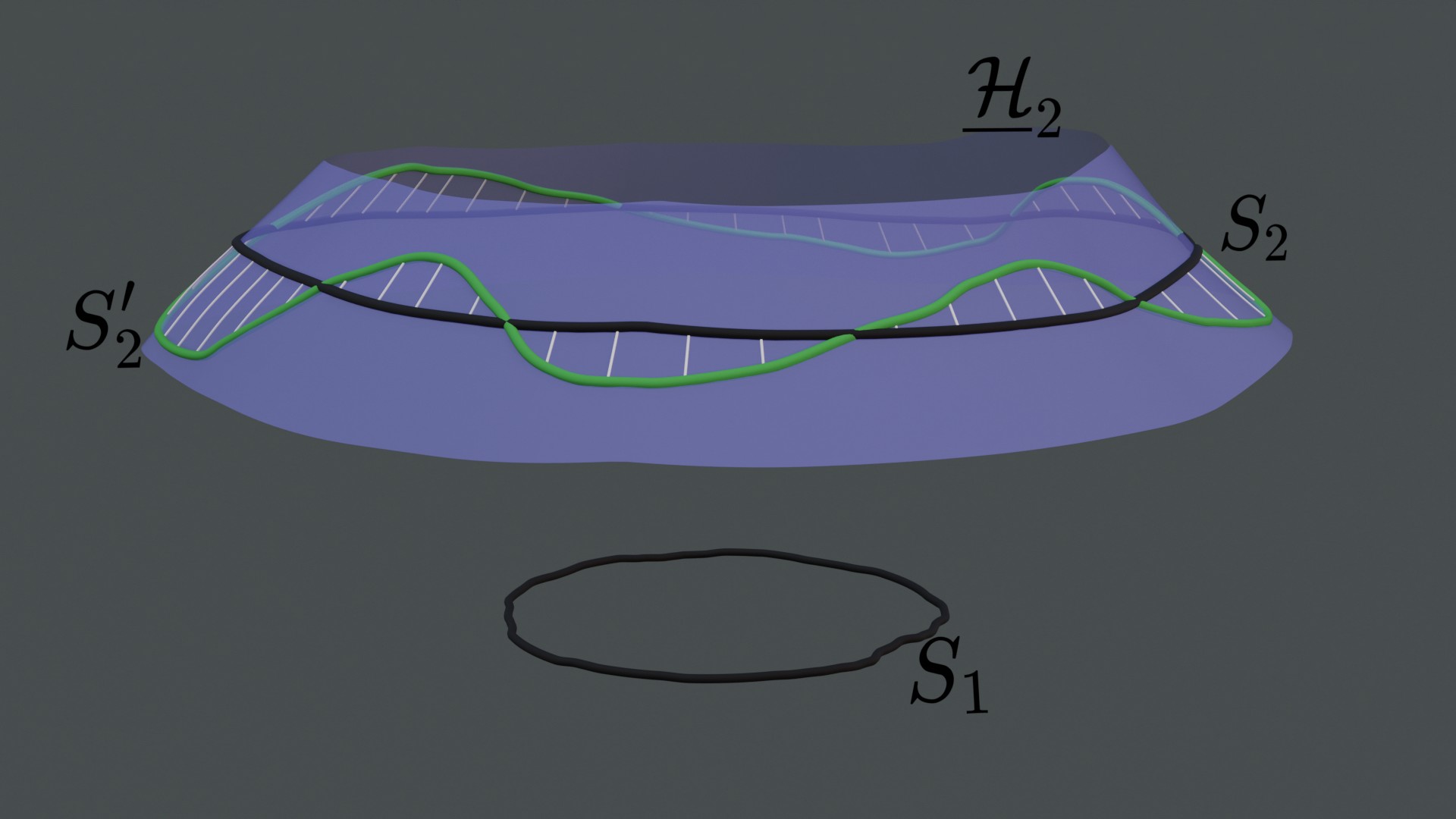} 
\vspace{0.2cm}
\caption{Transversal perturbations of the sphere data $x_2$ on $S_2$ in $\MM_2$.}
\end{center}
\label{FIG6}
\end{figure}
\vspace{-0.5cm}

\ni We arrive at the following reformulation of the characteristic gluing problem.\\

\ni Given sphere data $x_1$ and $x_2$ on two spheres $S_1$ and $S_2$ in vacuum spacetimes $\MM_1$ and $\MM_2$, respectively, construct:
\begin{enumerate}
\item a sphere perturbation $S'_2$ of $S_2$ with sphere data $x'_2$ (subject also to a sphere diffeomorphism),
\item a solution to the null constraint equations along a null hypersurface $\HH'_{[1,2]}$ whose boundary sections admit the sphere data $x_1$ and $x_2'$.
\end{enumerate}  

\begin{figure}[H]
\begin{center}
\includegraphics[width=9.5cm]{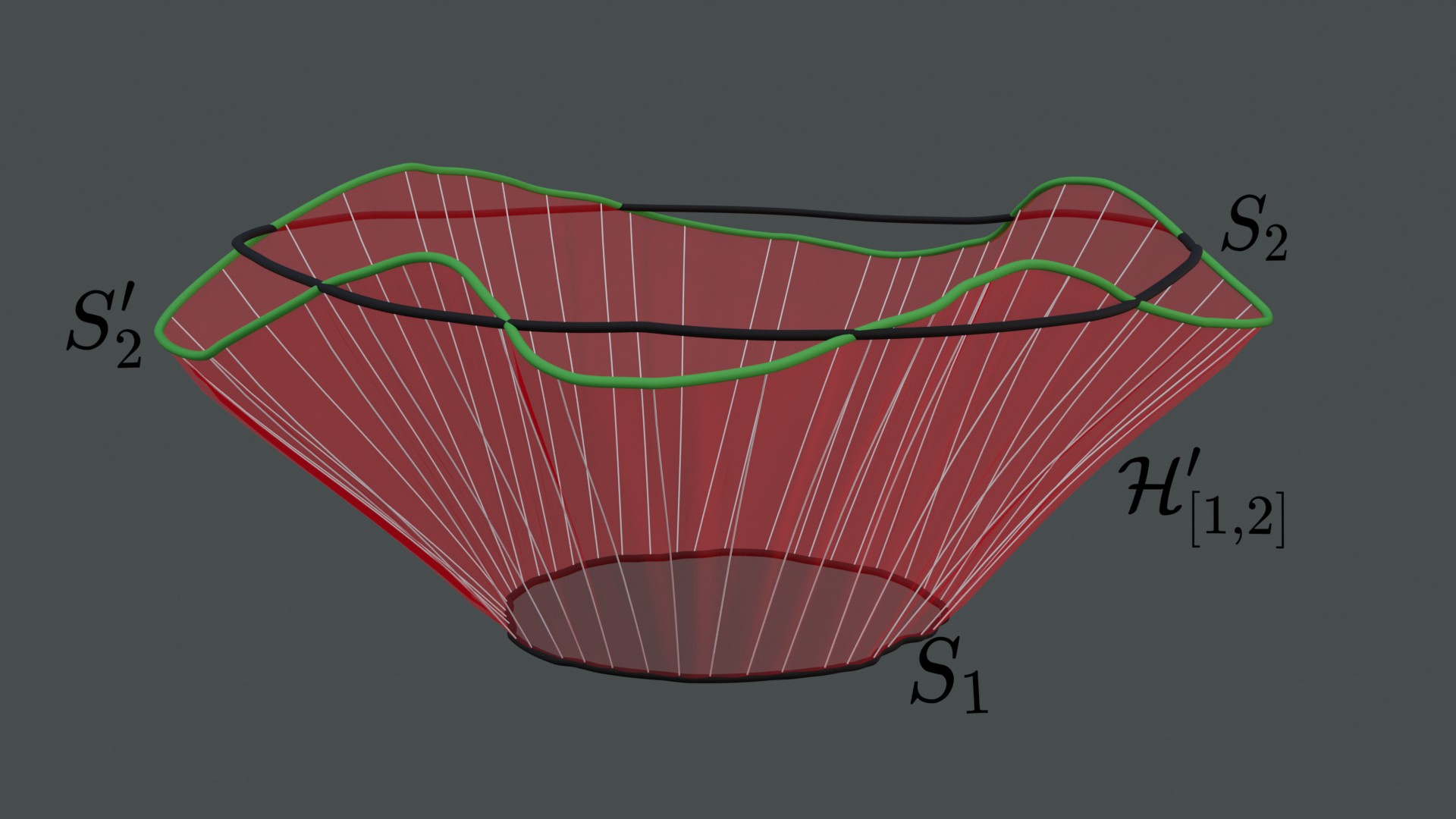} 
\vspace{0.2cm}
\caption{The characteristic gluing problem after taking into account transversal gauge transformations of the sphere data.}
\end{center}
\label{FIG7}
\end{figure}
\vspace{-0.8cm}

\ni We can now present a first version of the main result of this paper. \\

\ni \textbf{Theorem} (Codimension-$10$ perturbative characteristic gluing, version 1) \emph{Consider sphere data $x_1$ and $x_2$ on two spheres $S_1$ and $S_2$ which are sufficiently close to the sphere data on the round spheres of radius $1$ and $2$ in Minkowski spacetime, respectively. Then, modulo a $10$-dimensional space of charges, characteristic gluing in the above sense is possible. In other words, there is a null hypersurface $\HH'_{[1,2]}$, connecting the sphere data $x_1$ on $S_1$ and a sphere perturbation $S_2'$ of the sphere $S_2$ with sphere data $x_2'$ (subject also to a sphere diffeomorphism), satisfying the null constraint equations such that the sphere data $x_1$ and $x'_2$, excluding the $10$ charges explicitly defined at $S_2'$, are glued.}\\

\ni \textbf{Remark} \emph{In the above theorem, it is equivalently possible to perturb the sphere $S_1$ instead of the sphere $S_2$. Moreover, higher-order derivatives tangential to the gluing hypersurface can be glued without further obstructions.}\\

\ni It is important to underline that the 10 charges correspond to 10 constants and not to 10 functions  on the sphere $S_2'$. In fact, in this paper, we identify an infinite-dimensional space of charges determined by the sphere data $x\vert_S$ of the sections $S$ of the null hypersurface $\HH'_{[1,2]}$ and we show that it splits into a $10$-dimensional space of gauge invariant charges and an infinite-dimensional space of gauge-dependent charges.

\begin{figure}[H]
\begin{center}
\includegraphics[width=9.5cm]{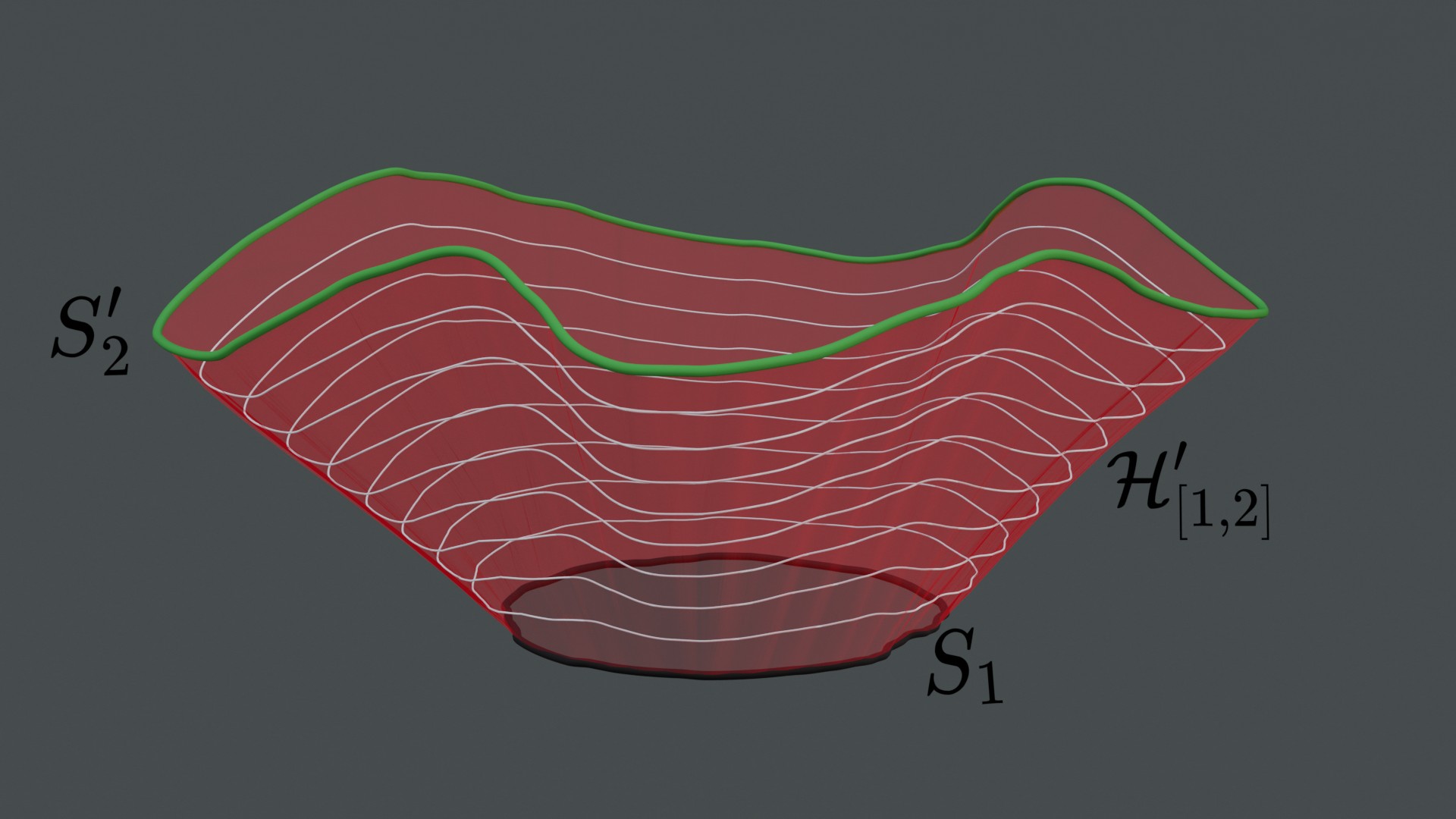} 
\vspace{0.2cm}
\caption{There is an infinite-dimensional space of charges defined on the sections of the characteristic hypersurface $\HH'_{[1,2]}$. These charges are conserved by the linearized null constraint equations at Minkowski.}
\end{center}
\label{FIG8}
\end{figure}

\ni We prove that the gauge-dependent charges can be matched by an appropriate choice of sphere perturbation along $\HHb_2$ and sphere diffeomorphism. The gauge-invariant charges can in general not be adjusted by such sphere data perturbations, as they change quadratically under perturbations while the gauge-dependent charges transform linearly. Gluing of the gauge-invariant charges is achieved in \cite{ACR3,ACR2}.
 
Two remarks regarding the above theorem are in order:

\begin{enumerate}
\item \textbf{Sphere perturbations and sphere diffeomorphisms.} We glue from $S_1$ to a transversal perturbation $S_2'$ of $S_2$, not to $S_2$ itself, and the sphere data on $S_2'$ is also subject to a sphere diffeomorphism. Sphere diffeomorphisms are gauge transformations intrinsic to the given sphere data. On the other hand, sphere perturbations are extrinsic to the given sphere data but they are intrinsic gauge transformations of the ambient spacetime; see, for example, the linearized pure gauge solutions in \cite{DHR}.
\item \textbf{Transversal regularity.} Higher-order derivatives of the sphere data which are transversal to the gluing null hypersurface are not glued at $S_2'$. This is due to the existence of additional higher-order conserved charges which involve these transversal derivatives. 
\end{enumerate}

\ni Our next theorem resolves both of these issues by gluing along two null hypersurfaces bifurcating from an auxiliary sphere $S_{\mathrm{aux}}$. The advantage of such an approach is that we can first glue $S_1$ to the auxiliary sphere $S_{\mathrm{aux}}$ in the "ingoing" direction and then glue $S_{\mathrm{aux}}$ to the sphere $S_2$ in the "outgoing" direction.

\begin{figure}[H]
\begin{center}
\includegraphics[width=9.5cm]{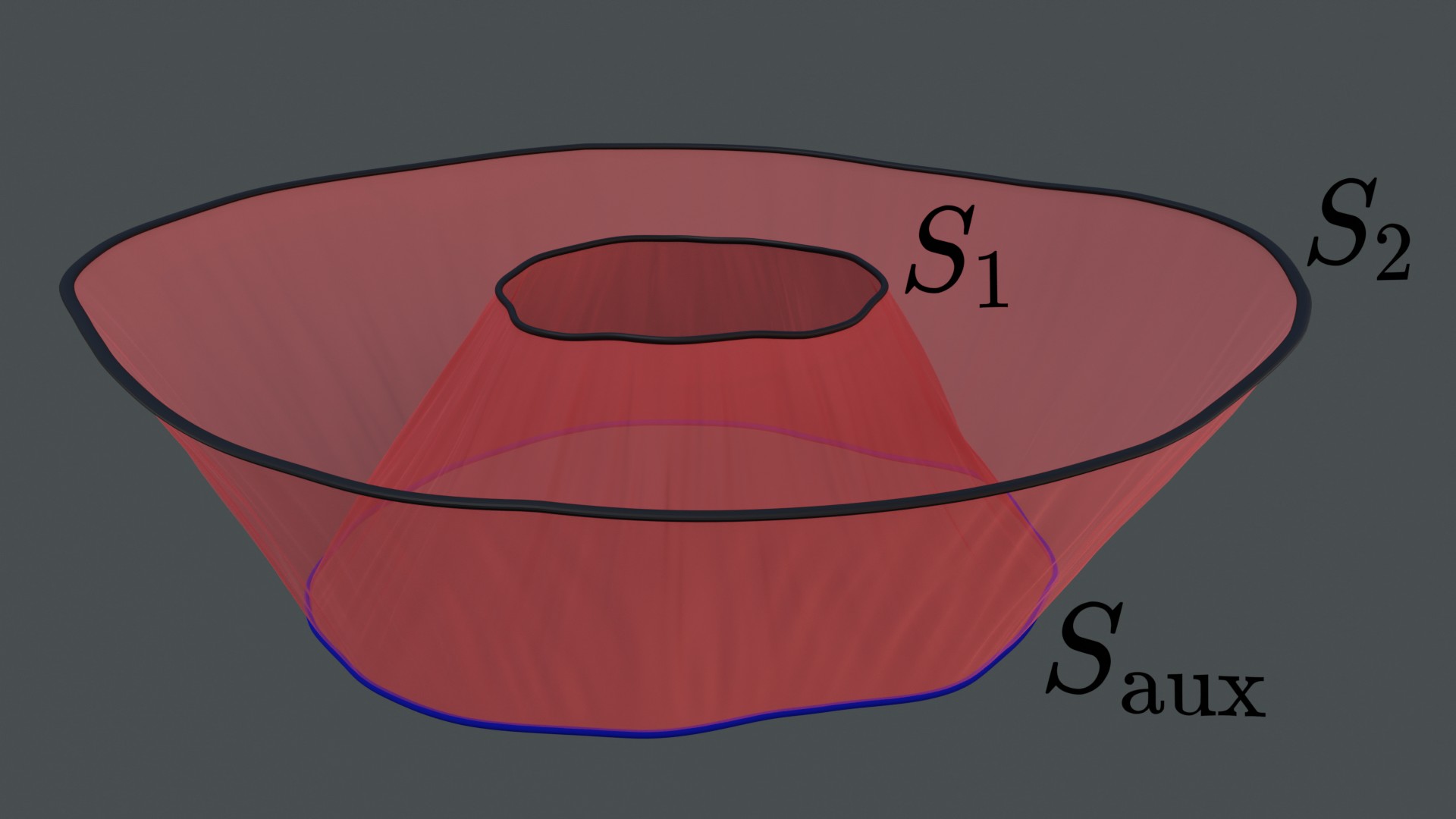} 
\caption{First illustration.}
\end{center}
\label{FIG9}
\end{figure}
\vspace{-0.5cm}

\ni \textbf{Theorem} (Codimension-$10$ bifurcate characteristic gluing, version 1) \emph{Let $m\geq0$ be an integer. The problem of characteristic gluing along two null hypersurfaces bifurcating from an auxiliary sphere can be solved close to Minkowski for $m^{\mathrm{th}}$-order derivatives in all directions (up to the $10$-dimensional space of gauge-invariant charges) without perturbing either of the spheres $S_1$ and $S_2$.}

\subsection{Previous gluing constructions} \label{SEChistory9999} 

\subsubsection{Gluing constructions in general relativity} \ni Gluing constructions in general relativity are, up to now, mainly focused on the gluing of \emph{spacelike} initial data satisfying the elliptic constraint equations. 

Gluing constructions based on the gluing of connected sums (see the works \cite{SchoenYauPSCM,GromovL} on codimension-$3$ surgery for manifolds of positive scalar curvature) were studied by Chru\'sciel--Isenberg--Pollack \cite{CIP1,CIP2}, Chru\'sciel--Mazzeo \cite{ChruscielMazzeo}, Isenberg--Maxwell--Pollack \cite{IMP3}, Isenberg--Mazzeo--Pollack \cite{IMP1,IMP2}.

On the other hand, in the ground breaking work of Corvino \cite{Corvino} and Corvino--Schoen \cite{CorvinoSchoen}, the \emph{geometric under-determinedness} of the spacelike constraint equations is used to study the (codimension-$1$) gluing problem. 
In particular, they showed that asymptotically flat spacelike initial data can be glued across a compact region to exactly Kerr spacelike initial data. Further constructions and refinements based on this approach were proved by Chru\'sciel--Delay \cite{ChruscielDelay1,ChruscielDelay2}, Chru\'sciel--Pollack \cite{ChruscielPollack}, Cortier \cite{Cortier}, Hintz \cite{Hintz}. Another milestone was the work by Carlotto--Schoen \cite{CarlottoSchoen} which showed that it is possible to glue spacelike initial data -- along a non-compact cone -- to spacelike initial data for Minkowski.

\subsubsection{Characteristic gluing for the wave equation} The characteristic gluing problem was studied before by the first author \cite{CiteGluing} in the much simpler setting of the linear homogeneous wave equation on general (but fixed) Lorentzian manifolds. Similarly to the present paper, \cite{CiteGluing} determined that the only obstructions to characteristic gluing are conservation laws along null hypersurfaces. Moreover, it was shown that a necessary and sufficient condition for the existence of such conservation laws is that the kernel of an elliptic operator defined on the null hypersurface \cite{CiteElliptic} is non-trivial. Hence, for the linear wave equation, \cite{CiteGluing} derived a geometric characterization of all obstructions to characteristic gluing along a general null hypersurface. Specific examples of null hypersurfaces which admit conservation laws for the wave equation are 
\begin{enumerate}
\item the standard cones in Minkowski spacetime, 
\item the null infinity of asymptotically flat spacetimes, and 
\item the event horizon of extremal black holes. 
\end{enumerate}
The conserved charges in cases (2) and (3) above have important applications in the study of the evolution of scalar perturbations on black hole spacetimes. Specifically the charges along null infinity (also known as the Newman--Penrose constants) are related to the leading-order coefficients of the late-time asymptotics of solution to the wave equation on Schwarzschild \cite{CiteSS,CitePriceLaw} and Kerr \cite{CiteKerr} spacetimes. Similar results were recently obtained for the Dirac equation on Schwarzschild in \cite{Ma}. On the other hand, the conservation laws on extremal horizons are the source of the horizon instability of extremal black holes \cite{CiteExtremal1,CiteExtremal2,CiteExtremal3}. It is worth noting that even though the latter charges are defined on sections of the extremal event horizon, they can be computed by far-away observers at null infinity and hence serve as potential observational signatures of extremal black holes \cite{Easymptotics,PRL,Khanna}. 

\subsection{Double null coordinates} \label{SECnullGeometryintro9999} 
In this section we outline the geometric framework of this paper to provide a first version of our main theorem in Section \ref{SECintroStatementMainTheorem}.

Let $S$ be a spacelike $2$-sphere in a spacetime $(\MM,\g)$, and let ${u}_0$ and ${v}_0$ with $v_0>u_0$ be two real numbers. Let $u$ and $v$ be two optical functions of $(\MM,\g)$ such that $S=\{ {u}={u}_0, {v}={v}_0\}$, and for real numbers ${u}_1$ and ${v}_1$, the hypersurfaces
\begin{align*} 
\begin{aligned} 
\HH_{{u}_1} := \{ {u}= {u}_1\}, \,\, \HHb_{{v}_1} := \{ {v}= {v}_1\},
\end{aligned} 
\end{align*}
are outgoing and ingoing null hypersurfaces, respectively. The union of these null hypersurfaces form a so-called \emph{double null foliation} of $(\MM,\g)$.

\begin{figure}[H]
\begin{center}
\includegraphics[width=9.5cm]{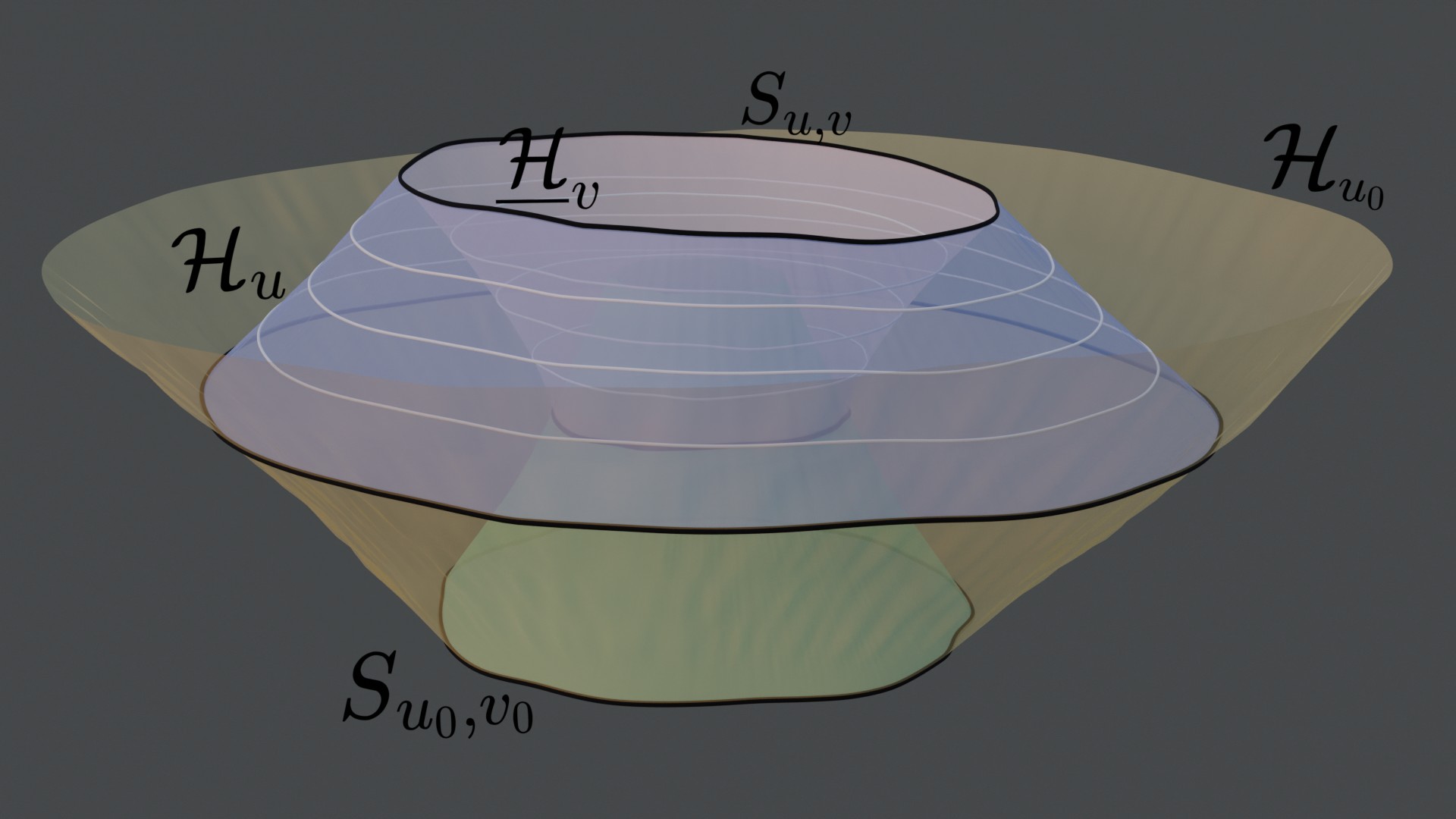} 
\vspace{0.4cm}
\caption{The double null foliation formed by the level sets of the optical functions $u$ and $v$.}
\end{center}
\label{FIGdoublenullS2pic1}
\end{figure}
\vspace{-0.2cm}

\ni On the sphere
\begin{align} 
\begin{aligned} 
S_{u_0,v_0} := \{ {u}={u}_0, {v}={v}_0\},
\end{aligned} \label{EQdefSphereinCoordinates}
\end{align}
we define local angular coordinates $(\th^1,\th^2)$, and extend them everywhere by propagating them first along the null generators of $\HH_{u_0}$ and then of $\HHb_v$ for all $v$, as in the Figure \ref{FIGdoublenullS2pic2} below. The resulting coordinate system $( u,  v, \th^1,\th^2)$ is called a \emph{double null coordinate system}.

\begin{figure}[H]
\begin{center}
\includegraphics[width=9.5cm]{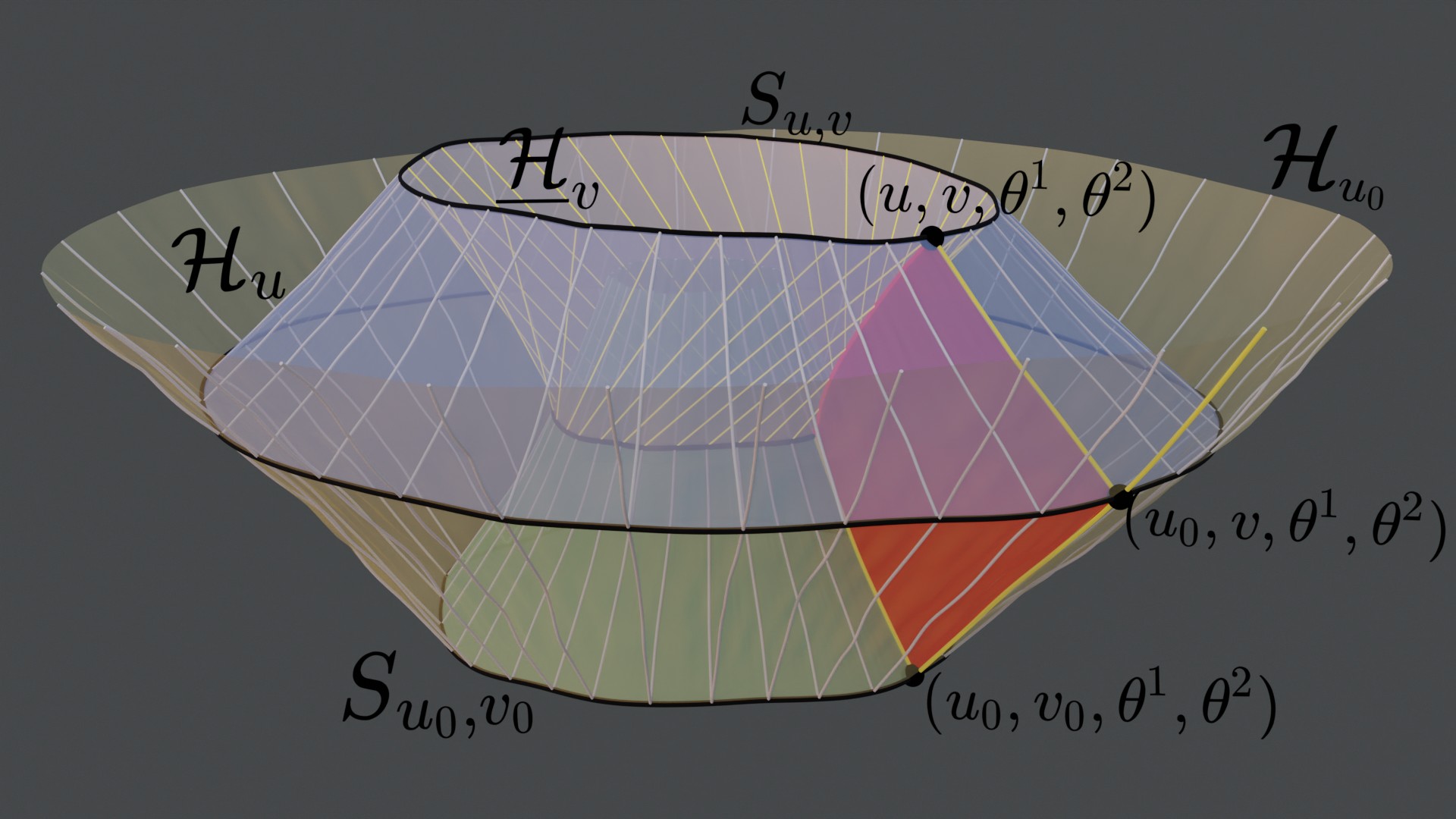} 
\vspace{0.4cm}
\caption{The construction of double null coordinates $(u,v,\th^1,\th^2)$ on $(\MM,\g)$.}
\end{center}
\label{FIGdoublenullS2pic2}
\end{figure}
\vspace{-0.5cm}

\ni With respect to double null coordinates, it holds that 
\begin{align} 
\begin{aligned} 
\g = -4 {\Om}^2 d u d v + \gd_{AB} \lrpar{d\th^A + {b}^A d v}\lrpar{d\th^B + {b}^B d v},
\end{aligned} \label{EQintroDoubleNull123}
\end{align}
where
\begin{itemize}
\item the scalar function $\Om$ is the so-called \emph{null lapse},
\item $\gd_{AB}$ is the induced Riemannian metric on the $2$-spheres $S_{{u},{v}}$ of constant $( u, v)$,
\item the $S_{ u,  v}$-tangent vectorfield ${b}$ is the so-called \emph{shift vector}. By construction it holds that $b$ vanishes on $u=u_0$.
\end{itemize}

\ni The induced metric $\gd$ can be expressed as
\begin{align*} 
\begin{aligned} 
\gd = \phi^2 \gd_c, 
\end{aligned} 
\end{align*}
where with respect to the coordinates $(\th^1,\th^2)$,
\begin{align*} 
\begin{aligned} 
\phi^2 :=\frac{\sqrt{\gd}}{\sqrt{\gac}}, \,\, \gd_c := \phi^{-2} \gd, \,\, \gac := (d\th^1)^2 + \sin^2\th^1 (d\th^2)^2.
\end{aligned} 
\end{align*} 

\ni Define the null vectors
\begin{align*} 
\begin{aligned} 
{L} := \pr_{ v} + {b}, \,\, {\Lb} := \pr_{ u}, \,\, \widehat{ L}:= {\Om}^{-1}  L, \,\, \widehat{ \Lb}:= {\Om}^{-1}  \Lb,
\end{aligned} 
\end{align*}
and let for $A=1,2$, $\pr_A := \pr_{\th^A}$. Then for $A,B=1,2$, the \emph{Ricci coefficients} are defined by
\begin{align} 
\begin{aligned} 
{\chi}_{AB} :=& \g(\D_A \widehat{ L},\pr_B), & {\chib}_{AB} :=& \g(\D_A \widehat{ \Lb},\pr_B), & {\zeta}_A :=& \half \g(\D_A \widehat{L}, \widehat{\Lb}), \\
\eta :=& \zeta + \di \log \Om, & \om :=& \D_L \log \Om, & \omb :=& \D_{\Lb} \log \Om,
\end{aligned} \label{EQdefRicciINTRO}
\end{align}
where $\D$ denotes the covariant derivative on $(\MM,\g)$, and define the \emph{null curvature components} by
\begin{align} 
\begin{aligned} 
\alpha_{AB} :=& \Rbf(\pr_A,\widehat{ L}, \pr_B, \widehat{ L}), & \beta_A :=& \half \Rbf(\pr_A, \widehat{ L},\widehat{\Lb},\widehat{ L}), \\
 \rh :=& \frac{1}{4} \Rbf(\widehat{\Lb}, \widehat{ L}, \widehat{\Lb}, \widehat{ L}), & \sigma \iin_{AB} :=& \half \Rbf(\pr_A,\pr_B,\widehat{\Lb}, \widehat{ L}), \\
\beb_A :=& \half \Rbf(\pr_A, \widehat{\Lb},\widehat{\Lb},\widehat{ L}), & \ab_{AB} :=& \Rbf(\pr_A,\widehat{\Lb}, \pr_B, \widehat{\Lb});
\end{aligned} \label{EQNCCintrodef}
\end{align}
see \eqref{DEFricciCoefficients} and \eqref{EQnullcurvatureCOMPDEF} for details. We split $\chi$ and $\chib$ into tracefree and trace parts as follows
\begin{align*} 
\begin{aligned} 
\chi = \chih +\half \trchi \gd, \,\, \chib= \chibh + \half \trchib \gd.
\end{aligned} 
\end{align*}
Moreover, for a $S_{u,v}$-tangential tensor $W$, denote by
\begin{align} 
\begin{aligned} 
DW = \Lied_L W, \,\, \Du W = \Lied_{\Lb} W,
\end{aligned} \label{EQdefintroLIED99901}
\end{align}
where $\Lied$ denotes the projection of the Lie-derivative onto $S_{u,v}$.

\subsection{Null constraint equations} \label{SECintroNullConstraintsANDcharIVP}

\ni A Lorentzian $4$-manifold $(\MM,\g)$ is called a \emph{vacuum spacetime} if it satisfies the Einstein vacuum equations
\begin{align} \label{EQeve}
\mathbf{Ric} =0,
\end{align}
where $\mathbf{Ric}$ denotes the Ricci tensor of $\g$. The Einstein equations \eqref{EQeve} together with the embedding equations for a double null foliation in a vacuum spacetime stipulate that the metric components, Ricci coefficients and null curvature components satisfy the so-called \emph{null structure equations}. These equations are of \emph{transport-elliptic} character, and they are either tangential to the ``outgoing" or the ``ingoing" null hypersurfaces. For example, they include the following (see Section \ref{SECnullstructureequations} for the full set of equations),
\begin{align} 
\begin{aligned} 
D \phi =& \frac{\Om \tr \chi \phi}{2}, \\
D \gd =& 2 \Om \chi, \\
D \trchi + \frac{\Om}{2} (\trchi)^2 - \om \trchi =& - \Om \vert \chih \vert^2_{\gd}, \\
D \eta =& \Om (\chi \cdot \etab - \beta), \\
\Divd \chih -\half \di \tr \chi + \chih \cdot \zeta - \half \trchi \zeta =& - \beta, 
\end{aligned} \label{EQnseINTRO0001}
\end{align}
where $(\Divd \chih)_A := \Nd^C \chih_{AC}$, and $\Nd$ denotes the covariant derivative and $\di$ the exterior derivative on $S_{u,v}$.

Moreover, the so-called \emph{null Bianchi equations} hold for the null curvature components. The null Bianchi equations for $D\be$ and $D\rh$ are as follows.
\begin{align} 
\begin{aligned} 
D \be + \frac{3}{2} \Om \tr \chi \be - \Om \chih \cdot \be - \om \be - \Om \left( \Divd \a + ( \etab + 2 \zeta) \cdot \a \right) =& 0, \\
D \rh + \frac{3}{2} \Om \tr \chi \rh - \Om \left( \Divd \be + (2 \etab + \zeta, \be) - \half (\chibh, \a) \right)=&0,
\end{aligned} \label{EQnseINTRO0002}
\end{align}
where $\Divd \be := \Nd_A \be^A$. We refer to Section \ref{SECnullstructureequations} for the complete set of null constraint equations and null Bianchi equations.

\subsection{The characteristic initial value problem and the gluing problem}

It is well-known that the Einstein equations \eqref{EQeve} are hyperbolic and admit a well-posed initial value formulation. In the context of this paper, in particular the \emph{characteristic initial value problem} where \emph{characteristic initial data} is posed on two transversely-intersecting null hypersurfaces is relevant.

\emph{Characteristic initial data} for the Einstein equations consists of a pair of hypersurfaces $\HH$ and $\HHb$ intersecting at a $2$-dimensional surface $S$ together with the (free) specification of 
\begin{align*} 
\begin{aligned} 
\lrpar{\Omega, \gd_c} \text{ on } \HH \cup \HHb \,\, \text{ and } \lrpar{\gd, \trchi, \trchib, \eta} \text{ on } S,
\end{aligned} 
\end{align*}
such that $\gd_c$ is compatible with $\gd$ on $S$.

The local well-posedness for the characteristic initial value problem was first obtained by Rendall \cite{Rendall}. Specifically, Rendall proved that for sufficiently regular characteristic initial data there exists a unique solution to the Einstein equations in a neighborhood of the surface $S$. Luk \cite{LukChar} subsequently extended the above result to appropriate neighborhoods of the initial hypersurfaces $\HH$ and $\HHb$. 

In particular, by virtue of the null structure equations, characteristic initial data determines on the sphere $S= \HH \cap \HHb$ the following tuple of $S$-tangential tensors on $S$,
\begin{align*} 
\begin{aligned} 
\Big( &\Om,\phi,\gd_c, \chi, \chib, \zeta, \eta, \om, \omb, \a, \be, \rh, \si, \beb, \ab, \\
& D \phi, D\gd_c, D \chi, D \chib, D \zeta, D \eta, D\om, D\omb,  \\
& \Du \phi, \Du\gd_c, \Du \chi, \Du \chib, \Du \zeta, \Du \eta, \Du\om, \Du\omb \Big) \big\vert_{S}.
\end{aligned} 
\end{align*}
We note that the above tuple specifies all derivatives of the spacetime metric up to order $2$ on $S$. By the null structure equations, some quantities are redundant, and we can reduce the above tuple to the following,
\begin{align*} 
\begin{aligned} 
x := \lrpar{\Om,\phi,\gd_c, \Om\trchi, \chih, \Om\trchib, \chibh, \eta, \om, D \om, \omb, \Du \omb, \a, \ab} \big\vert_{S}.
\end{aligned} 
\end{align*}
We call this tuple of tensors $C^2$-\emph{sphere data} (see also Definition \ref{DEFspheredata2}). Higher-order sphere data, suitable for the solution of a higher-regularity gluing problem, requires inclusion of the higher-order tangential 
and transversal derivatives (see also Section \ref{SECnotationHIGHER}). It is essential for this paper that sphere data are affected by gauge transformations (i.e. sphere perturbations and sphere diffeomorphisms).

We are now in position to state the problem of \emph{characteristic  gluing}.\\

\ni \textbf{Characteristic  gluing problem.} \emph{Given sphere data $x_{1}$ and $x_{2}$ on two spacelike $2$-spheres $S_{1}$ and $S_{2}$, respectively, does there exist a null hypersurface $\HH_{[1,2]} = \cup_{1\leq v \leq 2} S_v$ with a family of sphere data $(x'_v)_{1\leq v \leq 2}$ solving the null constraint equations on $\HH_{[1,2]}$ such that }
\begin{align*} 
\begin{aligned} 
x'_1 = x_{1} \text{ \emph{and} } x'_2 = x_{2}\text{?}
\end{aligned} 
\end{align*}
The degrees of freedom in the characteristic gluing problem are the free prescription of 
\begin{align*} 
\begin{aligned} 
(\Om, \gd_c) \text{ along } \HH_{[1,2]}.
\end{aligned} 
\end{align*}

\ni Characteristic gluing in the above generality is not always feasible. Indeed, for example, the monotonicity of $\trchi$ due to the Raychauduri equation \eqref{EQRaychauduri1} forms an obstacle to characteristic gluing. In this paper we analyse the characteristic gluing problem close to Minkowski. In fact, by applying the implicit function theorem, we can reduce to a study of the \emph{linearized characteristic gluing problem} at Minkowski.

\subsection{First statement of main theorem} \label{SECintroStatementMainTheorem}

\ni In the following we state a first version of our main theorem. First, on a sphere $S$ equipped with a round metric $\gac$ we define the projections of functions $f$ on $S$ onto the (normalized) spherical harmonics of mode $l=0$ and $l=1$ as follows,
\begin{align*} 
\begin{aligned} 
f^{(0)} :=  \int\limits_S  f \cdot Y^{(00)} d\mu_{\gac}, \,\, f^{(1m)} :=  \int\limits_S  f \cdot Y^{(1m)} d\mu_{\gac} \text{ for } m=-1,0,1.
\end{aligned} 
\end{align*}  
Moreover, we define the projections of vectorfields $X$ on $S$ onto the electric $E^{(1m)}$ and magnetic $H^{(1m)}$ vector spherical harmonics of mode $l=1$ (defined in Appendix \ref{SECfourierSpheres}) as follows,
\begin{align*} 
\begin{aligned} 
X_E^{(1m)} := \int\limits_{S} \gac\lrpar{X,E^{(1m)}} d\mu_\gac, \,\, X_H^{(1m)} := \int\limits_{S} \gac\lrpar{X,H^{(1m)}} d\mu_\gac \text{ for } m=-1,0,1.
\end{aligned} 
\end{align*}

\begin{definition}[Charges] \label{DEFchargesINTRO119911} Let $x$ be given sphere data on a sphere $S$. We additionally assume that $S$ is equipped with a round metric $\gac$. For $m=-1,0,1$, we define the charges
\begin{align*} 
\begin{aligned}  
\mathbf{E} :=& -\frac{1}{8\pi} \sqrt{4\pi} \lrpar{r^3 \lrpar{ \rho + r \Divd {\be}}}^{(0)}, \\
\mathbf{P}^m :=& -\frac{1}{8\pi} \sqrt{\frac{4\pi}{3}} \lrpar{r^3 \lrpar{\rho + r \Divd {\be}}}^{(1m)},\\
\mathbf{L}^m :=& \frac{1}{16\pi} \sqrt{\frac{8\pi}{3}} \lrpar{r^3 \lrpar{ \di \trchi + \trchi (\eta-\di\log\Om) }}_H^{(1m)},\\
\mathbf{G}^m :=&  \frac{1}{16\pi}\sqrt{\frac{8\pi}{3}} \lrpar{r^3 \lrpar{ \di \trchi + \trchi (\eta-\di\log\Om) }}^{(1m)}_E,
\end{aligned} 
\end{align*}
where $r=r(x)$ denotes the area radius of the sphere $S$ with sphere data $x$, and $\rh$ and $\beta$ are calculated from the sphere data by the null structure equations.
\end{definition}

\ni The following is the main result of this paper, see Theorem \ref{PROPNLgluingOrthA121} for a precise statement.
\begin{theorem}[Codimension-$10$ perturbative characteristic gluing, version 1] \label{THMmain} Let $x_{1}$ and $x_{0,2}$ be sphere data on two spheres $S_{1}$ and $S_{0,2}$, close to sphere data on the round spheres of radius $1$ and $2$ in Minkowski, respectively. For a real number $\de>0$, let ${\HHb}_{2}= \cup_{-\de \leq u \leq \de} S_{u,2}$ be an ingoing null hypersurface passing through $S_{0,2}$, equipped with a family of sphere data $(x_{u,2})_{-\de \leq u \leq \de}$ that is close to the respective sphere data in Minkowski and solves the null constraint equations. Then there are
\begin{itemize}
\item a null hypersurface $\HH'_{[1,2]}=\cup_{1\leq v \leq 2} S'_{v}$ with a family of sphere data $(x'_{v})_{1\leq v \leq 2}$ solving the null constraint equations,
\item sphere data $x'_{0,2}$ on a sphere $S'_{0,2}$ stemming from a perturbation of ${S}_{0,2}$ in ${\HHb}_{2}$ (and subject to a sphere diffeomorphism),
\end{itemize}
such that on $S_{1}$ we have the matching
\begin{align} 
\begin{aligned} 
x'_1 = x_{1},
\end{aligned} \label{EQmatchingMainTheoremVersion1}
\end{align}
and on $S'_{2}$ we have matching of $x'_2$ and $x'_{0,2}$ up to the charges $(\mathbf{E},\mathbf{P},\mathbf{L},\mathbf{G})$, that is, if it holds that
\begin{align*} 
\begin{aligned} 
(\mathbf{E},\mathbf{P},\mathbf{L},\mathbf{G})(x'_2) = (\mathbf{E},\mathbf{P},\mathbf{L},\mathbf{G})(x'_{0,2}),
\end{aligned} 
\end{align*}
then it holds that
\begin{align*} 
\begin{aligned} 
x'_2 = x'_{0,2}.
\end{aligned} 
\end{align*}
There is no additional obstruction to gluing higher-order tangential derivatives along $\HH'_{[1,2]}$.
\end{theorem}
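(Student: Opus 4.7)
The plan is to reduce the nonlinear codimension-$10$ gluing problem to a linearized problem at Minkowski spacetime via the implicit function theorem, and then to solve the linearized characteristic gluing problem by exploiting the transport structure of the null constraint equations together with the freedom to prescribe $(\Om, \gd_c)$ along $\HH'_{[1,2]}$.

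First, I would linearize the full set of null structure and null Bianchi equations around the standard double null foliation of Minkowski, with round spheres $S_{u,v}$ of area radius $r = v - u$. With $(\Om,\gd_c)$ freely prescribed, each remaining sphere-data component among $\trchi$, $\chih$, $\trchib$, $\chibh$, $\eta$, $\om$, $D\om$, $\omb$, $\Du\omb$, $\a$, $\ab$ satisfies a transport equation in $v$ along $\HH'_{[1,2]}$ whose right-hand side involves only $(\Om,\gd_c)$ and already-integrated quantities. Matching at $S_1$ is then automatic once we take the correct initial values. The actual task becomes choosing the free data $(\Om,\gd_c)$, together with a transversal perturbation along $\HHb_2$ and an angular diffeomorphism at $S_{0,2}$, so that the resulting sphere data at $v=2$ agrees with $x'_{0,2}$ up to the ten charges.

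Second, and most critically, I would identify the obstructions to matching at $S'_2$ by analyzing the conserved quantities of the linearized null constraint equations. Exploiting the specific hierarchy of radial weights in the transport equations, one verifies that the linearizations of $r^3(\rh + r\Divd\be)$ and of $r^3(\di\trchi + \trchi(\eta - \di\log\Om))$ are conserved along $\HH'_{[1,2]}$ modulo integrals of the free data. Decomposing into spherical harmonics then produces an infinite family of conserved charges. The ten quantities $\mathbf{E}$, $\mathbf{P}^m$, $\mathbf{L}^m$, $\mathbf{G}^m$ of Definition 1.5 are precisely the $l=0$ and $l=1$ modes of these two invariants (scalar modes for the first, and electric/magnetic vector modes of the gradient of the second); the higher $l\geq 2$ modes constitute the \emph{gauge-dependent} charges.

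Third, I would show that the gauge-dependent charges can be cancelled by an appropriate transversal perturbation of $S_{0,2}$ along $\HHb_2$ (a scalar function on the sphere) together with an angular diffeomorphism (a vectorfield on the sphere). The key computation is to linearize the action of these two transformations on the two conserved quantities above and verify that the induced operator on the $l \geq 2$ modes is invertible; this reduces to the solvability of explicit Hodge-type elliptic systems on the round $2$-sphere. The same computation shows that the $l=0,1$ modes are preserved to linear order, which accounts precisely for the $10$-dimensional obstruction. The nonlinear statement then follows from an implicit function theorem in suitable Banach spaces, where the linearized right inverse constructed above provides the necessary surjectivity, and the smallness assumptions on $x_1$, $x_{0,2}$, and $(x_{u,2})_{-\de \leq u \leq \de}$ supply the perturbative regime. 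The main obstacle will be this third step: constructing the two families of gauge transformations and proving that their combined action on the infinite-dimensional space of gauge-dependent charges is an isomorphism, while simultaneously verifying that the ten gauge-invariant modes lie in the kernel. This requires a careful Fourier analysis on the sphere and precise tracking of which harmonic modes are controlled by which free or gauge parameter — particularly at $l=1$, where the kernels of various Hodge operators appear.
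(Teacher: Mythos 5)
Your broad strategy---linearize at Minkowski, identify conservation laws, split them into gauge-invariant and gauge-dependent charges, match the gauge-dependent ones by transversal and angular perturbations, and finish with the implicit function theorem---is the same as the paper's. However, your identification of the charge space contains a genuine error, and you omit a further essential step in the linearized construction.

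First, the gauge-dependent charges are not the $l\geq 2$ modes of $r^3(\rho + r\Divd\beta)$ and of $r^3(\di\trchi + \trchi(\eta-\di\log\Om))$. Those $l\geq 2$ modes are in fact \emph{not} conserved: from the linearized Bianchi equation $D(r^3\betad)=r\Divdo\ad$ one finds that $D\bigl(r^3(\rhd+r\Divdo\betad)\bigr)$ produces a term $\sim r^2\Divdo\Divdo\ad$, which vanishes only on $l\leq 1$ (and similarly for the second quantity, since $\Divdo$ of a tracefree symmetric $2$-tensor kills $l\leq 1$). Only the $l=0,1$ modes yield conserved quantities, and these are precisely the ten gauge-invariant charges. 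The infinite-dimensional space of gauge-dependent charges comes from an entirely separate hierarchy of conserved quantities $\QQ_1, \QQ_2^{[\geq 2]}, \QQ_3,\dots,\QQ_7$, defined in terms of $\phid$, $\gdcd$, $\etad$, $\omtrchibd$, $\chibhd$, $\ombd$, $\abd$, $\Du\ombd$ in \eqref{EQdefChargesMinkowski8891}, each obtained by combining a different set of linearized null structure equations. For example, $\QQ_1 = \tfrac{r}{2}(\omtrchid - \tfrac{4}{r}\Omd)+\tfrac{\phid}{r}$ (all $l$-modes) and $\QQ_3$, a symmetric tracefree $2$-tensor built from $\chibhd$ and $\gdcd$, have nothing to do with your two Bianchi-based expressions. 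Proposition~\ref{PROPadjustmentCharges} then shows that $\dot{\PP}_f$ (through the four scalars $\dot f, \pr_u\dot f, \pr_u^2\dot f, \pr_u^3\dot f$) together with $\dot{\PP}_{(j^1,j^2)}$ surjects onto these gauge-dependent charges; the count of independent gauge parameters exactly matches the count of gauge-dependent charges, harmonic mode by harmonic mode.

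Second, matching the charges is not sufficient for matching the sphere data at $S'_2$: the charges constrain certain combinations, but the free data $\Omd$ and $\chihd$ on $\HH'_{[1,2]}$ must still be chosen so that the remaining components of the sphere data coincide. In the paper this is done in Section~\ref{SECsecondchoice12} by imposing a collection of integral conditions on $\Omd$ and $\chihd$---one condition of the form $\int_1^2 \Omd\,dv'$ for $\phid$, and conditions of the form $\int_1^2 v'^{-2}\chihd\,dv'$, $\int_1^2 \chihd\,dv'$, $\int_1^2 v'^{-3}\chihd\,dv'$, $\int_1^2 v'^{-4}\chihd\,dv'$ for $\gdcd$, $\etad$, $\abd$, $\Du\ombd$. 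The linear independence of these radially weighted integrals (the ``hierarchy of radial weights'') is what makes the simultaneous gluing possible. You invoke this hierarchy only in passing when discussing conservation laws, but do not use it for the actual gluing of the non-charge data. Without this step, Theorem~\ref{PROPlingluing22} is not proved, and the implicit function theorem argument has nothing to apply to.
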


\ni \emph{Remarks on Theorem \ref{THMmain}.}
\begin{enumerate}

\item The proof of Theorem \ref{THMmain} is based on the implicit function theorem and a solution of the \emph{linearized characteristic gluing problem} transversal to the obstruction space consisting of $(\mathbf{E},\mathbf{P},\mathbf{L}, \mathbf{G})$, see Section \ref{SEClinearizedProblem}.

\item The gluing of Theorem \ref{THMmain} is at the level of $C^2$ for the metric components. The characteristic gluing of higher-order \emph{tangential} derivatives is stated in more detail in Theorem \ref{THMHIGHERorderLgluingMAINTHEOREM}. For higher-order gluing of derivatives in \emph{all} directions, see the \emph{bifurcate characteristic gluing} in Section \ref{SECintroTransversalDerivativesStatement} below.

\end{enumerate}

\subsection{Linearized characteristic gluing} \label{SEClinearizedCHARgluing999intro} By the implicit function theorem, the study of the characteristic gluing problem in vicinity of Minkowski can be reduced to the study of the linearized characteristic gluing problem at Minkowski. In this section we discuss the linearized null constraint equations and null Bianchi equations at Minkowski, and the corresponding linearized gluing problem.

\subsubsection{Linearized equations, characteristic gluing and conserved charges} \label{SEClineqslincharglui} The linearized null constraint equations on $\HH_{[1,2]} = \cup_{1\leq v \leq 2} S_v$ can be derived by varying through a family of sphere data $(x^{\varep}_{v})_{1\leq v \leq 2}$ solving the null constraint equations around Minkowski. We formally denote its expansion in the parameter $\varep$ as follows,
\begin{align*} 
\begin{aligned} 
x^\varep_v =& \lrpar{1,v, \gac, \frac{2}{v}, 0, -\frac{2}{v}, 0, 0, 0, 0, 0,0, 0, 0} \Big\vert_{S_v} \\
&+ \varep \cdot \lrpar{\dot{\Om},\phid, \gdcd, \omtrchid, \chihd, \omtrchibd, \chibhd, \dot{\eta}, \dot{\om}, \dot{D \om}, \dot{\omb}, \dot{\Du \omb}, \dot{\a}, \dot{\ab}} \Big\vert_{S_v} + \OO(\varep^2).
\end{aligned} 
\end{align*}
The resulting linearized null constraint equations and linearized null Bianchi equations are equations for the \emph{linearized sphere data}
\begin{align*} 
\begin{aligned} 
\lrpar{\dot{\Om},\phid, \gdcd, \omtrchid, \chihd, \omtrchibd, \chibhd, \dot{\eta}, \dot{\om}, \dot{D \om}, \dot{\omb}, \dot{\Du \omb}, \dot{\a}, \dot{\ab}}.
\end{aligned} 
\end{align*}
For example, linearizing the null constraint equations \eqref{EQnseINTRO0001} and null Bianchi equations \eqref{EQnseINTRO0002} yields (where we have that $r=v$ and $D= \Lied_{\pr_v}$ in Minkowski space, and denote $\Divdo = \Divd_\gac$),
\begin{align} 
\begin{aligned} 
D\lrpar{\frac{\phid}{r}} = \frac{\omtrchid}{2}, \,\,  D\gdcd = \frac{2}{r^2} \chihd, \,\, D \lrpar{D\phid - 2 \Omd}=0,
\end{aligned} \label{EQlinIntro9991}
\end{align}
and
\begin{align} 
\begin{aligned} 
D \lrpar{r^2 \etad} + \frac{r^2}{2} \di \lrpar{\omtrchid-\frac{4}{r} \Omd} -2r \di\Omd -\Divdo \chihd =&0.
\end{aligned} \label{EQlinIntro9992}
\end{align}
We also note the linearized null Bianchi equations
\begin{align} 
\begin{aligned} 
D\lrpar{r^3 \dot{\beta}} -r \Divdo \ad =0, \,\, D\lrpar{r^3\dot{\rho}} -r \Divdo \betad =0,
\end{aligned} \label{EQlinIntro9992Bianchi1}
\end{align}

\ni The \emph{linearized characteristic gluing problem} can be stated as follows.\\

\ni \textbf{Linearized characteristic gluing problem.} \emph{Given linearized sphere data $\xd_{1}$ and $\xd_2$ on two spheres $S_1$ and $S_2$, respectively, does there exists a null hypersurface $\HH_{[1,2]} = \cup_{1\leq v \leq 2} S_v$ equipped with a family of linearized sphere data $(\xd'_v)_{1\leq v \leq 2}$ solving the linearized constraint equations such that}
\begin{align*} 
\begin{aligned} 
\xd'_1 = \xd_1 \text{ \emph{and} } \xd'_2 = \xd_2\text{?}
\end{aligned} 
\end{align*}
The degrees of freedom in the linearized characteristic gluing problem are given by prescribing $\Omd$ and $\gdcd$ on $\HH_{[1,2]}$. By the linearized first variation equation in \eqref{EQlinIntro9991}, that is,
\begin{align*} 
\begin{aligned} 
D\gdcd = \frac{2}{r^2} \chihd,
\end{aligned} 
\end{align*}
this is equivalent to the free prescription of $\Omd$ and $\chihd$ on $\HH_{[1,2]}$, which is the point-of-view we choose in this paper. 
 
In the following we analyse the obstacles to the linearized gluing problem. Combining the linearized constraint equations \eqref{EQlinIntro9991} and \eqref{EQlinIntro9992}, we get that
\begin{align} 
\begin{aligned} 
D\lrpar{r^2 \etad + \frac{r^3}{2} \di \lrpar{\omtrchid -\frac{4}{r} \Omd }} =&\Divdo \chihd, \\
D\lrpar{\frac{r}{2} \lrpar{\omtrchid-\frac{4}{r} \Omd} + \frac{\phid}{r}} =&0.
\end{aligned} \label{EQlinIntro9993}
\end{align}

\ni Importantly, projecting the second equation onto the vector spherical harmonics of mode $l=1$ (see Appendix \ref{SECfourierSpheres}) and using that, in general, the mode $l=1$ of the divergence of a symmetric tracefree $2$-tensor vanishes, we can read off \eqref{EQlinIntro9993} that the quantities
\begin{align} 
\begin{aligned} 
\QQ_0 :=& \lrpar{r^2 \etad + \frac{r^3}{2} \di \lrpar{\omtrchid -\frac{4}{r} \Omd }}^{[1]},\\
\QQ_1 :=& \frac{r}{2} \lrpar{\omtrchid-\frac{4}{r} \Omd} + \frac{\phid}{r},
\end{aligned} \label{EQrefINTROCHARGES0101999}
\end{align}
are \emph{conserved} along $\HH_{[1,2]}$ under the linearized null constraint equations. We note that $\QQ_0$ corresponds to $6$ numbers, while $\QQ_1$ accounts to one functional degree on the sphere.

The charges $\QQ_0$ and $\QQ_1$ are examples of the larger set of conserved \emph{charges} 
\begin{align*} 
\begin{aligned} 
\QQ_i \text{ for } 0\leq i \leq 7,
\end{aligned} 
\end{align*}
identified in Section \ref{SEClinearizedCHARGESMinkowskiSTEF89} of this paper. These charges are of fundamental importance for the characteristic gluing problem, as they form \emph{obstructions to gluing}. In particular, the linearizations of the charges $\mathbf{E}, \mathbf{P}, \mathbf{L}$ and $\mathbf{G}$ (see Definition \ref{DEFchargesINTRO119911}) form part of the set of conserved charges. In fact, $\QQ_0$ is directly related to the linearizations of $\mathbf{L}$ and $\mathbf{G}$, see \eqref{EQREMchargeslinatMinkowski}.

We remark that we alternatively could have derived $\QQ_0$ from the linearized null Bianchi equation \eqref{EQlinIntro9992Bianchi1} for $D\dot{\beta}$ by relating $\dot{\beta}$ to $\dot{\eta}$ via the linearized Gauss-Codazzi equation. Similarly, the conservation law for the linearizations of $\mathbf{E}$ and $\mathbf{P}$ can be derived by the linearized null Bianchi equation \eqref{EQlinIntro9992Bianchi1} for $D\dot{\rho}$ by means of the linearized Gauss and Gauss-Codazzi equations.

\subsubsection{Linearized perturbations of sphere data and matching of charges} \label{SEClinSpherepertChargeSplitIntro} In context of the linearized characteristic gluing problem, we also analyze the linearizations of sphere perturbations and sphere diffeomorphisms of sphere data. We remark that formally, the perturbation $S_2'$ of $S_2$ in the ingoing null hypersurface $\HHb_2=\cup_{-\de \leq u \leq \de} S_{u,2}$ is defined as the level set $$\{ f(u) = 0 \} \subset \HHb_2,$$ for a small \emph{perturbation function} $f$ defined on $\HHb_2$. We parametrize the sphere diffeomorphisms by two scalar functions $(j^1,j^2)$.

The linearization of the sphere data in the perturbation function $f$, evaluated at Minkowski and $f=0$, as well as the linearization of the sphere data in $(j^1,j^2)$ can be explicitly calculated by the transformation formulas for sphere data (see Appendix \ref{SECproofTEClemmasmoothness} and Lemmas \ref{LEMlinearizedTransversal} and \ref{LEMspherediffLIN}). For example, the linearization on $S_2$ of $\Om$, $\trchi$, $\phi$ and $\eta$ under the transversal perturbation function $f$, evaluated at Minkowski and $f=0$, is given by
\begin{align*} 
\begin{aligned} 
\Omd = \half \pr_u \dot{f}, \,\, \omtrchid = \half \lrpar{\Ldo+1} \dot{f}, \,\, \phid = -\dot{f}, \,\, \etad =  \di \lrpar{\pr_u \dot{f} +\frac{\dot{f}}{2}}.
\end{aligned} 
\end{align*}

\ni Plugging the above into the charge expressions \eqref{EQrefINTROCHARGES0101999}, we see that the charges $\QQ_0$ and $\QQ_1$ on $S_2$ change under linearized sphere perturbations as follows,
\begin{align} 
\begin{aligned} 
\QQ_0 =& 0, & \QQ_1 =& \frac{1}{2}\Ldo \dot{f}- \pr_{u} \dot{f}.
\end{aligned} \label{EQchargeIntroperturbation999012}
\end{align}
The identity \eqref{EQchargeIntroperturbation999012} reflects the following essential observation: The set of conserved charges $\QQ_i$, $0\leq i \leq 7$, splits into two categories:
\begin{enumerate}
\item \textbf{Gauge-invariant charges.} A $10$-dimensional space of \emph{gauge-invariant charges} which are not changing under linearized sphere perturbations. This space is spanned precisely by the linearizations of $\mathbf{E}, \mathbf{P}, \mathbf{L}$ and $\mathbf{G}$.
\item \textbf{Gauge-dependent charges.} An infinite-dimensional space of \emph{gauge-dependent charges} which can be adjusted in a surjective manner by a carefully chosen linearized sphere perturbations and sphere diffeomorphisms. All charges $\QQ_i$, $0\leq i \leq 7$ except for $\mathbf{E}, \mathbf{P}, \mathbf{L}$ and $\mathbf{G}$ fall into this category.
\end{enumerate}

\ni Therefore, in the linearized characteristic gluing problem, we can match all charges at $S_2$ -- except for $\mathbf{E}, \mathbf{P}, \mathbf{L}$ and $\mathbf{G}$ -- by adding a linearized sphere perturbation and a linearized sphere diffeomorphism at $S_2$.

In \cite{ACR3,ACR2} we show that the charges $\mathbf{E}, \mathbf{P}, \mathbf{L}$ and $\mathbf{G}$ are related to the ADM energy, linear momentum, angular momentum, and center-of-mass of an asymptotically flat spacetime.

\subsubsection{Hierarchical structure of radial weights in the characteristic gluing problem} In the previous section we showed that matching of all gauge-dependent charges is possible by adding a linearized sphere perturbation to $S_2$. In this section we explain how to prescribe, in addition to the matching of the gauge-dependent charges, the linearized free data $\Omd$ and $\chihd$ on $\HH_{[1,2]}$ such that on $S_2$ we have matching of the full linearized sphere data up to the $10$-dimensional space of gauge-invariant charges.

By integrating the linearized null constraint equations and using their nilpotent character (see Section \ref{SECconformalMethod1112}), we can derive representation formulas for each linearized quantity. For example, integrating \eqref{EQlinIntro9991} and \eqref{EQlinIntro9993} from $v'=1$ to $v'=v$, we get the following representation formulas for $\phid, \gdcd$ and $\etad$,
\begin{align*} 
\begin{aligned} 
\phid(v)-v\phid(1)-\frac{v-1}{2}\lrpar{\omtrchid(1)-4\Omd(1)} =& 2 \int\limits_1^v \Omd dv', & r\text{-weight for $\Omd$: }&1, \\
\dot{\gd_c}(v) -\gdcd(1)=&2 \int\limits_{1}^v \frac{1}{r^2} \chihd dv', & r\text{-weight for $\chihd$: }&\frac{1}{r'^2}\\
\left[ v'^2 \etad + \frac{v'^3}{2} \di\lrpar{\omtrchid - \frac{4}{v'} \Omd}\right]_1^v =& \Divdo\lrpar{ \int\limits_1^v \chihd dv' }, & r\text{-weight for $\chihd$: }&1.
\end{aligned} 
\end{align*}

\ni Importantly, the representation formulas display a special \emph{hierarchical structure of radial weights} where the integrals on the right-hand sides over the freely prescribed data $\Omd$ and $\chihd$ contain \emph{different $r$-weights}. Thereby the integrals are \emph{linearly independent} and it is possible, by prescribing the value of the weighted integrals of $\Omd$ and $\chihd$ over the interval $v=1$ to $v=2$, to choose $\Omd$ and $\chihd$ on $\HH_{[1,2]}$ such that gluing is achieved. 

The existence of conservation laws is connected to the presence of similar $r$-weights as follows: If the representation formulas for linearized quantities include only integrals of $\Omd$ and $\chihd$ of the same $r$-weight, then a conserved charge can be constructed from them.

Using the above principle, we can prescribe the free data along $\HH_{[1,2]}$ to glue transversely to the space of charges. As we matched the gauge-dependent charges by a linearized sphere perturbation in Section \ref{SEClinSpherepertChargeSplitIntro}, it follows that we glued the linearized sphere data on $S_2$ up to the $10$-dimensional space of gauge-invariant charges.

\subsection{Solution of the non-linear characteristic gluing problem} \label{SECintroNONLINEARgluing} The proof of Theorem \ref{THMmain} is based on the implicit function theorem and our analysis of the linearized characteristic gluing problem in Section \ref{SEClinearizedCHARgluing999intro}.

The setup for the implicit function theorem is as follows. Consider
\begin{itemize}
\item sphere data $x_1$ on a sphere $S_1$,
\item a family of sphere data $(x_{u,2})_{-\de \leq u \leq \de}$ on the ingoing null hypersurface ${\HHb}_{2}= \cup_{-\de \leq u \leq \de} S_{u,2}$,
\item a family of sphere data $(x'_v)_{1\leq v \leq2}$ on the outgoing null hypersurface $\HH_{[1,2]}=\cup_{1\leq v \leq 2} S_v$,
\item a sphere perturbation function $f$, and sphere diffeomorphism functions $(j^1,j^2)$.
\end{itemize}
Then we define the mapping $\FF$ as follows,
\begin{align*} 
\begin{aligned} 
\FF: \, (x_1, (x_{u,2})_{-\de \leq u \leq \de}, f, (j_1,j_2), (x'_v)_{1\leq v \leq2}) \mapsto \lrpar{x'_1 - x_1, \mathfrak{M}\lrpar{x_2'}-\mathfrak{M}\lrpar{\PP_{(j^1,j^2)}\PP_f((x_{u,2})_{-\de \leq u \leq \de})}, \CC\lrpar{x'_v}_{1\leq v \leq2} },
\end{aligned} 
\end{align*}
where 
\begin{itemize}
\item $\CC$ denotes the null constraint functions (as defined in Section \ref{SECconformalMethod1112})
\item $\mathfrak{M}(x)$ denotes the projection of sphere data $x$ on a space of codimension $10$ which accounts for the charges $\mathbf{E},\mathbf{P}, \mathbf{L}$ and $\mathbf{G}$ (see Definition \ref{DEFmatchingMAP} and Lemma \ref{LEMconditionalMATCHING}),
\item $\PP_f$ and $\PP_{(j^1,j^2)}$ denote the application of sphere perturbations and sphere diffeomorphisms, respectively.
\end{itemize}

\ni By the definition of $\FF$ it holds that if
\begin{align} 
\begin{aligned} 
\FF(x_1, (x_{u,2})_{-\de \leq u \leq \de}, f, (j_1,j_2), (x'_v)_{1\leq v \leq2}) =\lrpar{0,0,0}.
\end{aligned} \label{EQgluingINTROproblem123}
\end{align}
then the family of sphere data $(x'_v)_{1\leq v \leq2}$ solves the null constraint equations on $\HH_{[1,2]}$, agrees with $x_1$ on $S_1$, and matches -- up to $\mathbf{E},\mathbf{P},\mathbf{L}$ and $\mathbf{G}$ -- with a sphere perturbation and sphere diffeomorphism of $x_2$ on $S_2$. This corresponds to solving the characteristic gluing problem as outlined in Theorem \ref{THMmain}. In the following we use the implicit function theorem to construct $f$, $(j^1,j^2)$ and $(x'_v)_{1\leq v \leq2}$ for given $x_1$ and $(x_{u,2})_{-\de \leq u \leq \de}$ such that \eqref{EQgluingINTROproblem123} holds.

The implicit function theorem implies that if the linearization of $\FF$ in $f$, $(j^1,j^2)$ and $(x'_v)_{1\leq v \leq2}$, evaluated at $f=0$ and Minkowski reference data, is \emph{surjective}, then there exists a mapping $\GG$,
\begin{align*} 
\begin{aligned} 
\GG: (x_1,(x_{u,2})_{-\de \leq u \leq \de}) \mapsto (f, (j^1,j^2), (x'_v)_{1\leq v \leq 2}),
\end{aligned} 
\end{align*}
well-defined close to Minkowski values, such that $\GG(x_1,x_2)$ solves the gluing problem \eqref{EQgluingINTROproblem123}, that is, it holds that
\begin{align*} 
\begin{aligned} 
\FF(x_1,(x_{u,2})_{-\de \leq u \leq \de}, \GG(x_1,(x_{u,2})_{-\de \leq u \leq \de})) =(0,0,0).
\end{aligned} 
\end{align*}

\ni By construction, the surjectivity of the linearization of $\FF$ is equivalent to the solvability of the linearized characteristic gluing problem for the \emph{inhomogeneous} linearized null constraint equations. The latter can be shown by a slight generalization of the analysis of the homogeneous linearized equations in Section \ref{SEClinearizedCHARgluing999intro}. We remark that our derived estimates for solutions to the (inhomogeneous) linearized null constraint equations follow a specific \emph{regularity hierarchy} which is also reflected in our definition of function spaces. 

More generally, the above implicit function argument applies to the study of the characteristic gluing problem near Schwarzschild of small mass $M\geq0$. This is essential for our study of characteristic gluing to Kerr in \cite{ACR3,ACR2}. 

\subsection{Codimension-$10$ bifurcate characteristic gluing}
\label{SECintroTransversalDerivativesStatement} 

\ni In our solution to the characteristic gluing problem along $\HH_{[1,2]}$, the gluing of higher-order tangential derivatives is in fact without obstacles. However, higher-order \emph{transversal} derivatives cannot be glued in general as they are related to higher-order conserved charges along $\HH_{[1,2]}$ of the linearized null constraint equations. 

We show in this paper that it is possible to circumvent these conservation laws and glue derivatives of any direction and any order by gluing along two null hypersurfaces $\HHb_{[-1,0],1}$ and $\HH_{-1,[1,2]}$ bifurcating from an auxiliary spacelike sphere $S_{-1,1}$, see Figure \ref{FIG111} below.

\begin{figure}[H]
\begin{center}
\includegraphics[width=9.5cm]{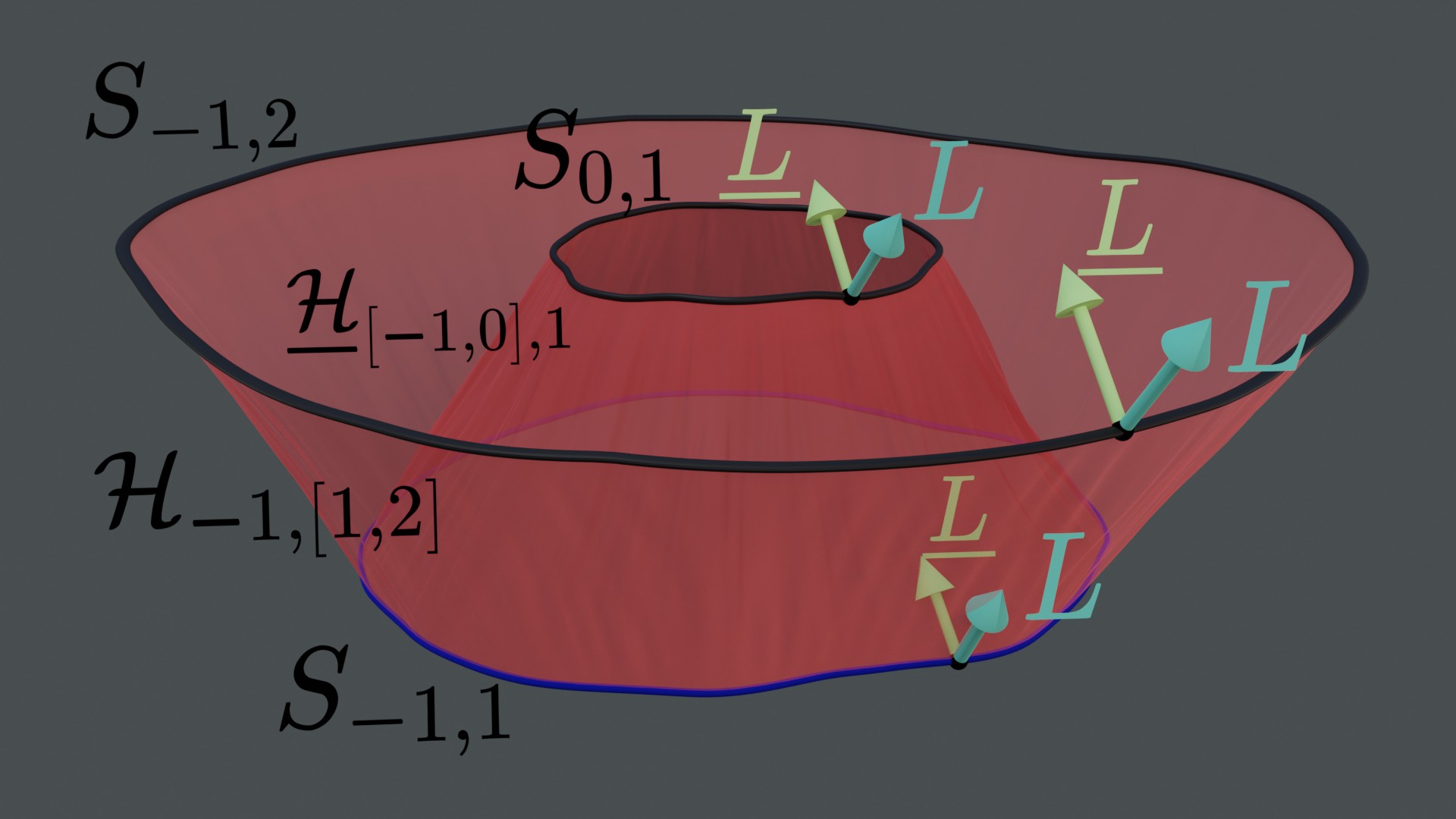} 
\caption{The null hypersurfaces $\HHb_{[-1,0],1}$ and $\HH_{-1,[1,2]}$ bifurcating from $S_{-1,1}$.}\label{FIG111}
\end{center}
\end{figure}
\vspace{-0.2cm}

\ni Our result can be summarized as follows, see Theorem \ref{THMtransversalHIGHERv2} for a precise statement.
\begin{theorem}[Codimension-$10$ bifurcate characteristic gluing, version 1] \label{THMtransversalHIGHER}
Let $m\geq0$ be an integer. Let $x_{0,1}$ and $x_{-1,2}$ be sphere data on two spheres $S_{0,1}$ and $S_{-1,2}$ together with derivatives in all directions up to order $m$. If the prescribed data on $S_{0,1}$ and $S_{-1,2}$ are sufficiently close to the respective Minkowski data, then there are families of sphere data 
\begin{align*} 
\begin{aligned} 
(\underline{x}'_{u,1})_{-1\leq u \leq 0} &\text{ on } \HHb_{[-1,0],1} = \bigcup\limits_{-1\leq u \leq 0} S_{u,0}, \\
(x'_{-1,v})_{1\leq v \leq 2} &\text{ on } \HH_{-1,[1,2]} = \bigcup\limits_{1\leq v \leq 2} S_{-1,v},
\end{aligned} 
\end{align*}
solving the null constraint equations and matching to $m^{\mathrm{th}}$-order at the bifurcate auxiliary sphere $S_{-1,1}$, such that we have 
\begin{itemize}
\item $m^{\mathrm{th}}$-order matching on $S_{0,1}$, 
\item $m^{\mathrm{th}}$-order matching up to the charges $\mathbf{E}, \mathbf{P}, \mathbf{L}$ and $\mathbf{G}$ on $S_{-1,2}$.
\end{itemize}

\end{theorem}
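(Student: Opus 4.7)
The plan is to deduce Theorem \ref{THMtransversalHIGHER} from two applications of the perturbative characteristic gluing theorem (Theorem \ref{THMmain} in its higher-order tangential form, i.e.\ Theorem \ref{THMHIGHERorderLgluingMAINTHEOREM}) applied independently to each of the two null branches bifurcating at $S_{-1,1}$. The bifurcate structure replaces a single gluing hypersurface by two hypersurfaces meeting transversally, so what used to be a transversal derivative at an endpoint of a single-hypersurface gluing becomes a tangential derivative along the \emph{other} branch — and tangential higher-order derivatives are unobstructed by Theorem \ref{THMHIGHERorderLgluingMAINTHEOREM}.

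First I would freely specify $m^{\mathrm{th}}$-order sphere data $x^{\mathrm{aux}}_{-1,1}$ on the auxiliary sphere $S_{-1,1}$, close to its Minkowski counterpart and containing all mixed tangential/transversal jets $D^a \Du^b$ with $a+b\leq m$ that will serve as seed data for both branches. The choice has to be made so that the gauge-invariant charges $(\mathbf{E},\mathbf{P},\mathbf{L},\mathbf{G})$ at $S_{-1,1}$ match those carried into $S_{-1,1}$ from $S_{-1,2}$ by the (outgoing) null Bianchi transport, so that the $10$-dimensional obstruction appears only at $S_{-1,2}$ as stated, not at the bifurcation sphere.

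Next I would apply Theorem \ref{THMHIGHERorderLgluingMAINTHEOREM} in the ingoing direction along $\HHb_{[-1,0],1}$, with endpoint data $x_{0,1}$ and $x^{\mathrm{aux}}_{-1,1}$. Along $\HHb$, the direction $\Lb$ is tangential, so all $\Lb$-derivatives of order $\leq m$ at both endpoints are glued without further obstruction; the $L$-derivatives at $S_{0,1}$ are part of the prescribed input and are matched tautologically, while the $L$-derivatives at $S_{-1,1}$ are exactly the freely chosen components of $x^{\mathrm{aux}}_{-1,1}$. Since both endpoints have the same values of $(\mathbf{E},\mathbf{P},\mathbf{L},\mathbf{G})$ by construction (the auxiliary data was chosen for this), the gauge-invariant obstruction is absent and Theorem \ref{THMHIGHERorderLgluingMAINTHEOREM} produces a solution $(\underline{x}'_{u,1})_{-1\leq u\leq 0}$ of the null constraint equations with $m^{\mathrm{th}}$-order matching at both $S_{0,1}$ and $S_{-1,1}$. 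I would then apply Theorem \ref{THMHIGHERorderLgluingMAINTHEOREM} again in the outgoing direction along $\HH_{-1,[1,2]}$, now with endpoint data $x^{\mathrm{aux}}_{-1,1}$ and $x_{-1,2}$: the roles of tangential and transversal directions are reversed, the tangential $D$-jets of order $\leq m$ are glued, the transversal $\Du$-jets at $S_{-1,2}$ are matched through the hierarchy-of-weights mechanism from Section \ref{SEClinearizedCHARgluing999intro} treating them as additional free data, and at $S_{-1,2}$ the codimension-$10$ obstruction $(\mathbf{E},\mathbf{P},\mathbf{L},\mathbf{G})$ remains exactly as in the single-hypersurface statement.

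The main obstacle will be ensuring internal consistency at $S_{-1,1}$ of the mixed jets $D^a\Du^b$ with $a+b\leq m$ produced by the two independent gluings. The ingoing gluing fixes the pure $\Du$-jets and the outgoing gluing fixes the pure $D$-jets, and the mixed jets are then obtained inductively from the null structure and null Bianchi equations at $S_{-1,1}$, using that both families of sphere data solve the null constraint equations near $S_{-1,1}$. One must verify that this recursion is well-posed order by order and does not generate new obstructions beyond $(\mathbf{E},\mathbf{P},\mathbf{L},\mathbf{G})$; concretely, I would repackage the two gluings into a single map $\FF$ of the type used in Section \ref{SECintroNONLINEARgluing}, with the auxiliary jet $x^{\mathrm{aux}}_{-1,1}$ added as an argument, and verify surjectivity of its linearization at Minkowski by combining the two linearized hierarchy-of-weights arguments (one per branch) with the observation that commuting $D$ and $\Du$ through the linearized constraint equations only reshuffles the already-identified charges $\QQ_i$. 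The implicit function theorem then yields the claimed bifurcate solution.
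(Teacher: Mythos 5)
Your high-level strategy — an auxiliary sphere at the bifurcation point, two single-hypersurface gluings, and an implicit function theorem on the combined map — agrees with the structure of the paper's proof in Section~\ref{SECHIGHERfull}. However, your reduction to two independent applications of Theorem~\ref{THMHIGHERorderLgluingMAINTHEOREM} has a genuine gap, and it conceals exactly the technical work that makes the theorem true.

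The claim that, once $(\mathbf{E},\mathbf{P},\mathbf{L},\mathbf{G})$ are matched at $S_{-1,1}$, the ingoing gluing along $\HHb_{[-1,0],1}$ gives full $m^{\mathrm{th}}$-order matching at both endpoints is incorrect. Theorem~\ref{THMHIGHERorderLgluingMAINTHEOREM} also encounters the infinite-dimensional space of \emph{gauge-dependent} charges, and it disposes of them only by applying sphere perturbations $\PP_f, \PP_{(j^1,j^2)}$ at one endpoint. If you apply those perturbations to the auxiliary sphere during the ingoing gluing, then when you subsequently glue outgoing from $S_{-1,1}$ to $S_{-1,2}$ you have \emph{no} endpoint left to perturb: $x_{-1,2}$ is fixed by hypothesis (this is the whole point of the bifurcate theorem), and the data on $S_{-1,1}$ is now pinned by the ingoing gluing. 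Freely choosing the $L$-derivative jets of the auxiliary data does not escape this, because the gauge-dependent charges $\QQ_i$ of the outgoing branch are built from tangential sphere data ($\Om$, $\phi$, $\eta$, $\trchi$, $\trchib$, \dots), not from the transversal $L$-jets.

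The paper's proof does not use sphere perturbations or Theorem~\ref{THMHIGHERorderLgluingMAINTHEOREM} at all for the bifurcate case. Instead, the crucial input is Lemma~\ref{LEMchargeRelationstransv}, a set of algebraic identities on a single sphere relating the outgoing-branch charges $\QQ_i$ to the ingoing-branch charges $\underline{\QQ}_i$: their difference is expressed in terms of quantities (essentially $\Omd$, $\phid$, $\gdcd$ and $(r^2\etabd - \frac{r^3}{2}\di(\cdot))^{[\geq2]}$) which are \emph{freely glueable} along $\HHb_{[-1,0],1}$. This is what Proposition~\ref{PROPlinearindep} turns into a surjectivity statement: by choosing the free data on $\HHb_{[-1,0],1}$ one can force the outgoing gauge-dependent charges $\QQ_1, \QQ_2^{[\geq2]}, \QQ_3, \dots, \QQ_{i(m)}$ at $S_{-1,1}$ to take any prescribed values (in particular, the values propagated backward from $x_{-1,2}$), without touching $S_{0,1}$. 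The identities also yield $(\QQ_0)_H = -(\uQQ_0)_H$, $\QQ_2^{[0]} = -\uQQ_2^{[0]}$, etc., which is what makes the surviving codimension exactly $10$. Your proposal asserts the needed surjectivity but offers no mechanism for it; the mechanism is these charge identities, and without them the argument does not close.

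\end{document}
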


\ni The proof of Theorem \ref{THMtransversalHIGHER} is based on the implicit function theorem and a study of the \emph{linearized bifurcate characteristic gluing problem}. It is important to remark that in Theorem \ref{THMtransversalHIGHER} we are not applying any sphere perturbations to $S_{0,1}$ and $S_{-1,2}$. The key insight is that the gauge-dependent charges along $\HH_{-1,[1,2]}$ can be matched by adjusting the \emph{free data} on $\HHb_{[-1,0],1}$, and vice versa, the gauge-dependent charges on $\HHb_{[-1,0],1}$ can be matched by the \emph{free data} on $\HH_{-1,[1,2]}$. Moreover, the spaces of gauge-invariant charges along $\HH_{-1,[1,2]}$ and $\HHb_{[-1,0],1}$ agree. We refer to Section \ref{SECHIGHERfull} for a detailed discussion. 

Theorem \ref{THMtransversalHIGHER} is applied in \cite{ACR3,ACR2} to glue \emph{spacelike} initial data for the Einstein equations to spacelike initial data for a Kerr black hole spacetime.

\subsection{Overview of the paper} The paper is structured as follows.
\begin{itemize}
\item In Section \ref{sec:GeometryOfNullHypersurfaces} we introduce the notation and the geometric setup of this paper.
\item In Section \ref{SECpreciseStatementMainTheorem} we precisely state the main results of this paper.
\item In Section \ref{SEClinearizedProblem} we solve the linearized codimension-$10$ characteristic gluing problem.
\item In Section \ref{SECLinearizedNullConstraintsAroundMinkowski} we prove the codimension-$10$ perturbative characteristic gluing, Theorem \ref{THMmain}.
\item In Section \ref{SECHIGHERfull} we prove the codimension-$10$ bifurcate characteristic gluing, see Theorem \ref{THMtransversalHIGHER}.
\item In Appendix \ref{SECproofTEClemmasmoothness} we rigorously define and estimate non-linear perturbations of sphere data.
\item In Appendix \ref{SECnulltransportAPPENDIX} we derive and linearize null transport equations along $\HH$.
\item In Appendix \ref{SEClinearizedCHARGEequationsSSAPP} we study linearized null transport equations at Schwarzschild of small mass $M\geq0$.
\item In Appendix \ref{SECellEstimatesSpheres} we recall the theory of Hodge systems on $2$-spheres and tensor spherical harmonics, and provide a spectral analysis of differential operators studied in Section \ref{SEClinearizedProblem}.
\end{itemize}
\subsection{Acknowledgements} S.A. acknowledges support through the NSERC grant 502581 and the Ontario Early Researcher Award. S.C. acknowledges support through the NSF grant DMS-1439786 of the Institute for Computational and Experimental Research in Mathematics (ICERM). I.R. acknowledges support through NSF grants DMS-2005464, DMS-1709270 and a Simons Investigator Award.

\section{Notation, definitions and preliminaries} \label{sec:GeometryOfNullHypersurfaces}

\ni In this section we introduce the notation, definitions and preliminaries of this paper. We follow the notation of \cite{ChrFormationBlackHoles}. For two real numbers $A$ and $B$, the inequality $A \les B$ means that there is a universal constant $C>0$ such that $A \leq C \, B$. Greek indices range over $\a=0,1,2,3$, lowercase Latin indices over $a=1,2,3$ and uppercase Latin indices over $A=1,2$.

\subsection{Null geometry} \label{SECsetupdoublenull}
In this section we recapitulate the well-known construction of local double null coordinates in spacetimes, see, for example, Chapter 1 of \cite{ChrFormationBlackHoles}. 

Given a spacetime $(\MM,\g)$, let $\D$ denote the covariant derivative and $\Rbf$ the Riemann curvature tensor on $(\MM,\g)$. Let $S \subset \MM$ be a spacelike $2$-sphere and let $L'$ on $S$ be an outgoing future-pointing null vectorfield normal to $S$. Given a scalar function $\Om$ on $S$, the so-called \emph{null lapse}, let $\Lb'$ denote the unique ingoing future-pointing null vectorfield normal to $S$ such that 
\begin{align} 
\begin{aligned} 
\g\lrpar{L',\Lb'}=-2\Om^{-2}.
\end{aligned} \label{EQnormalisation}
\end{align}
Extend $L'$ and $\Lb'$ from $S$ as null geodesic vectorfields onto $(\MM,\g)$, and denote the resulting outgoing and ingoing null geodesic congruences by $\HH$ and $\HHb$, respectively.

Given a null lapse $\Om$ on $\HH$ and $\HHb$ which extends the null lapse $\Om$ on $S$, define the vectorfields
\begin{align} 
\begin{aligned} 
\widehat{L}:= \Om L', \,\, L := \Om^2 L' \text{ on } \HH, \,\, \widehat{\Lb}:=\Om \Lb', \,\, \Lb := \Om^2 \Lb' \text{ on } \HHb.
\end{aligned} \label{EQdefNullPairsINM}
\end{align} 
Define on $\HH$ the scalar function $v$ by 
\begin{align*} 
\begin{aligned} 
L(v)= 1 \text{ on } \HH, \,\, v \vert_S = 1,
\end{aligned} 
\end{align*}
and define on $\HHb$ the scalar function $u$ by
\begin{align*} 
\begin{aligned} 
\Lb(u)= 1 \text{ on } \HHb, \,\, u \vert_S = 0.
\end{aligned} 
\end{align*}
Let $S_{0,v}\subset \HH $ and $S_{u,1}\subset \HHb$ denote the level sets of $v$ and $u$, respectively. On each $S_{0,v}$, define $\Lb'$ as the unique ingoing future-pointing null vectorfield normal to $S_{0,v}$ such that \eqref{EQnormalisation} holds. Similarly, on each $S_{u,1}$, define $L'$ as the unique outgoing future-pointing null vectorfield normal to $S_{u,1}$ such that \eqref{EQnormalisation} holds. Extend $\Lb'$ from $\HH$ and $L'$ from $\HHb$, respectively, as null geodesic vectorfields onto $(\MM,\g)$. 

Subsequently, define $\Om$ in $(\MM,\g)$ by \eqref{EQnormalisation} and define $(\widehat{L}, \widehat{\Lb})$ and $(L,\Lb)$ in $\MM$ by \eqref{EQdefNullPairsINM}. Furthermore, define $v$ and $u$ in $\MM$ by 
\begin{align*} 
\begin{aligned} 
\Lb(v)=0, \,\, L(u)=0,
\end{aligned} 
\end{align*}
with initial values given by the construced $v$ on $\HH$ and $u$ on $\HHb$, respectively. Denote the level sets of $u$ and $v$ in $\MM$ by $\HH_u$ and $\HHb_v$, respectively, and let
\begin{align*} 
\begin{aligned} 
S_{u,v} := \HH_u \cap \HHb_v, \,\, \HH_{u,[v_1,v_2]} := \HH_u \cap \lrpar{\bigcup_{v_1\leq v \leq v_2} \HHb_v}, \,\, \HHb_{[u_1,u_2],v} := \lrpar{\bigcup_{u_1\leq u \leq u_2} \HH_u} \cap \HHb_v,
\end{aligned} 
\end{align*}
and let $\gd$ denote the induced Riemannian metric on $S_{u,v}$ and $\Nd$ the induced covariant dervative.

We are now in position to define the so-called \emph{double null coordinates} $(u,v,\th^1,\th^2)$ on $\MM$. First, define local coordinates $(\th^1,\th^2)$ on each $S_{0,v} \subset \HH$ by transporting local coordinates $(\th^1,\th^2)$ on $S=S_{0,1}$ along $\HH$ according to
\begin{align*} 
\begin{aligned} 
L(\th^1)=L(\th^2)=0 \text{ on } \HH,
\end{aligned} 
\end{align*}
and then define the local coordinates $(\th^1,\th^2)$ on $\MM$ by transporting $(\th^1,\th^2)$ according to
\begin{align*} 
\begin{aligned} 
\Lb(\th^1)=\Lb(\th^2)=0 \text{ on } \MM,
\end{aligned} 
\end{align*}
with given initial values on $\HH=\HH_0$.

The following is shown in Chapter 1 of \cite{ChrFormationBlackHoles}. 
\begin{itemize}
\item The functions $u$ and $v$ are local optical functions on $\MM$, that is, they satisfy the Eikonal equations
\begin{align*} 
\begin{aligned} 
\vert \D u \vert^2 =0, \,\, \vert \D v \vert^2 =0,
\end{aligned} 
\end{align*}
and it holds that
\begin{align*} 
\begin{aligned} 
L' = -2 \D u, \,\, \Lb' = -2 \D v.
\end{aligned} 
\end{align*}

\item In double null coordinates $(u,v,\th^1,\th^2)$, the Lorentzian metric $\g$ takes the form
\begin{align} 
\begin{aligned} 
\g = -4 \Om^2 du dv + \gd_{AB} \lrpar{d\th^A + b^A dv}\lrpar{d\th^B + b^B dv},
\end{aligned} \label{EQmetricexpressionDoubleNULLintro7788}
\end{align}
where the $S_{u,v}$-tangential vectorfield $b=b^A \pr_A$ is called \emph{shift vector} and satisfies by construction
\begin{align*} 
\begin{aligned} 
b=0 \text{ on } \HH_0.
\end{aligned} 
\end{align*}
\item It holds that
\begin{align} 
\begin{aligned} 
L= \pr_v + b, \,\, \Lb = \pr_u.
\end{aligned} \label{EQvectorexpr2488}
\end{align}
\end{itemize}

\ni Introduce furthermore the following notation (following \cite{ChrFormationBlackHoles}).

\begin{itemize}
\item On a sphere $(S_{u,v},\gd)$, denote by $r$ the \emph{area radius} defined by
\begin{align*} 
\begin{aligned} 
4\pi r^2 = \mathrm{area}_\gd\lrpar{S_{u,v}}.
\end{aligned} 
\end{align*}
\item By the coordinates $(\th^1,\th^2)$ on $S_{u,v,}$, we can equip each $S_{u,v}$ with the unit round metric 
\begin{align} 
\begin{aligned} 
\gac:= \lrpar{d\th^1}^2 + \sin^2 \th^1 \lrpar{d\th^2}^2,
\end{aligned} \label{EQdefRoundUnitMetric}
\end{align}
and define $\ga:= (v-u)^2\gac$ on $S_{u,v}$ for $v> u$. By construction, $\gac$ is invariant under the flow of $L$ on $\HH$ and under the flow of $\Lb$ on $\MM$. Denote the volume forms of $\gd$ and $\gac$ on $S_{v}$ by $\sqrt{\gd}$ and $\sqrt{\gac}$, respectively. 
\item We decompose the metric $\gd$ on $S_{u,v}$ into
\begin{align} \label{EQdefdecomposition1}
\gd = \phi^2 \gd_c,
\end{align}
where $\phi$ is a scalar function and $\gd_c$ is a Riemannian metric on $S_{u,v}$ given by
\begin{align*} 
\begin{aligned} 
\phi^2 := \frac{\sqrt{\gd}}{\sqrt{\gac}}, \,\, \gd_c := \phi^{-2} \gd.
\end{aligned} 
\end{align*}
By definition it holds that $\sqrt{\gd_c}=\sqrt{\gac}$.

\item On each $S_{u,v}$ define with respect to $(\th^1,\th^2)$ 

\begin{itemize}
\item the standard (real) spherical harmonics $Y^{(lm)}$ for $l\geq0$ and $-l \leq m \leq l$, 
\item the vector spherical harmonics $E^{(lm)}$ and $H^{(lm)}$ for $l\geq1$ and $-l \leq m \leq l$, 
\item the tensor spherical harmonics $\phi^{(lm)}$ and $\psi^{(lm)}$ for $l\geq2$ and $-l \leq m \leq l$. 
\end{itemize}
We refer to Appendix \ref{SECellEstimatesSpheres} for details and properties of spherical harmonics.

\item For a general $S_{u,v}$-tangent tensorfield $W$, introduce the notation
\begin{align}
DW:= \Lied_L W, \,\, \underline{D}W:= \Lied_\Lb W, \label{EQdefDnotation}
\end{align}
where $\Lied$ denotes the projection of the Lie derivative on $(\MM,\g)$ onto the tangent space of $S_{u,v}$.

\item For $S_{u,v}$-tangent vectorfields $X$ and $Y$, define the \emph{Ricci coefficients} by
\begin{align} \begin{aligned}
\chi(X,Y) :=& \g(\D_X \widehat{L},Y), & \chib(X,Y) :=& \g(\D_X \widehat{\Lb},Y), \\
\zeta(X) :=& \half \g(\D_X \widehat{L}, \widehat{\Lb}), & \underline{\zeta}(X) :=& \half \g(\D_X \widehat{\Lb}, \widehat{L}), \\
\eta :=& \zeta + \di \log \Om , & \etab :=& -\zeta + \di \log \Om, \\
\om :=& D \log \Om, & \omb :=& \Du \log \Om,
\end{aligned}\label{DEFricciCoefficients} \end{align}
where $\di$ denotes the extrinsic derivative of $S_{u,v}$. It holds that 
\begin{align} 
\begin{aligned} 
\underline{\zeta}=-\zeta, \,\, \etab=-\eta+2\di\log\Om.
\end{aligned} \label{EQriccirelationetabeta} 
\end{align}

\item For $S_{u,v}$-tangent vectorfields $X$ and $Y$, define the \emph{null curvature components} by
\begin{align} \begin{aligned}
\alpha(X,Y) :=& \Rbf(X,\widehat{L}, Y, \widehat{L}), & \beta(X) :=& \half \Rbf(X, \widehat{L},\widehat{\Lb},\widehat{L}), \\
\rh :=& \frac{1}{4} \Rbf(\widehat{\Lb}, \widehat{L}, \widehat{\Lb}, \widehat{L}), & \sigma \iin(X,Y) :=& \half \Rbf(X,Y,\widehat{\Lb}, \widehat{L}), \\
\beb(X) :=& \half \Rbf(X, \widehat{\Lb},\widehat{\Lb},\widehat{L}), & \ab(X,Y) :=& \Rbf(X,\widehat{\Lb}, Y, \widehat{\Lb}).
\end{aligned}\label{EQnullcurvatureCOMPDEF} \end{align}

\end{itemize}
\subsection{Null structure equations and null Bianchi equations} \label{SECnullstructureequations} By the null geometry setup in Section \ref{SECsetupdoublenull} and the Einstein equations, the metric components \eqref{EQdefRoundUnitMetric}, Ricci coefficients \eqref{DEFricciCoefficients} and null curvature components \eqref{EQnullcurvatureCOMPDEF} satisfy the so-called \emph{null structure equations}. Before stating them, we introduce the following notation, following Chapter 1 of \cite{ChrFormationBlackHoles}.
\begin{itemize}
\item For two $S_{u,v}$-tangential $1$-forms $X$ and $Y$,
\begin{align*} \begin{aligned}
(X,Y):=& \gd(X,Y), & ({}^\ast X)_A :=& \in_{AB}X^B, \\
(X \widehat{\otimes} Y)_{AB} :=& X_A Y_B + X_B Y_A - (X \cdot Y)\gd_{AB}, & \Divd X :=& \Nd^A X_A, \\
 (\Nd \widehat{\otimes} Y)_{AB} :=& \Nd_A Y_B + \Nd_B Y_A - (\Divd Y)\gd_{AB}, & \Curld X :=& \in^{AB}\Nd_A X_B,
\end{aligned}\end{align*}
where $\in$ denotes the area $2$-form of $S_{u,v}$.
\item For two symmetric $S_{u,v}$-tangential $2$-tensors $V$ and $W$, 
\begin{align*}
\tr V := \gd^{AB} V_{AB}, \,\, \widehat{V} := V - \half \tr V \gd, \,\, V \wedge W := \ind^{AB} V_{AC}W^C_{\,\,\,B}.
\end{align*}
\item For a symmetric $S_{u,v}$-tangential $2$-tensor $V$ and a $1$-form $X$,
\begin{align*} 
\begin{aligned} 
(V \cdot X)_A := V_{AB}X^B.
\end{aligned} 
\end{align*}
\item For a symmetric $S_{u,v}$-tangential  $2$-tensor $V$,
\begin{align*} 
\begin{aligned} 
\Divd V_A := \Nd^B V_{BA}.
\end{aligned} 
\end{align*}
\item For a symmetric $S_{u,v}$-tangential tensor $W$, let $\widehat{D}W$ and $\widehat{\Du}W$ denote the tracefree parts of $DW$ and $\Du W$, respectively, with respect to $\gd$.
\end{itemize}

\ni In this paper we also use the operators $\DDd_1, \DDd_2, \DDd_1^{\ast}$ and $\DDd_2^\ast$ which are introduced in Appendix \ref{SECellEstimatesSpheres}. 

We are now in position to state the \emph{null structure equations} of a spacetime. We have the first variation equations,
\begin{align} \begin{aligned}
D \gd =& 2 \Om \chi, & \Du \gd =& 2 \Om \chib,
\end{aligned} \label{EQfirstvariation1}\end{align}
which imply specifically that
\begin{align} 
\begin{aligned} 
D \phi = \frac{\Om \tr \chi \phi}{2}, \,\, \Du \phi = \frac{\Om\tr\chib \phi}{2},
\end{aligned} \label{EQusefulDphiRELATION}
\end{align}
the Raychauduri equations, 
\begin{align} \begin{aligned}
D \trchi + \frac{\Om}{2} (\trchi)^2 - \om \trchi =& - \Om \vert \chih \vert^2_{\gd},& \Du \trchib + \frac{\Om}{2} (\trchib)^2 - \omb \trchib =& - \Om \vert \chibh \vert^2_{\gd},
\end{aligned}\label{EQRaychauduri1}\end{align}
and
\begin{align} \begin{aligned}
D\chih =& \Om \vert \chih \vert^2 \gd + \om \chih - \Om \a, &\Du \chibh =& \Om \vert \chibh \vert^2 \gd + \omb \chibh - \Om \ab,\\
D \eta =& \Om (\chi \cdot \etab - \beta), & \Du \etab =& \Om (\chib \cdot \eta + \beb), \\
D \omb =& \Om^2(2 (\eta, \etab) - \vert \eta \vert^2 -\rh), & \Du \om=& \Om^2(2 (\eta,\etab) - \vert \etab \vert^2 - \rh), \\
\Curld \eta=& - \half \chih \wedge \chibh - \si, & \Curld \etab =& - \Curld \eta = - \Curld \zeta, \\
D\etab =& - \Om (\chi \cdot \etab -\be) + 2 \di \om, & \Du\eta =& - \Om (\chib \cdot \eta + \beb) + 2 \di \omb.
\end{aligned} \label{EQtransportEQLnullstructurenonlinear}\end{align}
Further we have the Gauss equation, 
\begin{align} \label{EQGaussEquation}
K + \frac{1}{4} \tr \chi \tr \chib - \half (\chih,\chibh) = - \rh,
\end{align}
where $K$ denotes the Gauss curvature of $S_{u,v}$, the Gauss-Codazzi equations
\begin{align} \begin{aligned}
\Divd \chih -\half \di \tr \chi + \chih \cdot \zeta - \half \trchi \zeta =& - \beta,\\
\Divd \chibh - \half \di \trchib -\chibh \cdot \zeta +\half \trchib \zeta =& \beb,
\end{aligned}\label{EQgausscodazzinonlinear1}\end{align}
and 
\begin{align}\begin{aligned}
D (\Om \trchib) =& 2 \Om^2 \Divd \etab + 2 \Om^2 \vert \etab \vert^2 - \Om^2 (\chih, \chibh) - \half \Om^2 \trchi \trchib + 2 \Om^2 \rh,\\ 
\Du (\Om \trchi) =& 2 \Om^2 \Divd \eta + 2 \Om^2 \vert \eta \vert^2 - \Om^2 (\chih, \chibh) - \half \Om^2 \trchi \trchib + 2 \Om^2 \rh,
\end{aligned} \label{EQtransporttrchitrchib1}\end{align}
as well as
\begin{align} \begin{aligned}
D(\Om \chibh) =& \Om^2\lrpar{(\chih, \chibh) \gd + \half \trchi \chibh + \Nd \widehat{\otimes}\etab + \etab \widehat{\otimes}\etab - \half \trchib \chih}, \\
\Du(\Om \chih) =& \Om^2 \lrpar{(\chih, \chibh)\gd + \half \trchib \chih + \Nd \widehat{\otimes} \eta + \eta \widehat{\otimes} \eta - \half \trchi \chibh}.
\end{aligned} \label{EQchihequations1} \end{align} 

\ni By Proposition 1.2 in \cite{ChrFormationBlackHoles}, the following \emph{null Bianchi equations} hold,
\begin{align} \begin{aligned}
\widehat{\Du} \a - \half \Om \tr \chib \a + 2 \omb \a + \Om \left( -\Nd \widehat{\otimes} \be - (4 \eta + \zeta) \widehat{\otimes} \be + 3 \chih \rh + 3 {}^\ast \chih \si \right) =&0, \\
\widehat{D} \aa - \half \Om \tr \chi \aa + 2 \om \aa + \Om \left( \Nd \widehat{\otimes} \beb + (4 \etab - \zeta) \widehat{\otimes} \beb + 3 \chibh \rh -3 {}^\ast \chibh \si \right) =&0, \\
D \be + \frac{3}{2} \Om \tr \chi \be - \Om \chih \cdot \be - \om \be - \Om \left( \Divd \a + ( \etab + 2 \zeta) \cdot \a \right) =& 0, \\
\Du \beb + \frac{3}{2} \Om \tr \chib \beb - \Om \chibh \cdot \beb - \omb \beb + \Om \left( \Divd \aa + (\eta-2 \zeta) \cdot \aa \right)=&0, \\
\Du \be + \half \Om \tr \chib \be - \Om \chibh \cdot \be + \omb \be - \Om \left(\di \rh + {}^\ast \di \si +  3 \eta \rh +3 {}^\ast \eta \si + 2 \chibh \cdot \beb \right)=&0, \\
D \beb + \half \Om \tr \chi \beb - \Om \chih \cdot \beb + \om \beb + \Om \left( \di \rh- {}^\ast \di \si + 3 \etab \rh - 3 {}^\ast \etab \si - 2 \chibh \cdot \be \right) =&0, \\
D \rh + \frac{3}{2} \Om \tr \chi \rh - \Om \left( \Divd \be + (2 \etab + \zeta, \be) - \half (\chibh, \a) \right)=&0, \\
\Du \rh + \frac{3}{2} \Om \tr \chib \rh+ \Om \left( \Divd \beb + (2 \eta- \zeta, \beb)+ \half (\chih,\aa) \right) =&0, \\
D \si + \frac{3}{2} \Om \tr \chi \si + \Om \left( \Curld \be + (2 \etab+\zeta, {}^\ast \be)- \half \chibh  \wedge \a \right)=&0, \\
\Du \si + \frac{3}{2} \Om \tr \chib \si + \Om \left( \Curld \beb + (2 \eta-\zeta, {}^\ast \beb) + \half \chih \wedge \aa \right)=&0.
\end{aligned} \label{EQnullBianchiEquations}\end{align}

\ni In addition to the above null structure equations, the following transport equation for $\Du\omb$ is derived in Appendix \ref{SECappendixDerivationTransveralEquations}, 
\begin{align} 
\begin{aligned} 
D\Du\omb =&12 \Om^3 (\di\log\Om-\eta)\omb +2 \Om^2 \omb \lrpar{2(\eta,\etab)- \vert \eta\vert^2} + \rh \lrpar{\frac{3}{2}\Om^3\trchib -2\Om^2 \omb} \\
&+12 \Om^3 \chib(\eta, \eta-\di\log\Om) + \Om^3 (\beb, 7\eta-3\di\log\Om)+ \Om^3 \Divd \beb + \half \Om^3 (\chih, \ab).
\end{aligned} \label{EQDUOMU1}
\end{align}

\ni Similar equations for higher derivatives can be derived by commuting the above equations with $D, \Du, \Nd$.

\subsection{Null geometry of Minkowski and Schwarzschild spacetimes} \label{SECminkowskiDEF} 

\ni In this section we discuss the null geometry of the Minkowski spacetime and the Schwarzschild family of spacetimes.\\

\ni \textbf{Minkowski spacetime.} Minkowski spacetime, the trivial solution to the Einstein vacuum equations \eqref{EQeve}, is given by
\begin{align*} 
\begin{aligned} 
\MM= \RRR^{1+3}, \,\, \g = \mathbf{m}:=\mathrm{diag}(-1,1,1,1).
\end{aligned} 
\end{align*}
From the Cartesian coordinates $(t,x^1,x^2,x^3)$ on $\RRR^{1+3}$, the double null coordinates $(u,v,\th^1,\th^2)$ in Minkowski are defined by
\begin{align} 
\begin{aligned} 
(u,v,\th^1,\th^2) := \lrpar{\half \lrpar{t-r},\half \lrpar{t+r}, \th^1, \th^2},
\end{aligned}\label{EQdoubleNullMinkowski}
\end{align}
where $r := \sqrt{\sum\limits_{i=1}^3 (x^i)^2}$, and with respect to which
\begin{align*} 
\begin{aligned} 
\mathbf{m} = -4 du dv + (v-u)^2 \gac_{AB} d\th^A d\th^B.
\end{aligned} 
\end{align*}
We note that the area radius of the sphere $S_{u,v}$ is given by $r=v-u.$

In coordinates \eqref{EQdoubleNullMinkowski}, the metric components, Ricci coefficients and null curvature components are given on $S_{u,v}$ by, with $r=v-u$ and $\gac$ as in \eqref{EQdefRoundUnitMetric},
\begin{align} 
\begin{aligned} 
\Om =& 1, & \gd=& r^2\gac, & & && && \\
\trchi=&\frac{2}{r}, & \trchib=&-\frac{2}{r}, & \chih=&0, &\chibh=&0, &&&& \\
\eta=&0, & \etab=& 0, & \zeta=&0, &\underline{\zeta}=&0, &&&& \\
\om=& 0, & D\om=& 0, & \omb=& 0, & \Du\omb=& 0, &&&& \\
\a=&0, &\be=&0, &\beb=&0, & \ab=&0, & \rh=& 0, & \si=&0.
\end{aligned} \label{EQspheredataMINKOWKSI001}
\end{align}

\ni \textbf{Schwarzschild family of spacetimes.} For real numbers $M \in \RRR$, the family of Schwarzschild metrics is given in Schwarzschild coordinates $(t,r,\th^1,\th^2)$ by (see for example \cite{HawkingEllis})
\begin{align} 
\begin{aligned} 
\g= -\lrpar{1-\frac{2M}{r}} dt^2 + \lrpar{1-\frac{2M}{r}}^{-1} dr^2 + r^2 \lrpar{d\th^2 + \sin^2\th d\phi^2}.
\end{aligned} \label{EQdefSSmetric}
\end{align}
Setting $M=0$ leads back to Minkowski spacetime, while $M>0$ is interpreted as a black hole solution with event horizon at $\{r=2M\}$. The so-called exterior region $\{r>2M\}$ is covered by Eddington-Finkelstein double null coordinates $(u,v,\th^1,\th^2)$ in which the metric takes the form
\begin{align} 
\begin{aligned} 
\g = -4 \lrpar{1-\frac{2M}{r_M(u,v)}} du dv + r_M(u,v)^2\gac_{CD} d\th^C d\th^D,
\end{aligned} \label{EQSSformulaMETRIC}
\end{align}
where the area radius $r_M(u,v)$ is defined by (see for example (98) in \cite{DHR})
\begin{align} 
\begin{aligned} 
\frac{v-u}{2M} = \frac{r_M(u,v)}{2M} + \log \lrpar{\frac{r_M(u,v)}{2M}-1}.
\end{aligned} \label{EQdefRbyUV}
\end{align}

\ni The area radius function $r_M(u,v)$ is smooth in $M$ away from $M=0$, and continuous in $M$ at $M=0$. The corresponding null lapse $\Om_M$ is determined by \eqref{EQSSformulaMETRIC} to be
\begin{align} 
\begin{aligned} 
\Om_M^2 =1-\frac{2M}{r_M},
\end{aligned} \label{EQSSnullLapseFormula}
\end{align}
where we note the following standard identities,
\begin{align} 
\begin{aligned} 
\pr_v \Om_M = \frac{\Om_M M}{r_M^2}, \,\, \pr_u \Om_M = -\frac{\Om_M M}{r_M^2}, \,\, \pr_v r_M = \Om_M^2, \,\, \pr_u r_M =-\Om_M^2.
\end{aligned} \label{EQSSdoublenullEFstandardRelations}
\end{align}

\ni Using \eqref{EQSSnullLapseFormula}, \eqref{EQSSdoublenullEFstandardRelations} and that by \eqref{EQdefNullPairsINM} and \eqref{EQvectorexpr2488},
\begin{align*} 
\begin{aligned} 
\widehat{L}= \Om^{-1} \pr_v, \,\, \widehat{L}= \Om^{-1} \pr_u,
\end{aligned} 
\end{align*}
we calculate that in Eddington-Finkelstein double null coordinates $(u,v,\th^1,\th^2)$, the metric components, Ricci coefficients and null curvature components are given on $S_{u,v}$ by
\begin{align} 
\begin{aligned} 
\Om_M =& \sqrt{1-\frac{2M}{r_M}}, & \gd=& r_M^2\gac, & & && \\
\trchi=&\frac{2\Om_M}{r_M}, & \trchib=&-\frac{2\Om_M}{r_M}, & \chih=&0, &\chibh=&0, \\
\eta=&0, & \etab=& 0, & \zeta=&0, &\underline{\zeta}=&0, \\
\om=& \frac{M}{r_M^2}, & D\om=& -\frac{2M}{r_M^3}\Om_M^2, & \omb=& -\frac{M}{r_M^2}, & \Du\omb=& -\frac{2M}{r_M^3}\Om_M^2, \\
\a=&0, &\be=&0, &\beb=&0, & \ab=&0, \\
\rh=& -\frac{2M}{r_M^3}, & \si=&0, && &&
\end{aligned} \label{EQspheredataSSM111222}
\end{align}
for $v>u$ such that $r_M=r_M(u,v)>2M$.

\subsection{Tensor spaces and calculus estimate} \label{SECfunctionSpacesPreliminary} In this section we define the basic function spaces of this paper.

\begin{definition}[Tensor spaces on Riemannian $2$-spheres $S_{u,v}$] \label{DEFsphereSPACES} For two real numbers $v\geq u$, let $S_{u,v}$ be a $2$-sphere equipped with a round unit metric $\gac$. For integers $m\geq0$ and tensors $T$ on $S_{u,v}$, define
\begin{align*} 
\begin{aligned} 
\Vert T \Vert^2_{H^m(S_{u,v})} :=   \sum\limits_{i=1}^m  \left\Vert \Nd^i T \right\Vert^2_{L^2(S_{u,v})},
\end{aligned} 
\end{align*}
where the covariant derivative $\Nd$ and the measure in $L^2(S_{u,v})$ are with respect to the round metric $\ga=(v-u)^2\gac$. Moreover, let 
\begin{align*} 
\begin{aligned} 
H^m(S_{u,v}) := \{ T: \Vert T \Vert_{H^m(S_{u,v})} < \infty \}.
\end{aligned} 
\end{align*}
\end{definition}

\begin{definition}[Tensor spaces on null hypersurfaces] \label{DEFnullHHspaces} Let $m\geq0$ and $l\geq0$ be two integers. In the following let $D$ and $\Du$ be defined as in \eqref{EQdefDnotation} for null hypersurfaces in Minkowski. 
\begin{enumerate}

\item For real numbers $u_0 < v_1 < v_2$ and $S_{u_0,v}$-tangential tensors $T$ on $\HH_{u_0,[v_1,v_2]}$, define
\begin{align*} 
\begin{aligned} 
\Vert T \Vert^2_{H^m_l(\HH_{u_0,[v_1,v_2]})} :=  \int\limits_{v_1}^{v_2} \,\, \sum\limits_{1\leq j\leq l} \left\Vert D^j T \right\Vert^2_{H^m(S_{u_0,v})} dv,
\end{aligned} 
\end{align*}
and let
\begin{align*} 
\begin{aligned} 
H^m_l(\HH_{u_0,[v_1,v_2]}) := \{ F: \Vert F \Vert_{H^m_l(\HH_{u_0,[v_1,v_2]})} < \infty\}.
\end{aligned} 
\end{align*}

\item For real numbers $u_1 < u_2 < v_0$ and $S_{u,v_0}$-tangential tensors $T$ on $\HHb_{[u_1,u_2],v_0}$, define
\begin{align*} 
\begin{aligned} 
\Vert T \Vert^2_{H^m_l(\HHb_{[u_1,u_2],v_0})} :=  \int\limits_{u_1}^{u_2} \,\, \sum\limits_{1\leq j\leq l} \left\Vert \Du^j T \right\Vert^2_{H^m(S_{u,v_0})} du,
\end{aligned} 
\end{align*}
and let
\begin{align*} 
\begin{aligned} 
H^m_l(\HHb_{[u_1,u_2],v_0}) := \{ F: \Vert F \Vert_{H^m_l(\HHb_{[u_1,u_2],v_0})} < \infty\}.
\end{aligned} 
\end{align*}

\end{enumerate}
\end{definition}

\ni The following standard calculus estimates are applied tacitly throughout this paper. They follow, for example, from Corollaries 3.3 and 3.4, and Lemma 3.20 in \cite{J3}, and the results of Chapter 13 in \cite{TaylorPDE3}. 
\begin{lemma}[Calculus estimates] \label{LEMstandardTRACE} Let $u_0<v_1<v_2$ be real numbers. The following holds.

\begin{enumerate}

\item \textbf{Trace estimate.} For any $S_{u_0,v}$-tangent tensor $T$ on $\HH_{u_0,[v_1,v_2]}$, we have that for $v_1\leq v \leq v_2$,
\begin{align*} 
\begin{aligned} 
\Vert T \Vert_{H^0(S_{u_0,v})} \leq& C_{u_0,v_1,v_2} \cdot \Vert T \Vert_{H^0_1(\HH_{u_0,[v_1,v_2]})},
\end{aligned} 
\end{align*}
where the constant $C_{u_0,v_1,v_2}>0$ depends on $u_0, v_1$ and $v_2$. 

\item \textbf{$L^\infty$-estimate.} For any $S_{u_0,v}$-tangent tensor $T$ on $\HH_{u_0,[v_1,v_2]}$, we have that
\begin{align*} 
\begin{aligned} 
\Vert T \Vert_{L^\infty(\HH_{u_0,[v_1,v_2]})} \leq& C_{u_0,v_1,v_2} \cdot \Vert T \Vert_{H^2_1(\HH_{u_0,[v_1,v_2]})},
\end{aligned} 
\end{align*}
where the constant $C_{u_0,v_1,v_2}>0$ depends on $u_0, v_1$ and $v_2$. 

\item \textbf{Product estimate.} Let $m_1, m_2 \geq 2$ and $l_1, l_2 \geq 1$ be integers, and further let $T \in H^{m_1}_{l_1}(\HH_{u_0,[v_1,v_2]})$ and $T' \in H^{m_2}_{l_2}(\HH_{u_0,[v_1,v_2]})$ be two $S_{u_0,v}$-tangent tensors. Then it holds that for integers $0\leq m\leq \mathrm{min}(m_1,m_2)$ and $0\leq l \leq \mathrm{min}(l_1,l_2)$,
\begin{align*} 
\begin{aligned} 
\Vert T \cdot T' \Vert_{H^m_l(\HH_{u_0,[v_1,v_2]})} \leq& C \cdot \Vert T\Vert_{H^m_l(\HH_{u_0,[v_1,v_2]})} \cdot \Vert T' \Vert_{H^{m_2}_{l_2}(\HH_{u_0,[v_1,v_2]})} \\
&+ C \cdot \Vert T\Vert_{H^{m_1}_{l_1}(\HH_{u_0,[v_1,v_2]})} \cdot \Vert T' \Vert_{H^{m}_{l}(\HH_{u_0,[v_1,v_2]})},
\end{aligned} 
\end{align*}
where the constant $C>0$ depends on $m,m_1,m_2,l,l_1$ and $l_2$.
\end{enumerate}
\end{lemma}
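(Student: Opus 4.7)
The strategy is to reduce all three estimates to standard one-dimensional Sobolev and trace theorems on the cylinder $[v_1,v_2] \times S^2$ (respectively $[u_1,u_2] \times S^2$), using that the geometry on each slice $S_{u_0,v}$ is a round sphere with metric $\ga=(v-u_0)^2\gac$ and that $D=\Lied_{\pr_v}$ preserves the $S$-tangential character of tensors (so $D$ commutes, modulo smooth coefficients of $r=v-u_0$, with the angular covariant derivative $\Nd$). I will only write the argument on $\HH_{u_0,[v_1,v_2]}$; the one on $\HHb_{[u_1,u_2],v_0}$ is identical after swapping $D$ for $\Du$.

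\textbf{Trace estimate.} First I would apply the scalar fundamental theorem of calculus to the function $v \mapsto \int_{S_{u_0,v}} \vert T \vert^2 \, d\mu_\ga$: differentiating in $v$ and using that $D$ is the Lie derivative along $\pr_v$, a short computation produces a pointwise identity $\pr_v \vert T \vert_\ga^2 = 2 \ga(T,DT) + (\text{smooth in } r) \cdot \vert T \vert_\ga^2$, together with $\pr_v d\mu_\ga = (2/r) d\mu_\ga$. Integrating from some $v_0 \in [v_1,v_2]$ to $v$, applying Cauchy--Schwarz, and averaging the resulting inequality over $v_0\in[v_1,v_2]$ absorbs the $\|T\|_{L^2(S_{u_0,v_0})}$ contribution on the right into $\int_{v_1}^{v_2}\|DT\|^2_{H^0(S_{u_0,v})}\, dv$. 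This yields the trace estimate with the stated dependence of the constant on $u_0,v_1,v_2$.

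\textbf{$L^\infty$ estimate.} This I would obtain by composing the trace estimate with Sobolev embedding on the round sphere. Specifically, for each fixed $v$, the embedding $H^2(S_{u_0,v}) \hookrightarrow L^\infty(S_{u_0,v})$ (valid on a $2$-dimensional compact manifold) gives $\|T\|_{L^\infty(S_{u_0,v})} \les \|T\|_{H^2(S_{u_0,v})}$ with a constant depending only on the area-radius $r=v-u_0$. Taking the supremum in $v\in[v_1,v_2]$ and applying the trace estimate of part (1) to the tensor $T$ together with its angular derivatives $\Nd T$ and $\Nd^2 T$ (noting that $\Nd$ and $D$ commute up to smooth $r$-factors, so no new obstruction appears) yields the bound by $\|T\|_{H^2_1(\HH_{u_0,[v_1,v_2]})}$.

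\textbf{Product estimate.} Here I would use the standard Moser/Kato--Ponce-type inequality on the $2$-sphere, which reads, for $m\geq 0$,
\[
\Vert T\cdot T' \Vert_{H^m(S_{u_0,v})} \les \Vert T \Vert_{L^\infty(S_{u_0,v})} \Vert T' \Vert_{H^m(S_{u_0,v})} + \Vert T \Vert_{H^m(S_{u_0,v})} \Vert T' \Vert_{L^\infty(S_{u_0,v})},
\]
and the analogous inequality with $D^j$ distributed via the Leibniz rule $D^j(T\cdot T') = \sum_{k+l=j}\binom{j}{k}(D^k T)\cdot(D^l T')$. For each cross term, I would put whichever factor has the higher total order (in both $D$ and $\Nd$) in $H^m_l$ and the other in $L^\infty(\HH)$, then invoke the $L^\infty$ estimate of part (2) on the second factor (which requires $m_i\geq 2$ and $l_i\geq 1$, hence the hypotheses $m_1,m_2\geq 2$, $l_1,l_2\geq 1$). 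Integrating the resulting pointwise-in-$v$ inequality in $v\in[v_1,v_2]$ and applying Cauchy--Schwarz delivers the desired bound.

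The only nontrivial point is bookkeeping: distributing $D$- and $\Nd$-derivatives via Leibniz, commuting $\Nd$ through $D$ (which on $\HH_{u_0}$ in the Minkowski background produces only smooth functions of $r$ as commutators), and tracking that in each Leibniz term the factor carrying the most derivatives is placed in the larger Sobolev space. Once this is organized, the estimates reduce to the classical one-dimensional trace theorem and to Moser's inequality on $S^2$, and the constants depend only on $u_0,v_1,v_2$ and the integers $m,m_i,l,l_i$.
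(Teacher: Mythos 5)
The paper itself does not prove Lemma~\ref{LEMstandardTRACE}; it simply cites Corollaries 3.3--3.4 and Lemma 3.20 of \cite{J3} and Chapter 13 of \cite{TaylorPDE3}. Your from-scratch argument therefore goes beyond what the paper records, but the strategy you describe is exactly the one underlying those references: reduce to the cylinder $[v_1,v_2]\times S^2$ with $r$-dependent round metric, use the scalar fundamental theorem of calculus plus averaging for the trace, compose with $H^2(S^2)\hookrightarrow L^\infty(S^2)$ for the sup-bound, and Leibniz-plus-Moser for the product estimate. Parts (1) and (2) of your sketch are essentially complete and correct (modulo the fact that the paper's definition of $H^m_l$ as written has the sum starting at $j=1$, which is surely a typo for $j=0$; with $j=1$ the trace estimate is false, e.g.\ for $v$-constant $T$, and your FTC argument tacitly uses the corrected version).

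Your treatment of the product estimate is correct in spirit but slightly understates the difficulty. Applying slice-wise Moser to each Leibniz term $(D^aT)(D^bT')$ produces the two-sided bound
$\Vert D^aT\cdot D^bT'\Vert_{H^m(S)}\les \Vert D^aT\Vert_{L^\infty(S)}\Vert D^bT'\Vert_{H^m(S)}+\Vert D^aT\Vert_{H^m(S)}\Vert D^bT'\Vert_{L^\infty(S)}$, and in the boundary case $j=l=l_1$, $a=j$, $b=0$ the first term involves $\Vert D^{l_1}T\Vert_{L^\infty(\HH)}$, which is \emph{not} controlled by $\Vert T\Vert_{H^{m_1}_{l_1}}$ via part (2); one cannot simply drop it from the two-sided bound. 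The correct bookkeeping is to Leibniz in both $D$ and $\Nd$ simultaneously and to handle the genuinely intermediate terms (where neither factor can be put in $L^\infty(\HH)$) by Gagliardo--Nirenberg interpolation in the anisotropic scale, rather than by the ``put the lower-order factor in $L^\infty$'' heuristic alone. This is precisely the content of the Szeftel reference. Your statement that ``the only nontrivial point is bookkeeping'' is therefore formally accurate, but the bookkeeping requires the full anisotropic Moser/GN machinery, not just the slice-wise two-term Moser inequality combined with the Leibniz rule.
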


\subsection{Sphere data, null data and norms} \label{SECdefFirstOrderSphereData} In this section we set up the essential definitions for the characteristic gluing problem.

\begin{definition}[$C^2$-sphere data] \label{DEFspheredata2} For two real numbers $v\geq u$, $C^2$-sphere data $x_{u,v}$ consists of a $2$-sphere $S_{u,v}$ equipped with a round metric $\gac$, see \eqref{EQdefRoundUnitMetric}, and the following tuple of tensors on $S_{u,v}$,
\begin{align*} 
\begin{aligned} 
x_{u,v} = (\Om,\gd, \Om\trchi, \chih, \Om\trchib, \chibh, \eta, \om, D\om, \omb, \Du\omb, \a, \ab),
\end{aligned} 
\end{align*}
where
\begin{itemize}
\item $\Om>0$ is a positive scalar function and $\gd$ is a Riemannian metric, 
\item $\Om\trchi, \Om\trchib,\om, D\Om, \omb, \Du\omb$ are scalar functions,
\item $\eta$ is a vectorfield,
\item $\chih$, $\chibh$, $\a$ and $\ab$ are symmetric $\gd$-tracefree $2$-tensors.
\end{itemize}
\end{definition}

\ni \emph{Remarks on Definition \ref{DEFspheredata2}.}
\begin{enumerate}

\item By the specification of $\gac$, it follows that sphere data is coordinate-dependent. More generally, the sphere data induced on a spacelike $2$-sphere in a spacetime is gauge-dependent; see also Section \ref{SECdefEquivalenceFirstOrderSphereData}.

\item By the null structure equations and null Bianchi equations of Section \ref{SECnullstructureequations}, sphere data $x_{u,v}$ fully determines on $S_{u,v}$ the Ricci coefficients and null curvature components
\begin{align*} 
\begin{aligned} 
\lrpar{\etab, \zeta, \underline{\zeta}}, \,\, (\be,\rh,\si,\beb),
\end{aligned} 
\end{align*}
as well as the derivatives
\begin{align*} 
\begin{aligned} 
&\lrpar{D\eta, D\etab, D\zeta, D\chi, D\chib, D\omb}, \,\, \lrpar{ \Du \eta, \Du\etab, \Du\zeta, \Du\chi, \Du\chib, \Du\om }, \\
&\lrpar{D\beta, D\rh, D\si, D\beb, D\ab}, \lrpar{\Du\beb, \Du\si,\Du\rh,\Du\be,\Du\a}.
\end{aligned} 
\end{align*}
\end{enumerate}

\textbf{Notation.} In this paper, we denote the \emph{reference Minkowski sphere data} by
\begin{align*} 
\begin{aligned} 
\mathfrak{m}_{u,v} = \lrpar{1,r^2 \gac, \frac{2}{r}, 0, -\frac{2}{r}, 0, 0, 0, 0, 0, 0, 0, 0},
\end{aligned} 
\end{align*}
where $r=v-u$, see \eqref{EQspheredataMINKOWKSI001}. For real numbers $M$, we denote the \emph{reference Schwarzschild sphere data} by
\begin{align*} 
\begin{aligned} 
\mathfrak{m}^M_{u,v} = \lrpar{\Om_M,r_M^2 \gac, \frac{2\Om_M}{r_M}, 0, -\frac{2\Om_M}{r_M}, 0, 0, \frac{M}{r_M^2}, -\frac{2M\Om_M^2}{r_M^3}, -\frac{M}{r_M^2}, -\frac{2M\Om_M^2}{r_M^3}, 0, 0},
\end{aligned} 
\end{align*}
where $r_M=r_M(u,v)$ is defined in \eqref{EQdefRbyUV} and $\Om_M=(1-\frac{2M}{r_M})^{1/2}$ in \eqref{EQSSnullLapseFormula}, see \eqref{EQspheredataSSM111222}.

\begin{definition}[Norm for sphere data] \label{DEFnormFirstOrderDATA} Let $x_{u,v}$ be sphere data. Define 
\begin{align*} 
\begin{aligned} 
\Vert x_{u,v} \Vert_{\XX(S_{u,v})} :=& \Vert \Om \Vert_{H^{6}(S_{u,v})} +\Vert \gd \Vert_{H^{6}(S_{u,v})} + \Vert \Om\trchi \Vert_{H^{6}(S_{u,v})} + \Vert \chih \Vert_{H^{6}(S_{u,v})}\\
& + \Vert \Om\trchib \Vert_{H^{4}(S_{u,v})} + \Vert \chibh \Vert_{H^{4}(S_{u,v})} + \Vert \eta \Vert_{H^{5}(S_{u,v})} \\
&+ \Vert \om \Vert_{H^{6}(S_{u,v})}+ \Vert D\om \Vert_{H^{6}(S_{u,v})}+\Vert \omb \Vert_{H^{4}(S_{u,v})} + \Vert \Du\omb \Vert_{H^{2}(S_{u,v})} \\
&+ \Vert \a \Vert_{H^{6}(S_{u,v})} +\Vert \ab \Vert_{H^{2}(S_{u,v})},
\end{aligned} 
\end{align*}
where the norms are with respect to the round metric $\ga= (v-u)^2 \gac$ on $S_{u,v}$. Let
\begin{align} 
\begin{aligned} 
\XX(S_{u,v}) := \{ x_{u,v} : \Vert x_{u,v} \Vert_{\XX(S_{u,v})} < \infty\}.
\end{aligned} \label{EQdefinitionSpacesProjection}
\end{align}
\end{definition}
\emph{Remarks on Definition \ref{DEFnormFirstOrderDATA}.}
\begin{itemize}
\item Definition \ref{DEFnormFirstOrderDATA} reflects the regularity hierarchy of the null structure equations along the $L$-direction.
\end{itemize}

\begin{definition}[Null data] \label{DEFnulldata111} We define the following.
\begin{enumerate}
\item For real numbers $u_0 < v_1 <v_2$, \emph{outgoing null data} on $\HH_{u_0,[v_1,v_2]}$ is given by a tuple of $S_{u_0,v}$-tangent tensors 
\begin{align} 
\begin{aligned} 
x=(\Om,\gd,\Om\trchi, \chih, \Om\trchib, \chibh, \eta, \om, D\om, \omb, \Du\omb, \a, \ab),
\end{aligned} \label{EQcollectionofTensors11122}
\end{align}
such that $x_{u_0,v} := x \vert_{S_{u_0,v}}$ is sphere data on each $S_{u_0,v} \subset \HH_{u_0,[v_1,v_2]}$.

\item For real numbers $u_1 < u_2 <v_0$, \emph{ingoing null data} on $\HHb_{[u_1,u_2],v_0}$ is given by a tuple of $S_{u,v_0}$-tangent tensors 
\begin{align} 
\begin{aligned} 
\underline{x}=(\Om,\gd,\Om\trchi, \chih, \Om\trchib, \chibh, \etab, \om, D\om, \omb, \Du\omb, \a, \ab),
\end{aligned} \label{EQcollectionofTensors111223}
\end{align}
such that $x_{u,v_0} := x \vert_{S_{u,v_0}}$ is sphere data on each $S_{u,v_0} \subset \HHb_{[u_1,u_2],v_0}$.
\end{enumerate}

\end{definition}

\ni \textbf{Notation.} The reference outgoing and ingoing null data of Minkowski is denoted by $\mathfrak{m}$ and $\underline{\mathfrak{m}}$, respectively; see \eqref{EQspheredataMINKOWKSI001}. The reference outgoing and ingoing null data of Schwarzschild of mass $M$ is denoted by $\mathfrak{m}^M$ and $\underline{\mathfrak{m}}^M$, respectively; see \eqref{EQspheredataSSM111222}. \\

\ni The following norm for null data respects the \emph{regularity hierarchy} of the null structure equations.
\begin{definition}[Norm for null data] \label{DEFnormHH} Let $x$ be null data on $\HH:=\HH_{u_0,[v_1,v_2]}$. Define
\begin{align*} 
\begin{aligned}
\Vert x \Vert_{\XX(\HH)} :=& \Vert \Om \Vert_{H^6_3(\HH)} +\Vert \gd \Vert_{H^6_3(\HH)}+ \Vert \Om\trchi \Vert_{H^6_3(\HH)}+\Vert \chih \Vert_{H^6_2(\HH)}\\
& + \Vert \Om\trchib \Vert_{H^4_2(\HH)}+ \Vert \chibh \Vert_{H^4_3(\HH)}+ \Vert \eta \Vert_{H^5_2(\HH)}\\
&+\Vert \om \Vert_{H^6_2(\HH)} +\Vert D\om \Vert_{H^6_1(\HH)}+ \Vert \omb \Vert_{H^4_3(\HH)}+ \Vert \Du\omb \Vert_{H^2_3(\HH)}\\
&+  \Vert \a \Vert_{H^{6}_1(\HH)} +\Vert \ab \Vert_{H^{2}_3(\HH)}.
\end{aligned} 
\end{align*}
Let $\underline{x}$ be null data on $\HHb:=\HHb_{[u_0,u_1],v_0}$. Define
\begin{align*} 
\begin{aligned}
\Vert \underline{x} \Vert_{\XX(\HHb)} :=& \Vert \Om \Vert_{H^6_3(\HHb)} +\Vert \gd \Vert_{H^6_3(\HHb)}+ \Vert \Om\trchib \Vert_{H^6_3(\HHb)}+\Vert \chibh \Vert_{H^6_2(\HHb)}\\
& + \Vert \Om\trchi \Vert_{H^4_2(\HHb)}+ \Vert \chih \Vert_{H^4_3(\HHb)}+ \Vert \etab \Vert_{H^5_2(\HHb)}\\
&+\Vert \omb \Vert_{H^6_2(\HHb)} +\Vert \Du\omb \Vert_{H^6_1(\HHb)}+ \Vert \om \Vert_{H^4_3(\HHb)}+ \Vert D\om \Vert_{H^2_3(\HHb)}\\
&+  \Vert \ab \Vert_{H^{6}_1(\HHb)} +\Vert \a \Vert_{H^{2}_3(\HHb)}.
\end{aligned} 
\end{align*}
Moreover, let
\begin{align*} 
\begin{aligned} 
\XX(\HH) := \{ x : \Vert x \Vert_{\XX(\HH)} < \infty \}, \,\, \XX(\HHb) := \{ \underline{x} : \Vert \underline{x} \Vert_{\XX(\HHb)} < \infty \}.
\end{aligned} 
\end{align*}
\end{definition}

\begin{remark} For given null data $x$ on $\HH_{u_0,[v_1,v_2]}$, the null structure equations \eqref{EQtransportEQLnullstructurenonlinear}, \eqref{EQGaussEquation} and \eqref{EQgausscodazzinonlinear1} determine the null curvature components $(\be,\rh,\si,\beb)$. By standard calculus estimates on $S_{u_0,v}$ (see, for example, Lemma \ref{LEMstandardTRACE}), it follows that they are bounded by
\begin{align*} 
\begin{aligned} 
&\Vert \be \Vert_{H^5_2(\HH_{u_0,[v_1,v_2]})}+\left\Vert \rho + \frac{2M}{r_M^3}  \right\Vert_{H^4_2(\HH_{u_0,[v_1,v_2]})}+\Vert \si \Vert_{H^4_2(\HH_{u_0,[v_1,v_2]})}+\Vert \beb \Vert_{H^3_2(\HH_{u_0,[v_1,v_2]})}\\
\les& \Vert x -\mathfrak{m}^M \Vert_{\XX(\HH_{u_0,[v_1,v_2]})}.
\end{aligned} 
\end{align*}
Analogously, for null data $\underline{x}$ on $\HHb_{[u_1,u_2],v_0}$, 
\begin{align*} 
\begin{aligned} 
&\Vert \beb \Vert_{H^5_2(\HHb_{[u_1,u_2],v_0})}+\left\Vert \rho + \frac{2M}{r_M^3}  \right\Vert_{H^4_2(\HHb_{[u_1,u_2],v_0})}+\Vert \si \Vert_{H^4_2(\HH_{u_0,[v_1,v_2]})}+\Vert \be \Vert_{H^3_2(\HHb_{[u_1,u_2],v_0})} \\
\les& \Vert \underline{x}-\mathfrak{m}^M \Vert_{\XX(\HHb_{[u_1,u_2],v_0})}.
\end{aligned} 
\end{align*}
\end{remark}

\ni In the context of sphere perturbations we work with ingoing null data of higher regularity, see Section \ref{SECdefEquivalenceFirstOrderSphereData} and specifically Propositions \ref{PropositionSmoothnessF} and \ref{PropositionSmoothnessF2}. The corresponding norm for the higher regularity ingoing null data is denoted by $\XX^+$. Similarly to $\XX$ above, $\XX^+$ respects the regularity hierarchy of the null structure equations.

\begin{definition}[Higher regularity norm for ingoing null data] \label{DEFspacetimeNORM} For ingoing null data $\underline{x}$ on $\HHb:= \HHb_{[u_1,u_2],v_0}$ define
\begin{align*} 
\begin{aligned} 
\Vert \underline{x} \Vert_{\XX^+\lrpar{\HHb}} :=& \Vert \Om \Vert_{H^{12}_9(\HHb)} +\Vert \gd \Vert_{H^{12}_9(\HHb)}+ \Vert \Om\trchib \Vert_{H^{12}_9(\HHb)}+\Vert \chibh \Vert_{H^{12}_8(\HHb)}\\
& + \Vert \Om\trchi \Vert_{H^{10}_8(\HHb)}+ \Vert \chih \Vert_{H^{10}_9(\HHb)}+ \Vert \etab \Vert_{H^{11}_8(\HHb)}\\
&+\Vert \omb \Vert_{H^{12}_8(\HHb)} +\Vert \Du\omb \Vert_{H^{12}_7(\HHb)}+ \Vert \om \Vert_{H^{10}_9(\HHb)}+ \Vert D\om \Vert_{H^8_9(\HHb)}\\
&+  \Vert \ab \Vert_{H^{12}_7(\HHb)} +\Vert \a \Vert_{H^{8}_9(\HHb)}.
\end{aligned} 
\end{align*}

\ni Further, let
\begin{align*} 
\begin{aligned} 
\XX^+\lrpar{\HHb} := \left\{ \underline{x}: \Vert \underline{x} \Vert_{\XX^+\lrpar{\HHb}} < \infty\right\}.
\end{aligned} 
\end{align*}
\end{definition}

\subsection{Charges $(\mathbf{E},\mathbf{P},\mathbf{L},\mathbf{G})$ and matching map $\mathfrak{M}$} \label{SECMatchingMapdefinition} 

\ni In this section we define the charges $(\mathbf{E},\mathbf{P},\mathbf{L},\mathbf{G})$ which are of fundamental importance for the characteristic gluing problem, see Theorem \ref{PROPNLgluingOrthA121}, and the matching map $\MMf$ which is used to solve the characteristic gluing problem transversally to the charges.

\begin{definition}[Charges] \label{DEFnonlinearcharges6} Let $x_{u,v}$ be sphere data. For $m=-1,0,1$, define
\begin{align*} 
\begin{aligned}  
\mathbf{E} :=& -\frac{1}{8\pi} \sqrt{4\pi} \lrpar{r^3 \lrpar{ \rho + r \Divd {\be}}}^{(0)}, \\
\mathbf{P}^m :=& -\frac{1}{8\pi} \sqrt{\frac{4\pi}{3}} \lrpar{r^3 \lrpar{\rho + r \Divd {\be}}}^{(1m)},\\
\mathbf{L}^m :=& \frac{1}{16\pi} \sqrt{\frac{8\pi}{3}} \lrpar{r^3 \lrpar{ \di \trchi + \trchi (\eta-\di\log\Om) }}_H^{(1m)},\\
\mathbf{G}^m :=&  \frac{1}{16\pi}\sqrt{\frac{8\pi}{3}} \lrpar{r^3 \lrpar{ \di \trchi + \trchi (\eta-\di\log\Om) }}^{(1m)}_E,
\end{aligned} 
\end{align*}
where $r=r(x_{u,v})$ denotes the area radius of $(S_{u,v},\gd)$ and the spherical harmonics projections are defined with respect to the unit round metric $\gac$ on $S_{u,v}$, see Appendix \ref{SECellEstimatesSpheres}. Here, the null curvature components $\rh$ and $\be$ are calculated from $x_{u,v}$ by \eqref{EQGaussEquation} and \eqref{EQgausscodazzinonlinear1}. 
\end{definition}

\emph{Remarks on Definition \ref{DEFnonlinearcharges6}.}
\begin{enumerate}

\item The linearizations of $(\mathbf{E},\mathbf{P},\mathbf{L},\mathbf{G})$ at Minkowski satisfy conservation laws along $\HH$, see Section \ref{SEClinearizedProblem} and \eqref{EQREMchargeslinatMinkowski}. In \cite{ACR3,ACR2} we show that in asymptotically flat spacetimes, these conservation laws are related to the \emph{conservation of energy}, \emph{linear momentum}, \emph{angular momentum}, and the \emph{equation of motion for the center-of-mass}.

\item It holds that on the sphere $S_{u,v}$,
\begin{align*} 
\begin{aligned} 
(\mathbf{E},\mathbf{P},\mathbf{L},\mathbf{G})(\mathfrak{m}^M) = (M,0,0,0).
\end{aligned} 
\end{align*}

\item The charges $(\mathbf{E},\mathbf{P},\mathbf{L},\mathbf{G})$ play a major role in the characteristic gluing problem because they \emph{cannot} be glued by our methods, see the statement of Theorem \ref{PROPNLgluingOrthA121}. This stems from the fact that at the linear level, they satisfy conservation laws and are \emph{invariant} under the linearized sphere perturbations introduced in Section \ref{SECdefEquivalenceFirstOrderSphereData}.

\item For sphere data $x_{u,v} \in \XX(S_{u,v})$, the charges are well-defined. Indeed, first, from \eqref{EQGaussEquation} and \eqref{EQgausscodazzinonlinear1}, it is straight-forward to show that for sufficiently small real numbers $\varep>0$ and $M$, and sphere data $x_{u,v}$ with
\begin{align*} 
\begin{aligned} 
\Vert x_{u,v} - \mathfrak{m}^M \Vert_{\XX(S_{u,v})} \leq \varep,
\end{aligned} 
\end{align*}
we have that
\begin{align*} 
\begin{aligned} 
&\Vert \be \Vert_{H^{5}(S_{u,v})} + \left\Vert \rh+ \frac{2M}{r_M^3} \right\Vert_{H^{4}(S_{u,v})} + \Vert \si \Vert_{H^{4}(S_{u,v})} + \Vert \beb \Vert_{H^{3}(S_{u,v})}\\
 \les& C_{u,v} \Vert x_{u,v} - \mathfrak{m}^M \Vert_{\XX(S_{u,v})},
\end{aligned} 
\end{align*}
where the constant $C_{u,v}>0$ depends on $u$ and $v$. Consequently, by Definition \ref{DEFnonlinearcharges6} together with standard estimates (see, for example, Lemma \ref{LEMstandardTRACE}), the charges are bounded by
\begin{align*} 
\begin{aligned} 
\vert \mathbf{E} - M\vert + \vert \mathbf{P}\vert + \vert \mathbf{L} \vert+ \vert \mathbf{G} \vert \les C_{u,v} \Vert x_{u,v} - \mathfrak{m}^M \Vert_{\XX(S_{u,v})}.
\end{aligned} 
\end{align*}
\end{enumerate}

\ni As remarked above, the charges $(\mathbf{E},\mathbf{P},\mathbf{L},\mathbf{G})$ cannot be glued with our methods. To study the characteristic gluing problem modulo the charges, we introduce the following matching map.

\begin{definition}[Matching map $\mathfrak{M}$] \label{DEFmatchingMAP} Let $x_{u,v}$ be sphere data on $S_{u,v}$. Define 
\begin{align*} 
\begin{aligned} 
\mathfrak{M}(x_{u,v}):= \lrpar{\Om, \phi, \gd_c, \Om\trchi, \chih, (\Om\trchib)^{[\geq2]}, \chibh, \eta^{[\geq2]}, \om, D\om, \omb^{[\geq2]}, \Du\omb^{[\geq2]}, \tilde{\QQ}_5, \tilde{\QQ}_6, \a, \ab},
\end{aligned} 
\end{align*}
where $\phi$ and $\gd_c$ are defined by \eqref{EQdefdecomposition1}, and $\tilde{\QQ}_5$ and $\tilde{\QQ}_6$ are defined with $r=v-u$ by
\begin{align*} 
\begin{aligned} 
\tilde{\QQ}_5 :=& \omb^{[\leq1]} + \frac{1}{4}\lrpar{\Om\trchib}^{[\leq1]}- \frac{1}{6r} \Divdo \eta^{[1]} \\
&- \frac{1}{12r^3} (\Ldo+3) \lrpar{\Om\trchi-\frac{4}{r}\Om}^{[\leq1]} - \frac{1}{2r^2}(\Ldo+2)\phi^{[\leq1]}, \\
\tilde{\QQ}_6 :=& \lrpar{\Du\omb}^{[\leq1]} -\frac{1}{6} (\Ldo-3) \lrpar{\frac{1}{r}\Om\trchib- \frac{2}{r^3}(\Ldo+2)\phi}^{[\leq1]} \\
&+ \frac{1}{6r} \lrpar{\Ldo\Ldo + \Ldo -3} \lrpar{\Om\trchi-\frac{4}{r}\Om}^{[\leq1]} -\frac{2}{3r^2} \Divdo \eta^{[1]},
\end{aligned} 
\end{align*}
where $\Divdo$ and $\Ldo$ are the divergence and Laplace-Beltrami operator with respect to the standard unit round metric $\gac$ on $S_{u,v}$. We also call $\MMf_{u,v}:= \mathfrak{M}(x_{u,v})$ the matching data at $S_{u,v}$.
\end{definition}

\emph{Remarks on Definition \ref{DEFmatchingMAP}.}
\begin{enumerate}
\item In the proof of our main theorem, we show that we are able to glue the matching data on $S_{0,2}$.
\item The linearizations of $\tilde{\QQ}_5$ and $\tilde{\QQ}_6$ at Minkowski equal the gauge-dependent charges $\QQ_5$ and $\QQ_6$ of the linearized null structure equations at Minkowski, see \eqref{EQdefChargesMinkowski8891} and Lemma \ref{CORtransportEQS3333} in Section \ref{SEClinPRELIM3}.
\end{enumerate}

\ni The following lemma shows that the range of the matching map $\MMf$ is the complement to the charges $(\mathbf{E},\mathbf{P},\mathbf{L},\mathbf{G})$.

\begin{lemma}[Matching map and charges] \label{LEMconditionalMATCHING} Let $x_{u,v}$ and $x'_{u,v}$ be sphere data on $S_{u,v}$ such that for a real number $\varep>0$,
\begin{align} 
\begin{aligned} 
\Vert x_{u,v} -\mathfrak{m} \Vert_{\XX(S_{u,v})} + \Vert x'_{u,v} -\mathfrak{m} \Vert_{\XX(S_{u,v})} \leq \varep,
\end{aligned} \label{EQmatchingMap157778}
\end{align}
and satisfying
\begin{align} 
\begin{aligned} 
\MMf\lrpar{x_{u,v}} =\MMf(x'_{u,v}).
\end{aligned} \label{EQmatchingMap17778}
\end{align}
For $\varep>0$ sufficiently small, the following holds: If, in addition to \eqref{EQmatchingMap17778},
\begin{align} 
\begin{aligned} 
\lrpar{\mathbf{E},\mathbf{P}, \mathbf{L}, \mathbf{G}}(x_{u,v}) = \lrpar{\mathbf{E},\mathbf{P}, \mathbf{L}, \mathbf{G}}(x'_{u,v}),
\end{aligned} \label{EQmatchingMap27778}
\end{align}
then
\begin{align} 
\begin{aligned} 
x_{u,v}= x'_{u,v}.
\end{aligned} \label{EQmatchingMap37778}
\end{align}
\end{lemma}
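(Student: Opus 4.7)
The plan is a cascade argument: the hypothesis $\MMf(x_{u,v}) = \MMf(x'_{u,v})$ forces equality of every component of sphere data that appears directly in Definition \ref{DEFmatchingMAP}, leaving only finitely many spherical harmonic modes undetermined, and the ten gauge-invariant charges together with the $l\leq 1$ content of $\tilde{\QQ}_5$ and $\tilde{\QQ}_6$ then supply exactly enough constraints to pin those modes down. Writing $\delta$ for the difference $x_{u,v} - x'_{u,v}$, one reads off from Definition \ref{DEFmatchingMAP} that $\delta$ vanishes in every component except possibly the $l\leq 1$ part of $\Om\trchib$, $\omb$ and $\Du\omb$, and the $l=1$ part of $\eta$ (with no $l=0$ since $\eta$ is a vectorfield). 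This yields $4+4+4+6=18$ scalar unknowns, to be matched by $1+3+3+3=10$ charge conditions and $2\cdot 4=8$ conditions from the $l\leq 1$ content of $\tilde{\QQ}_5$ and $\tilde{\QQ}_6$.

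First I would pin down $\eta^{[1]}$ using $\mathbf{L}$ and $\mathbf{G}$. Since $\Om$ and $\Om\trchi$ agree, so do $\trchi$ and $\di\trchi$, so the matching $\mathbf{L}(x_{u,v})=\mathbf{L}(x'_{u,v})$ and $\mathbf{G}(x_{u,v})=\mathbf{G}(x'_{u,v})$ reduces to
\[
\lrpar{r^3 \trchi \cdot \delta\eta}^{(1m)}_{H} = \lrpar{r^3 \trchi \cdot \delta\eta}^{(1m)}_{E} = 0, \quad m=-1,0,1.
\]
Because $\trchi$ is $\varep$-close to the constant $2/r$, the map sending $\delta\eta^{[1]}$ to the projection of $r^3 \trchi \cdot \delta\eta$ onto the six-dimensional $l=1$ vector harmonic subspace is an $\varep$-perturbation of $2r^2$ times the identity on that subspace, hence invertible for $\varep$ small. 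This gives $\delta\eta^{[1]} = 0$. With $\Om, \gd, \trchi, \chih$ and $\eta$ now all matched, the Gauss--Codazzi equation \eqref{EQgausscodazzinonlinear1} (using $\zeta = \eta - \di\log\Om$) yields $\beta(x_{u,v}) = \beta(x'_{u,v})$.

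Next I would use $\mathbf{E}$ and $\mathbf{P}$ to match $(\Om\trchib)^{[\leq 1]}$. The Gauss equation \eqref{EQGaussEquation} together with the already matched components gives $\delta\rho = -\tfrac{1}{4}\trchi \cdot \delta\trchib$, and $\delta\beta=0$ from the previous step, so $\mathbf{E}(x_{u,v})=\mathbf{E}(x'_{u,v})$ and $\mathbf{P}^m(x_{u,v})=\mathbf{P}^m(x'_{u,v})$ collapse to
\[
\lrpar{r^3 \trchi \cdot \delta\trchib}^{(0)} = \lrpar{r^3 \trchi \cdot \delta\trchib}^{(1m)} = 0, \quad m=-1,0,1.
\]
The same near-identity invertibility argument on the four-dimensional $l\leq 1$ scalar subspace forces $\delta\trchib^{[\leq 1]}=0$, and since $\delta\Om=0$ also $\delta(\Om\trchib)^{[\leq 1]}=0$. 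At this point Definition \ref{DEFmatchingMAP} shows that the condition $\tilde{\QQ}_5(x_{u,v}) = \tilde{\QQ}_5(x'_{u,v})$ reduces, term by term, to $\delta\omb^{[\leq 1]}=0$, and analogously $\tilde{\QQ}_6(x_{u,v}) = \tilde{\QQ}_6(x'_{u,v})$ reduces to $\delta(\Du\omb)^{[\leq 1]}=0$. Combining everything gives $\delta=0$, which is \eqref{EQmatchingMap37778}.

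The principal technical point is the two near-identity invertibility arguments: one must verify that, on the relevant finite-dimensional $l=1$ (resp.\ $l\leq 1$) target subspace, the mode coupling introduced by multiplication with the $l\geq 1$ content of $\trchi$ is genuinely dominated by the leading behaviour $\trchi \approx 2/r$. This is where the smallness assumption \eqref{EQmatchingMap157778} enters essentially, and is the only place in the argument that requires care; the rest is bookkeeping dictated by Definitions \ref{DEFnonlinearcharges6} and \ref{DEFmatchingMAP}.
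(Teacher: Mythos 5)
Your proof is correct and follows essentially the same route as the paper: use $\mathbf{L},\mathbf{G}$ together with the matching of $\Om,\trchi,\eta^{[\geq 2]}$ and a near-identity invertibility argument to fix $\eta^{[1]}$, propagate to $\be$ via Gauss--Codazzi, then use $\mathbf{E},\mathbf{P}$ with the Gauss equation and a second near-identity argument to fix $\trchib^{[\leq 1]}$, and finally read off $\omb^{[\leq1]}$, $\Du\omb^{[\leq1]}$ from the matching of $\tilde{\QQ}_5,\tilde{\QQ}_6$. The dimension count you give at the outset is a nice motivating observation that the paper does not spell out, but the substantive argument — in particular both invertibility steps — coincides with the paper's.
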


\begin{proof}[Proof of Lemma \ref{LEMconditionalMATCHING}] First, by Definition \ref{DEFnonlinearcharges6} we rewrite the matching of $\mathbf{L}$ and $\mathbf{G}$ in \eqref{EQmatchingMap27778} as
\begin{align} 
\begin{aligned} 
&\lrpar{ \di \trchi(x_{u,v}) + \trchi(x_{u,v}) (\eta(x_{u,v})-\di\log\Om(x_{u,v})) }^{[1]} \\
=& \lrpar{ \di \trchi(x_{u,v}') + \trchi(x_{u,v}') (\eta(x_{u,v}')-\di\log\Om(x_{u,v}') )}^{[1]}.
\end{aligned} \label{EQmatchingnonlinear11111}
\end{align}
By the matching of $\trchi$, $\Om$ and $\eta^{[\geq2]}$ in \eqref{EQmatchingMap17778}, we can rewrite \eqref{EQmatchingnonlinear11111} as
\begin{align} 
\begin{aligned} 
0=& \lrpar{\trchi(x_{u,v}') \lrpar{\eta(x_{u,v}) - \eta(x_{u,v}')} }^{[1]} \\
=& \lrpar{\trchi(x_{u,v}') \lrpar{\eta(x_{u,v}) - \eta(x_{u,v}')}^{[1]} }^{[1]}\\
=& \frac{2}{r}\lrpar{\eta(x_{u,v}) - \eta(x_{u,v}')}^{[1]} \\
&+ \lrpar{\lrpar{\trchi(x_{u,v}')-\frac{2}{r}} \lrpar{\eta(x_{u,v}) - \eta(x_{u,v}')}^{[1]} }^{[1]}\\
=& \frac{2}{r}X + \lrpar{\lrpar{\trchi(x_{u,v}')-\frac{2}{r}} X }^{[1]},
\end{aligned} \label{EQrewritingetanonlinearmatching234}
\end{align}
where we denoted $X= \lrpar{\eta(x_{u,v}) - \eta(x_{u,v}')}^{[1]}$. Using that by \eqref{EQmatchingMap157778},
\begin{align*} 
\begin{aligned} 
\left\Vert \trchi(x_{u,v}')- \frac{2}{r} \right\Vert_{H^{6}(S_{u,v})} \les \varep, 
\end{aligned} 
\end{align*}
the relation \eqref{EQrewritingetanonlinearmatching234} implies for $\varep>0$ sufficiently small that $X=0$, that is,
\begin{align} 
\begin{aligned} 
\eta(x_{u,v})^{[1]} = \eta(x_{u,v}')^{[1]}.
\end{aligned} \label{EQetamatching11110}
\end{align}
By the matching of $\eta^{[\geq2]}$ in \eqref{EQmatchingMap17778}, this implies the matching of $\eta$ at $S_{u,v}$. By \eqref{EQmatchingMap17778} and the Gauss-Codazzi equation \eqref{EQgausscodazzinonlinear1} this further implies the matching of $\be$ at $S_{u,v}$.

Second, by \eqref{EQmatchingMap17778}, the Gauss equation \eqref{EQGaussEquation} and the above matching of $\beta$, the matching of $\mathbf{E}$ and $\mathbf{P}$ in \eqref{EQmatchingMap27778} 
can be written as
\begin{align} 
\begin{aligned} 
&\lrpar{K(x_{u,v})+ \frac{1}{4} \trchi(x_{u,v}) \trchib(x_{u,v}) - \half (\chih(x_{u,v}),\chibh(x_{u,v}))}^{[\leq1]} \\
=& \lrpar{K(x_{u,v}')+ \frac{1}{4} \trchi(x_{u,v}') \trchib(x_{u,v}') - \half (\chih(x_{u,v}'),\chibh(x_{u,v}'))}^{[\leq1]}.
\end{aligned} \label{EQtrchib1gluing555}
\end{align}
By the gluing of $\gd$, $\chih$, $\chibh$, $\trchi$ and $\trchib^{[\geq2]}$ in \eqref{EQmatchingMap17778}, we can rewrite \eqref{EQtrchib1gluing555} as
\begin{align} 
\begin{aligned} 
0=& \lrpar{ \trchi(x_{u,v}') \lrpar{\trchib(x_{u,v})-\trchib(x_{u,v}')}}^{[\leq1]}\\
=&  \lrpar{ \trchi(x_{u,v}') \lrpar{\trchib(x_{u,v})-\trchib(x_{u,v}')}^{[\leq1]}}^{[\leq1]} \\
=& \frac{2}{r} \lrpar{\trchib(x_{u,v})^{[\leq1]}-\trchib(x_{u,v}')^{[\leq1]}}\\
&+ \lrpar{ \lrpar{\trchi(x_{u,v}')-\frac{2}{r}} \lrpar{\trchib(x_{u,v})-\trchib(x_{u,v}')}^{[\leq1]}}^{[\leq1]}\\
=& \frac{2}{r} Y+ \lrpar{ \lrpar{\trchi(x_{u,v}')-\frac{2}{r}} Y}^{[\leq1]},
\end{aligned} \label{EQtrchib1gluing5552}
\end{align}
where we denoted $Y=\lrpar{\trchib(x_{u,v})^{[\leq1]}-\trchib(x_{u,v}')^{[\leq1]}}$. Using that by \eqref{EQmatchingMap157778},
\begin{align*} 
\begin{aligned} 
\left\Vert \trchi(x_{u,v}')- \frac{2}{r} \right\Vert_{H^{6}(S_{u,v})} \les \varep, 
\end{aligned} 
\end{align*}
the relation \eqref{EQtrchib1gluing5552} implies for $\varep>0$ sufficiently small that $Y=0$, that is,
\begin{align} 
\begin{aligned} 
\trchib(x_{u,v})^{[\leq1]} = \trchib(x_{u,v}')^{[\leq1]}.
\end{aligned} \label{EQtrchib01matching1111110}
\end{align}
From \eqref{EQtrchib01matching1111110} and \eqref{EQmatchingMap17778}, we deduce the matching of $\trchib$ at $S_{u,v}$.

Third, it remains to show that $\omb$ and $\Du\omb$ are matching. The gluing of $\omb^{[\geq2]}$ and $\Du\omb^{[\geq2]}$ is contained in \eqref{EQmatchingMap17778}, see Definition \ref{DEFmatchingMAP}. Moreover, \eqref{EQmatchingMap17778} includes the matching of 
\begin{align*} 
\begin{aligned} 
\QQ_5 :=& \omb^{[\leq1]} + \frac{1}{4}\lrpar{\Om\trchib}^{[\leq1]}- \frac{1}{6r} \Divdo \eta^{[1]} \\
&- \frac{1}{12r^3} (\Ldo+3) \lrpar{\Om\trchi-\frac{4}{r}\Om}^{[\leq1]} - \frac{1}{2r^2}(\Ldo+2)\phi^{[\leq1]}, \\
\QQ_6 :=& \lrpar{\Du\omb}^{[\leq1]} -\frac{1}{6} (\Ldo-3) \lrpar{\frac{1}{r}\Om\trchib- \frac{2}{r^3}(\Ldo+2)\phi}^{[\leq1]} \\
&+ \frac{1}{6r} \lrpar{\Ldo\Ldo + \Ldo -3} \lrpar{\Om\trchi-\frac{4}{r}\Om}^{[\leq1]} -\frac{2}{3r^2} \Divdo \eta^{[1]},
\end{aligned} 
\end{align*}
at $S_{u,v}$. Thus from the matching of $\Om, \phi$ and $\trchi$ in \eqref{EQmatchingMap17778} and the above matching of $\eta$ and $\trchib$, it follows that $\omb^{[\leq1]}$ and $\Du\omb^{[\leq1]}$ agree on $S_{u,v}$. This finishes the proof of Lemma \ref{LEMconditionalMATCHING}. \end{proof}

\ni The following lemma follows directly by Definition \ref{DEFmatchingMAP}. Its proof is omitted.
\begin{lemma}[Smoothness of $\mathfrak{M}$] Let $v>u$ be two real numbers. The matching map $\mathfrak{M}$ is a smooth mapping from an open neighbourhood of $\mathfrak{m}$ in $\XX(S_{u,v})$ to $\mathcal{Z}_{\mathfrak{M}}(S_{u,v})$, where (with all spaces over the sphere $S_{u,v}$)
\begin{align*} 
\begin{aligned} 
\mathcal{Z}_{\mathfrak{M}}(S_{u,v}) =& H^6 \times H^6 \times H^6 \times H^6 \times H^6 \times H^4 \times H^4 \times H^5 \\
&\times H^6 \times H^6 \times H^4 \times H^2 \times H^4 \times H^2 \times H^6 \times H^2, 
\end{aligned} 
\end{align*}
and we have the estimate
\begin{align*} 
\begin{aligned} 
\Vert \mathfrak{M}(x_{u,v}) - \mathfrak{M}(\mathfrak{m}) \Vert_{\mathcal{Z}_{\mathfrak{M}}(S_{u,v})} \les C_{u,v} \Vert x_{u,v} -\mathfrak{m}\Vert_{\XX(S_{u,v})},
\end{aligned} 
\end{align*}
where the constant $C_{u,v}>0$ depends on $u$ and $v$.
\end{lemma}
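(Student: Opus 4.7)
The plan is to verify smoothness and the linear estimate componentwise in the product space $\mathcal{Z}_{\mathfrak{M}}(S_{u,v})$. The sixteen components of $\mathfrak{M}(x_{u,v})$ split into three types that I would treat separately: (i) components equal to a tensor already present in the sphere data tuple $x_{u,v}$, namely $\Om, \Om\trchi, \chih, \chibh, \om, D\om, \a, \ab$, for which the map is the identity on the corresponding factor of $\XX(S_{u,v})$; (ii) $L^{2}$-orthogonal spherical harmonic projections of a single tensor onto modes $l\geq 2$, namely $(\Om\trchib)^{[\geq 2]}$, $\eta^{[\geq 2]}$, $\omb^{[\geq 2]}$, $(\Du\omb)^{[\geq 2]}$, for which $f\mapsto f^{[\geq 2]}$ is a bounded linear projection of norm one on every $H^{s}(S_{u,v})$; (iii) the conformal decomposition $\gd\mapsto(\phi,\gd_{c})$ and the linear combinations $\tilde{\QQ}_{5}, \tilde{\QQ}_{6}$. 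Types (i) and (ii) are linear and bounded into the required spaces by inspection.

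For $\tilde{\QQ}_{5}$ and $\tilde{\QQ}_{6}$, one observes that with $r = v-u$ fixed they are \emph{linear} in the sphere data: fixed polynomial-in-$r$ coefficient combinations of finite-mode spherical harmonic projections of $\omb$, $\Om\trchib$, $\eta$, $\Om\trchi$, $\Om$, $\phi$ and their $\Ldo$-derivatives. Each summand is the composition of a finite-dimensional spherical harmonic projection (hence automatically smoothing) with an angular differential operator of bounded order, so it extends as a bounded linear map into every $H^{s}$. The indices $\tilde{\QQ}_{5}\in H^{4}$ and $\tilde{\QQ}_{6}\in H^{2}$ in $\mathcal{Z}_{\mathfrak{M}}(S_{u,v})$ are chosen to track the least-regular ingredients $\omb\in H^{4}$ and $\Du\omb\in H^{2}$; since these enter only via $l\leq 1$ projections, additional regularity is in fact available. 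The required linear bound then follows once the smoothness of $x_{u,v}\mapsto\phi$ is established.

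The one genuinely nonlinear step is the conformal split $\gd = \phi^{2}\gd_{c}$, with $\phi^{2} = \sqrt{\det\gd}/\sqrt{\det\gac}$ and $\gd_{c} = \phi^{-2}\gd$. Since $H^{6}(S_{u,v})$ is a Banach algebra on the two-dimensional sphere (by Sobolev embedding into $L^\infty$), the determinant is a smooth polynomial map $H^{6}(\mathrm{Sym}^{2}) \to H^{6}$. On a sufficiently small neighborhood of the reference metric $r^{2}\gac$ the determinant stays uniformly bounded away from zero, so composition with the smooth superposition operators $t\mapsto\sqrt{t/\det\gac}$ and $t\mapsto t^{-2}$ produces $\phi$ and $\phi^{-2}$ in $H^{6}$ with smooth dependence on $\gd$; multiplication by $\gd$ then gives $\gd_{c}\in H^{6}$ smoothly. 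A first-order Taylor expansion of each smooth superposition at $\gd = r^{2}\gac$ supplies the Lipschitz bound on a neighborhood.

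Assembling the three types yields simultaneously the smoothness of $\mathfrak{M}$ on an open neighborhood of $\mathfrak{m}$ and the stated linear estimate, with $C_{u,v}$ absorbing the explicit $r$-powers in $\tilde{\QQ}_{5}, \tilde{\QQ}_{6}$ and the Taylor constants from the conformal split. The only real obstacle is bookkeeping of which factor each component lands in; the regularity indices in $\mathcal{Z}_{\mathfrak{M}}(S_{u,v})$ have been arranged exactly to mirror the hierarchy in $\XX(S_{u,v})$, so no component creates friction, which is presumably why the authors omit the proof.
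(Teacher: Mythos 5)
The paper itself declares this lemma to ``follow directly by Definition \ref{DEFmatchingMAP}'' and omits its proof, so there is no written argument to compare against; your componentwise treatment --- identity maps for the untouched entries, bounded $[\geq 2]$-projections, the Banach algebra property of $H^{6}(S_{u,v})$ together with the superposition theorem for the conformal split $\gd\mapsto(\phi,\gd_{c})$, and the observation that $\tilde{\QQ}_{5},\tilde{\QQ}_{6}$ take values in a finite-dimensional space of smooth functions --- is a correct reconstruction of what the omitted proof must be, and the Lipschitz estimate then follows by the calculus estimate of Lemma \ref{LEMoperatorEstimates}. One small wording slip: $\tilde{\QQ}_{5}$ and $\tilde{\QQ}_{6}$ are not literally linear in the sphere data, since they contain $\phi^{[\leq 1]}$ and $\phi$ depends nonlinearly on $\gd$; you do catch this a sentence later by deferring to the smoothness of $x_{u,v}\mapsto\phi$, so the argument closes, but the initial claim of linearity should be read as ``linear in $(\Om,\phi,\Om\trchi,\Om\trchib,\eta,\omb,\Du\omb)$ once $\phi$ has been extracted.''
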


\subsection{Nilpotent character of null structure equations} \label{SECconformalMethod1112}

\ni It is well-known that solutions to the null structure equations can be constructed from \emph{free data} which is not subject to any constraint equations, see for example \cite{ChrFormationBlackHoles} and \cite{GravImpulsesLukRod1}. This is due to the nilpotent character of the null structure equations which allows to rewrite them into a \emph{hierarchy of null transport equations} which can be solved subsequently from the free data. We proceed as follows.
\begin{itemize}
\item In Section \ref{SECderivationConstraintFunctions1} we define the free data and derive the hierarchy of transport equations, denoted by
\begin{align*} 
\begin{aligned} 
\CC_i= 0 \text{ for } 1\leq i \leq 10,
\end{aligned} 
\end{align*}
where the maps $(\CC_i)_{1\leq i \leq 10}$ are called \emph{constraint functions}.
\item In Section \ref{SECderivationConstraintFunctions2} we calculate the linearization of the constraint functions at Schwarzschild of mass $M$.
\end{itemize}

\subsubsection{Definition of free data and derivation of hierarchy} \label{SECderivationConstraintFunctions1} The following definition of free data is the starting point for the construction of solutions to the null structure equations. For explicitness, we define free data on the null hypersurface $\HH_{0,[1,\infty)}$.
\begin{definition}[Free data] \label{DEFconformalSEED}  On $ \HH_{0,[1,\infty)}$ prescribe
\begin{itemize} 
\item the conformal class $\mathrm{conf}(\gd)$ of induced Riemannian metrics $\gd$ on $S_{0,v}$,
\item a scalar function $\Om$, called the null lapse.
\end{itemize}
On $S_{0,1}$ prescribe
\begin{itemize}
\item the induced Riemannian metric $\gd$ (compatible with the conformal class on $S_{0,1}$), 
\item the scalar functions $\trchi$, $\trchib$, $\omb$, $\Du\omb$, 
\item an $S_{0,1}$-tangential vectorfields $\eta$,  
\item a $\gd$-tracefree $S_{0,1}$-tangential symmetric $2$-tensors $\chibh$ and $\ab$.
\end{itemize}
\end{definition}
 
\ni Before constructing a solution to the null structure equations with the above free data, we introduce the following objects.
\begin{enumerate}
\item Given the conformal class $\mathrm{conf}(\gd)$ on $S_v$, let $\gd_c$ be the unique representative such that
\begin{align*}
\sqrt{\det {\gd}_c}(v,\th^1,\th^2) = \sqrt{\det \gac}(\th^1,\th^2).
\end{align*}

\item Let $\gd$ denote the induced metric on $S_{0,v}$ of the solution of the null constraint equations to be constructed. Define $\phi>0$ to be the conformal factor such that
\begin{align}
\gd = \phi^2 \gd_c, \label{EQrelationgdtildegd}
\end{align}
that is,
\begin{align*}
\phi^2 := \frac{\sqrt{\det \gd}}{\sqrt{\det \gac}}.
\end{align*}

\item It is straight-forward to verify that the \emph{shear} $e$, defined by 
$$e:= \vert \chih \vert_\gd^2,$$ 
is conformally invariant and can therefore be explictly calculated from $\gd_c$ on $\HH$.

\end{enumerate}

\ni We are now in position to construct a solution to the null structure equations from the free data. In the following we derive a hierarchy of null transport equations, called the \emph{constraint functions} $\CC_i$ for $1\leq i \leq 10$, which can be solved based on the free data. \\

\ni \textbf{Equation for $\phi$.} By combining \eqref{EQusefulDphiRELATION} and \eqref{EQRaychauduri1}, that is,
\begin{align*} 
\begin{aligned} 
D \phi = \frac{\Om \trchi \phi}{2}, \,\, D \trchi + \frac{\Om}{2} (\trchi)^2 - \om \trchi = - \Om \vert \chih \vert^2,
\end{aligned} 
\end{align*}
we get that $\phi$ satisfies the following \emph{linear} transport equation,
\begin{align*} 
\begin{aligned} 
\CC_1 :=& D^2\phi -\om \Om\trchi \phi + \frac{1}{2} \Om^2\vert \chih \vert^2 \phi =0.
\end{aligned} 
\end{align*}
We note that $\phi$ together with $\gd_c$ fully determines $\gd$ on each sphere.\\

\ni \textbf{Equation for $\chi$.} By \eqref{EQfirstvariation1}, $\chi$ satisfies 
\begin{align*} 
\begin{aligned} 
D\gd -2\Om \chi =0.
\end{aligned} 
\end{align*}
Splitting \eqref{EQfirstvariation1} into a trace and a tracefree part and using the decomposition \eqref{EQrelationgdtildegd}, we get the constraint equations
\begin{align*} 
\begin{aligned} 
\CC_2:=& 2\phi D\phi + \frac{\phi^2}{2} \tr_{\gd_c} D\gd_c - \Om \trchi \phi^2 =0, \\
\CC_3 :=& -2 \Om \chih + \phi^2 \lrpar{D\gd_c -\half (\tr_{\gd_c} D\gd_c) \gd_c} =0.
\end{aligned} 
\end{align*}

\ni \textbf{Equation for $\eta$.} By combining \eqref{EQtransportEQLnullstructurenonlinear} and \eqref{EQgausscodazzinonlinear1}, that is,
\begin{align*}
D\eta = \Om (\chi \cdot \etab - \be), \,\, - \beta=\Divd \chih -\half \di \tr \chi + \chih \cdot \zeta - \half \trchi \zeta,
\end{align*}
and using that by \eqref{DEFricciCoefficients}, 
\begin{align*}
\etab = - \eta + 2 \di \log \Om, \,\, \ze = \eta - \di \log \Om,
\end{align*}
we get that $\eta$ satisfies the following transport equation,
\begin{align*} 
\begin{aligned} 
\CC_4 :=& D\eta + \Om \trchi \eta - \Om \lrpar{\Divd \chih - \half\di \trchi+\chih \di \log \Om + \frac{3}{2}\trchi \di \log \Om} =0.
\end{aligned} 
\end{align*}

\ni \textbf{Equation for $\Om\trchib$.} By combining \eqref{DEFricciCoefficients}, \eqref{EQGaussEquation} and \eqref{EQtransporttrchitrchib1}, we get that
\begin{align*} 
\begin{aligned} 
\CC_5 :=& D(\Om \trchib) + \Om\trchi (\Om\trchib) +2\Om^2\Divd (\eta-2\di \log\Om) \\
&- 2 \Om^2 \vert \eta-2\di \log\Om \vert^2 +2 \Om^2 K \\
=&0,
\end{aligned} 
\end{align*}
where $K$ denotes the Gauss curvature of $(S_{0,v},\gd)$.\\

\ni \textbf{Equation for $\chibh$.} By \eqref{DEFricciCoefficients} and \eqref{EQchihequations1}, it follows that $\omb$ satisfies
\begin{align*} 
\begin{aligned} 
\CC_6 :=&D\lrpar{\Om \chibh} - (\Om \chih, \Om \chibh) \gd - \half \Om \trchi \Om \chibh \\
&- \Om^2 \lrpar{\Nd \widehat{\otimes}(2\di \log\Om-\eta) + (2\di \log\Om-\eta) \widehat{\otimes} (2\di \log\Om-\eta) - \half \trchib \chih} \\
=&0.
\end{aligned} 
\end{align*}

\ni \textbf{Equation for $\omb$.} By \eqref{DEFricciCoefficients}, \eqref{EQtransportEQLnullstructurenonlinear} and \eqref{EQGaussEquation}, it follows that
\begin{align*} 
\begin{aligned} 
\CC_7 := D\omb - \Om^2 \lrpar{4(\eta, \di \log \Om)- 3 \vert \eta \vert^2 + K +\frac{1}{4} \trchi \trchib -\half (\chih, \chibh)}=0.
\end{aligned} 
\end{align*}

\ni \textbf{Equation for $\ab$.} By \eqref{DEFricciCoefficients}, \eqref{EQGaussEquation}, \eqref{EQgausscodazzinonlinear1} and \eqref{EQnullBianchiEquations} it follows that $\ab$ satisfies the following transport equation,
\begin{align*} 
\begin{aligned} 
\CC_8 :=& \widehat{D}\ab - \half \Om \trchi \ab + 2 \om \ab \\
&+ \Om \Nd \widehat{\otimes} \lrpar{\Divd \chibh - \half \di \trchib - \chibh \cdot (\eta-\di\log\Om) + \half \trchib (\eta-\di \log\Om)}\\
&+ \Om \lrpar{9\di \log \Om -5 \eta}\widehat{\otimes} \lrpar{\Divd \chibh - \half \di \trchib - \chibh \cdot (\eta-\di\log\Om) + \half \trchib (\eta-\di \log\Om)}\\
&-3\Om \chibh \lrpar{K + \frac{1}{4} \trchi \trchib - \half (\chih,\chibh)} + 3\Om {}^*\chibh \lrpar{\Curld \eta + \half \chih \wedge \chibh}\\
=&0.
\end{aligned} 
\end{align*}

\ni \textbf{Equation for $\Du\omb$.} By \eqref{DEFricciCoefficients}, \eqref{EQriccirelationetabeta}, \eqref{EQGaussEquation}, \eqref{EQgausscodazzinonlinear1} and \eqref{EQDUOMU1}, it follows that 
\begin{align*} 
\begin{aligned} 
\CC_9 :=& D\Du\omb -12 \Om^3 (\di\log\Om-\eta)\omb -2 \Om^2 \omb \lrpar{2(\eta,-\eta+2\di\log\Om)- \vert \eta\vert^2}\\
& +\lrpar{K+ \frac{1}{4} \trchi\trchib-\half (\chih,\chibh)} \lrpar{\frac{3}{2}\Om^3\trchib -2\Om^2 \omb} \\
&-12 \Om^3 \chib(\eta, \eta-\di\log\Om) - \half \Om^3 (\chih, \ab)\\
& - \Om^3 \lrpar{\Divd \chibh-\half \di\trchib-\chibh \cdot \lrpar{\eta- \di \log \Om}+ \half \trchib \lrpar{\eta-\di\log\Om}, 7\eta-3\di\log\Om}\\
&- \Om^3 \Divd \lrpar{\Divd \chibh-\half \di\trchib-\chibh \cdot \lrpar{\eta- \di \log \Om}+ \half \trchib \lrpar{\eta-\di\log\Om}}  \\
=&0.
\end{aligned} 
\end{align*}

\ni \textbf{Equation for $\a$.} By \eqref{EQtransportEQLnullstructurenonlinear}, $\a$ satisfies
\begin{align*} 
\begin{aligned} 
\CC_{10} :=& \Om \a + D\chih-\Om \vert \chih \vert^2 \gd - \om \chih=0.
\end{aligned} 
\end{align*}

\ni The following lemma shows that the constraint functions are a smooth mapping. Its proof is straight-forward and omitted.
\begin{lemma}[Smoothness of constraint functions] \label{LEMstandardEstimates} Consider null data on $\HH_{0,[1,2]}$,
\begin{align*} 
\begin{aligned} 
x=(\Om, \gd, \Om\trchi, \chih, \Om\trchib, \chibh, \eta, \om, D\om, \omb, \Du\omb, \a, \ab).
\end{aligned} 
\end{align*}
The constraints map $\mathcal{C}$,
\begin{align*} 
\begin{aligned} 
\mathcal{C}: x \mapsto (\CC_i(x))_{1\leq i \leq 10},
\end{aligned} 
\end{align*}
is a smooth mapping from an open neighbourhood of $\mathfrak{m}$ in $\mathcal{X}(\HH_{0,[1,2]})$ to $\mathcal{Z}_{\CC}$, where
\begin{align*} 
\begin{aligned} 
\mathcal{Z}_{\CC}:=&  H^6_2 \times  H^6_3 \times H^6_2 \times H^{5}_1 \times  H^{4}_1 \times  H^{5}_2 \times  H^{4}_2\times H^{2}_2 \times  H^{2}_2 \times  H^6_1,
\end{aligned} 
\end{align*}
where each space is over $\HH_{0,[1,2]}$. Moreover, we have the estimates
\begin{align*} 
\begin{aligned} 
\Vert (\CC_i(x))_{1\leq i \leq 10} \Vert_{\mathcal{Z}_{\CC}} \les \Vert x - \mathfrak{m} \Vert_{\XX(\HH_{0,[1,2]})}.
\end{aligned} 
\end{align*}
\end{lemma}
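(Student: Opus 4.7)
Since the Minkowski reference data $\mathfrak{m}$ satisfies the null structure equations identically, we have $\mathcal{C}(\mathfrak{m}) = 0$. Consequently, once smoothness of $\mathcal{C}$ on a neighbourhood of $\mathfrak{m}$ is established, the linear estimate $\Vert \mathcal{C}(x) \Vert_{\mathcal{Z}_{\CC}} \lesssim \Vert x - \mathfrak{m} \Vert_{\XX(\HH_{0,[1,2]})}$ follows at once from a first-order Taylor expansion, provided the radius of the neighbourhood is small enough that a uniform Lipschitz constant is available.

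For smoothness, the plan is to observe that each constraint function $\CC_i$ displayed in Section \ref{SECderivationConstraintFunctions1} is a polynomial in the components of $x$, in their tangential covariant derivatives $\Nd$, and in their $L$-Lie-derivatives $D$, with coefficients that are smooth rational functions of $\Om$ and $\phi$. On a ball of sufficiently small radius around $\mathfrak{m}$, both $\Om$ and $\phi$ remain uniformly bounded away from zero (by the $L^\infty$ estimate of Lemma \ref{LEMstandardTRACE} applied to the high-regularity components of $\XX(\HH_{0,[1,2]})$), so $\Om^{-1}$ and $\phi^{-1}$ depend smoothly on the data in the relevant Sobolev topologies. The Gauss curvature $K$ appearing in $\CC_5, \CC_7, \CC_9$ is a smooth function of $\gd$ and its two tangential derivatives, and the Hodge-type operators $\Divd, \Curld, \Nd \widehat{\otimes}$ are linear and bounded with a well-defined loss of one derivative. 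Combined with the product estimate in Lemma \ref{LEMstandardTRACE}, this yields smoothness of $\mathcal{C}$ as a mapping into some Banach space of $S_{0,v}$-tangential tensor fields on $\HH_{0,[1,2]}$.

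The main (and only genuinely non-trivial) step is then a termwise bookkeeping showing that the image in fact lies in the prescribed $\mathcal{Z}_{\CC}$. For each $\CC_i$ I would enumerate its summands, read off the maximal tangential $\Nd$-count and $D$-count of every factor, and compare against the regularity weights assigned to the input components by Definition \ref{DEFnormHH}. These weights are tuned precisely so that the accounting closes: $\chih$ and $D \gd$ are coupled by \eqref{EQfirstvariation1} (hence $\chih \in H^6_2$, $\gd \in H^6_3$); $\om$ and $D\om$ differ by exactly one $D$-derivative; and $\Du \omb$ sits in $H^2_3$ precisely so that its $D$-derivative appearing in $\CC_9$ lands in $H^2_2$, matching the ninth factor of $\mathcal{Z}_{\CC}$. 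Analogous inspections handle $\CC_5$ (which contains $K$ and hence two tangential derivatives of $\gd$), $\CC_8$ (which contains $\Nd \widehat{\otimes} \Divd \chibh$ and hence three tangential derivatives of $\chibh$, justifying the target $H^4_2$), and the remaining $\CC_i$. The main obstacle is thus purely combinatorial: once each of the ten constraints has been checked against its target factor in $\mathcal{Z}_{\CC}$, smoothness and the linear estimate both follow from the product and inversion rules for Sobolev spaces above the critical exponent.
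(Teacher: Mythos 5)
The paper omits the proof of this lemma as ``straight-forward,'' and your outline reconstructs exactly the intended argument: smoothness is polynomial (in the components, their $\Nd$- and $D$-derivatives, and $\Om^{-1}$, $\phi^{-1}$) plus boundedness-away-from-zero of $\Om$ and $\phi$ on a small ball, and the linear estimate follows from a Lipschitz/Taylor estimate once smoothness is in hand (the paper packages that last step as Lemma \ref{LEMoperatorEstimates}). The strategy is sound and matches the paper's intent.

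Two of your sample bookkeeping claims are off, however. In $\CC_8$, the term $\Om\,\Nd\widehat{\otimes}\bigl(\Divd\chibh - \cdots\bigr)$ carries only \emph{two} tangential derivatives of $\chibh$ (one from $\Divd$, one from $\Nd\widehat{\otimes}$), not three; with $\chibh\in H^4_3$ this lands in $H^2_3$, and it is the \emph{eighth} factor $H^2_2$ of $\mathcal{Z}_{\CC}$ that is the target for $\CC_8$ (set by $\widehat{D}\ab$ with $\ab\in H^2_3$), not $H^4_2$, which is the seventh factor and pairs with $\CC_7 = D\omb - \cdots$. Neither error breaks the argument, but since the whole content of the lemma is precisely this accounting, I'd urge care: when you actually write out the table for all ten constraints you should verify that each term's $(\Nd,D)$-count closes against both the source weights in Definition \ref{DEFnormHH} and the target factor of $\mathcal{Z}_{\CC}$; e.g.\ $\CC_2$ simplifies because $\tr_{\gd_c}D\gd_c \equiv 0$ by the volume normalization of $\gd_c$, and $\CC_1$ contains $D^2\phi$ where $\phi$ is only controlled through $\gd\in H^6_3$, so the closing there is the delicate entry worth checking explicitly rather than asserting.
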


\subsubsection{Linearized constraint functions at Minkowski}\label{SECderivationConstraintFunctions2}
In this section we linearize the constraint functions $(\CC_i(x))_{1\leq i \leq 10}$ at Minkowski, that is, at $x=\mathfrak{m}$. The linearization procedure is adapted from \cite{DHR}: We expand the sphere data
\begin{align*} 
\begin{aligned} 
x =& \lrpar{\Om, \gd, \Om\trchi, \chih, \Om\trchib, \chibh, \eta, \om, D\om, \omb, \Du\omb, \a,  \ab} \\
=& \lrpar{1, r^2 \gac, \frac{2}{r}, 0, -\frac{2}{r}, 0, 0, 0, 0, 0, 0, 0,0}\\
&+ \varep \cdot \lrpar{\dot\Om, \dot{\gd}, \omtrchid, \dot\chih, \omtrchibd, \dot\chibh, \dot\eta, \omd, D\omd, \ombd, \Du\ombd, \ad, \abd} + \mathcal{O}(\varep^2),
\end{aligned} 
\end{align*}
and differentiate in $\varep$ at $\varep=0$. Here, we recall that the Minkowski value for $r$ is given by $r=v-u$. The proof of the next lemma follows by explicit calculation.
\begin{lemma}[Linearization of constraint functions at Minkowski] \label{LEMlinearizedConstraints} Let $(\dot{\CC}_i)_{1\leq i \leq 10}$ denote the linearization of the constraint functions $({\CC}_i)_{1\leq i \leq 10}$ at Minkowski. Then it holds that
\begin{align*} 
\begin{aligned} 
\dot{\CC}_1=& D^2 \phid - 2 \omd = D(D\phid-2\Omd), &
\dot{\CC}_2=& r^2 \lrpar{2D\lrpar{\frac{\phid}{r}}-\omtrchid} ,\\
\dot{\CC}_3=& r^2 D\gdcd - 2 \chihd, &
\dot{\CC}_4=& \frac{1}{r^2} D\lrpar{r^2 \etad} - \frac{4}{r} \di \Omd - \frac{1}{r^2} \Divdo \chihd + \frac{1}{2} \di \omtrchid, 
\end{aligned} 
\end{align*}
and moreover,
\begin{align*} 
\begin{aligned} 
\dot{\CC}_5=& \frac{1}{r^2} D\lrpar{r^2 \omtrchibd} - \frac{2}{r} \omtrchid + \frac{2}{r^2} \Divdo \lrpar{\etad-2\di \Omd} +2 \Kd + \frac{4}{r^2} \Omd,\\
\dot{\CC}_6=& r D\lrpar{\frac{\chibhd}{r}} - 2 \DDd_2^\ast \lrpar{\etad-2\di\Omd} -\frac{1}{r} \chihd ,\\
\dot{\CC}_7=& D\ombd - \Kd -\frac{1}{2r}\omtrchibd + \frac{1}{2r} \omtrchid - \frac{2}{r^2} \Omd, \\
\dot{\CC}_8=& r D\lrpar{\frac{\abd}{r}} - 2 \DDd_2^\ast \lrpar{\frac{1}{r^2} \Divdo \chibhd - \half \di \omtrchibd - \frac{1}{r} \etad}, \\
\dot{\CC}_9=& D \lrpar{\Du \ombd} -\frac{3}{r} \lrpar{\Kd +\frac{1}{2r}\omtrchibd - \frac{1}{2r} \omtrchid+ \frac{2}{r^2} \Omd} \\
&- \frac{1}{r^2} \Divdo \lrpar{\frac{1}{r^2} \Divdo \chibhd - \frac{1}{2} \di \omtrchibd - \frac{1}{r} \etad}.\\
\CCd_{10}=& \ad + D\chihd.
\end{aligned} 
\end{align*}
\end{lemma}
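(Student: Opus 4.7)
The plan is to linearize each constraint $\CC_i$ term-by-term using the Minkowski reference values collected in \eqref{EQspheredataMINKOWKSI001}, namely $\Om=1$, $\gd=r^2\gac$, $\phi=r$, $\trchi=2/r$, $\trchib=-2/r$, $K=1/r^2$ with $r=v-u$, while $\chih,\chibh,\eta,\om,\omb,\a,\ab,\be,\rh,\si$ all vanish, together with $b=0$ along $\HH_{0,[1,\infty)}$. The latter ensures that on Minkowski $D=\Lied_{\pr_v}$ on $S_{u,v}$-tangent tensors, so $D$ commutes with the round angular operators $\Nd$, $\Divdo$, $\DDd_2^\ast$, $\Ldo$ and with $\pr_v$.

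Three structural simplifications eliminate most of the apparent complexity. First, any term in $\CC_i$ that is a product of two factors each vanishing at Minkowski contributes nothing to $\dot{\CC}_i$: for instance the $\Om^2|\chih|^2\phi$ term in $\CC_1$, the $|\eta-2\di\log\Om|^2$ term in $\CC_5$, all the $|\chih|^2,(\chih,\chibh),\chih\wedge\chibh$ contributions in $\CC_6,\CC_7,\CC_8,\CC_9$, and the entire last four lines of $\CC_9$. Second, the conformal normalization $\sqrt{\det\gd_c}=\sqrt{\det\gac}$ forces $\dot{\gd_c}$ to be $\gac$-traceless, so $\tr_{\gac} D\dot{\gd_c}=0$; this is what makes $\dot{\CC}_2$ depend only on $\phid$ and $\omtrchid$ and not on $\gdcd$. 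Third, because $\gd=r^2\gac$ at Minkowski, $\Nd$ identifies with the connection of $\gac$ after rescaling: $\Nd\widehat{\otimes}$ applied to a $1$-form linearizes to $2\DDd_2^\ast$, $\Divd$ on a symmetric $2$-tensor picks up a factor $1/r^2$, and the linearized Gauss curvature $\Kd$ follows from the conformal formula $K=\phi^{-2}(K_{\gd_c}-\Ld_{\gd_c}\log\phi)$ applied to $\gd=\phi^2\gd_c$.

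With these reductions the remaining computation is direct Leibniz-rule bookkeeping. For the trace-type constraints ($\CC_1,\CC_2,\CC_5,\CC_7$) one collects linear contributions from $D\phi$, $\Om\trchi$, $\Om\trchib$, $K$, and cross-terms like $\Om\trchi\cdot\phi^2$ whose linearization is $r^2\,\omtrchid+4\phid$; the identity $D\phi=\Om\trchi\phi/2$ at Minkowski forces $\dot{\CC}_1=D(D\phid-2\Omd)$ through a single application of $D$. For the tracefree constraints ($\CC_3,\CC_6,\CC_8,\CC_{10}$) the structure $-2\Om\chih+\ldots$ and $D(\Om\chibh)+\ldots$ reduces immediately, giving the stated $\DDd_2^\ast$ expressions after inserting the background value of $\Divd$ and rewriting $\chibh/r$, $\abd/r$ to match the $r$-weights obtained from $r\,D(T/r)=DT-T/r$.

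The main obstacle is purely organizational: $\CC_9$, the equation for $\Du\omb$, has the longest expression and it is essential to verify that the entire contribution of the quadratic $\Divd(\ldots)$ inside the parenthesis reduces to $-\frac{1}{r^2}\Divdo\bigl(\frac{1}{r^2}\Divdo\chibhd-\frac{1}{2}\di\omtrchibd-\frac{1}{r}\etad\bigr)$; inspection shows every remaining term in $\CC_9$ is quadratic in Minkowski-vanishing factors, so only the manifestly linear $D\Du\omb$ and the linearized Gauss-Codazzi expression inside the outer $\Divd$ survive, together with the $-3/r$ coefficient of $\dot{\CC}_7$ coming from $\frac{3}{2}\Om^3\trchib\cdot K$ upon linearizing the Gauss relation. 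Once this pattern is verified, each formula stated in the lemma follows by direct substitution.
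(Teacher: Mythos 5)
Your proposal is correct and follows essentially the same route as the paper, which simply says the lemma ``follows by explicit calculation''; the three structural observations you isolate (quadratic terms in Minkowski-vanishing quantities dropping, $\gac$-tracelessness of $\gdcd$ from the normalization $\sqrt{\det\gd_c}=\sqrt{\det\gac}$, and the conformal rescalings of $\Nd$, $\Divd$ and the Gauss curvature) are exactly what makes the term-by-term Leibniz computation tractable. One small sign slip to watch in your summary: with the paper's conventions $\Nd\widehat{\otimes}\,Y = -2\,\DDd_2^\ast Y$, not $+2\,\DDd_2^\ast Y$, although since your final formulas in $\dot{\CC}_6$ and $\dot{\CC}_8$ agree with the statement you clearly used the correct sign in the actual computation.
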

\begin{remark} In addition to the above, we have by \eqref{EQdefdecomposition1} and \eqref{DEFricciCoefficients} that
\begin{align} 
\begin{aligned} 
\dot{\gd} = 2r \phid \gac + r^2 \gdcd, \,\, \omd = D\Omd, \,\, \ombd = \Du\Omd, \,\, \etabd =& - \etad + 2 \di \Omd.
\end{aligned} \label{EQssLinearizationRelations7778}
\end{align}
Moreover, by (242) in \cite{DHR} the linearization of the Gauss curvature $\Kd$ is given by
\begin{align} \label{EQgaussDHR}
\Kd =& \frac{1}{2r^2} \Divdo \Divdo \gdcd - \frac{1}{r^3} (\Ldo +2) \phid.
\end{align}
Moreover, using that the area radius $r$ is defined by
\begin{align*} 
\begin{aligned} 
r^2 = \frac{1}{4\pi} \int\limits_{S_{u,v}} \phi^2 d\mu_{\gac},
\end{aligned} 
\end{align*}
and that for Minkowski sphere data, $\phi= r$, we have that $\dot{r}^{[\geq1]}=0$ and
\begin{align} 
\begin{aligned} 
\dot{r}^{(0)} = \phid^{(0)}.
\end{aligned} \label{EQareaRADIUSlinearization}
\end{align}
\end{remark}

\subsection{Perturbations of sphere data} \label{SECdefEquivalenceFirstOrderSphereData} In this section we introduce two types of perturbations of sphere data, \emph{transversal perturbations} and \emph{angular perturbations}. They respectively come from transversal perturbations of the sphere and sphere diffeomorphisms. We proceed as follows.

\begin{itemize}
\item In Section \ref{SECdefPerturbations1111} we introduce the transversal perturbation mapping $\PP_f$ and a norm for the perturbation function $f$.
\item In Section \ref{SECdefPerturbations1222} we introduce the angular perturbation mapping $\PP_{(j^1,j^2)}$ and a norm for the perturbation functions ${(j^1,j^2)}$.
\item In Section \ref{SECPerturbationslinearizesd222111} we discuss the linearizations of $\PP_f$ and $\PP_{(j^1,j^2)}$ at Schwarzschild of mass $M\geq0$.

\end{itemize}

\ni We note that at the linear level (see Section \ref{SECPerturbationslinearizesd222111}), transversal perturbations and angular perturbations correspond directly to specific \emph{linear gauge solutions} in \cite{DHR}. However, the regularity control of the perturbed sphere data at the nonlinear level loses regularity compared to the linear level, and thus needs separate discussion, see Propositions \ref{PropositionSmoothnessF} and \ref{PropositionSmoothnessF2}, and Appendix \ref{SECproofTEClemmasmoothness}.
\subsubsection{Transversal perturbations $\PP_f$} \label{SECdefPerturbations1111}

In this section we introduce transversal perturbations. In words, the idea is as follows. Given a spacelike $2$-sphere $\tilde{S}$ in a vacuum spacetime $\lrpar{\MM,\g}$ and a scalar function $f$ on $\tilde{S}$, we perturb $\tilde{S}$ along the ingoing null direction by an amount $f$. The resulting sphere is denoted by $S$ and its sphere data by $x$.

In the following we sketch the formal definition of transversal perturbations; we refer to Appendix \ref{SECproofTEClemmasmoothness} for full details and estimates. Let $(\tilde{u},\tilde{v},\tilde{\th}^1,\tilde{\th}^2)$ be a local double null coordinate system around $\tilde{S}$ such that
\begin{align*} 
\begin{aligned} 
\g =& - 4 \tilde{\Om}^2 d\tilde{u} d\tilde{v} + \tilde{\gd}_{CD} (d\tilde{\th}^C - \tilde{b}^C d\tilde{v})(d\tilde{\th}^D - \tilde{b}^D d\tilde{v}),\end{aligned} 
\end{align*}
and
\begin{align*} 
\begin{aligned} 
\tilde{S} = \tilde{S}_{0,2} := \{ \tilde{u}=0, \tilde{v}=2\}.
\end{aligned} 
\end{align*}
Denote by $\tilde{x}_{0,2}$ the sphere data on $\tilde{S}_{0,2}$ with respect to $(\tilde{u},\tilde{v},\tilde{\th}^1,\tilde{\th}^2)$.

We define new coordinates $(u,\th^1,\th^2)$ on $\tilde{\HHb}_2:= \{\tilde{v}=2\}$ as follows. For a given smooth scalar function $f(u,\tth^1,\tth^2)$, define $(u,\th^1,\th^2)$ on $\tilde{\HHb}_2$ by
\begin{align} 
\begin{aligned} 
\tilde{u}=u+f(u,\th^1,\th^2), \,\, \tth^1= \th^1, \,\, \tth^2=\th^2.
\end{aligned} \label{EQdefinitionuHHB2}
\end{align}
For $f$ sufficiently small, $(u,\th^1,\th^2)$ indeed forms a local coordinate system on $\tilde{\HHb}_2$. Define the sphere $S' \subset \HHb_2$ by  
\begin{align*} 
\begin{aligned} 
S':= \{ u=0\} = \{\tilde u = f(0,\th^1,\th^2), \tilde{v}=2 \},
\end{aligned} 
\end{align*}
where we used \eqref{EQdefinitionuHHB2}. Let $(u,v,\th^1,\th^2)$ be the local double null coordinate system on $\MM$ such that $v=\tilde{v}$ on $\MM$ and $(u,\th^1,\th^2)$ agree with the constructed $(u,\th^1,\th^2)$ on $\tilde{\HHb}_2$. Let $x_{0,2}$ be the sphere data of $S_{0,2}=S'$ with respect to $(u,v,\th^1,\th^2)$. An explicit calculation of $x_{0,2}$ is provided in Appendix \ref{SECproofTEClemmasmoothness}.

The sphere data $x_{0,2}$ depends not only on $f$ and $\tilde{x}_{0,2}$ but also on the ingoing null data $\tilde{\underline{x}}$ of $(\MM,\g)$ on $\tilde{\HHb}_2$ (with respect to $(\tilde u, \tilde v, \tilde{\th}^1, \tilde{\th}^2)$). Hence we denote the transversal perturbation mapping $\PP_f$ by
\begin{align*} 
\begin{aligned} 
x_{0,2} := \PP_f(\tilde{\underline{x}}).
\end{aligned} 
\end{align*}

\begin{remark} \label{REMARKfperturbationREALuse} In Appendix \ref{SECproofTEClemmasmoothness} it is shown that the sphere data $x_{0,2}$ on $S_{0,2}$ depends on $f$ only via the four scalar functions
\begin{align*} 
\begin{aligned} 
\lrpar{f(0,\th^1,\th^2), \pr_uf(0,\th^1,\th^2), \pr_u^2 f (0,\th^1,\th^2), \pr_u^3 f (0,\th^1,\th^2)}.
\end{aligned}
\end{align*}
In the rest of the paper we abuse notation and denote this tuple of scalar functions simply by
\begin{align} 
\begin{aligned} 
f := \lrpar{f(0), \pr_uf(0), \pr_u^2 f (0), \pr_u^3 f (0)}.
\end{aligned}  \label{EQdefPerturbationFunctionfdefi}
\end{align}
\end{remark}

\ni We introduce the following norm for the perturbation function $f$ in \eqref{EQdefPerturbationFunctionfdefi}.
\begin{definition}[Norm for perturbation function $f$] \label{DEFnormfperturbations} 
For a perturbation function $f$ on $\SSS^2$ as in \eqref{EQdefPerturbationFunctionfdefi} given by
\begin{align*} 
\begin{aligned} 
f:= \lrpar{f(0), \pr_uf(0), \pr_u^2 f (0), \pr_u^3 f (0)},
\end{aligned} 
\end{align*}
define
\begin{align*} 
\begin{aligned} 
\Vert f \Vert_{\YY_f} := \Vert f(0) \Vert_{H^8(\SSS^2)} + \Vert \pr_u f(0) \Vert_{H^6(\SSS^2)}+\Vert \pr_u^2 f(0) \Vert_{H^4(\SSS^2)}+\Vert \pr_u^3 f(0) \Vert_{H^2(\SSS^2)},
\end{aligned} 
\end{align*}
where the norms are with respect to the round unit metric $\gac$ on $\SSS^2$. Moreover, let
\begin{align*} 
\begin{aligned} 
\YY_f := \{ f: \Vert f \Vert_{\YY_f} < \infty\}.
\end{aligned} 
\end{align*}
\end{definition}

\ni The following proposition is proved in Appendix \ref{SECproofTEClemmasmoothness}. We note that the regularity analysis of $\PP_f$ and the proof of \eqref{EQestimatePPfsmoothness1stder} is different than the regularity analysis of its linearization $\dot{\PP}_f$ (see Section \ref{SECPerturbationslinearizesd222111}).
\begin{proposition}[Smoothness of $\PP_f$] \label{PropositionSmoothnessF}
Let $\de>0$ be a real number. The mapping 
\begin{align*} 
\begin{aligned} 
\PP_f: \, \XX^+(\tilde{\HHb}_{[-\de,\de],2}) \times \YY_f &\to \XX(S_{0,2}), \\
(\tilde{\underline{x}},f) &\mapsto x_{0,2}:= \PP_f(\tilde{\underline{x}})
\end{aligned} 
\end{align*}
is well-defined and smooth in an open neighbourhood of $(\tilde{\underline{x}},f)=(\underline{\mathfrak{m}},0)$. Moreover, it holds that
\begin{align} 
\begin{aligned} 
\Vert \PP_{f}(\tilde{\underline{x}}) - \tilde{\underline{x}}_{0,2} \Vert_{\XX(S_{0,2})} \les \Vert f \Vert_{\YY_{f}} + \Vert \tilde{\underline{x}}-\underline{\mathfrak{m}} \Vert_{\XX^+(\tilde{\HHb}_{[-\de,\de],2})},
\end{aligned} \label{EQestimatePPfsmoothness1stder}
\end{align}
where we denoted $\tilde{\underline{x}}_{0,2} := \tilde{\underline{x}}\vert_{S_{0,2}}$.
\end{proposition}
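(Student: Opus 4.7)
The plan is to derive explicit transformation formulas for the sphere data $x_{0,2}$ under the coordinate change $\tilde u = u + f(u,\th^1,\th^2)$, $v=\tilde v$, $\th^A = \tth^A$ on $\tilde{\HHb}_2$, and then to verify smoothness and the estimate \eqref{EQestimatePPfsmoothness1stder} by tracking regularities through these formulas. On $\tilde{\HHb}_2$, the chain rule at fixed $v$ and $\th$ yields $\Lb = \pr_u = (1+\pr_u f)\tilde\Lb$, and iterating produces
\begin{align*}
\pr_u^2 =&\, (1+\pr_u f)^2 \pr_{\tilde u}^2 + \pr_u^2 f\,\pr_{\tilde u}, \\
\pr_u^3 =&\, (1+\pr_u f)^3 \pr_{\tilde u}^3 + 3(1+\pr_u f)\pr_u^2 f\,\pr_{\tilde u}^2 + \pr_u^3 f\,\pr_{\tilde u}.
\end{align*}
The normalization \eqref{EQnormalisation} fixes $\Om$ in terms of $\tilde\Om$ and $\pr_u f$; the induced metric $\gd\vert_{S_{0,2}}$ is $\tilde\gd$ evaluated at $\tilde u = f(0,\cdot)$; and the shift $b$ is determined from $\di f$ and $\tilde\gd$ by linear algebra from the requirement that the double null decomposition \eqref{EQmetricexpressionDoubleNULLintro7788} hold in the new coordinates.

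Substituting these into the definitions \eqref{DEFricciCoefficients} and \eqref{EQnullcurvatureCOMPDEF} of the Ricci coefficients and null curvature components yields a schematic expression
\begin{align*}
x_{0,2} = F\lrpar{\lrpar{\tilde{\underline{x}}, \Du\tilde{\underline{x}}, \Du^2\tilde{\underline{x}}, \Du^3\tilde{\underline{x}}}\big|_{\tilde u = f(0,\cdot)};\, f(0,\cdot), \pr_u f(0,\cdot), \pr_u^2 f(0,\cdot), \pr_u^3 f(0,\cdot)},
\end{align*}
with $F$ polynomial in the $\pr_u^k f$, rational in $\tilde\Om > 0$, and involving only sphere-wise multiplications and angular derivatives. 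This is the content behind Remark \ref{REMARKfperturbationREALuse} and the choice of norm $\YY_f$ in Definition \ref{DEFnormfperturbations}. Smoothness of $\PP_f$ between the indicated Banach spaces then follows from the Sobolev algebra property on $\SSS^2$, smoothness of the composition $\tth\mapsto \tilde{\underline{x}}(f(0,\th),\th)$ in $(\tilde{\underline{x}},f(0,\cdot))$, and the trace estimate of Lemma \ref{LEMstandardTRACE} which controls the restriction of $\Du^k\tilde{\underline{x}}$ to $\{\tilde u = f(0,\cdot)\}$ by the $\XX^+$-norm. Since $\PP_0(\underline{\mathfrak{m}}) = \underline{\mathfrak{m}}_{0,2}$, Taylor expanding $F$ around the Minkowski reference data and absorbing the quadratic-and-higher terms by smallness produces the estimate \eqref{EQestimatePPfsmoothness1stder}.

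The main obstacle is the component-by-component verification that the high-regularity norm $\XX^+$ of Definition \ref{DEFspacetimeNORM} supplies precisely the tangential and transversal regularity demanded by each of the thirteen components of $\XX(S_{0,2})$ in Definition \ref{DEFnormFirstOrderDATA} after the chain-rule expansion. The most delicate entries are $\Du\omb$ and $\ab$, which require three $\pr_u$-derivatives of $\tilde{\underline{x}}$ and therefore pull in $\pr_u^3 f$, together with the outgoing quantities $\a$, $D\om$, $\chih$ whose relation to the new null frame couples the chain rule nontrivially across the frame swap. The numerical exponents in Definitions \ref{DEFspacetimeNORM} and \ref{DEFnormfperturbations} are tuned precisely to close this accounting, which is mechanical but lengthy and is carried out in detail in Appendix \ref{SECproofTEClemmasmoothness}.
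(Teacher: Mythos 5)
Your proposal is correct and mirrors the paper's own strategy in Appendix \ref{SECproofTEClemmasmoothness}: derive the explicit transformation formulas for $x_{0,2}$ under $\tilde u = u + f$ via the chain rule and the new null frame, note that $x_{0,2}$ depends on $f$ only through $(f,\pr_u f,\pr_u^2 f,\pr_u^3 f)\vert_{u=0}$ as in Remark \ref{REMARKfperturbationREALuse}, verify smoothness of these algebraic/composition formulas between the Hilbert spaces $\XX^+\times\YY_f\to\XX$ via Sobolev product and trace estimates, and deduce \eqref{EQestimatePPfsmoothness1stder} from smoothness at the Minkowski reference point (the paper invokes Lemma \ref{LEMoperatorEstimates}, which is the operator-calculus form of the first-order Taylor expansion you describe). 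The one minor imprecision is the reference to ``the shift $b$'' — the paper instead constructs the outgoing null vector $L$ directly from the orthogonality conditions (formula \eqref{EQLvectorfieldFormula15}), since $b$ is not part of the sphere data — but this does not affect the argument.
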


\subsubsection{Angular perturbations $\PP_{(j^1,j^2)}$} \label{SECdefPerturbations1222} In this section we introduce angular perturbations. Consider a $2$-sphere $S$ with sphere data $\tilde{x}$ expressed with respect to local coordinates $(\tth^1,\tth^2)$ on $S$. For two smooth scalar functions $j_1(\th^1,\th^2)$ and $j_2(\th^1,\th^2)$ on $S$, define new local coordinates $(\th^1,\th^2)$ by
\begin{align*} 
\begin{aligned} 
\tth^1 = \th^1 + j^1(\th^1,\th^2), \,\, \tth^2= \th^2+j^2(\th^1,\th^2).
\end{aligned} 
\end{align*}
For $j^1$ and $j^2$ sufficiently small, $(\th^1, \th^2)$ is indeed a local coordinate system on $S$. Let 
\begin{align*} 
\begin{aligned} 
x := \PP_{(j^1,j^2)}(\tilde{x})
\end{aligned} 
\end{align*}
denote the sphere data on $S$ expressed with respect to $(\th^1,\th^2)$. We refer to Appendix \ref{SECproofTEClemmasmoothness} for explicit formulas for $x$.

We introduce the following norm for the perturbation functions $(j^1,j^2)$.
\begin{definition}[Norm for perturbation functions $(j^1,j^2)$] \label{DEFnormj1j2} Given two perturbation functions $(j^1,j^2)$ on $\SSS^2$, let
\begin{align*} 
\begin{aligned} 
\Vert (j^1,j^2) \Vert_{\mathcal{Y}_{(j^1,j^2)}} :=& \Vert j_1 \Vert_{H^{7}(\SSS^2)}+ \Vert j_2 \Vert_{H^{7}(\SSS^2)},
\end{aligned} 
\end{align*}
where the norms are with respect to the round unit metric $\gac$. Moreover, let
\begin{align*} 
\begin{aligned} 
\mathcal{Y}_{(j^1,j^2)} := \{(j^1,j^2) : \Vert (j^1,j^2) \Vert_{\YY_{(j^1,j^2)}}<\infty \}.
\end{aligned} 
\end{align*}
\end{definition}

\ni The following proposition is proved in Appendix \ref{SECproofTEClemmasmoothness}.
\begin{proposition}[Smoothness of $\PP_{(j^1,j^2)}$] \label{PropositionSmoothnessF2}
The mapping 
\begin{align*} 
\begin{aligned}  
\PP_{(j_1,j_2)}: \, \XX(S_{0,2}) \times \YY_{(j_1,j_2)} &\to \XX(S_{0,2}), \\
(x_{0,2},(j_1,j_2)) &\mapsto \PP_{(j_1,j_2)}(x_{0,2})
\end{aligned} 
\end{align*}
is well-defined and smooth in an open neighbourhood of $(x_{0,2}, {(j_1,j_2)})=(\mathfrak{m}_{0,2},(0,0))$. Moreover, it holds that
\begin{align} 
\begin{aligned} 
\Vert \PP_{(j_1,j_2)}(x_{0,2}) - x_{0,2} \Vert_{\XX(S_{0,2})} \les \Vert (j_1,j_2) \Vert_{\YY_{(j_1,j_2)}} + \Vert x_{0,2}-\mathfrak{m}_{0,2} \Vert_{\XX(S_{0,2})}.
\end{aligned} \label{EQestimatePPJJsmoothness1stder}
\end{align}
\end{proposition}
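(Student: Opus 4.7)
The plan is the following. First I would write out the explicit transformation formulas for sphere data under the pure angular change of coordinates $\Phi:\theta^A \mapsto \tilde\theta^A = \theta^A + j^A(\theta^1,\theta^2)$ as derived in Appendix \ref{SECproofTEClemmasmoothness}. Since $\PP_{(j^1,j^2)}$ corresponds to a coordinate change on a fixed $2$-sphere (no transversal motion is involved), each component of $\PP_{(j^1,j^2)}(x_{0,2})$ is given by a standard tensorial pullback: scalars compose as $\tilde\Om = \Om \circ \Phi$, the metric transforms as $\tilde{\gd}_{AB} = (\gd_{CD}\circ\Phi)\,\pr_A \Phi^C \pr_B \Phi^D$, and vectorfields and symmetric $2$-tensors transform by contracting with the Jacobian $\pr\Phi$ and its inverse. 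All these formulas are algebraic in the entries of $\pr\Phi$, $(\pr\Phi)^{-1}$, the input sphere data, and their composition with $\Phi$.

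Next I would verify that for $(j^1,j^2)$ small in $\YY_{(j^1,j^2)}$ the map $\Phi$ is a smooth diffeomorphism of $\SSS^2$. The Sobolev embedding $H^7(\SSS^2) \hookrightarrow C^5(\SSS^2)$ ensures that $\pr\Phi = I + \pr j$ is uniformly close to the identity, and a global inverse-function-theorem argument on the compact sphere produces a $\Phi^{-1}$ of the same Sobolev regularity. The core analytic input is then the Moser-type composition estimate $\Vert f\circ\Phi\Vert_{H^k(\SSS^2)} \les \Vert f\Vert_{H^k(\SSS^2)}(1+\Vert (j^1,j^2)\Vert_{\YY_{(j^1,j^2)}})^N$ for $k \leq 7$, together with the Sobolev algebra property $\Vert FG\Vert_{H^k} \les \Vert F\Vert_{H^k}\Vert G\Vert_{H^k}$ for $k\geq 2$. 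The choice of $H^7$ for $(j^1,j^2)$ is sharp here: the highest-regularity components in $\XX(S_{0,2})$ (namely $\Om, \gd, \Om\trchi, \chih, \om, D\om, \a$, all in $H^6$) have one angular derivative absorbed into $\pr\Phi$ under the tensorial pullback, so $\pr\Phi \in H^6$ is exactly what is required, which forces $(j^1,j^2) \in H^7$.

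A component-by-component application of the composition and product estimates then shows that each entry of $\PP_{(j^1,j^2)}(x_{0,2})$ lies in the Sobolev space prescribed by $\XX(S_{0,2})$, giving well-definedness. Smoothness follows from the classical Omega-lemma: the mapping $\Phi \mapsto (T\mapsto T\circ \Phi)$ is smooth with values into a one-lower-regularity space, and the $H^7$-to-$H^6$ gap compensates this loss; each remaining building block (tensor contraction with $\pr\Phi$, inversion of the Jacobian matrix near the identity, pointwise multiplication) is smooth on the relevant Sobolev algebras. For the quantitative estimate \eqref{EQestimatePPJJsmoothness1stder} I would split
\begin{align*}
\PP_{(j^1,j^2)}(x_{0,2}) - x_{0,2} = \bigl(\PP_{(j^1,j^2)}(x_{0,2}) - \PP_{(j^1,j^2)}(\mathfrak{m}_{0,2})\bigr) + \bigl(\PP_{(j^1,j^2)}(\mathfrak{m}_{0,2}) - \mathfrak{m}_{0,2}\bigr) + \bigl(\mathfrak{m}_{0,2} - x_{0,2}\bigr),
\end{align*}
bounding the first and third terms by $\Vert x_{0,2}-\mathfrak{m}_{0,2}\Vert_{\XX(S_{0,2})}$ via the local Lipschitz property of $\PP_{(j^1,j^2)}$, and bounding the middle term by $\Vert (j^1,j^2)\Vert_{\YY_{(j^1,j^2)}}$ via a first-order Taylor expansion around $(j^1,j^2)=(0,0)$, at which $\PP_{(0,0)} = \mathrm{id}$.

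The main obstacle will be the sharp regularity bookkeeping across all thirteen components of sphere data with their differing Sobolev indices in $\XX(S_{0,2})$, in particular ensuring that the Jacobian factors $\pr\Phi$ and $(\pr\Phi)^{-1}$ multiply into every component without loss and that the composition estimate closes uniformly. This is delicate at both ends of the regularity hierarchy: for the high-regularity components one has no margin beyond $H^6$ for the Jacobian, while for the low-regularity components like $\Du\omb, \ab \in H^2$ one must use that $H^6 \hookrightarrow L^\infty$ with controlled enough derivatives so that multiplication and composition keep the image in $H^2$.
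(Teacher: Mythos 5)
Your proposal mirrors the paper's own proof in Appendix \ref{SECproofTEClemmasmoothness}: write out the tensorial pullback formulas for the pure angular change $\Phi=\mathrm{id}+j$, use Sobolev product and composition estimates to establish well-definedness (with the $H^7$ regularity of $(j^1,j^2)$ forced by the Jacobian factors multiplying the $H^6$ tensor components of $\XX(S_{0,2})$), assert smoothness from the explicit formulas, and obtain the quantitative bound by a first-order calculus estimate around the reference point. Your three-term decomposition for the final estimate is a harmless variant of the paper's single application of Lemma \ref{LEMoperatorEstimates} to the map $(x_{0,2},(j^1,j^2))\mapsto\PP_{(j^1,j^2)}(x_{0,2})-x_{0,2}$, which vanishes at $(\mathfrak{m}_{0,2},(0,0))$.

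One cautionary remark on the smoothness step, where you are more explicit than the paper: the claim that the $H^7$-to-$H^6$ gap on the $j$ side ``compensates'' the Omega-lemma loss is not quite right. The Gateaux derivative in $j$ of $T\circ\Phi$ is $(\pr T)\circ\Phi\cdot\dot j$; for $T\in H^6(\SSS^2)$ one only has $\pr T\in H^5$, and since $H^5\cdot H^7\hookrightarrow H^5$ (not $H^6$) this derivative lands in $H^5$ regardless of how much extra regularity $\dot j$ has. The one-lower-regularity loss in the Omega-lemma sits on the outer function $T$, not on the inner diffeomorphism, so it is not absorbed by taking $j\in H^7$ rather than $H^6$. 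The paper's own one-line assertion ``smoothness follows from the explicit formulas'' glosses over the same subtlety, so your proposal does not introduce a gap that the paper avoids; but the specific Omega-lemma bookkeeping you give does not close as stated.
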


\subsubsection{Linearization of $\PP_f$ and $\PP_{(j^1,j^2)}$ at Minkowski} \label{SECPerturbationslinearizesd222111} 

In this section we state the linearization of $\PP_f$ and $\PP_{(j^1,j^2)}$ at Minkowski. For a proof we refer to Lemmas \ref{LEMlinearizedTransversalSCHWARZSCHILD} and \ref{LEMspherediffLINSCHWARZSCHILD} in Appendix \ref{SEClinearizedCHARGEequationsSSAPP} where, more generally, their linearization at Schwarzschild of mass $M\geq0$ is calculated. The linearizations are, by construction, closely related to the \emph{linearized pure gauge solutions} of \cite{DHR}. 

First, we have the following lemma for $\PP_f$.
\begin{lemma}[Linearization of $\PP_f$] \label{LEMlinearizedTransversal} 

Let $\dot{\PP}_{f}$ denote the linearization of $\PP_{f}$ in $f$ at $f=0$ and Minkowski. For a given linearized perturbation function $\dot f$,
\begin{align*} 
\begin{aligned} 
\dot f := \lrpar{\dot{f}(0,\th^1,\th^2), \pr_u \dot{f}(0,\th^1,\th^2), \,\, \pr_u^2 \dot{f}(0,\th^1,\th^2), \,\, \pr_u^3 \dot{f}(0,\th^1,\th^2) },
\end{aligned} 
\end{align*}
the non-trivial components of $\dot{\PP}_{f}\lrpar{\dot{f}}$ are given by
\begin{align*} \begin{aligned} 
\Omd =& \frac{1}{2} \pr_{u} \lrpar{\dot{f}}, & \phid=&- \dot{f}, & \etad=&  r \di \lrpar{\pr_u \lrpar{\frac{f}{r}}},\\
\chihd =& - 2 \DDd_2^\ast \di \dot{f}, & \omtrchibd =& -2\pr_u \lrpar{\frac{f}{r}}, & \omtrchid=& \frac{2}{r^2} \lrpar{\Ldo+1} \dot{f},
\end{aligned} \end{align*}
and
\begin{align*} 
\begin{aligned} 
\ombd =\pr_u\lrpar{\frac{1}{2} \pr_{u}\dot{f} }, \,\,
\Du\ombd = \pr_u^2 \lrpar{\frac{1}{2} \pr_{u} \lrpar{\dot{f}}},
\end{aligned} 
\end{align*}
where we tacitly evaluated at $u=0$.
\end{lemma}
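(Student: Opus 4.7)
The plan is to linearize the explicit nonlinear formulas for $\PP_f(\tilde{\underline{x}})$ derived in Appendix \ref{SECproofTEClemmasmoothness} at $\tilde{\underline{x}} = \underline{\mathfrak{m}}$ and $f = 0$. Recall that $\PP_f$ is built from the coordinate change $\tilde u = u + f(u, \th)$, $\tilde v = v$, $\tilde\th^A = \th^A$ on $\tilde{\HHb}_2$, so that $S_{0,2} = \{\tilde u = f(0,\th)\}$ and the new double null coordinates on $\MM$ are obtained by pulling back $\tilde g$ and imposing $v = \tilde v$ globally together with $u$ being optical; this last requirement forces an $O(f^2)$-correction to the naive coordinate change off $\tilde{\HHb}_2$, which does not affect the linearization at $f = 0$.

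Pulling back the Minkowski metric and matching to the double-null form yields, modulo $O(f^2)$, the identifications $\Om^2 = 1 + \pr_u f$, $\phi = v - u - f$, and $\gd_{AB} b^B = 2 \pr_A f$. The first two immediately give $\Omd = \tfrac12 \pr_u \dot f$ and $\phid = -\dot f$. Since on $\tilde{\HHb}_2$ the ingoing null vector is simply $\Lb = \pr_u$ (no shift), the identity $\Om\trchib = 2\Du\log\phi$ is exact and linearizes to $\omtrchibd = -2\pr_u(\dot f/r)$; iterating $\omb = \Du\log\Om$ with $\Om = (1+\pr_u f)^{1/2}$ produces $\ombd = \pr_u(\tfrac12 \pr_u\dot f)$ and $\Du\ombd = \pr_u^2(\tfrac12 \pr_u\dot f)$, all evaluated at $u = 0$.

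For $\omtrchid$ and $\chihd$, I will use $\chi_{AB} = \tfrac12 \Om^{-1}(\Lied_L\gd)_{AB}$ with $L = \pr_v + b$: the trace combines $\pr_v \log\phi$ with $\Divd b = (2/r^2)\Ldo \dot f$ at linear order to give $\omtrchid = (2/r^2)(\Ldo+1)\dot f$, while the tracefree part of the shift contribution $\gd_{CA}\pr_B b^C + \gd_{CB}\pr_A b^C$ reduces, via the definition of $\DDd_2^\ast$ from Appendix \ref{SECellEstimatesSpheres}, to $-2\DDd_2^\ast\di\dot f$. Finally, $\etad$ is extracted from the linearized Gauss--Codazzi relation $\Divd\chih - \tfrac12 \di\trchi - \tfrac12\trchi\,\zeta = -\be$: since $\be \equiv 0$ at Minkowski, this relation is algebraic in $\dot\zeta$, and substituting the formulas just obtained for $\chihd$ and $\omtrchid$ together with standard commutator identities on the unit sphere yields $\dot\zeta_A = (1/r)\pr_A\dot f + \tfrac12 \pr_A\pr_u\dot f$, whence $\etad = \dot\zeta + \di\Omd = r\,\di(\pr_u(\dot f/r))$.

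The main technical subtlety is the $O(f^2)$ correction needed to restore the optical character of the new $u$ coordinate off $\tilde{\HHb}_2$; this is constructed and estimated rigorously in Appendix \ref{SECproofTEClemmasmoothness} alongside the full nonlinear expressions for $\PP_f$, from which the lemma follows by extracting the $O(\varep)$ coefficients in $\dot f$.
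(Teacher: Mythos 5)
Your proposal takes the ``direct'' route explicitly mentioned in the paper's proof of Lemma~\ref{LEMlinearizedTransversalSCHWARZSCHILD}: it linearizes the nonlinear formulas of Appendix~\ref{SECproofTEClemmasmoothness} by hand, whereas the paper itself opts for a shortcut, observing that the coordinate change $\tilde u = u + f$, $\tilde v = v$, $\tilde\th = \th$ is double null to first order in $f$ (cf. (173) in~\cite{DHR}) and then citing Lemma~6.1.1 of~\cite{DHR} for the resulting linearized pure-gauge solution. Both routes hinge on the same gauge observation, which you correctly flag (``this last requirement forces an $O(f^2)$-correction to the naive coordinate change off $\tilde{\HHb}_2$, which does not affect the linearization''). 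Your derivations of $\Omd$, $\phid$, $\omtrchibd$, $\ombd$, $\Du\ombd$ read off directly from~\eqref{EQexpressionOMEGASQUARED123}, \eqref{EQinducedmetricFormula15}, \eqref{EQformulatrchichih555}, \eqref{EQombExpresssion4}, \eqref{EQformulaDUOMB55}, and your treatment of $\omtrchid$ and $\chihd$ via $\chi = \tfrac12\Om^{-1}\Lied_L\gd$ with the linearized shift $\gd_{AB}b^B = 2\pr_A f$ reproduces what one gets from~\eqref{EQchiFINAL555expr} directly. Your extraction of $\etad$ from the linearized Gauss--Codazzi relation is a genuine detour from the paper (which would linearize~\eqref{EQzetavariationf23332} directly); it works, but it silently relies on $\dot\be = 0$, which holds because the perturbation is a diffeomorphism of flat space (so all null curvature components remain zero to all orders), a point you gesture at with ``$\be\equiv 0$ at Minkowski'' but should make explicit. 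Two small omissions worth noting: you do not verify that the components not listed in the lemma --- $\gdcd$, $\omd$, $\dot{D\om}$, $\ad$, $\abd$ --- linearize to zero, which is part of the statement (``the non-trivial components''); and the expression $\gd_{CA}\pr_B b^C + \gd_{CB}\pr_A b^C$ you write for the shift contribution to $\Lied_b\gd$ is missing the $b^C\pr_C\gd_{AB}$ term, though your end result $-2\DDd_2^\ast\di\dot f$ for the tracefree part is nonetheless correct.
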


\ni Second, we have the following lemma for $\PP_{(j^1,j^2)}$. It is a corollary of Lemma 6.1.3 in \cite{DHR}, where we note that the proof in \cite{DHR} at Schwarzschild (see also Lemma \ref{LEMspherediffLINSCHWARZSCHILD} in Appendix \ref{SEClinearizedCHARGEequationsSSAPP}) also goes through at Minkowski, and our notation connects to \cite{DHR} as follows,
\begin{align*} 
\begin{aligned} 
\widehat{\dot{\gd}} = r_M^2 \gdcd, \,\, \frac{\dot{\sqrt{\det\gd}}}{\sqrt{\det\gd}} = \frac{2\phid}{r_M}.
\end{aligned} 
\end{align*}
\begin{lemma}[Linearized angular perturbations] \label{LEMspherediffLIN} Let $\dot{\PP}_{(j_1,j_2)}$ denote the linearization of $\PP_{(j_1,j_2)}$ in $(j_1,j_2)$ at $(j_1,j_2)=(0,0)$ and Minkowski. The nontrivial components of $\dot{\PP}_{(j_1,j_2)} (\dot{j}_1,\dot{j}_2)$ are given by
\begin{align*}
\phid= \frac{r}{2} \Ldo \dot{q}_1, \,\, \gdcd = 2 \DDd_2^\ast \DDd_1^\ast (\dot{q}_1,\dot{q}_2),
\end{align*}
where the scalar functions $q_1$ and $q_2$ on $S_2$ are related to $\dot{j}^1$ and $\dot{j}^2$ by
\begin{align*} 
\begin{aligned} 
\dot{j}^1 d\th^1 + \dot{j}^2 d\th^2 = - r^2 \DDd_1^\ast(\dot{q}_1,\dot{q}_2).
\end{aligned} 
\end{align*}
\end{lemma}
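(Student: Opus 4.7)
The plan is to compute the linearization $\dot{\PP}_{(j^1,j^2)}$ directly from the transformation law for sphere data under the coordinate change $\tth^A = \th^A + j^A$, and then rephrase the answer using the Hodge decomposition of 1-forms on the sphere.

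First, I would observe that no component of the Minkowski sphere data $\mathfrak{m}$ admits a nontrivial linearization except the induced metric $\gd$. Indeed, from \eqref{EQspheredataMINKOWKSI001}, the scalar components ($\Om$, $\Om\trchi$, $\Om\trchib$, $\om$, $D\om$, $\omb$, $\Du\omb$) are all constants on $S_{u,v}$, so their pullbacks under a diffeomorphism agree with themselves to first order; and the tensorial components ($\chih$, $\chibh$, $\eta$, $\a$, $\ab$) all vanish, so their linearizations equal the Lie derivative of the zero tensor, which is zero. Hence only $\phid$ and $\gdcd$ receive a nontrivial contribution, coming from the variation of $\gd$.

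Second, I would linearize the pullback of the induced metric. Since in the new coordinates the tensorial transformation law reads
\begin{align*}
\gd_{AB}(\th) = \tilde\gd_{CD}(\th + j)\, (\delta_A^C + \pr_A j^C)\, (\delta_B^D + \pr_B j^D),
\end{align*}
and $\tilde\gd = r^2 \gac$ at Minkowski, the first variation at $j = 0$ is $\dot\gd = r^2\, \mathcal{L}_{\dot j}\gac$, where $\mathcal{L}$ denotes the Lie derivative. Substituting into the linearized decomposition $\dot\gd = 2 r\, \phid\, \gac + r^2 \gdcd$ from \eqref{EQssLinearizationRelations7778} and using the standard identity $\tr_{\gac} \mathcal{L}_X \gac = 2\, \Divdo X$ yields $\phid = \tfrac{r}{2}\, \Divdo \dot j$ from the trace, while the tracefree part gives $\gdcd = 2\, \DDd_2^{\ast} \dot j^\flat$, where $\dot j^\flat$ denotes the dual 1-form, because $\DDd_2^{\ast}$ is by construction the tracefree symmetrized covariant derivative on 1-forms.

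Third, I would invoke the Hodge decomposition on the sphere. The lemma parametrizes the 1-form $\dot j^\flat$ by the two scalar potentials $(\dot q_1, \dot q_2)$ via $\dot j^1 d\th^1 + \dot j^2 d\th^2 = -r^2 \DDd_1^{\ast}(\dot q_1, \dot q_2)$. Substituting into the two previous expressions and using the standard operator identity $\Divdo \DDd_1^{\ast}(q_1, q_2) = -\Ldo q_1$ on the sphere (recalled in Appendix \ref{SECellEstimatesSpheres}) delivers $\phid = \tfrac{r}{2}\,\Ldo \dot q_1$ and, by direct substitution into the tracefree part, $\gdcd = 2\, \DDd_2^{\ast}\DDd_1^{\ast}(\dot q_1, \dot q_2)$, which is the claim.

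The main technical obstacle in this computation is the bookkeeping of the factors of $r$ and the distinction between operators, dualities and Hodge decompositions taken with respect to the unit round metric $\gac$ versus the scaled round metric $\ga = r^2 \gac$; these must be organized so that all powers of $r$ cancel consistently in the final formulas. Once this is carried out, the argument reduces to a direct specialization of Lemma 6.1.3 of \cite{DHR} to the case $M = 0$, which is why the authors present the result as a corollary of that lemma (and of its Schwarzschild counterpart Lemma \ref{LEMspherediffLINSCHWARZSCHILD} in Appendix \ref{SEClinearizedCHARGEequationsSSAPP}).
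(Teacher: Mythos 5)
Your proposal takes a genuinely different, self-contained route: rather than invoking Lemma~6.1.3 of \cite{DHR} (as the paper does, simply translating its notation via $\widehat{\dot{\gd}}=r^2\gdcd$, $\dot{\sqrt{\det\gd}}/\sqrt{\det\gd}=2\phid/r$ and setting $M=0$), you linearize the pullback directly and identify it as a Lie derivative. That is a good approach: the observation that all other components are either constant scalars or vanishing tensors and so have trivial linearization is correct, and the reduction $\dot\gd = r^2\,\mathcal{L}_{\dot j}\gac$ is right. The payoff of this route is that it makes explicit \emph{why} only $\phid$ and $\gdcd$ survive, which the citation-based argument hides.

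However, the intermediate step $\gdcd = 2\,\DDd_2^\ast \dot j^\flat$ has a sign error, and the claim that the final answer follows ``by direct substitution'' does not go through with the formulas as written. With the paper's convention
\begin{align*}
\DDd_2^\ast(X)_{AB} = -\tfrac{1}{2}\bigl(\Nd_A X_B + \Nd_B X_A - (\Divd X)\gd_{AB}\bigr),
\end{align*}
the tracefree part of $\mathcal{L}_{\dot j}\gac$ equals $\Nd\widehat{\otimes}\dot j^\flat_{\gac} = -2\,\DDd_2^\ast(\dot j^\flat_{\gac})$, so the correct intermediate formula is $\gdcd = -2\,\DDd_2^\ast(\dot j^\flat_{\gac})$. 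Moreover, the $r^2$ in the relation $\dot j^1 d\th^1 + \dot j^2 d\th^2 = -r^2\DDd_1^\ast(\dot q_1,\dot q_2)$ signals that the left side is the $\gd$-lowering of $\dot j$ (i.e.\ $\gd(\dot j,\cdot)=r^2\gac(\dot j,\cdot)$), so that $\dot j^\flat_{\gac} = -\DDd_1^\ast(\dot q_1,\dot q_2)$. Only then does the composition
\begin{align*}
\gdcd = -2\,\DDd_2^\ast\bigl(\dot j^\flat_{\gac}\bigr) = -2\,\DDd_2^\ast\bigl(-\DDd_1^\ast(\dot q_1,\dot q_2)\bigr) = 2\,\DDd_2^\ast\DDd_1^\ast(\dot q_1,\dot q_2)
\end{align*}
produce the claimed result, and similarly $\phid = \tfrac{r}{2}\Divdo\dot j = \tfrac{r}{2}\Divdo\bigl(\di\dot q_1 - {}^\ast\di\dot q_2\bigr)^\sharp = \tfrac{r}{2}\Ldo\dot q_1$. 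Plugging your stated formula $\gdcd=2\DDd_2^\ast\dot j^\flat$ into $\dot j^\flat = -r^2\DDd_1^\ast$ as written would instead give $-2r^2\,\DDd_2^\ast\DDd_1^\ast$, off by a sign and by $r^2$. You flag the $r$-bookkeeping in your closing paragraph, but the sign of $\DDd_2^\ast$ (the factor $-\tfrac{1}{2}$ in its definition, not just ``tracefree symmetrized covariant derivative'') and the $\gd$-versus-$\gac$ lowering need to be tracked explicitly for the substitution to close.
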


\ni From Lemmas \ref{LEMlinearizedTransversal} and \ref{LEMspherediffLIN} we directly conclude the following.
\begin{lemma}[Boundedness of linearized perturbations of sphere data] \label{LEMestimatesSpherePerturbations} For real numbers $M\geq0$ sufficiently small, it holds that the linearizations $\dot{\PP}^M_f$ and $\dot{\PP}^M_{(j^1,j^2)}$ are bounded,
\begin{align*} 
\begin{aligned} 
\Vert \dot{\PP}^M_f(\dot f) \Vert_{\XX(S_{0,2})} \les \Vert \dot f \Vert_{\YY_f}, \,\, \Vert \dot{\PP}^M_{(j^1,j^2)}(\dot{j}^1,\dot{j}^2) \Vert_{\XX(S_{0,2})} \les \Vert (\dot{j}^1,\dot{j}^2) \Vert_{\YY_{({j}^1,{j}^2)}}.
\end{aligned} 
\end{align*}
\end{lemma}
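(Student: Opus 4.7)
The plan is that this lemma is essentially a bookkeeping exercise: given the explicit formulas for the linearizations stated in Lemmas \ref{LEMlinearizedTransversal} and \ref{LEMspherediffLIN} (and their $M$-dependent versions from Appendix \ref{SEClinearizedCHARGEequationsSSAPP}), I verify component-wise that each piece of $\dot{\PP}^M_f(\dot f)$ and $\dot{\PP}^M_{(j^1,j^2)}(\dot j^1,\dot j^2)$ sits in exactly the Sobolev space required by the norm $\XX(S_{0,2})$ of Definition \ref{DEFnormFirstOrderDATA}. The $\YY_f$-norm from Definition \ref{DEFnormfperturbations} and $\YY_{(j^1,j^2)}$-norm from Definition \ref{DEFnormj1j2} were precisely calibrated to close this matching, so the proof is mechanical and the only non-trivial ingredients are Sobolev multiplication on $S_{0,2}$ and elliptic Hodge estimates.

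For $\dot{\PP}^M_f$, I would go through the nontrivial components one by one. The function $\Omd = \tfrac12 \pr_u \dot f$ is controlled in $H^6(S_{0,2})$ directly by $\|\pr_u f(0)\|_{H^6} \le \|\dot f\|_{\YY_f}$. The scalar $\phid = -\dot f$ lies in $H^8$, dominating the $H^6$ regularity required from it as an ingredient of $\gd$. The vectorfield $\etad = r\,\di\,\pr_u(\dot f/r)$ is one angular derivative of $\pr_u\dot f \in H^6$, which gives the $H^5$ bound needed for $\eta$. The symmetric traceless $2$-tensor $\chihd = -2\DDd_2^\ast \di\, \dot f$ and the scalar $\omtrchid = \tfrac{2}{r^2}(\Ldo+1)\dot f$ involve two angular derivatives of $\dot f$, controlled by $\|\dot f(0)\|_{H^8}$ and matching the required $H^6$ regularity. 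The scalars $\omtrchibd$, $\ombd$ and $\Du\ombd$ involve $\pr_u$, $\pr_u^2$ and $\pr_u^3$ of $\dot f$ respectively, and are matched in $H^4$, $H^4$ and $H^2$ by the corresponding terms in $\YY_f$. At Schwarzschild for small $M$, additional lower-order terms proportional to $M$ arise from the $M$-dependent coefficients in the perturbation formulas of Lemma \ref{LEMlinearizedTransversalSCHWARZSCHILD}; these are controlled by the same $\YY_f$-norm uniformly in $M$ for $|M|$ sufficiently small.

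For $\dot{\PP}^M_{(j^1,j^2)}$, the nontrivial components are $\phid = \tfrac{r}{2}\Ldo\, \dot q_1$ and $\gdcd = 2\DDd_2^\ast \DDd_1^\ast(\dot q_1,\dot q_2)$, where $(\dot q_1,\dot q_2)$ is obtained from the $1$-form $\dot j$ by inverting $-r^2\DDd_1^\ast$. By the standard elliptic theory of $\DDd_1^\ast$ on $S^2$ (see Appendix \ref{SECellEstimatesSpheres}), which is an elliptic Hodge operator modulo the finite-dimensional kernel consisting of $l=0$ modes, the components of $(\dot q_1,\dot q_2)$ orthogonal to this kernel gain one full derivative: $\dot j \in H^7$ yields $(\dot q_1,\dot q_2) \in H^8$. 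The formulas then lose two derivatives, producing $\phid \in H^6$ and $\gdcd \in H^6$, exactly matching the $\XX(S_{0,2})$ regularity of $\gd$. The only mild subtlety is that the kernel of $\DDd_1^\ast$ must be eliminated by a normalisation on $(\dot q_1,\dot q_2)$, which is standard and harmless at the level of the perturbed sphere data since the kernel corresponds to constant $\dot q_i$ which act trivially through $\Ldo$ and $\DDd_2^\ast \DDd_1^\ast$.

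Assembling the component estimates and using the triangle inequality yields the two claimed inequalities. There is no real obstacle, the only thing to watch is that all higher-regularity components of $\XX(S_{0,2})$ in the hierarchy of Definition \ref{DEFnormFirstOrderDATA} (those requiring $H^6$ rather than $H^2$) come from terms where the right-hand side involves at most two angular derivatives and at most one $u$-derivative of $\dot f$ (respectively, at most three angular derivatives of $\dot q$), so the matching between the regularity hierarchies of $\XX$, $\YY_f$ and $\YY_{(j^1,j^2)}$ holds term by term. This verification completes the proof.
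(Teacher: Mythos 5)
Your proposal is correct and follows the same route the paper takes: the paper introduces this lemma with the words ``From Lemmas~\ref{LEMlinearizedTransversal} and~\ref{LEMspherediffLIN} we directly conclude the following,'' and your component-by-component verification of the Sobolev regularity budget (matching each entry of Definition~\ref{DEFnormFirstOrderDATA} against Definitions~\ref{DEFnormfperturbations} and~\ref{DEFnormj1j2}, plus the elliptic gain from inverting $\DDd_1^\ast$ modulo its constant kernel) is exactly the bookkeeping the paper leaves implicit.
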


\subsection{Implicit function theorem} \label{SECiftEstimates} The proof of the main result of this paper is based on the standard implicit function theorem. In the following we recall its statement and provide further estimates which are applied in Section \ref{SECconclusion}.

In the following, for Hilbert spaces $X$ and $Z$ and integers $r\geq0$, let $C^r(X;Z)$ denote the space of $r$-times continuously differentiable maps from $X$ to $Z$. Denote the standard norm of this space by $\Vert \cdot \Vert_{C^r(X;Z)}$.

The standard implicit function theorem is as follows, see, for example, Theorem 2.5.7 in \cite{MarsdenImplicit}.
\begin{theorem}[Implicit function theorem] \label{thm:InverseMars14}
Let $X,Y$ and $Z$ be Hilbert spaces. Let $U \subset X$ and $V \subset Y$ be open subsets and let $\FF: U \times V \to Z$ be a $C^r$-mapping for some integer $r\geq1$. Assume that for some $x_0 \in U$ and $y_0 \in V$, the linearization $$D_x\FF\vert_{(x_0,y_0)}: X \to Z$$ is an isomorphism. Then there exist open neighbourhoods $V_0 \subset V$ of $y_0$ and $W_0 \subset Z$ of $\FF(x_0,y_0)$ as well as a unique $C^r$-mapping $\GG: V_0 \times W_0 \to U$ such that for $(y,z) \in V_0 \times W_0$,
\begin{align*}
\FF(\GG(y,z),y)=z.
\end{align*}
\end{theorem}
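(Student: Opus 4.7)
The plan is to prove Theorem \ref{thm:InverseMars14} by recasting the equation $\FF(x,y) = z$ as a fixed-point problem for a family of contractions and applying the Banach fixed-point theorem. After translating so that $(x_0,y_0) = (0,0)$ and $\FF(0,0) = 0$, I would set $T := D_x\FF|_{(0,0)}$, which by assumption is a bounded linear isomorphism $X \to Z$; its inverse $T^{-1}: Z \to X$ is automatically bounded by the open mapping theorem.

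For each pair $(y,z)$ in a suitable neighborhood of $(0,0)$, I would introduce the auxiliary map
$$\Phi_{y,z}: X \to X, \qquad \Phi_{y,z}(x) := x - T^{-1}\bigl(\FF(x,y) - z\bigr),$$
whose fixed points are exactly the solutions of $\FF(x,y) = z$. Its $x$-derivative
$$D_x \Phi_{y,z}\big|_x \;=\; I - T^{-1} \circ D_x\FF|_{(x,y)}$$
vanishes at $(x,y)=(0,0)$ and, by continuity of $D_x\FF$, stays bounded by $\frac{1}{2}$ in operator norm on some closed ball $\overline{B_\rho(0)} \subset X$, provided $y$ ranges over a neighborhood $V_0 \subset Y$ of $0$. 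For $\|y\|$ and $\|z\|$ small enough one further has $\|\Phi_{y,z}(0)\| \leq \rho/2$, so $\Phi_{y,z}$ sends $\overline{B_\rho(0)}$ into itself and is a $\frac{1}{2}$-contraction. The Banach fixed-point theorem then produces a unique $\GG(y,z) \in \overline{B_\rho(0)}$ with $\Phi_{y,z}(\GG(y,z)) = \GG(y,z)$, equivalently $\FF(\GG(y,z),y) = z$.

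The substantive part of the argument is upgrading $\GG$ from a well-defined map to a $C^r$-map. Continuity follows directly from the uniform contraction estimate. For $C^1$-regularity I would differentiate the identity $\FF(\GG(y,z),y) = z$ formally to obtain the candidate derivative
$$D\GG(y,z) \;=\; -\bigl(D_x\FF|_{(\GG(y,z),y)}\bigr)^{-1} \circ \bigl(D_y \FF|_{(\GG(y,z),y)}, \; -\mathrm{id}_Z\bigr),$$
and then verify via a second-order Taylor expansion of $\FF$ that this candidate is genuinely the Fréchet derivative of $\GG$, not merely a Gateaux one. Higher regularity $C^r$ then follows by induction, since this formula expresses $D\GG$ as a composition of $\GG$ itself, the derivative $D\FF$, and the operator-inversion map $A \mapsto A^{-1}$, all of which preserve $C^{r-1}$-regularity.

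I expect the main obstacle to be precisely this regularity bootstrap. One must show that operator inversion is smooth in a neighborhood of $T$ inside the Banach algebra $\mathcal{L}(X,Z)$ (a consequence of the convergent Neumann series $(T+H)^{-1} = \sum_{k\geq 0} (-T^{-1}H)^k T^{-1}$ for small $H$), and at each induction step one must carefully track that the compositions appearing in $D\GG$ remain Fréchet differentiable in the relevant product norms. The contraction step itself is essentially formal; it is this propagation of regularity, together with the uniqueness of $\GG$ as a function and not merely as a set-valued map, that requires the careful quantitative setup outlined above.
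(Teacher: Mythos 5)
The paper does not prove this theorem; it states it and simply cites Theorem~2.5.7 of \cite{MarsdenImplicit} for the standard result. Your proposal is a correct and complete reconstruction of the usual contraction-mapping proof (exactly the argument given in that reference): translate to the origin, invert $T = D_x\FF|_{(0,0)}$, run Banach's fixed-point theorem for the family $\Phi_{y,z}(x) = x - T^{-1}(\FF(x,y)-z)$ on a small ball, and bootstrap regularity of $\GG$ via the Neumann-series smoothness of operator inversion together with the formula $D\GG = -(D_x\FF)^{-1}\circ(D_y\FF,-\mathrm{id}_Z)$. The one point worth double-checking in the regularity step is that verifying the candidate is a Fréchet (not merely Gateaux) derivative requires only the $C^1$ assumption on $\FF$ and the uniform contraction bound, not a genuine second-order Taylor expansion; but this is a cosmetic issue and the overall argument is sound.
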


\ni We further state the following standard calculus estimate.
\begin{lemma}[Calculus estimate] \label{LEMoperatorEstimates} Let $X,Y$ and $Z$ be Hilbert spaces. Let $U \subset X$ and $V \subset Y$ be open subsets around $x_0 \in X$ and $y_0 \in Y$, respectively. Let $\FF: U \times V \to Z$ be a $C^r$-mapping for an integer $r\geq1$. Then
\begin{align*} 
\begin{aligned} 
\Vert \FF(x,y) - \FF(x_0,y_0) \Vert_{Z} \les \Vert x-x_0 \Vert_X+ \Vert y-y_0 \Vert_Y,
\end{aligned} 
\end{align*}
where the constant depends on $\FF$.
\end{lemma}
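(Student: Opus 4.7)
The plan is to apply the fundamental theorem of calculus in Banach spaces along the straight-line path connecting $(x_0,y_0)$ to $(x,y)$, and then bound the derivative uniformly on a convex neighbourhood by exploiting the continuity guaranteed by the $C^r$-hypothesis with $r\geq 1$.

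First I would fix convex open balls $U_0\subset U$ and $V_0\subset V$ centred at $x_0$ and $y_0$, respectively, of some radius $\rho>0$ small enough that $\overline{U_0}\times\overline{V_0}$ is contained in $U\times V$. Since $\FF\in C^1(U\times V;Z)$, the Fr\'echet derivative $D\FF$ is continuous on $U\times V$, hence bounded on the compact-in-each-variable convex set $\overline{U_0}\times\overline{V_0}$; call this bound $C=C(\FF,x_0,y_0,\rho)>0$. Then for any $(x,y)\in U_0\times V_0$, convexity ensures that the segment $\gamma(t):=(x_0+t(x-x_0),\,y_0+t(y-y_0))$, $t\in[0,1]$, stays in $U_0\times V_0$.

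Next I would apply the fundamental theorem of calculus along $\gamma$:
\begin{align*}
\FF(x,y)-\FF(x_0,y_0) = \int_0^1 D\FF\big|_{\gamma(t)}\!\cdot(x-x_0,\,y-y_0)\, dt,
\end{align*}
which is justified because $\FF$ is $C^1$ and the integrand is continuous in $t$ with values in $Z$. Taking norms and using the uniform bound on $\|D\FF\|$ along $\gamma$,
\begin{align*}
\|\FF(x,y)-\FF(x_0,y_0)\|_Z \leq C\,\|(x-x_0,\,y-y_0)\|_{X\times Y} \lesssim \|x-x_0\|_X + \|y-y_0\|_Y,
\end{align*}
where the final step uses the equivalence of the product norm on $X\times Y$ with the sum of norms.

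For $(x,y)\notin U_0\times V_0$, either $\|x-x_0\|_X\geq\rho$ or $\|y-y_0\|_Y\geq\rho$, and the estimate is then vacuous after absorbing a factor of $\rho^{-1}\sup_{U\times V}\|\FF\|$ into the implicit constant if one wants the bound globally on $U\times V$; however, since the lemma is only used locally near reference values (as in the implicit function theorem setup of Section \ref{SECiftEstimates}), it suffices to state the estimate on the neighbourhood $U_0\times V_0$. I expect no real obstacle here: the only mild subtlety is ensuring the segment $\gamma$ stays in the domain of definition, which is handled by passing to a convex subneighbourhood, and tracking that the implicit constant depends on $\FF$ (through $\sup_{U_0\times V_0}\|D\FF\|$) as stated.
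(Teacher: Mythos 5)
Your proof is correct and takes essentially the same approach as the paper: the fundamental theorem of calculus in Banach spaces combined with a uniform bound on the derivative, which the $C^1$-hypothesis supplies. The only cosmetic difference is that you integrate $D\FF$ along the diagonal segment in $X\times Y$ while the paper splits the increment into an $x$-leg and a $y$-leg using the partial derivatives $D_x\FF$ and $D_y\FF$ separately; you are in fact a bit more careful than the paper about passing to a convex subneighbourhood so that the integration path stays inside the domain, a point the paper leaves implicit.
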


\begin{proof}[Proof of Lemma \ref{LEMoperatorEstimates}] By the fundamental theorem of calculus,
\begin{align*} 
\begin{aligned} 
\FF(x,y) - \FF(x_0,y_0) =& \lrpar{\FF(x,y)- \FF(x_0,y)} -\lrpar{ \FF(x_0,y_0)-\FF(x_0,y) } \\
=& \int\limits_0^1 D_x \FF\vert_{(xt+(1-t)x_0,y)} \lrpar{x-x_0} dt + \int\limits_0^1 D_y \FF\vert_{(x_0,yt+(1-t)y_0)} \lrpar{y-y_0} dt.
\end{aligned} 
\end{align*}
Taking the norm of the above shows that
\begin{align*} 
\begin{aligned} 
\Vert \FF(x,y) - \FF(x_0,y_0) \Vert_Z \les& \Vert D_x \FF \Vert_{C^0(U\times V;Z)} \Vert x-x_0 \Vert_X + \Vert D_y \FF \Vert_{C^0(U\times V; Z)} \Vert y-y_0 \Vert_Y \\
\les& \Vert x-x_0 \Vert_X + \Vert y-y_0 \Vert_Y.
\end{aligned} 
\end{align*}
This finishes the proof of Lemma \ref{LEMoperatorEstimates}. \end{proof}

\subsection{Notation for characteristic gluing of higher-order derivatives} \label{SECnotationHIGHER} As remarked in the introduction, our main characteristic gluing result, Theorem \ref{PROPNLgluingOrthA121}, can be generalized to glue higher-order derivatives, see Theorems \ref{THMHIGHERorderLgluingMAINTHEOREM} and \ref{THMtransversalHIGHERv2} in Section \ref{SECpreciseStatementMainTheorem}. In this section we introduce the necessary notation to precisely state these results.

First, we define higher-order sphere data (for $C^{m+2}$-gluing of metric components) as generalization of Definition \ref{DEFspheredata2}.
\begin{definition}[Higher-order sphere data] \label{DEFhigherORDERsphereDATA7778} Let $m\geq1$ be an integer, and let $x_{u,v}$ be $C^2$-sphere data on a $2$-sphere $S_{u,v}$. We define 
\begin{itemize}
\item \emph{higher-order $L$-derivatives sphere data of order $m\geq1$} to be the pair
\begin{align*} 
\begin{aligned} 
(x_{u,v}, \DD^{L,m}_{u,v}),
\end{aligned} 
\end{align*}
where $\DD^{L,m}_{u,v}$ is the following tuple of tensors,
\begin{align} 
\begin{aligned} 
\DD^{L,m}_{u,v} = \lrpar{\widehat{D}\a, \dots, \widehat{D}^m\a,D^2\om, \dots, D^{m+1}\om},
\end{aligned} \label{EQdefDDLtuple}
\end{align}
where $\widehat{D}^j\a$, $1\leq j \leq m$, are $\gd$-tracefree symmetric $2$-tensors on $S_{u,v}$ and $D^j\om$, $2\leq j \leq m+1$, are scalar functions on $S_{u,v}$.

\item \emph{higher-order $\Lb$-derivatives sphere data of order $m\geq1$} to be the pair
\begin{align*} 
\begin{aligned} 
(x_{u,v}, \DD^{\Lb,m}_{u,v}),
\end{aligned} 
\end{align*}
where $\DD^{\Lb,m}_{u,v}$ is a tuple of tensors,
\begin{align} 
\begin{aligned} 
\DD^{\Lb,m}_{u,v} = \lrpar{\widehat{\Du}\ab, \dots, \widehat{\Du}^m\ab,\Du^2\omb, \dots, \Du^{m+1}\omb},
\end{aligned} \label{EQdefDDLbtuple}
\end{align}
where $\widehat{\Du}^j\ab$, $1\leq j \leq m$, are $\gd$-tracefree symmetric $2$-tensors on $S_{u,v}$ and $\Du^j\omb$, $2\leq j \leq m+1$, are scalar functions on $S_{u,v}$.

\item \emph{higher-order sphere data of order $m\geq1$} to be the triplet
\begin{align*} 
\begin{aligned} 
(x_{u,v}, \DD^{L,m}_{u,v}, \DD^{\Lb,m}_{u,v}),
\end{aligned} 
\end{align*}
where $\DD^{L,m}_{u,v}$ and $\DD^{\Lb,m}_{u,v}$ are tuples as in \eqref{EQdefDDLtuple} and \eqref{EQdefDDLbtuple}.

\end{itemize}
\end{definition}

\begin{remark} By the Einstein vacuum equations (see also the \emph{higher-order null structure equations} below) higher-order $L$-derivatives ($\Lb$-derivatives) sphere data determines higher-order $L$-derivatives ($\Lb$-derivatives) of the metric components, Ricci coefficients and null curvature components on the sphere. In this sense, higher-order sphere data as defined in Definition \ref{DEFhigherORDERsphereDATA7778} is appropriate for the higher-order characteristic gluing problem (at the level of $C^{m+2}$ for metric components).
\end{remark}

\ni Consider outgoing null data $x$ on $\HH_{0,[1,2]}$ (see Definition \ref{DEFnulldata111}). By applying $D$-derivatives, the null data $x$ determines the following tuple,
\begin{align*} 
\begin{aligned} 
(x,\DD^{L,m}) \text{ on } \HH_{0,[1,2]},
\end{aligned} 
\end{align*}
where $\DD^{L,m}$ is as in \eqref{EQdefDDLbtuple}. Similarly, ingoing null data $\underline{x}$ on $\HHb_{ [-\de,\de],2}$ (see Definition \ref{DEFnulldata111}) determines the tuple
\begin{align*} 
\begin{aligned} 
(\underline{x},\underline{\DD}^{\Lb,m}) \text{ on } \HHb_{ [-\de,\de],2},
\end{aligned} 
\end{align*}
where $\underline{\DD}^{\Lb,m}$ denotes the tuple of derivatives on the right-hand side of \eqref{EQdefDDLbtuple}.

In the following we define higher-order null data on $\HH_{0,[1,2]}$ and $\HHb_{ [-\de,\de],2}$.
\begin{definition}[Higher-order null data] \label{DEFhigherORDERoutgoingingoingnullDATA7778}
Let $m\geq1$ be an integer. We define 
\begin{itemize}
\item \emph{higher-order outgoing null data of order $m$} to be the triple
\begin{align*} 
\begin{aligned} 
(x,\DD^{L,m}, \DD^{\Lb,m}) \text{ on } \HH_{0,[1,2]}
\end{aligned} 
\end{align*}
such that for each $S_{0,v} \subset \HH_{0,[1,2]}$, $(x,\DD^{L,m}, \DD^{\Lb,m})_{0,v}$ is higher-order sphere data of order $m$. 
\item \emph{higher-order ingoing null data of order $m$} to be the triple
\begin{align*} 
\begin{aligned} 
(\underline{x},\underline{\DD}^{L,m}, \underline\DD^{\Lb,m}) \text{ on } \HHb_{[-\de,\de],2}
\end{aligned} 
\end{align*}
such that for each $S_{u,2} \subset \HHb_{ [-\de,\de],2}$, $(\underline{x},\underline{\DD}^{L,m}, \underline\DD^{\Lb,m})_{u,2}$ is higher-order sphere data of order $m$. Here $\underline{\DD}^{L,m}$ denotes the tuple of tensors on the right-hand side of \eqref{EQdefDDLtuple}.
\end{itemize}
\end{definition}

\ni In addition to the constraints equations 
\begin{align*} 
\begin{aligned} 
\CC_i(x) =0 \text{ for } 1\leq i \leq 10,
\end{aligned} 
\end{align*}
for $x$ on $\HH_{0,[1,2]}$ and their direct implications for $\DD^{L,m}$, the Einstein equations \eqref{EQeve} also imply null transport equations for $\DD^{\Lb,m}$ along $\HH_{0,[1,2]}$; see, for example, the null transport equation \eqref{EQDUOMU1} for $\Du\omb$ along $\HH_{0,[1,2]}$. We call these null transport equations the \emph{higher-order null structure equations}, and formally denote them by
\begin{align*} 
\begin{aligned} 
\CC\lrpar{x,\DD^{L,m},\DD^{\Lb,m}} =0 \text{ on } \HH_{0,[1,2]}.
\end{aligned} 
\end{align*}
Similarly, we denote the higher-order null structure equations on $\HHb_{ [-\de,\de],2}$ by 
\begin{align*} 
\begin{aligned} 
\underline{\CC}\lrpar{\underline{x},\underline{\DD}^{L,m},\underline{\DD}^{\Lb,m}} =0 \text{ on } \HHb_{ [-\de,\de],2}.
\end{aligned} 
\end{align*}
The higher-order null structure equations are relevant for higher-order conservation laws which act as obstructions to higher-order characteristic gluing, see also Theorem \ref{THMtransversalHIGHERv2}.

\begin{remark}[Higher-order sphere data perturbations] \label{REMhigherOrderSpherePerturbations} In the context of higher-order sphere data, the sphere perturbations ${\PP}_{{f}}$ and ${\PP}_{({j}^1,{j}^2)}$ can be straight-forward generalized to smooth mappings of higher-order incoming null data $(\underline{x}, \underline{\DD}^{L,m}, \underline{\DD}^{\Lb,m})$ on $\HHb_{[-\de,\de],2}$ to higher-order $L$-derivatives sphere data $(x_{0,2}, \DD^{L,m}_{0,2})$ on $S_{0,2}$. It is worthwile to note that linearly at Minkowski, $\DD^{L,m}$ is invariant under sphere variations, that is, for all $\dot{f}$ and $(\dot{j}^1,\dot{j}^2)$,
\begin{align*} 
\begin{aligned} 
\dot{\DD}_L \lrpar{\dot{\PP}_f(\dot{f})+\dot{\PP}_{(j^1,j^2)}(\dot{j}^1,\dot{j}^2)} =0.
\end{aligned} 
\end{align*}
Indeed, this follows by the direct relation between our $\dot{\PP}_{{f}}$ and $\dot{\PP}_{({j}^1,{j}^2)}$ and Lemmas 6.1.2 and 6.1.3 in \cite{DHR} (as in the proof of Lemma \ref{LEMlinearizedTransversal}), where in the latter it is shown that
\begin{align*} 
\begin{aligned} 
\ad \equiv 0, \,\, \omd \equiv 0,
\end{aligned} 
\end{align*}
which implies that all higher $D$-derivatives vanish on $S_{0,2}$.
\end{remark}

\section{Statement of main results} \label{SECpreciseStatementMainTheorem}

\ni The following is the main theorem of this paper.
\begin{theorem}[Codimension-$10$ perturbative characteristic gluing, version 2] \label{PROPNLgluingOrthA121} Let $\de>0$ be a real number. Consider sphere data $x_{0,1}$ on $S_{0,1}$, and sphere data $\tilde{x}_{0,2}$ on $\tilde{S}_{0,2}$ contained in ingoing null data $\tilde{\underline{x}}$ on $\tilde{\HHb}_{[-\de,\de],2}$ satisfying the null structure equations. Assume that for some real number $\varep>0$, 
\begin{align} 
\begin{aligned} 
\Vert x_{0,1}-\mathfrak{m}^M \Vert_{\mathcal{X}(S_{0,1})} + \Vert \tilde{\underline{x}}-\underline{\mathfrak{m}}^M \Vert_{\mathcal{X}^+(\tilde{\HHb}_{[-\de,\de],2}) } \leq \varep.
\end{aligned} \label{EQsmallnessAssumptionMainTheoremMAIN222}
\end{align}
There exist universal real numbers $M_0>0$ and $\varep_0>0$ such that for all real numbers $0\leq M < M_0$ and $0<\varep<\varep_0$ sufficiently small, there are
\begin{itemize}
\item a solution $x$ to the null structure equations on on $\HH_{0,[1,2]}$,
\item sphere data $x_{0,2}$ on a sphere $S_{0,2} \subset \tilde{\HHb}_{[-\de,\de],2}$ stemming from a perturbation of $\tilde{S}_{0,2}$, that is, there are perturbation functions $f$ and $(j^1,j^2)$ such that
\begin{align*} 
\begin{aligned} 
x_{0,2} = \PP_{(j^1,j^2)} \PP_f(\tilde{\underline{x}}),
\end{aligned} 
\end{align*}
\end{itemize}
such that on $S_{0,1}$ we have matching of sphere data,
\begin{align} 
\begin{aligned} 
 &x \vert_{S_{0,1}} = x_{0,1}, 
\end{aligned} \label{EQmaintheoremMATCHINGcond4}
\end{align}
and on $S_{0,2}$ we have matching up to the charges $(\mathbf{E},\mathbf{P}, \mathbf{L}, \mathbf{G})$, that is, if 
\begin{align} 
\begin{aligned} 
\lrpar{\mathbf{E},\mathbf{P}, \mathbf{L}, \mathbf{G}}(x \vert_{S_{0,2}}) =\lrpar{\mathbf{E},\mathbf{P}, \mathbf{L}, \mathbf{G}}\lrpar{x_{0,2}},
\end{aligned} \label{EQmainTheoremmatchingcondition122}
\end{align}
then it holds that
\begin{align} 
\begin{aligned} 
x \vert_{S_{0,2}} = x_{0,2}.
\end{aligned} \label{EQspheredatamatchingMAINTHEOREMS02}
\end{align}
Moreover, the following bounds hold, 
\begin{align} 
\begin{aligned} 
\Vert x -\mathfrak{m}^M \Vert_{\XX(\HH_{0,[1,2]})}+ \Vert x_{0,2} -\tilde{\underline{x}}_{0,2} \Vert_{\XX(S_{0,2})}\les& \varep, \\
\Vert f \Vert_{\YY_f} + \Vert (j^1,j^2) \Vert_{\YY_{(j^1,j^2)}}  \les& \varep, \label{EQboundsFullRESULTMAINTHM1}
\end{aligned} 
\end{align}
where we denoted $\tilde{\underline{x}}_{0,2} := \tilde{\underline{x}} \vert_{S_{0,2}}$. Furthermore, we have the perturbation estimate
\begin{align} 
\begin{aligned} 
\left\vert \lrpar{\mathbf{E},\mathbf{P},\mathbf{L},\mathbf{G}}\lrpar{x_{0,2}} - \lrpar{\mathbf{E},\mathbf{P},\mathbf{L},\mathbf{G}}\lrpar{\tilde{\underline{x}}_{0,2}} \right\vert \les \varep M+ \varep^2,
\end{aligned} \label{EQChargeEstimatesMainTheorem0}
\end{align}
and the transport estimate
\begin{align} 
\begin{aligned}
\left\vert \lrpar{\mathbf{E},\mathbf{P},\mathbf{L},\mathbf{G}}\lrpar{x \vert_{S_{0,2}}} - \lrpar{\mathbf{E},\mathbf{P},\mathbf{L},\mathbf{G}}\lrpar{x \vert_{S_{0,1}}}\right\vert \les \varep M+ \varep^2.
\end{aligned} \label{EQChargeEstimatesMainTheorem}
\end{align}

\end{theorem}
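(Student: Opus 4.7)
The strategy is to reduce the non-linear gluing problem to the linearized one via the implicit function theorem, following the outline in Section \ref{SECintroNONLINEARgluing}. Define the map
\begin{align*}
\FF\lrpar{x_{0,1},\tilde{\underline{x}},f,(j^1,j^2),x} = \Big( x\vert_{S_{0,1}}-x_{0,1},\,\, \mathfrak{M}(x\vert_{S_{0,2}}) - \mathfrak{M}(\PP_{(j^1,j^2)}\PP_f(\tilde{\underline{x}})),\,\, \CC(x) \Big)
\end{align*}
on a neighbourhood of the Schwarzschild reference $(\mathfrak{m}^M_{0,1},\underline{\mathfrak{m}}^M,0,(0,0),\mathfrak{m}^M)$ in the appropriate product of Hilbert spaces (using Definitions \ref{DEFnormFirstOrderDATA}, \ref{DEFspacetimeNORM}, \ref{DEFnormfperturbations}, \ref{DEFnormj1j2} and $\XX(\HH_{0,[1,2]})$ for the target of $x$). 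Smoothness of $\FF$ follows from Lemma \ref{LEMstandardEstimates} together with Propositions \ref{PropositionSmoothnessF} and \ref{PropositionSmoothnessF2}. A zero of $\FF$ produces a solution to the null structure equations along $\HH_{0,[1,2]}$ that matches $x_{0,1}$ at $S_{0,1}$ and, via $\mathfrak{M}$, matches a perturbation $\PP_{(j^1,j^2)}\PP_f(\tilde{\underline{x}})$ of the given data at $S_{0,2}$; by Lemma \ref{LEMconditionalMATCHING}, if one additionally matches the ten charges $(\mathbf{E},\mathbf{P},\mathbf{L},\mathbf{G})$, then \eqref{EQspheredatamatchingMAINTHEOREMS02} follows.

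\textbf{Main step: surjectivity of the linearization.} The core analytic task is to show that the linearization of $\FF$ in $(f,(j^1,j^2),x)$ at $(0,(0,0),\mathfrak{m}^M)$ is a surjective bounded linear map with bounded right-inverse. This reduces to solving the inhomogeneous linearized codimension-$10$ characteristic gluing problem at Schwarzschild: prescribe arbitrary linearized matching data on $S_{0,1}$ and arbitrary values of $\mathfrak{M}$ on $S_{0,2}$, and an arbitrary inhomogeneity for the linearized constraints $\dot\CC$. This is precisely the problem analyzed in Section \ref{SEClinearizedProblem}. The hierarchical radial-weight structure of the integrated linearized constraints (sketched in Section \ref{SEClinearizedCHARgluing999intro}) permits the gauge-dependent charges $\QQ_i$ to be independently adjusted by the free data $(\Omd,\chihd)$ along $\HH_{0,[1,2]}$; the linearized sphere perturbations $\dot\PP_f$ and angular diffeomorphisms $\dot\PP_{(j^1,j^2)}$ of Lemmas \ref{LEMlinearizedTransversal} and \ref{LEMspherediffLIN} provide the additional degrees of freedom to absorb the remaining gauge-dependent components at $S_{0,2}$, as explained in Section \ref{SEClinSpherepertChargeSplitIntro}. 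The irreducible obstruction is exactly the $10$-dimensional gauge-invariant space spanned by $(\mathbf{E},\mathbf{P},\mathbf{L},\mathbf{G})$, which is precisely quotiented out by $\mathfrak{M}$ (Lemma \ref{LEMconditionalMATCHING}). For $M_0>0$ sufficiently small, the surjectivity and the norm of the right-inverse are uniform in $0\leq M<M_0$ by continuity from the Minkowski case.

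\textbf{Application of the implicit function theorem.} With surjectivity established, Theorem \ref{thm:InverseMars14} applied to a splitting of the domain yields a smooth solution map
\begin{align*}
\GG:\,(x_{0,1},\tilde{\underline{x}})\,\longmapsto\,(f,(j^1,j^2),x)
\end{align*}
defined in a neighbourhood of $(\mathfrak{m}^M_{0,1},\underline{\mathfrak{m}}^M)$ such that $\FF(x_{0,1},\tilde{\underline{x}},\GG(x_{0,1},\tilde{\underline{x}}))=0$. The quantitative bounds \eqref{EQboundsFullRESULTMAINTHM1} then follow from Lemma \ref{LEMoperatorEstimates} applied to $\GG$, since $\GG(\mathfrak{m}^M_{0,1},\underline{\mathfrak{m}}^M)=(0,(0,0),\mathfrak{m}^M)$, combined with the perturbation estimates in Propositions \ref{PropositionSmoothnessF}--\ref{PropositionSmoothnessF2}.

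\textbf{Charge estimates.} The estimates \eqref{EQChargeEstimatesMainTheorem0} and \eqref{EQChargeEstimatesMainTheorem} require identifying the correct order of smallness. By the fundamental theorem of calculus,
\begin{align*}
(\mathbf{E},\mathbf{P},\mathbf{L},\mathbf{G})(x_{0,2}) - (\mathbf{E},\mathbf{P},\mathbf{L},\mathbf{G})(\tilde{\underline{x}}_{0,2})
= \int_0^1 D(\mathbf{E},\mathbf{P},\mathbf{L},\mathbf{G})\Big\vert_{t x_{0,2}+(1-t)\tilde{\underline{x}}_{0,2}}\cdot(x_{0,2}-\tilde{\underline{x}}_{0,2})\,dt,
\end{align*}
and the key observation (Section \ref{SEClinSpherepertChargeSplitIntro}) is that the linearizations $\dot{\mathbf{E}},\dot{\mathbf{P}},\dot{\mathbf{L}},\dot{\mathbf{G}}$ at Minkowski are invariant under $\dot\PP_f$ and $\dot\PP_{(j^1,j^2)}$; hence the integrand at $M=0$ applied to the perturbation $x_{0,2}-\tilde{\underline{x}}_{0,2}=\PP_{(j^1,j^2)}\PP_f(\tilde{\underline{x}})-\tilde{\underline{x}}_{0,2}$ vanishes, leaving only $O(M)$ contributions from the Schwarzschild background and $O(\varep)$ contributions from the non-linear remainder. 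Combining with the $O(\varep)$ bound on $f$ and $(j^1,j^2)$ from \eqref{EQboundsFullRESULTMAINTHM1} produces the $\varep M+\varep^2$ bound of \eqref{EQChargeEstimatesMainTheorem0}. The transport estimate \eqref{EQChargeEstimatesMainTheorem} is obtained analogously, using that the linearized charges are \emph{conserved} along $\HH$ at Minkowski (the conservation laws of Section \ref{SEClinearizedCHARgluing999intro}); the $O(M)$ term comes from the Schwarzschild source in the linearized Bianchi equations, and the $O(\varep^2)$ term from the quadratic non-linear terms in the full Bianchi equations integrated along $\HH_{0,[1,2]}$. The main obstacle throughout is the careful bookkeeping of the regularity hierarchy encoded in the norms $\XX$ and $\XX^+$, which is tight precisely because of the derivative losses in the null structure and Bianchi equations.
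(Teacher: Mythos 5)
Your overall strategy matches the paper's: define the mapping $\FF$ encoding the constraint equations, boundary matching at $S_{0,1}$, and $\mathfrak{M}$-matching at $S_{0,2}$; establish surjectivity of the linearization at Schwarzschild by reducing to the linearized Minkowski gluing problem (Theorem \ref{PROPlingluing22}) and continuity in $M$; apply the implicit function theorem restricted to a complement of $\ker\dot\FF^0$; and derive the charge estimates by the fundamental theorem of calculus together with gauge-invariance at Minkowski and Schwarzschild source terms. The structure of Sections \ref{SECimplicitA12setup}--\ref{SECconclusion27778} is faithfully reproduced.

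Two gaps in the charge estimates, however, are worth flagging.

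First, in the perturbation estimate \eqref{EQChargeEstimatesMainTheorem0} you interpolate along the straight-line segment $t\,x_{0,2}+(1-t)\tilde{\underline{x}}_{0,2}$ in sphere-data space. Sphere data carries non-linear constraints ($\gd$ Riemannian, $\Om>0$, $\chih,\chibh,\a,\ab$ symmetric $\gd$-tracefree), and a straight line between two valid sphere data is generically not valid sphere data for the intermediate $t$; the paper explicitly flags this issue (Remark \ref{REMspheredatafamily}). The paper circumvents it by interpolating in the parameter $(f,(j^1,j^2))$ along which $\PP_{(j^1,j^2)\cdot s}\PP_{f\cdot s}(\tilde{\underline{x}})$ \emph{is} valid sphere data for all $s\in[0,1]$, and then the linearized derivative $\dot{\mathbf{E}}\vert_{(\underline{\mathfrak{m}}^M,0,(0,0))}$ applied to $(f,(j^1,j^2))$ is $O(M)$ by the explicit Schwarzschild linearizations (Lemmas \ref{LEMlinearizedTransversalSCHWARZSCHILD}, \ref{LEMspherediffLINSCHWARZSCHILD}), while the deviation of $\dot{\mathbf{E}}$ from its reference value is $O(\varep)$. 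Your straight-line version would require a separate argument to produce a valid interpolating family, and your claim that the ``integrand at $M=0$ applied to the perturbation vanishes'' is not literally true, since the integrand is evaluated at a point $O(\varep)$ away from Minkowski and $x_{0,2}-\tilde{\underline{x}}_{0,2}$ is a non-linear perturbation, not a linearized one; the correct statement is that the \emph{leading} part of the integrand vanishes by gauge-invariance, with remainders $O(M\varep)+O(\varep^2)$.

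Second, and more substantively, your treatment of the transport estimate \eqref{EQChargeEstimatesMainTheorem} as ``obtained analogously'' compresses the part of the paper's argument that does actual work. The paper must construct a \emph{smooth family of solutions to the null structure equations} $(x_s)_{s\in[0,1]}$ connecting $\mathfrak{m}^M$ to $x$ in $\XX(\HH_{0,[1,2]})$ (Lemma \ref{CLAIMsmoothfamily}, itself requiring a second application of the implicit function theorem with linearly interpolated free data), and then decompose each variation $\xd_s$ into a part $(\xd_s)^\top$ in $\ker\dot{\CC}^M$ and a part $(\xd_s)^\perp$ orthogonal to it. The conservation estimate \eqref{EQMsmalllinearizedEPLGSS99} only applies to the kernel part and gives the $O(M\varep)$ contribution; the orthogonal part is shown to be $O(\varep^2)$ via a quantitative bound on $\dot{\CC}^M(\xd_s)=(\dot{\CC}^M-\dot{\CC}\vert_{x_s})(\xd_s)$. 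Without this decomposition, the informal appeal to ``quadratic non-linear terms in the full Bianchi equations'' does not by itself yield the claimed $\varep^2$ order: $\xd_s$ is an $O(\varep)$ object that is \emph{not} a solution of the linearized constraints at $\mathfrak{m}^M$, so $D\dot{\mathbf{E}}\vert_{\mathfrak{m}^M}(\xd_s)$ is a priori only $O(\varep)$, not $O(M\varep+\varep^2)$. This is the key step you are missing.
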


\ni \emph{Remarks on Theorem \ref{PROPNLgluingOrthA121}.}
\begin{enumerate}

\item The matching on $S_{0,2}$ can be described more precisely as follows. The solution $x$ constructed in Theorem \ref{PROPNLgluingOrthA121} is such that on $S_{0,2}$,
\begin{align} 
\begin{aligned} 
\MMf\lrpar{x \vert_{S_{0,2}}} =\MMf(x_{0,2}),
\end{aligned} \label{EQmatchingRemark77781}
\end{align}
where $\MMf$ denotes the matching map introduced in Definition \ref{DEFmatchingMAP}. Lemma \ref{LEMconditionalMATCHING} implies then that if in addition to \eqref{EQmatchingRemark77781} we have the matching of charges \eqref{EQmainTheoremmatchingcondition122}, then we have the sphere data matching \eqref{EQspheredatamatchingMAINTHEOREMS02}.

\item A straight-forward inspection of the proof of Theorem \ref{PROPNLgluingOrthA121} shows that the angular regularity of the characteristic gluing can be increased without change to the proof.

\item Theorem \ref{PROPNLgluingOrthA121} can be equivalently stated with ingoing null data on $\tilde{\HHb}_{[-\de,\de],0}$ and sphere data on $S_{0,2}$. This alternative formulation of Theorem \ref{PROPNLgluingOrthA121} is used in \cite{ACR3,ACR2}.

\item The constructed solution $x \in \XX(\HH_{0,[1,2]})$ is sufficiently regular for the application of local existence results for the characteristic initial value problem; see \cite{ACR3,ACR2} for further discussion. The gluing of Theorem \ref{PROPNLgluingOrthA121} is at the level of $C^2$-gluing for metric components.

\item The construction of the solution $x$ in Theorem \ref{PROPNLgluingOrthA121} is based on the implicit function theorem and solving the \emph{linearized characteristic gluing problem at Minkowski} in Section \ref{SEClinearizedProblem}. The perturbation and transport estimates \eqref{EQChargeEstimatesMainTheorem0} and \eqref{EQChargeEstimatesMainTheorem} for the charges $(\mathbf{E},\mathbf{P},\mathbf{L},\mathbf{G})$ require further an analysis of the linearized sphere perturbations, angular perturbations and null transport equations for $(\mathbf{E},\mathbf{P},\mathbf{L},\mathbf{G})$ \emph{at Schwarzschild of mass $M\geq0$} provided in Appendix \ref{SEClinearizedCHARGEequationsSSAPP}.

\end{enumerate}

\ni In addition to the gluing of higher-order angular derivatives, the proof of Theorem \ref{PROPNLgluingOrthA121} accomodates also the characteristic gluing of higher-order $L$-derivatives in a straight-forward way. The precise statement is as follows; for ease of presentation, we state it with smooth sphere data.
\begin{theorem}[Codimension-$10$ perturbative characteristic gluing of higher-order $L$-derivatives] \label{THMHIGHERorderLgluingMAINTHEOREM} Let $\de>0$ be a real number and let $m\geq1$ be an integer. 
Consider smooth higher-order $L$-derivatives sphere data $(x_{0,1},\DD^{L,m}_{0,1})$ on $S_{0,1}$ and smooth higher-order ingoing null data $(\tilde{\underline{x}},\tilde{\underline{\DD}}^{\tilde{L},m}, \tilde{\underline{\DD}}^{\tilde{\Lb},m})$ solving the higher-order null structure equations on $\tilde{\HHb}_{[-\de,\de],2}$. For $(x_{0,1},\DD^{L,m}_{0,1})$ and $(\tilde{\underline{x}},\tilde{\underline{\DD}}^{\tilde{L},m}, \tilde{\underline{\DD}}^{\tilde{\Lb},m})$ sufficiently close to the their respective reference values in a Schwarzschild spacetime of sufficiently small mass $M\geq0$, there are
\begin{itemize}
\item a smooth solution $(x,\DD^{L,m})$ to the null structure equations on $\HH_{0,[1,2]}$, 
\item smooth higher-order $L$-derivatives sphere data $(x_{0,2},\DD^{L,m}_{0,2})$ stemming from a perturbation of $\tilde{S}_{0,2}$ in $\tilde{\HHb}_{[-\de,\de],2}$, that is, there are perturbation functions $f$ and $(j^1,j^2)$ such that
\begin{align*} 
\begin{aligned} 
(x_{0,2},\DD^{L,m}) = \PP_{(j^1,j^2)} \PP_f\lrpar{\tilde{\underline{x}},\tilde{\underline{\DD}}^{\tilde{L},m}, \tilde{\underline{\DD}}^{\tilde{\Lb},m}},
\end{aligned} 
\end{align*}
\end{itemize}
such that on $S_{0,1}$ we have the matching
\begin{align} 
\begin{aligned} 
(x,\DD^{L,m}) \vert_{S_{0,1}} = (x_{0,1},\DD^{L,m}_{0,1}),
\end{aligned} \label{EQmaintheoremMATCHINGcond4777}
\end{align}
and on $S_{0,2}$ we have matching up to the charges $(\mathbf{E},\mathbf{P}, \mathbf{L}, \mathbf{G})$, that is, if 
\begin{align*} 
\begin{aligned} 
\lrpar{\mathbf{E},\mathbf{P}, \mathbf{L}, \mathbf{G}}(x \vert_{S_{0,2}}) =\lrpar{\mathbf{E},\mathbf{P}, \mathbf{L}, \mathbf{G}}\lrpar{x_{0,2}},
\end{aligned} 
\end{align*}
then it holds that
\begin{align} 
\begin{aligned} 
(x,\DD^{L,m}) \vert_{S_{0,2}} = (x_{0,2},\DD^{L,m}_{0,2}).
\end{aligned} \label{EQspheredatamatchingMAINTHEOREMS02677747}
\end{align}
Moreover, we have charge estimates analogous to Theorem \ref{PROPNLgluingOrthA121}.
\end{theorem}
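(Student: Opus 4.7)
My plan is to mirror the implicit function theorem scheme used for Theorem \ref{PROPNLgluingOrthA121}, extending the domain and codomain to include the higher-order $L$-derivatives $(\DD^{L,m}_{0,1},\DD^{L,m}_{0,2})$ on the boundary spheres and $\DD^{L,m}$ along $\HH_{0,[1,2]}$. Concretely, define a mapping
\begin{align*}
\FF: & \Big((x_{0,1},\DD^{L,m}_{0,1}),\, (\tilde{\underline{x}},\tilde{\underline{\DD}}^{\tilde L,m},\tilde{\underline{\DD}}^{\tilde \Lb,m}),\, f,\, (j^1,j^2),\, (x,\DD^{L,m})\Big) \\
\longmapsto\; &\Big((x,\DD^{L,m})\vert_{S_{0,1}} - (x_{0,1},\DD^{L,m}_{0,1}),\; \widetilde{\MMf}\big(x,\DD^{L,m}\big)\vert_{S_{0,2}} - \widetilde{\MMf}\big(\PP_{(j^1,j^2)}\PP_f(\tilde{\underline{x}},\tilde{\underline{\DD}}^{\tilde L,m},\tilde{\underline{\DD}}^{\tilde \Lb,m})\big),\; \widetilde{\CC}(x,\DD^{L,m})\Big),
\end{align*}
where $\widetilde{\CC}$ denotes the original constraint functions $\CC_1,\dots,\CC_{10}$ from Section \ref{SECconformalMethod1112} \emph{augmented} by the transport equations obtained by commuting $\CC_{10}$ and $\CC_7$ (more precisely, $D\om=\Om^2(2(\eta,\etab)-\vert\eta\vert^2-\rho)$) with $j$ copies of $D$ for $1\leq j\leq m$, and $\widetilde{\MMf}$ is the matching map $\MMf$ of Definition \ref{DEFmatchingMAP} augmented by the identity on the $\DD^{L,m}$ components. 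A zero of $\FF$ yields precisely a solution of the higher-order gluing problem in the sense of \eqref{EQmaintheoremMATCHINGcond4777}--\eqref{EQspheredatamatchingMAINTHEOREMS02677747}.

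The next step is to verify that the linearization of $\FF$ at Schwarzschild reference data and $(f,(j^1,j^2))=(0,(0,0))$, taken in the variables $(f,(j^1,j^2),(x,\DD^{L,m}))$, is surjective onto the corresponding target space modulo the ten-dimensional obstruction $(\mathbf{E},\mathbf{P},\mathbf{L},\mathbf{G})$ at $S_{0,2}$. The crucial observation is that at the linear level the augmented transport equations produce the pointwise identities
\begin{align*}
\dot{(D^{j}\om)} = D^{j+1}\Omd, \qquad \widehat{\dot{(D^{j}\a)}} = -\,D^{j+1}\chihd \quad \text{on } \HH_{0,[1,2]} \text{ for } 1\leq j \leq m,
\end{align*}
obtained by differentiating the linearized constraints $\CCd_1$ and $\CCd_{10}$ of Lemma \ref{LEMlinearizedConstraints} with $D$. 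Thus matching the components of $\DD^{L,m}$ at $S_{0,1}$ and at $S_{0,2}$ amounts to prescribing the higher $D$-jets of the free data $(\Omd,\chihd)$ at the two endpoints $v=1$ and $v=2$. Combined with Remark \ref{REMhigherOrderSpherePerturbations}, according to which $\dot\PP_f$ and $\dot\PP_{(j^1,j^2)}$ leave $\DD^{L,m}$ invariant at the Schwarzschild reference, the matching conditions at $S_{0,2}$ for the higher-order components decouple entirely from the sphere-perturbation freedom.

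The main work is then to show that these additional endpoint-jet conditions can be solved simultaneously with the original codimension-$10$ linearized gluing problem resolved in Section \ref{SEClinearizedProblem}. This is where I expect the only subtlety to lie: one must verify that the functionals appearing in the original matching (namely the weighted integrals of $\Omd$ and $\chihd$ over $v\in[1,2]$ from the hierarchy of radial weights) and the higher-order endpoint jets $D^{j}\Omd\vert_{v=1,2}$, $D^{j}\chihd\vert_{v=1,2}$ for $j\leq m+1$ are \emph{simultaneously prescribable}. This holds because the integrals and the endpoint jets are linearly independent continuous functionals on the relevant Sobolev space of free data along $\HH_{0,[1,2]}$ --- one constructs the required $(\Omd,\chihd)$ by adding to any bulk solution of the original problem a correction supported in thin collars around $\{v=1\}$ and $\{v=2\}$ adjusting the endpoint jets without affecting the prescribed weighted integrals.

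With the surjectivity established, Theorem \ref{thm:InverseMars14} produces a smooth solution map from $((x_{0,1},\DD^{L,m}_{0,1}),(\tilde{\underline{x}},\tilde{\underline{\DD}}^{\tilde L,m},\tilde{\underline{\DD}}^{\tilde\Lb,m}))$ to the triple $(f,(j^1,j^2),(x,\DD^{L,m}))$, yielding the desired gluing up to the $10$-dimensional obstruction $(\mathbf{E},\mathbf{P},\mathbf{L},\mathbf{G})$. The conditional matching \eqref{EQspheredatamatchingMAINTHEOREMS02677747} follows by applying Lemma \ref{LEMconditionalMATCHING} to the lower-order part of the sphere data and invoking the identity components of $\widetilde{\MMf}$ for $\DD^{L,m}$. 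Finally, the charge estimates follow by repeating the analysis in Appendix \ref{SEClinearizedCHARGEequationsSSAPP} applied to $(\mathbf{E},\mathbf{P},\mathbf{L},\mathbf{G})$, since these charges depend only on the $C^2$-sphere data and are unaffected by the higher-order $L$-derivatives.
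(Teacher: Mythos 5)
Your proposal is essentially the paper's own approach: augment the implicit function theorem setup (Sections \ref{SECimplicitA12setup}--\ref{SECconclusion}) with the tuple $\DD^{L,m}$, observe via Remark \ref{REMhigherOrderSpherePerturbations} that $\dot{\PP}_f$ and $\dot{\PP}_{(j^1,j^2)}$ leave $\DD^{L,m}$ invariant at the linear level, and then use the fact that $\dot{\DD}^{L,m}$ is determined by the higher $D$-jets of the free data $(\Omd,\chihd)$, so that the extra endpoint conditions can be solved alongside the weighted-integral conditions by extending Lemma \ref{LEMtechnicalEXPLICITFORMULA} (the paper's Remarks \ref{REMhigherLgluing1}--\ref{REMhigherLgluing2}); your ``thin-collar correction'' is just an explicit instance of the orthogonality construction invoked there.

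One small slip worth flagging: you attribute the relation $\dot{(D^j\om)}=D^{j+1}\Omd$ to a commuted version of $\CC_7$, citing the formula $D\om=\Om^2(2(\eta,\etab)-|\eta|^2-\rho)$, but $\CC_7$ governs $D\omb$ (and that displayed formula is in fact the $D\omb$ equation from \eqref{EQtransportEQLnullstructurenonlinear}). The relation $\omd=D\Omd$ used in Remark \ref{REMhigherLgluing2} is not a constraint at all --- it follows directly from the definition $\om=D\log\Om$ since $\Om$ is free data --- so the higher $D^j\om$ need no augmented constraint function, only the linearized definition. Your final identities are nevertheless correct, and the argument goes through.
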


\ni \emph{Remarks on Theorem \ref{THMHIGHERorderLgluingMAINTHEOREM}.}
\begin{enumerate}
\item The proof of Theorem \ref{THMHIGHERorderLgluingMAINTHEOREM} is a direct generalization of the proof of Theorem \ref{PROPNLgluingOrthA121}. We indicate the necessary generalizations in Section \ref{SECconclusion4} and Section \ref{SEClinearizedProblem}.

\item The matching on $S_{0,2}$ in Theorem \ref{THMHIGHERorderLgluingMAINTHEOREM} is more precisely given by
\begin{align} 
\begin{aligned} 
\MMf\lrpar{x \vert_{S_{0,2}}} =\MMf(x_{0,2}), \DD^{L,m} \vert_{S_{0,2}} = \DD^{L,m}_{0,2}.
\end{aligned} \label{EQmatchingRemark77781888}
\end{align}
In particular, as mentioned before, the gluing of 
\begin{align*} 
\begin{aligned} 
\DD^{L,m} := \lrpar{\widehat{D}\a, \dots, \widehat{D}^m\a,D^2\om, \dots, D^{m+1}\om},
\end{aligned} 
\end{align*}
is without obstacles. 

\item To deduce from \eqref{EQmatchingRemark77781888} the gluing of higher-order $L$-derivatives of metric components, Ricci coefficients and null curvature components, the full matching \eqref{EQspheredatamatchingMAINTHEOREMS02677747} is needed to apply the higher-order null structure equations on $S_{0,2}$.
\end{enumerate}

\ni Theorem \ref{THMHIGHERorderLgluingMAINTHEOREM} shows that $\DD^{L,m}$ can be characteristically glued without obstruction, while $\DD^{\Lb,m}$ is subject to higher-order conservation laws along $\HH_{0,[1,2]}$. In the following theorem we show that by gluing along \emph{two null hypersurfaces bifurcating from an auxiliary sphere}, namely along $\HH_{[-1,0],1}$ and $\HH_{0,[1,2]}$, it is possible to glue higher-order $L$- and $\Lb$-derivatives.

\begin{theorem}[Codimension-$10$ bifurcate characteristic gluing, version 2] \label{THMtransversalHIGHERv2} Let $m\geq0$ be an integer. Consider smooth higher-order sphere data 
\begin{align*} 
\begin{aligned} 
(x_{0,1},\DD^{L,m}_{0,1}, \DD^{\Lb,m}_{0,1}) \text{ on } S_{0,1} \text{ and } (x_{-1,2},\DD^{L,m}_{-1,2},\DD^{\Lb,m}_{-1,2}) \text{ on } S_{-1,2}.
\end{aligned} 
\end{align*}
For $(x_{0,1},\DD^{L,m}_{0,1}, \DD^{\Lb,m}_{0,1})$ and $(x_{-1,2},\DD^{L,m}_{-1,2},\DD^{\Lb,m}_{-1,2})$ sufficiently close to the their respective reference values in a Schwarzschild spacetime of sufficiently small mass $M\geq0$, there are

\begin{itemize}
\item a smooth solution $(\underline{x},\underline{\DD}^{L,m},\underline{\DD}^{\Lb,m})$ to the higher-order null structure equations on $\HHb_{[-1,0],1}$, satisfying higher-order sphere data matching on $S_{0,1}$,
\begin{align*} 
\begin{aligned} 
(\underline{x},\underline{\DD}^{L,m},\underline{\DD}^{\Lb,m}) \Big\vert_{S_{0,1}}=& (x_{0,1}, \DD^{L,m}_{0,1},\DD^{\Lb,m}_{0,1}),
\end{aligned} 
\end{align*} 
\item a smooth solution $({x},{\DD}^{L,m},{\DD}^{\Lb,m})$ to the higher-order null structure equations on $\HH_{-1,[1,2]}$, matching with $(\underline{x},\underline{\DD})$ on $S_{-1,1}$,
\begin{align*} 
\begin{aligned} 
(\underline{x},\underline{\DD}^{L,m},\underline{\DD}^{\Lb,m}) \Big\vert_{S_{-1,1}} =& ({x},{\DD}^{L,m},{\DD}^{\Lb,m}) \Big\vert_{S_{-1,1}}
\end{aligned} 
\end{align*}
\end{itemize}
such that $({x},{\DD}^{L,m},{\DD}^{\Lb,m})$ matches $(x_{-1,2},\DD^{L,m}_{-1,2},\DD^{\Lb,m}_{-1,2})$ up to the charges $(\mathbf{E},\mathbf{P},\mathbf{L},\mathbf{G})$ on $S_{-1,2}$, that is, if it holds that
\begin{align*} 
\begin{aligned} 
\lrpar{\mathbf{E}, \mathbf{P}, \mathbf{L}, \mathbf{G}}\lrpar{x\vert_{S_{-1,2}}} = \lrpar{\mathbf{E}, \mathbf{P}, \mathbf{L}, \mathbf{G}}\lrpar{x_{-1,2}},
\end{aligned} 
\end{align*}
then 
\begin{align*} 
\begin{aligned} 
({x},{\DD}^{L,m},{\DD}^{\Lb,m}) \Big\vert_{S_{-1,2}} =& (x_{-1,2},\DD^{L,m}_{-1,2},\DD^{\Lb,m}_{-1,2}).
\end{aligned}
\end{align*}
Moreover, we have charge estimates analogous to \eqref{EQChargeEstimatesMainTheorem} in Theorem \ref{PROPNLgluingOrthA121} for
\begin{align*} 
\begin{aligned} 
\left\vert (\mathbf{E},\mathbf{P},\mathbf{L},\mathbf{G})(x \vert_{S_{-1,2}})- \lrpar{\mathbf{E},\mathbf{P},\mathbf{L},\mathbf{G}-2 \mathbf{P}}(x_{0,1})\right\vert.
\end{aligned} 
\end{align*}
\end{theorem}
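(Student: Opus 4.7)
The proof follows the scheme of Theorem \ref{PROPNLgluingOrthA121}: the nonlinear statement is reduced via the implicit function theorem (Theorem \ref{thm:InverseMars14}) to a \emph{linearized bifurcate gluing problem} at Schwarzschild of mass $M$. The free variables in the implicit function argument will be (i) the linearized free data $(\dot{\Om},\dot{\gd_c})$ on $\HHb_{[-1,0],1}$, (ii) the linearized free data $(\dot{\Om},\dot{\gd_c})$ on $\HH_{-1,[1,2]}$, (iii) the linearized higher-order data $\dot{\DD}^{L,m}$ and $\dot{\DD}^{\Lb,m}$ freely prescribed on each leg, and (iv) a choice of linearized sphere data at the auxiliary bifurcation sphere $S_{-1,1}$. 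The target map $\FF$ records the matching conditions at $S_{0,1}$, at the bifurcation sphere $S_{-1,1}$, and, modulo the $10$-dimensional space of gauge-invariant charges, at $S_{-1,2}$, together with the higher-order null structure equations on each leg. Note that no sphere perturbation maps $\PP_f$ or $\PP_{(j^1,j^2)}$ are needed; this is the essential simplification afforded by the bifurcate configuration.

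For the linearized problem I would proceed in two stages. First, solve the linearized ingoing transport system on $\HHb_{[-1,0],1}$ with endpoint data on $S_{0,1}$, using the ingoing analog of the conformal-method hierarchy of Section \ref{SECconformalMethod1112}. The hierarchical $r$-weight structure, now along the $u$-direction and decorated with the Schwarzschild corrections of Appendix \ref{SEClinearizedCHARGEequationsSSAPP}, shows that the freely prescribed $(\dot\Om,\dot{\gd_c})$ on $\HHb_{[-1,0],1}$, together with the free sphere data at $S_{-1,1}$, can realize any desired linearized sphere data at $S_{-1,1}$ modulo the ingoing conserved charges. Second, solve the linearized outgoing problem on $\HH_{-1,[1,2]}$ starting from the sphere data produced at $S_{-1,1}$ and targeting the prescribed data at $S_{-1,2}$ up to the codimension-$10$ space of outgoing gauge-invariant charges; this is exactly the problem addressed in Section \ref{SEClinearizedProblem}.

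The structural observation that makes bifurcate gluing strictly stronger than perturbative gluing is that the outgoing gauge-dependent charges $\dot{\QQ}_i$ ($i\neq 0$) evaluated at $S_{-1,1}$ are linearly independent functionals of the \emph{ingoing} free data on $\HHb_{[-1,0],1}$. Consequently, to each gauge-dependent obstruction on $\HH_{-1,[1,2]}$ one may assign an ingoing free-data mode that adjusts the starting value at $S_{-1,1}$ so as to cancel it. This plays the role, in the bifurcate setting, that $\PP_f$ and $\PP_{(j^1,j^2)}$ played in Theorem \ref{PROPNLgluingOrthA121}, but crucially without perturbing either of the spheres $S_{0,1}$ and $S_{-1,2}$. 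The $10$ gauge-invariant charges $(\mathbf{E},\mathbf{P},\mathbf{L},\mathbf{G})$ cannot be adjusted this way, because they arise from conservation laws of the linearized null Bianchi equations that the two null directions share (the Bianchi pairs $(\rho,\si)$ and $(\be,\beb)$ propagate symmetrically in $L$ and $\Lb$); they therefore survive as exactly $10$ obstructions at $S_{-1,2}$. The charge estimate then follows by linearizing the conservation laws along each leg, summing them across $S_{-1,1}$, and applying Lemma \ref{LEMoperatorEstimates}.

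The main technical obstacle will be the simultaneous gluing of the higher-order transversal derivatives $\DD^{\Lb,m}$ along $\HH_{-1,[1,2]}$ (and, symmetrically, $\DD^{L,m}$ along $\HHb_{[-1,0],1}$). At each order $k\leq m$ new conservation laws appear (the higher-order analogs of $\QQ_0,\ldots,\QQ_7$, obtained by commuting the null Bianchi equations with $D^k$ or $\Du^k$), generating infinitely many new gauge-dependent obstructions. To close the implicit function argument one must verify that each such higher-order outgoing gauge-dependent charge can still be hit by a corresponding higher-order ingoing free-data mode at $S_{-1,1}$ inherited from the solution on $\HHb_{[-1,0],1}$. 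The key fact to exploit is that the $\Lb$-derivatives which are transversal to $\HH_{-1,[1,2]}$ are tangential to $\HHb_{[-1,0],1}$ at the bifurcation sphere, and vice versa: hence what is ``obstructed'' from one side is ``free'' from the other. Once the linearized map is shown to be surjective onto the codimension-$10$ target, Theorem \ref{thm:InverseMars14} produces the desired solutions $(\underline{x},\underline{\DD}^{L,m},\underline{\DD}^{\Lb,m})$ and $(x,\DD^{L,m},\DD^{\Lb,m})$ and the charge estimate follows as above.
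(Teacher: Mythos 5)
Your overall scheme is the same as the paper's: reduce to a linearized bifurcate gluing problem via the implicit function theorem, solve the ingoing leg from $S_{0,1}$ to $S_{-1,1}$, then the outgoing leg from $S_{-1,1}$ to $S_{-1,2}$, and exploit the duality that $\Lb$-derivatives transversal to $\HH_{-1,[1,2]}$ are tangential to $\HHb_{[-1,0],1}$ (and vice versa) so that each gauge-dependent obstruction on one leg can be cancelled by free data on the other. This matches the structure of Theorem \ref{THMlinearizedGluingTWO} and Proposition \ref{PROPlinearindep}, and the philosophy of adjusting the sphere data at the auxiliary sphere $S_{-1,1}$ is exactly right.

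The gap is in the charge estimate. You assert that ``the Bianchi pairs propagate symmetrically in $L$ and $\Lb$'' so the ten gauge-invariant charges survive, and that the estimate follows by ``linearizing the conservation laws along each leg, summing them across $S_{-1,1}$.'' If that were the whole story, the resulting estimate would control $\vert(\mathbf{E},\mathbf{P},\mathbf{L},\mathbf{G})(x\vert_{S_{-1,2}})-(\mathbf{E},\mathbf{P},\mathbf{L},\mathbf{G})(x_{0,1})\vert$, not the quantity in the statement with the $-2\mathbf{P}$ correction in the $\mathbf{G}$-slot. The point your heuristic misses is that the outgoing charges $(\QQ_0,\QQ_2^{[\leq1]})$ and the ingoing charges $(\uQQ_0,\uQQ_2^{[\leq1]})$ are \emph{not} equal across the bifurcation sphere; they are related by explicit algebraic identities. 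The paper derives these in Lemma \ref{LEMchargeRelationstransv}: one has $\lrpar{\QQ_0}_H=-\lrpar{\uQQ_0}_H$, $\QQ_2^{[0]}=-\uQQ_2^{[0]}$, $\QQ_2^{[1]}=\uQQ_2^{[1]}$, but crucially
\begin{align*}
\lrpar{\QQ_0}_E = \lrpar{\uQQ_0}_E + \tfrac{r}{2}\,\di\,\uQQ_2^{[1]},
\end{align*}
which entangles the linearized center-of-mass charge with the linearized momentum and produces, after passing through the conservation laws on both legs, $\dot{\mathbf{G}}^m(\xd\vert_{S_{-1,2}})=\dot{\mathbf{G}}^m(\xd\vert_{S_{0,1}})-2\dot{\mathbf{P}}^m(\xd\vert_{S_{0,1}})$. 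Without deriving this transmission law through $S_{-1,1}$ your proof cannot reach the stated estimate; you would need the analogue of Lemma \ref{LEMchargeRelationstransv}, which in turn rests on rewriting $\QQ_0,\QQ_1,\QQ_2$ in terms of $\uQQ_0,\uQQ_1,\uQQ_2$ and freely glueable ingoing quantities (this rewriting is also what lets the gauge-dependent charges be matched without perturbing $S_{0,1}$ or $S_{-1,2}$, so it is doing double duty). Your heuristic about symmetric propagation of the Bianchi pairs gestures at the right count of obstructions but does not yield the correct coefficients.
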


\ni \emph{Remarks on Theorem \ref{THMtransversalHIGHERv2}.}
\begin{enumerate}
\item Theorem \ref{THMtransversalHIGHERv2} shows that for bifurcate characteristic gluing, the obstruction space consists entirely of the $10$-dimensional space $(\mathbf{E},\mathbf{P},\mathbf{L},\mathbf{G})$.
\item The proof of Theorem \ref{THMtransversalHIGHERv2} is based on the methods of Theorems \ref{PROPNLgluingOrthA121} and \ref{THMHIGHERorderLgluingMAINTHEOREM}, and given in Section \ref{SECHIGHERfull}.

\item In \cite{ACR3,ACR2}, Theorem \ref{THMtransversalHIGHERv2} is applied to glue spacelike initial data to a Kerr black hole spacetime.

\item It is possible to prove that, with respect to appropriately defined norms, the necessary closeness to Schwarzschild can be uniformly bounded away from zero for all $m\geq0$. We omit the statement and proof of $C^\infty$-gluing results.

\end{enumerate}

\ni The above results concern codimension-$10$ characteristic gluing. However, by adding to the sphere data on $S_{-1,2}$ a sphere data perturbation $W$ which adjusts the charges $(\mathbf{E},\mathbf{P},\mathbf{L},\mathbf{G})$ (this implies that $W$ is \emph{not} coming from a sphere perturbation or sphere diffeomorphism), it is straight-forward to extend the above results to full characteristic gluing of (higher-order) sphere data. The sphere data perturbation $W$ can be chosen to be supported on an arbitrary angular region $K$ on $S_{-1,2}$. An explicit such $W$ with advantageous properties is used in the gluing problems studied in \cite{ACR3}. The following result is an extension of Theorem \ref{THMtransversalHIGHERv2}; a variant can also be stated for characteristic gluing along one null hypersurface of Theorem \ref{PROPNLgluingOrthA121}.

\begin{proposition}[Bifurcate characteristic gluing with localized sphere data perturbation $W$] \label{PROPchargluingW} Let $m\geq0$ be an integer. Let $K \subset S_{-1,2}$ be an angular region. Consider smooth higher-order sphere data 
\begin{align*} 
\begin{aligned} 
(x_{0,1},\DD^{L,m}_{0,1}, \DD^{\Lb,m}_{0,1}) \text{ on } S_{0,1} \text{ and } (x_{-1,2},\DD^{L,m}_{-1,2},\DD^{\Lb,m}_{-1,2}) \text{ on } S_{-1,2}.
\end{aligned} 
\end{align*}
For $(x_{0,1},\DD^{L,m}_{0,1}, \DD^{\Lb,m}_{0,1})$ and $(x_{-1,2},\DD^{L,m}_{-1,2},\DD^{\Lb,m}_{-1,2})$ sufficiently close to the their respective reference values in a Schwarzschild spacetime of sufficiently small mass $M\geq0$, there are
\begin{itemize}
\item a smooth solution $(\underline{x},\underline{\DD}^{L,m},\underline{\DD}^{\Lb,m})$ to the higher-order null structure equations on $\HHb_{[-1,0],1}$, satisfying higher-order sphere data matching on $S_{0,1}$,
\begin{align*} 
\begin{aligned} 
(\underline{x},\underline{\DD}^{L,m},\underline{\DD}^{\Lb,m}) \Big\vert_{S_{0,1}}=& (x_{0,1}, \DD^{L,m}_{0,1},\DD^{\Lb,m}_{0,1}),
\end{aligned} 
\end{align*} 
\item a smooth solution $({x},{\DD}^{L,m},{\DD}^{\Lb,m})$ to the higher-order null structure equations on $\HH_{-1,[1,2]}$, matching with $(\underline{x},\underline{\DD})$ on $S_{-1,1}$,
\begin{align*} 
\begin{aligned} 
(\underline{x},\underline{\DD}^{L,m},\underline{\DD}^{\Lb,m}) \Big\vert_{S_{-1,1}} =& ({x},{\DD}^{L,m},{\DD}^{\Lb,m}) \Big\vert_{S_{-1,1}},
\end{aligned} 
\end{align*}

\item a smooth higher-order sphere data perturbation $(W,0,0)$, compactly supported in $K$,
\end{itemize}
such that 
\begin{align*} 
\begin{aligned} 
({x},{\DD}^{L,m},{\DD}^{\Lb,m}) \vert_{S_{-1,2}} = (x_{-1,2},\DD^{L,m}_{-1,2},\DD^{\Lb,m}_{-1,2}) + (W,0,0),
\end{aligned} 
\end{align*}
Moreover, we have appropriate bounds for $W$ on $S_{-1,2}$, $({x},{\DD}^{L,m},{\DD}^{\Lb,m})$ on $\HH_{-1,[1,2]}$, and $({x},{\DD}^{L,m},{\DD}^{\Lb,m})$ on $\HH_{-1,[1,2]}$.
\end{proposition}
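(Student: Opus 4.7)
\bigskip

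\textbf{Proof proposal for Proposition \ref{PROPchargluingW}.} The plan is to deduce the statement from Theorem \ref{THMtransversalHIGHERv2} by absorbing the $10$-dimensional charge obstruction into the free parameter $W$, which one designs by hand so that $(\mathbf{E},\mathbf{P},\mathbf{L},\mathbf{G})(x_{-1,2}+W)$ ranges surjectively over $\RRR^{10}$ as $W$ varies over a $10$-parameter family compactly supported in $K$. Concretely, first I would construct ten explicit smooth sphere data perturbations $W_{\mathbf{E}}, W_{\mathbf{P}^m}, W_{\mathbf{L}^m}, W_{\mathbf{G}^m}$, $m=-1,0,1$, each supported in $K$ with all higher-order derivative components set to zero, chosen so that the $10\times10$ matrix of charge responses
\begin{equation*}
\mathcal{D} := \partial_{c_i}\, (\mathbf{E},\mathbf{P},\mathbf{L},\mathbf{G})\Big(x_{-1,2}+\textstyle\sum_i c_i W_i\Big)\Big\vert_{c=0}
\end{equation*}
is non-degenerate. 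By the formulas in Definition \ref{DEFnonlinearcharges6}, the charges are built from integrals against the mode $l=0$ and $l=1$ (real, electric, magnetic) spherical harmonics of scalars derived from $\phi$, $\gd_c$, $\chih$, $\chibh$, $\Om\trchi$, $\eta$, $\Om$. One therefore picks $W_{\mathbf{E}}$ as a perturbation of $\phi$ by a cutoff function times a function whose $Y^{(00)}$-mean on $K$ is nonzero, $W_{\mathbf{P}^m}$ by a perturbation of $\phi$ aligned with $Y^{(1m)}$, $W_{\mathbf{G}^m}$ by a perturbation of $\eta$ aligned with $E^{(1m)}$, and $W_{\mathbf{L}^m}$ by a perturbation of $\eta$ aligned with $H^{(1m)}$ (all multiplied by a smooth bump supported in $K$). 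Since the real-analytic spherical harmonics $Y^{(00)}$, $Y^{(1m)}$, $E^{(1m)}$, $H^{(1m)}$ restricted to the open region $K$ remain linearly independent, the resulting $\mathcal{D}$ is indeed invertible, with inverse bounded in terms of $K$.

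The second step is a fixed point / implicit function argument to close the loop. For $W=\sum_i c_i W_i$ with small coefficient vector $c\in\RRR^{10}$, apply Theorem \ref{THMtransversalHIGHERv2} with target $(x_{-1,2}+W,\DD^{L,m}_{-1,2},\DD^{\Lb,m}_{-1,2})$ to obtain bifurcate higher-order data $(\underline{x}_W,\DD^{L,m}_W,\DD^{\Lb,m}_W)$ on $\HHb_{[-1,0],1}$ and $(x_W,\DD^{L,m}_W,\DD^{\Lb,m}_W)$ on $\HH_{-1,[1,2]}$, whose sphere data on $S_{-1,2}$ matches $x_{-1,2}+W$ modulo the $10$ charges. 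The obstruction to full gluing is the residual
\begin{equation*}
\Phi(c) := (\mathbf{E},\mathbf{P},\mathbf{L},\mathbf{G})\bigl(x_W\vert_{S_{-1,2}}\bigr) - (\mathbf{E},\mathbf{P},\mathbf{L},\mathbf{G})(x_{-1,2}+W),
\end{equation*}
and the statement we want is $\Phi(c)=0$. The key observation is that the first term on the right-hand side depends on $c$ only through the dependence of the gluing solution on its target, which by the perturbation/transport estimates in Theorem \ref{PROPNLgluingOrthA121} (generalised to the bifurcate setting of Theorem \ref{THMtransversalHIGHERv2}) contributes only $O(M\varepsilon+\varepsilon^2)$ to $\Phi$, while the second term contributes the nontrivial linear piece $-\mathcal{D} c + O(|c|^2)$. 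Consequently $\Phi(0)= O(M\varepsilon+\varepsilon^2)$ and $D\Phi(0) = -\mathcal{D}+O(M+\varepsilon)$ is invertible for $M+\varepsilon$ small enough, so the standard implicit function theorem (Theorem \ref{thm:InverseMars14}) provides a unique small $c$ with $\Phi(c)=0$, hence full matching on $S_{-1,2}$ in the sense of the proposition. The accompanying bounds follow by combining $|c|\les M\varepsilon+\varepsilon^2$ from this solve with the bounds provided by Theorem \ref{THMtransversalHIGHERv2}.

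The main technical obstacle will be the verification that the linearised charge-response matrix $\mathcal{D}$ is indeed non-degenerate and that its inverse depends only on $K$ (and not on $\varepsilon,M$). This is essentially a finite-dimensional statement about integrals of bump-localised spherical harmonics against their duals, but one must take care that the perturbations $W_i$ are chosen to be genuinely of sphere data type (preserving positivity of $\Om$ and $\phi$, symmetry and tracefreeness of $\chih,\chibh,\a,\ab$, etc., which is automatic for small $c$) and that the fixed-point argument is set up in function spaces compatible with the $\XX$-norms used in Theorem \ref{PROPNLgluingOrthA121} and with the higher-order spaces implicit in Theorem \ref{THMtransversalHIGHERv2}. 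A minor but worthwhile remark is that the construction of $W_i$ is essentially local on $S_{-1,2}$, so the proposition will automatically produce a $W$ supported in any prescribed $K$ of positive measure, and the argument never requires $W$ to have a geometric origin (sphere perturbation or diffeomorphism); indeed, it cannot, precisely because $\mathbf{E},\mathbf{P},\mathbf{L},\mathbf{G}$ are invariant under such transformations at the linear level.
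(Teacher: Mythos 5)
Your approach is correct in spirit but takes a genuinely different route than the paper's. The paper absorbs the sphere data perturbation $W$ directly into the implicit function theorem map as an additional infinite-dimensional unknown: it works with $W$ in a weighted $L^2$-based Sobolev space (a Hilbert space) whose weight enforces smooth vanishing toward the boundary of $K$ (citing \cite{Corvino,CorvinoSchoen}), so that the single IFT from the proof of Theorems \ref{PROPNLgluingOrthA121} and \ref{THMtransversalHIGHERv2} produces $W$ alongside the null data. Crucially, the linearized problem is then modified once: $\dot{W}$ is constructed at the linearized level to have prescribed values of $\QQ_0(\dot{W})$ and $\QQ_2^{[\leq1]}(\dot{W})$ (equivalently $(\dot{\mathbf{E}},\dot{\mathbf{P}},\dot{\mathbf{L}},\dot{\mathbf{G}})$ by \eqref{EQREMchargeslinatMinkowski}), and the existing linearized bifurcate gluing (Section \ref{SECstatementFULLhigher}) handles the rest. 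Your proposal instead treats Theorem \ref{THMtransversalHIGHERv2} as a black box, views $W$ as a $10$-parameter family $W=\sum c_i W_i$, and runs a second-tier finite-dimensional implicit function argument to annihilate the residual charge mismatch $\Phi(c)$. Both strategies close the argument; yours trades the weighted-space machinery for an explicit finite-dimensional parametrization and a two-stage solve. What the paper's approach buys is a single application of the IFT at a fixed reference point with one uniform smallness threshold; your layered approach requires a little more work to verify that $\Phi$ is $C^1$ and that its derivative is controlled.

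That said, one step in your argument deserves more care: the claim $D\Phi(0) = -\mathcal{D} + O(M+\varepsilon)$ does not follow directly from the pointwise transport estimate $|\Phi(0)|\les M\varep+\varep^2$ — a bound on a function does not bound its derivative. What makes the claim true is a \emph{linearized} transport argument: differentiating the composite map $c\mapsto (\mathbf{E},\mathbf{P},\mathbf{L},\mathbf{G})(x_W\vert_{S_{-1,2}})$ at $c=0$ produces the response of the linearized bifurcate gluing solution to a variation of the target data at $S_{-1,2}$, and at Minkowski ($M=0$) the linearized charges $\QQ_0,\QQ_2^{[\leq1]}$ at $S_{-1,2}$ are exactly determined by conservation laws from $S_{-1,1}$, hence from the fixed ingoing data $x_{0,1}$; they are therefore independent of the perturbation of $x_{-1,2}$, so the derivative of the first term vanishes at $M=0$. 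The $O(M)$ correction comes from the Schwarzschild transport equations (Lemma \ref{LEMlinearizedtransportChargesSS} and \eqref{EQMsmalllinearizedEPLGSS99}), and the $O(\varepsilon)$ from working in the $\varepsilon$-neighbourhood rather than at the reference data, exactly as in Section \ref{SECconclusion27778}. If you state this explicitly, the argument closes; without it, the derivative estimate is asserted rather than proved.
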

\ni The proof of Proposition \ref{PROPchargluingW} is a slight generalization of the proof of Theorem \ref{THMtransversalHIGHERv2} based on the additional localized sphere perturbation $W$, see Section \ref{SECproofWgluing}.

\section{Linearized characteristic gluing at Minkowski} \label{SEClinearizedProblem} 

\ni The following is the main result of this section. It shows that the linearized characteristic gluing problem at Minkowski is solvable up to a $10$-dimensional space, and forms the basis for the proof of Theorem \ref{PROPNLgluingOrthA121} in Section \ref{SECLinearizedNullConstraintsAroundMinkowski}.

\begin{theorem}[Codimension-$10$ linearized characteristic gluing at Minkowski] \label{PROPlingluing22} Given 
\begin{itemize} 
\item linearized sphere data $\dot{\mathfrak{X}}_{0,1} \in \mathcal{X}(S_{0,1})$ at $S_{0,1}$,
\item linearized matching data $\dot{\MMf}_{0,2} \in \ZZ_\MMf(S_{0,2})$ at $S_{0,2}$,
\item linearized source terms $(\mfq_i)_{1\leq i \leq 10} \in \mathcal{Z}_\CC$ on $\HH_{0,[1,2]}$, 
\end{itemize} 
there exist 
\begin{itemize}
\item linearized null data $\xd \in \mathcal{X}(\HH_{0,[1,2]})$ on $\HH_{0,[1,2]}$,
\item linearized perturbation functions $\dot f$ and $(\dot{j}^1,\dot{j}^2)$ at $S_{0,2}$ 
\end{itemize}
such that
\begin{align*} 
\begin{aligned} 
\dot{\FF}^0(\xd, \dot{f}, (\dot{j}^1,\dot{j}^2)) = \lrpar{\dot{\mathfrak{X}}_{0,1}, \dot{\MMf}_{0,2}, (\mfq_i)_{1\leq i \leq 10}},
\end{aligned} 
\end{align*}
that is,
\begin{align} 
&\dot{\CC}_i(\xd)=\, \mfq_i \,\, \text{ on } \HH_{0,[1,2]} \text{ for } 1\leq i \leq 10, \label{EQlinearizedOPsystem2} \\
&\xd\vert_{S_{0,1}} =\, \dot{\mathfrak{X}}_{0,1}, \,\,  \MMf\lrpar{\xd \vert_{S_{0,2}}-\dot{\PP}^0_{(j^1,j^2)}(\dot{j}^1,\dot{j}^2) -\dot{\PP}^0_f\lrpar{\dot{f}}}=\, {\dot{\MMf}_{0,2}}. \label{EQmatchingLinSectionGOAL}
\end{align}
Moreover, the following estimate holds,
\begin{align} 
\begin{aligned} 
&\Vert \xd \Vert_{\mathcal{X}(\HH_{0,[1,2]})} + \Vert \dot f \Vert_{\YY_f} + \Vert (\dot{j}^1,\dot{j}^2) \Vert_{\YY_{({j}^1,{j}^2)}}+\left\Vert \dot{\PP}^0_{(j^1,j^2)}(\dot{j}_1,\dot{j}_2) \right\Vert_{\mathcal{X}(S_{0,2})}+\left\Vert \dot{\PP}^0_f\lrpar{\dot{f}} \right\Vert_{\mathcal{X}(S_{0,2})}\\
 \les& \Vert \dot{\mathfrak{X}}_{0,1} \Vert_{\mathcal{X}(S_{0,1})}+\Vert \dot{\MMf}_{0,2} \Vert_{\ZZ_\MMf(S_{0,2})}+ \Vert (\mfq_i)_{1\leq i \leq 10} \Vert_{\mathcal{Z}_\CC}.
\end{aligned} \label{EQlinearEstimateTOPROVE1233}
\end{align}
\end{theorem}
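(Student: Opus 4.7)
The plan is to solve the codimension-$10$ linearized system by a combination of (i) a direct integration of the linearized null structure equations with a carefully chosen set of weighted integrals of the free data $(\dot\Om,\dot{\chih})$, and (ii) a compensation of the gauge-dependent obstruction charges by the linearized sphere perturbations $\dot{\PP}^0_f$ and $\dot{\PP}^0_{(j^1,j^2)}$.

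\textbf{Step 1: Derivation of representation formulas.} Starting from the linearized constraint functions $\dot{\CC}_i$ computed in Lemma \ref{LEMlinearizedConstraints}, I would integrate the hierarchy from $v=1$ to $v=2$, plugging in $\dot{\CC}_i=\mfq_i$ on the right-hand side. Because the linearized null structure equations are \emph{nilpotent} in the sense of Section \ref{SECconformalMethod1112}, this produces an explicit representation of each component of $\xd|_{S_{0,2}}$ in terms of $\xd|_{S_{0,1}}$, the sources $(\mfq_i)$, and weighted $v$-integrals of the free data $\dot\Om$ and $\chihd$ along $\HH_{0,[1,2]}$. As indicated in Section \ref{SEClineqslincharglui}, the essential point is that these formulas exhibit a \emph{hierarchy of $r$-weights}: the integrands appear as $\int_1^2 v^k \dot\Om\, dv$ and $\int_1^2 v^k \chihd\, dv$ with a finite list of distinct exponents $k$.

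\textbf{Step 2: Conserved charges and gauge-dependent/gauge-invariant split.} Some radial weights coincide between representation formulas, leading to linear combinations whose right-hand sides collapse and are therefore conserved along $\HH_{0,[1,2]}$; this defines the full set of charges $\QQ_0,\dots,\QQ_7$ (as in \eqref{EQrefINTROCHARGES0101999} and its generalizations, to be carried out in detail in Section \ref{SEClinearizedCHARGESMinkowskiSTEF89}). Using the formulas for $\dot{\PP}^0_f$ in Lemma \ref{LEMlinearizedTransversal} and for $\dot{\PP}^0_{(j^1,j^2)}$ in Lemma \ref{LEMspherediffLIN}, I would verify that all $\QQ_i$ except for the combinations corresponding to $(\mathbf{E},\mathbf{P},\mathbf{L},\mathbf{G})$ can be made to assume any prescribed value by an appropriate choice of $(\dot f,\dot j^1,\dot j^2)$: i.e., the linear map from $(\dot f,\dot j^1,\dot j^2)$ into the space of gauge-dependent charges is surjective, with a bounded right inverse (controlled via the Hodge theory on $\SSS^2$ recalled in Appendix \ref{SECellEstimatesSpheres}). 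By construction, the matching map $\MMf$ is defined (Definition \ref{DEFmatchingMAP}) precisely so that its range is complementary to the $10$-dimensional gauge-invariant charge space.

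\textbf{Step 3: Prescription of the free data.} Having absorbed the gauge-dependent part of the obstruction into the perturbation functions, the remaining task is to prescribe the free data $(\dot\Om,\chihd)$ on $\HH_{0,[1,2]}$ so that the weighted integrals take the values dictated by the target matching data $\dot\MMf_{0,2}$ and by $\dot{\mathfrak{X}}_{0,1}$, $(\mfq_i)$. Because only finitely many distinct weights $v^k$ appear, and since the monomials $\{v^k\}$ are linearly independent on $[1,2]$, I would solve this by making a finite-dimensional ansatz $\dot\Om(v,\theta)=\sum_k a_k(\theta) v^k$ and similarly for $\chihd$, with spherical-mode-by-spherical-mode coefficients $a_k(\theta)$ chosen to invert the explicit Vandermonde-type linear system. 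Regularity on the sphere is preserved in each mode, and the construction is bounded in the norms of Definition \ref{DEFnormHH}, yielding the estimate \eqref{EQlinearEstimateTOPROVE1233} after combining with the boundedness of $\dot{\PP}^0_f$, $\dot{\PP}^0_{(j^1,j^2)}$ from Lemma \ref{LEMestimatesSpherePerturbations}.

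\textbf{Main obstacle.} The delicate step is Step 2: verifying that the finite-dimensional surjectivity of the linearized perturbation maps onto the gauge-dependent charges is \emph{exactly} complementary to the gauge-invariant charge space $(\mathbf{E},\mathbf{P},\mathbf{L},\mathbf{G})$, with sharp spectral control on every spherical harmonic mode $l\geq 0$. This requires a spectral analysis of the operators $\DDd_1^\ast$, $\DDd_2^\ast$ (Appendix \ref{SECellEstimatesSpheres}) acting on the perturbation data, combined with a careful bookkeeping of which modes of each charge $\QQ_i$ are captured by $\dot f$ (scalar on $\SSS^2$), by $(\dot j^1,\dot j^2)$ (via $\dot q_1,\dot q_2$ of Lemma \ref{LEMspherediffLIN}), and by the interaction between them. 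Once this spectral inversion is established, the remaining pieces of the proof are mechanical integrations and norm estimates.
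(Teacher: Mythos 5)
Your proposal follows the paper's proof essentially step for step: the paper also integrates the linearized constraints to obtain weighted-integral representation formulas (Section \ref{SECprelimAnalysis}), identifies the conserved charges $\QQ_0,\dots,\QQ_7$ and splits them into gauge-invariant/gauge-dependent parts with a spectral inversion of $\dot{\PP}^0_f,\dot{\PP}^0_{(j^1,j^2)}$ mode by mode (Sections \ref{SEClinearizedCHARGESMinkowskiSTEF89}--\ref{SECgaugeTheorySTEF89}), and then prescribes $(\Omd,\chihd)$ with a finite set of boundary and weighted-integral constraints (Sections \ref{SECsecondchoice12}--\ref{SECboundslinearizedSolution112345}, Lemma \ref{LEMtechnicalEXPLICITFORMULA}). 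Your Vandermonde-type ansatz is a slight repackaging of the paper's orthogonality construction in Lemma \ref{LEMtechnicalEXPLICITFORMULA}, and your highlighted ``main obstacle'' (verifying the mode-by-mode surjectivity of the perturbation maps exactly onto the complement of $(\mathbf{E},\mathbf{P},\mathbf{L},\mathbf{G})$) is indeed the content of Proposition \ref{PROPadjustmentCharges}, which the paper resolves through the explicit Fourier computations \eqref{EQQ0999}--\eqref{EQQ6999}.
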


\ni We proceed as follows. 
\begin{enumerate} 

\item In Section \ref{SEClinearizedCHARGESMinkowskiSTEF89} we define the \emph{charges} $\QQ_i$ for $0\leq i \leq7$ which satisfy \emph{conservation laws} along $\HH$ by the linearized null structure equations at Minkowski.

\item In Section \ref{SECgaugeTheorySTEF89} we analyze how the charges $\QQ_i$ split into \emph{gauge-invariant} and \emph{gauge-dependent} charges.

\item In Section \ref{SECprelimAnalysis} we analyze the system \eqref{EQlinearizedOPsystem2}. We integrate the transport equations \eqref{EQlinearizedOPsystem2} to get representation formulas for metric coefficients, Ricci coefficients and null curvature components. 

\item In Section \ref{SECsolutionLinearized99902} we prove Theorem \ref{PROPlingluing22} by solving the linearized characteristic gluing problem.

\end{enumerate}

\begin{remark} \label{REMARKfreedatalinear} While in the non-linear setting we interpreted $\Om$ and $\gd_c$ as free conformal data, see Definition \ref{DEFconformalSEED}, in the linearized setting we choose $\Omd$ and $\chihd$ as degrees of freedom. Indeed, by the linearized equation
$$D\gdcd = \frac{2}{r^2} \chihd + \frac{1}{r^2} {\mfq_3}$$
the two approaches are equivalent at the linear level. Consequently, in our approach to the linearized gluing problem, the gluing of the following quantities is trivial,
$$\lrpar{\Omd,  \omd,  D\omd, \chihd,  D\chihd}.$$ \end{remark}

\begin{remark}[Linearized characteristic gluing of higher-order $L$-derivatives I] \label{REMhigherLgluing1} Let $m\geq1$ be an integer. In the following we also show that if one additionally prescribes the tuple
\begin{align*} 
\begin{aligned} 
\dot{\DD}^{L,m} :=& \lrpar{\widehat{D}\ad, \dots, \widehat{D}^m \ad, D^2\omd, \dots, D^{m+2} \omd}
\end{aligned} 
\end{align*}
on $S_{0,1}$ and $S_{0,2}$, denoted by $\dot{\DD}^{L,m}_{0,1}$ and $\dot{\DD}^{L,m}_{0,2}$, respectively, then we can construct $\xd$ such that it satisfies, in addition to \eqref{EQmatchingLinSectionGOAL},
\begin{align*} 
\begin{aligned} 
\dot{\DD}^{L,m}(\xd) \vert_{S_{0,1}} = \dot{\DD}^{L,m}_{0,1}, \,\, \dot{\DD}^{L,m}(\xd) \vert_{S_{0,2}} = \dot{\DD}^{L,m}_{0,2}.
\end{aligned} 
\end{align*}
In this setting, the right-hand side of \eqref{EQlinearEstimateTOPROVE1233} is replaced by
\begin{align*} 
\begin{aligned} 
&\Vert \dot{\mathfrak{X}}_1 \Vert_{\XX(S_{0,1})}+\Vert \dot{\MMf}_2 \Vert_{\ZZ_\MMf(S_{0,2})}+ \Vert (\mfq_i)_{1\leq i \leq 10} \Vert_{\mathcal{Z}_\CC} \\
&+ \sum\limits_{0\leq i \leq m} \lrpar{ \Vert \widehat{D}^i \ad \Vert_{H^6(S_{0,1})} + \Vert D^i \omd \Vert_{H^6(S_{0,1})} + \Vert \widehat{D}^i \ad \Vert_{H^6(S_{0,2})} + \Vert D^i \omd \Vert_{H^6(S_{0,2})} }.
\end{aligned} 
\end{align*}
\end{remark}

\subsection{Conserved charges $\QQ_i$ for the linearized equations} \label{SEClinearizedCHARGESMinkowskiSTEF89} The following \emph{charges} $\QQ_i$, $0\leq i \leq 7$, play an essential role in the characteristic gluing problem. In Section \ref{SECprelimAnalysis} we prove that the linearized null structure equations at Minkowski \eqref{EQlinearizedOPsystem2} (see also Lemma \ref{LEMlinearizedConstraints}) imply \emph{conservation laws} for the following \emph{charges}, see Lemmas \ref{CORtransport1EQS}, \ref{LEMgaugeconservationlaws} and \ref{CORtransportEQS3333}.
\begin{align} 
\begin{aligned} 
\QQ_0 :=& r^2 \etad^{[1]} + \frac{r^3}{2}\di \lrpar{\omtrchid^{[1]}-\frac{4}{r}\Omd^{[1]}}, \\
\QQ_1 :=&\frac{r}{2} \lrpar{\omtrchid-\frac{4}{r}\Omd} + \frac{\phid}{r}, \\
\QQ_2 :=& r^2 \omtrchibd -\frac{2}{r}\Divdo \lrpar{r^2\etad+\frac{r^3}{2}\di\lrpar{\omtrchid-\frac{4}{r}\Omd}} \\
&-r^2 \lrpar{\omtrchid-\frac{4}{r}\Omd} +2r^3 \Kd, \\
\QQ_3:=& \frac{\chibhd}{r} -\half \lrpar{ \DDd_2^\ast \Divdo +1} \gdcd + \DDd_2^\ast \lrpar{ \etad + \frac{r}{2}\di \lrpar{\omtrchid-\frac{4}{r}\Omd}} \\
&- r \DDd_2^\ast \di \lrpar{\omtrchid-\frac{4}{r}\Omd}, \\
\QQ_4 :=& \frac{\abd_\psi}{r} + 2 \DDd_2^\ast \lrpar{\frac{1}{r^2} \Divdo \chibhd - \frac{1}{r} \etad- \half \di \omtrchid + \DDd_1^\ast \lrpar{\ombd,0}}_{\psi},\\
\QQ_5 :=& \ombd^{[\leq1]} +\frac{1}{4r^2}\QQ_2^{[\leq1]} +\frac{1}{3r^3} \Divdo\QQ_0, \\
\QQ_6 :=& \Du\ombd^{[\leq1]} - \frac{1}{6r^3} (\Ldo-3) \QQ_2^{[\leq1]} +\frac{1}{r^4} \Divdo \QQ_0, \\
\QQ_7 :=& \Du\ombd^{[2]} +\frac{3}{2r^3} \QQ_2^{[2]}+ \frac{1}{2r^2} \Divdo \Divdo \QQ_3^{[2]} -\frac{12}{r^2} \QQ_1^{[2]} \\
&+{ \frac{3}{2r^2} \Divdo \lrpar{\eta+ \frac{r}{2}\di\lrpar{\omtrchid-\frac{4}{r}\Omd}} }^{[2]} - {\frac{3}{4r^2} \Divdo \Divdo \gdcd }^{[2]},
\end{aligned} \label{EQdefChargesMinkowski8891}
\end{align}
where $\psi$ denotes the electric part of a tracefree symmetric $2$-tensor, see Appendix \ref{SECellEstimatesSpheres}. \\

\ni \emph{Remarks on the charges $\QQ_i$, $0\leq i \leq 7$, in \eqref{EQdefChargesMinkowski8891}.}
\begin{enumerate}

\item By explicit calculation (see \eqref{EQlinEPLGKerr1} and \eqref{EQlinEPLGKerr2}), the linearizations $(\dot{\mathbf{E}}, \dot{\mathbf{P}},\dot{\mathbf{L}},\dot{\mathbf{G}})$ of $(\mathbf{E},\mathbf{P},\mathbf{L},\mathbf{G})$ at Minkowski are related to the above charges for $m=-1,0,1$ by
\begin{align} 
\begin{aligned} 
-\frac{8\pi}{\sqrt{4\pi}} \dot{\mathbf{E}} =& -\frac{1}{2} \QQ_2^{(0)}, & - \frac{8\pi}{\sqrt{\frac{4\pi}{3}}}\dot{\mathbf{P}}^m=& -\frac{1}{2} \QQ_2^{(1m)}, \\ 
\frac{16\pi}{\sqrt{\frac{8\pi}{3}}}\dot{\mathbf{L}}^m =& 2 (\QQ_0)_H^{(1m)}, & \frac{16\pi}{\sqrt{\frac{8\pi}{3}}}\dot{\mathbf{G}}^m =& 2 (\QQ_0)_E^{(1m)}.
\end{aligned} \label{EQREMchargeslinatMinkowski}
\end{align}

\item The $\QQ_5$ and $\QQ_6$ in \eqref{EQdefChargesMinkowski8891} are equal to the linearizations at Minkowski of $\tilde{\QQ}_5$ and $\tilde{\QQ}_6$ defined in context of the matching map $\MMf$, see Definition \ref{DEFmatchingMAP}.

\end{enumerate}

\subsection{Gauge dependence of the conserved charges $\QQ_i$} \label{SECgaugeTheorySTEF89}

In this section we show that the charges $\QQ_i$, $0\leq i \leq 7$, of Section \ref{SEClinearizedCHARGESMinkowskiSTEF89} split into \emph{gauge-invariant} and \emph{gauge-dependent} charges. The following is the main result of this section.

\begin{proposition} \label{PROPadjustmentCharges} The following holds. 
\begin{enumerate}
\item \textbf{Gauge-invariant charges.} For any linearized perturbation functions $\dot f$ and $(\dot{j}^1, \dot{j}^2)$ on $S_2$ it holds that
\begin{align*} 
\begin{aligned} 
\QQ_0\lrpar{\dot{\PP}_f(\dot f)+\dot{\PP}_{({j}^1,{j}^2)}(\dot{j}^1,\dot{j}^2)}=& 0, \\
\QQ_2^{[\leq 1]}\lrpar{\dot{\PP}_f(\dot f)+\dot{\PP}_{({j}^1,{j}^2)}(\dot{j}^1,\dot{j}^2)}=& 0,
\end{aligned} 
\end{align*}
that is, the $10$-dimensional space of charges $\QQ_0$ and $\QQ_2^{[\leq1]}$ is invariant under linearized perturbations $\dot{\PP}_f$ and $\dot{\PP}_{(j^1,j^2)}$ of sphere data.

\item \textbf{Gauge-dependent charges.} Let $(\QQ_1)_0$, $(\QQ_2)_0$, $(\QQ_5)_0$, $(\QQ_6)_0$ and $(\QQ_7)_0$ be scalar functions on $S_2$ such that
\begin{align} 
\begin{aligned} 
(\QQ_2)_0^{[\leq1]} = 0, \,\, (\QQ_5)_0^{[\geq2]} = 0, \,\,(\QQ_6)_0^{[\geq2]} = 0, \,\, (\QQ_7)_0=(\QQ_7)_0^{[2]},
\end{aligned} \label{EQchargematchingmodeconditions555}
\end{align}
and let $(\QQ_3)_0$ and $(\QQ_4)_0$ be symmetric tracefree $2$-tensors on $S_2$ such that
\begin{align} 
\begin{aligned} 
(\QQ_4)_0 =\lrpar{(\QQ_4)_0}_{\psi},
\end{aligned} \label{EQQ4electricAssumption}
\end{align}
where $\psi$ denotes the electric part of a symmetric tracefree $2$-tensor. Assume that
\begin{align*} 
\begin{aligned} 
\Vert (\QQ_1)_0 \Vert_{H^6(S_2)} + \Vert (\QQ_2)_0 \Vert_{H^4(S_2)} + \Vert (\QQ_3)_0 \Vert_{H^4(S_2)}+ \Vert (\QQ_4)_0 \Vert_{H^2(S_2)} <& \infty, \\
\Vert (\QQ_5)_0 \Vert_{H^4(S_2)} + \Vert (\QQ_6)_0 \Vert_{H^2(S_2)}+ \Vert (\QQ_7)_0 \Vert_{H^2(S_2)} <& \infty.
\end{aligned} 
\end{align*}
Then there exist linearized perturbation functions $\dot f$ and $(\dot{j}^1,\dot{j}^2)$ at $S_2$ such that
\begin{align} \begin{aligned} \label{EQchargeMATCH}
\QQ_i\lrpar{\dot{\PP}_f(\dot f)+\dot{\PP}_{({j}^1,{j}^2)}(\dot{j}^1,\dot{j}^2)}=&(\QQ_i)_0 \,\, \text{ for } 1\leq i \leq 7,
\end{aligned} \end{align}
and satisfying
\begin{align} \begin{aligned} 
&\Vert \dot f \Vert_{\YY_f} + \Vert (\dot{j}^1,\dot{j}^2) \Vert_{\YY_{({j}^1,{j}^2)}}+ \Vert \dot{\PP}_f(\dot f) \Vert_{\mathcal{X}(S_2)} +\Vert \dot{\PP}_{({j}^1,{j}^2)}(\dot{j}^1,\dot{j}^2) \Vert_{\mathcal{X}(S_2)} \\
\les& \Vert (\QQ_1)_0 \Vert_{H^6(S_2)} + \Vert (\QQ_2)_0 \Vert_{H^4(S_2)} + \Vert (\QQ_3)_0 \Vert_{H^4(S_2)}+ \Vert (\QQ_4)_0 \Vert_{H^2(S_2)}\\
&+ \Vert (\QQ_5)_0 \Vert_{H^4(S_2)}+ \Vert (\QQ_6)_0 \Vert_{H^2(S_2)}+ \Vert (\QQ_7)_0 \Vert_{H^2(S_2)}.
\end{aligned} \label{EQquantEstimChargeLin}\end{align}
\end{enumerate}
\end{proposition}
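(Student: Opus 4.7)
The plan is to establish the two parts of Proposition \ref{PROPadjustmentCharges} in turn. For Part (1), the argument is by direct substitution using the explicit formulas from Lemmas \ref{LEMlinearizedTransversal} and \ref{LEMspherediffLIN}. Since $\dot{\PP}_{(j^1,j^2)}$ only affects $\phid$ and $\gdcd$, neither of which appears in $\QQ_0$, it suffices to check the contribution of $\dot{\PP}_{f}$. A direct calculation combining the formulas for $\etad$, $\omtrchid$, and $\Omd$ yields
\begin{align*}
r^2\etad + \frac{r^3}{2}\di\lrpar{\omtrchid - \frac{4}{r}\Omd} = r\,\di (\Ldo + 2) \dot f,
\end{align*}
and projecting onto mode $l=1$ gives zero since $(\Ldo + 2)$ annihilates the $l=1$ subspace of scalar spherical harmonics. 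For $\QQ_2^{[\leq 1]}$, substitution of the perturbation formulas together with the linearized Gauss equation \eqref{EQgaussDHR} and standard identities for the operators $\DDd_1^\ast, \DDd_2^\ast, \Divdo$ on the round sphere (Appendix \ref{SECellEstimatesSpheres}) produces a cancellation on modes $l = 0, 1$, using that $\Ldo$ and $\Ldo + 2$ kill scalar modes $l = 0$ and $l = 1$ respectively.

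For Part (2), I set $W := \dot{\PP}_f(\dot f) + \dot{\PP}_{(j^1,j^2)}(\dot j^1,\dot j^2)$ and write $f_k := \pr_u^k \dot f(0)$ for $k = 0,1,2,3$. Substituting Lemmas \ref{LEMlinearizedTransversal} and \ref{LEMspherediffLIN} into the charge definitions \eqref{EQdefChargesMinkowski8891}, each $\QQ_i(W)$ becomes an explicit linear differential expression in $(f_0, f_1, f_2, f_3, \dot q_1, \dot q_2)$ built from $\Ldo$, $\DDd_1^\ast$, and $\DDd_2^\ast$. I then solve \eqref{EQchargeMATCH} by reducing to a triangular system of elliptic equations on $S_2$. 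First, the magnetic part of $\QQ_3$ depends only on $\dot q_2$ through $\DDd_2^\ast \DDd_1^\ast(0, \dot q_2)$, and the resulting operator is elliptic on modes $l \geq 2$, determining $\dot q_2$ on those modes. Next, the electric part of $\QQ_3$ combined with $\QQ_2^{[\geq 2]}$ yields an elliptic system for $(\dot q_1, f_0)$ on modes $l \geq 2$, invertible by the Hodge theory on $S^2$ of Appendix \ref{SECellEstimatesSpheres}. Then $\QQ_1 = (\QQ_1)_0$ determines $f_1$ algebraically in terms of the already-fixed data. The target $(\QQ_4)_0$, which is electric by the assumption \eqref{EQQ4electricAssumption}, becomes a linear expression in $f_2$ after substitution of the other unknowns, and its principal operator is again elliptic on modes $l \geq 2$. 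Finally, the residual low modes $f_0^{[\leq 1]}, f_1^{[\leq 1]}, f_2^{[\leq 1]}$ and the mode $f_3^{[2]}$ are fixed by matching $\QQ_5^{[\leq 1]}, \QQ_6^{[\leq 1]}, \QQ_7^{[2]}$; the last of these stages is triangular in $f_3^{[2]}$ because $f_3$ enters only through $\Du\ombd$ and hence only through $\QQ_7$.

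The bound \eqref{EQquantEstimChargeLin} then follows from standard elliptic estimates for $\Ldo$, $\DDd_1^\ast \DDd_1$, and $\DDd_2^\ast \DDd_2$ on the round sphere (Appendix \ref{SECellEstimatesSpheres}) applied mode by mode, together with the boundedness of $\dot{\PP}^0_f$ and $\dot{\PP}^0_{(j^1,j^2)}$ of Lemma \ref{LEMestimatesSpherePerturbations}, and the regularity hierarchy encoded in the norm $\YY_f$ of Definition \ref{DEFnormfperturbations}. The main obstacle is verifying at each step that the elliptic operator being inverted has trivial kernel on the relevant mode subspace, and that the orthogonality restrictions \eqref{EQchargematchingmodeconditions555}--\eqref{EQQ4electricAssumption} match exactly the range conditions imposed by the kernels of $\Ldo$, $\DDd_1^\ast$, and $\DDd_2^\ast$ on low modes, so that the triangular decomposition indeed produces a solution with the derivative counts exactly compatible with the target spaces $\XX(S_2)$, $\YY_f$, and $\YY_{(j^1,j^2)}$.
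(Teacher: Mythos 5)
Your proof of Part (1) is essentially the paper's argument, and your explicit calculation $r^2\etad + \tfrac{r^3}{2}\di\lrpar{\omtrchid - \tfrac{4}{r}\Omd} = r\,\di(\Ldo+2)\dot f$ correctly captures why $\QQ_0$ vanishes. For $\QQ_2^{[\leq1]}$, the key fact (which the paper computes explicitly and you gesture at correctly) is that $\QQ_2 = -2\Ldo(\Ldo+2)\dot f$ for the transversal perturbation and $\QQ_2 \equiv 0$ for the angular perturbation, so the $l=0,1$ projections vanish.

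Your Part (2), however, contains a genuine gap in the proposed triangular reduction. You claim that the electric part of $\QQ_3$ combined with $\QQ_2^{[\geq2]}$ determines $(\dot q_1, f_0)$, after which $\QQ_1$ determines $f_1 = \pr_u\dot f$ algebraically. This ordering does not work. From the explicit formulas \eqref{EQQ1999} and \eqref{EQQ3999}, both $\QQ_1$ and the electric part of $\QQ_3$ depend linearly on \emph{both} $f_1$ and $\dot q_1$: on each Fourier mode $(lm)$ with $l\geq 2$ the two equations form the $2\times 2$ system with coefficient matrix
\begin{align*}
\begin{bmatrix} -1 & -\tfrac{l(l+1)}{2}\\ -2 & -\tfrac{l(l+1)}{2}\end{bmatrix},
\end{align*}
whose nonzero determinant $-\tfrac{l(l+1)}{2}$ is precisely what the paper needs to invert (see \eqref{EQelectric112}). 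Since $\QQ_2$ depends only on $f_0$, the combination "$\QQ_3$ electric $+$ $\QQ_2^{[\geq2]}$" has three unknowns $(f_0, f_1, \dot q_1)$ in two equations, and one cannot then recover $f_1$ from $\QQ_1$ alone because $\QQ_1$ still contains $\dot q_1$. The coupled $2\times2$ structure is the crux of this stage, and your decoupled scheme either omits it or misrepresents it. Additionally, your bookkeeping of the low modes is off: $\QQ_5$, $\QQ_6$, $\QQ_7$ involve $f_2$, $f_3$ (and $f_0$ for $\QQ_7$) but not $f_1$, so $\QQ_5^{[\leq1]}, \QQ_6^{[\leq1]}, \QQ_7^{[2]}$ cannot "fix" $f_0^{[\leq1]}, f_1^{[\leq1]}$. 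In the paper, $f_0^{[\leq1]}$ is set to zero (it is a free choice, not determined by any $(\QQ_i)_0$), $f_1^{[\leq1]}$ is determined by $\QQ_1^{[\leq1]}$, $f_2^{[\leq1]}$ by $\QQ_5^{[\leq1]}$, $f_3^{[\leq1]}$ by $\QQ_6^{[\leq1]}$, and $f_3^{[2]}$ by $\QQ_7^{[2]}$. Your account omits the mode $f_3^{[\leq1]}$ entirely and misattributes the others. Once you fix the $\QQ_1$--$\QQ_3$ coupling and the low-mode assignments, the remainder (the elliptic/Fourier-multiplier estimates and the appeal to Lemma \ref{LEMestimatesSpherePerturbations}) does go through as you indicate.
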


\ni The rest of this section is concerned with the proof of Proposition \ref{PROPadjustmentCharges}. First, for linearized transversal perturbations, see Lemma \ref{LEMlinearizedTransversal}, the charges on $S_2$ are calculated to be
\begin{align*} \begin{aligned} 
\QQ_0 =& 0, & \QQ_1 =& \frac{1}{2}\Ldo \dot{f}- \pr_{u} \dot{f}, \\
 \QQ_2 =& -2 \Ldo (\Ldo+2) \dot{f}, & \QQ_3 =& \DDd_2^\ast \di \lrpar{2 \pr_{u} \dot{f}}  - \frac{1}{2} \DDd_2^\ast \di \lrpar{ \Ldo  \dot f} , \\
  \QQ_4 =&\DD_2^\ast \DDd_1^\ast (\pr_u^2 \dot f,0), & \QQ_5 =& \half \pr_u^2 \dot{f}^{[\leq 1]} , \\
   \QQ_6 =&\half \pr_u^3 \dot{f}^{[\leq1]}, & \QQ_7 =&\half \pr_u^3 \dot{f}^{[2]}-3 \dot{f}^{[2]}.
\end{aligned} \end{align*}

\ni Second, for linearized angular perturbations, see Lemma \ref{LEMspherediffLIN}, the charges on $S_2$ are calculated to be
\begin{align*} \begin{aligned} 
\QQ_0 =&0, & \QQ_1 =& \frac{1}{2} \Ldo \dot{q}_1, \\
 \QQ_2 =& 0, & \QQ_3 =&  - \lrpar{\DDd_2^\ast \Divdo +1}\lrpar{\DDd_2^\ast \DDd_1^\ast(\dot{q}_1,\dot{q}_2)} , \\ 
  \QQ_4 =& 0, & \QQ_5 =& 0 ,\\
    \QQ_6 =& 0, & \QQ_7 =&0,
\end{aligned} \end{align*}
where we recall that the pairs of scalar functions $(\dot{q}_1,\dot{q}_2)$ and $(\dot{j}^1,\dot{j}^2)$ are related by
\begin{align} 
\begin{aligned} 
\dot{j}^1 d\th^1 + \dot{j}^2 d\th^2 = - \DDd_1^\ast(\dot{q}_1,\dot{q}_2).
\end{aligned} \label{EQrelationjQ444}
\end{align}

\ni Summing up the above two yields
\begin{subequations}
\begin{align}
\QQ_0 =& 0, \label{EQQ0999}\\
\QQ_1 =& \frac{1}{2}\Ldo \dot{f}- \pr_{u} \dot{f} +\frac{1}{2} \Ldo \dot{q}_1, \label{EQQ1999}\\
 \QQ_2 =& -2 \Ldo (\Ldo+2) \dot{f}, \label{EQQ2999}\\
 \QQ_3 =& \DDd_2^\ast \di \lrpar{2 \pr_{u} \dot{f}}  - \frac{1}{2} \DDd_2^\ast \di \lrpar{ \Ldo  \dot{f}} - \lrpar{\DDd_2^\ast \Divdo +1}\lrpar{\DDd_2^\ast \DDd_1^\ast(\dot{q}_1,\dot{q}_2)} ,\label{EQQ3999}\\
  \QQ_4 =& \DD_2^\ast \DDd_1^\ast (\pr_u^2 \dot f,0), \label{EQQ4999}\\
   \QQ_5 =& \half \pr_u^2 \dot{f}^{[\leq 1]} , \label{EQQ5999}\\
    \QQ_6 =&\half \pr_u^3 \dot{f}^{[\leq2]}-3 \dot{f}^{[2]}.\label{EQQ6999}
\end{align}
\end{subequations}

\ni The right-hand side of \eqref{EQQ2999} has vanishing projection on the modes $l=0$ and $l=1$, hence \eqref{EQQ0999} and \eqref{EQQ2999} imply (1) of Proposition \ref{PROPadjustmentCharges}.

In the following we prove (2) of Proposition \ref{PROPadjustmentCharges} by determining $ \dot{f}, \pr_{u}  \dot{f}, \pr_u^2 \dot{f},  \pr_u^3 \dot{f}$ and $\dot{q}_1$ and $\dot{q}_2$ from \eqref{EQQ1999}-\eqref{EQQ6999} such that \eqref{EQchargeMATCH} is satisfied.\\

\textbf{(1) Definition of $ \dot f$ on $S_2$.} To solve \eqref{EQQ2999}, define the scalar function $\dot f=\dot{f}(0)$ on $S_2$ as solution to
\begin{align} \begin{aligned} \label{DEFtildeFS2matching}
 -2 \Ldo (\Ldo+2) \dot{f} = (\QQ_2)_0.
\end{aligned} \end{align}
with the additional condition that $\dot{f}^{[\leq 1]}=0$. In Fourier space, \eqref{DEFtildeFS2matching} is equivalent for $l\geq2$, $m=-l,\dots,l$ to
\begin{align*} \begin{aligned} 
-2 (-l(l+1))(-l(l+1)+2) \dot{f}^{(lm)} = {(\QQ_2)_0}^{(lm)},
\end{aligned} \end{align*}
which yields
\begin{align*} \begin{aligned} 
\dot{f}^{(lm)} = \frac{1}{2l(l+1)(-l(l+1)+2)} {(\QQ_2)_0}^{(lm)}.
\end{aligned} \end{align*}
Hence $\dot f$ is well-defined and bounded by
\begin{align} \begin{aligned} 
\Vert \dot f \Vert_{H^8(S_2)} \les \Vert (\QQ_2)_0 \Vert_{H^4(S_2)}.
\end{aligned} \label{EQESTIMQmatching1}\end{align}

\ni \textbf{(2) Definition of $\pr_{u}  \dot f$, $\dot{q}_1$ and $\dot{q}_2$ on $S_2$.} To solve \eqref{EQQ1999} and \eqref{EQQ3999}, the scalar functions $\pr_{u} \dot f$, $\dot{q}_1$ and $\dot{q}_2$ on $S_2$ have to solve
\begin{align} \begin{aligned} 
- \pr_{u} \dot{f} +\frac{1}{2} \Ldo \dot{q}_1 =& (\QQ_1)_0- \frac{1}{2}\Ldo \dot{f},\\
\DDd_2^\ast \di \lrpar{2 \pr_{u} \dot{f}}  - \lrpar{\DDd_2^\ast \Divdo +1}\lrpar{\DDd_2^\ast \DDd_1^\ast(\dot{q}_1,\dot{q}_2)} =& (\QQ_3)_0+ \frac{1}{2} \DDd_2^\ast \di \lrpar{ \Ldo  \dot f}.
\end{aligned} \label{EQq1q3matchingmatrixeq} \end{align}

\ni First, while the second of \eqref{EQq1q3matchingmatrixeq} has no $l=0$ and $l=1$ mode, the $l=0$ and $l=1$ mode of the first of \eqref{EQq1q3matchingmatrixeq} can be solved by prescribing, for $l=0,1$,
\begin{align*} 
\begin{aligned} 
\dot{q}_1^{[l]}=0, \,\, \pr_{u} \dot{f}^{[l]} = - \lrpar{(\QQ_1)_0- \frac{1}{2}\Ldo \dot{f}}^{[l]} = - (\QQ_1)_0^{[l]},
\end{aligned} 
\end{align*}
where we used that $\dot{f}=\dot{f}^{[\geq2]}$. This implies that
\begin{align} 
\begin{aligned} 
\Vert \pr_{u} \dot{f}^{[\leq1]} \Vert_{H^6(S_2)} \les \Vert (\QQ_1)_0 \Vert_{H^6(S_2)}.
\end{aligned} \label{EQESTIMQmatching2}
\end{align}

\ni Second, considering the electric part of \eqref{EQq1q3matchingmatrixeq} in Fourier space, we get that for $l\geq2$,
\begin{align} \begin{aligned} 
- (\pr_{u} \dot{f})^{[lm]} - \frac{l(l+1)}{2} \dot{q}_1^{[lm]} =& \lrpar{(\QQ_1)_0- \frac{1}{2}\Ldo \dot{f}}^{[lm]}, \\
- 2 {\pr_{u}  \dot{f}}^{[lm]} -  \frac{ l(l+1)}{2} \dot{q}_1^{[lm]} =&\frac{ \lrpar{(\QQ_3)_0+ \frac{1}{2} \DDd_2^\ast \di \lrpar{ \Ldo  \dot{f}}}_\psi^{[lm]}}{\sqrt{l(l+1)}\sqrt{\half l(l+1)-1}}.
\end{aligned}\label{EQelectric112} \end{align}
The coefficient matrix on the left-hand side of \eqref{EQelectric112} is 
\begin{align*}
\begin{bmatrix}
-1 & -\frac{l(l+1)}{2}  \\
-2 &  -\frac{l(l+1)}{2},
\end{bmatrix}.
\end{align*}
with determinant
\begin{align*} \begin{aligned} 
\mathrm{det} = - \frac{l(l+1)}{2} \neq 0,
\end{aligned} \end{align*}
and matrix inverse
\begin{align*} \begin{aligned} 
\frac{1}{\mathrm{det}} \begin{bmatrix}
-\frac{l(l+1)}{2} & \frac{l(l+1)}{2}  \\
2 &  -1
\end{bmatrix} = \begin{bmatrix}
1 & -1  \\
-\frac{4}{l(l+1)} &  \frac{2}{l(l+1)}
\end{bmatrix}.
\end{aligned} \end{align*}
Therefore the solution to \eqref{EQelectric112} is given, for $l\geq2$, by
\begin{align*} \begin{aligned} 
 {\pr_{u}  \dot{f}}^{[lm]} =&\lrpar{(\QQ_1)_0- \frac{1}{2}\Ldo \dot{f}}^{[lm]} - \frac{ \lrpar{(\QQ_3)_0+ \frac{1}{2} \DDd_2^\ast \di \lrpar{ \Ldo  \dot{f}}}^{[lm]}_\psi}{\sqrt{l(l+1)}\sqrt{\half l(l+1)-1}}, \\
 \dot{q}_1^{[lm]} =& -\frac{4}{l(l+1)} \lrpar{(\QQ_1)_0- \frac{1}{2}\Ldo \dot{f}}^{[lm]} +\frac{2}{l(l+1)}\frac{ \lrpar{(\QQ_3)_0+ \frac{1}{2} \DDd_2^\ast \di \lrpar{ \Ldo  \dot{f}}}_\psi^{[lm]}}{\sqrt{l(l+1)}\sqrt{\half l(l+1)-1}},
\end{aligned} \end{align*}
from which we can derive with \eqref{EQESTIMQmatching1} the estimates
\begin{align} \begin{aligned} 
\Vert \pr_{u}  \dot{f}^{[\geq2]} \Vert_{H^6(S_2)} \les& \Vert (\QQ_1)_0 \Vert_{H^6(S_2)} + \Vert \dot f \Vert_{H^8(S_2)}+ \Vert (\QQ_3)_0 \Vert_{H^4(S_2)}, \\
\les& \Vert (\QQ_1)_0 \Vert_{H^6(S_2)} + \Vert (\QQ_2)_0 \Vert_{H^4(S_2)}+ \Vert (\QQ_3)_0 \Vert_{H^4(S_2)}, \\
\Vert \dot{q}_1^{[\geq2]} \Vert_{H^8(S_2)} \les& \Vert (\QQ_1)_0 \Vert_{H^6(S_2)} + \Vert \dot f \Vert_{H^8(S_2)}+ \Vert (\QQ_3)_0 \Vert_{H^4(S_2)}, \\
\les& \Vert (\QQ_1)_0 \Vert_{H^6(S_2)} + \Vert (\QQ_2)_0 \Vert_{H^4(S_2)}+ \Vert (\QQ_3)_0 \Vert_{H^4(S_2)}.
\end{aligned}\label{EQESTIMQmatching3} \end{align}

\ni Third, we consider the magnetic part of \eqref{EQq1q3matchingmatrixeq}, that is,
\begin{align} \begin{aligned} 
 - \lrpar{\DDd_2^\ast \Divdo +1}\lrpar{\DDd_2^\ast \DDd_1^\ast(0,\dot{q}_2)} =& (\QQ_3)_0.
\end{aligned} \label{EQmagneticpartmatchingcharges44}\end{align}
Going into Fourier space, \eqref{EQmagneticpartmatchingcharges44} is equivalent to, for $l\geq2$,
\begin{align*} 
\begin{aligned} 
-\half l(l+1) \sqrt{\half l(l+1)-1} \sqrt{l(l+1)} \dot{q}_2^{[lm]} =& \left((\QQ_3)_0\right)_\phi^{[lm]}.
\end{aligned} 
\end{align*}
Hence the solution $q_2=q_2^{[\geq2]}$ is well-defined and 
\begin{align} \begin{aligned} 
\Vert \dot{q}_2^{[\geq2]} \Vert_{H^8(S_2)} \les \Vert (\QQ_3)_0 \Vert_{H^4(S_2)}.
\end{aligned} \label{EQESTIMQmatching4}\end{align}

\ni Using \eqref{EQrelationjQ444} and elliptic estimates for Hodge systems of Appendix \ref{SECellEstimatesSpheres}, \eqref{EQESTIMQmatching3} and \eqref{EQESTIMQmatching4} imply that
\begin{align} 
\begin{aligned} 
\Vert (\dot{j}^1,\dot{j}^2) \Vert_{\YY_{({j}^1,{j}^2)}} \les  \Vert (\QQ_1)_0 \Vert_{H^6(S_2)} + \Vert (\QQ_2)_0 \Vert_{H^4(S_2)}+ \Vert (\QQ_3)_0 \Vert_{H^4(S_2)}.
\end{aligned} \label{EQjcontrolMATCHING4334}
\end{align}

\ni \textbf{(3) Definition of $\pr_u^2 \dot f$ on $S_2$.} To solve \eqref{EQQ4999}, define the modes $l\geq2$ of ${\pr_u^2 \dot{f}}$ by
\begin{align} 
\begin{aligned} 
(\QQ_4)_0 = \DD_2^\ast \DDd_1^\ast (\pr_u^2 \dot f,0).
\end{aligned} \label{EQQQ4matchingfirstl2equations}
\end{align}
By \eqref{EQQ4electricAssumption}, \eqref{EQQQ4matchingfirstl2equations} is well-defined and is in Fourier space given by, for $l\geq2$,
\begin{align} 
\begin{aligned} 
\lrpar{\pr_u^2 \dot{f}}^{(lm)} = \frac{1}{\sqrt{\half l(l+1)-1} \sqrt{l(l+1)}} \lrpar{(\QQ_4)_0}_\psi^{(lm)}.
\end{aligned} \label{EQpru2fFourierExpression}
\end{align}
To solve \eqref{EQQ5999}, define the modes $l\leq 1$ of $\pr_u^2 \dot f$ by 
\begin{align*} 
\begin{aligned} 
 (\QQ_5)_0 =& \half \pr_u^2 \dot{f}^{[\leq 1]},
\end{aligned} 
\end{align*}
which in Fourier modes equals, for $l\leq1$,
\begin{align} 
\begin{aligned} 
\pr_u^2 \dot{f}^{(lm)} = 2 (\QQ_5)_0^{(lm)}.
\end{aligned} \label{EQpru2fFourierExpression222}
\end{align}

\ni From \eqref{EQpru2fFourierExpression} and \eqref{EQpru2fFourierExpression222} we directly get that
\begin{align} 
\begin{aligned} 
\left\Vert \pr_u^2 \dot{f} \right\Vert_{H^4(S_2)} \les \Vert (\QQ_4)_0 \Vert_{H^2(S_2)}.
\end{aligned} \label{EQcontrolpruf2111}
\end{align}

\ni \textbf{(4) Definition of $\pr_u^3 \dot f$ on $S_2$.} To solve \eqref{EQQ6999}, define $\pr_u^3 \dot f^{[\leq1]}$ by
\begin{align*} 
\begin{aligned} 
(\QQ_6)_0 =&\half \pr_u^3 \dot{f}^{[\leq1]}.
\end{aligned} 
\end{align*}
and $\pr_u^3 \dot f^{[2]}$ by
\begin{align*} 
\begin{aligned} 
(\QQ_7)_0 =&\half \pr_u^3 \dot{f}^{[2]}-3\dot{f}^{[2]},
\end{aligned} 
\end{align*}
and let $\pr_u^3 \dot f^{[\geq3]}=0$. By \eqref{EQchargematchingmodeconditions555}, $\pr_u^3 \dot f$ is well-defined and it holds that
\begin{align} 
\begin{aligned} 
\Vert \pr_u^3 \dot f \Vert_{H^2(S_2)} \les \Vert (\QQ_6)_0 \Vert_{H^2(S_2)}+ \Vert \dot{f} \Vert_{H^8(S_2)}.
\end{aligned} \label{EQpru3estimatematching}
\end{align}

\ni To summarise the above, we constructed linearized perturbation functions 
\begin{align*} 
\begin{aligned} 
(\dot{j}^1,\dot{j}^2) \text{ and } \dot f = (\dot f,\pr_{u} \dot f, \pr_u^2 \dot f, \pr_u^3 \dot f)
\end{aligned} 
\end{align*}
such that \eqref{EQchargeMATCH} is satisfied. Further, from \eqref{EQESTIMQmatching1}, \eqref{EQESTIMQmatching2}, \eqref{EQESTIMQmatching3}, \eqref{EQjcontrolMATCHING4334}, \eqref{EQcontrolpruf2111} and \eqref{EQpru3estimatematching} it follows that
\begin{align*} 
\begin{aligned} 
\Vert \dot f \Vert_{\YY_f} + \Vert (\dot{j}^1,\dot{j}^2)\Vert_{\YY_{({j}^1,{j}^2)}} \les& \Vert (\QQ_1)_0 \Vert_{H^6(S_2)}+\Vert (\QQ_2)_0 \Vert_{H^4(S_2)}+\Vert (\QQ_3)_0 \Vert_{H^4(S_2)}+\Vert (\QQ_4)_0 \Vert_{H^2(S_2)} \\
& +\Vert (\QQ_5)_0 \Vert_{H^4(S_2)}+\Vert (\QQ_6)_0 \Vert_{H^2(S_2)}.
\end{aligned} 
\end{align*}

\ni The estimate \eqref{EQquantEstimChargeLin} follows from the above by Lemma \ref{LEMestimatesSpherePerturbations}. This finishes the proof of Proposition \ref{PROPadjustmentCharges}.

\subsection{Representation formulas and estimates} \label{SECprelimAnalysis} In this section, we rewrite the linearized null structure equations \eqref{EQlinearizedOPsystem2} into a set of transport equations and integrate them to derive representation formulas and estimates. 

\begin{itemize}
\item In Section \ref{SEClinPRELIM1} we consider $\phid$ and $\gdcd$.
\item In Section \ref{SEClinPRELIM20} we consider $\omtrchid$ and $\etad$.
\item In Section \ref{SEClinPRELIM2} we consider $\omtrchibd$ and $\chibhd$.
\item In Section \ref{SEClinPRELIM3} we consider $\abd$, $\ombd$ and $\Du\ombd$.
\end{itemize}

\ni \textbf{Notation.} In this section we ease presentation by leaving away the trivial $u$ index on spheres and sphere data, denoting $\HH= \HH_{0,[1,2]}$, and writing $v$ instead of $r$ on $\HH$.

\subsubsection{Analysis of $\phid$ and $\gdcd$} \label{SEClinPRELIM1}
The linearized null structure equations for $\phid$ and $\gdcd$ in \eqref{EQlinearizedOPsystem2}, that is,
\begin{align*} 
\begin{aligned} 
D\lrpar{D\phid - 2 \Omd} = \mfq_1, \,\, r^2 D\gdcd -2 \chihd =& \mfq_3,
\end{aligned} 
\end{align*}
are equivalent to
\begin{align} 
\begin{aligned} 
DD\phid = 2D\Omd + \mfq_1, \,\, 
D\gdcd = \frac{2}{r^2} \chihd + \frac{1}{r^2} {\mfq_3}. 
\end{aligned}\label{EQlinNSE122222}
\end{align}
By integration of \eqref{EQlinNSE122222}, we directly get the following lemma.
\begin{lemma}[Representation formulas and estimates for $\phid$ and $\gdcd$] \label{LEMenergyestimatesPHID} Consider sphere data $\xdmf_1$ on $S_1$ and linearized conformal data $\chihd$ and $\Omd$ on $\HH$. Integrating the transport equations \eqref{EQlinNSE122222} and using $\dot{\CC}_2=\mfq_2$ (see Lemma \ref{LEMlinearizedConstraints}) yields the following representation formulas.
\begin{enumerate}
\item It holds that
\begin{align} 
\begin{aligned} 
\phid =& 2 \int\limits_1^v \Omd dv' + \ilr \int\limits_1^{v'} \mfq_1 dv'' dv'+v\phid(1)+\frac{v-1}{2}\lrpar{\omtrchid(1)-4\Omd(1)+\mfq_2(1)}, \label{EQreps11}
\end{aligned} 
\end{align}
which yields the estimate
\begin{align*} 
\begin{aligned} 
\Vert \phid \Vert_{H^6_4(\HH)} \les& \Vert \xdmf_1 \Vert_{\XX(S_1)}+ \Vert \Omd \Vert_{{H^6_3(\HH)}}+ \Vert (\mfq_i)_{1\leq i \leq 10} \Vert_{\mathcal{Z}_\CC}.
\end{aligned} 
\end{align*} 

\item It holds that
\begin{align} 
\begin{aligned} 
\dot{\gd_c} =&2 \int\limits_{1}^v \frac{1}{v'^2} \chihd dv' + \ilr \frac{1}{v'^2}\mfq_3 dv'+ \gdcd(1), \label{EQreps12}
\end{aligned} 
\end{align}
which yields the estimate
\begin{align*} 
\begin{aligned} 
\Vert \gdcd \Vert_{H^6_3(\HH)} \les& \Vert \xdmf_1 \Vert_{\XX(S_1)}+ \Vert \chihd \Vert_{{H^6_2(\HH)}}+ \Vert (\mfq_i)_{1\leq i \leq 10} \Vert_{\mathcal{Z}_\CC}.
\end{aligned} 
\end{align*}

\end{enumerate}
\end{lemma}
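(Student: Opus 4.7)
The plan is to integrate the two transport equations in \eqref{EQlinNSE122222} along the generators of $\HH$ from $v'=1$, plug in the appropriate initial data on $S_1$, and then apply Minkowski-type inequalities combined with the definitions of the $H^m_l$ norms in Definition \ref{DEFnullHHspaces}.

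For part (1), I would start by integrating $DD\phid = 2D\Omd + \mfq_1$ once from $v'=1$ to obtain
\begin{align*}
D\phid(v) = D\phid(1) + 2\Omd(v) - 2\Omd(1) + \int_1^v \mfq_1\, dv',
\end{align*}
and then integrate a second time, which gives
\begin{align*}
\phid(v) = \phid(1) + (v-1)\lrpar{D\phid(1) - 2\Omd(1)} + 2\int_1^v \Omd\, dv' + \int_1^v\int_1^{v'}\mfq_1\, dv''\, dv'.
\end{align*}
The initial value $D\phid(1)$ is not part of the sphere data $\xdmf_1$, so the key step is to recover it from the constraint $\dot{\CC}_2 = \mfq_2$. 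From Lemma \ref{LEMlinearizedConstraints}, evaluating $\dot{\CC}_2 = r^2(2D(\phid/r) - \omtrchid) = \mfq_2$ at $v=1$ gives
\begin{align*}
D\phid(1) = \phid(1) + \tfrac{1}{2}\omtrchid(1) + \tfrac{1}{2}\mfq_2(1).
\end{align*}
Substituting this into the twice-integrated expression produces \eqref{EQreps11} after collecting the constant-in-$v$ terms. For part (2), integrating the equation for $D\gdcd$ in \eqref{EQlinNSE122222} once from $v'=1$ yields \eqref{EQreps12} directly, with initial datum $\gdcd(1)$ already contained in $\xdmf_1$.

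The estimates then follow from direct analysis of the representation formulas. For each angular derivative $\Nd^j$, $0\leq j\leq 6$, I commute $\Nd^j$ under the $v'$-integrals (the covariant derivative on $S_v$ is $v$-independent relative to $\gac$) and apply Minkowski's inequality in $v'$ to bound $\Vert \Nd^j \phid(v)\Vert_{L^2(S_v)}$ by $\int_1^v \Vert \Nd^j \Omd(v')\Vert_{L^2(S_{v'})}\, dv'$ plus the analogous term for $\mfq_1$ and the contributions from the sphere data at $v=1$. The same procedure applied to $D\phid$, $D^2\phid$, $D^3\phid$ and $D^4\phid$ (using that $D^2\phid$ is given directly by the equation, and higher $D$-derivatives are obtained by differentiating in $v$) produces the $H^6_4(\HH)$ norm bound, where the four $D$-derivatives match the allowed regularity of $\Omd$ in the $H^6_3(\HH)$ norm plus one additional integration. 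The identical scheme applied to \eqref{EQreps12} gives the $H^6_3(\HH)$ estimate for $\gdcd$, with the factor $1/v'^2$ being bounded on $[1,2]$ and therefore harmless. The sphere data boundary terms $\phid(1)$, $\omtrchid(1)$, $\Omd(1)$, $\gdcd(1)$ are controlled by $\Vert \xdmf_1 \Vert_{\XX(S_1)}$ via Definition \ref{DEFnormFirstOrderDATA}, and the $\mfq_i$ contributions by $\Vert (\mfq_i)_{1\leq i\leq 10}\Vert_{\mathcal{Z}_\CC}$ via Lemma \ref{LEMstandardEstimates}.

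There is no real obstacle here; the argument is a textbook integration of linear transport equations over the compact interval $[1,2]$. The only subtlety worth flagging is that the initial value $D\phid(1)$ is not a free datum but must be read off from $\dot{\CC}_2(1) = \mfq_2(1)$ — this is precisely why the sphere data $\xdmf_1$ carries $\omtrchid$ rather than $D\phid$, and why the source $\mfq_2(1)$ appears in the representation formula \eqref{EQreps11}.
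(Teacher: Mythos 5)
Your proposal is correct and follows the same route the paper takes: integrate the two transport equations in \eqref{EQlinNSE122222} over $[1,v]$, recover the missing initial datum $D\phid(1)$ from the evaluated constraint $\dot{\CC}_2(1)=\mfq_2(1)$ (which gives $D\phid(1)=\phid(1)+\tfrac12\omtrchid(1)+\tfrac12\mfq_2(1)$ since $r=v$ on $\HH$), substitute to obtain \eqref{EQreps11} and \eqref{EQreps12}, and then read off the estimates from the formulas together with Definitions \ref{DEFnormFirstOrderDATA} and \ref{DEFnullHHspaces} and Lemma \ref{LEMstandardEstimates}. The algebra you carried out matches, and the observation that $D\phid(1)$ is not a free datum but is supplied by $\dot{\CC}_2(1)$ is precisely the point the paper signals with "using $\dot{\CC}_2=\mfq_2$."
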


\subsubsection{Analysis of $\omtrchid$ and $\etad$} \label{SEClinPRELIM20} Recall that the linearized null structure equations for $\omtrchid$ and $\etad$ in \eqref{EQlinearizedOPsystem2} are given by
\begin{subequations}
\begin{align} 
D\lrpar{\frac{\phid}{r}}-\frac{\omtrchid}{2}=&  \frac{1}{2r^2} \mfq_2, \label{EQlinNSE122222111}\\  
D\lrpar{r^2\etad} +\frac{r^2}{2} \di \lrpar{\omtrchid-\frac{4}{r}\Omd} -2r\di\Omd -\Divdo \chihd  =& r^2 \mfq_4. \label{EQlinNSE122222222}
\end{align}
\end{subequations}
In the following, we rewrite \eqref{EQlinNSE122222111} and \eqref{EQlinNSE122222222} to get useful bounds and representation formulas for $\omtrchid$ and $\etad$.

On the one hand, using \eqref{EQlinNSE122222}, \eqref{EQlinNSE122222111} can be rewritten as
\begin{align} 
\begin{aligned} 
D\lrpar{r^2 \lrpar{\omtrchid-\frac{4}{r}\Omd}+ \mfq_2} =&-4\Omd + 2 r\mfq_1 \label{EQrewritten15}.
\end{aligned}
\end{align}
Straight-forward integration of \eqref{EQlinNSE122222222} and \eqref{EQrewritten15} yields the following lemma.
\begin{lemma}[Bounds for $\omtrchid$ and $\etad$] \label{LEMenergyestimatesETAD} Consider given sphere data $\xdmf_1$ on $S_1$ and given $\chihd$ and $\Omd$ on $\HH$. Integrating subsequently the transport equations \eqref{EQrewritten15} for $\omtrchid$ and \eqref{EQlinNSE122222222} for $\etad$ yields
\begin{align*} 
\begin{aligned} 
\Vert \omtrchid \Vert_{H^6_3(\HH)} \les& \Vert \xdmf_1 \Vert_{\XX(S_1)}+ \Vert \Omd \Vert_{{H^6_3(\HH)}}+ \Vert (\mfq_i)_{1\leq i \leq 10} \Vert_{\mathcal{Z}_\CC}, \\
\Vert \etad \Vert_{H^5_2(\HH)} \les& \Vert \xdmf_1 \Vert_{\XX(S_1)}+ \Vert \Omd \Vert_{{H^6_3(\HH)}}+ \Vert \chihd \Vert_{{H^6_2(\HH)}}+ \Vert (\mfq_i)_{1\leq i \leq 10} \Vert_{\mathcal{Z}_\CC}.
\end{aligned} 
\end{align*}

\end{lemma}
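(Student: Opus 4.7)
The plan is to integrate the two scalar/vectorial transport equations \eqref{EQrewritten15} and \eqref{EQlinNSE122222222} in sequence, treating $\Omd$ and $\chihd$ as given conformal data, and reading off the two $H^m_l$ estimates from the resulting representation formulas. Throughout, I use that in Minkowski $D = \Lied_{\pr_v}$ and $r=v$ on $\HH=\HH_{0,[1,2]}$, so the covariant transport operator reduces to an ordinary $v$-integration after projection onto a fixed angular frame, and the sphere operators $\di,\Divdo,\Nd$ commute with $D$ up to harmless $r$-weights.

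First I would integrate \eqref{EQrewritten15} from $v'=1$ to $v$ to obtain
\begin{align*}
v^2\lrpar{\omtrchid(v)-\tfrac{4}{v}\Omd(v)} + \mfq_2(v) = \lrpar{\omtrchid(1)-4\Omd(1)} + \mfq_2(1) + \ilr \lrpar{-4\Omd + 2v'\mfq_1} dv',
\end{align*}
which gives an explicit representation
\begin{align*}
\omtrchid(v) = \tfrac{4}{v}\Omd(v) + \tfrac{1}{v^2}\Big[ \lrpar{\omtrchid(1)-4\Omd(1)+\mfq_2(1)-\mfq_2(v)} + \ilr\lrpar{-4\Omd+2v'\mfq_1}\,dv'\Big].
\end{align*}
Since $v \in [1,2]$, the weights $v$, $v^{-1}$, $v^{-2}$ are uniformly bounded above and below, so Minkowski angular commutation is clean and one derivative-by-derivative application of the calculus estimates of Lemma \ref{LEMstandardTRACE} (together with Definition \ref{DEFnormHH}) gives
\begin{align*}
\Vert \omtrchid \Vert_{H^6_3(\HH)} \les \Vert \xdmf_1 \Vert_{\XX(S_1)} + \Vert \Omd \Vert_{H^6_3(\HH)} + \Vert (\mfq_i)_{1\leq i\leq 10} \Vert_{\ZZ_\CC},
\end{align*}
where the $S_1$ data term absorbs $\omtrchid(1), \Omd(1), \mfq_2(1)$ via trace. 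The key arithmetic point is that the representation differentiates under the $v$-integral without losing regularity in the angular direction: the right-hand side contains $\Omd$ at the same $H^6_3$ level and source terms $\mfq_1, \mfq_2$ at regularity compatible with $\ZZ_\CC$.

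Second, I would feed the above into \eqref{EQlinNSE122222222}, rewritten as
\begin{align*}
D\lrpar{v^2\etad} = -\tfrac{v^2}{2}\di\lrpar{\omtrchid-\tfrac{4}{v}\Omd} + 2v\,\di\Omd + \Divdo \chihd + v^2 \mfq_4,
\end{align*}
and integrate from $v'=1$ to $v$ to obtain
\begin{align*}
v^2\etad(v) = \etad(1) + \ilr \lrpar{-\tfrac{v'^2}{2}\di\lrpar{\omtrchid-\tfrac{4}{v'}\Omd} + 2v'\di\Omd + \Divdo \chihd + v'^2 \mfq_4}\,dv'.
\end{align*}
The key regularity observation is that each term on the right loses exactly one angular derivative relative to the source data: $\di\omtrchid$ costs one derivative on $\omtrchid\in H^6_3$, $\di\Omd$ costs one on $\Omd\in H^6_3$, and $\Divdo\chihd$ costs one on $\chihd\in H^6_2$. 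This accounts for the $H^5$ (not $H^6$) angular regularity of $\etad$ and the $H^{\cdot}_2$ (rather than $H^{\cdot}_3$) transversal regularity (since $D\etad$ involves $D\omtrchid$, which only lies in $H^{\cdot}_2$). Applying Lemma \ref{LEMstandardTRACE} again and using the bound on $\omtrchid$ already obtained, I conclude
\begin{align*}
\Vert \etad \Vert_{H^5_2(\HH)} \les \Vert \xdmf_1 \Vert_{\XX(S_1)} + \Vert \Omd \Vert_{H^6_3(\HH)} + \Vert \chihd \Vert_{H^6_2(\HH)} + \Vert (\mfq_i)_{1\leq i\leq 10} \Vert_{\ZZ_\CC}.
\end{align*}

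There is no genuine obstacle here beyond careful bookkeeping; the only mildly delicate point is to verify that the precise derivative count matches the regularity hierarchy encoded in the $\XX, \XX(\HH)$ and $\ZZ_\CC$ norms (in particular, that integrating $D\etad$ once in $v$ really does buy back a full $H^5_2$ bound without degradation in $v$-weights on $[1,2]$), but this is immediate from the uniform bounds on $v$ and the commutation of $D$ with $\Nd,\di,\Divdo$ in Minkowski up to the standard constant-curvature correction on $(\SSS^2,\gac)$.
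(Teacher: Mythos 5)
Your proof is correct and matches the paper's (unstated) approach: the paper simply asserts that "straight-forward integration of \eqref{EQlinNSE122222222} and \eqref{EQrewritten15} yields the following lemma," which is precisely what you carry out. Your bookkeeping of the regularity hierarchy (one angular derivative lost through $\di$, $\Divdo$, and one transversal order lost because $D\etad$ involves $\omtrchid$ at most in $H^6_3$ hence $D\omtrchid \in H^6$) correctly explains why $\etad$ lands in $H^5_2$ rather than $H^6_3$.
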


\ni On the other hand, using \eqref{EQlinNSE122222}, \eqref{EQlinNSE122222222} and \eqref{EQrewritten15} can be rewritten as
\begin{subequations}
\begin{align} 
D\lrpar{r^2 \etad + \frac{r^3}{2}\di \lrpar{\omtrchid-\frac{4}{r}\Omd}+ \frac{r}{2}\di\mfq_2}=& \Divdo \chihd + r^2 \mfq_4 +r^2 \di \mfq_1 + \frac{1}{2}\di \mfq_2, \label{EQrewritten10} \\
D\lrpar{\frac{r}{2} \lrpar{\omtrchid-\frac{4}{r}\Omd}+\frac{\phid}{r}+\frac{1}{2r} \mfq_2}=&\mfq_1.\label{EQrewritten16666}
\end{align}
\end{subequations}

\ni Integrating \eqref{EQrewritten10}, we get the following representation formula for $\etad$.
\begin{lemma}[Representation formulas for $\etad$] \label{LEMrepphigdeta} \label{CORenergyEST12111} Consider given sphere data $\xdmf_1$ on $S_1$ and given $\chihd$ and $\Omd$ on $\HH$. Integrating \eqref{EQrewritten10} yields that
\begin{align*} 
\begin{aligned} 
&\left[ v^2 \etad + \frac{v^3}{2} \di\lrpar{\omtrchid - \frac{4}{v} \Omd}+\frac{v}{2}\di \mfq_2\right]_1^v\\
 =& \Divdo\lrpar{ \int\limits_1^v \chihd dv' } +\ilr \lrpar{v'^2 \mfq_4 +v'^2 \di \mfq_1 +\frac{1}{2}\di \mfq_2} dv'.
\end{aligned}
\end{align*}
\end{lemma}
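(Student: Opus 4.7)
The plan is essentially a direct integration of the transport equation \eqref{EQrewritten10} along the null generator of $\HH = \HH_{0,[1,\infty)}$ from $v'=1$ to $v'=v$.

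First, I would recall that on $\HH$ the area radius coincides with $v$ (since $u=0$), and that at the Minkowski background the operator $D$ reduces to $\Lied_{\pr_v}$ acting on $S_{0,v}$-tangent tensors; in particular, once trivialized by the coordinate system $(\th^1,\th^2)$ propagated along $L=\pr_v$, it acts component-wise as $\pr_v$. This means the fundamental theorem of calculus applies to \eqref{EQrewritten10} in the simplest possible way: for any $S_{0,v}$-tangent tensor $W$ on $\HH$,
\begin{align*}
\int_1^v D W \, dv' \;=\; W(v) - W(1).
\end{align*}
Applying this to \eqref{EQrewritten10} yields
\begin{align*}
\left[r^2 \etad + \frac{r^3}{2}\di\left(\omtrchid - \frac{4}{r}\Omd\right) + \frac{r}{2}\di\mfq_2\right]_1^v
\;=\; \int_1^v \left(\Divdo \chihd + v'^2 \mfq_4 + v'^2 \di \mfq_1 + \tfrac{1}{2}\di\mfq_2\right) dv'.
\end{align*}

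Second, I would observe that the round unit metric $\gac$ on each $S_{0,v}$ is, by construction in Section~\ref{SECsetupdoublenull}, independent of $v$ (it is propagated along $L$ as a coordinate expression in the fixed local angular chart). Consequently the operators $\Divdo$ and $\di$, taken with respect to $\gac$, have coefficients which do not depend on $v'$ and therefore commute with $\int_1^v ( \cdot ) \, dv'$. In particular,
\begin{align*}
\int_1^v \Divdo \chihd \, dv' \;=\; \Divdo\!\left(\int_1^v \chihd \, dv'\right),
\end{align*}
which gives the formula stated in the lemma. The remaining terms on the right-hand side are left under the integral sign as in the statement.

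This is a one-step calculation with no real obstacle; the only subtle point is justifying that $\Divdo$ commutes with integration in $v'$, and this is immediate from the $v'$-invariance of $\gac$ in the coordinates constructed in Section~\ref{SECsetupdoublenull}. I would not expect any hidden difficulty beyond bookkeeping.
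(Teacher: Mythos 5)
Your proposal is correct and follows exactly what the paper intends: the lemma is simply the integration of \eqref{EQrewritten10} from $v'=1$ to $v'=v$ via the fundamental theorem of calculus, with the $\chihd$ term rewritten by commuting $\Divdo$ past the $v'$-integral (legitimate since $\gac$, and hence $\Divdo$, is $v$-independent in the propagated angular chart). The paper states this without further elaboration, and your identification of the only (minor) subtlety — the commutation of $\Divdo$ with $\int_1^v (\cdot)\, dv'$ — is the right thing to flag.
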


\ni Recall from \eqref{EQdefChargesMinkowski8891} that $\QQ_0$ and $\QQ_1$ are defined by
\begin{align*} 
\begin{aligned} 
\QQ_0 := r^2 \etad^{[1]} + \frac{r^3}{2}\di \lrpar{\omtrchid^{[1]}-\frac{4}{r}\Omd^{[1]}}, \,\,
\QQ_1 :=\frac{r}{2} \lrpar{\omtrchid-\frac{4}{r}\Omd} + \frac{\phid}{r}.
\end{aligned} 
\end{align*}
The next lemma follows directly from \eqref{EQrewritten10} and \eqref{EQrewritten16666}.
\begin{lemma}[Conservation laws I] \label{CORtransport1EQS} It holds that
\begin{align*} 
\begin{aligned} 
D\lrpar{ \QQ_0+ \frac{r}{2}\di\mfq_2^{[1]} } =& r^2 \mfq_4^{[1]} +r^2 \di \mfq_1^{[1]} + \frac{1}{2}\di \mfq_2^{[1]}, \\
D\lrpar{ \QQ_1 +\frac{1}{2r} \mfq_2}  =&\mfq_1.
\end{aligned} 
\end{align*}
\end{lemma}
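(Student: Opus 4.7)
The plan is that both conservation laws fall out as direct consequences of equations \eqref{EQrewritten10} and \eqref{EQrewritten16666}, which were already derived as rewritings of the linearized null structure equations. The second identity for $\QQ_1$ requires no work beyond recognition: the quantity under $D$ on the left-hand side of \eqref{EQrewritten16666} is, by the very definition of $\QQ_1$, exactly $\QQ_1 + \frac{1}{2r}\mfq_2$, and the right-hand side is already $\mfq_1$.

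For the $\QQ_0$ identity, the plan is to project equation \eqref{EQrewritten10} onto the mode $l=1$. At Minkowski the operator $D$ reduces to $\pr_v$ acting on $S_{u,v}$-tangent tensors transported along the null generators, and therefore commutes with projection onto spherical harmonic modes on the round sphere. Consequently the left-hand side of \eqref{EQrewritten10}, restricted to mode $l=1$, becomes $D\lrpar{\QQ_0 + \frac{r}{2}\di\mfq_2^{[1]}}$, which is exactly the target. The right-hand side of \eqref{EQrewritten10} contributes $r^2 \mfq_4^{[1]} + r^2 \di\mfq_1^{[1]} + \frac{1}{2}\di\mfq_2^{[1]}$ as desired, \emph{provided} one shows that the term $\Divdo \chihd$ has vanishing $l=1$ projection.

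The only nontrivial (but standard) step is this last vanishing. In the Hodge-system framework of Appendix \ref{SECellEstimatesSpheres}, $\Divdo \chihd$ is (up to sign) $\DDd_2 \chihd$, and $\DDd_2$ is the $L^2$-adjoint of the operator $\DDd_2^{\ast}$ sending $1$-forms to symmetric tracefree $2$-tensors. Hence the image of $\Divdo$ on symmetric tracefree $2$-tensors is orthogonal to $\ker(\DDd_2^{\ast})$. On the round two-sphere this kernel consists precisely of conformal Killing $1$-forms, which are spanned by the electric and magnetic vector spherical harmonics $E^{(1m)}$ and $H^{(1m)}$ of mode $l=1$. Therefore $(\Divdo \chihd)^{[1]} = 0$, as needed.

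There is no real obstacle here: the lemma is a conservation-law bookkeeping statement whose entire content is the identification of the $l=1$ projection of \eqref{EQrewritten10} with $D\QQ_0$ (modulo explicit source terms) together with the orthogonality of $\mathrm{div}$ of symmetric tracefree $2$-tensors to the $l=1$ mode. Once these are in place, the two stated identities are immediate, and the resulting conservation laws are precisely what drives the charge analysis of Section \ref{SEClinearizedCHARGESMinkowskiSTEF89}.
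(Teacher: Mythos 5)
Your proposal is correct and takes the same route as the paper, which simply states that the lemma "follows directly from \eqref{EQrewritten10} and \eqref{EQrewritten16666}" and leaves the details implicit. You have correctly identified the two ingredients the paper relies on: the $\QQ_1$ identity is \eqref{EQrewritten16666} read off verbatim, and the $\QQ_0$ identity is the $l=1$ projection of \eqref{EQrewritten10} together with the standard fact that the divergence of a symmetric tracefree $2$-tensor on the round sphere has no $l=1$ mode (a fact the paper itself invokes in its introduction when deriving these conservation laws informally). Your justification via the adjoint relation between $\DDd_2$ and $\DDd_2^{\ast}$ and the identification of $\ker\DDd_2^{\ast}$ with the $l=1$ vector spherical harmonics is precisely the content of Appendix \ref{SECellEstimatesSpheres}; one could equivalently read it off the Fourier multiplier formulas \eqref{EQFouriermultipliers333}, which show that $\Divdo$ of a tracefree symmetric $2$-tensor is supported on modes $l\geq 2$. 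The parenthetical "(up to sign)" is unnecessary — $\DDd_2$ is defined in the paper to be $\Divd$ on symmetric $2$-tensors exactly — but this does not affect the argument.
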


\ni The proof of the following lemma is omitted.
\begin{lemma}[Properties of charges I] \label{EQQestimatesatspheres00111111} \label{CORtransportEstimatesCHARGES111} The following holds.
\begin{enumerate}
\item  Let $\xd_v$ be sphere data on a sphere $S_v \subset \HH$. Then
\begin{align*} 
\begin{aligned} 
\Vert \QQ_0 \Vert_{H^5(S_v)} + \Vert \QQ_1 \Vert_{H^6(S_v)} \les \Vert \xd_v \Vert_{\mathcal{X}(S_v)}.
\end{aligned} 
\end{align*}
\item For given sphere data $\xdmf_1$ on $S_1$ and source terms $(\mfq_i)_{1\leq i \leq 10}$ on $\HH$, define ${}^{(1)}\QQ_0$ and ${}^{(1)}\QQ_1$ as solution to the transport equations of Lemma \ref{CORtransport1EQS} on $\HH$ with initial values given by $\QQ_0$ and $\QQ_1$ calculated from $\xdmf_1$. Then it holds that
\begin{align*} 
\begin{aligned} 
\Vert {}^{(1)}\QQ_0 \Vert_{H^5_2(\HH)}+\Vert {}^{(1)}\QQ_1 \Vert_{H^6_3(\HH)} \les& \Vert \QQ_0(\xdmf_1) \Vert_{H^5(S_1)}+\Vert \QQ_1(\xdmf_1) \Vert_{H^6(S_1)} + \Vert (\mfq_i)_{1\leq i \leq 10} \Vert_{\mathcal{Z}_\CC}.
\end{aligned} 
\end{align*}
\end{enumerate}
\end{lemma}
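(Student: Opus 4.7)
\medskip

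\noindent\textbf{Proof plan.} Part (1) is a direct consequence of the explicit formulas in \eqref{EQdefChargesMinkowski8891}: each of $\QQ_0$ and $\QQ_1$ is a fixed linear combination of components of $\xd_v$ (namely $\etad^{[1]}$, $\omtrchid^{[1]}$, $\Omd^{[1]}$, and $\phid$) multiplied by scalar powers of $r$ and acted on by $\di$ or by the $l=1$ spherical harmonic projection. Both the $[1]$-projection and $\di$ are bounded between the relevant Sobolev spaces on $S_v$, the latter losing one derivative. Since $\XX(S_v)$ controls $\|\etad\|_{H^5}$ together with $\|\omtrchid\|_{H^6},\|\Omd\|_{H^6},\|\phid\|_{H^6}$ (the last via the decomposition $\gd=\phi^2\gd_c$ with $\gd\in H^6$), the stated $H^5$- and $H^6$-bounds follow at once.

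\medskip

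\noindent For part (2), I would integrate the two transport equations of Lemma \ref{CORtransport1EQS} from $v'=1$ to $v'=v$, obtaining the representation formulas
\begin{align*}
{}^{(1)}\QQ_1(v) &= \QQ_1(\xdmf_1) + \tfrac{1}{2}\mfq_2(1) - \tfrac{1}{2v}\mfq_2(v) + \int_1^v \mfq_1\, dv',\\
{}^{(1)}\QQ_0(v) &= \QQ_0(\xdmf_1) + \tfrac{1}{2}\di\mfq_2^{[1]}(1) - \tfrac{v}{2}\di\mfq_2^{[1]}(v) + \int_1^v \bigl(r^2\mfq_4^{[1]} + r^2\di\mfq_1^{[1]} + \tfrac{1}{2}\di\mfq_2^{[1]}\bigr) dv'.
\end{align*}
The $L^2_v H^m(S_v)$-norm of each side (i.e.\ the $D^0$ contribution to $\|\cdot\|_{H^m_l(\HH)}$) is controlled by the initial data via part (1), by the boundary terms $\mfq_i(v)$ and $\mfq_i(1)$ via the trace estimate of Lemma \ref{LEMstandardTRACE}, and by the integral terms via Cauchy--Schwarz in $v'$, yielding in each case an upper bound of the desired form $\|\xdmf_1\|_{\XX(S_1)} + \|(\mfq_i)\|_{\mathcal{Z}_\CC}$.

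\medskip

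\noindent For the contributions of $D^j$ with $j\geq 1$ to the $H^6_3(\HH)$-norm of ${}^{(1)}\QQ_1$ and the $H^5_2(\HH)$-norm of ${}^{(1)}\QQ_0$, I would apply $D^{j-1}$ directly to the transport equations in Lemma \ref{CORtransport1EQS}, expressing $D^j{}^{(1)}\QQ_i$ as a linear combination of $D^{j-1}\mfq_1$, $D^{j}\mfq_2$, $D^{j-1}\mfq_4^{[1]}$ (and angular derivatives thereof, with powers of $r$). The hierarchical regularity of $\mathcal{Z}_\CC$, in which $\mfq_1\in H^6_2(\HH)$, $\mfq_2\in H^6_3(\HH)$, and $\mfq_4\in H^5_1(\HH)$, then ensures that each such term lies in the appropriate $H^m_{l-j}(\HH)$ space, with bound $\les \|(\mfq_i)\|_{\mathcal{Z}_\CC}$. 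The argument is essentially bookkeeping; the only point requiring care is matching the angular-regularity, $D$-regularity, and $r$-weight hierarchies across both sides of every identity. No substantive obstacle is expected.
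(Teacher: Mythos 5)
Your proof is correct and is the natural argument; since the paper omits its own proof, there is no alternative to compare against. Part (1) is immediate from the definitions: $\QQ_1$ is a pointwise linear combination of $\omtrchid$, $\Omd$, $\phid$ with $r$-dependent coefficients, each controlled in $H^6$ by $\XX(S_v)$; for $\QQ_0$ the $[1]$-projection maps onto a finite-dimensional space and is therefore smoothing, so even the $\di$ on the second term does not cost regularity, while $\etad^{[1]}$ inherits the $H^5$ bound from $\|\etad\|_{H^5}$. For part (2), your integration of the conservation laws of Lemma \ref{CORtransport1EQS} and the resulting representation formulas are correct (with $r=v$ on $\HH$), and the estimates close: $\mfq_1\in H^6_2$ and $\mfq_2\in H^6_3$ give control of $D^j\mfq_1$ for $j\le 2$ and $D^j\mfq_2$ for $j\le 3$ in $L^2 H^6$, which is exactly what is needed after applying $D^{j-1}$ to the transport equation for ${}^{(1)}\QQ_1$ up to $j=3$; similarly $\mfq_4\in H^5_1$ supplies $D\mfq_4$ in $L^2 H^5$, and since the $[1]$-projection is smoothing the extra $\di$ and the $r$-weights are harmless, so the $H^5_2$ bound for ${}^{(1)}\QQ_0$ follows. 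The trace estimate (Lemma \ref{LEMstandardTRACE}) handles the boundary terms $\mfq_2(1)$, $\mfq_2(v)$ as you describe. In short, the bookkeeping you outline is complete and the result holds.
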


\subsubsection{Analysis of $\protect\dot{\protect\Omega\protect\tr\protect\underline{\protect\chi}}$ and $\protect\dot{\protect\widehat{\protect\underline{\protect\chi}}}$} \label{SEClinPRELIM2} The linearized null structure equations \eqref{EQlinearizedOPsystem2} for $\omtrchibd$ and $\chibhd$ are given by
\begin{align} 
\begin{aligned} 
D\lrpar{v^2 \omtrchibd} -2v \omtrchid +2 \Divdo \lrpar{\etad-2\di\Omd}+ 2v^2 \Kd +4 \Omd =& v^2 \mfq_5,\\
D\lrpar{\frac{\chibhd}{v}}-\frac{2}{v} \DDd_2^\ast \lrpar{\etad -2 \di\Omd}-\frac{1}{v^2} \chihd=& \frac{1}{v} \mfq_6. 
\end{aligned} \label{EQnullstructureRECALLomtrchibdchibhd}
\end{align}

\ni The following lemma follows from \eqref{EQnullstructureRECALLomtrchibdchibhd} and Lemmas \ref{LEMenergyestimatesPHID} and \ref{LEMenergyestimatesETAD}.
\begin{lemma}[Bounds for $\omtrchibd$ and $\chibhd$] \label{LEMenergyestimatesCHIBHD} Consider given sphere data $\xdmf_1$ on $S_1$ and given $\chihd$ and $\Omd$ on $\HH$. Integrating the transport equations \eqref{EQnullstructureRECALLomtrchibdchibhd} for $\omtrchibd$ and $\chibhd$ yields
\begin{align*} 
\begin{aligned} 
\Vert \omtrchibd \Vert_{H^4_2(\HH)} \les& \Vert \xdmf_1 \Vert_{\XX(S_1)}+ \Vert \Omd \Vert_{{H^6_3(\HH)}}+ \Vert \chihd \Vert_{{H^6_2(\HH)}}+ \Vert (\mfq_i)_{1\leq i \leq 10} \Vert_{\mathcal{Z}_\CC}, \\
\Vert \chibhd \Vert_{H^4_3(\HH)} \les& \Vert \xdmf_1 \Vert_{\XX(S_1)}+ \Vert \Omd \Vert_{{H^6_3(\HH)}}+ \Vert \chihd \Vert_{{H^6_2(\HH)}}+ \Vert (\mfq_i)_{1\leq i \leq 10} \Vert_{\mathcal{Z}_\CC}.
\end{aligned} 
\end{align*}
\end{lemma}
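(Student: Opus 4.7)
The strategy is to treat the two transport equations in \eqref{EQnullstructureRECALLomtrchibdchibhd} as linear $D$-transport equations with prescribed right-hand sides, integrate them from $v=1$ to $v\leq2$ with initial data coming from $\dot{\mathfrak{X}}_1$, and then control the source terms by invoking the previously established bounds of Lemmas~\ref{LEMenergyestimatesPHID} and \ref{LEMenergyestimatesETAD}. Since each equation carries only a single $D$ derivative on the left-hand side, commuting $D^{j-1}$ with the equation expresses $D^j(v^2\omtrchibd)$ and $D^j(\chibhd/v)$ in terms of one-$D$-fewer derivatives of the sources, so the desired norms $H^4_2$ and $H^4_3$ are reduced to $H^4_1$ and $H^4_2$ bounds for the right-hand sides.

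First, for $\omtrchibd$ I would integrate the first equation of \eqref{EQnullstructureRECALLomtrchibdchibhd} to obtain
\begin{align*}
v^2\omtrchibd(v) = \omtrchibd(1) + \int_1^v \Bigl(2v'\omtrchid - 2\Divdo(\etad-2\di\Omd) - 2v'^2\Kd - 4\Omd + v'^2\mfq_5\Bigr)dv',
\end{align*}
where $\Kd$ is expanded via \eqref{EQgaussDHR} as $\Kd = \tfrac{1}{2r^2}\Divdo\Divdo\gdcd - \tfrac{1}{r^3}(\Ldo+2)\phid$. Lemma~\ref{LEMenergyestimatesPHID} controls $\phid\in H^6_4$ and $\gdcd\in H^6_3$, which together give $\Kd\in H^4_3$; Lemma~\ref{LEMenergyestimatesETAD} controls $\omtrchid\in H^6_3$ and $\etad\in H^5_2$, so $\Divdo\etad \in H^4_2$; the prescribed data give $\Omd\in H^6_3$ and $\mfq_5\in H^4_1$ from the definition of $\mathcal{Z}_\CC$. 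All of these sit comfortably in $H^4_1$, so commuting up to one $D$ and applying a standard Gr\"onwall/energy argument on the transport equation yields the claimed $H^4_2(\HH)$ bound for $\omtrchibd$, with the initial value $\omtrchibd(1)$ absorbed by $\|\dot{\mathfrak{X}}_1\|_{\XX(S_1)}$.

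Second, for $\chibhd$ I would integrate the second equation of \eqref{EQnullstructureRECALLomtrchibdchibhd} to obtain
\begin{align*}
\frac{\chibhd(v)}{v} = \chibhd(1) + \int_1^v \Bigl(\frac{2}{v'}\DDd_2^{\ast}(\etad-2\di\Omd) + \frac{1}{v'^2}\chihd + \frac{1}{v'}\mfq_6\Bigr)dv'.
\end{align*}
The term $\DDd_2^{\ast}\etad$ is in $H^4_2$ by Lemma~\ref{LEMenergyestimatesETAD}; $\DDd_2^{\ast}\di\Omd \in H^4_3$ from the prescribed $\Omd\in H^6_3$; $\chihd$ is given in $H^6_2$; and $\mfq_6\in H^5_2$ by the definition of $\mathcal{Z}_\CC$. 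All sources thus lie in $H^4_2$, so commuting with up to two $D$ derivatives and integrating produces the $H^4_3(\HH)$ bound for $\chibhd$.

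The proof is essentially routine once the representation formulas are in place; no obstacle of substance arises because the regularity hierarchy built into the norms $\XX(\HH)$ and $\mathcal{Z}_\CC$ was designed precisely so that the Raychaudhuri-type equation for $\omtrchibd$ and the Codazzi-type transport equation for $\chibhd$ can absorb their sources with one $D$-derivative margin. The only mildly delicate point is bookkeeping the loss of one angular derivative in passing from $\etad$ to $\Divdo\etad$ and $\DDd_2^{\ast}\etad$, but this is exactly compensated by the two-derivative gap between $H^5_2$ for $\etad$ and $H^4_2$ for the respective sources.
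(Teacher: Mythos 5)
Your proposal is correct and matches the approach the paper intends: the paper introduces this lemma with the single remark that it ``follows from \eqref{EQnullstructureRECALLomtrchibdchibhd} and Lemmas~\ref{LEMenergyestimatesPHID} and \ref{LEMenergyestimatesETAD},'' and your proof is exactly the detailed unpacking of that remark --- integrate the two $D$-transport equations, expand $\Kd$ via \eqref{EQgaussDHR}, and check the regularity of each source against the hierarchy built into $\XX(\HH)$ and $\mathcal{Z}_\CC$. The regularity bookkeeping is also accurate ($\mfq_5\in H^4_1$, $\mfq_6\in H^5_2$, $\Kd\in H^4_3$, $\Divdo\etad\in H^4_2$, $\DDd_2^\ast\etad\in H^4_2$), so the argument closes as claimed.
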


\ni Recall from \eqref{EQdefChargesMinkowski8891} that $\QQ_2$ and $\QQ_3$ are defined as
\begin{align*} 
\begin{aligned} 
\QQ_2 :=& r^2 \omtrchibd -\frac{2}{r}\Divdo \lrpar{r^2\etad+\frac{r^3}{2}\di\lrpar{\omtrchid-\frac{4}{r}\Omd}} -r^2 \lrpar{\omtrchid-\frac{4}{r}\Omd} +2r^3 \Kd, \\
\QQ_3:=& \frac{\chibhd}{r} -\half \lrpar{ \DDd_2^\ast \Divdo +1} \gdcd + \DDd_2^\ast \lrpar{ \etad + \frac{r}{2}\di \lrpar{\omtrchid-\frac{4}{r}\Omd}} - r \DDd_2^\ast \di \lrpar{\omtrchid-\frac{4}{r}\Omd}.
\end{aligned} 
\end{align*}

\ni The following lemma shows that they are indeed subject to conservation laws.
\begin{lemma}[Conservation laws II] \label{LEMgaugeconservationlaws} The linearized null structure equations \eqref{EQlinearizedOPsystem2} imply the following transport equations for $\QQ_2$ and $\QQ_3$,
\begin{align*} \begin{aligned} 
D\lrpar{ \QQ_2 - (\Ldo+1) \mfq_2}=&v^2 \mfq_5 -2v \Divdo \mfq_4 -2v(\Ldo+1)\mfq_1 -\frac{1}{v}(\Ldo+2) \mfq_2+ \frac{1}{v}\Divdo\Divdo \mfq_3, \\
D\lrpar{ \QQ_3 - \frac{1}{2v}\DDd_2^\ast \di \mfq_2} =&\frac{1}{v}\mfq_6 -\frac{1}{2v^2}\lrpar{\DDd_2^\ast \Divdo +1}\mfq_3+ \DDd_2^\ast \mfq_4-\DDd_2^\ast \di \mfq_1 +\frac{1}{2v^2} \DDd_2^\ast\di \mfq_2.
\end{aligned} \end{align*}
\end{lemma}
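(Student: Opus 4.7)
The plan is to compute $D\QQ_2$ and $D\QQ_3$ by differentiating each term in the definitions of $\QQ_2$ and $\QQ_3$ and then substituting the linearized null structure equations $\dot\CC_i = \mfq_i$ of Lemma \ref{LEMlinearizedConstraints}. Since $D$ on $\HH$ reduces to $\pr_v$ in Minkowski and commutes with angular operators ($\Nd$, $\di$, $\Divdo$, $\DDd_2^{\ast}$, $\Ldo$), all manipulations are purely algebraic once the individual $D$-derivatives of the basic quantities $\phid$, $\gdcd$, $\omtrchid$, $\etad$, $\omtrchibd$, $\chibhd$, $\Kd$ have been expressed via $\mfq_1,\dots,\mfq_6$.

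For $\QQ_2$, I would differentiate the four contributions separately. The first term gives $D(v^2\omtrchibd) = v^2 \mfq_5 + 2v\omtrchid - 2\Divdo(\etad - 2\di\Omd) - 2v^2\Kd - 4\Omd$ by $\dot\CC_5 = \mfq_5$. For the second term $-\frac{2}{v}\Divdo\bigl(v^2\etad + \frac{v^3}{2}\di(\omtrchid - \frac{4}{v}\Omd)\bigr)$, I would take advantage of \eqref{EQrewritten10}, which is the already-packaged consequence of $\dot\CC_4 = \mfq_4$ and $\dot\CC_1 = \mfq_1$, to replace the $D$ of the inner quantity. The third term $-v^2(\omtrchid - \frac{4}{v}\Omd)$ is handled by \eqref{EQrewritten15}, itself a consequence of $\dot\CC_1$ and $\dot\CC_2$. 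The final term $2v^3\Kd$ is differentiated using the linearized Gauss formula \eqref{EQgaussDHR}, so that $D\Kd$ is expressed through $D\gdcd$ and $D\phid$, which in turn are controlled by $\dot\CC_3 = \mfq_3$, $\dot\CC_2 = \mfq_2$, and $\dot\CC_1 = \mfq_1$. All four pieces are then summed, the (many) $\Omd$, $\omtrchid$, $\etad$, $\Kd$, $\phid$, $\gdcd$ contributions combine by design to cancel, and the remaining inhomogeneous terms can be collected and organized on the right-hand side, with the correction $-(\Ldo+1)\mfq_2$ on the left absorbing the residual $\frac{1}{v}(\Ldo+1)\mfq_2$ obtained from combining $D$-derivatives of the $\mfq_2$-factors coming from \eqref{EQrewritten10} and \eqref{EQrewritten15}.

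For $\QQ_3$, I would proceed analogously. Differentiate $\frac{\chibhd}{v}$ using $\dot\CC_6 = \mfq_6$ to get
\[
D\!\lrpar{\frac{\chibhd}{v}} = \frac{1}{v}\mfq_6 + \frac{2}{v}\DDd_2^\ast(\etad - 2\di\Omd) + \frac{1}{v^2}\chihd.
\]
Differentiate $-\tfrac12(\DDd_2^\ast\Divdo + 1)\gdcd$ using $\dot\CC_3 = \mfq_3$, producing $-\frac{1}{v^2}(\DDd_2^\ast\Divdo + 1)\chihd - \frac{1}{2v^2}(\DDd_2^\ast\Divdo + 1)\mfq_3$. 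For the remaining term $\DDd_2^\ast\bigl(\etad + \tfrac{v}{2}\di(\omtrchid - \tfrac{4}{v}\Omd)\bigr) - v\DDd_2^\ast\di(\omtrchid - \tfrac{4}{v}\Omd)$, I would again use \eqref{EQrewritten10} and \eqref{EQrewritten16666}, both of which are already packaged consequences of $\dot\CC_4$, $\dot\CC_1$ and $\dot\CC_2$. The $\chihd$ pieces combine via the identity $\frac{2}{v}\DDd_2^\ast\Divdo - (\DDd_2^\ast\Divdo + 1)$-type cancellations inherent in the definition of $\QQ_3$; the $\Omd$, $\etad$ and $\omtrchid$ contributions cancel entirely; and the leftover source terms can be collected, with the counter-term $-\frac{1}{2v}\DDd_2^\ast\di\mfq_2$ on the left matching the residual arising from $D(\frac{v}{2}\di\mfq_2)$ in \eqref{EQrewritten10}.

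The main obstacle is simply careful bookkeeping: the definitions of $\QQ_2$ and $\QQ_3$ are precisely engineered so that all dynamical terms cancel upon substitution of the linearized constraints, but the cancellations involve six different radial weights and three families of angular operators, so a systematic tabulation is essential. A minor subtlety is the treatment of the $\mfq_2$-corrections: since $\mfq_2$ already appears in $D\QQ_0$ and $D\QQ_1$ with specific weights $\frac{v}{2}\di\mfq_2$ and $\frac{1}{2v}\mfq_2$, respectively, the combinations $-(\Ldo+1)\mfq_2$ for $\QQ_2$ and $-\frac{1}{2v}\DDd_2^\ast\di\mfq_2$ for $\QQ_3$ on the left-hand side should emerge naturally by tracking how $\mfq_2$ enters through \eqref{EQrewritten10}, \eqref{EQrewritten15}, and \eqref{EQrewritten16666}.
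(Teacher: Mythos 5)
Your proposal is correct and follows essentially the same route as the paper: both compute $D\QQ_2$ and $D\QQ_3$ by direct term-by-term differentiation of the charge definitions, invoking the linearized constraint equations $\dot\CC_i = \mfq_i$ (packaged as \eqref{EQrewritten10}, \eqref{EQrewritten15}--\eqref{EQrewritten16666}, \eqref{EQnullstructureRECALLomtrchibdchibhd}) together with the linearized Gauss formula \eqref{EQgaussDHR}, and then moving the $D(\mfq_2)$-residuals to the left-hand side. The only imprecision is a cosmetic one: the term absorbed into the left-hand side for $\QQ_2$ is $D((\Ldo+1)\mfq_2)$ rather than "$\frac{1}{v}(\Ldo+1)\mfq_2$" as you wrote, but your overall bookkeeping strategy is exactly what the paper carries out.
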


\begin{proof}[Proof of Lemma \ref{LEMgaugeconservationlaws}] 

First, by \eqref{EQgaussDHR}, \eqref{EQdefChargesMinkowski8891}, \eqref{EQrewritten10}, \eqref{EQrewritten16666} and \eqref{EQnullstructureRECALLomtrchibdchibhd}, we have that
\begin{align*} \begin{aligned} 
&D\QQ_2\\
 =& v^2 \mfq_5 + 2v \omtrchid-2\Divdo \lrpar{\etad-2\di\Omd}-2v^2\Kd-4\Omd\\
&-2 D\lrpar{\frac{1}{v}\Divdo \lrpar{v^2\etad+\frac{v^3}{2}\di\lrpar{\omtrchid-\frac{4}{v}\Omd}}}\\
&-D\lrpar{v^2\lrpar{\omtrchid-\frac{4}{v}\Omd}} +2v^2 \Kd +2v D\lrpar{v^2 \Kd}\\
=& v^2 \mfq_5 + 2v \omtrchid-2\Divdo \lrpar{\etad-2\di\Omd}-4\Omd+\frac{2}{v^2} \Divdo \lrpar{v^2\etad+\frac{v^3}{2}\di\lrpar{\omtrchid-\frac{4}{v}\Omd}} \\
&-\frac{2}{v} \Divdo \lrpar{\Divdo \chihd + v^2 \mfq_4 + v^2 \di \mfq_1 + \half \di \mfq_2 - D\lrpar{\frac{v}{2}\di \mfq_2}}\\
&+4\Omd-2 v \mfq_1+D\mfq_2+ v \Divdo \Divdo \lrpar{\frac{2}{v^2}\chihd +\frac{1}{v^2} \mfq_3}-2v(\Ldo+2)\lrpar{\frac{\omtrchid}{2}+\frac{1}{2v^2} \mfq_2}\\
=& v^2 \mfq_5 -2v \Divdo \mfq_4 -2v(\Ldo+1)\mfq_1+ D\lrpar{(\Ldo+1)\mfq_2}-\frac{1}{v}(\Ldo+2) \mfq_2+ \frac{1}{v}\Divdo\Divdo \mfq_3.
\end{aligned} \end{align*}

\ni Second, by \eqref{EQdefChargesMinkowski8891}, \eqref{EQrewritten10}, \eqref{EQrewritten16666} and \eqref{EQnullstructureRECALLomtrchibdchibhd}, we have that
\begin{align*} \begin{aligned} 
D\QQ_3=& \frac{1}{v}\mfq_6 +\frac{2}{v}\DDd_2^\ast \lrpar{\etad-2\di\Omd} +\frac{1}{v^2}\chihd-\half \lrpar{\DDd_2^\ast \di +1}\lrpar{\frac{2}{v^2}\chihd +\frac{1}{v^2}\mfq_3}\\
&-\frac{2}{v}\DDd_2^\ast \lrpar{\etad+\frac{v}{2}\di\lrpar{\omtrchid-\frac{4}{v}\Omd}}\\
&+ \frac{1}{v^2} \DDd_2^\ast \lrpar{\Divdo \chihd +v^2\mfq_4+v^2\di\mfq_1 +\half \di \mfq_2 -D\lrpar{\frac{v}{2}\di\mfq_2}}\\
&+ \DDd_2^\ast \di \lrpar{\omtrchid-\frac{4}{v}\Omd} -\frac{1}{v}\DDd_2^\ast \di \lrpar{-4\Omd +2v\mfq_1-D\mfq_2}\\
=& \frac{1}{v}\mfq_6 -\frac{1}{2v^2}\lrpar{\DDd_2^\ast \Divdo +1}{\mfq_3}+ \DDd_2^\ast \mfq_4-\DDd_2^\ast \di \mfq_1 +D\lrpar{\frac{1}{2v}\DDd_2^\ast \di \mfq_2}+ \frac{1}{2v^2}\DDd_2^\ast \di \mfq_2.
\end{aligned} \end{align*}

\ni This finishes the proof of Lemma \ref{LEMgaugeconservationlaws}. \end{proof}

\ni The proof of the following lemma is omitted.
\begin{lemma}[Properties of charges II] \label{EQQestimatesatspheres0011} \label{CORtransportEstimatesCHARGES} The following holds.
\begin{enumerate}
\item  Let $\xd_v$ be sphere data on a sphere $S_v \subset \HH$. Then
\begin{align*} 
\begin{aligned} 
 \Vert \QQ_2 \Vert_{H^4(S_v)}+ \Vert \QQ_3 \Vert_{H^4(S_v)} \les \Vert \xd_v \Vert_{\mathcal{X}(S_v)}.
\end{aligned} 
\end{align*}
\item For given sphere data $\xdmf_1$ on $S_1$ and source terms $(\mfq_i)_{1\leq i \leq 10}$ on $\HH$, define ${}^{(1)}\QQ_i$, $2\leq i \leq 3$, as solution to the transport equations of Lemma \ref{LEMgaugeconservationlaws} on $\HH$ with initial values given by $\QQ_i$ calculated from $\xdmf_1$. Then it holds that
\begin{align*} 
\begin{aligned} 
 \Vert {}^{(1)}\QQ_2 \Vert_{H^4_2(\HH)}+ \Vert {}^{(1)}\QQ_3 \Vert_{H^4_2(\HH)} 
\les  \Vert \QQ_2(\xdmf_1) \Vert_{H^4(S_1)}+ \Vert \QQ_3(\xdmf_1) \Vert_{H^6(S_1)}+ \Vert (\mfq_i)_{1\leq i \leq 10} \Vert_{\mathcal{Z}_\CC}.
\end{aligned} 
\end{align*}
\end{enumerate}
\end{lemma}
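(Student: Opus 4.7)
Both parts reduce to regularity bookkeeping against the explicit formulas in \eqref{EQdefChargesMinkowski8891} and Lemma \ref{LEMgaugeconservationlaws}.

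For part (1), I read off from \eqref{EQdefChargesMinkowski8891} that $\QQ_2$ and $\QQ_3$ are assembled from components of the sphere data with at most two angular derivatives applied. Using the linearized Gauss identity \eqref{EQgaussDHR} to express $\Kd$ in terms of $\gdcd$ and $\phid$, and recalling from Definition \ref{DEFnormFirstOrderDATA} that $\Omd, \phid, \gdcd, \omtrchid$ are controlled in $H^6(S_v)$, that $\etad$ is controlled in $H^5(S_v)$, and that $\omtrchibd, \chibhd$ are controlled in $H^4(S_v)$, a direct term-by-term bound places $\QQ_2$ and $\QQ_3$ in $H^4(S_v)$ with norm bounded by $\Vert \xd_v \Vert_{\XX(S_v)}$.

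For part (2), I would integrate the two transport equations of Lemma \ref{LEMgaugeconservationlaws} in $v$ from $S_1$ to $S_v$, using the $\xdmf_1$-initial values controlled by part (1). The regularity of the right-hand sides is determined by the $\mathcal{Z}_\CC$-hierarchy of Lemma \ref{LEMstandardEstimates}: for the $\QQ_2$-equation,
\[
v^2\mfq_5 \in H^4_1, \quad v\Divdo\mfq_4 \in H^4_1, \quad v(\Ldo+1)\mfq_1 \in H^4_2, \quad \tfrac{1}{v}(\Ldo+2)\mfq_2 \in H^4_3, \quad \tfrac{1}{v}\Divdo\Divdo\mfq_3 \in H^4_2,
\]
and analogously for the $\QQ_3$-equation, so every source term lies at least in $H^4_1(\HH)$. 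Commuting with up to two $D$-derivatives and integrating in $v$ on the compact interval $[1,2]$ (no Grönwall needed since the coefficients are smooth and bounded) yields $H^4_2(\HH)$-control of $\QQ_2 - (\Ldo+1)\mfq_2$ and of $\QQ_3 - \tfrac{1}{2v}\DDd_2^\ast\di\mfq_2$. Adding back the corrections, which themselves lie in $H^4_3(\HH) \hookrightarrow H^4_2(\HH)$ by the $\mathcal{Z}_\CC$-regularity of $\mfq_2$, produces the stated bounds.

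The only point requiring care is matching the $r$-weighted differential structure of the sources against the $H^m_l$-hierarchy of $\mathcal{Z}_\CC$ — in particular verifying that the critical combinations $(\Ldo+1)\mfq_2$ and $\DDd_2^\ast\di\mfq_2$, which appear as ``corrections'' inside the conservation laws, sit at a strictly better level than $\QQ_2$ and $\QQ_3$ themselves. Beyond this, the argument is straightforward integration in $v$, and no additional ingredient (beyond Lemmas \ref{LEMstandardTRACE}, \ref{LEMgaugeconservationlaws}) is required.
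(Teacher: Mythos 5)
Your proposal is correct and, since the paper explicitly omits the proof, it fills the gap with the argument that the lemma is clearly intended to encapsulate: a direct regularity count against the definitions of $\XX(S_v)$ and $\ZZ_\CC$, followed by integration of the pure transport equations from Lemma \ref{LEMgaugeconservationlaws}.

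Two small remarks. First, you correctly observe that the corrections $(\Ldo+1)\mfq_2$ and $\DDd_2^\ast\di\mfq_2$ sit at level $H^4_3 \subset H^4_2$ because $\mfq_2 \in H^6_3$; you should also note that the \emph{initial values} of these corrections at $S_1$ are controlled via the trace estimate (Lemma \ref{LEMstandardTRACE}), since $R(1) = \QQ_3(\xdmf_1) - \tfrac{1}{2}\DDd_2^\ast\di\mfq_2(1)$ requires $\mfq_2(1) \in H^6(S_1)$, not just $\mfq_2 \in H^6_3(\HH)$ — but you do cite Lemma \ref{LEMstandardTRACE}, so this is implicitly covered. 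Second, your bookkeeping actually yields the estimate with $\Vert\QQ_3(\xdmf_1)\Vert_{H^4(S_1)}$ on the right-hand side, which is a strictly stronger statement than the lemma's $\Vert\QQ_3(\xdmf_1)\Vert_{H^6(S_1)}$; the $H^6$ appears to be a conservative (or typographical) choice in the paper, and note that by part (1) the $H^6$-norm of $\QQ_3(\xdmf_1)$ need not even be finite for general $\xdmf_1 \in \XX(S_1)$ (the $\chibhd/r$ term is only $H^4$), so the sharper $H^4$-version is in fact the one used downstream in \eqref{EQQiEstimates2223}. Your proof is correct as written.
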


\subsubsection{Analysis of $\protect\ombd$, $\protect\abd$ and $\protect\Du \protect\ombd$} \label{SEClinPRELIM3} The linearized null structure equations \eqref{EQlinearizedOPsystem2} for $\ombd$, $\abd$ and $\Du \ombd$ are
\begin{align} 
\begin{aligned} 
D\ombd -\Kd - \frac{1}{2v}\omtrchibd + \frac{1}{2v} \omtrchid - \frac{2}{v^2} \Omd =& \mfq_7, \\
D\lrpar{\frac{\abd}{v}} - \frac{2}{v} \DDd_2^\ast \lrpar{\frac{1}{v^2}\Divdo \chibhd - \half \di \omtrchibd -\frac{1}{v} \etad} =& \frac{1}{v}\mfq_8, 
\end{aligned} \label{EQlinearizedRECALL4111}
\end{align}
and
\begin{align} 
\begin{aligned} 
&D\Du\ombd -\frac{3}{v}\lrpar{\Kd + \frac{1}{2v}\omtrchibd - \frac{1}{2v}\omtrchid + \frac{2}{v^2}\Omd}\\
=& \frac{1}{v^2} \Divdo \lrpar{\frac{1}{v^2} \Divdo \chibhd - \frac{1}{v}\etad - \half \di \omtrchibd}+ \mfq_9.
\end{aligned} \label{EQlinearizedRECALL4222}
\end{align}

\ni Integrating \eqref{EQlinearizedRECALL4111} and \eqref{EQlinearizedRECALL4222} and applying Lemmas \ref{LEMenergyestimatesPHID}, \ref{LEMenergyestimatesETAD} and \ref{LEMenergyestimatesCHIBHD} yields the following lemma.
\begin{lemma}[Bounds for $\ombd$, $\abd$ and $\Du\ombd$] \label{LEMenergyestimatesDUOMBD} Consider given sphere data $\xdmf_1$ on $S_1$ and given $\chihd$ and $\Omd$ on $\HH$. Integrating the transport equations \eqref{EQnullstructureRECALLomtrchibdchibhd} for $\omtrchibd$ and $\chibhd$ yields
\begin{align*} 
\begin{aligned} 
&\Vert \ombd \Vert_{H^4_3(\HH)}+\Vert \abd \Vert_{H^2_3(\HH)} +\Vert \Du\ombd \Vert_{H^2_3(\HH)}  \\
\les& \Vert \xdmf_1 \Vert_{\XX(S_1)}+ \Vert \Omd \Vert_{{H^6_3(\HH)}}+ \Vert \chihd \Vert_{{H^6_2(\HH)}}+ \Vert (\mfq_i)_{1\leq i \leq 10} \Vert_{\mathcal{Z}_\CC}.
\end{aligned} 
\end{align*}
\end{lemma}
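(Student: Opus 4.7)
The plan is to integrate the transport equations \eqref{EQlinearizedRECALL4111} and \eqref{EQlinearizedRECALL4222} along the $D$-direction starting from initial data on $S_1$, and then control the right-hand sides using the estimates already obtained in Lemmas \ref{LEMenergyestimatesPHID}, \ref{LEMenergyestimatesETAD}, and \ref{LEMenergyestimatesCHIBHD}. This is a closing step in the cascade built into the regularity hierarchy of $\XX(\HH)$ (Definition \ref{DEFnormHH}), so only bookkeeping is required.

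First I would integrate the $\ombd$ equation in \eqref{EQlinearizedRECALL4111}, writing
\[
\ombd(v) = \ombd(1) + \int_1^v \left[\Kd + \frac{1}{2v'}\omtrchibd - \frac{1}{2v'}\omtrchid + \frac{2}{v'^2}\Omd + \mfq_7 \right] dv'.
\]
The initial value $\ombd(1)$ is part of $\xdmf_1$ and hence controlled by $\Vert \xdmf_1 \Vert_{\XX(S_1)}$. The Gauss curvature $\Kd$ is expressed through \eqref{EQgaussDHR} in terms of $\gdcd$ and $\phid$, both estimated in $H^6_3(\HH)$ by Lemma \ref{LEMenergyestimatesPHID}; the remaining terms $\omtrchibd, \omtrchid, \Omd$ are bounded by Lemmas \ref{LEMenergyestimatesCHIBHD}, \ref{LEMenergyestimatesETAD}, and the hypothesis, respectively. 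Commuting with up to four angular derivatives $\Nd$ and differentiating the equation in $D$ up to twice (so that the primitive supplies the third $D$-derivative) yields the claimed $H^4_3(\HH)$ bound for $\ombd$.

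Second, I would integrate the second line of \eqref{EQlinearizedRECALL4111} to get
\[
\frac{\abd(v)}{v} = \abd(1) + \int_1^v \left[ \frac{2}{v'}\DDd_2^\ast\!\left(\frac{1}{v'^2}\Divdo\chibhd - \tfrac{1}{2}\di\omtrchibd - \tfrac{1}{v'}\etad\right) + \frac{1}{v'}\mfq_8 \right] dv'.
\]
Since $\abd$ is needed only in $H^2_3(\HH)$, the required angular control amounts to two angular derivatives on $\chibhd$, $\omtrchibd$, and $\etad$, which is well within the $H^4_3$, $H^4_2$, and $H^5_2$ bounds from Lemmas \ref{LEMenergyestimatesCHIBHD} and \ref{LEMenergyestimatesETAD}. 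The initial data $\abd(1)$ is again contained in $\Vert \xdmf_1 \Vert_{\XX(S_1)}$.

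Third, I would integrate \eqref{EQlinearizedRECALL4222} to obtain
\[
\Du\ombd(v) = \Du\ombd(1) + \int_1^v \left[ \frac{3}{v'}\!\left(\Kd + \tfrac{1}{2v'}\omtrchibd - \tfrac{1}{2v'}\omtrchid + \tfrac{2}{v'^2}\Omd\right) + \frac{1}{v'^2}\Divdo\!\left(\tfrac{1}{v'^2}\Divdo\chibhd - \tfrac{1}{v'}\etad - \tfrac{1}{2}\di\omtrchibd\right) + \mfq_9 \right] dv'.
\]
The only slightly sharp term is $v'^{-4}\Divdo\Divdo\chibhd$, which is in $H^2_2(\HH)$ precisely because $\chibhd \in H^4_3(\HH)$ by Lemma \ref{LEMenergyestimatesCHIBHD} — the regularity count matches exactly. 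All other terms lose strictly fewer derivatives. After commuting with $\Nd$ up to twice and with $D$ up to twice, and absorbing the initial value into $\Vert\xdmf_1\Vert_{\XX(S_1)}$, the bound
\[
\Vert \Du\ombd \Vert_{H^2_3(\HH)} \les \Vert\xdmf_1\Vert_{\XX(S_1)} + \Vert \Omd \Vert_{H^6_3(\HH)} + \Vert \chihd \Vert_{H^6_2(\HH)} + \Vert (\mfq_i)_{1\leq i \leq 10}\Vert_{\mathcal{Z}_\CC}
\]
follows. The main (and only) subtlety is this matching of regularities for the two angular divergences of $\chibhd$; everything else is straightforward integration in $v$ together with the triangle inequality, exactly mirroring the structure of Lemmas \ref{LEMenergyestimatesCHIBHD} and \ref{LEMenergyestimatesETAD}, and I expect no genuine obstacle.
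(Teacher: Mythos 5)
Your proposal is correct and takes essentially the same approach as the paper, which simply notes that the lemma follows by integrating \eqref{EQlinearizedRECALL4111} and \eqref{EQlinearizedRECALL4222} and applying Lemmas \ref{LEMenergyestimatesPHID}, \ref{LEMenergyestimatesETAD} and \ref{LEMenergyestimatesCHIBHD}. One small bookkeeping slip: in the $\ombd$ case the integral representation supplies $\ombd$ itself (the zeroth $D$-derivative), not the third; the third comes from differentiating the transport equation twice in $D$. Also, the true bottleneck on the angular side for $\Du\ombd$ is $\mfq_9 \in H^2_2$ rather than $\Divdo\Divdo\chibhd$, which actually lands in $H^2_3(\HH)$ with one $D$-derivative to spare; neither affects the validity of the estimate.
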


\ni In Appendix \ref{SECDerivation1} it is proved that the linearized null structure equations \eqref{EQlinearizedRECALL4111} and \eqref{EQlinearizedRECALL4222} can be rewritten as follows.

\begin{lemma} \label{LEMtransportEQUStrans} The linearized null structure equations \eqref{EQlinearizedOPsystem2} imply the following transport equations.
\begin{align} 
\begin{aligned} 
&D\lrpar{ \ombd +\frac{1}{4v^2}\QQ_2 +\frac{1}{3v} \Divdo\lrpar{\etad+ \frac{v}{2}\di\lrpar{\omtrchid-\frac{4}{v}\Omd}} -\frac{1}{12v^2}(\Ldo+3) \mfq_2}\\
=&  \frac{1}{3v^3} \Divdo \Divdo \chihd+h_{\ombd} ,
\end{aligned} \label{eqombd444}
\end{align}
with source term
\begin{align} 
\begin{aligned} 
h_{\ombd} := \mfq_7+ \frac{1}{4}\mfq_5 + \frac{1}{4v^3} \Divdo \Divdo \mfq_3 - \frac{1}{6v} \Divdo \mfq_4-\frac{1}{6v}(\Ldo+3) \mfq_1 -\frac{1}{12v^3} \Ldo \mfq_2,
\end{aligned} \label{EQHOMBD}
\end{align}

\ni and
\begin{align} \begin{aligned}
&D\left( \frac{\abd}{v} +\frac{2}{v}\DDd_2^\ast \Divdo \QQ_3 - \frac{1}{2v^2} \DDd_2^\ast \di \QQ_2 - \frac{2}{v} \DDd_2^\ast \di \lrpar{\Ldo+2}\QQ_1 \right) \\
&-D\left( \frac{2}{3v} \DDd_2^\ast \lrpar{\Divdo \DDd_2^\ast + 1 + \di\Divdo} \lrpar{\etad+ \frac{v}{2}\di\lrpar{\omtrchid-\frac{4}{v}\Omd}} \right)\\
&+D\left( \frac{1}{v} \DDd_2^\ast \lrpar{\Divdo \DDd_2^\ast + 1 + \di\Divdo} \Divdo\gdcd -\frac{1}{3v^2} \DDd_2^\ast \lrpar{\Divdo \DDd_2^\ast + 1 + \di\Divdo} \Divdo\di \mfq_2\right)\\
=& \frac{4}{3v^3} \DDd_2^\ast \lrpar{\Divdo \DDd_2^\ast + 1 + \di\Divdo} \Divdo \chihd +h_{\abd},
\end{aligned} \label{EQabd54444}\end{align}
with source term
\begin{align*} 
\begin{aligned} 
h_{\abd} :=& \frac{1}{v}\mfq_8 + \frac{2}{v}\DDd_2^\ast \Divdo \lrpar{D\QQ_3} -\frac{1}{2v^2} \DDd_2^\ast \di \lrpar{D\QQ_2}\\
&-\frac{2}{v} \DDd_2^\ast \di\lrpar{\Ldo+2} \lrpar{D\QQ_1} +\frac{1}{v^3}\DDd_2^\ast \lrpar{\Divdo \DDd_2^\ast + 1 + \di\Divdo} \Divdo \mfq_3\\
&-\frac{2}{3v^3}\DDd_2^\ast \lrpar{\Divdo \DDd_2^\ast + 1 + \di\Divdo} \Divdo\lrpar{v^2 \mfq_4 + v^2 \di \mfq_1 - \di \mfq_2},
\end{aligned} 
\end{align*}

\ni and 

\begin{align} 
\begin{aligned} 
&D\lrpar{ \Du\ombd -\frac{1}{6v^3} \lrpar{\Ldo-3}\QQ_2+ \frac{1}{2v^2} \Divdo \Divdo \QQ_3 +\frac{1}{v^2} \Divdo \Divdo \DDd_2^\ast \di \QQ_1}\\
&-D\lrpar{ \frac{1}{4v^2} \Divdo \lrpar{\di \Divdo -2 + \Divdo\DD_2^\ast} \lrpar{\eta+ \frac{v}{2}\di\lrpar{\omtrchid-\frac{4}{v}\Omd}} }\\
&+ D\lrpar{\frac{1}{8v^2} \Divdo \di \Divdo \Divdo \gdcd+ \frac{1}{2v^3}\lrpar{\frac{1}{12}\Ldo\Ldo-\frac{1}{6}\Ldo+\frac{1}{4}\Divdo \Divdo \DDd_2^\ast \di - 1}\mfq_2}\\
=&\frac{1}{4v^4} \Divdo \lrpar{2-\Divdo \DDd_2^\ast} \Divdo \chihd + h_{\Du\ombd},
\end{aligned} \label{EQequationDuombd444}
\end{align}
with source term
\begin{align} 
\begin{aligned} 
h_{\Du \ombd} :=& \mfq_8+ \frac{1}{2v^3} \Divdo \Divdo \mfq_6 - \frac{1}{6v} (\Ldo-3)\mfq_5  \\
&+ \frac{1}{v^2}\lrpar{\frac{1}{12} \Ldo \Ldo -\frac{1}{6} \Ldo -1 + \frac{1}{4}\Divdo \Divdo \DDd_2^\ast \di} \mfq_1\\
&+ \frac{1}{v^4} \lrpar{-\frac{1}{8} \Divdo \Divdo \DDd_2^\ast \di + \frac{1}{24}\Ldo\Ldo +\frac{1}{12} \Ldo +\half }\mfq_2 \\
&+ \frac{1}{v^4} \Divdo \lrpar{-\frac{1}{24} \di \Divdo + \frac{1}{4} - \frac{1}{4} \Divdo \DDd_2^\ast} \Divdo \mfq_3 \\
&+ \frac{1}{v^2}\lrpar{\frac{1}{12} (\Ldo-6)\Divdo +\frac{1}{4} \Divdo \Divdo \DDd_2^\ast } \mfq_4.
\end{aligned} \label{EQdefhDUOMBD}
\end{align}
\end{lemma}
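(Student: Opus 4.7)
The plan is to derive the three identities \eqref{eqombd444}, \eqref{EQabd54444}, \eqref{EQequationDuombd444} by a direct algebraic rearrangement of the two linearized transport equations \eqref{EQlinearizedRECALL4111}--\eqref{EQlinearizedRECALL4222}, combined with (i) the linearized Gauss equation \eqref{EQgaussDHR} expressing $\Kd$ in terms of $\gdcd$ and $\phid$, (ii) the first-variation/Raychaudhuri identities \eqref{EQrewritten15}, \eqref{EQrewritten10}, \eqref{EQrewritten16666} for $\omtrchid$ and $\etad$, and (iii) the charge transport equations of Lemmas \ref{CORtransport1EQS} and \ref{LEMgaugeconservationlaws} for $\QQ_1$, $\QQ_2$, $\QQ_3$. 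The strategy in each case is to eliminate all appearances of $\omtrchibd$, $\chibhd$, $\Kd$, $\etad$, $\omtrchid$ in favor of the charges $\QQ_i$ and the free-data quantities $\Omd, \chihd$; what remains on the right-hand side is a polynomial in $\Divdo$ applied to $\chihd$ plus an inhomogeneous part $h_\bullet$ built from the $\mfq_i$.

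For \eqref{eqombd444}, I would begin with the first equation of \eqref{EQlinearizedRECALL4111}, which reads $D\ombd = \Kd + \tfrac{1}{2v}(\omtrchibd-\omtrchid) + \tfrac{2}{v^2}\Omd + \mfq_7$. Substituting the Gauss identity \eqref{EQgaussDHR} for $\Kd$ converts the right-hand side into an expression in $\gdcd$, $\phid$, $\omtrchibd$, $\omtrchid$, $\Omd$. Then I would use the definition of $\QQ_2$ to rewrite $v^2\omtrchibd$ plus curvature terms as $\QQ_2$ plus divergence terms in the \emph{charge variable} $v^2\etad + \tfrac{v^3}{2}\di(\omtrchid-\tfrac{4}{v}\Omd)$, which in turn is controlled by $\QQ_0$ plus an angular term. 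Combining with $D\QQ_2$ from Lemma \ref{LEMgaugeconservationlaws} and the $D$-equations for the intermediate quantities \eqref{EQrewritten10}, the only non-charge, non-$\mfq_i$ contribution surviving on the right-hand side should be $\tfrac{1}{3v^3}\Divdo\Divdo\chihd$, and the inhomogeneity reassembles exactly into $h_{\ombd}$ as in \eqref{EQHOMBD}.

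For \eqref{EQabd54444} I would apply the same procedure to the second equation of \eqref{EQlinearizedRECALL4111}, namely $D(\abd/v) = \tfrac{2}{v}\DDd_2^\ast(\tfrac{1}{v^2}\Divdo\chibhd - \tfrac{1}{2}\di\omtrchibd - \tfrac{1}{v}\etad) + \tfrac{1}{v}\mfq_8$: express $\chibhd$ via the definition of $\QQ_3$, and express $\omtrchibd$ and $\etad$ via $\QQ_2, \QQ_1$ modulo divergences of the charge variable tied to $\QQ_0$. Using the charge transport laws to trade $D$-derivatives of these substitutions for terms involving $\chihd$ and $\mfq_i$ produces exactly the operator $\DDd_2^\ast(\Divdo\DDd_2^\ast + 1 + \di\Divdo)\Divdo$ acting on $\chihd$ on the right. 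Equation \eqref{EQequationDuombd444} is derived in the same spirit but starting from \eqref{EQlinearizedRECALL4222}: differentiate the Gauss substitution once more, use the $D\QQ_2$, $D\QQ_3$ identities to re-express the highest-order combination $\tfrac{1}{v^2}\Divdo(\tfrac{1}{v^2}\Divdo\chibhd - \tfrac{1}{v}\etad - \tfrac{1}{2}\di\omtrchibd)$ as a charge-plus-free-data decomposition, and then read off $h_{\Du\ombd}$ from \eqref{EQdefhDUOMBD}.

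The main obstacle is purely bookkeeping: tracking how the compositions of the spherical Hodge operators $\DDd_2^\ast, \Divdo, \di, \Ldo$ commute and combine (via the identities $\Divdo\di = \Ldo$, $\Divdo\DDd_2^\ast = \tfrac{1}{2}(\Ldo - \di\Divdo - 1)$ on appropriate tensor types) so that the scalar and tensorial multipliers $(\Ldo+3)$, $(\Ldo-3)$, $\tfrac{1}{12}\Ldo\Ldo - \tfrac{1}{6}\Ldo - 1 + \tfrac{1}{4}\Divdo\Divdo\DDd_2^\ast\di$, etc.\ emerge with exactly the stated constants. Once this symbolic calculus is carried out carefully (as in Appendix \ref{SECDerivation1}), the identities follow without any additional analytic input.
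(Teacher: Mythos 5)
Your proposal follows essentially the same route as the paper's own derivation in Appendix \ref{SECDerivation1}: apply $D$ to each of the three combinations, use the linearized transport equations \eqref{EQlinearizedRECALL4111}--\eqref{EQlinearizedRECALL4222} for $\ombd$, $\abd$, $\Du\ombd$, substitute the transport laws for $\QQ_1$, $\QQ_2$, $\QQ_3$ from Lemmas \ref{CORtransport1EQS} and \ref{LEMgaugeconservationlaws}, use \eqref{EQrewritten10} for $D$ of the quantity $v^2\etad + \tfrac{v^3}{2}\di(\omtrchid - \tfrac{4}{v}\Omd)$, and collect terms via the Hodge calculus of Appendix \ref{SECspecAnalysis}. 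One small stylistic difference: you propose to expand $\Kd$ via the Gauss identity \eqref{EQgaussDHR} at the outset, whereas the paper never expands $\Kd$ — instead the $\Kd$ term in $D\ombd$ cancels directly against the $2v^3\Kd$ inside $\QQ_2$ after applying the $D(v^{-2}\cdot)$ product rule. Both routes yield the same answer; keeping $\Kd$ packaged inside $\QQ_2$ is a bit less bookkeeping.

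Two small imprecisions worth flagging. First, the phrase ``which in turn is controlled by $\QQ_0$ plus an angular term'' is slightly misleading: $\QQ_0$ is only the $l=1$ projection of the charge variable $v^2\etad + \tfrac{v^3}{2}\di(\omtrchid-\tfrac{4}{v}\Omd)$; what the derivation actually uses is the full transport equation \eqref{EQrewritten10} whose right-hand side is $\Divdo\chihd$ plus $\mfq$-terms, not $\QQ_0$ itself. Second, the quoted Hodge identity $\Divdo\DDd_2^\ast = \tfrac{1}{2}(\Ldo - \di\Divdo - 1)$ is not the one needed, and as stated it mixes scalar and vector operators ambiguously. The identity actually invoked in the paper (and which you do implicitly need to make the $\chihd$-terms assemble into the stated operators) is $\bigl(\Divdo\DDd_2^\ast + 1 + \tfrac{1}{2}\di\Divdo\bigr)\di = 0$, i.e.\ $2\Divdo\DDd_2^\ast\di = -\di(\Ldo + 2)$, which is verified mode-by-mode in Appendix \ref{SECspecAnalysis}. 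Neither imprecision derails the argument; with these corrected and the bookkeeping carried out carefully, your sketch coincides with the paper's proof.
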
 

\ni As consequence of Lemma \ref{LEMtransportEQUStrans}, we get the following useful representation formulas for $\ombd, \abd$ and $\Du\ombd$.
\begin{lemma}[Representation formulas for $\ombd, \abd$ and $\Du\ombd$]  \label{COR2ndenergyEstimatesOMBDABDDUOMBD} Consider sphere data $\xdmf_1$ on $S_1$ and $\chihd$ and $\Omd$ on $\HH$. Integrating the transport equations of Lemma \ref{LEMtransportEQUStrans} yields the following representation formulas for $\ombd, \abd$ and $\Du\ombd$.

\begin{enumerate}

\item It holds that
\begin{align*} 
\begin{aligned} 
&\left[ \ombd +\frac{1}{4v^2}\QQ_2 +\frac{1}{3v} \Divdo\lrpar{\etad+ \frac{v}{2}\di\lrpar{\omtrchid-\frac{4}{v}\Omd}} -\frac{1}{12v^2}(\Ldo+3) \mfq_2\right]_1^v\\
=&  \frac{1}{3} \Divdo \Divdo \lrpar{\int\limits_1^v \frac{1}{v'^3} \chihd dv'} +\int\limits_1^v h_{\ombd} dv',
\end{aligned} 
\end{align*}

\item It holds that
\begin{align} \begin{aligned}
&\left[ \frac{\abd}{v} +\frac{2}{v}\DDd_2^\ast \Divdo \QQ_3 - \frac{1}{2v^2} \DDd_2^\ast \di \QQ_2 - \frac{2}{v} \DDd_2^\ast \di \lrpar{\Ldo+2}\QQ_1 \right]^v_1 \\
&-\left[ \frac{2}{3v} \DDd_2^\ast \lrpar{\Divdo \DDd_2^\ast + 1 + \di\Divdo} \lrpar{\etad+ \frac{v}{2}\di\lrpar{\omtrchid-\frac{4}{v}\Omd}} \right]^v_1\\
&+\left[ \frac{1}{v} \DDd_2^\ast \lrpar{\Divdo \DDd_2^\ast + 1 + \di\Divdo} \Divdo\gdcd- \frac{1}{3v^2} \DDd_2^\ast \lrpar{\Divdo \DDd_2^\ast + 1 + \di\Divdo} \Divdo\di \mfq_2 \right]^v_1\\
=& \frac{4}{3} \DDd_2^\ast \lrpar{\Divdo \DDd_2^\ast + 1 + \di\Divdo} \Divdo \lrpar{\ilr \frac{1}{v'^3}\chihd dv'} +\int\limits_1^v h_{\abd} dv',
\end{aligned} \label{EQbigrepformulaABD111} \end{align}

\item It holds that
\begin{align} 
\begin{aligned} 
&\left[ \Du\ombd -\frac{1}{6v^3} \lrpar{\Ldo-3}\QQ_2+ \frac{1}{2v^2} \Divdo \Divdo \QQ_3 +\frac{1}{v^2} \Divdo \Divdo \DDd_2^\ast \di \QQ_1\right]_1^v\\
&-\left[ \frac{1}{4v^2} \Divdo \lrpar{\di \Divdo -2 + \Divdo\DD_2^\ast} \lrpar{\eta+ \frac{v}{2}\di\lrpar{\omtrchid-\frac{4}{v}\Omd}} \right]_1^v\\
&+ \left[\frac{1}{8v^2} \Divdo \di \Divdo \Divdo \gdcd+ \frac{1}{2v^3}\lrpar{\frac{1}{12}\Ldo\Ldo-\frac{1}{6}\Ldo+\frac{1}{4}\Divdo \Divdo \DDd_2^\ast \di - 1}\mfq_2\right]_1^v\\
=& \frac{1}{4} \Divdo \lrpar{2-\Divdo \DDd_2^\ast} \Divdo \lrpar{\ilr \frac{1}{v'^4}\chihd dv'}+\int\limits_1^v h_{\Du\ombd} dv'.
\end{aligned} \label{EQbigrepformulaDUombd111}
\end{align}
\end{enumerate}
\end{lemma}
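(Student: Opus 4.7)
The proof plan is to derive each representation formula by integrating the corresponding transport equation from Lemma \ref{LEMtransportEQUStrans} along the null generators of $\HH = \HH_{0,[1,2]}$ from $v'=1$ to $v'=v$. Since on the Minkowski background $D = \pr_v$, and since the angular operators $\Divdo$, $\DDd_2^\ast$, $\di$, $\Ldo$ are intrinsic to the round unit sphere $(\SSS^2,\gac)$ (and hence $v$-independent), these operators commute with both $D$ and with $v$-integration. Consequently, each transport equation of the form $D(F) = G$ integrates to $[F]_1^v = \int_1^v G\, dv'$, which is precisely the statement of Lemma \ref{COR2ndenergyEstimatesOMBDABDDUOMBD}.

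First, I would treat the formula for $\ombd$. Starting from \eqref{eqombd444}, I integrate both sides in $v'$ from $1$ to $v$. The left-hand side becomes the square bracket $[\cdot]_1^v$ as written. For the first term on the right-hand side, I use that $\frac{1}{3v'^3}\Divdo \Divdo \chihd = D\lrpar{\frac{1}{3}\Divdo \Divdo \int_1^{v'} \frac{1}{v''^3}\chihd\, dv''}$; since $\Divdo$ is $v$-independent, it commutes with the $v$-integration, yielding $\frac{1}{3}\Divdo\Divdo \lrpar{\int_1^v \frac{1}{v'^3}\chihd\, dv'}$. The source term $\int_1^v h_{\ombd}\, dv'$ is then just the integral of the known right-hand side defined in \eqref{EQHOMBD}.

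Second, the formula for $\abd$ in \eqref{EQbigrepformulaABD111} follows identically from \eqref{EQabd54444}: integrating the left-hand side of \eqref{EQabd54444} (which is already exhibited as a $D$ applied to a sum) gives the bracketed expression $[\cdot]_1^v$. For the main term $\frac{4}{3v'^3}\DDd_2^\ast(\Divdo \DDd_2^\ast + 1 + \di\Divdo)\Divdo \chihd$ on the right-hand side, the angular operators $\DDd_2^\ast$, $\Divdo$, $\di$ (built from the round metric $\gac$) are $v$-independent and thus commute with the integral, producing $\frac{4}{3}\DDd_2^\ast(\Divdo\DDd_2^\ast + 1 + \di\Divdo)\Divdo \lrpar{\int_1^v \frac{1}{v'^3}\chihd\, dv'}$. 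The remaining term is $\int_1^v h_{\abd}\, dv'$. The third formula for $\Du\ombd$ in \eqref{EQbigrepformulaDUombd111} is obtained by the same integration procedure applied to \eqref{EQequationDuombd444}.

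The proof is essentially a direct application of the fundamental theorem of calculus together with the commutation property of angular differential operators (on the round unit sphere) with $v$-integration. There is no significant obstacle: the only point requiring mild care is the commutation of the $\Divdo$, $\DDd_2^\ast$, $\di$ and $\Ldo$ operators with the $v$-integral, which is immediate once one observes that these operators are defined with respect to the $v$-independent metric $\gac$ of \eqref{EQdefRoundUnitMetric}.
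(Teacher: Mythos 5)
Your proof is correct and takes essentially the same approach as the paper: the paper itself states that Lemma~\ref{COR2ndenergyEstimatesOMBDABDDUOMBD} follows directly by integrating the transport equations of Lemma~\ref{LEMtransportEQUStrans} from $v'=1$ to $v'=v$, using that the angular operators $\Divdo$, $\DDd_2^\ast$, $\di$, $\Ldo$ (defined with respect to the $v$-independent round metric $\gac$) commute with $v$-integration.
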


\ni In the transport equations of Lemma \ref{LEMtransportEQUStrans}, we observe that the $\chihd$-terms appearing on the right-hand sides of \eqref{eqombd444} and \eqref{EQabd54444} have the same weight in $v$, which indicates a conservation law involving $\ombd^{[\geq2]}$ and $\abd$. Moreover, the modes $l\leq 1$ of the right-hand side of \eqref{eqombd444} and the modes $l \leq 2$ of the right-hand side of \eqref{EQequationDuombd444} do not contain $\Omd$ or $\chihd$ (see also \eqref{EQkernelofDUomboperator} in Appendix \ref{SECspecAnalysis}), which also indicates conservation laws.

Recall that $\QQ_4, \QQ_5, \QQ_6$ and $\QQ_7$ are defined in \eqref{EQdefChargesMinkowski8891} as
\begin{align*} 
\begin{aligned} 
\QQ_4 :=& \frac{\abd_\psi}{r} + 2 \DDd_2^\ast \lrpar{\frac{1}{r^2} \Divdo \chibhd - \frac{1}{r} \etad- \half \di \omtrchid + \DDd_1^\ast \lrpar{\ombd,0}}_{\psi},\\
\QQ_5 :=& \ombd^{[\leq1]} +\frac{1}{4r^2}\QQ_2^{[\leq1]} +\frac{1}{3r^3} \Divdo\QQ_0, \\
\QQ_6 :=& \Du\ombd^{[\leq1]} - \frac{1}{6r^3} (\Ldo-3) \QQ_2^{[\leq1]} +\frac{1}{r^4} \Divdo \QQ_0, \\
\QQ_7 :=& \Du\ombd^{[2]} +\frac{3}{2r^3} \QQ_2^{[2]}+ \frac{1}{2r^2} \Divdo \Divdo \QQ_3^{[2]} -\frac{12}{r^2} \QQ_1^{[2]} \\
&+{ \frac{3}{2r^2} \Divdo \lrpar{\eta+ \frac{r}{2}\di\lrpar{\omtrchid-\frac{4}{r}\Omd}} }^{[2]} - {\frac{3}{4r^2} \Divdo \Divdo \gdcd }^{[2]}.
\end{aligned} 
\end{align*}
From Lemma \ref{LEMtransportEQUStrans}, we get the following conservation laws.
\begin{lemma}[Conservation laws III] \label{CORtransportEQS3333} It holds that
\begin{align*} 
\begin{aligned} 
D\QQ_4 =& \frac{1}{v} \lrpar{\mfq_8}_\psi + \frac{2}{v^2} \lrpar{\DDd_2^\ast \Divdo\mfq_6}_\psi -\frac{2}{v} \lrpar{\DDd_2^\ast \mfq_4}_\psi - \lrpar{\DDd_2^\ast \di \mfq_5}_\psi - 2 \lrpar{\DDd_2^\ast \di \mfq_7}_\psi,
\end{aligned} 
\end{align*}
and
\begin{align*} 
\begin{aligned} 
&D\lrpar{\QQ_5 -\frac{1}{12v^2}(\Ldo+3) \mfq_2^{[\leq1]}} = h_{\ombd}^{[\leq1]},
\end{aligned} 
\end{align*}
where $h_{\ombd}$ is defined in \eqref{EQHOMBD} and
\begin{align*} 
\begin{aligned} 
D\lrpar{\QQ_6 + \frac{1}{2v^3}\lrpar{\frac{1}{12}\Ldo\Ldo-\frac{1}{6}\Ldo - 1}\mfq_2^{[\leq1]}}
=h_{\Du\ombd}^{[\leq1]},
\end{aligned} 
\end{align*}
and
\begin{align*} 
\begin{aligned} 
D\lrpar{\QQ_7 - \frac{1}{v^3}\mfq_2^{[2]}}
=h_{\Du\ombd}^{[2]},
\end{aligned} 
\end{align*}
where $h_{\Du\ombd}$ is defined in \eqref{EQdefhDUOMBD}.
\end{lemma}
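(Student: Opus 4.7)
The plan is to prove each of the four conservation laws by direct computation, starting from the definitions \eqref{EQdefChargesMinkowski8891} and combining the already-derived transport equations from Lemma \ref{LEMtransportEQUStrans} with the conservation laws for $\QQ_0,\QQ_1,\QQ_2,\QQ_3$ proved in Lemmas \ref{CORtransport1EQS} and \ref{LEMgaugeconservationlaws}. The uniform theme is that the free-data terms ($\chihd$ and $\Omd$) appearing on the right-hand sides of the transport equations \eqref{eqombd444}, \eqref{EQabd54444}, \eqref{EQequationDuombd444} will be killed by the mode projections (or the electric part projection) built into the definitions of $\QQ_4,\QQ_5,\QQ_6,\QQ_7$.

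For $\QQ_5$ and $\QQ_6$, which are both $l\leq 1$ mode projections, I would project \eqref{eqombd444} and \eqref{EQequationDuombd444} onto modes $l\leq 1$. Since $\chihd$ is a symmetric tracefree $2$-tensor, its Hodge-theoretic support is confined to modes $l\geq 2$, and $\Divdo$ preserves mode numbers in the Hodge decomposition (Appendix \ref{SECellEstimatesSpheres}); consequently $\Divdo\Divdo\chihd$ and $\Divdo(2-\Divdo\DDd_2^\ast)\Divdo\chihd$ both have no $l\leq 1$ content, so the $\chihd$-terms in \eqref{eqombd444} and \eqref{EQequationDuombd444} vanish upon projection. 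What remains on the left-hand side is then matched to the definitions of $\QQ_5,\QQ_6$ using the identity
\begin{align*}
\tfrac{1}{3v^3}\Divdo\QQ_0 = \tfrac{1}{3v}\bigl(\Divdo(\etad+\tfrac{v}{2}\di(\omtrchid-\tfrac{4}{v}\Omd))\bigr)^{[\leq1]},
\end{align*}
and its analogue for the $\Du\ombd$ equation, together with the observation that $\DDd_2^\ast\di\QQ_1^{[\leq 1]}=0$ since $\DDd_2^\ast$ annihilates the $l=1$ conformal Killing electric modes, so the $\QQ_1$ and $\QQ_3$ contributions vanish at the $[\leq 1]$ level.

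For $\QQ_7$, which is the $l=2$ mode projection of $\Du\ombd$ plus corrections, the corresponding cancellation rests on the spectral identity that $\Divdo\DDd_2^\ast$ acting on mode $l$ symmetric tracefree $2$-tensors (via $\Divdo\chihd$) produces the factor $\tfrac{1}{2}l(l+1)-1$, which equals $2$ exactly at $l=2$. Hence the operator $(2-\Divdo\DDd_2^\ast)$ applied to $\Divdo\chihd$ annihilates the $l=2$ projection, so the $\chihd$-term in \eqref{EQequationDuombd444} disappears on mode $l=2$. The conformal-metric terms $\tfrac{1}{8v^2}\Divdo\di\Divdo\Divdo\gdcd$ and the $\QQ_2,\QQ_3,\QQ_1$ corrections then combine with $\Du\ombd^{[2]}$ to produce exactly the combination in the definition of $\QQ_7$.

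The hardest case will be $\QQ_4$, because its definition mixes the electric projection $\abd_\psi$ with the three-operator composite $\DDd_2^\ast(\Divdo\DDd_2^\ast+1+\di\Divdo)$ already present in \eqref{EQabd54444}. My plan for $\QQ_4$ is to differentiate $\abd_\psi/v$ using \eqref{EQabd54444}, then use the conservation laws for $\QQ_0,\QQ_1,\QQ_2,\QQ_3$ to rewrite the $D$-derivatives of those charges appearing in \eqref{EQabd54444} (in $h_{\abd}$) back in terms of sources, and finally exploit the Hodge identity $\DDd_2^\ast\DDd_1^\ast(f,0) = \DDd_2^\ast\di f$ together with the spectral relation $(\Divdo\DDd_2^\ast+1+\di\Divdo)\Divdo\DDd_2^\ast = \Divdo(\text{Hodge Laplacian})$ on electric vectors (from Appendix \ref{SECellEstimatesSpheres}) to collapse the composite operator. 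The $\chihd$-cancellation is then mediated by the fact that the electric part of $\DDd_2^\ast(\Divdo\DDd_2^\ast+1+\di\Divdo)\Divdo\chihd$ rewrites as a total $D$-derivative once combined with the term $\DDd_2^\ast\DDd_1^\ast(\ombd,0)$ in $\QQ_4$, via the transport equation for $\ombd$. The main obstacle will be the bookkeeping of the Hodge projections and the reduction of the composite operators to a form where the $\chihd$- and $\Omd$-dependencies manifestly collapse into a total derivative; I anticipate organizing this step as a lemma on $\DDd_2^\ast$-commutation identities to keep the computation tractable.
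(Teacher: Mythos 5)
Your plan for $\QQ_5$, $\QQ_6$, and $\QQ_7$ matches the paper's argument exactly: the paper also simply projects the transport equations \eqref{eqombd444} and \eqref{EQequationDuombd444} of Lemma \ref{LEMtransportEQUStrans} onto modes $l\leq 1$ and $l=2$, respectively, and uses that tracefree symmetric $2$-tensors (hence $\chihd$) live in modes $l\geq 2$, that $\Divdo$ preserves the mode number, and that the Fourier multiplier of $(2-\Divdo\DDd_2^\ast)$ vanishes at $l=2$ (cf.\ \eqref{EQkernelofDUomboperator}). Your auxiliary observations ($\DDd_2^\ast$ annihilating $l=1$ conformal Killing modes, the rewriting of $\Divdo\QQ_0$) are the correct reason the $\QQ_1$-, $\QQ_3$-, and $\gdcd$-corrections drop out after projection, so that the projected left-hand sides indeed assemble into $D(\QQ_5-\cdots)$ and $D(\QQ_6+\cdots)$, $D(\QQ_7-\cdots)$.

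For $\QQ_4$ you take a genuinely different, and more indirect, route than the paper. The paper does \emph{not} pass through the derived transport equation \eqref{EQabd54444}: instead, it differentiates $\QQ_4$ directly from Definition \eqref{EQdefChargesMinkowski8891}, substitutes the \emph{raw} constraint transport equations \eqref{EQlinNSE122222222}, \eqref{EQnullstructureRECALLomtrchibdchibhd}, and \eqref{EQlinearizedRECALL4111} for $D(v^2\etad)$, $D(\chibhd/v)$, $D(v^2\omtrchibd)$, $D\ombd$, $D(\abd/v)$, and then cancels the $\chihd$- and $\Omd$-terms by applying the Hodge identity $(\Divdo\DDd_2^\ast + 1 + \tfrac12\di\Divdo)X=0$ on electric (i.e.\ $\di f$-type) vectorfields, which is \eqref{EQidentity5app}. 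Your route — extracting $D(\abd_\psi/v)$ from the already-assembled \eqref{EQabd54444} and then undoing the $\QQ_1,\QQ_2,\QQ_3$-corrections via their conservation laws — is in principle workable, because you are reversing a computation the paper already did when assembling \eqref{EQabd54444}; but it costs you substantially more bookkeeping, since \eqref{EQabd54444} also carries $\gdcd$- and $\etad$-terms that must be re-expanded and matched against the $\QQ_4$ definition. More importantly, your stated ``spectral relation $(\Divdo\DDd_2^\ast+1+\di\Divdo)\Divdo\DDd_2^\ast = \Divdo(\text{Hodge Laplacian})$'' is not the identity that does the work: note the operator in \eqref{EQabd54444} is $(\Divdo\DDd_2^\ast+1+\di\Divdo)$ (without the factor $\tfrac12$), which on electric mode $l$ has eigenvalue $-\tfrac12 l(l+1)\neq 0$; the identity that actually produces the cancellation is $(\Divdo\DDd_2^\ast+1+\tfrac12\di\Divdo)\di f=0$ (i.e.\ \eqref{EQidentity5app}), applied after the $\psi$-projection. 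Also a sign slip: $\DDd_2^\ast\DDd_1^\ast(f,0)=-\DDd_2^\ast\di f$, not $+\DDd_2^\ast\di f$. If you use the paper's direct-from-definition strategy for $\QQ_4$, the argument is shorter and the only Hodge input you need is \eqref{EQidentity5app}.
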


\begin{proof}[Proof of Lemma \ref{CORtransportEQS3333}] By the linearized null structure equations \eqref{EQlinearizedOPsystem2}, we have that
\begin{align} 
\begin{aligned} 
D \QQ_4 =& D \lrpar{\frac{\abd_\psi}{v} + 2 \DDd_2^\ast \lrpar{ \frac{1}{v} \Divdo \lrpar{\frac{\chibhd}{v}}- \frac{1}{v^3} \lrpar{v^2 \etad} -\frac{1}{2v^2} \di \lrpar{v^2 \omtrchibd} - \di \ombd}_{\psi}}\\
=& \frac{1}{v} \lrpar{\mfq_8}_\psi +\frac{2}{v}\DDd_2^\ast \lrpar{\frac{1}{v^2} \Divdo \chibhd - \half \di \omtrchibd -\frac{1}{v} \etad}_\psi \\
&+ 2\DDd_2^\ast \lrpar{-\frac{1}{v^2} \Divdo\lrpar{\frac{\chibhd}{v}}+\frac{3}{v^4} \lrpar{v^2\etad}+ \frac{1}{v^3} \di \lrpar{v^2 \omtrchibd}}_\psi \\
&+\frac{2}{v} \DDd_2^\ast \Divdo \lrpar{ \frac{1}{v} \mfq_6 + \frac{2}{v} \DDd_2^\ast \lrpar{\etad-2\di\Omd}+\frac{1}{v^2} \chihd}_\psi \\
&-\frac{2}{v^3} \DDd_2^\ast \lrpar{v^2 \mfq_4 + 4v\di\Omd + \Divdo \chihd -\frac{v^2}{2} \di\omtrchid}_\psi \\
&-\frac{1}{v^2} \DDd_2^\ast \di \lrpar{v^2 \mfq_5 + 2v \omtrchid -2 \Divdo \lrpar{\etad-2\di\Omd}-2v^2\Kd-4\Omd} \\
&-2 \DDd_2^\ast \di \lrpar{\Kd + \frac{1}{2v} \omtrchibd -\frac{1}{2v}\omtrchid + \frac{2}{v^2}\Omd}.
\end{aligned} \label{EQQ4derivation1212}
\end{align}
Summing up the terms on the right-hand side of \eqref{EQQ4derivation1212} and using that by \eqref{EQidentity5app} in Appendix \ref{SECellEstimatesSpheres}, for all $S_v$-tangential vectorfields $X$,
\begin{align*} 
\begin{aligned} 
\lrpar{1+ \Divdo \DDd_2^\ast + \half \di \Divdo }X = 0,
\end{aligned} 
\end{align*}
it follows that
\begin{align*} 
\begin{aligned} 
D \QQ_4 =&  \frac{1}{v} \lrpar{\mfq_8}_\psi + \frac{2}{v^2} \lrpar{\DDd_2^\ast \Divdo\mfq_6}_\psi -\frac{2}{v} \lrpar{\DDd_2^\ast \mfq_4}_\psi - \lrpar{\DDd_2^\ast \di \mfq_5}_\psi - 2 \lrpar{\DDd_2^\ast \di \mfq_7}_\psi.
\end{aligned} 
\end{align*}
The conservation laws for $\QQ_5$, $\QQ_6$ and $\QQ_7$ follow directly by projecting the transport equations for $\ombd$ and $\Du\ombd$ of Lemma \ref{LEMtransportEQUStrans} onto the modes $l\leq1$ and $l= 2$. This finishes the proof of Lemma \ref{CORtransportEQS3333}. \end{proof}

\ni The proof of the following lemma is omitted.
\begin{lemma}[Properties of charges III] \label{EQQestimatesatspheres00113333} The following holds.
\begin{enumerate}
\item  Let $\xd_v$ be sphere data on a sphere $S_v \subset \HH$. Then
\begin{align*} 
\begin{aligned} 
\Vert \QQ_4 \Vert_{H^4(S_v)} + \Vert \QQ_5 \Vert_{H^4(S_v)}+ \Vert \QQ_6 \Vert_{H^2(S_v)}+ \Vert \QQ_7 \Vert_{H^2(S_v)} \les \Vert \xd_v \Vert_{\mathcal{X}(S_v)}.
\end{aligned} 
\end{align*}
\item For given sphere data $\xdmf_1$ on $S_1$, define ${}^{(1)}\QQ_i$, $4\leq i \leq 6$, as solution to the transport equation of Lemma \ref{CORtransportEQS3333} on $\HH$ with initial values given by $\QQ_i$ calculated from $\xdmf_1$. Then it holds that
\begin{align*} 
\begin{aligned} 
&\Vert {}^{(1)}\QQ_4 \Vert_{H^4_2(\HH)}+\Vert {}^{(1)}\QQ_5 \Vert_{H^4_2(\HH)}+\Vert {}^{(1)}\QQ_6 \Vert_{H^2_2(\HH)}+\Vert {}^{(1)}\QQ_7 \Vert_{H^2_2(\HH)}\\
 \les& \Vert \QQ_4(\xdmf_1) \Vert_{H^4(S_1)}+\Vert \QQ_5(\xdmf_1) \Vert_{H^4(S_1)}+\Vert \QQ_6(\xdmf_1) \Vert_{H^2(S_1)}+\Vert \QQ_7(\xdmf_1) \Vert_{H^2(S_1)} + \Vert (\mfq_i)_{1\leq i \leq 10} \Vert_{\mathcal{Z}_\CC}.
\end{aligned} 
\end{align*}
\end{enumerate}
\end{lemma}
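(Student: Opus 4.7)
The plan is to prove the two parts separately by direct inspection of the definitions in \eqref{EQdefChargesMinkowski8891} and by integrating the transport equations of Lemma \ref{CORtransportEQS3333}; the argument is parallel to the ones for Lemmas \ref{CORtransportEstimatesCHARGES111} and \ref{CORtransportEstimatesCHARGES}, which are stated without proof, so nothing structurally new is needed.

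For part (1), I would examine each of the four charges in turn on a single sphere $S_v\subset\HH$. For $\QQ_5, \QQ_6, \QQ_7$ the key observation is that every term on the right-hand side of their respective definitions is either supported on the finite-dimensional mode subspaces $l\leq 1$ or $l=2$ (in the case of $\QQ_5, \QQ_6, \QQ_7$), or expressible in terms of already-bounded pieces of $\xd_v$ through $\QQ_0, \QQ_2, \QQ_3$ and $\gdcd$. On a finite-dimensional mode space, Sobolev norms are equivalent to $L^2$, so projecting onto $l\leq 1$ (or $l=2$) immediately upgrades a low-regularity bound on $\ombd$, $\Du\ombd$, $\omtrchibd$ etc.\ to any higher Sobolev norm, bounded by their $L^2$ norm and hence by the sphere-data norm from Definition \ref{DEFnormFirstOrderDATA}. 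For the pieces involving $\QQ_0, \QQ_1, \QQ_2, \QQ_3$, I would invoke Lemmas \ref{CORtransportEstimatesCHARGES111} and \ref{CORtransportEstimatesCHARGES} (part (1) thereof) to bound them in the required norm on $S_v$. The most delicate charge is $\QQ_4$, because its leading term $\abd_\psi/r$ involves the electric scalar potential of $\abd$ via the Hodge decomposition $\abd = \DDd_2^\ast\DDd_1^\ast(\abd_\psi,\abd_\phi)$; since inverting $\DDd_2^\ast\DDd_1^\ast$ gains two orders of regularity on a $2$-sphere (see Appendix \ref{SECellEstimatesSpheres}), the bound $\abd\in H^2(S_v)$ yields $\abd_\psi\in H^4(S_v)$. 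The remaining contributions in $\QQ_4$ have the form $\DDd_2^\ast(W)_\psi$ for a $1$-form $W$ built from $\Divdo\chibhd, \etad, \di\omtrchid, \DDd_1^\ast(\ombd,0)$; writing $W=\DDd_1^\ast(W_\eta,W_\phi)$ identifies $(\DDd_2^\ast W)_\psi$ with the scalar $W_\eta$, which is then estimated in $H^4(S_v)$ by the standard Hodge estimates applied to the worst term. Summing everything yields part (1).

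For part (2), I would integrate the four transport equations derived in Lemma \ref{CORtransportEQS3333} along $\HH$ starting from $S_1$. For $\QQ_5, \QQ_6, \QQ_7$, the left-hand side is (modulo a lower-order $\mfq_2$ term that is bounded in $\mathcal{Z}_\CC$) the $D$-derivative of a finite-mode quantity and the right-hand side is a finite-mode projection of $h_{\ombd}$ or $h_{\Du\ombd}$; integrating from $v=1$ to $v\in[1,2]$ and taking the $H^4_2$ (resp.\ $H^2_2$) norm on $\HH$ gives the desired bound via part (1) applied at $S_1$ and the uniform constraint bounds on $(\mfq_i)$. For $\QQ_4$, the source term $\frac{1}{v}(\mfq_8)_\psi + \frac{2}{v^2}(\DDd_2^\ast\Divdo\mfq_6)_\psi - \frac{2}{v}(\DDd_2^\ast\mfq_4)_\psi - (\DDd_2^\ast\di\mfq_5)_\psi - 2(\DDd_2^\ast\di\mfq_7)_\psi$ is bounded in $H^2(S_v)$ (hence in $H^4$ after taking electric potentials and gaining two orders) uniformly in $v$ by $\Vert(\mfq_i)_{1\leq i \leq 10}\Vert_{\mathcal{Z}_\CC}$, so straightforward integration in $v$ closes the estimate in $H^4_2(\HH)$. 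Combining the four bounds and the trivial $D$-derivative estimate that comes for free from the transport equation completes part (2).

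The only mild obstacle is careful bookkeeping of derivative counts when converting $H^k$ regularity of a tensor into $H^{k+2}$ regularity of its electric/magnetic potential, together with verifying that the $\mfq$-source terms appearing in Lemma \ref{CORtransportEQS3333} fall into the spaces $\mathcal{Z}_\CC$ with enough margin to absorb the Hodge gain; but this is routine given the regularity hierarchy built into $\mathcal{X}(S_{u,v})$ and $\mathcal{Z}_\CC$, and no new analytic input is required beyond what was used in Sections \ref{SEClinPRELIM1}--\ref{SEClinPRELIM3}.
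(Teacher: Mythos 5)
The paper itself omits the proof of this lemma (it is declared routine), so there is no official argument to compare against; your overall strategy — part (1) by direct inspection of the definitions \eqref{EQdefChargesMinkowski8891}, using that $\QQ_5,\QQ_6$ live on the modes $l\leq1$ and $\QQ_7$ on $l=2$ (so any Sobolev norm reduces to $L^2$ on a finite-dimensional space) together with the already-stated bounds for $\QQ_0,\dots,\QQ_3$, and part (2) by integrating the transport equations of Lemma \ref{CORtransportEQS3333} with sources controlled by $\Vert(\mfq_i)\Vert_{\ZZ_\CC}$ — is exactly the intended one, and your handling of $\QQ_5,\QQ_6,\QQ_7$ in both parts is fine.

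There is, however, a genuine gap in your treatment of $\QQ_4$. In the paper's notation (Appendix \ref{SECellEstimatesSpheres}), the subscript $\psi$ on a tracefree symmetric $2$-tensor denotes its \emph{projection} onto the electric tensor harmonics, i.e.\ $V_\psi=\sum_{l\geq2,m}V_\psi^{(lm)}\psi^{(lm)}$, which is again a $2$-tensor with the same Sobolev regularity as $V$; it is not the scalar Hodge potential. This is confirmed by the transport equation for $\QQ_4$ in Lemma \ref{CORtransportEQS3333}, whose right-hand side consists of tensorial quantities such as $(\DDd_2^\ast\di\mfq_5)_\psi$. Consequently your central mechanism — writing $\abd=\DDd_2^\ast\DDd_1^\ast(\abd_\psi,\abd_\phi)$, "inverting $\DDd_2^\ast\DDd_1^\ast$" to gain two derivatives, and identifying $(\DDd_2^\ast W)_\psi$ with a scalar potential $W_\eta$ — is not available: $\abd_\psi/r$ is only as regular as $\abd$ itself, i.e.\ $H^2(S_v)$ by Definition \ref{DEFnormFirstOrderDATA}, and the term $2\DDd_2^\ast(\DDd_1^\ast(\ombd,0))_\psi$ likewise costs two derivatives of $\ombd\in H^4$, landing only in $H^2$. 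The same misreading reappears in your part (2), where you claim the $\QQ_4$-source is "in $H^4$ after taking electric potentials". So your derivative bookkeeping does not establish the stated $H^4$ bounds for $\QQ_4$; an honest count at the level the sphere-data norm actually supports gives $H^2$ (which is all the paper ever uses for $\QQ_4$, cf.\ the hypothesis $\Vert(\QQ_4)_0\Vert_{H^2(S_2)}<\infty$ in Proposition \ref{PROPadjustmentCharges}). The fix is to carry out the counting with $(\cdot)_\psi$ understood as an $L^2$-orthogonal projection (bounded on each $H^k$, gaining nothing), rather than to invoke an elliptic gain that the notation does not provide.
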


\subsection{Solution of the linearized characteristic gluing problem} \label{SECsolutionLinearized99902} In this section we prove Theorem \ref{PROPlingluing22}, that is, we solve the linearized characteristic gluing problem. Consider given 
\begin{itemize}
\item source terms $(\mfq_i)_{1\leq i \leq 10} \in \mathcal{Z}_{\CC}$,
\item sphere data $\dot{\mathfrak{X}}_{1}\in \XX(S_{1})$,
\item matching data $\dot{\MMf}_{2} \in \ZZ_\MMf(S_{2})$.
\end{itemize}

\ni In the following we use the charges, representation formulas and estimates of Sections \ref{SEClinearizedCHARGESMinkowskiSTEF89}, \ref{SECgaugeTheorySTEF89} and \ref{SECprelimAnalysis} to construct a solution $\xd$ on $\HH$ satisfying \eqref{EQlinearizedOPsystem2} with matching conditions \eqref{EQmatchingLinSectionGOAL} and bounds \eqref{EQlinearEstimateTOPROVE1233}. We proceed as follows.
\begin{itemize}
\item In Section \ref{SECmatchingCharges11}, we apply Proposition \ref{PROPadjustmentCharges} to add linearized perturbations of sphere data on $S_{2}$ to match the gauge-dependent charges on $S_2$ with the gauge-dependent charges coming from $S_1$.

\item In Section \ref{SECsecondchoice12}, we derive conditions on the free conformal data $\Omd$ and $\chihd$ on $\HH$ such that the constructed solution $\xd$, given through the representation formulas in Section \ref{SECprelimAnalysis}, satisfies \eqref{EQmatchingLinSectionGOAL}.

\item In Section \ref{SECboundslinearizedSolution112345} we prove the estimate \eqref{EQlinearEstimateTOPROVE1233} for the constructed solution and the linearized perturbation functions.
\end{itemize}

\subsubsection{Matching of gauge-dependent charges} \label{SECmatchingCharges11}

In this section we apply Proposition \ref{PROPadjustmentCharges} to add linearized perturbations of sphere data to $S_2$ to match the gauge-dependent charges
\begin{align*} 
\begin{aligned} 
\lrpar{\QQ_1, \QQ_2^{[\geq2]}, \QQ_3, \QQ_4, \QQ_5, \QQ_6, \QQ_7}.
\end{aligned} 
\end{align*}

\ni On the one hand, for given sphere data $\xdmf_1 \in \XX(S_1)$ on $S_1$ and matching condition on $S_1$
\begin{align*} 
\begin{aligned} 
\xd \vert_{S_{0,1}} = \xdmf_1,
\end{aligned} 
\end{align*}
define ${}^{(1)}{\QQ}_i$, $0\leq i \leq 7$ on $\HH$ to be the solutions to the transport equations of Lemmas \ref{CORtransport1EQS}, \ref{LEMgaugeconservationlaws} and \ref{CORtransportEQS3333} with initial values ${}^{(1)}\QQ_i$ on $S_1$ calculated from $\xdmf_1$. We underline that the charges ${}^{(1)}\QQ_i$ on $\HH$ depend only on $\xdmf_1$ and $(\mfq_i)_{1\leq i \leq 10}$, and are independent of the solution $\xd$ to the linearized null structure equations to be constructed on $\HH$. Lemmas \ref{EQQestimatesatspheres00111111}, \ref{EQQestimatesatspheres0011} and \ref{EQQestimatesatspheres00113333} imply that
\begin{align} 
\begin{aligned} 
&\Vert {}^{(1)}\QQ_0 \Vert_{H^5(S_2)}+\Vert {}^{(1)}\QQ_1 \Vert_{H^6(S_2)} + \Vert {}^{(1)}\QQ_2 \Vert_{H^4(S_2)}+ \Vert {}^{(1)}\QQ_3 \Vert_{H^4(S_2)} \\
&+\Vert {}^{(1)}\QQ_4 \Vert_{H^4(S_2)} + \Vert {}^{(1)}\QQ_5 \Vert_{H^4(S_2)}+ \Vert {}^{(1)}\QQ_6 \Vert_{H^2(S_2)}+ \Vert {}^{(1)}\QQ_7 \Vert_{H^2(S_2)} \\
 \les& \Vert \xdmf_1 \Vert_{\XX(S_1)} + \Vert (\mfq_i)_{1\leq i \leq 10} \Vert_{\mathcal{Z}_\CC}.
\end{aligned} \label{EQQiEstimates2223}
\end{align}

\ni On the other hand, for given linearized matching data $\dot{\MMf}_2 \in \ZZ_{\MMf}(S_2)$ at $S_2$,
\begin{align*} 
\begin{aligned} 
\dot{\MMf}_2 = \lrpar{\Omd, \phid, \gdcd, \omtrchid, \chihd, \omtrchibd^{[\geq2]}, \chibhd, \etad^{[\geq2]}, \omd, D\omd, \ombd^{[\geq2]}, \Du\ombd^{[\geq2]}, \QQ_5, \QQ_6, \ad, \abd},
\end{aligned} 
\end{align*}
 define on $S_2$ the charges 
\begin{align*} 
\begin{aligned} 
\lrpar{{}^{(2)}\QQ_1, {}^{(2)}\QQ_2^{[\geq2]}, {}^{(2)}\QQ_3, {}^{(2)}\QQ_4, {}^{(2)}\QQ_5, {}^{(2)}\QQ_6, {}^{(2)}\QQ_7}.
\end{aligned} 
\end{align*}
By definition of $\MMf$ in Definition \ref{DEFmatchingMAP}, all charges are well-defined, and we have the bounds
\begin{align} 
\begin{aligned} 
&\Vert {}^{(2)}\QQ_1 \Vert_{H^6(S_2)} + \Vert {}^{(2)}\QQ_2^{[\geq2]} \Vert_{H^4(S_2)}+ \Vert {}^{(2)}\QQ_3 \Vert_{H^4(S_2)} \\
&+\Vert {}^{(2)}\QQ_4 \Vert_{H^4(S_2)} + \Vert {}^{(2)}\QQ_5 \Vert_{H^4(S_2)}+ \Vert {}^{(2)}\QQ_6 \Vert_{H^2(S_2)}+ \Vert {}^{(2)}\QQ_7 \Vert_{H^2(S_2)} \\
 \les& \Vert \dot{\MMf}_2 \Vert_{\ZZ_\MMf}.
\end{aligned} \label{EQmatchingdatacharges1}
\end{align}

\ni Applying Proposition \ref{PROPadjustmentCharges} with
\begin{align} 
\begin{aligned} 
(\QQ_1)_0:=& {}^{(1)}\QQ_1-{}^{(2)}\QQ_1, & (\QQ_2)_0:=&{}^{(1)}\QQ_2^{[\geq2]}-{}^{(2)}\QQ_2^{[\geq2]}, &(\QQ_3)_0:=&{}^{(1)}\QQ_3-{}^{(2)}\QQ_3,\\
(\QQ_4)_0:=&{}^{(1)}\QQ_4-{}^{(2)}\QQ_4, & (\QQ_5)_0:=&{}^{(1)}\QQ_5-{}^{(2)}\QQ_5, &(\QQ_6)_0:=&{}^{(1)}\QQ_6-{}^{(2)}\QQ_6,\\
(\QQ_7)_0:=& {}^{(1)}\QQ_7-{}^{(2)}\QQ_7, & & & &
\end{aligned} \label{EQapplicationofChargeMatching4321}
\end{align}
it follows that there exist linearized perturbation functions $\dot f$ and $(\dot{j}^1, \dot{j}^2)$ at $S_2$ such that the gauge-dependent charges of 
\begin{align*} 
\begin{aligned} 
\xd\vert_{S_2}+ \dot{\PP}_f(\dot f)+\dot{\PP}_{({j}^1,{j}^2)}(\dot{j}^1,\dot{j}^2)
\end{aligned} 
\end{align*}
match with the gauge-dependent charges ${}^{(1)}{\QQ}_i$, that is, for $i=1,3,4,5,6,7$,
\begin{align} 
\begin{aligned} 
\QQ_i\lrpar{\xd\vert_{S_2}+ \dot{\PP}_f(\dot f)+\dot{\PP}_{({j}^1,{j}^2)}(\dot{j}^1,\dot{j}^2)} =& {}^{(2)}\QQ_i+\QQ_i\lrpar{ \dot{\PP}_f(\dot f)+\dot{\PP}_{({j}^1,{j}^2)}(\dot{j}^1,\dot{j}^2)} \\
= &{}^{(2)}\QQ_i+ (\QQ_i)_0 \\
=& {}^{(1)}\QQ_i,
\end{aligned} \label{EQchargematching314222}
\end{align}
and
\begin{align} 
\begin{aligned} 
\QQ_2^{[\geq2]}\lrpar{\xd\vert_{S_2}+ \dot{\PP}_f(\dot f)+\dot{\PP}_{({j}^1,{j}^2)}(\dot{j}^1,\dot{j}^2)} =&
{}^{(2)}\QQ_2^{[\geq2]}+\QQ_2^{[\geq2]}\lrpar{ \dot{\PP}_f(\dot f)+\dot{\PP}_{({j}^1,{j}^2)}(\dot{j}^1,\dot{j}^2)}\\
=&{}^{(2)}\QQ_2^{[\geq2]}+ (\QQ_2)_0\\
 =& {}^{(1)}\QQ_2^{[\geq2]}.
\end{aligned} \label{EQchargematching3142223}
\end{align}

\ni Moreover, by Proposition \ref{PROPadjustmentCharges}, \eqref{EQQiEstimates2223}, \eqref{EQmatchingdatacharges1} and \eqref{EQapplicationofChargeMatching4321}, we have the estimate
\begin{align} 
\begin{aligned} 
&\Vert \dot f \Vert_{\YY_f}+ \Vert (\dot{j}^1,\dot{j}^2) \Vert_{\YY_{({j}^1,{j}^2)}}+ \Vert \dot{\PP}_f(\dot f) \Vert_{\XX(S_2)} + \Vert \dot{\PP}_{({j}^1,{j}^2)}(\dot{j}^1,\dot{j}^2) \Vert_{\XX(S_2)}\\
\les& \Vert (\QQ_1)_0 \Vert_{H^6(S_2)}+\Vert (\QQ_2)_0 \Vert_{H^4(S_2)}+\Vert (\QQ_3)_0 \Vert_{H^4(S_2)}+\Vert (\QQ_4)_0 \Vert_{H^2(S_2)} \\
& +\Vert (\QQ_5)_0 \Vert_{H^4(S_2)}+\Vert (\QQ_6)_0 \Vert_{H^2(S_2)}+\Vert (\QQ_7)_0 \Vert_{H^2(S_2)}\\
\les& \Vert \xdmf_1 \Vert_{\mathcal{X}(S_1)} + \Vert \dot{\MMf}_2 \Vert_{\ZZ_\MMf(S_2)} + \Vert (\mfq_i)_{1\leq i \leq 10} \Vert_{\mathcal{Z}_\CC}.
\end{aligned} \label{EQQmatchingEstimatefullRic11}
\end{align}
This finishes our matching of the gauge-dependent charges on $S_2$.

\subsubsection{Integral conditions on $\Omd$ and $\chihd$} 
\label{SECsecondchoice12} In the previous section we constructed linearized perturbation functions $\dot{f}$ and $(\dot{j}^1,\dot{j}^2)$ such that the gauge-dependent charges match on $S_2$, see \eqref{EQchargematching314222} and \eqref{EQchargematching3142223}. In this section we use the matching of gauge-dependent charges together with the representation formulas for $\phid, \gdcd, \etad^{[\geq2]}, \abd$ and $\Du\ombd$ and our freedom of prescribing $\Omd$ and $\chihd$ along $\HH$ to construct a solution $\xd$ of the linearized null structure equations \eqref{EQlinearizedOPsystem2} satisfying
\begin{align} 
\begin{aligned} 
\xd\vert_{S_1} =& \xdmf_1, \\
\MMf\lrpar{\xd\vert_{S_2}+ \dot{\PP}_f(\dot f)+\dot{\PP}_{({j}^1,{j}^2)}(\dot{j}^1,\dot{j}^2)} =& \dot{\MMf}_2 \text{ on } S_2,
\end{aligned} \label{EQS2matchingdata12333}
\end{align}
where we recall from Definition \ref{DEFmatchingMAP} that
\begin{align*} 
\begin{aligned} 
\mathfrak{M}(x_2):= \lrpar{\Omd, \phid, \gdcd, \omtrchid, \chihd, \omtrchibd^{[\geq2]}, \chibhd, \etad^{[\geq2]}, \omd, D\omd, \ombd^{[\geq2]}, \Du\ombd^{[\geq2]}, \QQ_5, \QQ_6, \ad, \abd}.
\end{aligned} 
\end{align*}
Specifically, the additional conditions derived below on $\Omd$ and $\chihd$ are independent of the boundary values of $\Omd$ and $\chihd$ on $S_1$ and $S_2$ which are already determined by \eqref{EQS2matchingdata12333}. \\

\ni \textbf{(1) Gluing of $\phid$.} By the representation formula \eqref{EQreps11} for $\phid$, we have
\begin{align*} \begin{aligned} 
\phid =& 2 \int\limits_1^v \Omd dv' + \ilr \int\limits_1^{v'} \mfq_1 dv'' dv'+v\phid(1)+\frac{v-1}{2}\lrpar{\omtrchid(1)-4\Omd(1)+\mfq_2(1)}.
\end{aligned} \end{align*}

\ni To match $\phid$ according to \eqref{EQS2matchingdata12333}, prescribe $\Omd$ on $\HH$ such that
\begin{align} \begin{aligned} 
2\int\limits_1^2 \Omd dv' = \phid(2)-2\phid(1)- \half \lrpar{\omtrchid(1)-4\Omd(1)+\mfq_2(1)}- \int\limits_1^2 \int\limits_1^{v'} \mfq_1 dv'' dv'.
\end{aligned}\label{EQprescriptionOMEGA} \end{align}
In particular it holds that
\begin{align} 
\begin{aligned} 
\left\Vert \int\limits_1^2 \Omd dv' \right\Vert_{H^6(S_1)} \les& \Vert \xdmf_1 \Vert_{\mathcal{X}(S_1)} + \Vert \dot{\MMf}_2 \Vert_{\ZZ_\MMf(S_2)} + \Vert (\mfq_i)_{1\leq i \leq 10} \Vert_{\mathcal{Z}_\CC}.
\end{aligned} \label{EQprescriptionOMEGAestimate1}
\end{align}

\ni \textbf{(2) Gluing of $\gdcd$.} By the representation formula \eqref{EQreps12} for $\gdcd$, we have that
\begin{align*} \begin{aligned} 
\dot{\gd_c}(v) =&2 \int\limits_{1}^v \frac{1}{v'^2} \chihd dv' + \gdcd(1)+ \ilr \frac{1}{v'^2}{\mfq_3}dv'.
\end{aligned} \end{align*}

\ni To match $\gdcd$ according to \eqref{EQS2matchingdata12333}, prescribe $\chihd$ on $\HH$ such that
\begin{align} \begin{aligned} 
\int\limits_{1}^2 \frac{1}{v'^2} \chihd dv' = \half \lrpar{\gdcd(2)-\gdcd(1)-\int\limits_1^2 \frac{1}{v'^2}{\mfq_3}dv'}.
\end{aligned} \label{EQprescription1v2chihd} \end{align}

\ni In particular it holds that
\begin{align} 
\begin{aligned} 
\left\Vert \int\limits_{1}^2 \frac{1}{v'^2} \chihd dv' \right\Vert_{H^{5}(S_1)} \les& \Vert \xdmf_1 \Vert_{\mathcal{X}(S_1)} + \Vert \dot{\MMf}_2 \Vert_{\ZZ_\MMf(S_2)} + \Vert (\mfq_i)_{1\leq i \leq 10} \Vert_{\mathcal{Z}_\CC}.
\end{aligned} \label{EQprescription1v2chihdestimate1}
\end{align}

\ni \textbf{(3) Gluing of $\omtrchid$.} By the above gluing of $\Omd$, $\phid$ and the matching of
\begin{align*} \begin{aligned} 
\QQ_1 := \frac{v}{2} \lrpar{\omtrchid-\frac{4}{v} \Omd} +\frac{\phid}{v},
\end{aligned} \end{align*}
at $S_2$ in \eqref{EQchargematching314222}, it follows that $\omtrchid^{[\geq0]}$ is matched according to \eqref{EQS2matchingdata12333}.\\

\ni \textbf{(4) Gluing of $\etad^{[\geq2]}$.} By the representation formula for $\etad$ in Lemma \ref{LEMrepphigdeta}, we have that
\begin{align*} \begin{aligned} 
\left[ v'^2 \etad + \frac{v'^3}{2} \di\lrpar{\omtrchid - \frac{4}{v'} \Omd}+\frac{v'}{2}\di \mfq_2 \right]_1^v
 =& \Divdo\lrpar{ \int\limits_1^v \chihd dv' } +\ilr \lrpar{v'^2 \mfq_4 +v'^2 \di \mfq_1 +\frac{1}{2}\di \mfq_2} dv'.
\end{aligned} \end{align*}

\ni By the above gluing of $\omtrchid$ and $\Omd$, to glue $\etad^{[\geq2]}$ according to \eqref{EQS2matchingdata12333} it suffices to choose $\chihd$ such that
\begin{align} \begin{aligned} 
\Divdo\lrpar{ \int\limits_1^2 \chihd dv' } =& \left[ v^2 \etad + \frac{v^3}{2} \di\lrpar{\omtrchid - \frac{4}{v} \Omd}+\frac{v}{2}\di \mfq_2 \right]_1^2\\
&-\int\limits_1^2 \lrpar{v'^2 \mfq_4 +v'^2 \di \mfq_1 +\frac{1}{2}\di \mfq_2} dv'
\end{aligned} \label{EQgluingConditionCHIHD2} \end{align}
By elliptic estimates for the operator $\Divdo$, see Appendix \ref{SECellEstimatesSpheres}, we have that
\begin{align} 
\begin{aligned} 
\left\Vert \int\limits_1^2 \chihd dv'  \right\Vert_{H^{5}(S_1)} \les& \Vert \xdmf_1 \Vert_{\mathcal{X}(S_1)} + \Vert \dot{\MMf}_2 \Vert_{\ZZ_\MMf(S_2)}+ \Vert (\mfq_i)_{1\leq i \leq 10} \Vert_{\mathcal{Z}_\CC}.
\end{aligned} \label{EQgluingConditionCHIHD2estimate1}
\end{align}
\ni \textbf{(5) Gluing of $\omtrchibd^{[\geq2]}$.} By the matching of 
\begin{align*} \begin{aligned} 
\QQ_2 :=& v^2 \omtrchibd -\frac{2}{v}\Divdo \lrpar{v^2\etad+\frac{v^3}{2}\di\lrpar{\omtrchid-\frac{4}{v}\Omd}} -v^2 \lrpar{\omtrchid-\frac{4}{v}\Omd} +2v^3 \Kd, 
\end{aligned} \end{align*}
for modes $l\geq2$ in \eqref{EQchargematching3142223} and the relation \eqref{EQgaussDHR}, that is,
\begin{align*}
\Kd =& \frac{1}{2r^2} \Divdo \Divdo \gdcd - \frac{1}{r^3} (\Ldo +2) \phid,
\end{align*}
it follows with the above gluing of $\Omd, \gdcd, \phid, \omtrchid$ and $\etad^{[\geq2]}$ that $\omtrchibd^{[\geq2]}$ is glued at $S_2$ according to \eqref{EQS2matchingdata12333}. \\

\ni \textbf{(6) Gluing of $\chibhd$.} By the matching of 
\begin{align*} \begin{aligned} 
\QQ_3:=& \frac{\chibhd}{v} -\half \lrpar{ \DDd_2^\ast \Divdo +1} \gdcd + \DDd_2^\ast \lrpar{ \etad + \frac{v}{2}\di \lrpar{\omtrchid-\frac{4}{v}\Omd}} - v \DDd_2^\ast \di \lrpar{\omtrchid-\frac{4}{v}\Omd},
\end{aligned} \end{align*}
at $S_2$ in \eqref{EQchargematching314222}, together with the above gluing of $\Omd, \gdcd, \omtrchid$ and $\etad^{[\geq2]}$, it follows that $\chibhd$ is glued at $S_2$ according to \eqref{EQS2matchingdata12333}.\\

\ni \textbf{(7) Gluing of $\ad$.} We have by the linearized null structure equations \eqref{EQlinearizedOPsystem2} that on $\HH$
\begin{align} 
\begin{aligned} 
\ad + D\chihd=\mfq_{10}.
\end{aligned} \label{EQtransporteqchihd1234}
\end{align}
Hence we glue $\ad$ at $S_2$ according to \eqref{EQS2matchingdata12333} by prescribing,
\begin{align} 
\begin{aligned} 
D\chihd(1) = \mfq_{10}(1) - \ad(1), \,\, D\chihd(2) = \mfq_{10}(2) - \ad(2).
\end{aligned} \label{EQadgluing}
\end{align}
This implies that
\begin{align} 
\begin{aligned} 
&\Vert D\chihd(1) \Vert_{H^6(S_1)}+\Vert D\chihd(2) \Vert_{H^6(S_2)}\\
 \les& \Vert \mfq_{10} \Vert_{H^6(S_1)}+\Vert \mfq_{10} \Vert_{H^6(S_2)} + \Vert \xdmf_{1}\Vert_{\XX(S_1)} + \Vert \dot{\MMf}_2 \Vert_{\ZZ_\MMf(S_2)}.
\end{aligned} \label{EQadgluingESTIMATES}
\end{align}

\ni \textbf{(8) Gluing of $\omd$ and $D\omd$.} By the relation $\omd = D\Omd$, the gluing of $\omd$ and $D\omd$ at $S_2$ according to \eqref{EQS2matchingdata12333} is satisfied if
\begin{align} 
\begin{aligned} 
\omd(1)=& D\Omd(1), & \omd(2)=& D\Omd(2), \\
D\omd(1)=& D^2\Omd(1), & D\omd(2)=& D^2\Omd(2).
\end{aligned} \label{EQgluingConditionsDOMDDOM}
\end{align}
In particular, we have the bound
 \begin{align} 
\begin{aligned} 
&\Vert D\Omd(1) \Vert_{H^6(S_1)}+\Vert D\Omd(2) \Vert_{H^6(S_2)}+ \Vert D^2\Omd(1) \Vert_{H^6(S_1)}+\Vert D^2\Omd(2) \Vert_{H^6(S_2)} \\
\les& \Vert \xdmf_1 \Vert_{\XX(S_1)} + \Vert \dot{\MMf}_2 \Vert_{\ZZ_{\MMf}}.
\end{aligned} \label{EQgluingConditionsDOMDDOMBOUND}
\end{align}
 
\ni \textbf{(9) Gluing of $\abd$.} Using the representation formula \eqref{EQbigrepformulaABD111} for $\abd$, to glue $\abd$ at $S_2$ according to \eqref{EQS2matchingdata12333}, we can pick $\chihd$ such that
\begin{align} \begin{aligned}
&\left[ \frac{\abd}{v} +\frac{2}{v}\DDd_2^\ast \Divdo \QQ_3 - \frac{1}{2v^2} \DDd_2^\ast \di \QQ_2 - \frac{2}{v} \DDd_2^\ast \di \lrpar{\Ldo+2}\QQ_1 \right]^2_1 \\
&-\left[ \frac{2}{3v} \DDd_2^\ast \lrpar{\Divdo \DDd_2^\ast + 1 + \di\Divdo} \lrpar{\etad+ \frac{v}{2}\di\lrpar{\omtrchid-\frac{4}{v}\Omd}} \right]^2_1\\
&+\left[ \frac{1}{v} \DDd_2^\ast \lrpar{\Divdo \DDd_2^\ast + 1 + \di\Divdo} \Divdo\gdcd- \frac{1}{3v^2} \DDd_2^\ast \lrpar{\Divdo \DDd_2^\ast + 1 + \di\Divdo} \Divdo\di \mfq_2 \right]^2_1\\
=& \frac{4}{3} \DDd_2^\ast \lrpar{\Divdo \DDd_2^\ast + 1 + \di\Divdo} \Divdo \lrpar{\int\limits_1^2 \frac{1}{v'^3}\chihd dv'} +\int\limits_1^2 h_{\abd} dv'.
\end{aligned}\label{EQgluingConditionCHIHD4} \end{align}

\ni By the elliptic estimate  \eqref{EQestimate42app} in Appendix \ref{SECellEstimatesSpheres} for the operator
\begin{align*} 
\begin{aligned} 
\DDd_2^\ast \lrpar{\Divdo \DDd_2^\ast + 1 + \di\Divdo} \Divdo
\end{aligned} 
\end{align*}
it follows that the integral is well-defined and
\begin{align} 
\begin{aligned} 
\left\Vert \int\limits_1^2 \frac{1}{v'^3}\chihd dv' \right\Vert_{H^6(S_1)} \les& \Vert \xdmf_1 \Vert_{\mathcal{X}(S_1)} + \Vert \dot{\MMf}_2 \Vert_{\ZZ_\MMf(S_2)} + \Vert (\mfq_i)_{1\leq i \leq 10} \Vert_{\mathcal{Z}_\CC}. 
\end{aligned} \label{EQchihcond3estimate2}
\end{align}

\ni \textbf{(10) Gluing of $\ombd^{[\geq2]}$.} By the matching of 
\begin{align*} 
\begin{aligned} 
\QQ_4 :=& \,  \frac{\abd_\psi}{v} + 2 \DDd_2^\ast \lrpar{\frac{1}{v^2} \Divdo \chibhd - \frac{1}{v} \etad- \half \di \omtrchid + \DDd_1^\ast \lrpar{\ombd,0}}_{\psi},
\end{aligned} 
\end{align*}
in \eqref{EQchargematching314222}, the above gluing of $\abd$, $\chibhd$, $\etad^{[\geq2]}$ and $\omtrchid$, and the fact that the operator
 \begin{align*} 
\begin{aligned} 
\DDd_2^\ast \lrpar{\DDd_1^\ast \lrpar{\ombd,0}}_{\psi}
\end{aligned} 
\end{align*}
has trivial kernel and is elliptic (see Appendix \ref{SECellEstimatesSpheres}), it follows that $\ombd^{[l\geq2]}$ is glued at $S_2$ according to \eqref{EQS2matchingdata12333}. \\

\ni \textbf{(11) Gluing of $\Du \ombd^{[\geq2]}$.} On the one hand, by the representation formula for $\Du\ombd$ of Lemma \ref{COR2ndenergyEstimatesOMBDABDDUOMBD}, we have that

\begin{align} 
\begin{aligned} 
&\left[ \Du\ombd -\frac{1}{6v^3} \lrpar{\Ldo-3}\QQ_2+ \frac{1}{2v^2} \Divdo \Divdo \QQ_3 +\frac{1}{v^2} \Divdo \Divdo \DDd_2^\ast \di \QQ_1\right]_1^2\\
&-\left[ \frac{1}{4v^2} \Divdo \lrpar{\di \Divdo -2 + \Divdo\DD_2^\ast} \lrpar{\eta+ \frac{v}{2}\di\lrpar{\omtrchid-\frac{4}{v}\Omd}} \right]_1^2\\
&+ \left[\frac{1}{8v^2} \Divdo \di \Divdo \Divdo \gdcd+ \frac{1}{2v^3}\lrpar{\frac{1}{12}\Ldo\Ldo-\frac{1}{6}\Ldo+\frac{1}{4}\Divdo \Divdo \DDd_2^\ast \di - 1}\mfq_2\right]_1^2\\
=& \frac{1}{4} \Divdo \lrpar{2-\Divdo \DDd_2^\ast} \Divdo \lrpar{\int\limits_1^2 \frac{1}{v'^4}\chihd dv'}+\int\limits_1^2 h_{\Du\ombd} dv'.
\end{aligned} \label{EQbigrepformulaDUombd111MATCHING}
\end{align}

\ni On modes $l\geq3$, the operator on the right-hand side of \eqref{EQbigrepformulaDUombd111MATCHING}
\begin{align*} 
\begin{aligned} 
\Divdo \lrpar{2-\Divdo \DDd_2^\ast} \Divdo,
\end{aligned} 
\end{align*}
has trivial kernel and is elliptic, see Appendix \ref{SECellEstimatesSpheres}. Hence, projecting the above representation formula onto modes $l\geq3$, we can pick the $\chihd$-integral such that $\Du\ombd^{[\geq3]}$ is glued at $S_2$ according to \eqref{EQS2matchingdata12333}. Picking the projection of the $\chihd$-integral onto the mode $l=2$ to vanish, we get the estimate
\begin{align} 
\begin{aligned} 
\left\Vert \int\limits_1^2 \frac{1}{v'^4}\chihd dv' \right\Vert_{H^6(S_1)} \les& \Vert \xdmf_1 \Vert_{\mathcal{X}(S_1)} + \Vert \dot{\MMf}_2 \Vert_{\ZZ_\MMf(S_2)} + \Vert (\mfq_i)_{1\leq i \leq 10} \Vert_{\mathcal{Z}_\CC}. 
\end{aligned} \label{EQchihcond3estimate}
\end{align}

\ni On the other hand, by the matching of $\QQ_1$, $\QQ_2^{[2]}$, $\QQ_3$ and
\begin{align*} 
\begin{aligned} 
\QQ_7 :=& \Du\ombd^{[2]} +\frac{3}{2v^3} \QQ_2^{[2]}+ \frac{1}{2v^2} \Divdo \Divdo \QQ_3^{[2]} -\frac{12}{v^2} \QQ_1^{[2]} \\
&+{ \frac{3}{2v^2} \Divdo \lrpar{\eta+ \frac{v}{2}\di\lrpar{\omtrchid-\frac{4}{v}\Omd}} }^{[2]} - {\frac{3}{4v^2} \Divdo \Divdo \gdcd }^{[2]}, 
\end{aligned} 
\end{align*}
in \eqref{EQchargematching3142223} and the above gluing of $\Omd, \gdcd, \omtrchid,\etad^{[\geq2]}$ it follows that $\Du\ombd^{[2]}$ is glued at $S_2$ according to \eqref{EQS2matchingdata12333}.\\ 

\ni To summarise the above, we derived the integral conditions \eqref{EQprescriptionOMEGA}, \eqref{EQprescription1v2chihd}, \eqref{EQgluingConditionCHIHD2}, \eqref{EQadgluing}, \eqref{EQgluingConditionsDOMDDOM}, \eqref{EQgluingConditionCHIHD4} and \eqref{EQbigrepformulaDUombd111MATCHING} on $\Omd$ and $\chihd$ along $\HH$ which, if satisfied, imply the matching of \emph{matching data} \eqref{EQS2matchingdata12333} on $S_2$, that is,
\begin{align*} 
\begin{aligned} 
\MMf\lrpar{\xd\vert_{S_2}+ \dot{\PP}_f(\dot f)+\dot{\PP}_{({j}^1,{j}^2)}(\dot{j}^1,\dot{j}^2)} =& \dot{\MMf}_2 \text{ on } S_2,
\end{aligned} 
\end{align*}
with
\begin{align*} 
\begin{aligned} 
\mathfrak{M}(x_2):= \lrpar{\Omd, \phid, \gdcd, \omtrchid, \chihd, \omtrchibd^{[\geq2]}, \chibhd, \etad^{[\geq2]}, \omd, D\omd, \ombd^{[\geq2]}, \Du\ombd^{[\geq2]}, \QQ_5, \QQ_6, \ad, \abd}.
\end{aligned} 
\end{align*}
In the next section we show that $\Omd$ and $\chihd$ satisfying these conditions can be constructed in a regular fashion and prove estimates for the constructed solution $\xd$.

\subsubsection{Construction of solution and estimates} \label{SECboundslinearizedSolution112345} In this section we pick $\Omd$ and $\chihd$ subject to the conditions \eqref{EQS2matchingdata12333}, \eqref{EQprescriptionOMEGA}, \eqref{EQprescription1v2chihd}, \eqref{EQgluingConditionCHIHD2}, \eqref{EQadgluing}, \eqref{EQgluingConditionsDOMDDOM}, \eqref{EQgluingConditionCHIHD4} and \eqref{EQbigrepformulaDUombd111MATCHING}, and subsequently prove the estimate \eqref{EQlinearEstimateTOPROVE1233} for the constructed solution $\xd$.\\

\ni \textbf{Choice of $\chihd$ and $\Omd$ and estimates.} The proof of the following technical lemma follows from a straight-forward orthogonality construction and is omitted.

\begin{lemma}[Technical lemma] \label{LEMtechnicalEXPLICITFORMULA} Consider scalar functions $h_i$, $1\leq i \leq 7$, on the round unit sphere $S_1$. Then there exists a scalar function $\Omd$ on $\HH$ such that
\begin{align} 
\begin{aligned} 
\Omd(1) =& h_1, & D\Omd(1) =& h_2, & D^2\Omd(1) =& h_3, \\
\Omd(2) =& h_4, & D\Omd(2) =& h_5, & D^2\Omd(2) =& h_6,
\end{aligned} \label{EQomegadboundarycond}
\end{align}
and 
\begin{align*} 
\begin{aligned} 
\int\limits_1^2 \Omd dv' = h_7,
\end{aligned} 
\end{align*}
and the following bound holds,
\begin{align} 
\begin{aligned} 
\Vert \Omd \Vert_{H^6_2(\HH)} \les& \sum\limits_{1\leq i \leq 7}\Vert h_i \Vert_{H^6(S_1)}.
\end{aligned} \label{EQOmdconstructionbounds}
\end{align}
Further, consider tracefree symmetric $2$-tensors $W_i$, $1\leq i \leq8$ on $S_1$. Then there exists a tracefree symmetric $S_v$-tangent $2$-tensor $\chihd$ on $\HH$ such that 
\begin{align} 
\begin{aligned} 
 \chihd(1) =& W_1, &D\chihd(1) =& W_3, \\
 \chihd(2) =& W_2, & D\chihd(2) =& W_4,
\end{aligned} \label{EQchihdboundarycond1}
\end{align}
and
\begin{align*} 
\begin{aligned} 
\int\limits_1^2 \frac{1}{v'^2} \chihd dv' = W_5, \,\,\int\limits_1^2 \chihd dv'= W_6, \,\, \int\limits_1^2 \frac{1}{v'^3} \chihd dv'  = W_7, \,\,\int\limits_1^2 \frac{1}{v'^4} \chihd dv' = W_8.
\end{aligned} 
\end{align*}
and the following estimates hold,
\begin{align} 
\begin{aligned} 
\Vert \chihd \Vert_{H^6_2(\HH)} \les& \sum\limits_{1\leq i \leq 8} \Vert W_i \Vert_{H^6(S_1)}.
\end{aligned} \label{EQchihdconstructionbounds}
\end{align}

\end{lemma}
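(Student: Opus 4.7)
The plan is to reduce both constructions to a purely one-dimensional interpolation problem in the $v$-variable, using a biorthogonal basis argument. Along $\HH=\HH_{0,[1,2]}$ in Minkowski, the shift vanishes ($b=0$), so $D=\pr_v$; moreover, the coordinate frame $(\pr_{\th^1},\pr_{\th^2})$ identifies the spaces of (tracefree symmetric) $S_{0,v}$-tangent tensors for different $v$ canonically with tensors on $S_1$. Thus it suffices to write the desired solution as a finite sum of products of a fixed scalar function of $v$ and the prescribed tensor on $S_1$.

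For $\Omd$, consider the seven continuous linear functionals on $C^\infty([1,2])$,
\[ L_1[\psi]=\psi(1),\ L_2[\psi]=\psi'(1),\ L_3[\psi]=\psi''(1),\ L_4[\psi]=\psi(2),\ L_5[\psi]=\psi'(2),\ L_6[\psi]=\psi''(2),\ L_7[\psi]=\int_1^2\psi\,dv. \]
These are linearly independent; indeed, the $7\times 7$ matrix $(L_i[v^{j-1}])_{i,j=1}^7$ is invertible by a direct (Vandermonde-type) computation. Inverting this matrix yields a biorthogonal family $\psi_1,\dots,\psi_7 \in C^\infty([1,2])$ (concretely, polynomials of degree at most $6$) with $L_i[\psi_j]=\delta_{ij}$. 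Setting
\[ \Omd(v,\th) := \sum_{i=1}^{7} h_i(\th)\,\psi_i(v), \]
the biorthogonality yields all seven conditions in \eqref{EQomegadboundarycond} and $\int_1^2\Omd\,dv'=h_7$ at once. The bound \eqref{EQOmdconstructionbounds} is immediate: since each $\psi_i$ has a fixed finite $C^2([1,2])$ norm, the $H^6_2(\HH)$ norm of $\Omd$ is controlled by $\sum_i \Vert h_i \Vert_{H^6(S_1)}\cdot\Vert\psi_i\Vert_{C^2([1,2])}$.

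For $\chihd$, proceed analogously with eight continuous linear functionals on $C^\infty([1,2])$: two pointwise values and two pointwise first derivatives at $v=1,2$, together with the four weighted integrals $\int_1^2 v^{-k}\psi\,dv$ for $k=0,2,3,4$. Linear independence can be checked on the basis $\{v^{j-1}\}_{j=1}^{8}$ by computing the determinant of the associated $8\times 8$ matrix, which is nonzero. (Alternatively, noting that the four pointwise functionals annihilate a six-dimensional subspace of polynomials of degree $\leq 7$, it suffices to observe that the four integral functionals restricted to this subspace remain linearly independent since the four weights are pairwise distinct, which reduces to a generalized Vandermonde determinant.) Let $\varphi_1,\dots,\varphi_8\in C^\infty([1,2])$ be the resulting biorthogonal family and set
\[ \chihd(v,\th) := \sum_{i=1}^{8} W_i(\th)\,\varphi_i(v), \]
where each $W_i$ is viewed as a tracefree symmetric $S_{0,v}$-tangent $2$-tensor via the coordinate frame identification (which preserves the tracefree property since the conformal metric $\gac$ is independent of $v$). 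The biorthogonality yields the four pointwise conditions \eqref{EQchihdboundarycond1} and the four integral conditions, and \eqref{EQchihdconstructionbounds} follows as in the scalar case.

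The only mildly nontrivial step is the linear independence of the eight functionals in the $\chihd$ construction; everything else is elementary one-dimensional linear algebra and a finite-sum norm estimate. I do not expect any genuine obstacle.
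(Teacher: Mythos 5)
Your reduction to a one-dimensional interpolation problem in $v$, solved by a biorthogonal family, is exactly the ``orthogonality construction'' the paper alludes to, and the final norm estimate and the preservation of $\gac$-tracefree-ness under the coordinate-frame identification are both fine.

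The one step that needs repair is the nonsingularity of the $8\times8$ interpolation system for $\chihd$: you assert the determinant is nonzero without computing it, and the parenthetical alternative has two errors. The joint kernel of $\psi(1),\psi'(1),\psi(2),\psi'(2)$ among polynomials of degree $\leq 7$ is the \emph{four}-dimensional space $\{(v-1)^2(v-2)^2 q : \deg q\leq 3\}$, not six-dimensional; and restricting the four integral functionals to this space produces the weighted moment matrix with entries $\int_1^2(v-1)^2(v-2)^2 v^{j-1-k_i}\,dv$, $k_i\in\{0,2,3,4\}$, which is \emph{not} a generalized Vandermonde determinant, so distinctness of the four weights does not by itself decide the question. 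The nonsingularity does hold and can be shown by a sign-change argument: suppose $\deg q\leq 3$ and $\int_1^2 q(v)(v-1)^2(v-2)^2 v^{-k}\,dv=0$ for $k\in\{0,2,3,4\}$, and set $g(v):=q(v)(v-1)^2(v-2)^2 v^{-4}$, so that $\int_1^2 g\,p\,dv=0$ for every $p$ in the linear span of $1,v,v^2,v^4$. If $q\not\equiv 0$ then $g$ changes sign at $m\leq 3$ points $t_1<\cdots<t_m$ in $(1,2)$; for $m\leq 2$ take $p=\prod_i(v-t_i)$, and for $m=3$ take $p=(v-t_1)(v-t_2)(v-t_3)(v+t_1+t_2+t_3)$, which has no $v^3$-term and no further root in $[1,2]$ since $-(t_1+t_2+t_3)<0$. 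In each case $p$ lies in the span and $pg$ has a definite sign on $[1,2]$, so $\int_1^2 pg\,dv\neq 0$, a contradiction. The $\Omd$ system is easier: the kernel of the six Hermite conditions among degree-$\leq6$ polynomials is spanned by $(v-1)^3(v-2)^3$, whose integral over $[1,2]$ is nonzero, so your argument there is complete.
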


\begin{remark}[Linearized characteristic gluing of higher-order $L$-derivatives II] \label{REMhigherLgluing2} Let $m\geq0$ be an integer. By the linearized null structure equations
\begin{align*} 
\begin{aligned} 
D\chihd = -\ad + \dot{\CC}_{10}, \,\, \omd = D\Omd,
\end{aligned} 
\end{align*}
Lemma \ref{LEMtechnicalEXPLICITFORMULA} extends in a straight-forward way to the higher-order boundary conditions given by \eqref{EQomegadboundarycond}, \eqref{EQchihdboundarycond1} and the additional
\begin{align*} 
\begin{aligned} 
D^i \omd(1) = V_{1,i}, \,\, D^i \omd(2) = V_{2,i} \,\, \widehat{D}^i \ad(1) = V_{3,i}, \,\, \widehat{D}^i \ad(2) = V_{4,i}  \text{ for } 1\leq i \leq m.
\end{aligned} 
\end{align*}
In this setting, the right-hand sides of \eqref{EQOmdconstructionbounds} and \eqref{EQchihdconstructionbounds} get the following additional terms, respectively,
\begin{align*} 
\begin{aligned} 
&\sum\limits_{0\leq i \leq m} \lrpar{\Vert h_i \Vert_{H^6(S_1)} + \Vert V_{1,i}\Vert_{H^6(S_2)}+\Vert V_{2,i}\Vert_{H^6(S_2)}}  \\ 
\text{ and }&\sum\limits_{0\leq i \leq m} \lrpar{\Vert W_i \Vert_{H^6(S_1)} + \Vert V_{3,i} \Vert_{H^6(S_2)}+\Vert V_{4,i} \Vert_{H^6(S_2)}}.
\end{aligned} 
\end{align*}
\end{remark}

\ni Let $\Omd$ and $\chihd$ be the quantities constructed in Lemma \ref{LEMtechnicalEXPLICITFORMULA} subject to the gluing conditions
\begin{itemize}
\item \eqref{EQprescriptionOMEGA} for $\Omd$,
\item \eqref{EQprescription1v2chihd}, \eqref{EQgluingConditionCHIHD2},  \eqref{EQgluingConditionCHIHD4} and  \eqref{EQbigrepformulaDUombd111MATCHING} for $\chihd$,
\end{itemize}
and the prescribed boundary values given by $\xdmf_1$ on $S_1$, $\dot{\MMf}_2$ on $S_2$ and \eqref{EQadgluing} and \eqref{EQgluingConditionsDOMDDOM}.

By Lemma \ref{LEMtechnicalEXPLICITFORMULA} together with the estimates \eqref{EQprescriptionOMEGAestimate1}, \eqref{EQprescription1v2chihdestimate1}, \eqref{EQgluingConditionCHIHD2estimate1}, \eqref{EQadgluingESTIMATES}, \eqref{EQgluingConditionsDOMDDOMBOUND}, \eqref{EQchihcond3estimate2} and \eqref{EQchihcond3estimate}, the constructed $\Omd$ and $\chihd$ satisfy
\begin{align} 
\begin{aligned} 
\Vert \Omd \Vert_{H^6_3(\HH)}+\Vert \chihd \Vert_{H^6_2(\HH)} \les& \Vert \dot{\mathfrak{X}}_1 \Vert_{\XX(S_1)} + \Vert \dot{\MMf}_2 \Vert_{\ZZ_\MMf(S_2)} + \Vert (\mfq_i)_{1\leq i \leq 10} \Vert_{\mathcal{Z}_\CC}.
\end{aligned} \label{EQestimateFinalForOMDCHIHD}
\end{align}

\ni \textbf{Estimates for remaining quantities.} In the following we prove the next bound,
\begin{align} 
\begin{aligned} 
&\Vert \xd \Vert_{\XX(\HH)} + \Vert \dot f \Vert_{\YY_f} + \Vert (\dot{j}^1,\dot{j}^2)\Vert_{\YY_{({j}^1,{j}^2)}}+\Vert \PP_f(\dot f)\Vert_{\XX(S_2)} + \Vert \PP_{({j}^1,{j}^2)}(\dot{j}^1,\dot{j}^2) \Vert_{\XX(S_2)} \\
 \les& \Vert \dot{\mathfrak{X}}_1 \Vert_{\XX(S_1)} +  \Vert \dot{\MMf}_2 \Vert_{\ZZ_\MMf(S_2)}+ \Vert (\mfq_i)_{1\leq i \leq 10} \Vert_{\mathcal{Z}_\CC}.
\end{aligned} \label{EQestimateTOPROVE23}
\end{align}

\ni First, by Lemmas \ref{LEMenergyestimatesPHID}, \ref{LEMenergyestimatesETAD}, \ref{LEMenergyestimatesCHIBHD} and \ref{LEMenergyestimatesDUOMBD}, and \eqref{EQestimateFinalForOMDCHIHD}, we have that
\begin{align*} 
\begin{aligned} 
\Vert \xd \Vert_{\XX(\HH)} \les& \Vert \xdmf_1 \Vert_{\XX(S_1)}+ \Vert \Omd \Vert_{H^6_3(\HH)} + \Vert \chihd \Vert_{H^6_2(\HH)} + \Vert (\mfq_i)_{1\leq i \leq 10} \Vert_{\mathcal{Z}_\CC}.
\end{aligned} 
\end{align*}

\ni Second, we have by \eqref{EQQmatchingEstimatefullRic11} in Section \ref{SECmatchingCharges11} that
\begin{align*} 
\begin{aligned} 
& \Vert \dot f \Vert_{\YY_f} + \Vert (\dot{j}^1,\dot{j}^2)\Vert_{\YY_{({j}^1,{j}^2)}}+\Vert \PP_f(\dot f)\Vert_{\XX(S_2)} + \Vert \PP_{({j}^1,{j}^2)}(\dot{j}^1,\dot{j}^2) \Vert_{\XX(S_2)} \\
  \les& \Vert \dot{\mathfrak{X}}_1 \Vert_{\XX(S_1)} +  \Vert \dot{\MMf}_2 \Vert_{\ZZ_\MMf(S_2)} + \Vert (\mfq_i)_{1\leq i \leq 10} \Vert_{\mathcal{Z}_\CC}.
\end{aligned} 
\end{align*}
 
\ni This finishes the proof of \eqref{EQestimateTOPROVE23} and hence of Theorem \ref{PROPlingluing22}.

\section{Proof of main theorem} \label{SECLinearizedNullConstraintsAroundMinkowski} 
In this section we prove Theorem \ref{PROPNLgluingOrthA121}. We proceed as follows.
\begin{itemize}
\item In Section \ref{SECimplicitA12setup} we set up the framework for the implicit function theorem.

\item In Section \ref{SECconclusion} we use the implicit function theorem and the solution to the linearized characteristic gluing problem at Minkowski (see Section \ref{SEClinearizedProblem}) to construct the solution $x$ to the null structure equations on $\HH_{0,[1,2]}$. 

\item In Sections \ref{SECconclusion2} and \ref{SECconclusion27778} we prove the additional charge estimates \eqref{EQChargeEstimatesMainTheorem0} and \eqref{EQChargeEstimatesMainTheorem}, respectively. These estimates for $(\mathbf{E},\mathbf{P},\mathbf{L},\mathbf{G})$ are based on the construction in Section \ref{SECconclusion} as well as the analysis of the linearizations of the sphere perturbations, angular perturbations and null transport equations for charges \emph{at Schwarzschild of mass $M\geq0$} provided in Appendix \ref{SEClinearizedCHARGEequationsSSAPP}.

\item In Section \ref{SECconclusion4} we give an outline of the proof of Theorem \ref{THMHIGHERorderLgluingMAINTHEOREM}, that is, the characteristic gluing of higher-order $L$-derivatives.
\end{itemize}

\subsection{Setup of framework for the proof} \label{SECimplicitA12setup} The proof of Theorem \ref{PROPNLgluingOrthA121} is based on the application of the implicit function theorem to the following mapping $\FF$.

\begin{definition}[Definition of $\FF$] \label{DEFmappingFF0} Let 
\begin{itemize}
\item $x_{0,1} \in \mathcal{X}(S_{0,1})$ be sphere data, 
\item $\tilde{\underline{x}} \in \XX^+(\tilde{\HHb}_{[-\de,\de],2})$ be ingoing null data on $\tilde{\HHb}_{[-\de,\de],2}$,
\item $x \in \XX(\HH_{0,[1,2]})$ be null data on $\HH_{0,[1,2]}$, 
\item $f \in \YY_f$ and $(j^1,j^2) \in \YY_{(j^1,j^2)}$ be perturbation functions,
\end{itemize}
where the spaces $\mathcal{X}(S_{0,1})$, $\XX^+(\tilde{\HHb}_{[-\de,\de],2})$, $\XX(\HH_{0,[1,2]})$, $\YY_f$ and $\YY_f$ are introduced in Definitions \ref{DEFnormFirstOrderDATA}, \ref{DEFspacetimeNORM}, \ref{DEFnormHH}, \ref{DEFnormfperturbations}, and \ref{DEFnormj1j2}, respectively. Define the mapping $\FF$ by
\begin{align} 
\begin{aligned} 
\FF(x_{0,1},\tilde{\underline{x}},x,f,(j^1,j^2)) := \lrpar{x\vert_{S_{0,1}}-x_{0,1},\mathfrak{M}\lrpar{x\vert_{S_{0,2}}} -\mathfrak{M}\lrpar{\PP_{(j^1,j^2)} {\PP_f\lrpar{\tilde{\underline{x}}}}}, \lrpar{\CC_i(x)}_{1\leq i \leq 10}},
\end{aligned} \label{EQdefinitiontildeCC}
\end{align}
where 
\begin{itemize}
\item $\mathfrak{M}$ is the matching map of Definition \ref{DEFmatchingMAP},
\item $\lrpar{\CC_i}_{1\leq i \leq 10}$ are the constraint functions defined in Section \ref{SECconformalMethod1112},
\item $\PP_{(j^1,j^2)}$ and $\PP_{f}$ are the perturbations of sphere data defined in Section \ref{SECdefEquivalenceFirstOrderSphereData}.
\end{itemize}
\end{definition}

\ni From Definition \ref{DEFmappingFF0}, Lemma \ref{LEMstandardEstimates} and Propositions \ref{PropositionSmoothnessF} and \ref{PropositionSmoothnessF2}, we make the following observations concerning $\FF$.
\begin{enumerate}
\item For each real number $M\geq0$, 
\begin{align} 
\begin{aligned} 
\FF(\mathfrak{m}^M,\underline{\mathfrak{m}}^M,\mathfrak{m}^M,0,(0,0)) = \lrpar{0,0,0},
\end{aligned} \label{EQmappingPropertiesFF1}
\end{align}
where $\mathfrak{m}^M$ denotes the Schwarzschild sphere data.

\item The mapping $\FF$ is well-defined and smooth as mapping between the spaces
\begin{align*}
\begin{aligned}
\FF: \mathcal{X}(S_{0,1}) \times \mathcal{X}^+(\tilde{\HHb}_{[-\de,\de],2}) \times \XX(\HH_{0,[1,2]}) \times \mathcal{Y}_f \times \mathcal{Y}_{(j^1,j^2)} \to \mathcal{X}(S_{0,1}) \times \ZZ_\MMf(S_{0,2}) \times \mathcal{Z}_\CC
\end{aligned}
\end{align*}
in an open neighbourhood of 
\begin{align*} 
\begin{aligned} 
(x_{0,1},\tilde{\underline{x}},x,f,(j^1,j^2))=(\mathfrak{m},\underline{\mathfrak{m}},\mathfrak{m},0,(0,0)).
\end{aligned} 
\end{align*}

\item For real numbers $M\geq0$ sufficiently small, the linearization $\dot{\FF}^M$ of $\FF$ in $(x,f,(j^1,j^2))$ at 
\begin{align*} 
\begin{aligned} 
(x_{0,1},\tilde{\underline{x}},x,f,(j^1,j^2))=(\mathfrak{m}^M,\underline{\mathfrak{m}}^M,\mathfrak{m}^M,0,(0,0)),
\end{aligned} 
\end{align*}
is a well-defined, bounded linear operator between the spaces
\begin{align*} 
\begin{aligned} 
\dot{\FF}^M: \XX(\HH_{0,[1,2]}) \times \mathcal{Y}_f \times \mathcal{Y}_{(j^1,j^2)} \to  \mathcal{X}(S_{0,1}) \times \mathcal{Z}_\MMf(S_{0,2}) \times \ZZ_\CC,
\end{aligned} 
\end{align*}
and explicitly given by
\begin{align} 
\begin{aligned} 
\dot{\FF}^M(\dot{x},\dot{f},(\dot{j}^1,\dot{j}^2)) = \lrpar{ \dot{x}\vert_{S_{0,1}},\MMf\lrpar{\dot{x} \vert_{S_{0,2}}-\dot{\PP}^M_{(j^1,j^2)}(\dot{j}^1,\dot{j}^2) -\dot{\PP}^M_f\lrpar{\dot{f}}},\lrpar{\dot{\CC}^M_i}_{1\leq i \leq 10}},
\end{aligned} \label{EQlinearizedDerivationProblem9909}
\end{align}
where the linearized constraint functions $\dot{\CC}^M_i$, $1\leq i \leq 10$, are given in Section \ref{SECconformalMethod1112} and the linearized perturbations $\dot{\PP}^M_f$ and $\dot{\PP}^M_{(j^1,j^2)}$ are given in Section \ref{SECdefEquivalenceFirstOrderSphereData}.
\end{enumerate}

\ni Importantly, the linearization $\dot{\FF}^0$ at Minkowski, given in \eqref{EQlinearizedDerivationProblem9909}, is in accordance with the setup of the linearized characteristic gluing problem at Minkowski in Section \ref{SEClinearizedProblem}, so that \textbf{Theorem \ref{PROPlingluing22} implies that the linearization $\dot{\FF}^0$ is surjective. This constitutes the central ingredient for the proof of Theorem \ref{PROPNLgluingOrthA121}.}

As $\dot{\FF}^0$ is a bounded linear mapping between Hilbert spaces, its kernel $\mathrm{ker}(\dot{\FF}^0)$ is a closed subspace and the following splitting holds,
\begin{align*} 
\begin{aligned} 
\XX(\HH_{0,[1,2]}) \times \mathcal{Y}_f \times \mathcal{Y}_{(j^1,j^2)} = \mathrm{ker}(\dot{\FF}^0) \oplus \lrpar{\mathrm{ker}(\dot{\FF}^0)}^\perp.
\end{aligned} 
\end{align*}
In the following we consider only the restriction $\overline{\FF}$ of $\FF$ to
\begin{align*} 
\begin{aligned} 
\overline{\FF}: \mathcal{X}(S_{0,1}) \times \mathcal{X}^+(\tilde{\HHb}_{[-\de,\de],2}) \times \lrpar{\mathrm{ker}(\dot{\FF}^0)}^\perp \to \mathcal{X}(S_{0,1}) \times \ZZ_\MMf(S_{0,2}) \times \mathcal{Z}_\CC.
\end{aligned} 
\end{align*}
In this setting, the linearization $\dot{\overline{\FF}}^0$ is a \emph{bijection} between Hilbert spaces. 

By the continuity in $M\geq0$ of the family of linearizations
\begin{align*} 
\begin{aligned} 
\dot{\overline{\FF}}^M: \lrpar{\mathrm{ker}(\dot{\FF}^0)}^\perp \to  \mathcal{X}(S_{0,1}) \times \mathcal{Z}_\MMf(S_{0,2}) \times \ZZ_\CC,
\end{aligned} 
\end{align*}
and the classical functional analysis result that \emph{bijectivity is an open property of bounded linear operators}, we have the following corollary.

\begin{corollary}[Bijectivity of $\dot{\overline{\FF}}^M$] \label{CORFFMsurj} For real numbers $M\geq0$ sufficiently small, the linearization $\dot{\overline{\FF}}^M$ is a bijection, and the solution $(\xd, \dot{f}, (\dot{j}^1,\dot{j}^2))$ of 
\begin{align*} 
\begin{aligned} 
\dot{\overline{\FF}}^M(\xd, \dot{f}, (\dot{j}^1,\dot{j}^2)) = \lrpar{\dot{\mathfrak{X}}_{0,1}, \dot{\MMf}_{0,2}, (\mfq_i)_{1\leq i \leq 10}}
\end{aligned} 
\end{align*}
satisfies the following estimate,
\begin{align*} 
\begin{aligned} 
&\Vert \xd \Vert_{\mathcal{X}(\HH_{0,[1,2]})} + \Vert \dot f \Vert_{\YY_f} + \Vert (\dot{j}^1,\dot{j}^2) \Vert_{\YY_{({j}^1,{j}^2)}}+\left\Vert \dot{\PP}^M_{(j^1,j^2)}(\dot{j}_1,\dot{j}_2) \right\Vert_{\mathcal{X}(S_{0,2})}+\left\Vert \dot{\PP}^M_f\lrpar{\dot{f}} \right\Vert_{\mathcal{X}(S_{0,2})}\\
 \les& \Vert \dot{\mathfrak{X}}_{0,1} \Vert_{\mathcal{X}(S_{0,1})}+\Vert \dot{\MMf}_{0,2} \Vert_{\ZZ_\MMf(S_{0,2})}+ \Vert (\mfq_i)_{1\leq i \leq 10} \Vert_{\mathcal{Z}_\CC}.
\end{aligned} 
\end{align*}

\end{corollary}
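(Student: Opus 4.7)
The plan is to deduce Corollary \ref{CORFFMsurj} from Theorem \ref{PROPlingluing22} together with a standard perturbation argument for bounded linear operators between Hilbert spaces. First, I would reinterpret Theorem \ref{PROPlingluing22} as precisely the statement that $\dot{\FF}^0$ is surjective with a bounded right inverse satisfying the estimate \eqref{EQlinearEstimateTOPROVE1233}. Restricting the domain to the orthogonal complement $(\mathrm{ker}(\dot\FF^0))^\perp$ then kills the remaining kernel and turns surjectivity into bijectivity, so $\dot{\overline\FF}{}^0$ is a bounded linear bijection between Hilbert spaces. By the bounded inverse theorem (equivalently, the open mapping theorem), its inverse $(\dot{\overline\FF}{}^0)^{-1}$ is a bounded operator, whose operator norm is exactly the constant implicit in \eqref{EQlinearEstimateTOPROVE1233}.

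Second, I would establish the continuous dependence $M \mapsto \dot{\overline\FF}{}^M$ in the operator norm topology. This follows by inspection of the explicit formula \eqref{EQlinearizedDerivationProblem9909}: the components $\dot\CC_i^M$ of the linearized null structure equations, together with the linearized perturbation maps $\dot\PP_f^M$ and $\dot\PP_{(j^1,j^2)}^M$, depend smoothly on $M$ through the reference Schwarzschild sphere data $\mathfrak{m}^M$ and area radius $r_M(u,v)$ (both smooth in $M$ for $M$ small by the implicit function theorem applied to \eqref{EQdefRbyUV}). All coefficient perturbations at level $M$ relative to $M=0$ are of size $\OO(M)$ in the relevant function spaces, so
\begin{align*}
\bigl\Vert \dot{\overline\FF}{}^M - \dot{\overline\FF}{}^0 \bigr\Vert_{\mathrm{op}} \les M.
\end{align*}

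Third, I would invoke the classical fact that the set of invertible bounded linear operators between two Hilbert spaces is open: if $A$ is invertible and $\Vert B-A\Vert_{\mathrm{op}} < \Vert A^{-1}\Vert_{\mathrm{op}}^{-1}$, then $B$ is invertible with $B^{-1} = \sum_{k\geq 0} (A^{-1}(A-B))^k A^{-1}$ (Neumann series), and $\Vert B^{-1}\Vert_{\mathrm{op}} \leq 2 \Vert A^{-1}\Vert_{\mathrm{op}}$ provided $\Vert B-A\Vert_{\mathrm{op}} \leq \tfrac12\Vert A^{-1}\Vert_{\mathrm{op}}^{-1}$. Combining this with the operator-norm bound from the previous paragraph, there is a universal $M_0>0$ such that for all $0\leq M<M_0$, $\dot{\overline\FF}{}^M$ is bijective and
\begin{align*}
\bigl\Vert (\dot{\overline\FF}{}^M)^{-1} \bigr\Vert_{\mathrm{op}} \leq 2 \bigl\Vert (\dot{\overline\FF}{}^0)^{-1} \bigr\Vert_{\mathrm{op}}.
\end{align*}

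Finally, the quantitative estimate in Corollary \ref{CORFFMsurj} follows by applying this uniform bound on $(\dot{\overline\FF}{}^M)^{-1}$ to the given right-hand side $(\dot{\mathfrak X}_{0,1}, \dot\MMf_{0,2}, (\mfq_i)_{1\leq i \leq 10})$, together with the boundedness estimates for $\dot\PP_f^M$ and $\dot\PP_{(j^1,j^2)}^M$ provided by Lemma \ref{LEMestimatesSpherePerturbations}. I do not anticipate any real obstacle here; the only mildly subtle point is the smooth $M$-dependence of the perturbation maps $\dot\PP_f^M$ and $\dot\PP_{(j^1,j^2)}^M$, which is covered by the Schwarzschild analysis in Appendix \ref{SEClinearizedCHARGEequationsSSAPP} (Lemmas \ref{LEMlinearizedTransversalSCHWARZSCHILD} and \ref{LEMspherediffLINSCHWARZSCHILD}). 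The rest is essentially an appeal to the stability of invertibility under small perturbations in the operator norm.
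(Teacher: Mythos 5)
Your proposal is correct and follows essentially the same route as the paper: Theorem \ref{PROPlingluing22} gives bijectivity of $\dot{\overline{\FF}}^0$, and the corollary is deduced from operator-norm continuity of $M \mapsto \dot{\overline{\FF}}^M$ together with openness of invertibility (your Neumann-series argument is the standard proof of that fact, which the paper cites without expansion). One minor caveat: the paper only asserts that $r_M(u,v)$ is \emph{continuous} in $M$ at $M=0$ (smooth away from $M=0$), so the claimed $\OO(M)$ bound may be optimistic at $M=0$ itself; however, continuity in the operator norm is all that the Neumann-series argument actually requires, so this does not affect the validity of the proof.
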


\subsection{Construction of solution to the null structure equations}  \label{SECconclusion} 
In this section we apply the implicit function theorem to $\overline{\FF}$ to construct solutions $x$ to the null structure equations satisfying matching conditions.

With view on applying the implicit function theorem (see Theorem \ref{thm:InverseMars14}) to $\overline{\FF}$ at Schwarzschild of small mass $M\geq0$, we recall the following properties from Section \ref{SECimplicitA12setup}.

\begin{enumerate}

\item For $M\geq0$ sufficiently small, the mapping
$$\overline{\FF}: \mathcal{X}(S_{0,1}) \times \mathcal{X}^+(\tilde{\HHb}_{[-\de,\de],2}) \times \lrpar{\mathrm{ker}(\dot{\FF}^0)}^\perp \to  \mathcal{X}(S_{0,1}) \times \ZZ_\MMf(S_{0,2}) \times \mathcal{Z}_\CC$$
is a well-defined and smooth mapping between Hilbert spaces in an open neighbourhood of 
\begin{align} 
\begin{aligned} 
(x_{0,1},\tilde{\underline{x}},x,f,(j^1,j^2))=(\mathfrak{m}^M,\underline{\mathfrak{m}}^M,\mathfrak{m}^M,0,(0,0)),
\end{aligned} \label{EQevalSSMappIFT1}
\end{align}
where the size of the neighbourhood is independent of $M$.

\item By \eqref{EQmappingPropertiesFF1} it holds that
$$\overline{\FF}(\mathfrak{m}^M,\underline{\mathfrak{m}}^M,\mathfrak{m}^M,0,(0,0)) = \lrpar{0,0,0}.$$

\item For $M\geq0$ sufficiently small, the linearization $\dot{\overline{\FF}}^M$ of $\overline{\FF}$ in $(x,f,(j^1,j^2))$ evaluated at \eqref{EQevalSSMappIFT1} is a bijection. 

\end{enumerate}

\ni By the above, for $M\geq0$ sufficiently small, we can apply the implicit function theorem to $\overline{\FF}$ at \eqref{EQevalSSMappIFT1}. We conclude that there is a universal radius $r_0>0$ and a smooth mapping
\begin{align*} 
\begin{aligned} 
\GG^M: B\lrpar{(\mathfrak{m}^M,\underline{\mathfrak{m}}^M),r_0} \to \lrpar{\mathrm{ker}(\dot{\FF}^0)}^\perp \subset \XX(\HH_{0,[1,2]}) \times \mathcal{Y}_f \times \mathcal{Y}_{(j^1,j^2)},
\end{aligned} 
\end{align*}
where $B\lrpar{(\mathfrak{m}^M,\underline{\mathfrak{m}}^M),r_0}$ denotes the open ball of radius $r_0>0$ centered at $(\mathfrak{m}^M,\underline{\mathfrak{m}}^M)$,
\begin{align*} 
\begin{aligned} 
B\lrpar{(\mathfrak{m}^M,\underline{\mathfrak{m}}^M),r_0}\subset \mathcal{X}(S_{0,1}) \times \mathcal{X}^+(\HHb_{[-\de,\de],2}),
\end{aligned} 
\end{align*}
such that for all $(x_{0,1},\tilde{\underline{x}}) \in B\lrpar{(\mathfrak{m}^M,\underline{\mathfrak{m}}^M),r_0}$, 
\begin{align} 
\begin{aligned} 
\overline{\FF}\lrpar{x_{0,1},\tilde{\underline{x}}, \GG^M(x_{0,1},\tilde{\underline{x}})} = (0,0,0).
\end{aligned} \label{EQinverseLocal12}
\end{align}
Defining for given $(x_{0,1},\tilde{\underline{x}}) \in B\lrpar{(\mathfrak{m}^M,\underline{\mathfrak{m}}^M),r_0}$,
\begin{align*} 
\begin{aligned} 
(x,f,(j_1,j_2)):= \GG^M(x_{0,1},\tilde{\underline{x}}),
\end{aligned} 
\end{align*} 
we have by \eqref{EQinverseLocal12} and the definition of $\overline{\FF}$ as restriction of the mapping $\FF$ introduced in \eqref{EQdefinitiontildeCC} that 
\begin{align} 
\begin{aligned} 
&\CC_i(x) = 0 \,\, \text{ on } \HH_{0,[1,2]}  \text{ for } 1\leq i \leq 10,\\
&x \vert_{S_{0,1}} = {x_{0,1}}, \,\, \MMf(x \vert_{S_{0,2}}) = \MMf\lrpar{\PP_{(j^1,j^2)} \PP_f(\tilde{\underline{x}})}.
\end{aligned} \label{EQmatchingProofMainTheoremFinalStatement}
\end{align}

\ni This proves \eqref{EQmaintheoremMATCHINGcond4}. The matching \eqref{EQspheredatamatchingMAINTHEOREMS02} under the charge matching condition \eqref{EQmainTheoremmatchingcondition122} follows directly from Lemma \ref{LEMconditionalMATCHING}.

We turn to the proof of \eqref{EQboundsFullRESULTMAINTHM1}. Applying Lemma \ref{LEMoperatorEstimates} to the smooth map $\GG^M$, we get that
\begin{align} 
\begin{aligned} 
&\Vert x - \mathfrak{m}^M \Vert_{\XX(\HH_{0,[1,2]})} + \Vert f \Vert_{\YY_f} + \Vert (j^1,j^2) \Vert_{\mathcal{Y}_{(j^1,j^2)}} \\
\les& \Vert x_{0,1}-\mathfrak{m}^M_{0,1} \Vert_{\XX(S_{0,1})} + \Vert \tilde{\underline{x}}_{[-\de,\de],2}-\underline{\mathfrak{m}}^M \Vert_{\tilde{\mathcal{X}}^+(\HHb_{[-\de,\de],2}) }.
\end{aligned} \label{EQfinalTHMfirstFinalEstimate144}
\end{align} 
By \eqref{EQfinalTHMfirstFinalEstimate144} and Proposition \ref{PropositionSmoothnessF} and \ref{PropositionSmoothnessF2}, it further follows that for $\varep>0$ sufficiently small,
\begin{align*} 
\begin{aligned} 
\Vert \PP_{(j^1,j^2)} \PP_f(\tilde{\underline{x}}) -\tilde{\underline{x}}_{0,2} \Vert_{\XX(S_{0,2})} \les& \Vert \PP_{(j^1,j^2)} \PP_f(\tilde{\underline{x}}) -  \PP_f(\tilde{\underline{x}}) \Vert_{\XX(S_{0,2})} + \Vert  \PP_f(\tilde{\underline{x}}) -\tilde{\underline{x}}_{0,2} \Vert_{\XX(S_{0,2})}\\
\les& \Vert (j^1,j^2) \Vert_{\YY_{(j^1,j^2)}} + \Vert  \PP_f(\tilde{\underline{x}}) -\underline{\mathfrak{m}}^M \Vert_{\XX(S_{0,2})} \\
&+ \Vert f \Vert_{\YY_f} + \Vert \tilde{\underline{x}}_{[-\de,\de],2}- \underline{\mathfrak{m}}^M \Vert_{\mathcal{X}^+(\tilde{\HHb}_{[-\de,\de],2}) } \\
\les& \Vert (j^1,j^2) \Vert_{\YY_{(j^1,j^2)}} + \Vert f \Vert_{\YY_f} + \Vert \tilde{\underline{x}}_{[-\de,\de],2}- \underline{\mathfrak{m}}^M \Vert_{\mathcal{X}^+(\tilde{\HHb}_{[-\de,\de],2}) }\\
\les & \Vert x_{0,1}-\mathfrak{m}^M \Vert_{\XX(S_{0,1})} + \Vert \tilde{\underline{x}}_{[-\de,\de],2}-\underline{\mathfrak{m}}^M \Vert_{\mathcal{X}^+(\tilde{\HHb}_{[-\de,\de],2}) }.
\end{aligned} 
\end{align*}
We underline that the radius $r_0>0$ and the constants in the above estimates are universal for small $M\geq0$. This follows from the smoothness of $\overline{\FF}$ and the continuity in $M\geq0$ of the Schwarzschild data \eqref{EQevalSSMappIFT1}; see also, for example, Proposition 2.5.6 in \cite{MarsdenImplicit}. This finishes the proof of the estimates \eqref{EQboundsFullRESULTMAINTHM1}. 

\subsection{Proof of the charge perturbation estimate \eqref{EQChargeEstimatesMainTheorem0}}  \label{SECconclusion2} 

In this section we prove \eqref{EQChargeEstimatesMainTheorem0}, that is, for $M\geq0$ and $\varep>0$ sufficiently small, the following estimate holds, 
\begin{align} 
\begin{aligned} 
\left\vert \lrpar{\mathbf{E},\mathbf{P},\mathbf{L},\mathbf{G}}\lrpar{x_{0,2}} - \lrpar{\mathbf{E},\mathbf{P},\mathbf{L},\mathbf{G}}\lrpar{\tilde{\underline{x}}_{0,2}} \right\vert \les \varep M+ \varep^2,
\end{aligned} \label{EQ3proofMAINTHM11}
\end{align}
where $x_{0,2}:= \PP_{(j^1,j^2)} \PP_f(\tilde{\underline{x}})$.

Consider first \eqref{EQ3proofMAINTHM11} for the charge $\mathbf{E}$. For this section, we introduce the map
\begin{align} 
\begin{aligned} 
\mathbf{E}: \XX^+(\tilde{\HHb}_{[-\de,\de],2}) \times \YY_f \times \YY_{(j^1,j^2)} \to& H^4(S_{0,2}),\\
(\tilde{\underline{x}}, f, (j^1,j^2)) \mapsto& \mathbf{E}\lrpar{\tilde{\underline{x}}, f, (j^1,j^2)}:= \mathbf{E}\lrpar{\PP_{(j^1,j^2)} \PP_f (\tilde{\underline{x}})}.
\end{aligned} \label{EQmappingEperturbationestimate11}
\end{align}
From the smoothness of the perturbations $\PP_{(j^1,j^2)}$ and $\PP_f$, and definition of $\mathbf{E}$ in Definition \ref{DEFnonlinearcharges6}, it follows that $\mathbf{E}$ in \eqref{EQmappingEperturbationestimate11} is a smooth map in an open neighbourhood of
\begin{align*} 
\begin{aligned} 
\lrpar{\tilde{\underline{x}}, f, (j^1,j^2)} = \lrpar{\underline{\mathfrak{m}}, 0, (0,0)}.
\end{aligned} 
\end{align*}
By the fundamental theorem of calculus and \eqref{EQmappingEperturbationestimate11},
\begin{align} 
\begin{aligned} 
\mathbf{E}\lrpar{x_{0,2}} - \mathbf{E}\lrpar{\tilde{\underline{x}}_{0,2}} =&\mathbf{E}\lrpar{\tilde{\underline{x}}, f, (j^1,j^2)} - \mathbf{E}\lrpar{\tilde{\underline{x}}, 0, (0,0)} \\
=& \int\limits_{0}^1 \dot{\mathbf{E}}  \vert_{\lrpar{\tilde{\underline{x}}, f \cdot s, (j^1,j^2) \cdot s} } \lrpar{f, (j^1,j^2)}  ds.
\end{aligned} \label{EQfdtlEestimatePERT}
\end{align}
where $\dot{\mathbf{E}}$ denotes the linearization of \eqref{EQmappingEperturbationestimate11} in $(f,(j^1,j^2))$.

We estimate the integrand on the right-hand side of \eqref{EQfdtlEestimatePERT} as follows. For $0\leq s \leq1$, we have that
\begin{align} 
\begin{aligned} 
\dot{\mathbf{E}}  \vert_{\lrpar{\tilde{\underline{x}}, f \cdot s, (j^1,j^2) \cdot s} } = \lrpar{\dot{\mathbf{E}}  \vert_{\lrpar{\tilde{\underline{x}}, f \cdot s, (j^1,j^2) \cdot s} }- \dot{\mathbf{E}} \vert_{\lrpar{\underline{\mathfrak{m}}^M, 0, (0,0)}}} + \dot{\mathbf{E}} \vert_{\lrpar{\underline{\mathfrak{m}}^M, 0, (0,0)}}.
\end{aligned} \label{EQdotestimateE7788}
\end{align}

\ni On the one hand, by the smoothness of the mapping $\mathbf{E}$ defined in \eqref{EQmappingEperturbationestimate11}, it holds for all $(\dot{f}, (\dot{j}^1,\dot{j}^2))$ that for $\varep>0$ sufficiently small,
\begin{align} 
\begin{aligned} 
\left\vert \lrpar{\dot{\mathbf{E}}  \vert_{\lrpar{\tilde{\underline{x}}, f \cdot s, (j^1,j^2) \cdot s} }- \dot{\mathbf{E}} \vert_{\lrpar{\underline{\mathfrak{m}}^M, 0, (0,0)}}}(\dot{f}, (\dot{j}^1,\dot{j}^2))\right\vert
\les \varep \cdot \lrpar{ \Vert \dot{f} \Vert_{\YY_f} + \Vert (\dot{j}^1,\dot{j}^2) \Vert_{\YY_{({j}^1,{j}^2)}}}.
\end{aligned} \label{EQEperturbationEstimate1}
\end{align}

\ni On the other hand, using definition of $\mathbf{E}$ in Definition \ref{DEFnonlinearcharges6}, the properties that 
\begin{align*} 
\begin{aligned} 
\rh(\mathfrak{m}^M) = - \frac{2M}{r_M^3}, \,\, \be(\mathfrak{m}^M) =0,
\end{aligned} 
\end{align*}
and that, linearizing at Schwarzschild of mass $M\geq0$, see Lemmas \ref{LEMlinearizedTransversalSCHWARZSCHILD} and \ref{LEMspherediffLINSCHWARZSCHILD},
\begin{align} 
\begin{aligned} 
\dot{\rho} \vert_{(\underline{\mathfrak{m}}^M,0,(0,0))} \lrpar{\dot f,(\dot{j}^1, \dot{j}^2)}=-\frac{6M\Om_M^2}{r_M^4} \dot{f}, \,\, \dot{\beta} \vert_{(\underline{\mathfrak{m}}^M,0,(0,0))}\lrpar{\dot f,(\dot{j}^1, \dot{j}^2)}= -\frac{6M\Om_M}{r_M^3} \di \dot{f},
\end{aligned} \label{EQlinearizationAPPLSCHWARZSCHILD}
\end{align}
it is straight-forward to show that for all $\dot f$ and $(\dot{j}^1, \dot{j}^2)$,
\begin{align} 
\begin{aligned} 
\left\vert \dot{\mathbf{E}} \vert_{\lrpar{\underline{\mathfrak{m}}^M, 0, (0,0)}}\lrpar{\dot f,(\dot{j}^1, \dot{j}^2)} \right\vert \les M \cdot \lrpar{ \Vert \dot{f} \Vert_{\YY_f} + \Vert (\dot{j}^1,\dot{j}^2) \Vert_{\YY_{({j}^1,{j}^2)}}}.
\end{aligned} \label{EQEperturbationEstimate2}
\end{align}

\ni Plugging \eqref{EQEperturbationEstimate1} and \eqref{EQEperturbationEstimate2} into \eqref{EQdotestimateE7788}, we get that for all $\dot f$ and $(\dot{j}^1, \dot{j}^2)$, and $0\leq s \leq 1$,
\begin{align*} 
\begin{aligned} 
\left\vert \dot{\mathbf{E}}  \vert_{\lrpar{\tilde{\underline{x}}, f \cdot s, (j^1,j^2) \cdot s} }\lrpar{\dot f,(\dot{j}^1, \dot{j}^2)} \right\vert \les \lrpar{M +\varep} \cdot \lrpar{ \Vert \dot{f} \Vert_{\YY_f} + \Vert (\dot{j}^1,\dot{j}^2) \Vert_{\YY_{({j}^1,{j}^2)}}},
\end{aligned} 
\end{align*}
and subsequently, by \eqref{EQfdtlEestimatePERT},
\begin{align*} 
\begin{aligned} 
\left\vert \mathbf{E}\lrpar{x_{0,2}} - \mathbf{E}\lrpar{\tilde{\underline{x}}_{0,2}} \right\vert \les (M+ \varep)\cdot \varep = M\varep+ \varep^2.
\end{aligned} 
\end{align*}
This finishes the proof of \eqref{EQ3proofMAINTHM11} for $\mathbf{E}$. The proofs for $\mathbf{P}, \mathbf{L}$ and $\mathbf{G}$ are similar. Indeed, the crucial estimate \eqref{EQEperturbationEstimate2} similarly holds for $\mathbf{P}, \mathbf{L}$ and $\mathbf{G}$, so that the same argument as above applies. This finishes the proof of \eqref{EQ3proofMAINTHM11}.

\subsection{Proof of the charge transport estimate \eqref{EQChargeEstimatesMainTheorem}}  \label{SECconclusion27778} 

In this section we prove \eqref{EQChargeEstimatesMainTheorem}, that is, for $M\geq0$ and $\varep>0$ sufficiently small, the following estimate holds, 
\begin{align} 
\left\vert \lrpar{\mathbf{E},\mathbf{P},\mathbf{L},\mathbf{G}}\lrpar{x \vert_{S_{0,2}}} - \lrpar{\mathbf{E},\mathbf{P},\mathbf{L},\mathbf{G}}\lrpar{x \vert_{S_{0,1}}} \right\vert \les \varep M+ \varep^2. 
\label{EQ3proofMAINTHM12}
\end{align}

\ni First we prove the component $\mathbf{E}$ of \eqref{EQ3proofMAINTHM12},
\begin{align} 
\begin{aligned} 
\left\vert \mathbf{E}\lrpar{x \vert_{S_{0,2}}} - \mathbf{E}\lrpar{x \vert_{S_{0,1}}} \right\vert \les \varep M+\varep^2.
\end{aligned} \label{EQspecializedEestimtransport1}
\end{align}
Indeed, let $x$ be the constructed solution to the null structure equations on $\HH_{0,[1,2]}$. By the fundamental theorem of calculus,
\begin{align} 
\begin{aligned} 
\mathbf{E}\lrpar{x }\vert_{S_{0,2}} -\mathbf{E}\lrpar{x }\vert_{S_{0,1}} = \int\limits_{1}^2 D\mathbf{E}(x) \vert_{S_{0,v}} dv.
\end{aligned} \label{EQintegralEestim}
\end{align}

\ni In the following we analyze $D\mathbf{E}(x)\vert_{S_{0,v}}$ for $1\leq v \leq 2$. Using that for Schwarzschild reference data $\mathfrak{m}^M$ it holds that
\begin{align*} 
\begin{aligned} 
\mathbf{E}(\mathfrak{m}^M)=M, \,\, D\mathbf{E}(\mathfrak{m}^M)=0 \,\, \text{ on } \HH_{0,[1,2]},
\end{aligned} 
\end{align*}
we can express the integrand $D\mathbf{E}(x)$ on the right-hand side of \eqref{EQintegralEestim} by 
\begin{align} 
\begin{aligned} 
D\mathbf{E}(x) =& D\mathbf{E}(x) - D\mathbf{E}(\mathfrak{m}^M).
\end{aligned} \label{EQdifference27778}
\end{align}

\ni We make two observations. First, the map
\begin{align*} 
\begin{aligned} 
D\mathbf{E}: \XX(\HH_{0,[1,2]}) \to& H^4_1(\HH_{0,[1,2]}),\\
x \mapsto& D\mathbf{E}\lrpar{x},
\end{aligned}
\end{align*}
is smooth in an open neighbourhood of $x= \mathfrak{m}$. Second, at the end of this section (see Lemma \ref{CLAIMsmoothfamily}) we show that by the implicit function theorem construction of our solution $x$, there is a smooth family
\begin{align} 
\begin{aligned} 
(x_s)_{0\leq s \leq1} \subset \XX(\HH_{0,[1,2]})
\end{aligned} \label{EQdefFAMILYsolutions1117778}
\end{align}
of solutions to the null structure equations on $\HH_{0,[1,2]}$ such that
\begin{align} 
\begin{aligned} 
x_{s=0} = \mathfrak{m}^M \text{ and } x_{s=1} =x \text{ on } \HH_{0,[1,2]},
\end{aligned} \label{EQmatchingSMOOTHfamilySolutions}
\end{align}
and satisfying the estimate
\begin{align} 
\begin{aligned} 
\sup\limits_{0\leq s \leq 1} \Vert x_s- \mathfrak{m}^M \Vert_{\XX(\HH)} \les \varep, \,\, \sup\limits_{0\leq s \leq 1} \left\Vert \xd_{s} \right\Vert_{\XX(\HH_{0,[1,2]})} \les \varep,
\end{aligned} \label{EQestimatefamilyofsolutions}
\end{align}
where $\xd_s$ denotes the variation through the family \eqref{EQdefFAMILYsolutions1117778}. Hence we can rewrite \eqref{EQdifference27778} by the fundamental theorem of calculus as
\begin{align} 
\begin{aligned} 
D\mathbf{E}(x) =& D\mathbf{E}(x) - D\mathbf{E}(\mathfrak{m}^M) = \int\limits_{0}^1 \frac{d}{ds}\lrpar{D\mathbf{E}(x_s)} ds =  \int\limits_{0}^1D\dot{\mathbf{E}} \vert_{x_s}(\xd_s) ds,
\end{aligned} \label{EQDEfdtltheorem778}
\end{align}
where $D\dot{\mathbf{E}} \vert_{x_s}$ denotes the linearisation of $D\mathbf{E}(x)$ in $x$ evaluated at $x_s$. 

By construction, see \eqref{EQmatchingSMOOTHfamilySolutions}, $\xd_{s=0}$ is a solution to the \emph{homogeneous linearized null structure equations at Schwarzschild}, that is, for $\xd= \xd_{s=0}$,
\begin{align} 
\begin{aligned} 
\dot{\CC}^M(\xd) =0.
\end{aligned} \label{EQhomlinNSE77788}
\end{align}
In Appendix \ref{SEClinearizedCHARGEequationsSSAPP}, see \eqref{EQMsmalllinearizedEPLGSS99} it is shown that for all solutions $\xd$ to \eqref{EQhomlinNSE77788},
\begin{align} 
\begin{aligned} 
\Vert D\dot{\mathbf{E}} \vert_{\mathfrak{m}^M}(\xd) \Vert_{H^4_1(\HH_{0,[1,2]})} \les M \Vert \xd \Vert_{\XX(\HH_{0,[1,2]})}.
\end{aligned} \label{EQlinearizedESSestimate292}
\end{align}
Using the estimate \eqref{EQlinearizedESSestimate292}, we can estimate the integrand $D\dot{\mathbf{E}} \vert_{x_s}(\xd_s)$ on the right-hand side of \eqref{EQDEfdtltheorem778} as follows. We write
\begin{align*} 
\begin{aligned} 
D\dot{\mathbf{E}} \vert_{x_s}(\xd_s) = \underbrace{\lrpar{D\dot{\mathbf{E}} \vert_{x_s} - D\dot{\mathbf{E}} \vert_{\mathfrak{m}^M}}(\xd_s)}_{:=\II_1} +\underbrace{D\dot{\mathbf{E}} \vert_{\mathfrak{m}^M}(\xd_s)}_{:=\II_2}.
\end{aligned} 
\end{align*}
The term $\II_1$ is estimated by the smoothness of $D\mathbf{E}$ and \eqref{EQestimatefamilyofsolutions} as follows,
\begin{align*} 
\begin{aligned} 
\vert \II_1 \vert \les& \Vert x_s - \mathfrak{m}^M \Vert_{\XX(\HH_{0,[1,2]})} \cdot \Vert \xd_s \Vert_{\XX(\HH_{0,[1,2]})} \\
\les& \varep^2.
\end{aligned} 
\end{align*}
The term $\II_2$ can be analyzed by decomposing 
\begin{align} 
\begin{aligned} 
\xd_s = (\xd_s)^{\perp} + \lrpar{\xd_s}^\top,
\end{aligned} \label{EQdefinitionDecompositionxds}
\end{align}
where $(\xd_s)^{\perp} \in \lrpar{\mathrm{ker} \, \dot{\CC}^M }^\perp$ is defined as solution to
\begin{align} 
\begin{aligned} 
\dot{\CC}^M((\xd_s)^{\perp}) = \dot{\CC}^M(\xd_s), \,\, \Omd((\xd_s)^{\perp}) =0, \,\, \chihd((\xd_s)^{\perp})&=0 \text{ on } \HH_{0,[1,2]}, \\
(\xd_s)^{\perp} \vert_{S_{0,1}} &= 0.
\end{aligned} \label{EQsystemforperpxds}
\end{align}
and $\lrpar{\xd_s}^\top \in \mathrm{ker} \, \dot{\CC}^M$ is defined as
\begin{align*} 
\begin{aligned} 
\lrpar{\xd_s}^\top := \xd_s - (\xd_s)^{\perp}.
\end{aligned} 
\end{align*}
By bounds for the system \eqref{EQsystemforperpxds} analogous to Theorem \ref{PROPlingluing22} and Corollary \ref{CORFFMsurj}, together with the estimate
\begin{align*} 
\begin{aligned} 
\Vert \dot{\CC}^M(\xd_s) \Vert_{\ZZ_\CC} =& \Vert (\dot{\CC}^M-\dot{\CC}\vert_{x_s})(\xd_s) \Vert_{\ZZ_\CC} \\
\les& \Vert x_s -\mathfrak{m}^M \Vert_{\XX(\HH_{0,[1,2]})} \cdot \Vert \xd_s \Vert_{\XX(\HH_{0,[1,2]})} \\
\les& \varep^2,
\end{aligned} 
\end{align*}
where we used \eqref{EQestimatefamilyofsolutions} and that $\dot{\CC}\vert_{x_s}(\xd_s)=0$ by definition of $x_s$ in \eqref{EQdefFAMILYsolutions1117778}, it follows that
\begin{align} 
\begin{aligned} 
\Vert (\xd_s)^{\perp} \Vert_{\XX(\HH_{0,[1,2]})} \les \varep^2.
\end{aligned}\label{EQestimforperp}
\end{align}
This further implies that
\begin{align} 
\begin{aligned} 
\Vert (\xd_s)^{\top} \Vert_{\XX(\HH_{0,[1,2]})} =& \Vert \xd_s - (\xd_s)^{\perp} \Vert_{\XX(\HH_{0,[1,2]})} \\
\les& \varep + \varep^2\\
\les& \varep.
\end{aligned} \label{EQestimfortop}
\end{align}
By \eqref{EQlinearizedESSestimate292}, \eqref{EQdefinitionDecompositionxds}, \eqref{EQestimforperp} and \eqref{EQestimfortop}, and using that $D\mathbf{E}\vert_{\mathfrak{m}^M}$ is a bounded operator, we have
\begin{align*} 
\begin{aligned} 
\II_2 =& D\dot{\mathbf{E}} \vert_{\mathfrak{m}^M}((\xd_s)^\top) + D\dot{\mathbf{E}} \vert_{\mathfrak{m}^M}((\xd_s)^\perp) \\
\les& M  \Vert (\xd_s)^\top \Vert_{\XX(\HH_{0,[1,2]})} + \Vert (\xd_s)^\perp \Vert_{\XX(\HH_{0,[1,2]})} \\
\les& M\varep + \varep^2.
\end{aligned} 
\end{align*}

\ni To summarize the above, we conclude that for $0\leq s \leq 1$,
\begin{align*} 
\begin{aligned} 
\Vert D\dot{\mathbf{E}}\vert_{x_s}\lrpar{\xd_s} \Vert_{H^4_1(\HH_{0,[1,2]})} \les M \varep + \varep^2,
\end{aligned} 
\end{align*}
which, plugged into \eqref{EQDEfdtltheorem778}, yields that
\begin{align*} 
\begin{aligned} 
\Vert D\mathbf{E}(x) \Vert_{H^4_1(\HH_{0,[1,2]})} \les M\varep + \varep^2,
\end{aligned} 
\end{align*}
and, consequently, by plugging into \eqref{EQintegralEestim}, proves the charge estimate \eqref{EQspecializedEestimtransport1} for $\mathbf{E}$.

We claim that the charge estimates for $\mathbf{P}, \mathbf{L}$ and $\mathbf{G}$ are proved similarly. Indeed, in Appendix \ref{SEClinearizedCHARGEequationsSSAPP}, see \eqref{EQMsmalllinearizedEPLGSS99}, in addition to \eqref{EQlinearizedESSestimate292} it is shown that for solutions $\xd$ to 
\begin{align*} 
\begin{aligned} 
\dot{\CC}^M(\xd)=0,
\end{aligned} 
\end{align*} 
it holds that
\begin{align} 
\begin{aligned} 
&\Vert D\dot{\mathbf{P}} \vert_{\mathfrak{m}^M}(\xd) \Vert_{H^4_1(\HH_{0,[1,2]})} + \Vert D\dot{\mathbf{L}} \vert_{\mathfrak{m}^M}(\xd) \Vert_{H^5_1(\HH_{0,[1,2]})} + \Vert D\dot{\mathbf{G}} \vert_{\mathfrak{m}^M}(\xd) \Vert_{H^5_1(\HH_{0,[1,2]})} \\
\les& M \Vert \xd \Vert_{\XX(\HH_{0,[1,2]})}.
\end{aligned} \label{EQAppAPPLSCHWARZSCHILD}
\end{align}
Thus, the remaining charge estimates in \eqref{EQ3proofMAINTHM12} are proved by following the same argument as above.

It remains to prove \eqref{EQdefFAMILYsolutions1117778}, that is, the existence of the smooth family
\begin{align*} 
\begin{aligned} 
(x_s)_{0\leq s \leq1} \subset \XX(\HH_{0,[1,2]})
\end{aligned} 
\end{align*}
of solutions to the null structure equations satisfying \eqref{EQmatchingSMOOTHfamilySolutions} and \eqref{EQestimatefamilyofsolutions}, that is,
\begin{align*} 
\begin{aligned} 
x_{s=0} = \mathfrak{m}^M \text{ and } x_{s=1} =x \text{ on } \HH_{0,[1,2]},
\end{aligned} 
\end{align*}
and
\begin{align*} 
\begin{aligned} 
\sup\limits_{0\leq s \leq 1} \Vert x_s- \mathfrak{m}^M \Vert_{\XX(\HH)} \les \varep, \,\, \sup\limits_{0\leq s \leq 1} \left\Vert \xd_{s} \right\Vert_{\XX(\HH_{0,[1,2]})} \les \varep.
\end{aligned} 
\end{align*}
Indeed, using the smooth map $\GG^M$ constructed in Section \ref{SECconclusion}, this follows directly from the following lemma. We remark here that the linearization $\dot{\GG}^M$ is by construction bijective and uniformly bounded for $M\geq0$ sufficiently small, see Section \ref{SECconclusion}.

\begin{lemma}[Existence of smooth family of data] \label{CLAIMsmoothfamily} Let $M\geq0$ and $\varep>0$ be sufficiently small. There are smooth families of
\begin{itemize}
\item sphere data $(x_{0,1})_s \in \XX(S_{0,1})$ for $0\leq s \leq 1$,
\item ingoing null data $\tilde{\underline{x}}_s \in \XX^+(\tilde{\HHb}_{[-\de,\de],2})$ for $0\leq s \leq 1$, solving the null structure equations on $\tilde{\HHb}_{[-\de,\de],2}$, 
\end{itemize}
such that
\begin{align*} 
\begin{aligned} 
\lrpar{(x_{0,1})_s , \underline{x}_{s}} \vert_{s=0} = \lrpar{\mathfrak{m}^M, \underline{\mathfrak{m}}^M}, \,\, \lrpar{(x_{0,1})_{s}, \tilde{\underline{x}}_{s}} \vert_{s=1} = \lrpar{x_{0,1},\tilde{\underline{x}}},
\end{aligned} 
\end{align*}
and satisfying
\begin{align} 
\begin{aligned} 
\Vert (x_{0,1})_s -\mathfrak{m}^M \Vert_{\XX(S_{0,1})} + \Vert \dot{(x_{0,1})}_s \Vert_{\XX(S_{0,1})} \les& \varep, \\
\Vert \underline{x}_s -\mathfrak{m}^M \Vert_{\XX^+(\tilde{\HHb}_{[-\de,\de],2})} + \Vert \dot{\underline{x}}_s \Vert_{\XX^+(\tilde{\HHb}_{[-\de,\de],2})} \les& \varep,
\end{aligned} \label{EQsmoothboundarydataestimates}
\end{align}
where $\dot{(x_{0,1})}_s$ and $\dot{\underline{x}}_s$ denote the variations of $(x_{0,1})_s$ and $\underline{x}_s$, respectively.
\end{lemma}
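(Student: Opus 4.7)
The plan is to construct the two families separately: the sphere data family $(x_{0,1})_s$ by a simple convex interpolation (no constraints to satisfy), and the ingoing null data family $\tilde{\underline{x}}_s$ by interpolating the \emph{free data} on $\tilde{\HHb}_{[-\de,\de],2}$ and then solving the ingoing analogue of the null constraint equations of Section \ref{SECconformalMethod1112}. The main technical obstacle is verifying the quantitative smoothness in $s$ of the solution map for the null structure equations in the $\XX^+$ norm.

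\textbf{Step 1: Interpolation of sphere data.} Define
\begin{align*}
(x_{0,1})_s := \mathfrak{m}^M + s \cdot (x_{0,1} - \mathfrak{m}^M) \quad \text{for } 0\leq s \leq 1.
\end{align*}
The endpoint conditions $(x_{0,1})_0 = \mathfrak{m}^M$ and $(x_{0,1})_1 = x_{0,1}$ are immediate, as is smoothness in $s$ since the map $s \mapsto (x_{0,1})_s$ is affine. The variation is $\dot{(x_{0,1})}_s = x_{0,1} - \mathfrak{m}^M$, which together with the affine form yields the first bound in \eqref{EQsmoothboundarydataestimates} directly from the assumption $\Vert x_{0,1} - \mathfrak{m}^M \Vert_{\XX(S_{0,1})} \leq \varep$.

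\textbf{Step 2: Interpolation of ingoing free data.} On $\tilde{\HHb}_{[-\de,\de],2}$, the free data consists of the null lapse $\Om$ and the conformal class $\mathrm{conf}(\underline{\gd})$ along the hypersurface, together with ingoing sphere data on the anchoring sphere $\tilde{S}_{-\de,2}$ (the ingoing analogue of Definition \ref{DEFconformalSEED}). Denote this free data extracted from $\tilde{\underline{x}}$ by $\mathcal{F}(\tilde{\underline{x}})$, and the corresponding Schwarzschild free data by $\mathcal{F}(\underline{\mathfrak{m}}^M)$. Set
\begin{align*}
\mathcal{F}_s := \mathcal{F}(\underline{\mathfrak{m}}^M) + s \cdot \bigl( \mathcal{F}(\tilde{\underline{x}}) - \mathcal{F}(\underline{\mathfrak{m}}^M) \bigr),
\end{align*}
and define $\tilde{\underline{x}}_s$ as the unique solution to the ingoing null structure equations on $\tilde{\HHb}_{[-\de,\de],2}$ with free data $\mathcal{F}_s$. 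This is the analogue of running the hierarchy of transport equations $\underline{\CC}_i = 0$ from Section \ref{SECconformalMethod1112}, with initial data on $\tilde{S}_{-\de,2}$ given by the sphere-data component of $\mathcal{F}_s$.

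\textbf{Step 3: Smoothness and bounds.} By construction, $\tilde{\underline{x}}_{s=0} = \underline{\mathfrak{m}}^M$ and $\tilde{\underline{x}}_{s=1} = \tilde{\underline{x}}$. The smoothness of the solution map from free data to full null data (analogous to Lemma \ref{LEMstandardEstimates}, applied at the higher regularity level $\XX^+$) shows that $s \mapsto \tilde{\underline{x}}_s$ is smooth. For the quantitative bounds, one runs the transport equations $\underline{\CC}_i(\tilde{\underline{x}}_s) = 0$ inductively along the hierarchy, using smallness of $\mathcal{F}_s - \mathcal{F}(\underline{\mathfrak{m}}^M)$ (controlled by $s \cdot \varep$) to close the estimate
\begin{align*}
\Vert \tilde{\underline{x}}_s - \underline{\mathfrak{m}}^M \Vert_{\XX^+(\tilde{\HHb}_{[-\de,\de],2})} \les \varep,
\end{align*}
uniformly in $s \in [0,1]$. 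Differentiating the transport equations in $s$ produces a linear system of transport equations for the variation $\dot{\tilde{\underline{x}}}_s$ with coefficients depending smoothly on $\tilde{\underline{x}}_s$ and source terms controlled by $\mathcal{F}(\tilde{\underline{x}}) - \mathcal{F}(\underline{\mathfrak{m}}^M)$; the same hierarchical integration then yields $\Vert \dot{\tilde{\underline{x}}}_s \Vert_{\XX^+(\tilde{\HHb}_{[-\de,\de],2})} \les \varep$, completing \eqref{EQsmoothboundarydataestimates}.

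\textbf{Main obstacle.} The subtle point is that $\XX^+$ requires high transversal $D$-derivative control on $\tilde{\HHb}_{[-\de,\de],2}$, which is not directly furnished by the intrinsic (ingoing) null structure equations. However, the higher $D$-derivative equations are obtained by commuting \eqref{EQDUOMU1} and its companions with $\Du^j$ and with derivatives along $\tilde{\HHb}_{[-\de,\de],2}$, and then each transversal derivative of $\tilde{\underline{x}}_s$ is determined by a transport equation in $u$ along $\tilde{\HHb}_{[-\de,\de],2}$ whose source involves only lower-order data and free data, so the hierarchy closes at the $\XX^+$ level provided the free data is interpolated with sufficient regularity. Once this hierarchy is verified, the $s$-smoothness and the bounds \eqref{EQsmoothboundarydataestimates} follow by the same argument as used for $\XX$ in Section \ref{SECprelimAnalysis}.
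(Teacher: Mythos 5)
Your proposal is substantively correct and follows the same strategy as the paper: define $(x_{0,1})_s$ by affine interpolation between $\mathfrak{m}^M$ and $x_{0,1}$, and define $\tilde{\underline{x}}_s$ by interpolating the free data (null lapse, conformal class, and sphere data at an anchoring sphere) and re-solving the ingoing null constraint hierarchy, then appealing to the solution-map estimates. Two small points are worth flagging.

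First, the parenthetical ``no constraints to satisfy'' in Step 1 is slightly too fast: the tuple in Definition~\ref{DEFspheredata2} contains tensors ($\chih$, $\chibh$, $\a$, $\ab$) that must be $\gd$-tracefree, and since the metric $\gd$ itself is being interpolated, the naive affine combination of such tensors is no longer tracefree with respect to the interpolated $\gd$. The paper handles this in Remark~\ref{REMspheredatafamily} by reinterpreting the prescription (interpolate only the two free tensor components and determine the remaining ones algebraically from the interpolated $\gd$); your proof should note this, though the fix is elementary.

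Second, the ``Main obstacle'' paragraph mischaracterizes the $\XX^+$ norm. By Definitions~\ref{DEFnullHHspaces} and~\ref{DEFspacetimeNORM}, the $H^m_l(\HHb)$ norms on ingoing null data measure $\Du^j$-derivatives, i.e.\ derivatives \emph{tangential} to $\tilde{\HHb}_{[-\de,\de],2}$ (the transport direction), not transversal $D^j$-derivatives. So $\XX^+$ does not ask for ``high transversal $D$-derivative control on $\tilde{\HHb}$'' of the constructed solution; it asks for high tangential regularity of the tuple entries (which do include the transversal-flavored components $\om$, $D\om$, $\a$, $\chih$, but these appear as data, and their $\Du^j$-evolution is governed by transport equations like the $\Lb$-analogue of \eqref{EQDUOMU1}). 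The latter half of your paragraph actually argues the correct point — that the higher transversal entries are propagated by commuted transport equations — but the framing makes it sound like a gap that isn't there. Apart from these bookkeeping issues, the argument matches the paper's and goes through.
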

First, the smooth family of sphere data $(x_{0,1})_s$ can be defined by 
\begin{align} 
\begin{aligned} 
(x_{0,1})_s := (x_{0,1}-\mathfrak{m}^M) \cdot s + \mathfrak{m}^M.
\end{aligned} \label{EQsphereDatasmoothfamily}
\end{align}
\begin{remark} \label{REMspheredatafamily} In \eqref{EQsphereDatasmoothfamily} we abuse notation, as by Definition \ref{DEFspheredata2} the tensors $\chih$, $\chibh$, $\a$ and $\ab$ are required to be symmetric $\gd$-tracefree $2$-tensors which is a constraint and not compatible with the linear operation depicted in \eqref{EQsphereDatasmoothfamily}. In \eqref{EQsphereDatasmoothfamily} we interpret the prescription of $\chih$, $\chibh$, $\a$ and $\ab$ in the sense that two tensor components are freely prescribable (these are added on the right-hand side of \eqref{EQsphereDatasmoothfamily}), and the other two tensor components are fully determined by the condition to be symmetric and tracefree with respect to $\gd$. Same goes for the prescription of the symmetric tensor $\gd$. In this sense, the prescription of sphere data is without constraint, and \eqref{EQsphereDatasmoothfamily} is well-defined. \end{remark}

\ni Second, the family of ingoing null data $\tilde{\underline{x}}_s$ is given by constructing solutions to the null structure equations (as proved in Section \ref{SECconclusion} but at the higher level of regularity $\XX^+$) on $\tilde{\HHb}_{[-\de,\de],2}$ from sphere data $(\tilde{x}_{0,2})_s$ on $\tilde{S}_{0,2}$ given for $0\leq s \leq 1$ by (see Remark \ref{REMspheredatafamily})
\begin{align} 
\begin{aligned} 
(\tilde{x}_{0,2})_s := (\tilde{x}\vert_{S_{0,2}}- \mathfrak{m}^M) \cdot s + \mathfrak{m}^M,
\end{aligned} \label{EQfamilyprescription1}
\end{align}
and free data $\Om_s$ and $\mathrm{conf}(\gd)_s$ on $\tilde{\HHb}_{[-\de,\de],2}$ given for $0\leq s \leq 1$ by
\begin{align} 
\begin{aligned} 
\Om_s := (\Om - \Om_M)\cdot s + \Om_M, \,\, \mathrm{conf}(\gd)_s:= (\mathrm{conf}(\gd)-\mathrm{conf}(\mathfrak{m}^M)) \cdot s + \mathrm{conf}(\mathfrak{m}^M).
\end{aligned} \label{EQfamilyprescription2}
\end{align}
The estimate \eqref{EQsmoothboundarydataestimates} follows by the general estimates proved for the construction of solutions to the null structure equations (see Section \ref{SECconclusion}) and the explicit prescriptions \eqref{EQfamilyprescription1} and \eqref{EQfamilyprescription2}. This finishes the proof of Lemma \ref{CLAIMsmoothfamily} and hence of the charge estimate \eqref{EQChargeEstimatesMainTheorem}.

\subsection{Outline of the proof of Theorem \ref{THMHIGHERorderLgluingMAINTHEOREM}}  \label{SECconclusion4} In this section we indicate how the proof of Theorem \ref{THMHIGHERorderLgluingMAINTHEOREM}, that is, the characteristing gluing with higher-order $L$-derivatives, is based on the proof of Theorem \ref{PROPNLgluingOrthA121}.
\begin{enumerate}
\item The proof of Theorem \ref{THMHIGHERorderLgluingMAINTHEOREM} is, similarly as in Sections \ref{SECimplicitA12setup} and \ref{SECconclusion}, reduced to studying the \emph{linearized} characteristic gluing problem with higher-order $L$-derivatives by the implicit function theorem.

\item At the linear level at Minkowski, the gauge perturbations $\PP_{{f}}$ and $\PP_{({j}^1,{j}^2)}$ leave $\DD^{L,m}$ invariant; see Remark \ref{REMhigherOrderSpherePerturbations}. Hence the linearized gluing of $\DD^{L,m}$ solely depends on the prescription of the free data $\Omd$ and $\chihd$ on $\HH_{0,[1,2]}$.

\item By the linearized null structure equations in Section \ref{SEClinearizedProblem}, $\dot{\DD}^{L,m}$ can directly be calculated from $\Omd$ and $\chihd$ on $\HH_{0,[1,2]}$. In particular, there is no obstruction to matching $\dot{\DD}^{L,m}$ on $S_{0,1}$ and $S_{0,2}$ by adjusting $\Omd$ and $\chihd$ on $\HH_{0,[1,2]}$, see Remark \ref{REMhigherLgluing2}. This shows that the linearized characteristic gluing problem for higher-order $L$-derivatives is solvable.

\end{enumerate}
This finishes our discussion of Theorem \ref{THMHIGHERorderLgluingMAINTHEOREM}.

\section{Bifurcate characteristic gluing} \label{SECHIGHERfull} \ni In this section we prove Theorem \ref{THMtransversalHIGHERv2}, that is, the codimension-$10$ characteristic gluing of higher-order sphere data along two null hypersurfaces bifurcating from an auxiliary sphere. Analogously to the perturbative characteristic gluing of Theorem \ref{PROPNLgluingOrthA121}, the two main ingredients are 
\begin{enumerate}
\item solving the \emph{linearized} characteristic gluing problem, 
\item applying the implicit function theorem.
\end{enumerate}

\ni As the application of the implicit function theorem is similar as in Sections \ref{SECimplicitA12setup} and \ref{SECconclusion}, we focus in this section on the new ideas necessary for (1). We proceed as follows.

\begin{itemize}
\item In Section \ref{SECingoingNULLstructureEQS} we discuss the linearized null structure equations and the conserved charges along $\HHb_{[-1,0],1}$ and $\HH_{-1,[1,2]}$. 
\item In Section \ref{SECprelimTRANSVERSE} we derive relations between charges on $\HHb_{[-1,0],1}$ and $\HH_{-1,[1,2]}$.
\item In Section \ref{SECstatementFULLhigher} we state and prove the linearized characteristic gluing along two transversely-intersecting null hypersurfaces.
\end{itemize}
In Section \ref{SECproofWgluing}, we prove Proposition \ref{PROPchargluingW}.
\subsection{Linearized null structure equations} \label{SECingoingNULLstructureEQS} In this section we analyze the linearized null structure equations at Minkowski along $\HHb_{[-1,0],1}$ and $\HH_{-1,[1,2]}$, and discuss the corresponding conserved charges. 

\begin{remark} In the following we study the \emph{homogeneous} linearized null structure equations, that is, we follow the formalism of \cite{DHR} and linearize through a family of solutions to the Einstein equations. The analysis of the \emph{inhomogeneous} linearized null structure equations along $\HHb_{[-1,0],1}$ with source terms $(\underline{\mathfrak{c}}_i)_{1\leq i \leq 10}$ (which is necessary for the application of the implicit function theorem) is then analogous to Section \ref{SECderivationConstraintFunctions2}.\end{remark}

\ni \textbf{Linearized null structure equations along $\HH_{-1,[1,2]}$.} In Section \ref{SECderivationConstraintFunctions2} we linearized the null structure equations along $\HH_{0,[1,2]}$, see Lemma \ref{LEMlinearizedConstraintsSCHWARZSCHILD}. Setting $M=0$ in Lemma \ref{LEMlinearizedConstraintsSCHWARZSCHILD} yields the linearized equations at Minkowski. This linearization and the resulting linearized equations clearly apply analogously to the null hypersurface $\HH_{-1,[1,2]}$ considered in this section.

We recall that in Section \ref{SECprelimAnalysis} we identified \emph{charges}  
\begin{align} 
\begin{aligned} 
\QQ_i \text{ for } 0\leq i \leq 7,
\end{aligned} \label{EQchargesHHyoufound}
\end{align}
which satisfy conservation laws along $\HH_{-1,[1,2]}$,
\begin{align*} 
\begin{aligned} 
D \QQ_i =0 \text{ for } 0\leq i \leq 7.
\end{aligned} 
\end{align*}
We refer to \eqref{EQdefChargesMinkowski8891} for the precise definitions of these charges. \\

\ni \textbf{Linearized null structure equations in $\Lb$-direction.} Similar to Lemma \ref{LEMlinearizedConstraintsSCHWARZSCHILD}, by linearization at Minkowski of the null structure equations along $\HHb_{[-1,0],1}$ we get the following, see also \cite{DHR}. The linearized first variation equation,
\begin{align} 
\begin{aligned} 
\Du \lrpar{\frac{\phid}{r}} = \frac{\omtrchibd}{2}, \,\, \Du \gdcd = \frac{2}{r^2} \chibhd,
\end{aligned} \label{EQingoingEQ1}
\end{align}
the linearized Raychauduri equation, 
\begin{align} 
\begin{aligned} 
\Du \lrpar{r^2 \lrpar{\omtrchibd + \frac{4}{r} \Omd}} = -4 \Omd, 
\end{aligned} \label{EQingoingEQ2}
\end{align}
as well as
\begin{align} 
\begin{aligned} 
\Du \lrpar{r^2 \etabd} =& \Divdo \chibhd - \frac{r^2}{2} \di \lrpar{\omtrchibd + \frac{4}{r} \Omd} - 2r \di \Omd, \\
\Du \lrpar{r^2 \omtrchid} =& 2 \Divdo \lrpar{- \etabd + 2 \di \Omd} - \Divdo \Divdo \gdcd + \frac{2}{r} \lrpar{\Ldo+2} \phid - 2r \omtrchibd - 4 \Omd.
\end{aligned} \label{EQingoingEQ3}
\end{align}

\ni As in Section \ref{SECprelimAnalysis}, we can derive null transport equations along $\HHb_{[-1,0],1}$ from the linearized null structure equations \eqref{EQingoingEQ1}, \eqref{EQingoingEQ2} and \eqref{EQingoingEQ3}; the explicit proof of the following lemma is omitted.

\begin{lemma}[Null transport equations in $\Lb$-direction] \label{LEMnulltransportTRANS} The linearized null structure equations imply the following null transport equations,\begin{align*} 
\begin{aligned} 
\Du \lrpar{\frac{r}{2}\lrpar{\omtrchibd+\frac{4}{r} \Omd} -\frac{\phid}{r}} =&0, \\
\Du \lrpar{r^2 \etabd - \frac{r^3}{2} \di \lrpar{\omtrchid - \frac{4}{r}\Omd}} =& - \Divdo \chibhd,
\end{aligned} 
\end{align*}
and
\begin{align*} 
\begin{aligned} 
\Du \lrpar{r^2 \omtrchid +\frac{2}{r} \Divdo \lrpar{r^2 \etabd -\frac{r^3}{2}\di \lrpar{\omtrchibd + \frac{4}{r}\Omd }} -2r^3\Kd - r^2 \lrpar{\omtrchibd + \frac{4}{r}\Omd}}=0.
\end{aligned} 
\end{align*}
\end{lemma}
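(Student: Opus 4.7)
\smallskip

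\noindent\textbf{Proof proposal.} The plan is to derive the three identities by direct algebraic manipulation of the linearized equations \eqref{EQingoingEQ1}, \eqref{EQingoingEQ2}, \eqref{EQingoingEQ3}, mirroring the derivation carried out in Section \ref{SECprelimAnalysis} for the conservation laws along $\HH$. The key structural input is that in Minkowski one has $r=v-u$, so $\Du r = -1$ and $\Du(r^k f) = -k r^{k-1} f + r^k\Du f$; every computation below is simply a careful application of this Leibniz rule combined with the four equations \eqref{EQingoingEQ1}--\eqref{EQingoingEQ3}.

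For the first identity, I will expand $\Du\bigl(\tfrac{r}{2}\omtrchibd+2\Omd-\phid/r\bigr)$ via Leibniz, obtaining $-\tfrac12\omtrchibd+\tfrac{r}{2}\Du\omtrchibd+2\Du\Omd-\Du(\phid/r)$. The linearized Raychauduri equation \eqref{EQingoingEQ2}, once expanded by Leibniz, gives $r\,\Du\omtrchibd=2\omtrchibd-4\,\Du\Omd$, and the linearized first variation \eqref{EQingoingEQ1} gives $\Du(\phid/r)=\omtrchibd/2$. Substituting, the four terms collapse to zero. This mirrors the conservation law for $\QQ_1$ proved in Lemma \ref{CORtransport1EQS}.

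For the second identity, the plan is to differentiate the composite expression term by term. The $\Du(r^2\etabd)$ part is supplied directly by the third line of \eqref{EQingoingEQ3} and contributes exactly the desired $\Divdo\chibhd$ plus an angular piece in $\omtrchibd+4\Omd/r$ and in $\Omd$. For the $\Du\bigl(\tfrac{r^3}{2}\di(\omtrchid-4\Omd/r)\bigr)$ piece, I will apply Leibniz in $r$ and then substitute the linearized equation for $\Du(r^2\omtrchid)$ from \eqref{EQingoingEQ3}. After rewriting $\di\Divdo$ terms using $\Divdo\circ\di=\Ldo$ where needed and using the Gauss relation \eqref{EQgaussDHR} to convert the $\Divdo\Divdo\gdcd$ and $(\Ldo+2)\phid$ pieces into $r^3\Kd$, all contributions except $-\Divdo\chibhd$ cancel; the cancellation is the $\Lb$-mirror of the manipulation used to produce equation \eqref{EQrewritten10}.

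For the third identity, I will apply the same Leibniz bookkeeping to the full expression, using the previously derived second identity to evaluate $\Du\Divdo(r^2\etabd-\tfrac{r^3}{2}\di(\omtrchibd+4\Omd/r))$, the $\omtrchid$ equation from \eqref{EQingoingEQ3} to evaluate $\Du(r^2\omtrchid)$, the first variation equations to evaluate $\Du(r^3\Kd)$ via \eqref{EQgaussDHR}, and once more the Raychauduri equation for $\Du(r^2(\omtrchibd+4\Omd/r))$. The $\Divdo\chibhd$ terms coming from the second identity are precisely what is needed to absorb the $\Divdo\Divdo\gdcd$ contribution from $\Du(r^2\omtrchid)$ via the first variation $\Du\gdcd=2\chibhd/r^2$, and all remaining terms match up in pairs with opposite signs, as in the derivation of the $D\QQ_2$ equation in the proof of Lemma \ref{LEMgaugeconservationlaws}. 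The only real obstacle is notational: keeping track of $r$-weights, signs, and the interplay between $\di$, $\Divdo$, $\DDd_2^\ast$, $\Ldo$ and the identity $1+\Divdo\DDd_2^\ast+\tfrac12\di\Divdo=0$ on vectorfields requires disciplined bookkeeping, but no new idea beyond the parallel $L$-direction computations already executed in Section \ref{SECprelimAnalysis}.
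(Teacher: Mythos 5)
Your first-identity derivation is correct: Leibniz with $\Du r=-1$, together with \eqref{EQingoingEQ1} and the expanded Raychaudhuri equation \eqref{EQingoingEQ2} (which gives $r\Du\omtrchibd = 2\omtrchibd - 4\Du\Omd$), does produce exact cancellation. However, your plan for the second identity is the wrong approach and would not yield the claimed cancellation. The stated second identity carries a misprint: the quantity inside $\di(\cdot)$ should be $\omtrchibd + \tfrac{4}{r}\Omd$, not $\omtrchid - \tfrac{4}{r}\Omd$, as confirmed by the proof of Lemma \ref{LEMchargeRelationstransv} (which expands $\uQQ_0$ as $(-r^2\etad - \tfrac{r^3}{2}\di\omtrchibd)^{[1]}$, a simplification that only follows from the $\omtrchibd + \tfrac{4}{r}\Omd$ form via $\etabd = -\etad + 2\di\Omd$), and by the fact that you yourself silently write $\omtrchibd + 4\Omd/r$ when invoking this identity in your sketch of the third. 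The correct derivation is much simpler than what you propose and does \emph{not} involve the $\Du(r^2\omtrchid)$ equation, the Gauss relation, or $\Kd$ at all: writing $g := \omtrchibd + \tfrac{4}{r}\Omd$ and using only \eqref{EQingoingEQ2} in the form $\Du(r^2 g)=-4\Omd$, one gets $\Du\lrpar{\tfrac{r^3}{2}\di g} = \Du\lrpar{\tfrac{r}{2}\di(r^2g)} = -\tfrac{r^2}{2}\di g - 2r\di\Omd$, which exactly cancels the lower-order terms in $\Du(r^2\etabd)$ from \eqref{EQingoingEQ3}, leaving $\Du\lrpar{r^2\etabd - \tfrac{r^3}{2}\di g} = \Divdo\chibhd$ (the sign is $+$, not $-$; the $+$ sign is also forced by the $\Divdo\Divdo\chibhd$ cancellation in the third identity). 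This is the precise $\Lb$-mirror of \eqref{EQrewritten10}, whose derivation likewise invokes only the Raychaudhuri equation \eqref{EQrewritten15}, not the transport equation for the conjugate expansion.

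Taken literally, the failure of your route is concrete: $\Du\bigl(\tfrac{r^3}{2}\di(\omtrchid-\tfrac{4}{r}\Omd)\bigr)$ contains $\Du(2r^2\di\Omd)$, which introduces $\ombd = \Du\Omd$ --- a quantity with no transport equation along $\HHb$ --- and substituting the $\Du(r^2\omtrchid)$ equation from \eqref{EQingoingEQ3} brings in $\Divdo\Divdo\gdcd$, $(\Ldo+2)\phid$, $\Divdo\etabd$, and $\omtrchibd$. There is no $\Kd$ or $\ombd$ anywhere else in the target expression to absorb these; converting the $\gdcd$ and $\phid$ pieces into $r^3\Kd$ via \eqref{EQgaussDHR}, as you propose, only makes the leftover more visible. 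Your third-identity sketch is essentially sound, but the described pairing is slightly off: the $\Divdo\Divdo\gdcd$ terms cancel directly between $\Du(r^2\omtrchid)$ and $-2\Du(r^3\Kd)$, while the $\Divdo\Divdo\chibhd$ terms cancel between $\Du(\tfrac{2}{r}\Divdo W)$ (from the second identity) and $-2\Du(r^3\Kd)$ (from $\Du\gdcd=\tfrac{2}{r^2}\chibhd$ acting on $\Kd$); the identity $1+\Divdo\DDd_2^\ast+\tfrac12\di\Divdo=0$ is not needed in this lemma.
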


\ni \emph{Remarks on Lemma \ref{LEMnulltransportTRANS}.}
\begin{enumerate}
\item By Lemma \ref{LEMnulltransportTRANS}, the charges $\underline{\QQ}_0, \underline{\QQ}_1$ and $\underline{\QQ}_2$ defined on the sphere $S_{u,v}$ with $r=v-u>0$ by
\begin{align} 
\begin{aligned} 
\underline{\QQ}_0 :=& \lrpar{ r^2 \etabd - \frac{r^3}{2} \di \lrpar{\omtrchid - \frac{4}{r}\Omd}}^{[1]}, \\
\underline{\QQ}_1 :=& \frac{r}{2} \lrpar{\omtrchibd + \frac{4}{r} \Omd} - \frac{\phid}{r}, \\
\underline{\QQ}_2 :=& r^2 \omtrchid + \frac{2}{r} \Divdo \lrpar{r^2 \etabd - \frac{r^3}{2}\di \lrpar{\omtrchibd + \frac{4}{r} \Omd }} \\
&- r^2 \lrpar{\omtrchibd + \frac{4}{r}\Omd} - 2 r^3 \Kd,
\end{aligned} \label{EQunderlinecharges12}
\end{align}
satisfy the following conservation laws along $\HHb_{[-1,0],1}$,
\begin{align*} 
\begin{aligned} 
\Du \underline{\QQ}_0 = 0, \,\,\Du \underline{\QQ}_1 = 0, \,\, \Du \underline{\QQ}_2 = 0.
\end{aligned} 
\end{align*}

\item In analogy to the analysis in Section \ref{SEClinPRELIM20}, the null structure equations \eqref{EQingoingEQ1}, \eqref{EQingoingEQ2}, \eqref{EQingoingEQ3} and the null transport equations of Lemma \ref{LEMnulltransportTRANS} imply that the quantities 
\begin{align} 
\begin{aligned} 
\phid, \gdcd, \lrpar{r^2 \etabd - \frac{r^3}{2} \di \lrpar{\omtrchibd + \frac{4}{r}\Omd}}^{[\geq2]}
\end{aligned} \label{EQtransgluing1}
\end{align}
can be glued without obstacles along $\HHb_{[0,-1],1}$ by using the degrees of freedom $\Omd$ and $\chibhd$ on $\HHb_{[-1,0],1}$.

\item As for the linearized null structure equations on $\HH_{-1,[1,2]}$, there are further charges $\uQQ_i$, $4\leq i \leq 7$, and higher-order charges along $\HHb_{[-1,0],1}$. However, for the purposes of this paper, the explicit expressions for $\uQQ_1, \uQQ_2$ and $\uQQ_3$ are sufficient. 

\end{enumerate}

\subsection{Preliminary analysis of charges} \label{SECprelimTRANSVERSE} In this section we derive relations between the charges
$$\QQ_0, \QQ_1, \QQ_2 \text{ and } \underline{\QQ}_0, \uQQ_1, \underline{\QQ}_2.$$ 

\ni The following lemma is the main result of this section.
\begin{lemma}[Charge identities] \label{LEMchargeRelationstransv} Consider linearized sphere data $\xd_{u,v}$ on a sphere $S_{u,v}$. Let $\QQ_i$ and $\uQQ_i$, $0\leq i \leq 2$, denote the associated charges on $S_{u,v}$. Then it holds that, with $r=v-u$,
\begin{align*} 
\begin{aligned} 
\QQ_1 -\frac{1}{2r} \underline{\QQ}_2 - \underline{\QQ}_1 =&  -2 \Omd + \half \Divdo \Divdo \gdcd- \frac{1}{r^2} \Divdo \lrpar{r^2 \etabd - \frac{r^3}{2} \di \lrpar{\omtrchibd + \frac{4}{r}\Omd }}    - \frac{1}{r} \Ldo \phid, 
\end{aligned} 
\end{align*}
and
\begin{align*} 
\begin{aligned} 
-\frac{1}{4r} \lrpar{\QQ_2 + \uQQ_2} - \half \Ldo\lrpar{\QQ_1 - \uQQ_1} =& \Ldo\Omd -\frac{1}{r^2} \Divdo \lrpar{r^2 \etabd - \frac{r^3}{2} \di \lrpar{\omtrchibd +\frac{4}{r} \Omd}} -\frac{1}{r} \Ldo \phid.
\end{aligned} 
\end{align*}
and moreover,
\begin{align} 
\begin{aligned} 
\lrpar{\QQ_0}_H = -\lrpar{\uQQ_0}_H, \,\, \lrpar{\QQ_0}_E = \lrpar{\uQQ_0}_E+\frac{r}{2} \di \underline{\QQ}_2^{[1]}, \,\, \QQ_2^{[0]} = - \underline{\QQ}_2^{[0]}, \,\, \QQ_2^{[1]} = \underline{\QQ}_2^{[1]}.
\end{aligned} \label{EQgaugerelationtransv2}
\end{align}

\end{lemma}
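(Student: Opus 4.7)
The identities in Lemma \ref{LEMchargeRelationstransv} are purely algebraic relations among quantities defined on the single sphere $S_{u,v}$, so the plan is to substitute the explicit expressions \eqref{EQdefChargesMinkowski8891} and \eqref{EQunderlinecharges12} for the charges and simplify using a small set of linearized identities. The only ingredients I need are: (i) the linearization of $\etab = -\eta + 2\di\log\Om$ at Minkowski, giving $\etad + \etabd = 2\di\Omd$ (so that $\etad$ and $\etabd$ may be interchanged at the cost of $2\di\Omd$); (ii) the linearized Gauss equation \eqref{EQgaussDHR}, $\Kd = \frac{1}{2r^2}\Divdo\Divdo\gdcd - \frac{1}{r^3}(\Ldo+2)\phid$; and (iii) standard mode-projection properties on the round sphere, namely that $\Divdo$ of a symmetric traceless $2$-tensor has no $l\leq 1$ part, that $\di$ of a scalar is purely electric in the Hodge decomposition of $1$-forms on $\SSS^2$, and that divergences integrate to zero (so kill the $l=0$ mode).

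For the first identity, I would expand $\QQ_1 - \frac{1}{2r}\uQQ_2 - \uQQ_1$ using the definitions. The $\omtrchid$ contributions cancel between $\QQ_1$ and the $r^2\omtrchid$ term in $\uQQ_2$, and the $\omtrchibd$ contributions cancel between $\uQQ_1$ and the $-r^2(\omtrchibd+\frac{4}{r}\Omd)$ term in $\uQQ_2$. The remaining $\Omd$-coefficients combine to $-2\Omd$. The $2\phid/r$ that survives from $\QQ_1 - \uQQ_1$ combines with the curvature contribution $r^2\Kd$: substituting the linearized Gauss equation, one obtains exactly $\half \Divdo\Divdo\gdcd - \frac{1}{r}\Ldo\phid - \frac{2}{r}\phid$, and the last term cancels the $2\phid/r$. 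The $\Divdo$-term is what it is by construction.

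For the second identity, the same substitution produces $\QQ_2 + \uQQ_2$ and $\Ldo(\QQ_1 - \uQQ_1)$; the $r^2(\omtrchid + \omtrchibd)$ terms drop out in the combination $-\frac{1}{4r}(\QQ_2+\uQQ_2) - \half\Ldo(\QQ_1 - \uQQ_1)$, and the $\Divdo$ acting on $r^2(\etad + \etabd) = 2r^2\di\Omd$ collapses via $\Divdo\di = \Ldo$ on scalars to yield $\Ldo\Omd$. The remaining $\Divdo$-term on the right-hand side comes from the $\Divdo(r^2\etabd - \frac{r^3}{2}\di(\omtrchibd+\frac{4}{r}\Omd))$ contained in $\uQQ_2$, and the $-\frac{1}{r}\Ldo\phid$ comes from $r^2\Kd$ via the linearized Gauss equation as in the first identity.

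For \eqref{EQgaugerelationtransv2}, I would compute $\QQ_0 \pm \uQQ_0$ on mode $l=1$. Using $\etad + \etabd = 2\di\Omd$ and $\di(\omtrchid - \frac{4}{r}\Omd) + \di(\omtrchibd+\frac{4}{r}\Omd) = \di(\omtrchid+\omtrchibd)$, the sum $\QQ_0 + \uQQ_0$ reduces to $\di$ of a scalar on mode $l=1$, hence purely electric; this yields $(\QQ_0)_H = -(\uQQ_0)_H$ immediately. The electric identity follows by identifying the remaining $\di$-scalar with $\frac{r}{2}\di\uQQ_2^{[1]}$ after using that $\Kd^{[1]} = 0$ (since both $\Divdo\Divdo\gdcd$ vanishes on $l\leq 1$ and $(\Ldo+2)\phid^{[1]}=0$) and that $\Divdo$ of a symmetric traceless tensor has no $l\leq 1$ part. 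The same observations give $\QQ_2^{[0]} = -\uQQ_2^{[0]}$ (all $\Divdo$-terms and $\Kd$-terms vanish on $l=0$ so only the $\omtrchid$ and $\omtrchibd$ pieces remain, and the $r^2(\omtrchid + \omtrchibd)$ contributions cancel in the sum) and $\QQ_2^{[1]} = \uQQ_2^{[1]}$ (in $\QQ_2 - \uQQ_2$ on $l=1$ the $\omtrchid, \omtrchibd, \Omd$ terms all cancel using $\Ldo = -2$ on mode $l=1$, and $\Kd^{[1]}=0$). The entire proof is direct bookkeeping; the only thing to watch is keeping track of the $r$-weights and signs, and applying $\etad + \etabd = 2\di\Omd$ at the right moments.
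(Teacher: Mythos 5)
Your proposal is correct and follows essentially the same route as the paper's proof: substitute the definitions \eqref{EQdefChargesMinkowski8891} and \eqref{EQunderlinecharges12}, use $\etabd = -\etad + 2\di\Omd$ to trade $\etad$ for $\etabd$, replace the curvature term $r^2\Kd$ via the linearized Gauss equation \eqref{EQgaussDHR}, and then do mode bookkeeping. The verifications I ran through confirm that the cancellations you describe for the first and second identities do occur, and that the $l=1$ analysis for $\QQ_0$ and $\QQ_2$ closes.

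One small factual error in your justification of $\QQ_2^{[0]} = -\uQQ_2^{[0]}$: you claim that $\Kd$ vanishes on the $l=0$ mode. It does not. By \eqref{EQgaussDHR}, $\Kd^{[0]} = -\frac{2}{r^3}\phid^{[0]}$ (the $\Divdo\Divdo\gdcd$ piece drops out since the range of $\Divdo$ on symmetric traceless $2$-tensors excludes $l\leq 1$, but $(\Ldo+2)\phid^{[0]} = 2\phid^{[0]} \neq 0$). The identity $\QQ_2^{[0]} = -\uQQ_2^{[0]}$ still holds because the $\Kd$-contributions enter $\QQ_2$ and $\uQQ_2$ with opposite signs ($+2r^3\Kd$ and $-2r^3\Kd$), so they cancel in the sum $\QQ_2 + \uQQ_2$, rather than vanishing individually. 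The same remark applies to the $\Omd$ contributions, which you omit from your tally but which also cancel pairwise. This is a detail of the bookkeeping only; it does not affect the conclusion. You correctly note that $\Kd^{[1]} = 0$, where both pieces vanish, and that is the case that matters for the $\QQ_0$ and $\QQ_2^{[1]}$ identities.

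Your verbal account of the second identity is also slightly off when you say the $r^2(\omtrchid + \omtrchibd)$ terms "drop out" in $-\frac{1}{4r}(\QQ_2 + \uQQ_2) - \half\Ldo(\QQ_1 - \uQQ_1)$: what actually happens is that the $\Ldo\omtrchid$ contributions cancel between the two terms, while the $\Ldo\omtrchibd$ contributions survive and reassemble inside the $\Divdo(r^2\etabd - \frac{r^3}{2}\di(\omtrchibd + \frac{4}{r}\Omd))$ on the right-hand side, with $\Divdo\di = \Ldo$ converting the $\di(\omtrchibd)$ part into the needed Laplacian. The paper avoids this issue by first deriving \eqref{EQchargeEXPR2}, an intermediate equation equivalent to your target combination, from which it then reads off the mode projections. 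If you carry out your combination carefully — which the algebra does support — you reach the same endpoint.

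Finally, for $(\QQ_0)_E$ you say it "follows by identifying the remaining $\di$-scalar with $\frac{r}{2}\di\uQQ_2^{[1]}$." This identification is not a one-step mode observation: it requires substituting the first identity of the lemma (projected onto $l=1$) into the scalar $\QQ_1 - \uQQ_1 + 2\Omd - \frac{2\phid}{r}$ appearing under the $\di$, and then using the Fourier multiplier $(\di\Divdo\uQQ_0)_E = -2(\uQQ_0)_E$ on $l=1$. The paper's proof makes this dependence on the first identity explicit; yours glosses over it but implicitly uses the same chain of substitutions.
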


\ni \emph{Remarks on Lemma \ref{LEMchargeRelationstransv}.}
\begin{enumerate}

\item The significance of the first two identities of Lemma \ref{LEMchargeRelationstransv} is that by \eqref{EQtransgluing1} the terms on the right-hand side are freely glueable along $\HHb_{[-1,0],1}$. 

\item The relations \eqref{EQgaugerelationtransv2} show that $\uQQ_0$ and $\uQQ_2^{[\leq1]}$ fully determine $\QQ_0$ and $\QQ_2^{[\leq1]}$ on $S_{u,v}$, and vice versa.

\end{enumerate}

\ni The rest of this section contains the proof of Lemma \ref{LEMchargeRelationstransv}. In the following we use the definition of $\uQQ_1,\uQQ_2$ and $\uQQ_3$, see \eqref{EQunderlinecharges12}, as well as \eqref{EQssLinearizationRelations7778}, \eqref{EQgaussDHR} and \eqref{EQdefChargesMinkowski8891}, that is,
\begin{align*} 
\begin{aligned} 
\etabd = - \etad + 2 \di \Omd, \,\, \Kd = \frac{1}{2r^2} \Divdo \Divdo \gdcd - \frac{1}{r^3} (\Ldo +2) \phid,
\end{aligned} 
\end{align*}
and
\begin{align*} 
\begin{aligned} 
\QQ_0 :=& \lrpar{r^2\etad +\frac{r^3}{2} \di\lrpar{\omtrchid-\frac{4}{r}\Omd}}^{[1]}, \\
\QQ_1 :=& \frac{r}{2} \lrpar{\omtrchid - \frac{4}{r} \Omd} + \frac{\phid}{r}, \\
\QQ_2 :=& r^2 \omtrchibd -\frac{2}{r} \Divdo \lrpar{r^2 \etad + \frac{r^3}{2} \di \lrpar{\omtrchid-\frac{4}{r} \Omd}} - r^2 \lrpar{\omtrchid - \frac{4}{r} \Omd} + 2 r^3 \Kd
\end{aligned} 
\end{align*}

\ni \textbf{Analysis of $\QQ_2$.} We have that
\begin{align*} 
\begin{aligned} 
\QQ_2 + \uQQ_2 =& - \frac{2}{r} \Divdo \lrpar{r^2 \etad + \frac{r^3}{2} \di \lrpar{\omtrchid - \frac{4}{r} \Omd}} + \frac{2}{r} \Divdo \lrpar{r^2 \etabd - \frac{r^3}{2} \di \lrpar{\omtrchibd +\frac{4}{r} \Omd}} \\
=& \frac{4}{r} \Divdo \lrpar{r^2 \etabd - \frac{r^3}{2} \di \lrpar{\omtrchibd +\frac{4}{r} \Omd}} \\
&-\frac{2}{r} \Divdo \lrpar{\frac{r^3}{2} \di \omtrchid - \frac{r^3}{2} \di \omtrchibd - 2 r^2 \Omd} \\
=& \frac{4}{r} \Divdo \lrpar{r^2 \etabd - \frac{r^3}{2} \di \lrpar{\omtrchibd +\frac{4}{r} \Omd}} -\Ldo \lrpar{r^2 \omtrchid - r^2 \omtrchibd - 4r \Omd} \\
=& \frac{4}{r} \Divdo \lrpar{r^2 \etabd - \frac{r^3}{2} \di \lrpar{\omtrchibd +\frac{4}{r} \Omd}} \\
&-\Ldo \lrpar{\uQQ_2 - \frac{2}{r} \Divdo\lrpar{r^2 \etabd - \frac{r^3}{2} \di \lrpar{\omtrchibd +\frac{4}{r} \Omd}} + 2r^3 \Kd } \\
=& \frac{2}{r} \lrpar{\Ldo +2 } \Divdo \lrpar{r^2 \etabd - \frac{r^3}{2} \di \lrpar{\omtrchibd +\frac{4}{r} \Omd}} \\
&-\Ldo \uQQ_2 - r \Ldo \Divdo \Divdo \gdcd + 2 \Ldo \lrpar{\Ldo +2} \phid, 
\end{aligned} 
\end{align*}
which can be rewritten as
\begin{align} 
\begin{aligned} 
&-\frac{1}{2r}\lrpar{\QQ_2 +(\Ldo+1) \uQQ_2} \\
=& \Ldo \lrpar{-\frac{1}{r^2} \Divdo \lrpar{r^2 \etabd - \frac{r^3}{2} \di \lrpar{\omtrchibd +\frac{4}{r} \Omd}} +\half \Divdo \Divdo \gdcd-\frac{1}{r} \Ldo \phid} \\
& -\frac{2}{r^2} \Divdo \lrpar{r^2 \etabd - \frac{r^3}{2} \di \lrpar{\omtrchibd +\frac{4}{r} \Omd}} -\frac{2}{r} \Ldo \phid.
\end{aligned} \label{EQchargeEXPR2}
\end{align}

\ni Projecting \eqref{EQchargeEXPR2} onto the modes $l=0$ and $l=1$ proves the last two of \eqref{EQgaugerelationtransv2},
\begin{align*} 
\begin{aligned} 
\QQ_2^{[0]} = - \underline{\QQ}_2^{[0]}, \,\, \QQ_2^{[1]} = \underline{\QQ}_2^{[1]}.
\end{aligned} 
\end{align*}

\ni \textbf{Analysis of $\QQ_1$.} We have that
\begin{align*} 
\begin{aligned} 
\QQ_1
=& \frac{1}{2r} \lrpar{r^2 \omtrchid} - 2\Omd + \frac{\phid}{r} \\
=& \frac{1}{2r} \lrpar{\underline{\QQ}_2 - \frac{2}{r} \Divdo \lrpar{r^2 \etabd - \frac{r^3}{2} \di \lrpar{\omtrchibd + \frac{4}{r}\Omd }} }\\
&+\frac{1}{2r} \lrpar{r^2 \lrpar{\omtrchibd+\frac{4}{r}\Omd}+ 2r^3\Kd} -2\Omd + \frac{\phid}{r}\\
=& \frac{1}{2r} \underline{\QQ}_2 - \frac{1}{r^2} \Divdo \lrpar{r^2 \etabd - \frac{r^3}{2} \di \lrpar{\omtrchibd + \frac{4}{r}\Omd }} \\
&+ \lrpar{\underline{\QQ}_1 + \frac{\phid}{r}} + r^2 \Kd -2\Omd + \frac{\phid}{r} \\
=& \frac{1}{2r} \underline{\QQ}_2 - \frac{1}{r^2} \Divdo \lrpar{r^2 \etabd - \frac{r^3}{2} \di \lrpar{\omtrchibd + \frac{4}{r}\Omd }} \\
&+ \underline{\QQ}_1  -2 \Omd + \half \Divdo \Divdo \gdcd - \frac{1}{r} \Ldo \phid,
\end{aligned} 
\end{align*}
which proves the first equation of Lemma \ref{LEMchargeRelationstransv}, that is,
\begin{align} 
\begin{aligned} 
\QQ_1 - \uQQ_1 - \frac{1}{2r} \uQQ_2 =&  -2 \Omd + \half \Divdo \Divdo \gdcd  \\
&- \frac{1}{r^2} \Divdo \lrpar{r^2 \etabd - \frac{r^3}{2} \di \lrpar{\omtrchibd + \frac{4}{r}\Omd }} - \frac{1}{r} \Ldo \phid.
\end{aligned} \label{EQchargeEXPR1}
\end{align}

\ni Moreover, plugging \eqref{EQchargeEXPR1} into \eqref{EQchargeEXPR2}, we get
\begin{align*} 
\begin{aligned} 
&-\frac{1}{2r}\lrpar{\QQ_2 +(\Ldo+1) \uQQ_2} \\
=& \Ldo \lrpar{\QQ_1 - \uQQ_1 - \frac{1}{2r} \uQQ_2 + 2 \Omd} \\
& -\frac{2}{r^2} \Divdo \lrpar{r^2 \etabd - \frac{r^3}{2} \di \lrpar{\omtrchibd +\frac{4}{r} \Omd}} -\frac{2}{r} \Ldo \phid,
\end{aligned} 
\end{align*}
which proves the second equation of Lemma \ref{LEMchargeRelationstransv}, that is,
\begin{align*} 
\begin{aligned} 
&-\frac{1}{4r} \lrpar{\QQ_2 + \uQQ_2} - \half \Ldo\lrpar{\QQ_1 - \uQQ_1} \\
=& \Ldo\Omd -\frac{1}{r^2} \Divdo \lrpar{r^2 \etabd - \frac{r^3}{2} \di \lrpar{\omtrchibd +\frac{4}{r} \Omd}} -\frac{1}{r} \Ldo \phid.
\end{aligned} 
\end{align*}

\ni \textbf{Analysis of $\QQ_0$.} We have that
\begin{align*} 
\begin{aligned} 
\QQ_0 + \uQQ_0 =& \lrpar{r^2\etad +\frac{r^3}{2} \di\lrpar{\omtrchid-\frac{4}{r}\Omd}}^{[1]} + \lrpar{-r^2\etad -\frac{r^3}{2} \di \omtrchibd }^{[1]} \\
=& r^2 \di \lrpar{\frac{r}{2}\omtrchid - \frac{r}{2} \omtrchibd - 2 \Omd}^{[1]}\\
=& r^2 \di \lrpar{\frac{r}{2}\lrpar{\omtrchid-\frac{4}{r}\Omd}+\frac{\phid}{r} - \lrpar{\frac{r}{2}\lrpar{\omtrchibd+\frac{4}{r}\Omd}- \frac{\phid}{r} } + 2\Omd - \frac{2\phid}{r}}^{[1]}\\
=& r^2 \di \lrpar{\QQ_1 - \uQQ_1+ 2\Omd - \frac{2\phid}{r}}^{[1]}.
\end{aligned} 
\end{align*}
On the one hand, this shows the first of \eqref{EQgaugerelationtransv2}, that is, 
\begin{align*} 
\begin{aligned} 
\lrpar{\QQ_0}_H = -\lrpar{\uQQ_0}_H.
\end{aligned}
\end{align*}
On the other hand, together with \eqref{EQchargeEXPR1} we get that
\begin{align*} 
\begin{aligned} 
\lrpar{\QQ_0}_E =& -\lrpar{\uQQ_0}_E+ r^2 \di \lrpar{\QQ_1 - \uQQ_1+ 2\Omd - \frac{2\phid}{r}}^{[1]}_E \\
=& -\lrpar{\uQQ_0}_E+ r^2 \di \lrpar{\frac{1}{2r} \underline{\QQ}_2 -\frac{1}{r^2} \Divdo \uQQ_0}^{[1]}_E \\
=& \lrpar{\uQQ_0}_E+\frac{r}{2} \di \underline{\QQ}_2^{[1]},
\end{aligned} 
\end{align*}
where we used that $\lrpar{\di\Divdo \uQQ_0}_E = -2 (\uQQ_0)_E$ due to $\uQQ_0= \uQQ_0^{[1]}$ and \eqref{EQFourierMultiplier7778}. This proves the second of \eqref{EQgaugerelationtransv2}. This finishes the proof of Lemma \ref{LEMchargeRelationstransv}. 

\subsection{Linearized bifurcate characteristic gluing} \label{SECstatementFULLhigher} 

In this section we state and solve the linearized codimension-$10$ bifurcate characteristic gluing problem.\\

\ni \textbf{Notation.} To ease presentation, we do not explicitly state the corresponding higher regularity norms and assume to work in a smooth setting. In the following, let $m\geq0$ be an integer.

\begin{theorem}[Linearized bifurcate characteristic gluing] \label{THMlinearizedGluingTWO} Consider on spheres $S_{0,1}$ and $S_{-1,2}$, respectively, the following smooth linearized higher-order sphere data,
\begin{align*} 
\begin{aligned} 
(\xd_{0,1}, \dot{\DD}^{L,m}_{0,1}, \dot{\DD}^{\Lb,m}_{0,1}) \text{ and } (\xd_{-1,2}, \dot{\DD}^{L,m}_{-1,2},\dot{\DD}^{\Lb,m}_{-1,2}).
\end{aligned} 
\end{align*}
There exist
\begin{itemize}
\item a smooth solution $(\xd,\dot{\DD}^{L,m},\dot{\DD}^{\Lb,m})$ to the higher-order null structure equations on $\HH_{-1,[1,2]}$
\item a smooth solution $(\dot{\underline{x}},\dot{\underline{\DD}}^{L,m},\dot{\underline{\DD}}^{\Lb,m})$ to the higher-order null structure equations on $\HHb_{[-1,0],1}$,
\item smooth higher-order sphere data $(\xd_{-1,1},\dot{\DD}^{L,m}_{-1,1},\dot{\DD}^{\Lb,m}_{-1,1})$ on $S_{-1,1}$,
\end{itemize}
fully matching on $S_{-1,1}$,
\begin{align*} 
\begin{aligned} 
(\xd,\dot{\DD}^{L,m},\dot{\DD}^{\Lb,m}) \vert_{S_{-1,1}} =(\dot{\underline{x}},\dot{\underline{\DD}}^{L,m},\dot{\underline{\DD}}^{\Lb,m})\vert_{S_{-1,1}}= (\xd_{-1,1},\dot{\DD}^{L,m}_{-1,1},\dot{\DD}^{\Lb,m}_{-1,1}),
\end{aligned} 
\end{align*}
such that we have higher-order matching on $S_{0,1}$,
\begin{align*} 
\begin{aligned} 
(\xd,\dot{\DD}^{L,m},\dot{\DD}^{\Lb,m}) \vert_{S_{0,1}} = (\xd_{0,1}, \dot{\DD}^{L,m}_{0,1}, \dot{\DD}^{\Lb,m}_{0,1}),
\end{aligned} 
\end{align*}
and higher-order matching up to the charges $\QQ_0$ and $\QQ_2^{[\leq1]}$ on $S_{-1,2}$, that is,
\begin{align} 
\begin{aligned} 
\MMf\lrpar{ \xd \vert_{S_{-1,2}}} = \MMf\lrpar{\xd_{-1,2}}, \,\, \dot{\DD}^{L,m} \vert_{S_{-1,2}} = \dot{\DD}^{L,m}_{-1,2}, 
\end{aligned} \label{EQdefMatchingHigherConditionS12}
\end{align}
where $\MMf$ denotes the matching map defined in Definition \ref{DEFmatchingMAP} applied to linearized sphere data, and 
\begin{align} 
\begin{aligned} 
&\lrpar{{\QQ_1}, {\QQ_2^{[\geq2]}}, {\QQ_3}, \dots, \QQ_{i(m)} }\lrpar{({\xd},\dot{\DD}^{L,m},\dot{\DD}^{\Lb,m}) \vert_{S_{-1,2}}} \\
=& \lrpar{{\QQ_1}, {\QQ_2^{[\geq2]}}, {\QQ_3}, \dots, \QQ_{i(m)} }\lrpar{\xd_{-1,2},\dot{\DD}_{-1,2}^{L,m},\dot{\DD}_{-1,2}^{\Lb,m}}.
\end{aligned} \label{EQdefMatchingHigherConditionS123337778}
\end{align}
\end{theorem}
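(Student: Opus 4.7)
The plan is to reduce the bifurcate gluing problem to two successive applications of the linearized single-hypersurface gluing analysis of Section \ref{SEClinearizedProblem}, one along the ingoing segment $\HHb_{[-1,0],1}$ and one along the outgoing segment $\HH_{-1,[1,2]}$, with the auxiliary sphere data at $S_{-1,1}$ chosen so that the gauge-dependent obstructions on the outgoing segment are absorbed by the free data on the ingoing segment.

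First, I would carry out the ingoing analogue of the analysis of Section \ref{SECprelimAnalysis}: treating $(\Omd,\chibhd)$ as the free data on $\HHb_{[-1,0],1}$ and integrating the linearized null structure equations recorded in \eqref{EQingoingEQ1}--\eqref{EQingoingEQ3} and Lemma \ref{LEMnulltransportTRANS}, one obtains representation formulas for every component of linearized sphere data at $S_{-1,1}$, displaying the same hierarchical $r$-weight structure as on the outgoing side. Along $\HHb_{[-1,0],1}$ there is an associated infinite-dimensional family of conserved ingoing charges $\underline{\QQ}_i$ (the first three given by \eqref{EQunderlinecharges12}); the gauge-dependent members of this family, together with the higher-order $\Lb$-derivative transport data, can be prescribed at $S_{-1,1}$ by choosing appropriate weighted integrals of $\Omd$ and $\chibhd$ over $[-1,0]$, exactly as in Section \ref{SECsolutionLinearized99902}, while the gauge-invariant $\underline{\QQ}_0,\underline{\QQ}_2^{[\leq 1]}$ are fixed by the data $\xd_{0,1}$ via the conservation laws.

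Next, I would apply the outgoing gluing construction of Theorem \ref{PROPlingluing22} to $\HH_{-1,[1,2]}$, but with the role of the sphere perturbations $\dot{\PP}_f,\dot{\PP}_{(j^1,j^2)}$ of Proposition \ref{PROPadjustmentCharges} replaced by adjustment of the sphere data at the bifurcation sphere $S_{-1,1}$ itself. The key algebraic input is Lemma \ref{LEMchargeRelationstransv}: the initial values at $S_{-1,1}$ of $\QQ_1$ and $\QQ_2^{[\geq 2]}$ are expressed as explicit linear combinations of $\underline{\QQ}_1,\underline{\QQ}_2$ and freely glueable quantities such as $\Omd,\phid,\gdcd$ and the transversally adjustable piece of $r^2\etabd-\tfrac{r^3}{2}\di(\omtrchibd+\tfrac{4}{r}\Omd)$; entirely analogous identities for $\QQ_3,\ldots,\QQ_7$ — derivable by the same manipulation of the linearized null structure equations — then show that all outgoing gauge-dependent charges at $S_{-1,1}$ can be prescribed at will by the ingoing free data. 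With these prescribed, the outgoing gluing from $S_{-1,1}$ to $S_{-1,2}$ matches all of \eqref{EQdefMatchingHigherConditionS12} modulo precisely $\QQ_0$ and $\QQ_2^{[\leq 1]}$, and these two quantities coincide at $S_{-1,2}$ with $\underline{\QQ}_0$ and $\underline{\QQ}_2^{[\leq 1]}$ at $S_{-1,1}$ by the conservation laws combined with the gauge-invariance identities \eqref{EQgaugerelationtransv2}. This explains the $10$-dimensional residual obstruction.

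For the higher-order gluing of $\dot{\DD}^{L,m}$ and $\dot{\DD}^{\Lb,m}$, on each segment I would invoke Remark \ref{REMhigherLgluing1}: tangential higher-order derivatives are glued by direct boundary prescription of $\Omd,\chihd$ (respectively $\Omd,\chibhd$) and their $D$-derivatives (respectively $\Du$-derivatives) at the endpoints, without generating any new obstructions. The matching on $S_{-1,1}$ is then imposed by taking as boundary data at $S_{-1,1}$ the same tuple for both segments. I expect the main technical obstacle to be verifying that the linear map sending the ingoing free data to the target outgoing gauge-dependent charges at $S_{-1,1}$ is surjective onto the complement of $(\underline{\QQ}_0,\underline{\QQ}_2^{[\leq 1]})$; but this reduces, via Lemma \ref{LEMchargeRelationstransv} and its higher-order analogues, to the spectral analysis of the Hodge-type operators already treated in Appendix \ref{SECellEstimatesSpheres} and used throughout Section \ref{SEClinearizedProblem}.
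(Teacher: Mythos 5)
Your proposal captures the paper's two-step strategy correctly: use the ingoing free data along $\HHb_{[-1,0],1}$ to steer the gauge-dependent outgoing charges $\QQ_1,\QQ_2^{[\geq 2]},\QQ_3,\dots$ at the bifurcation sphere $S_{-1,1}$ to the values they must have (determined by the outgoing conservation laws and $\xd_{-1,2}$), then run the outgoing gluing of Section \ref{SEClinearizedProblem} from $S_{-1,1}$ to $S_{-1,2}$; and you correctly identify Lemma \ref{LEMchargeRelationstransv} as the key algebraic input and \eqref{EQgaugerelationtransv2} as the reason only $\QQ_0,\QQ_2^{[\leq 1]}$ survive as obstructions. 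This is encapsulated in the paper as Proposition \ref{PROPlinearindep} followed by a direct application of Theorem \ref{PROPlingluing22}.

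The one place your route diverges from the paper's is the treatment of $\QQ_3,\dots,\QQ_{i(m)}$. You propose to derive ``entirely analogous identities'' to Lemma \ref{LEMchargeRelationstransv} expressing these charges in terms of $\uQQ_i$ plus glueable combinations, and you flag the surjectivity of the resulting map as the main technical obstacle. The paper avoids this entirely by a simpler observation: the definitions \eqref{EQdefChargesMinkowski8891} of $\QQ_3,\dots,\QQ_7$ (and their higher-order siblings) already contain $\chibhd,\abd,\ombd,\Du\ombd$ and $\DD^{\Lb,m}$-data directly, and by \eqref{EQgluingLbderivativesHIGHER} these particular components are unconstrained along $\HHb_{[-1,0],1}$, so they can simply be read off at $S_{-1,1}$ from the target values and glued without any further algebra. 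In other words, for $i\geq 3$ no charge-relation manipulation is needed; the only charges where the careful bookkeeping of Lemma \ref{LEMchargeRelationstransv} is required are $\QQ_1$ and $\QQ_2$. Your approach would likely also succeed but is heavier than necessary, and the surjectivity question you raise is resolved constructively mode-by-mode in the proof of Proposition \ref{PROPlinearindep} (for $\QQ_1,\QQ_2^{[\geq 2]}$) and trivially for the rest.
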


\ni \emph{Remarks on Theorem \ref{THMlinearizedGluingTWO}.}
\begin{enumerate}
\item The matching \eqref{EQdefMatchingHigherConditionS12} and \eqref{EQdefMatchingHigherConditionS123337778} equals \emph{higher-order matching up to charges $\QQ_0$ and $\QQ_2^{[\leq1]}$} in the following sense. Analogous to Lemma \ref{LEMconditionalMATCHING}, if it holds on $S_{-1,2}$ in addition to \eqref{EQdefMatchingHigherConditionS12} that
\begin{align*} 
\begin{aligned} 
\lrpar{{\QQ_0}, {\QQ_2^{[\leq1]}}}\lrpar{{\xd} \vert_{S_{-1,2}}} = \lrpar{{\QQ_0}, {\QQ_2^{[\leq1]}}}\lrpar{\xd_{-1,2}},
\end{aligned} 
\end{align*}
then we have that 
\begin{align*} 
\begin{aligned} 
\xd \vert_{S_{-1,2}} = \xd_{-1,2},
\end{aligned} 
\end{align*}
by which we can subsequently deduce from \eqref{EQdefMatchingHigherConditionS12} and \eqref{EQdefMatchingHigherConditionS123337778} the full higher-order matching
\begin{align*} 
\begin{aligned} 
({\xd},\dot{\DD}^{L,m},\dot{\DD}^{\Lb,m}) \vert_{S_{-1,2}} = (\xd_{-1,2},\dot{\DD}_{-1,2}^{L,m},\dot{\DD}_{-1,2}^{\Lb,m}).
\end{aligned} 
\end{align*}

\end{enumerate}

\ni In the following we outline the proof of Theorem \ref{THMlinearizedGluingTWO}. We first make some remarks.
\begin{itemize}

\item In accordance with the definition of free data, see Section \ref{SECderivationConstraintFunctions1} and Remark \ref{REMARKfreedatalinear}, our degrees of freedom in the linearized gluing problem are the prescriptions of
\begin{align} 
\begin{aligned} 
\chibhd \text{ and } \Omd \text{ on } \HHb_{[-1,0],1}, \,\, \chihd \text{ and } \Omd \text{ on } \HH_{-1,[1,2]}.
\end{aligned} \label{EQfreedomtransversal}
\end{align}

\item The gluing of 
\begin{align} 
\begin{aligned} 
\Omd, \omd, \chihd, \ad, \DD^{L,m} \text{ along } \HH_{-1,[1,2]}
\end{aligned} \label{EQgluingLderivativesHIGHER}
\end{align}
and of 
\begin{align} 
\begin{aligned} 
\Omd, \ombd, \chibhd, \abd, \DD^{\Lb,m} \text{ along } \HHb_{[-1,0],1}
\end{aligned} \label{EQgluingLbderivativesHIGHER}
\end{align}
follows without obstructions from the degrees of freedom \eqref{EQfreedomtransversal}. Indeed, see the discussions concerning the linearized higher-order $L$-gluing along $\HH_{-1,[1,2]}$ in Remark \ref{REMhigherLgluing1}, which also generalizes to higher-order $\Lb$-gluing along $\HHb_{[-1,0],1}$.

\item Given linearized higher-order sphere data $(\xd_{0,1}, \dot{\DD}^{L,m}_{0,1}, \dot{\DD}^{\Lb,m}_{0,1})$ on $S_{0,1}$, the conservation laws along $\HHb_{[-1,0],1}$ determine the charges $\uQQ_i$, $0\leq i \leq i(m)$ on $S_{-1,1}$. Similarly, given linearized higher-order sphere data $(\xd_{-1,2}, \dot{\DD}^{L,m}_{-1,2}, \dot{\DD}^{\Lb,m}_{-1,2})$ on $S_{-1,2}$, the conservation laws along $\HH_{-1,[1,2]}$ determine the charges $\QQ_i$, $0\leq i \leq i(m)$, on $S_{-1,1}$. The latter are denoted in the following by
\begin{align} 
\begin{aligned} 
(\QQ_i)_0 \text{ for } 0\leq i \leq i(m).
\end{aligned} \label{EQdefChargescomingfromS12}
\end{align}
\end{itemize}

\ni In the following, we prove Theorem \ref{THMlinearizedGluingTWO} in two steps. 
\begin{enumerate}
\item We construct a solution $(\underline{x},\underline{\DD}^{L,m},\underline{\DD}^{\Lb,m})$ to the linearized higher-order null structure equations on $\HHb_{[-1,0],1}$ such that on $S_{0,1}$ it fully matches the given higher-order sphere data, and on $S_{-1,1}$ we have the charge matching 
\begin{align} 
\begin{aligned} 
&\lrpar{\QQ_1, \QQ_2^{[\geq2]}, \QQ_3, \dots, \QQ_{i(m)}}(\underline{x},\underline{\DD}^{L,m},\underline{\DD}^{\Lb,m}) \vert_{S_{-1,1}}\\
=&
\lrpar{\lrpar{\QQ_1}_0, \lrpar{\QQ_2^{[\geq2]}}_0, \lrpar{\QQ_3}_0, \dots, \lrpar{\QQ_{i(m)}}_0 },
\end{aligned} \label{EQchargeMatchingS117778}
\end{align}
where the right-hand side charges are defined in \eqref{EQdefChargescomingfromS12}. This is the content of Proposition \ref{PROPlinearindep} below.

\item We construct a solution $(\dot{x},\dot{\DD}^{L,m},\dot{\DD}^{\Lb,m})$ to the linearized higher-order null structure equations on $\HH_{-1,[1,2]}$ such that on $S_{-1,1}$ it fully matches the given higher-order sphere data, and on $S_{-1,2}$ we have the matching
\begin{align*} 
\begin{aligned} 
\MMf(\xd\vert_{S_{-1,2}}) = \MMf(\xd_{-1,2}), \,\, \dot{\DD}^{L,m} \vert_{S_{-1,2}} = \dot{\DD}^{L,m}_{-1,2}.
\end{aligned} 
\end{align*}
By the charge matching \eqref{EQchargeMatchingS117778} and conservation laws along $\HH_{-1,[1,2]}$, it follows that the condition \eqref{EQdefMatchingHigherConditionS123337778} is satisfied on $S_{-1,2}$.
\end{enumerate}

\ni The above two steps can be seen as generalization of Sections \ref{SECmatchingCharges11} and \ref{SECsecondchoice12}, respectively. 
\begin{proposition}[Characteristic gluing along $\HHb_{[-1,0],1}$ with charge matching on $S_{-1,1}$] \label{PROPlinearindep} Let $(\xd_{0,1},\dot{\DD}^{L,m}_{0,1}, \dot{\DD}^{\Lb,m}_{0,1})$ be given linearized higher-order sphere data on $S_{0,1}$ and let 
\begin{align} 
\begin{aligned} 
\lrpar{\lrpar{\QQ_1}_0, \lrpar{\QQ_2^{[\geq2]}}_0, \lrpar{\QQ_3}_0, \dots, \lrpar{\QQ_{i(m)}}_0 }
\end{aligned} \label{EQprescribedCHARGEvalueshigher}
\end{align}
be a given tuple of charge values on $S_{-1,1}$. There exists higher-order ingoing null data $(\underline{\xd},\underline{\DD}^{L,m},\underline{\DD}^{\Lb,m})$ on $\HHb_{[-1,0],1}$ solving the linearized higher-order null structure equations such that
\begin{align*} 
\begin{aligned} 
(\underline{\xd},\underline{\DD}^{L,m},\underline{\DD}^{\Lb,m}) \vert_{S_{0,1}}= (\xd_{0,1},\dot{\DD}^{L,m}_{0,1}, \dot{\DD}^{\Lb,m}_{0,1}),
\end{aligned} 
\end{align*}
and
\begin{align*} 
\begin{aligned} 
&\lrpar{{\QQ_1}, {\QQ_2^{[\geq2]}}, {\QQ_3}, \dots, \QQ_{i(m)} }\lrpar{(\underline{\xd},\underline{\DD}^{L,m},\underline{\DD}^{\Lb,m}) \vert_{S_{-1,1}}} \\
=& \lrpar{\lrpar{\QQ_1}_0, \lrpar{\QQ_2^{[\geq2]}}_0, \lrpar{\QQ_3}_0, \dots, \lrpar{\QQ_{i(m)}}_0 }.
\end{aligned} 
\end{align*}

\end{proposition}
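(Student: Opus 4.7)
The strategy mirrors that of Theorem \ref{PROPlingluing22} but is adapted to the ingoing null hypersurface $\HHb_{[-1,0],1}$, with matching of full higher-order sphere data at $S_{0,1}$ replacing the role of the initial sphere and with matching of the prescribed charge values at $S_{-1,1}$ replacing the role of the matching map on the distant sphere. According to Remark \ref{REMARKfreedatalinear} (applied along $\Lb$), the degrees of freedom are the free prescription of $\dot\Omega$ and $\chibhd$ along $\HHb_{[-1,0],1}$; additionally, by Remarks \ref{REMhigherOrderSpherePerturbations} and \ref{REMhigherLgluing2} applied along $\Lb$, the higher-order data $\dot{\underline{\DD}}^{\Lb,m}$ can be freely glued without obstruction using the transversal boundary values of the free data and its derivatives on $S_{0,1}$ and $S_{-1,1}$. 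So the real content is the gluing of the ``first-order'' piece $\underline{\xd}$ together with higher-order $L$-derivatives $\dot{\underline\DD}^{L,m}$, which reduces (after transport analysis on $\HHb_{[-1,0],1}$) to a problem entirely analogous to Theorem \ref{PROPlingluing22}.

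The first key step is to integrate the ingoing linearized null structure equations \eqref{EQingoingEQ1}, \eqref{EQingoingEQ2}, \eqref{EQingoingEQ3} and Lemma \ref{LEMnulltransportTRANS} to derive representation formulas for $\phid, \gdcd, \dot\eta_{\underline{\,}}{}^{[\geq2]}, \dot{\underline\chi}{\vphantom{\chi}}_h, \dot\Omega\dot{\mathrm{tr}}\underline{\chi}$, etc., on $S_{-1,1}$ in terms of their values at $S_{0,1}$ plus $u$-integrals of $\dot\Omega$ and $\chibhd$ against distinct weights in $r=v-u$, exactly paralleling Section \ref{SECprelimAnalysis}. The conserved charges $\underline{\QQ}_0,\underline{\QQ}_1,\underline{\QQ}_2$ (and the analogously defined higher $\underline{\QQ}_i$) are determined by $\xd_{0,1}$ alone, since $\Du\underline{\QQ}_i=0$. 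The hierarchy of radial weights entering the integral representations will give linear independence of the integral conditions on $\dot\Omega$ and $\chibhd$, enabling us to glue every gauge-dependent quantity at $S_{-1,1}$ by a construction of the type of Lemma \ref{LEMtechnicalEXPLICITFORMULA}.

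The second key step is to translate the prescribed charge values \eqref{EQprescribedCHARGEvalueshigher} for $\QQ_i$ into conditions at $S_{-1,1}$ on quantities that are adjustable on the ingoing side. For $\QQ_1$ and $\QQ_2^{[\geq2]}$ we use the first two identities of Lemma \ref{LEMchargeRelationstransv}: since the left-hand sides have $\underline{\QQ}_1,\underline{\QQ}_2$ already fixed by conservation, the prescribed values force specific values at $S_{-1,1}$ of the right-hand-side combinations
\begin{align*}
-2\dot\Omega + \tfrac{1}{2}\Divdo\Divdo\gdcd - \tfrac{1}{r^2}\Divdo\bigl(r^2\dot{\underline\eta}-\tfrac{r^3}{2}\di(\dot\Omega\dot{\mathrm{tr}}\underline{\chi}+\tfrac{4}{r}\dot\Omega)\bigr) - \tfrac{1}{r}\Ldo\phid,
\end{align*}
and the analogous expression for $\QQ_2^{[\geq2]}$. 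Each of these is a freely glueable quantity on $S_{-1,1}$ in the sense of \eqref{EQtransgluing1}, so Lemma \ref{LEMchargeRelationstransv} converts the task of matching $\QQ_1, \QQ_2^{[\geq2]}$ on $S_{-1,1}$ into the task of matching specific elliptic combinations of $\phid, \gdcd, \dot\Omega, \dot{\underline\eta}$ and $\dot{\Omega\mathrm{tr}\underline{\chi}}$, all of which are reachable through the free data by step one. The remaining charges $\QQ_3,\ldots,\QQ_{i(m)}$ encode analogous elliptic combinations of $\chibhd^{(1)}$, $\dot{\underline\omega}{\vphantom{\omega}}^{(\geq2)}$, $\Du\dot{\underline\omega}{\vphantom{\omega}}^{(\geq2)}$, and higher $\Lb$-derivatives at $S_{-1,1}$; by the same transport/elliptic analysis as in Section \ref{SEClinPRELIM3} (with $L\leftrightarrow\Lb$), they are matched by further choice of transversal boundary values and by the same $u$-integral conditions on $\dot\Omega$ and $\chibhd$ as above, using further weighted radial profiles.

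The main obstacle is bookkeeping: verifying that all the integral and boundary conditions we impose on the two scalar degrees of freedom $\dot\Omega$ and $\chibhd$ along $\HHb_{[-1,0],1}$ are mutually compatible and can be simultaneously realized. As in Lemma \ref{LEMtechnicalEXPLICITFORMULA}, this requires that the finite list of distinct $r$-weights appearing in the various integrals be linearly independent when tested against each other. Since the ingoing analog of the hierarchy in Section \ref{SECprelimAnalysis} produces the same $r$-power structure (with $\Du$ playing the role of $D$), this linear independence holds and the construction closes; the higher-order charges $\QQ_i$ for $4\leq i\leq i(m)$ involve successively higher inverse weights $r^{-k}$, which extends the independence. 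A secondary, but purely technical, point is that one must check the resulting elliptic operators (the analogs of $\Divdo$, $\DDd_2^\ast\lrpar{\Divdo\DDd_2^\ast+1+\di\Divdo}\Divdo$, etc., appearing in Section \ref{SECsecondchoice12}) remain invertible on the relevant mode-projected spaces on $S_{-1,1}$; the spectral analysis of Appendix \ref{SECspecAnalysis} applies verbatim.
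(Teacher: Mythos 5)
Your proposal is correct and follows essentially the same route as the paper: invoke the conservation laws along $\HHb_{[-1,0],1}$ to fix $\uQQ_i$ from $\xd_{0,1}$, use the charge identities of Lemma \ref{LEMchargeRelationstransv} to rewrite the desired values of $\QQ_1, \QQ_2^{[\geq2]}$ as conditions on freely glueable combinations of $\Omd, \phid, \gdcd, \etabd$ on $S_{-1,1}$, and realize the remaining charges $\QQ_3,\dots,\QQ_{i(m)}$ through the freely glueable $\Lb$-side quantities ($\chibhd, \abd, \ombd, \Du\ombd$ and $\underline{\DD}^{\Lb,m}$) as per \eqref{EQgluingLbderivativesHIGHER}. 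The only difference is emphasis: the paper solves the resulting system on $S_{-1,1}$ explicitly mode by mode ($l=0,\,l=1,\,l\geq 2$) and simply cites \eqref{EQtransgluing1} for glueability, rather than re-running the weighted-integral/linear-independence analysis you describe; your worry about "bookkeeping" of the radial weights is real but is already discharged by that earlier section, so the paper's proof does not need to revisit it.
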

\emph{Remarks on Proposition \ref{PROPlinearindep}.}
\begin{enumerate}
\item The charges $\QQ_0$ and $\QQ_2^{[\geq1]}$ can in general not be matched on $S_{-1,1}$ because by Lemma \ref{LEMchargeRelationstransv}, they are determined from $\uQQ_0$ and $\uQQ_2^{[\geq1]}$ which are in turn determined from $\xd_{0,1}$ by the conservation laws on $\HHb_{[-1,0],1}$.
\item The sphere data $\underline{\xd}\vert_{S_{-1,1}}$ is a priori not fully determined by the matching conditions of Proposition \ref{PROPlinearindep} and thus our construction admits some freedom of choice.
\end{enumerate}

\begin{proof}[Proof of Proposition \ref{PROPlinearindep}] First consider the matching 
\begin{align} 
\begin{aligned} 
\lrpar{{\QQ_1}, {\QQ_2^{[\geq2]}} }\lrpar{\underline{\xd} \vert_{S_{-1,1}}} = \lrpar{\lrpar{\QQ_1}_0, \lrpar{\QQ_2^{[\geq2]}}_0 }.
\end{aligned} \label{EQfirstmatching127778}
\end{align}
By Lemma \ref{LEMchargeRelationstransv}, we have to solve the following system on $S_{-1,1}$,
\begin{align} 
\begin{aligned} 
\lrpar{\QQ_1}_0 -\frac{1}{2r} \underline{\QQ}_2 - \underline{\QQ}_1 =&  -2 \Omd + \half \Divdo \Divdo \gdcd \\
&- \frac{1}{r^2} \Divdo \lrpar{r^2 \etabd - \frac{r^3}{2} \di \lrpar{\omtrchibd + \frac{4}{r}\Omd }}    - \frac{1}{r} \Ldo \phid, 
\end{aligned} \label{EQtranssystem1}
\end{align}
and
\begin{align} 
\begin{aligned} 
&-\frac{1}{4r} \lrpar{\lrpar{\QQ^{[\geq2]}_2}_0 + \uQQ_2^{[\geq2]}} - \half \Ldo\lrpar{\lrpar{\QQ_1}_0 - \uQQ_1}^{[\geq2]} \\
=& \Ldo\Omd^{[\geq2]} -\frac{1}{r^2} \Divdo \lrpar{r^2 \etabd - \frac{r^3}{2} \di \lrpar{\omtrchibd +\frac{4}{r} \Omd}}^{[\geq2]} -\frac{1}{r} \Ldo \phid^{[\geq2]},
\end{aligned} \label{EQtranssystem2}
\end{align}
where the charges $\uQQ_1$ and $\uQQ_2$ are determined on $S_{-1,1}$ from the sphere data $\xd_{0,1}$ on $S_{0,1}$. 

We recall from \eqref{EQtransgluing1} that the quantities
\begin{align*} 
\begin{aligned} 
\Omd,\,\, \phid, \,\, \gdcd, \,\, \lrpar{r^2 \etabd - \frac{r^3}{2} \di \lrpar{\omtrchibd +\frac{4}{r} \Omd}}^{[\geq2]}
\end{aligned} 
\end{align*}
on the right-hand side of \eqref{EQtranssystem1} and \eqref{EQtranssystem2} are glueable along $\HHb_{[-1,0],1}$ and can thus be freely prescribed on $S_{-1,1}$. In the following, we show in detail how to prescribe them such that \eqref{EQtranssystem1} and \eqref{EQtranssystem2} are satisfied. \\

\ni \textbf{Matching of modes $l=0$.} By projecting \eqref{EQtranssystem1} onto the modes $l=0$, we get
\begin{align*} 
\begin{aligned} 
\lrpar{\QQ_1}_0^{[0]} = \frac{1}{2r} \underline{\QQ}_2^{[0]} + \underline{\QQ}_1^{[0]}  -2 \Omd^{[0]}.
\end{aligned} 
\end{align*}
Hence we prescribe on $S_{-1,1}$,
\begin{align*} 
\begin{aligned} 
\Omd^{[0]} = \half \lrpar{- \QQ_1^{[0]} +\frac{1}{2r} \underline{\QQ}_2^{[0]} + \underline{\QQ}_1^{[0]}}, \,\, \phid^{[0]} =0.
\end{aligned} 
\end{align*}

\ni \textbf{Matching of modes $l=1$.} By projecting \eqref{EQtranssystem1} onto the modes $l=1$, we get
\begin{align*} 
\begin{aligned} 
\lrpar{\QQ_1}_0^{[1]}=& \frac{1}{2r} \underline{\QQ}_2^{[1]} - \frac{1}{r^2} \Divdo \lrpar{r^2 \etabd - \frac{r^3}{2} \di \lrpar{\omtrchibd + \frac{4}{r}\Omd }}^{[1]} + \underline{\QQ}_1^{[1]}  -2 \Omd^{[1]} + \frac{2}{r} \phid^{[1]} \\
=&\frac{1}{2r} \underline{\QQ}_2^{[1]} - \frac{1}{r^2} \Divdo \underline{\QQ}_0 + \underline{\QQ}_1^{[1]}  -2 \Omd^{[1]} + \frac{2}{r} \phid^{[1]}.
\end{aligned} 
\end{align*}
Hence we prescribe on $S_{-1,1}$,
\begin{align*} 
\begin{aligned} 
\Omd^{[1]}=0, \,\, \phid^{[1]}= \frac{r}{2} \lrpar{\lrpar{\QQ_1}_0^{[1]} -\frac{1}{2r} \underline{\QQ}_2^{[1]}+\frac{1}{r^2} \Divdo \underline{\QQ}_0-\underline{\QQ}_1^{[1]}}.
\end{aligned} 
\end{align*}

\ni \textbf{Matching of modes $l\geq2$.} By projecting \eqref{EQtranssystem1} and \eqref{EQtranssystem2} onto the modes $l\geq2$, we get

\begin{align*} 
\begin{aligned} 
\lrpar{\QQ_1}_0^{[\geq2]} -\frac{1}{2r} \underline{\QQ}_2^{[\geq2]} - \underline{\QQ}_1^{[\geq2]} =&  -2 \Omd^{[\geq2]} + \half \Divdo \Divdo \gdcd^{[\geq2]} \\
&- \frac{1}{r^2} \Divdo \lrpar{r^2 \etabd - \frac{r^3}{2} \di \lrpar{\omtrchibd + \frac{4}{r}\Omd }}^{[\geq2]}    - \frac{1}{r} \Ldo \phid^{[\geq2]}, 
\end{aligned} 
\end{align*}
and
\begin{align*} 
\begin{aligned} 
&-\frac{1}{4r} \lrpar{\lrpar{\QQ^{[\geq2]}_2}_0 + \uQQ_2^{[\geq2]}} - \half \Ldo\lrpar{\lrpar{\QQ_1}_0 - \uQQ_1}^{[\geq2]} \\
=& \Ldo\Omd^{[\geq2]} -\frac{1}{r^2} \Divdo \lrpar{r^2 \etabd - \frac{r^3}{2} \di \lrpar{\omtrchibd +\frac{4}{r} \Omd}}^{[\geq2]} -\frac{1}{r} \Ldo \phid^{[\geq2]},
\end{aligned} 
\end{align*}

\ni It is straight-forward to check that the following prescription on $S_{-1,1}$ is a solution of the above system,
\begin{align*} 
\begin{aligned} 
\Omd^{[\geq2]} =0, \,\, \lrpar{r^2 \etabd - \frac{r^3}{2} \di \lrpar{\omtrchibd + \frac{4}{r}\Omd }}^{[\geq2]} =0, 
\end{aligned} 
\end{align*}
with $\phid^{[\geq2]}$ and $\gdcd^{[\geq2]}$ defined subsequently on $S_{-1,1}$ as solutions to
\begin{align*} 
\begin{aligned} 
 -\frac{1}{r} \Ldo \phid^{[\geq2]} =& -\frac{1}{4r} \lrpar{\lrpar{\QQ^{[\geq2]}_2}_0 + \uQQ_2^{[\geq2]}} - \half \Ldo\lrpar{\lrpar{\QQ_1}_0 - \uQQ_1}^{[\geq2]},\\
\half \Divdo \Divdo \gdcd^{[\geq2]} =& \lrpar{\QQ_1}_0^{[\geq2]} -\frac{1}{2r} \underline{\QQ}_2^{[\geq2]} - \underline{\QQ}_1^{[\geq2]} + \frac{1}{r} \Ldo \phid^{[\geq2]}, \\
(\Divdo \gdcd)_H=&0.
\end{aligned} 
\end{align*}
Using that the Laplacian and div-curl are elliptic Hodge systems, see Appendix \ref{SECellEstimatesSpheres}, it is straight-forward to prove regularity estimates for $\phid$ and $\gdcd$ which show that the above construction is consistent with the regularity hierarchy of the linearized null structure equations. This proves the matching \eqref{EQfirstmatching127778} of $\QQ_1$ and $\QQ_2^{[\geq2]}$.

It remains to realize the matching
\begin{align} 
\begin{aligned} 
\lrpar{{\QQ_3}, \dots, \QQ_{i(m)} }\lrpar{\underline{\xd} \vert_{S_{-1,1}}} = \lrpar{\lrpar{\QQ_3}_0, \dots, \lrpar{\QQ_{i(m)}}_0 }.
\end{aligned} \label{EQhigherregcharges3up}
\end{align}
First, from \eqref{EQdefChargesMinkowski8891} it follows that the matching condition
\begin{align} 
\begin{aligned} 
\lrpar{{\QQ_3}, \dots, \QQ_7 }\lrpar{\underline{\xd} \vert_{S_{-1,1}}} = \lrpar{\lrpar{\QQ_3}_0, \dots, \lrpar{\QQ_7}_0},
\end{aligned} \label{EQmatching3to7HIGHER}
\end{align}
can be realized by an appropriate choice of
\begin{align} 
\begin{aligned} 
\chibhd, \abd, \ombd, \Du\ombd \text{ on } S_{-1,1}.
\end{aligned} \label{EQchoiceOfquants1s11}
\end{align}
By \eqref{EQgluingLbderivativesHIGHER}, the quantities in \eqref{EQchoiceOfquants1s11} can be glued without obstructions along $\HHb_{[-1,0],1}$, and can thus be freely prescribed and realized on $S_{-1,1}$. This proves the matching \eqref{EQmatching3to7HIGHER}.

Second, the matching of higher-order charges
\begin{align*} 
\begin{aligned} 
\lrpar{{\QQ_8}, \dots, \QQ_{i(m)} }\lrpar{\underline{\xd} \vert_{S_{-1,1}}} = \lrpar{\lrpar{\QQ_8}_0, \dots, \lrpar{\QQ_{i(m)}}_0 }
\end{aligned} 
\end{align*}
follows similarly. Indeed, by the definition of higher-order charges as conserved quantities of null transport equations for higher-order transversal derivatives along $\HH_{-1,[1,2]}$, it follows that they can be realized by appropriate prescription of $\DD^{\Lb,m}$ on $S_{-1,1}$. The ability to glue along $\HHb_{[-1,0],1}$ to prescribed $\DD^{\Lb,m}$ on $S_{-1,1}$ follows from \eqref{EQgluingLbderivativesHIGHER}. This finishes the proof of Proposition \ref{PROPlinearindep}. \end{proof}

\ni We are now in position to conclude the proof of Theorem \ref{THMlinearizedGluingTWO} by step (2) outlined before. We recall that from Proposition \ref{PROPlinearindep} we have a smooth solution
\begin{align*} 
\begin{aligned} 
(\underline{\xd},\dot{\underline{\DD}}^{L,m}, \dot{\underline{\DD}}^{\Lb,m}) \text{ on } \HHb_{[-1,0],1}
\end{aligned} 
\end{align*}
to the linearized higher-order null structure equations such that we have full higher-order matching on $S_{0,1}$ and the following charge matching on $S_{-1,1}$,
\begin{align*} 
\begin{aligned} 
&\lrpar{{\QQ_1}, {\QQ_2^{[\geq2]}}, {\QQ_3}, \dots, \QQ_{i(m)} }\lrpar{(\underline{\xd},\dot{\underline{\DD}}^{L,m},\dot{\underline{\DD}}^{\Lb,m}) \vert_{S_{-1,1}}} \\
=& \lrpar{\lrpar{\QQ_1}_0, \lrpar{\QQ_2^{[\geq2]}}_0, \lrpar{\QQ_3}_0, \dots, \lrpar{\QQ_{i(m)}}_0 },
\end{aligned} 
\end{align*}
where the right-hand side are the conserved charges determined by conservation laws along $\HH_{-1,[1,2]}$ from the higher-order sphere data on $S_{-1,2}$,
\begin{align*} 
\begin{aligned} 
(\xd_{-1,2}, \dot{\DD}^{L,m}_{-1,2}, \dot{\DD}^{\Lb,m}_{-1,2}).
\end{aligned} 
\end{align*}

\ni By applying the characteristic gluing of Section \ref{SEClinearizedProblem} along $\HH_{-1,[1,2]}$, we construct a solution
\begin{align*} 
\begin{aligned} 
(\xd, \dot{\DD}^{L,m}, \dot{\DD}^{\Lb,m}) \text{ on } \HH_{-1,[1,2]}
\end{aligned} 
\end{align*}
to the linearized higher-order null structure such that we have full higher-order matching on $S_{-1,1}$ and we have the following higher-order matching on $S_{-1,2}$,
\begin{align*} 
\begin{aligned} 
\MMf(\xd \vert_{S_{-1,2}}) =& \MMf(\xd_{-1,2}), \,\, 
\dot{\DD}^{L,m}\vert_{S_{-1,2}} =& \dot{\DD}^{L,m}_{-1,2}, 
\end{aligned}
\end{align*}
and
\begin{align*} 
\begin{aligned} 
&\lrpar{{\QQ_1}, {\QQ_2^{[\geq2]}}, {\QQ_3}, \dots, \QQ_{i(m)} }\lrpar{({\xd},\dot{\DD}^{L,m},\dot{\DD}^{\Lb,m}) \vert_{S_{-1,2}}} \\
=& \lrpar{{\QQ_1}, {\QQ_2^{[\geq2]}}, {\QQ_3}, \dots, \QQ_{i(m)} }\lrpar{\xd_{-1,2},\dot{\DD}_{-1,2}^{L,m},\dot{\DD}_{-1,2}^{\Lb,m}}.
\end{aligned}
\end{align*}

\ni The following remark on the linearized charges finishes the proof of Theorem \ref{THMlinearizedGluingTWO}.

\begin{remark}[Relations for $\QQ_0$ and $\QQ_2^{[\leq1]}$] By Lemma \ref{LEMchargeRelationstransv}, that is, the relations
\begin{align} 
\begin{aligned} 
\lrpar{\QQ_0}_H = -\lrpar{\uQQ_0}_H, \,\, \lrpar{\QQ_0}_E = \lrpar{\uQQ_0}_E+\frac{r}{2} \di \underline{\QQ}_2^{[1]}, \,\, \QQ_2^{[0]} = - \underline{\QQ}_2^{[0]}, \,\, \QQ_2^{[1]} = \underline{\QQ}_2^{[1]}.
\end{aligned} \label{EQrelationschargeschangepa7778}
\end{align}
and the conservation laws along $\HHb_{[-1,0],1}$ and $\HH_{-1,[1,2]}$, we have that
\begin{align*} 
\begin{aligned} 
(\lrpar{\QQ_0}_H, \QQ_2^{[0]}, \QQ_2^{[1]})(\xd \vert_{S_{-1,2}}) = (\lrpar{\QQ_0}_H, \QQ_2^{[0]}, \QQ_2^{[1]})(\xd_{0,1}).
\end{aligned} 
\end{align*}
Moreover, from \eqref{EQrelationschargeschangepa7778} and the conservation laws we deduce that
\begin{align*} 
\begin{aligned} 
\lrpar{\QQ_0}_E(\xd \vert_{S_{-1,2}}) =& \lrpar{\QQ_0}_E(\xd \vert_{S_{-1,1}}) \\
=& \lrpar{\uQQ_0}_E(\xd \vert_{S_{-1,1}})+\frac{2}{2} \di \underline{\QQ}_2^{[1]}(\xd \vert_{S_{-1,1}}) \\
=&\lrpar{\uQQ_0}_E(\underline{\xd} \vert_{S_{-1,1}})+\frac{2}{2} \di \underline{\QQ}_2^{[1]}(\underline{\xd} \vert_{S_{-1,1}})\\
=& \lrpar{\uQQ_0}_E(\underline{\xd} \vert_{S_{0,1}})+\frac{2}{2} \di \underline{\QQ}_2^{[1]}(\underline{\xd} \vert_{S_{0,1}}) \\
=& \lrpar{\lrpar{\uQQ_0}_E(\underline{\xd} \vert_{S_{0,1}})+\frac{1}{2} \di \underline{\QQ}_2^{[1]}(\underline{\xd} \vert_{S_{0,1}})}  +\frac{1}{2} \di \underline{\QQ}_2^{[1]}(\underline{\xd} \vert_{S_{0,1}}) \\
=& \lrpar{\QQ_0}_E(\xd_{0,1}) + \frac{1}{2} \di {\QQ}_2^{[1]}({\xd}_{0,1}),
\end{aligned} 
\end{align*}
where we used that by construction, $\underline{\xd} \vert_{S_{0,1}} = \xd_{0,1}$. Using \eqref{EQFourierMultiplier7778} we can rewrite the above as, for $m=-1,0,1$,
\begin{align*} 
\begin{aligned} 
\lrpar{\QQ_0}^{(1m)}_E(\xd \vert_{S_{-1,2}}) = \lrpar{\QQ_0}^{(1m)}_E(\xd_{0,1}) - \frac{\sqrt{2}}{2} {\QQ}_2^{(1m)}({\xd}_{0,1}).
\end{aligned} 
\end{align*}
In terms of $\dot{\mathbf{P}}$ and $\dot{\mathbf{G}}$ (see Remark \eqref{EQREMchargeslinatMinkowski}), this can be written as
\begin{align*} 
\begin{aligned} 
\dot{\mathbf{G}}^m(\xd \vert_{S_{-1,2}}) = \dot{\mathbf{G}}^m(\xd \vert_{S_{0,1}}) - 2 \dot{\mathbf{P}}^m(\xd \vert_{S_{0,1}}).
\end{aligned} 
\end{align*}
\end{remark}

\subsection{Proof of Proposition \ref{PROPchargluingW}} \label{SECproofWgluing} \ni In this section we prove Proposition \ref{PROPchargluingW}, that is, the bifurcate characteristic gluing with localized sphere data perturbation $W$. 

The proof is a slight generalization of the proof of Theorem \ref{THMtransversalHIGHERv2}. As before, by the implicit function theorem, the proof can be reduced to solving the linearized problem. In this case, the linearized characteristic gluing problem admits the additional freedom of adding a linearized localized sphere data perturbation $\dot{W}$ to the linearized sphere data on $S_{-1,2}$. 

Given the explicit formulas in Section \ref{SECingoingNULLstructureEQS}, it is straight-forward to construct a smooth sphere data perturbation $\dot{W}$, compactly supported in the angular region $K$, with prescribed values for $\QQ_0(\dot{W})$ and $\QQ_2^{[\leq1]}(\dot{W})$, and derive appropriate bounds; see, for example, the explicit choice in \cite{ACR3}. This allows to match $\QQ_0$ and $\QQ_2^{[\leq1]}$ on $S_{-1,2}$.

Adding $\dot{W}$ to the linearized sphere data on $S_{-1,2}$ changes the gauge-dependent charges on $S_{-1,2}$ (by a well-controlled amount). Using the results of Section \ref{SECstatementFULLhigher} for the linearized characteristic gluing along two transversely-intersecting null hypersurfaces, we can match the solution to the null constraint equations to these new values of gauge-dependent charges on $S_{-1,2}$. This solves the linearized characteristic gluing problem for Proposition \ref{PROPchargluingW}.

We remark that for controlling the support of $W$ in the application of the implicit function theorem, $W$ is bounded in an $L^2$-based Sobolev space with weights (in particular, this space is Hilbert) which ensure its smooth vanishing towards the boundary of the angular region $K$, see, for example, \cite{Corvino,CorvinoSchoen}.

\appendix
\section{Perturbations of sphere data} \label{SECproofTEClemmasmoothness} In this section we define the perturbations of sphere data of Section \ref{SECdefEquivalenceFirstOrderSphereData}, and prove Propositions \ref{PropositionSmoothnessF} and \ref{PropositionSmoothnessF2}. We consider transversal perturbations in Section \ref{SECapp1231} and angular perturbations in Section \ref{SECapp1232}. The proof of Propositions \ref{PropositionSmoothnessF} and \ref{PropositionSmoothnessF2} is given in Section \ref{APPproofSmoothnessFINAL}.

\subsection{Transversal perturbations} \label{SECapp1231} In this section we define \emph{transversal perturbations} of sphere data.

\subsubsection{Null geometry} First we recall the null geometry setup. Let $\tilde S$ be a spacelike $2$-sphere in a spacetime $(\MM,\g)$. Let $(\tilde{u},\tilde{v},\tilde{\th}^1,\tilde{\th}^2)$ be a local double null coordinate system around $\tilde S$, that is
\begin{align} 
\begin{aligned}
\g =& - 4 \tilde{\Om}^2 d\tilde{u} d\tilde{v} + \tilde{\gd}_{CD} (d\tilde{\th}^C - \tilde{b}^C d\tilde{v})(d\tilde{\th}^D - \tilde{b}^D d\tilde{v}),
\end{aligned} \label{EQdoublenullform12342} 
\end{align}
such that $\tilde S = \tilde{S}_{0,2}:= \{\tilde u=0, \tilde v=2\}.$ We recall the following standard notation, see, for example, Section 1 of \cite{ChrFormationBlackHoles}. 
\begin{itemize}
\item The \emph{geodesic null vectorfields} are defined by
\begin{align} 
\begin{aligned} 
\tilde{\Lb}' := -2 \D\tilde{v}, \,\, \tilde{L}' := -2 \D\tilde{u},
\end{aligned} \label{EQdefinitionGeodesicFields}
\end{align}
where $\D$ denotes the covariant derivative on $(\MM,\g)$.

\item The \emph{normalised null vectorfields} are defined by
\begin{align*} 
\begin{aligned} 
\widehat{\tilde{L}} := \Om  \tilde{L}', \,\, \widehat{\tilde{\Lb}} := \Om  \tilde{\Lb}'.
\end{aligned}
\end{align*}
\item The \emph{equivariant null vectorfields} are defined by
\begin{align} 
\begin{aligned} 
\tilde L := \tilde{\Om}^2 L' , \,\, \tilde\Lb := \tilde{\Om}^2 \Lb'.
\end{aligned} \label{EQLLbdef}
\end{align}
\item The Ricci coefficients are defined with respect to the above vectorfields as follows,
\begin{align} 
\begin{aligned} 
\tilde{\chi}_{AB} :=& \g(\D_{\tilde A} \widehat{\tilde L},\pr_{\tilde{B}}), & \tilde{\chib}_{AB} :=& \g(\D_{\tilde{A}} \widehat{\tilde \Lb},\pr_{\tilde{B}}), & \tilde{\zeta}_A :=& \half \g(\D_{\tilde{A}} \widehat{\tilde L},\widehat{\tilde \Lb}), \\
 \tilde\eta :=& \tilde \zeta + \tilde\di \log \tilde\Om, & \tilde\om :=& \tilde L \log \tilde \Om, & \tilde{\omb} :=& \tilde{\Lb} \log \tilde\Om,
\end{aligned} \label{EQdefRicciCOEFF555}
\end{align}
where $\tilde{\di}$ denotes the exterior derivative on spheres $\tilde{S}_{u,v}$.
\end{itemize}

\ni We have the following practical lemma, see, for example, \cite{ChrFormationBlackHoles}.
\begin{lemma}[Properties of double null coordinates] \label{LEMcalculusDoubleNull} The following holds.
\begin{enumerate}
\item The inverse $\g^{-1}$ of \eqref{EQdoublenullform12342} is given by
\begin{align} 
\begin{aligned} 
\g^{-1} =& -\frac{1}{2\tilde{\Om}^2} \lrpar{\pr_{\tilde u} \otimes \pr_{\tilde v} + \pr_{\tilde v} \otimes \pr_{\tilde u}}-\frac{\tilde{b}^C}{2\tilde{\Om}^2 } \lrpar{\pr_{\tilde u} \otimes \pr_{\tilde C} + \pr_{\tilde C} \otimes \pr_{\tilde u}} + \tilde{\gd}^{AB} \pr_{\tilde A} \otimes \pr_{\tilde B}.
\end{aligned} \label{EQinverseExpressionAPP12234}
\end{align}
Specifically,
\begin{align} 
\begin{aligned} 
\g^{\tilde v \tilde v}=\g^{\tilde v \tilde A} =0.
\end{aligned} \label{EQspecificrelationsAPP}
\end{align}
\item It holds that $\g\lrpar{L',\Lb'}=-2\tilde{\Om}^{-2}$, and
\begin{align} 
\begin{aligned} 
\tilde L = \pr_{\tilde v} + \tilde{b}^A \pr_{\tth^A}, \,\, \tilde\Lb = \pr_{\tilde u}.
\end{aligned} \label{EQvectorRelations3355}
\end{align}
\item It holds that for $A=1,2$,
\begin{align} 
\begin{aligned} 
\pr_{\tilde u} \tilde{b}^A = 4\tilde{\Om}^2 \tilde{\zeta}^A.
\end{aligned} \label{EQbDerivativedoublenull}
\end{align}
\item It holds that
\begin{align} 
\begin{aligned} 
\Ga^{\tilde v}_{\tilde u \tilde u}=& \Ga^{\tilde v}_{\tilde u \tilde v} = \Ga^{\tilde v}_{\tilde u \tilde A} =0, &
\Ga^{\tilde v}_{\tilde A \tilde v} =& \pr_{\tilde A} \log \tilde \Om - \tilde{\zeta}_A - \frac{1}{2\tilde\Om} \tilde{\chib}_{AB} \tilde{b}^B, &
\Ga^{\tilde v}_{\tilde A \tilde B} =& \frac{1}{2\tilde\Om} \tilde{\chib}_{AB},
\end{aligned} \label{EQchristoffelSymbols12}
\end{align}
where the Christoffel symbols are defined by $\Ga^\ga_{\mu\nu}:= \half \g^{\ga \a} \lrpar{\pr_\mu \g_{\a\nu}+\pr_\nu \g_{\a\mu}-\pr_\a \g_{\mu\nu}}$.
\end{enumerate}
\end{lemma}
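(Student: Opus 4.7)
All four items are proved by direct computation in the chart $(\tilde u,\tilde v,\tth^1,\tth^2)$, exploiting the upper-triangular structure of the coordinate matrix of $\g$. From \eqref{EQdoublenullform12342} I will first expand the symmetric product to read off
\begin{align*}
\g_{\tilde u \tilde v} = -2\tilde\Om^2,\quad \g_{\tilde v \tilde v} = \tilde\gd_{CD}\tilde b^C \tilde b^D,\quad \g_{\tilde v \tilde A} = -\tilde\gd_{AC}\tilde b^C,\quad \g_{\tilde A \tilde B} = \tilde\gd_{AB},
\end{align*}
with $\g_{\tilde u \tilde u} = \g_{\tilde u \tilde A}=0$. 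Then (1) is verified by checking that the candidate tensor on the right-hand side of \eqref{EQinverseExpressionAPP12234} contracts with this matrix to $\delta^\mu_\nu$; here the crucial cancellation $\g^{\tilde u\tilde v}\g_{\tilde v \tilde B} + \g^{\tilde u \tilde A}\g_{\tilde A \tilde B} = 0$ rests on the compensating appearance of $\tilde b^C$ in $\g_{\tilde v \tilde B}$ and in $\g^{\tilde u \tilde A}$. The vanishings \eqref{EQspecificrelationsAPP} are then immediate.

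For (2), I will use that by \eqref{EQdefinitionGeodesicFields} one has $(\tilde L')^\mu = -2\,\g^{\mu \tilde u}$ and $(\tilde \Lb')^\mu = -2\,\g^{\mu \tilde v}$. Inserting \eqref{EQinverseExpressionAPP12234} gives $\tilde L' = \tilde\Om^{-2}(\pr_{\tilde v} + \tilde b^A \pr_{\tilde A})$ and $\tilde \Lb' = \tilde\Om^{-2}\pr_{\tilde u}$; multiplying by $\tilde\Om^2$ according to \eqref{EQLLbdef} yields \eqref{EQvectorRelations3355}, and the normalization $\g(\tilde L',\tilde \Lb') = 4\,\g^{\tilde u \tilde v} = -2\tilde\Om^{-2}$ follows at once.

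For (3), I will exploit that $\tilde \Lb'$ is geodesic, $\D_{\tilde\Lb'}\tilde\Lb'=0$, together with the coordinate identity $[\tilde L,\tilde\Lb] = [\pr_{\tilde v}+\tilde b^A \pr_{\tilde A}, \pr_{\tilde u}] = -(\pr_{\tilde u}\tilde b^A)\pr_{\tilde A}$ coming from \eqref{EQvectorRelations3355}. Contracting the torsion-free identity $[\tilde L,\tilde\Lb] = \D_{\tilde L}\tilde\Lb - \D_{\tilde \Lb}\tilde L$ against $\pr_{\tilde B}$ and rewriting the hat-normalised vectors $\widehat{\tilde L},\widehat{\tilde\Lb}$ in terms of $\tilde L,\tilde\Lb$ via \eqref{EQLLbdef}, I will match the $\pr_{\tilde A}$-component of the commutator with the Ricci coefficient $\tilde\zeta_A$ as defined in \eqref{EQdefRicciCOEFF555}, yielding $\pr_{\tilde u}\tilde b^A = 4\tilde\Om^2\tilde\zeta^A$.

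For (4), the key observation is that by \eqref{EQspecificrelationsAPP} the Christoffel formula collapses to
\begin{align*}
\Ga^{\tilde v}_{\mu\nu} = -\frac{1}{4\tilde\Om^2}\lrpar{\pr_\mu \g_{\tilde u \nu} + \pr_\nu \g_{\tilde u \mu} - \pr_{\tilde u}\g_{\mu\nu}},
\end{align*}
so only the column $\g_{\tilde u \cdot}$ of the metric and its $\pr_{\tilde u}$-derivative enter. The vanishing cases $\Ga^{\tilde v}_{\tilde u \tilde u}, \Ga^{\tilde v}_{\tilde u \tilde v}, \Ga^{\tilde v}_{\tilde u \tilde A}$ then follow directly since $\g_{\tilde u \tilde u}=\g_{\tilde u \tilde A}=0$ and $\g_{\tilde u \tilde v}$ is $\tilde u$-independent only insofar as needed (the combination cancels). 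The non-trivial components $\Ga^{\tilde v}_{\tilde A \tilde B}$ and $\Ga^{\tilde v}_{\tilde A \tilde v}$ are obtained by plugging the first variation equation $\pr_{\tilde u}\tilde\gd_{AB} = 2\tilde\Om\tilde\chib_{AB}$ (which follows from \eqref{EQfirstvariation1} evaluated on $\Lb$) into the reduced formula; in the latter case one must additionally use (3) to convert $\pr_{\tilde u}\tilde b^C$ into $4\tilde\Om^2\tilde\zeta^C$, producing the term $-\tilde\zeta_A$ alongside $\pr_{\tilde A}\log\tilde\Om - (2\tilde\Om)^{-1}\tilde\chib_{AC}\tilde b^C$. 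I expect the principal bookkeeping obstacle to sit in this $\Ga^{\tilde v}_{\tilde A \tilde v}$ calculation, since it is the one case where all three mechanisms---the first variation equation, step (3), and the explicit $\tilde b$-dependence of $\g_{\tilde v \tilde A}$---must be reconciled simultaneously.
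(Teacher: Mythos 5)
Your proposal is correct, and the paper itself supplies no proof of this lemma (it simply defers to Chapter 1 of Christodoulou's monograph), so there is no paper argument to compare against; the direct coordinate computation you outline is precisely the standard route, and every step you sketch checks out. Two small remarks: first, in part~(3) the appeal to the geodesic equation $\D_{\tilde\Lb'}\tilde\Lb'=0$ is not actually needed once you exploit $\g(\tilde L,\pr_{\tilde B})=0$, the vanishing commutator $[\tilde\Lb,\pr_{\tilde B}]=0$, and metric compatibility to rewrite $\g(\D_{\tilde\Lb}\tilde L,\pr_{\tilde B})=-\g(\tilde L,\D_{\tilde B}\tilde\Lb)$ and $\g(\D_{\tilde L}\tilde\Lb,\pr_{\tilde B})=-\g(\tilde\Lb,\D_{\tilde B}\tilde L)$, at which point the definition of $\tilde\zeta$ in \eqref{EQdefRicciCOEFF555} produces $\g([\tilde L,\tilde\Lb],\pr_{\tilde B})=-4\tilde\Om^2\tilde\zeta_B$ directly; second, in part~(4) the phrasing about $\g_{\tilde u\tilde v}$ being ``$\tilde u$-independent only insofar as needed'' is slightly awkward — $\pr_{\tilde u}\g_{\tilde u\tilde v}$ is genuinely nonzero but cancels identically in $\pr_{\tilde u}\g_{\tilde u\tilde v}+\pr_{\tilde v}\g_{\tilde u\tilde u}-\pr_{\tilde u}\g_{\tilde u\tilde v}$, which is what you intend. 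Neither remark affects correctness.
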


\subsubsection{Definition of $u$ on $\protect\tilde{\protect\underline{\protect{\mathcal{H}}}}_2$ and analysis of foliation geometry} In the following we change $\tilde u$ to $u$ on $\tilde{\HHb}_2:=\{\tilde v=2 \}$ and analyse how the foliation geometry of the resulting local double null coordinates $(u, v, \th^1,\th^2)$ (with $v=\tilde v$ on $\MM$) relates to the foliation geometry of the local double null coordinates $(\tilde u, \tilde v, \tth^1,\tth^2)$.

For a given scalar function $f=f(u,\th^1,\th^2)$, define $(u,\th^1,\th^2)$ on $\tilde{\HHb}_2$ by
\begin{align} 
\begin{aligned} 
\tilde{u}=u+f(u,\th^1,\th^2), \,\, \tth^1=\th^1, \,\, \tth^2=\th^2.
\end{aligned} \label{EQdefU}
\end{align}
For $f$ sufficiently small, $(u,\th^1, \th^2)$ are a coordinate system on $\tilde{\HHb}_2$ and we have that
\begin{align} 
\begin{aligned} 
\pr_u = \lrpar{1+\pr_u f} \pr_{\tilde u}, \,\, \pr_{\th^A} = \pr_{\tth^A} + (\pr_{\th^A} f) \pr_{\tilde u}, \,\, \pr_{\th^A} f =\lrpar{1+\pr_u f} \pr_{\tth^A} f .
\end{aligned} \label{EQpartialsRelation}
\end{align}

\ni In accordance with \eqref{EQdefinitionGeodesicFields} and \eqref{EQvectorRelations3355}, define on $\tilde{\HHb}_2$
\begin{align} 
\begin{aligned} 
\Lb := \pr_u, \,\, \Lb' := -2 \D \tilde{v} = \tilde{\Lb}',
\end{aligned} \label{EQgaugeChoice444}
\end{align}
and define in accordance with \eqref{EQLLbdef} the null lapse $\Om$ on $\tilde{\HHb}_2$ through the relation
\begin{align} 
\begin{aligned} 
\Lb= {\Om}^2 \Lb'.
\end{aligned} \label{EQdefNEWOMEGA}
\end{align}

\ni We can relate the foliation geometry of $(u,\th^1,\th^2)$ to the geometry of $(\tilde u, \tth^1, \tth^2)$ as follows. 
\begin{enumerate}
\item We explicitly calculate $\Om$ on $\tilde{\HHb}_2$ as follows. Using \eqref{EQLLbdef}, \eqref{EQdefU}, \eqref{EQpartialsRelation} and \eqref{EQgaugeChoice444}, it holds that on $\{\tilde v =2\}$,
\begin{align} 
\begin{aligned} 
 \Lb = \lrpar{1+\pr_u f}\tilde \Lb =  \lrpar{1+\pr_u f} \tilde{\Om}^{2}\tilde{\Lb}' =  \lrpar{1+\pr_u f}\tilde{\Om}^{2} {\Lb}',
\end{aligned} \label{EQrelationstildeLL}
\end{align}
from which we conclude by \eqref{EQdefNEWOMEGA} that on $\tilde{\HHb}_2$,
\begin{align} 
\begin{aligned} 
\Om^2 = \tilde{\Om}^2 \lrpar{1+\pr_u f}.
\end{aligned} \label{EQexpressionOMEGASQUARED123}
\end{align}

\item By \eqref{EQpartialsRelation} it follows that the induced metric $\gd$ on level sets of $u$ on $\tilde{\HHb}_2$ is given for $A,B=1,2$ by
\begin{align} 
\begin{aligned} 
\gd_{AB} := \g\lrpar{\pr_A, \pr_B}= \g\lrpar{\pr_{\tilde A}, \pr_{\tilde B}} = \tilde{\gd}_{AB}.
\end{aligned} \label{EQinducedmetricFormula15}
\end{align}
This implies further that
\begin{align} 
\begin{aligned} 
\gd^{AB} = \tilde{\gd}^{AB}.
\end{aligned} \label{EQinducedmetricFormula152}
\end{align}
We remark that in explicit notation, \eqref{EQexpressionOMEGASQUARED123} and \eqref{EQinducedmetricFormula15} are
\begin{align*} 
\begin{aligned} 
\Om^2(u,\th^1,\th^2) =& \lrpar{1+(\pr_u f)(u,\th^1,\th^2)} \tilde{\Om}^2(u+f(u,\th^1,\th^2),\th^1,\th^2),\\
\gd_{AB}(u,\th^1,\th^2)=& \tilde{\gd}_{AB}\lrpar{u+f(u,\th^1,\th^2),\th^1,\th^2}.
\end{aligned} 
\end{align*}

\item The vectorfield $\Lb$ and the scalar function $\Om$ uniquely determine the null vectorfield $L$ on $\{\tilde v=2\}$ defined by
\begin{align} 
\begin{aligned} 
\g\lrpar{L,\Lb} = -2 \Om^2, \,\, \g\lrpar{L,\pr_{\th^1}}=\g\lrpar{L,\pr_{\th^2}}=0.
\end{aligned} \label{EQrefLLBomfacts}
\end{align}
An explicit calculation shows that $L$ is given by
\begin{align} 
\begin{aligned} 
L =& \lrpar{ \tilde{\Om}^2\vert \Nd f \vert_\gd^2 } \tilde{\Lb} + \tilde{L} + \lrpar{2\tilde{\Om}^2 \tilde{\gd}^{AC}\pr_C f } \pr_{\tth^A}.
\end{aligned} \label{EQLvectorfieldFormula15}
\end{align}
where $\vert \Nd f \vert_\gd^2 := \tilde{\gd}^{AB}\pr_Af \pr_B f$. We define further $\widehat{L} := \Om^{-1} L$.

\end{enumerate}

\subsubsection{Analysis of Ricci coefficients on $\protect\tilde{\protect\underline{\protect\mathcal{H}}}_2$} The Ricci coefficients with respect to $\lrpar{\widehat L, \widehat \Lb}$ are defined as follows, 
\begin{align*} 
\begin{aligned} 
\chi_{AB} :=& \g(\D_{A} \widehat{L},\pr_B), & \chib_{AB} :=& \g(\D_A \widehat{\Lb},\pr_B), & \zeta_A :=& \half \g(\D_A \widehat{L},\widehat{\Lb}), \\
 \eta :=& \zeta + \di \log \Om, & \om :=& L \log \Om, & \omb :=& \Lb \log \Om.
\end{aligned} 
\end{align*}
We analyze the Ricci coefficients in the order $\lrpar{\omb, \chib, \om, \zeta, \eta, \chi, \Du\omb, D\om}.$\\

\ni \textbf{Analysis of $\omb$.} On the one hand we have by \eqref{EQgaugeChoice444} that
\begin{align*} 
\begin{aligned} 
\omb := \Lb \log \Om = \Om^{-1} \pr_u \Om = \frac{1}{2\Om^2} \pr_u \lrpar{\Om^2}.
\end{aligned} 
\end{align*}
On the other hand we have by \eqref{EQpartialsRelation} and \eqref{EQexpressionOMEGASQUARED123} that
\begin{align*} 
\begin{aligned} 
\pr_u \lrpar{\Om^2} =& \pr_u \lrpar{\tilde{\Om}^2\lrpar{1+\pr_u f}} = 2\tilde{\Om} \pr_{\tilde u} \tilde{\Om} \lrpar{1+\pr_u f}^2 + \tilde{\Om}^2 \pr_u^2 f.
\end{aligned} 
\end{align*}
Combining the above two and using \eqref{EQdefRicciCOEFF555}, it follows that
\begin{align} 
\begin{aligned} 
\omb =& \frac{1}{2\Om^2} \lrpar{2\tilde{\Om} \pr_{\tilde u} \tilde{\Om} \lrpar{1+\pr_u f}^2 + \tilde{\Om}^2 \pr_u^2 f} = \tilde{\omb} \lrpar{1+\pr_u f} + \half \frac{\tilde{\Om}^2}{\Om^2} \pr_u^2 f.
\end{aligned} \label{EQombExpresssion4}
\end{align}

\ni \textbf{Analysis of $\chib$.} By explicit computation, we have that
\begin{align} 
\begin{aligned} 
\chib_{AB} := \g(\D_A \widehat{\Lb},\pr_B) =\Om^{-1} \lrpar{1+\pr_u f} \tilde{\Om} \tilde{\chib}_{AB},
\end{aligned} \label{EQchib5556}
\end{align}
where we used that
\begin{align*} 
\begin{aligned} 
\g\lrpar{\D_{\pr_{\tilde u}} \pr_{\tilde u}, \pr_{\tilde B}} = \tilde{\Om}^4 \g\lrpar{\D_{\tilde{\Lb}'} \tilde{\Lb}', \pr_{\tilde B}} = 0.
\end{aligned} 
\end{align*}
We can separate \eqref{EQchib5556} into
\begin{align} 
\begin{aligned} 
\Om \trchib = \lrpar{1+\pr_u f} \tilde{\Om}{\tr \tilde{\chib}}, \,\, \chibh_{AB} = \frac{\tilde{\Om}}{\Om} \lrpar{1+\pr_u f} \tilde{\chibh}_{AB},
\end{aligned} \label{EQformulatrchichih555}
\end{align}
that is, in explicit notation,
\begin{align*} 
\begin{aligned} 
\lrpar{\Om \trchib}(u,\th^1,\th^2) =& \lrpar{1+\pr_u f(u,\th^1,\th^2)} \tilde{\Om} \tr \tilde{\chib}(u+f(u,\th^1,\th^2),\th^1,\th^2),\\
\chibh_{AB}(u,\th^1,\th^2) =&\frac{\tilde{\Om}}{\Om}(u+f(u,\th^1,\th^2),\th^1,\th^2) \lrpar{1+\pr_u f(u,\th^1,\th^2)} \tilde{\chibh}_{AB}(u+f(u,\th^1,\th^2),\th^1,\th^2).
\end{aligned} 
\end{align*}

\ni \textbf{Analysis of $\om$.} We calculate 
\begin{align*} 
\begin{aligned} 
\om := L \log \Om
\end{aligned} 
\end{align*}
as follows. First, by construction of the double coordinates $(u, \tilde v, \th^1, \th^2)$, see \eqref{EQrefLLBomfacts} and \eqref{EQgaugeChoice444}, $\Om$ is defined on $\MM$ through
\begin{align} 
\begin{aligned} 
L'(\tilde v) = \Om^{-2}. 
\end{aligned} \label{EQdoubleNULLrelation334}
\end{align}
In particular, this implies with the geodesic equation satisfied by $L'$ that 
\begin{align} 
\begin{aligned} 
L'\lrpar{\Om^{-2}} = L'\lrpar{L'(\tilde v) } =\D_{L'}\lrpar{\D_{L'} \tilde v}=  \D_{\underbrace{\D_{L'}L'}_{=0}} \tilde v + \D_{L'} \D_{L'} \tilde v= \Om^{-4} \D_L \D_L \tilde v.
\end{aligned} \label{EQomegaLprime2}
\end{align}
Second, we have the algebraic relation
\begin{align} 
\begin{aligned} 
L'\lrpar{\Om^{-2}} = -\frac{2}{\Om^3} L' \lrpar{\Om}.
\end{aligned} \label{EQomegaLprime1}
\end{align}
By \eqref{EQomegaLprime2} and \eqref{EQomegaLprime1}, we get that
\begin{align} 
\begin{aligned} 
\om = L \lrpar{\log\Om} = \Om L' \lrpar{\Om} = -\frac{\Om^4}{2} L' \lrpar{\Om^{-2}} = -\frac{1}{2} \D_{L}\D_L \tilde v.
\end{aligned} \label{EQexpressionome4}
\end{align}
Plugging \eqref{EQLvectorfieldFormula15} into \eqref{EQexpressionome4} and using \eqref{EQLLbdef}, we get that
\begin{align} 
\begin{aligned} 
\om =& -\frac{1}{2} \lrpar{ \tilde{\Om}^2 \vert \Nd f \vert^2_\gd}^2 \D_{\tilde u }\D_{\tilde u} \tilde v \underbrace{-\frac{1}{2} \D_{\tilde L} \D_{\tilde L} \tilde{v}}_{= \tilde\om} -2\tilde{\Om}^4 \tilde{\gd}^{AB} \tilde{\gd}^{CD}(\pr_B f)(\pr_C f) \D_{\tilde A} \D_{\tilde D} \tilde v \\
& - \lrpar{ \tilde{\Om}^2 \vert \Nd f \vert^2_\gd} \D_{\tilde u} \D_{\pr_{\tilde v} +\tilde{b}^C \pr_{\tilde C} } \tilde v - \lrpar{ \tilde{\Om}^2 \vert \Nd f \vert^2_\gd} \lrpar{2\tilde{\Om}^2 \tilde{\gd}^{AB}\pr_B f } \D_{\tilde u} \D_{\tilde A} \tilde v \\
&- \lrpar{2\tilde{\Om}^2 \tilde{\gd}^{AB}\pr_B f }  \D_{\pr_{\tilde v} +\tilde{b}^C \pr_{\tilde C}}\D_{\tilde A} \tilde v.
\end{aligned} \label{EQexpressionoMEGA5}
\end{align}
Here the Hessian $\D\D\tilde v$ is given in coordinates $\mu, \nu \in \{\tilde u, \tilde v, \tth^1, \tth^2\} $ by
\begin{align} 
\begin{aligned} 
\D_\mu \D_\nu \tilde v =& \pr_\mu \pr_\nu \tilde v - \Ga^{\la}_{\mu\nu} \pr_{\la} \tilde v = - \Ga^{\tilde v}_{\mu\nu}.
\end{aligned} \label{EQHessian}
\end{align}
From \eqref{EQchristoffelSymbols12} and \eqref{EQHessian}, we conclude that $\D_{\tilde u}\D_{\tilde u}\tilde v= \D_{\tilde u} \D_{\tilde v}\tilde v = \D_{\tilde u} \D_{\tilde A}\tilde v =0$ and
\begin{align} 
\begin{aligned} 
\D_{\tilde A} \D_{\tilde v} \tilde v= - \pr_{\tilde A} \log \tilde \Om + \tilde{\zeta}_A + \frac{1}{2\tilde\Om} \tilde{\chib}_{AB} \tilde{b}^B, \,\,
\D_{\tilde A} \D_{\tilde B} \tilde v= - \frac{1}{2\tilde\Om} \tilde{\chib}_{AB}.
\end{aligned} \label{EQHessianAnalysis}
\end{align}
Plugging \eqref{EQHessianAnalysis} into \eqref{EQexpressionoMEGA5}, we get that
\begin{align} 
\begin{aligned} 
\om =&  \tilde\om +\tilde{\Om}^3   \tilde{\chib}^{AB}(\pr_A f)(\pr_B f) +\lrpar{2\tilde{\Om}^2 \tilde{\gd}^{AB}\pr_B f } \lrpar{ \pr_{\tilde A} \log \tilde \Om - \tilde{\zeta}_A }.
\end{aligned} \label{EQformulaLittleOM555}
\end{align}

\ni \textbf{Analysis of $\zeta$ and $\eta$.} Using \eqref{EQexpressionOMEGASQUARED123}, we have by explicit computation that
\begin{align} 
\begin{aligned} 
\zeta_A := \half \g(\D_A \widehat{L}, \widehat{\Lb}) = \frac{\tilde{\Om}^2}{\Om^2} \pr_A\pr_u f - \frac{1}{2\Om^2} \lrpar{1+\pr_u f} \g\lrpar{L, \D_A \pr_{\tilde u}} - \pr_A \log \Om. 
\end{aligned} \label{EQzetafirstexpr55}
\end{align}

\ni By \eqref{EQpartialsRelation}, \eqref{EQrelationstildeLL}, \eqref{EQLvectorfieldFormula15} and the geodesic equation for $\tilde{\Lb}'$, we have that
\begin{align} 
\begin{aligned} 
\g\lrpar{L, \D_A \pr_{\tilde u}} =& -2 \tilde{\Om}^2 \lrpar{\tilde{\zeta}_A + \pr_{\tilde A} \log \tilde{\Om} + 2 (\pr_A f) \tilde\omb - \tilde{\Om}\tilde{\gd}^{BC}(\pr_Bf) \tilde{\chib}_{AC}}.
\end{aligned} \label{EQcalculationzeta555}
\end{align}
Plugging \eqref{EQcalculationzeta555} into \eqref{EQzetafirstexpr55}, we get that
\begin{align} 
\begin{aligned} 
\zeta_A =& - \pr_A \log \Om + \frac{\tilde{\Om}^2}{\Om^2} \pr_A\pr_u f \\
&+ \frac{\tilde{\Om}^2}{\Om^2} \lrpar{1+\pr_u f} \lrpar{\tilde{\zeta}_A + \pr_{\tilde A} \log \tilde{\Om} + 2 (\pr_A f) \tilde\omb - \tilde{\Om}\tilde{\gd}^{BC}(\pr_Bf) \tilde{\chib}_{AC}}  
\end{aligned} \label{EQzetavariationf23332}
\end{align}

\ni We conclude from the above that
\begin{align} 
\begin{aligned} 
\eta_A :=& \zeta_A +\pr_A \log \Om \\
=& \frac{\tilde{\Om}^2}{\Om^2} \pr_A\pr_u f + \frac{\tilde{\Om}^2}{\Om^2} \lrpar{1+\pr_u f} \lrpar{\tilde{\zeta}_A + \pr_{\tilde A} \log \tilde{\Om} + 2 (\pr_A f) \tilde\omb - \tilde{\Om}\tilde{\gd}^{BC}(\pr_Bf) \tilde{\chibh}_{AC}} \\
&- \frac{\tilde{\Om}^3}{2\Om^2} \lrpar{1+\pr_u f} (\pr_Af) \tr\tilde{\chib}.
\end{aligned} \label{EQetaFinal555}
\end{align}

\ni \textbf{Analysis of $\chi$.} By \eqref{EQpartialsRelation}, \eqref{EQLvectorfieldFormula15} and \eqref{EQchib5556} we have by explicit computation that
\begin{align} 
\begin{aligned} 
\chi_{AB} := \g(\D_{A} \widehat{L},\pr_B) =& \frac{\tilde{\Om}^3}{\Om}  \vert \Nd f \vert_\gd^2 \tilde{\chib}_{AB} +\frac{\tilde \Om}{\Om}\tilde{\chi}_{AB} + \Om^{-1} (\pr_A f) \g\lrpar{\D_{\tilde \Lb} \tilde L, \pr_{\tilde B}} \\
&+ \Om^{-1} (\pr_A f)(\pr_Bf) \g\lrpar{\D_{\tilde \Lb}\tilde L, \tilde \Lb} +\Om^{-1}(\pr_Bf)\g\lrpar{\D_{\tilde A} \tilde L, \tilde \Lb}\\
&+ \Om^{-1}\g\lrpar{\D_A \lrpar{\lrpar{2\tilde{\Om}^2 \tilde{\gd}^{CD}\pr_C f } \pr_{\tilde D}},\pr_B}.
\end{aligned} \label{EQchiExpr5555}
\end{align}
By \eqref{EQvectorRelations3355}, \eqref{EQpartialsRelation} and \eqref{EQchib5556} we have
\begin{align*} 
\begin{aligned} 
\g\lrpar{\D_{\tilde \Lb} \tilde L, \pr_{\tilde B}} =&2 \tilde{\Om}^2 \tilde{\eta}_B, \,\, \g\lrpar{\D_{\tilde A} \tilde L, \tilde \Lb} = 2\tilde{\Om}^2 \lrpar{\tilde{\eta}_A - 2 \pr_{\tilde A} \log \tilde{\Om}},
\end{aligned} 
\end{align*}

\ni as well as
\begin{align*} 
\begin{aligned} 
\g\lrpar{\D_{\tilde \Lb}\tilde L, \tilde \Lb}=& \tilde{\Om}^2 \g \lrpar{\D_{\tilde \Lb} \tilde L, \tilde{\Lb}'}
= -\tilde{\Om}^2 \g\lrpar{\tilde L, \D_{\tilde \Lb} \tilde{\Lb}'} 
= -\tilde{\Om}^4 \g(\tilde L, \underbrace{\D_{\tilde{\Lb}'} \tilde{\Lb}'}_{=0})
=0,
\end{aligned} 
\end{align*}
and
\begin{align*} 
\begin{aligned} 
\g\lrpar{\D_A \lrpar{2\tilde{\Om} \tilde{\gd}^{CD} (\pr_C f) \pr_{\tilde D} }, \pr_B} 
= \pr_A \lrpar{2\tilde{\Om}^2} \pr_B f + \lrpar{2\tilde{\Om}^2} \lrpar{(\pr_A\pr_B f) + (\pr_Cf) \g\lrpar{\D_A \lrpar{\tilde{\gd}^{CD}\pr_{\tilde D} }, \pr_B}},
\end{aligned} 
\end{align*}
where on the right-hand side we can rewrite with \eqref{EQpartialsRelation} 
\begin{align*} 
\begin{aligned} 
\g\lrpar{\D_A \lrpar{\tilde{\gd}^{CD}\pr_{\tilde D} }, \pr_B} = -\tilde{\Ga}^C_{AB} - (\pr_Af) \tilde{\Om} \tilde{\chib}_{BD} \tilde{\gd}^{CD}- (\pr_Bf) \tilde{\gd}^{CD} \tilde{\Om} \tilde{\chib}_{AD},
\end{aligned} 
\end{align*}
yielding that
\begin{align*} 
\begin{aligned} 
&\g\lrpar{\D_A \lrpar{2\tilde{\Om} \tilde{\gd}^{CD} (\pr_C f) \pr_{\tilde D} }, \pr_B} \\
=& \pr_A \lrpar{2\tilde{\Om}^2} \pr_B f + \lrpar{2\tilde{\Om}^2} \lrpar{(\pr_A\pr_B f) - (\pr_Cf) \tilde{\Ga}^C_{AB}}\\
&- \lrpar{2\tilde{\Om}^2} \lrpar{(\pr_Af)\tilde{\Om}\tilde{\chib}_{BD}\tilde{\gd}^{CD}(\pr_Cf) + (\pr_Bf)\tilde{\Om}\tilde{\chib}_{AD}\tilde{\gd}^{CD}(\pr_Cf)}.
\end{aligned} 
\end{align*}

\ni Plugging the above into \eqref{EQchiExpr5555}, we have that
\begin{align} 
\begin{aligned} 
\chi_{AB} =& \frac{\tilde{\Om}^3}{\Om} \vert \Nd f \vert_\gd^2 \tilde{\chib}_{AB} +\frac{\tilde \Om}{\Om}\tilde{\chi}_{AB} \\
&+\frac{2\tilde{\Om}^2}{\Om} \lrpar{ (\pr_A f)\tilde{\eta}_B + (\pr_B f)\tilde{\eta}_A} +\frac{2\tilde{\Om}^2}{\Om} \lrpar{\pr_A\pr_B f - \tilde{\Ga}^C_{AB} \pr_C f} \\
&- \frac{2\tilde{\Om}^2}{\Om} \lrpar{(\pr_Af)\tilde{\Om}\tilde{\chib}_{BD}\tilde{\gd}^{CD}(\pr_Cf) + (\pr_Bf)\tilde{\Om}\tilde{\chib}_{AD}\tilde{\gd}^{CD}(\pr_Cf)}.
\end{aligned} \label{EQchiFINAL555expr}
\end{align}

\ni \textbf{Analysis of $\Du\omb$.} We have by explicit computation that
\begin{align} 
\begin{aligned} 
\Du \omb := \pr_u \lrpar{\pr_u \log \Om} = \frac{\pr_u^3 f}{2\lrpar{1+\pr_u f}} - \frac{\lrpar{\pr_u^2 f}^2}{2\lrpar{1+\pr_u f}^2}+ \tilde{\Du} \tilde \omb \lrpar{1+\pr_uf}^2 + \tilde\omb \pr_u^2 f.
\end{aligned} \label{EQformulaDUOMB55}
\end{align}

\ni \textbf{Analysis of $D \om$.} Using that (see also \eqref{EQdoubleNULLrelation334})
\begin{align*} 
\begin{aligned} 
L'\lrpar{\Om} = -\frac{\Om^3}{2} L'\lrpar{\Om^{-2}} = -\frac{\Om^3}{2} L'\lrpar{L'(\tilde v)},
\end{aligned} 
\end{align*}
we have that
\begin{align*} 
\begin{aligned} 
D\om = L \lrpar{L \log \Om } = \Om^2 L' \lrpar{\Om L' \lrpar{ \Om}} = \Om^2 L' \lrpar{-\frac{\Om^4}{2}L'\lrpar{L'(\tilde v)}} = 4 \om^2 - \frac{\Om^6}{2} L'\lrpar{L' \lrpar{L'(\tilde v)}}.
\end{aligned} 
\end{align*}
Using that $\D_{L'}L' =0$, it follows further that
\begin{align*} 
\begin{aligned} 
D\om = & 4 \om^2 - \frac{\Om^6}{2} \D_{L'} \D_{L'} \D_{L'} \tilde v= 4\om^2 - \frac{1}{2} \D_L \D_L \D_L \tilde v.
\end{aligned} 
\end{align*}
By \eqref{EQLvectorfieldFormula15}, we can furthermore expand $\D_L \D_L \D_L \tilde v$ (explicit calculation omitted here), to conclude that $D\om$ can be written as a sum of products of first angular derivatives of $f$ and (the following all with tilde) null curvature components, first derivatives of Ricci coefficients and second derivatives of metric coefficients.

\begin{remark} The only linear terms in $f$ in the expression for $D\om$ are
\begin{align*} 
\begin{aligned} 
2 \lrpar{2\tilde{\Om}^2 \tilde{\gd}^{AB}(\pr_Af)} \D_{\tilde L} \D_{\tilde L} \D_{\tilde B} \tilde v \text{ and } 2\tilde{\Om}^2 \tilde{\gd}^{EF}(\pr_E f) \D_{\tilde F} \D_{\tilde L} \D_{\tilde L} \tilde v,
\end{aligned} 
\end{align*}
and we note that at Minkowski,
\begin{align*} 
\begin{aligned} 
\D_{\tilde L}\D_{\tilde L}\D_{\tilde L}\tilde v= \D_{\tilde L} \D_{\tilde L} \D_{\tilde B} \tilde v = \D_{\tilde F} \D_{\tilde L} \D_{\tilde L} \tilde v =0.
\end{aligned} 
\end{align*}
Hence the linearization of $D\om$ vanishes at Minkowski.
\end{remark}

\subsubsection{Calculation of null curvature components on $\protect\tilde{\protect\underline{\protect\mathcal{H}}}_2$} \label{SECnullCurvatureComponentsAPP} We recall from \eqref{EQnullcurvatureCOMPDEF} the definition of the null curvature components,
\begin{align} \begin{aligned}
\alpha_{AB} :=& \Rbf(\pr_A,\widehat{L}, \pr_B, \widehat{L}), & \beta_A :=& \half \Rbf(\pr_A, \widehat{L},\widehat{\Lb},\widehat{L}), & \rh :=& \frac{1}{4} \Rbf(\widehat{\Lb}, \widehat{L}, \widehat{\Lb}, \widehat{L}), \\
 \sigma \iin_{AB} :=& \half \Rbf(\pr_A,\pr_B,\widehat{\Lb}, \widehat{L}), & 
\beb_A :=& \half \Rbf(\pr_A, \widehat{\Lb},\widehat{\Lb},\widehat{L}), & \ab_{AB} :=& \Rbf(\pr_A,\widehat{\Lb}, \pr_B, \widehat{\Lb}).
\end{aligned} \label{EQredefinitionNullCurvatureComp} \end{align}

\ni Plugging \eqref{EQpartialsRelation}, \eqref{EQrelationstildeLL}, \eqref{EQexpressionOMEGASQUARED123} and \eqref{EQLvectorfieldFormula15}, that is
\begin{align*} 
\begin{aligned} 
\pr_{\th^A} =& \pr_{\tth^A} + (\pr_{\th^A} f) \pr_{\tilde u}, &
\Om^2 =& \tilde{\Om}^2 \lrpar{1+\pr_u f},  \\
\Lb =& \lrpar{1+\pr_u f}\tilde \Lb, &
L =& \lrpar{ \tilde{\Om}^2 \vert \Nd f\vert^2_\gd} \tilde{\Lb} + \tilde{L} + \lrpar{2\tilde{\Om}^2 \tilde{\gd}^{AC}\pr_C f } \pr_{\tth^A},
\end{aligned} 
\end{align*}
into \eqref{EQredefinitionNullCurvatureComp}, it follows that the null curvature components $\lrpar{\a, \be, \rh, \si, \beb, \ab}$ can be expressed as sum of products of $\lrpar{\tilde \a, \tilde \be, \tilde \rh, \tilde \si, \tilde \beb, \tilde \ab}$ and $f, \pr_A f$, $A=1,2$, and $\pr_u f$.

\subsection{Angular perturbations} \label{SECapp1232} In this section we discuss in detail \emph{angular perturbations} of sphere data. Consider on a sphere $S_{0,2}$ with coordinates $(\tth^1,\tth^2)$ the sphere data
\begin{align*} 
\begin{aligned} 
x_{0,2} = \lrpar{\Om, \gd, \Om\trchi, \chih, \Om\trchib, \chibh, \eta, \om, D\om, \omb, \Du\omb, \a, \ab}.
\end{aligned} 
\end{align*}
In the following, we change coordinates from $(\tth^1,\tth^2)$ to new coordinates $(\th^1,\th^2)$ and analyse the change of the sphere data on $S_{0,2}$.

For scalar functions $j^1(\th^1,\th^2)$ and $j^2(\th^1,\th^2)$, define $(\th^1,\th^2)$ on $S_{0,2}$ by
\begin{align} 
\begin{aligned} 
\tth^1 = \th^1 + j^1(\th^1,\th^2), \,\, \tth^2 = \th^2 + j^2(\th^1,\th^2).
\end{aligned} \label{EQdefCoordinateChangeSpheres}
\end{align}
For $(j_1, j_2)$ sufficiently small, $(\th^1,\th^2)$ are local coordinates on $S_{0,2}$. The Jacobian of this coordinate change is given by
\begin{align} 
\begin{aligned} 
J := \begin{pmatrix}
\frac{\pr \tth^1}{\pr \th^1} & \frac{\pr\tth^1}{\pr \th^2}  \\
\frac{\pr\tth^2}{\pr\th^1} & \frac{\pr\tth^2}{\pr \th^2} 
\end{pmatrix} 
=\begin{pmatrix}
1+ \pr_1 j^1 & \pr_2 j^1  \\
\pr_1 j^2 & 1+ \pr_2 j^2 
\end{pmatrix}.
\end{aligned} \label{EQJacobianformula}
\end{align}

\ni First, the scalar functions
\begin{align*} 
\begin{aligned} 
\lrpar{\Om, \Om\trchi, \Om\trchib, \om, D\om, \omb, \Du\omb}
\end{aligned} 
\end{align*}
do not transform under the change \eqref{EQdefCoordinateChangeSpheres}.

Second, the $1$-forms $\be$ and $\beb$ transform under the tensor transformation law for $1$-forms: Let $X$ be a $1$-form on $S_{0,2}$. Denote by $X_{A}$ and $\tilde{X}_{A}$ its coordinate components with respect to $(\th^1,\th^2)$ and $(\tth^1,\tth^2)$, respectively. Then, using \eqref{EQJacobianformula}, it holds that
\begin{align*} 
\begin{aligned} 
\begin{pmatrix} 
X_1 \\ X_2
\end{pmatrix} = J^T \begin{pmatrix} 
\tilde{X}_1 \\ \tilde{X}_2
\end{pmatrix}= \begin{pmatrix}
1+ \pr_1 j^1 & \pr_1 j^2  \\
\pr_2 j^1 & 1+ \pr_2 j^2 
\end{pmatrix} \begin{pmatrix} 
\tilde{X}_1 \\ \tilde{X}_2
\end{pmatrix}.
\end{aligned} 
\end{align*}

\ni Third, the $2$-tensors
\begin{align*} 
\begin{aligned} 
\lrpar{\chih, \chibh, \a, \ab}
\end{aligned} 
\end{align*}
change under \eqref{EQdefCoordinateChangeSpheres} according to the tensor transformation law: Let $T$ be a symmetric $2$-tensor on $S_{0,2}$. Denote by $T_{AB}$ and $\tilde{T}_{AB}$ its coordinate components with respect to $(\th^1,\th^2)$ and $(\tth^1,\tth^2)$, respectively. Then, using \eqref{EQJacobianformula}, it holds that
\begin{align*} 
\begin{aligned} 
\begin{pmatrix}
T_{11} & T_{12}  \\
T_{12} & T_{22} 
\end{pmatrix}= J^T \begin{pmatrix}
\tilde{T}_{11} & \tilde{T}_{12}  \\
\tilde{T}_{12} & \tilde{T}_{22} 
\end{pmatrix} J =
\begin{pmatrix}
1+ \pr_1 j^1 & \pr_1 j^2  \\
\pr_2 j^1 & 1+ \pr_2 j^2 
\end{pmatrix}\begin{pmatrix}
\tilde{T}_{11} & \tilde{T}_{12}  \\
\tilde{T}_{12} & \tilde{T}_{22} 
\end{pmatrix}\begin{pmatrix}
1+ \pr_1 j^1 & \pr_2 j^1  \\
\pr_1 j^2 & 1+ \pr_2 j^2 
\end{pmatrix}.
\end{aligned} 
\end{align*}

\ni Specifically, we note that the components of the induced metric $\gd$ change according to
\begin{align} 
\begin{aligned} 
\begin{pmatrix}
\gd_{11} & \gd_{12}  \\
\gd_{12} & \gd_{22} 
\end{pmatrix}= 
\begin{pmatrix}
1+ \pr_1 j^1 & \pr_1 j^2  \\
\pr_2 j^1 & 1+ \pr_2 j^2 
\end{pmatrix}\begin{pmatrix}
\tilde{\gd}_{11} & \tilde{\gd}_{12}  \\
\tilde{\gd}_{12} & \tilde{\gd}_{22} 
\end{pmatrix}\begin{pmatrix}
1+ \pr_1 j^1 & \pr_2 j^1  \\
\pr_1 j^2 & 1+ \pr_2 j^2 
\end{pmatrix},
\end{aligned} \label{EQchangeofInducedMetric}
\end{align}
which implies that
\begin{align*} 
\begin{aligned} 
\sqrt{\det \gd} =& \lrpar{(1+ \pr_1 j^1)(1+ \pr_2 j^2)- \pr_2 j^1\pr_1 j^2} \sqrt{\det \tilde{\gd}}\\
=&\lrpar{1+ 2(\pr_1j^1 +\pr_2 j^2)+ (\pr_1j^1)^2 + (\pr_2j^2)^2 -\pr_2 j^1\pr_1 j^2} \sqrt{\det \tilde{\gd}}.
\end{aligned} 
\end{align*}
\subsection{Proof of Propositions \ref{PropositionSmoothnessF} and \ref{PropositionSmoothnessF2}} \label{APPproofSmoothnessFINAL} In this section we prove Propositions \ref{PropositionSmoothnessF} and \ref{PropositionSmoothnessF2}. We show that for given real numbers $\de>0$, 

\begin{enumerate}
\item the mapping
\begin{align*} 
\begin{aligned} 
\PP_f: \, \XX^+\lrpar{\HHb_{ [-\de,\de],2}} \times \YY_f &\to \XX(S_2), \\
(\underline{x},f) &\mapsto \PP_f(\underline{x}),
\end{aligned} 
\end{align*}
is well-defined and smooth in an open neighbourhood of
$$(\underline{x},f)=(\underline{\mathfrak{m}},0) \in \XX^+\lrpar{\HHb_{ [-\de,\de],2}} \times \YY_f,$$ 
\item the mapping
\begin{align*} 
\begin{aligned} 
\PP_{(j_1,j_2)}: \, \XX(S_{0,2}) \times \YY_{(j_1,j_2)} &\to \XX(S_{0,2}), \\
(x_{0,2},(j_1,j_2)) &\mapsto \PP_{(j_1,j_2)}(x_{0,2}),
\end{aligned} 
\end{align*}
is well-defined and smooth in an open neighbourhood of
$$(x_{0,2}, {(j_1,j_2)})=(\mathfrak{m}_{0,2},(0,0)) \in \XX(S_{0,2}) \times \YY_{(j_1,j_2)}.$$ 
\end{enumerate}
The explicit estimates for $\PP_f$ and $\PP_{(j_1,j_2)}$, see \eqref{EQestimatePPfsmoothness1stder} and \eqref{EQestimatePPJJsmoothness1stder}, subsequently follow from the above by Lemma \ref{LEMoperatorEstimates}.

We are now in position to prove (1) and (2) above. First, it follows directly from the formulas of Sections \ref{SECapp1231} and \ref{SECapp1232} that the formal expression for $\PP_f$ and $\PP_{(j_1,j_2)}$ are well-defined locally.

Second, in the following, let $\varep>0$ be a real number, and assume that
\begin{align*} 
\begin{aligned} 
\Vert \underline{x}-\underline{\mathfrak{m}} \Vert_{\XX^+\lrpar{\HHb_{ [-\de,\de],2}}} + \Vert x_{0,2}-\mathfrak{m}_{0,2} \Vert_{\XX(S_{0,2})} + \Vert f \Vert_{\YY_f} + \Vert (j_1,j_2) \Vert_{\YY_{(j_1,j_2)}} \leq \varep.
\end{aligned} 
\end{align*}
In Sections \ref{SECEQnonlinearESTIM1} and \ref{SECEQnonlinearESTIM2} below we show that for $\varep>0$ sufficiently small,
\begin{align} 
\Vert \PP_f(\underline{x}) - \mathfrak{m}_{0,2} \Vert_{\XX(S_{0,2})} \les& \Vert \underline{x} -\underline{\mathfrak{m}} \Vert_{\XX^+\lrpar{\HHb_{ [-\de,\de],2}}} + \Vert f \Vert_{\YY_f}, \label{EQnonlinearESTIM1} \\
\Vert \PP_{(j_1,j_2)}(x_{0,2})-\mathfrak{m}_{0,2} \Vert_{\XX(S_{0,2})} \les& \Vert x_{0,2}-\mathfrak{m}_{0,2} \Vert_{\XX(S_{0,2})} + \Vert (j_1,j_2) \Vert_{\YY_{(j_1,j_2)}}, \label{EQnonlinearESTIM2}
\end{align}
The estimates \eqref{EQnonlinearESTIM1} and \eqref{EQnonlinearESTIM2} establish that $\PP_f$ and $\PP_{(j^1,j^2)}$ are well-defined mappings between the indicated Hilbert spaces. 

Third, the smoothness of the mappings $\PP_f$ and $\PP_{(j^1,j^2)}$ follows from the explicit formulas of Sections \ref{SECapp1231} and \ref{SECapp1232}.

\subsubsection{Proof of Proposition \ref{PropositionSmoothnessF}} \label{SECEQnonlinearESTIM1} In the following give the proof of \eqref{EQnonlinearESTIM1} by using the formulas of Section \ref{SECapp1231} and standard product estimates. By Definition \ref{DEFnormFirstOrderDATA} of $\XX(S_{0,2})$, we have to show that for $\varep>0$ sufficiently small,
\begin{align} 
\begin{aligned} 
& \Vert \Om-1 \Vert_{H^{6}(S_{0,2})}+\Vert \gd-r^2\gac \Vert_{H^{6}(S_{0,2})} \\
&+ \left\Vert \trchi-\frac{2}{r} \right\Vert_{H^{6}(S_{0,2})} + \Vert \chih \Vert_{H^{6}(S_{0,2})} + \left\Vert \trchib+\frac{2}{r} \right\Vert_{H^{4}(S_2)} + \Vert \chibh \Vert_{H^{4}(S_{0,2})} \\
&+ \Vert \eta \Vert_{H^{5}(S_{0,2})} + \Vert \om \Vert_{H^{6}(S_{0,2})}+ \Vert D\om \Vert_{H^{6}(S_{0,2})}+\Vert \omb \Vert_{H^{4}(S_{0,2})} + \Vert \Du\omb \Vert_{H^{2}(S_{0,2})} \\
&+ \Vert \a \Vert_{H^{6}(S_{0,2})} +\Vert \ab \Vert_{H^{2}(S_{0,2})}\\
\les& \Vert \underline{x} \Vert_{\XX^+\lrpar{\HHb_{ [-\de,\de],2}}} + \Vert f \Vert_{\YY_f}.
\end{aligned} \label{EQestimatesPPfbig1}
\end{align}

\ni We derive the control of $\Om$ as an instructive example. By \eqref{EQdefU} and \eqref{EQexpressionOMEGASQUARED123}, it holds that on 
\begin{align*} 
\begin{aligned} 
S_{0,2} := \{ u=0, v=2\},
\end{aligned} 
\end{align*}
the null lapse $\Om$ can be expressed as 
\begin{align*} 
\begin{aligned} 
\Om^2(0, \th^1,\th^2) =& \lrpar{1+\pr_u f(0,\th^1,\th^2)} \tilde{\Om}^2(f(0,\th^1,\th^2),\th^1,\th^2)).
\end{aligned} 
\end{align*}
By Definition \ref{DEFspacetimeNORM} and standard product estimates, it holds for $\varep>0$ sufficiently small that
\begin{align*} 
\begin{aligned} 
\Vert \Om-1 \Vert_{H^6(S_2)} \les& \Vert f(0) \Vert_{H^8(S_2)} + \Vert \pr_u f(0) \Vert_{H^6(S_2)} + \Vert \tilde \Om -1 \Vert_{\XX^+\lrpar{\HHb_{ [-\de,\de],2}}} \\
\les& \Vert f \Vert_{\YY_f} + \Vert \underline{x} \Vert_{\XX^+\lrpar{\HHb_{ [-\de,\de],2}}}.
\end{aligned} 
\end{align*}

\ni Similarly, from the expressions for 
\begin{align*} 
\begin{aligned} 
\lrpar{\gd, \Om\trchi, \chih, \chibh, \Om\trchib, \eta, \om, D\om, \omb, \Du\omb, \a, \be, \rh, \si, \beb, \ab}
\end{aligned} 
\end{align*}
provided in \eqref{EQinducedmetricFormula15}, \eqref{EQombExpresssion4}, \eqref{EQformulatrchichih555}, \eqref{EQformulaLittleOM555}, \eqref{EQetaFinal555}, \eqref{EQchiFINAL555expr}, \eqref{EQformulaDUOMB55} and Section \ref{SECnullCurvatureComponentsAPP}, the rest of \eqref{EQestimatesPPfbig1} follows.

\subsubsection{Proof of Proposition \ref{PropositionSmoothnessF2}} \label{SECEQnonlinearESTIM2}
In the following we prove \eqref{EQnonlinearESTIM2} by using the formulas of Section \ref{SECapp1232} and standard product estimates. By Definition \ref{DEFnormFirstOrderDATA} of $\XX(S_{0,2})$, we have to show that for $\varep>0$ sufficiently small,
\begin{align} 
\begin{aligned} 
& \Vert \Om-1 \Vert_{H^{6}(S_{0,2})}+\Vert \gd -r^2 \gac \Vert_{H^{6}(S_{0,2})} \\
&+ \left\Vert \trchi-\frac{2}{r} \right\Vert_{H^{6}(S_{0,2})} + \Vert \chih \Vert_{H^{6}(S_2)} + \left\Vert \trchib+\frac{2}{r} \right\Vert_{H^{4}(S_{0,2})} + \Vert \chibh \Vert_{H^{4}(S_2)} \\
&+ \Vert \eta \Vert_{H^{5}(S_{0,2})} + \Vert \om \Vert_{H^{6}(S_{0,2})}+ \Vert D\om \Vert_{H^{6}(S_{0,2})}+\Vert \omb \Vert_{H^{4}(S_{0,2})} + \Vert \Du\omb \Vert_{H^{2}(S_{0,2})} \\
&+ \Vert \a \Vert_{H^{6}(S_{0,2})} +\Vert \ab \Vert_{H^{2}(S_{0,2})}\\
\les& \Vert x_{0,2} \Vert_{\XX(S_{0,2})} + \Vert (j^1,j^2) \Vert_{\YY_{(j^1,j^2)}}.
\end{aligned} \label{EQestimatesPPfbig12222}
\end{align}
We derive the control of $\gd$ as an instructive example. By \eqref{EQchangeofInducedMetric}, it holds that
\begin{align*} 
\begin{aligned} 
\begin{pmatrix}
\gd_{11} & \gd_{12}  \\
\gd_{12} & \gd_{22} 
\end{pmatrix}= \begin{pmatrix}
1+ \pr_1 j^1 & \pr_1 j^2  \\
\pr_2 j^1 & 1+ \pr_2 j^2 
\end{pmatrix}\begin{pmatrix}
\tilde{\gd}_{11} & \tilde{\gd}_{12}  \\
\tilde{\gd}_{12} & \tilde{\gd}_{22} 
\end{pmatrix}\begin{pmatrix}
1+ \pr_1 j^1 & \pr_2 j^1  \\
\pr_1 j^2 & 1+ \pr_2 j^2 
\end{pmatrix}.
\end{aligned} 
\end{align*}
Hence by standard product estimates, it holds that for $\varep_1>0$ sufficiently small,
\begin{align*} 
\begin{aligned} 
\Vert \gd - r^2 \gac \Vert_{H^6(S_{0,2})} \les& \Vert j^1 \Vert_{H^7(S_{0,2})}+\Vert j^1 \Vert_{H^7(S_{0,2})} + \Vert \tilde{\gd}_{AB} - r^2 \gac \Vert_{H^6(S_{0,2})} \\
\les& \Vert (j^1,j^2) \Vert_{\YY_{(j^1,j^2)}}+ \Vert x_{0,2} \Vert_{\XX(S_{0,2})}.
\end{aligned} 
\end{align*}
By the arguments of Section \ref{SECapp1232}, the rest of \eqref{EQestimatesPPfbig12222} is derived similarly. This finishes the proof of Proposition \ref{PropositionSmoothnessF2}.

\section{Derivation of null transport equations} \label{SECnulltransportAPPENDIX} \ni In this section we prove null transport equations used in this paper. In Section \ref{SECappendixDerivationTransveralEquations} we prove the non-linear null transport equation \eqref{EQDUOMU1} for $\Du\omb$ along $\HH$. In Section \ref{SECDerivation1} we derive the linearized null transport equations of Lemma \ref{LEMtransportEQUStrans} for $\ombd, \Du\ombd$ and $\abd$.

\subsection{Derivation of null transport equation for $\protect\Du \protect\omb$} \label{SECappendixDerivationTransveralEquations}

\ni In this section, we prove the transport equation \eqref{EQDUOMU1} for $\Du \omb$ along $\HH$. We remark that in case of a geodesic foliation on $\HH=\HH_{0,[1,2]}$, that is, $\Om \equiv 1$ on $\HH$, this transport equation is readily available in  \cite{ChrFormationBlackHoles}. We first have the following commutator identities, see Chapter 1 in \cite{ChrFormationBlackHoles}.

\begin{lemma}[Commutator identity] \label{LemmaCommutatorIdentity} Let $W$ be an $S_v$-tangent tensorfield. Then,
\begin{align*}
\Du D W - D \Du W = \Lied_{4 \Om^2 \zeta} W.
\end{align*}
\end{lemma}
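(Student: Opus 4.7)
The plan is to derive this commutator identity by lifting to the ambient spacetime $(\MM,\g)$, applying the standard Lie-bracket formula there, and then tracking the projections carefully.

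First I would compute the Lie bracket $[\Lb, L]$ explicitly in double null coordinates. By \eqref{EQvectorexpr2488} we have $L = \pr_v + b^A \pr_A$ and $\Lb = \pr_u$, so since $\pr_u$ commutes with $\pr_v$ and with $\pr_A$,
\[
[\Lb, L] = [\pr_u, \pr_v + b^A \pr_A] = (\pr_u b^A)\, \pr_A.
\]
Then I would invoke the identity \eqref{EQbDerivativedoublenull}, namely $\pr_u b^A = 4\Om^2 \zeta^A$, to conclude $[\Lb, L] = 4\Om^2\zeta$, which in particular is $S_{u,v}$-tangent. (One sanity check: since $u$ and $v$ are optical functions, $[\Lb,L]u = \Lb(L u) - L(\Lb u) = 0$ and similarly $[\Lb,L]v = 0$, confirming $S$-tangentiality.)

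Second, for an arbitrary $S_{u,v}$-tangent tensor $W$ on $\MM$ (extended off the foliation as an ambient tensor that is tangent to each leaf), the standard Lie-bracket identity on $(\MM,\g)$ gives
\[
\lie_\Lb \lie_L W - \lie_L \lie_\Lb W = \lie_{[\Lb,L]} W = \lie_{4\Om^2\zeta} W.
\]
Then I would project both sides onto the tangent space of $S_{u,v}$. On the right-hand side, because $4\Om^2 \zeta$ is $S_{u,v}$-tangent and $W$ is $S_{u,v}$-tangent, $\lie_{4\Om^2\zeta}W$ is already tangent to the sphere, so the projection acts trivially and produces $\li_{4\Om^2\zeta} W$ by definition of the projected Lie derivative.

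The main technical step is showing that on the left-hand side one may freely interchange the ambient Lie derivative and the projection, so that $\Pi(\lie_\Lb \lie_L W) = \li_\Lb \li_L W$ and similarly with $L$ and $\Lb$ swapped. The obstruction to this is that $\lie_L W$ need not be $S$-tangent; its non-tangential components, however, contract against $\lie_\Lb$ through terms involving $\g(\D_{\widehat L}\widehat \Lb,\cdot)$ and $\g(\D_{\widehat\Lb}\widehat L,\cdot)$, and these symmetric combinations cancel when one subtracts the reversed expression $\lie_L\lie_\Lb W$. I would verify this by writing $W$ in a local adapted frame in which the basis vectors are $\{\widehat L, \widehat\Lb, \pr_{\th^1}, \pr_{\th^2}\}$, evaluate each Lie derivative componentwise using the Ricci-coefficient definitions \eqref{DEFricciCoefficients}, and observe that the non-tangential pieces of $\lie_L W$ and $\lie_\Lb W$ differ only by $S$-tangent terms once the second Lie derivative is taken and projected. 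Combining all three steps yields
\[
\li_\Lb \li_L W - \li_L \li_\Lb W = \li_{4\Om^2\zeta} W,
\]
which by the definitions $DW = \li_L W$ and $\Du W = \li_\Lb W$ in \eqref{EQdefDnotation} is exactly the claimed identity. I expect the frame-based verification of the interchange of projection and Lie differentiation to be the main point requiring care; it is a standard computation in the formalism of Chapter 1 of \cite{ChrFormationBlackHoles} but is the only step that is not immediate from the coordinate formulas.
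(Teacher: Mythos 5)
Your strategy is correct and matches the standard argument from Chapter 1 of \cite{ChrFormationBlackHoles}, which the paper cites without proof: compute $[\Lb,L]=(\pr_u b^A)\pr_A=4\Om^2\zeta$ using \eqref{EQbDerivativedoublenull} and $L=\pr_v+b$, $\Lb=\pr_u$, apply the ambient Lie-bracket identity $\lie_\Lb\lie_L-\lie_L\lie_\Lb=\lie_{[\Lb,L]}$, and project onto the sphere. Your sanity check of $S$-tangentiality of $[\Lb,L]$ via $[\Lb,L]u=[\Lb,L]v=0$ is also a nice touch.

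The one place you overcomplicate is the interchange of projection with Lie differentiation, which you defer to a frame computation involving $\g(\D_{\widehat L}\widehat\Lb,\cdot)$ and ``symmetric combinations.'' Lie derivatives do not see the Levi-Civita connection, so Ricci coefficients should not enter this step at all; the only issue is whether the non-$S$-tangent components of $\lie_L W$ can contribute after applying $\lie_\Lb$ and restricting to the sphere. The cleanest way to rule this out is to evaluate on the coordinate vectorfields $\pr_{\th^A}$: since $[\Lb,\pr_{\th^A}]=[\pr_u,\pr_{\th^A}]=0$ the outer $\lie_\Lb$ never probes the non-tangential slots of $\lie_L W$, and since $[L,\pr_{\th^A}]=-(\pr_A b^B)\pr_B$ is $S$-tangent, neither does the outer $\lie_L$ in the reversed expression. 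It follows that $\Du D W - D\Du W$ is exactly the $S$-restriction of $\lie_{[\Lb,L]}W$, which equals $\Lied_{4\Om^2\zeta}W$ because $4\Om^2\zeta$ is $S$-tangent (so the ambient Lie derivative of an $S$-tangent tensor along it is intrinsic to $S$). Replacing your proposed frame-and-Ricci-coefficient argument with this coordinate observation closes the gap without any additional machinery.
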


\ni We are now in position to derive the null transport equation for $\Du \omb$. From the null structure equations \eqref{EQtransportEQLnullstructurenonlinear}, we have that
\begin{align}
D \omb = \Om^2\lrpar{2 (\eta,\etab) - \vert \eta \vert^2 -\rh}. \label{EQtransportomb5551}
\end{align}
Applying the $\Du$-derivative to \eqref{EQtransportomb5551} and using \eqref{EQfirstvariation1}, \eqref{EQtransportEQLnullstructurenonlinear}, \eqref{EQnullBianchiEquations} and Lemma \ref{LemmaCommutatorIdentity} with $W=\omb$, we have that
\begin{align*} 
\begin{aligned} 
D\Du\omb =& -4\Om^2 \zeta(\omb) + \Du D\omb\\
=& -4\Om^2 \zeta(\omb)+ \Du\lrpar{\Om^2 \lrpar{2(\eta,\etab)-\vert \eta\vert^2 - \rh}}\\ 
=&-4\Om^2 \zeta(\omb) +2 \Om^2 \omb \lrpar{2(\eta,\etab)- \vert \eta\vert^2 - \rh} \\
&+\Om^2\lrpar{-4\Om \chib(\eta,\etab) + 2 \lrpar{-\Om\lrpar{\chib \cdot \eta + \beb}+ 2 \di \omb, \etab} +2 \lrpar{\eta, \Om \lrpar{\chib \cdot \etab + \beb}} }\\
&+ \Om^2 \lrpar{2\Om \chib(\eta, \eta) - 2\lrpar{\Om \lrpar{\chib \cdot \eta + \beb}+ 2\di \omb, \eta} - \Du \rh} \\
=& -4\Om^2 \zeta(\omb) +2 \Om^2 \omb \lrpar{2(\eta,\etab)- \vert \eta\vert^2 - \rh} \\
&-6 \Om^3 \chib(\eta, \etab)-2\Om^3 (\beb,\etab) +4 \Om^2 \etab(\omb)\\
&+ 6 \Om^3 \chib(\eta, \eta) + 4 \Om^3 (\beb, \eta) - 4\Om^2 \eta(\omb) \\
&+ \frac{3}{2} \Om^3\trchib\rh+\Om^3 \lrpar{\Divd \beb +(2\eta-\zeta,\beb)+ \half (\chih, \ab)},
\end{aligned} 
\end{align*}
which, using \eqref{DEFricciCoefficients} and \eqref{EQriccirelationetabeta}, can be rewritten as
\begin{align*} 
\begin{aligned} 
D\Du\omb =& 12 \Om^3 (\di\log\Om-\eta)\omb +2 \Om^2 \omb \lrpar{2(\eta,\etab)- \vert \eta\vert^2} + \rh \lrpar{\frac{3}{2}\Om^3\trchib -2\Om^2 \omb} \\
&+12 \Om^3 \chib(\eta, \eta-\di\log\Om) + \Om^3 (\beb, 7\eta-3\di\log\Om)+ \Om^3 \Divd \beb + \half \Om^3 (\chih, \ab).
\end{aligned} 
\end{align*}
This finishes the proof of \eqref{EQDUOMU1}.

\subsection{Derivation of transport equations for $\protect\ombd$, $\protect\abd$ and $\protect\Du\protect\ombd$} \label{SECDerivation1} In this section we prove Lemma \ref{LEMtransportEQUStrans}. To simplify notation, we use that in Minkowski on $\HH=\HH_{0,[1,2]}$ it holds that $r=v$. First we recall from \eqref{EQdefChargesMinkowski8891} that
\begin{align*} \begin{aligned} 
\QQ_1 :=&\frac{v}{2} \lrpar{\omtrchid-\frac{4}{v}\Omd} + \frac{\phid}{v}, \\
\QQ_2 :=& v^2 \omtrchibd -\frac{2}{v}\Divdo \lrpar{v^2\etad+\frac{v^3}{2}\di\lrpar{\omtrchid-\frac{4}{v}\Omd}} -v^2 \lrpar{\omtrchid-\frac{4}{v}\Omd} +2v^3 \Kd, \\
\QQ_3:=& \frac{\chibhd}{v} -\half \lrpar{ \DDd_2^\ast \Divdo +1} \gdcd + \DDd_2^\ast \lrpar{ \etad + \frac{v}{2}\di \lrpar{\omtrchid-\frac{4}{v}\Omd}} - v \DDd_2^\ast \di \lrpar{\omtrchid-\frac{4}{v}\Omd}.
\end{aligned} \end{align*}
and that by Lemmas \ref{CORtransport1EQS} and \ref{LEMgaugeconservationlaws},
\begin{align*} 
\begin{aligned} 
D\QQ_1    =&\mfq_1-D\lrpar{\frac{1}{2v} \mfq_2}, \\
D\QQ_2 =& v^2 \mfq_5 -2v \Divdo \mfq_4 -2v(\Ldo+1)\mfq_1+ D\lrpar{(\Ldo+1)\mfq_2} -\frac{1}{v}(\Ldo+2) \mfq_2+ \frac{1}{v}\Divdo\Divdo \mfq_3, \\
D\QQ_3=& \frac{1}{v}\mfq_6 -\frac{1}{2v^2}\lrpar{\DDd_2^\ast \Divdo +1}{\mfq_3}+ \DDd_2^\ast \mfq_4-\DDd_2^\ast \di \mfq_1 +D\lrpar{\frac{1}{2v}\DDd_2^\ast \di \mfq_2}+ \frac{1}{2v^2}\DDd_2^\ast \di \mfq_2.
\end{aligned} 
\end{align*}

\ni \textbf{Transport equation for $\ombd$.} We have that
\begin{align*} 
\begin{aligned} 
&D\lrpar{ \ombd +\frac{1}{4v^2}\QQ_2 +\frac{1}{3v} \Divdo \lrpar{\etad+\frac{v}{2}\di\lrpar{\omtrchid-\frac{4}{v}\Omd}}} \\
=& \mfq_7 + \Kd + \frac{1}{2v} \omtrchibd -\frac{1}{2v} \omtrchid + \frac{2}{v^2} \Omd + \frac{1}{4v^2}D\QQ_2 \\
&-\frac{1}{2v^3} \lrpar{v^2\omtrchibd -\frac{2}{v}\Divdo \lrpar{v^2\etad+\frac{v^3}{2}\di \lrpar{\omtrchid-\frac{4}{v}\Omd}}-v^2 \lrpar{\omtrchid-\frac{4}{v}\Omd}+2v^3 \Kd}\\
&-\frac{1}{v^2} \Divdo \lrpar{\etad +\frac{v}{2}\di\lrpar{\omtrchid-\frac{4}{v}\Omd}}\\
&+\frac{1}{3v^3} \Divdo \lrpar{\Divdo \chihd +v^2\mfq_4+v^2 \di\mfq_1+\half \di \mfq_2-D\lrpar{\frac{v}{2}\di\mfq_2}} \\
=& \mfq_7+ \frac{1}{4v^2}D\QQ_2+\frac{1}{3v^3} \Divdo \lrpar{\Divdo \chihd +v^2\mfq_4+v^2 \di\mfq_1+\half \di \mfq_2-D\lrpar{\frac{v}{2}\di\mfq_2}}\\
=& \mfq_7+ \frac{1}{4v^2}D\QQ_2+\frac{1}{3v^3} \Divdo \lrpar{\Divdo \chihd +v^2\mfq_4+v^2 \di\mfq_1 - \di \mfq_2}-D\lrpar{\frac{1}{6v^2}\Ldo \mfq_2} \\
=&  \mfq_7+ \frac{1}{4}\mfq_5 + \frac{1}{4v^3} \Divdo \Divdo \mfq_3 - \frac{1}{6v} \Divdo \mfq_4-\frac{1}{6v}(\Ldo+3) \mfq_1 -\frac{1}{12v^3} \Ldo \mfq_2 + \frac{1}{3v^3} \Divdo \Divdo \chihd.
\end{aligned} 
\end{align*}

\ni \textbf{Transport equation for $\abd$.} Using the above definition of $\QQ_1$, $\QQ_2$ and $\QQ_3$, we have by the system \eqref{EQlinearizedOPsystem2},
\begin{align*} 
\begin{aligned} 
&D\left( \frac{\abd}{v} +\frac{2}{v}\DDd_2^\ast \Divdo \QQ_3 - \frac{1}{2v^2} \DDd_2^\ast \di \QQ_2 - \frac{2}{v} \DDd_2^\ast \di \lrpar{\Ldo+2}\QQ_1 \right)\\
=& \frac{1}{v}\mfq_8 +\frac{2}{v}\DDd_2^\ast \lrpar{\frac{1}{v^2}\Divdo \chibhd -\half \di \omtrchibd-\frac{1}{v}\etad}-\frac{2}{v^2}\DDd_2^\ast \Divdo \QQ_3 +\frac{2}{v}\DDd_2^\ast \Divdo \lrpar{D\QQ_3}\\
&+\frac{1}{v^3}\DDd_2^\ast \di \QQ_2 -\frac{1}{2v^2}\DDd_2^\ast \di \lrpar{D\QQ_2}+\frac{2}{v^2}\DDd_2^\ast \di \lrpar{\Ldo+2} \QQ_1 -\frac{2}{v}\DDd_2^\ast \di \lrpar{\Ldo+2} \lrpar{D\QQ_1}\\
=& \frac{1}{v}\mfq_8 +\frac{2}{v}\DDd_2^\ast \lrpar{\frac{1}{v^2}\Divdo \chibhd -\half \di \omtrchibd-\frac{1}{v}\etad} \\
&+\frac{2}{v}\DDd_2^\ast \Divdo \lrpar{D\QQ_3} -\frac{1}{2v^2}\DDd_2^\ast \di \lrpar{D\QQ_2} -\frac{2}{v}\DDd_2^\ast \di\lrpar{\Ldo+2} \lrpar{D\QQ_1}\\
&-\frac{2}{v^2}\DDd_2^\ast \Divdo \lrpar{ \frac{\chibhd}{v} -\half \lrpar{ \DDd_2^\ast \Divdo +1} \gdcd + \DDd_2^\ast \lrpar{ \etad + \frac{v}{2}\di \lrpar{\omtrchid-\frac{4}{v}\Omd}} - v \DDd_2^\ast \di \lrpar{\omtrchid-\frac{4}{v}\Omd}}\\
&+\frac{1}{v^3}\DDd_2^\ast \di \lrpar{v^2 \omtrchibd -\frac{2}{v}\Divdo \lrpar{v^2\etad+\frac{v^3}{2}\di\lrpar{\omtrchid-\frac{4}{v}\Omd}} -v^2 \lrpar{\omtrchid-\frac{4}{v}\Omd} +2v^3 \Kd}\\
&+\frac{2}{v^2}\DDd_2^\ast \di \lrpar{\Ldo+2} \lrpar{\frac{v}{2} \lrpar{\omtrchid-\frac{4}{v}\Omd} + \frac{\phid}{v}},
\end{aligned} 
\end{align*}
and
\begin{align*} 
\begin{aligned} 
&-\frac{2}{3}\DDd_2^\ast \lrpar{\Divdo \DDd_2^\ast +1 + \di\Divdo}D\lrpar{\frac{1}{v^3}\lrpar{v^2\etad+\frac{v^3}{2}\di\lrpar{\omtrchid-\frac{4}{v}\Omd}}}\\
=&\frac{2}{v^2} \DDd_2^\ast \lrpar{\Divdo \DDd_2^\ast +1 + \di\Divdo} \lrpar{\etad+\frac{v}{2}\di\lrpar{\omtrchid-\frac{4}{v}\Omd}}\\
&-\frac{2}{3v^3} \DDd_2^\ast \lrpar{\Divdo \DDd_2^\ast +1 + \di\Divdo}\lrpar{\Divdo \chihd + v^2\mfq_4+v^2 \di\mfq_1+\half \di \mfq_2-D\lrpar{\frac{v}{2}\di\mfq_2}}.
\end{aligned} 
\end{align*}
and
\begin{align*} 
\begin{aligned} 
& \DDd_2^\ast \lrpar{\Divdo \DDd_2^\ast + 1 + \di\Divdo} \Divdo D\lrpar{\frac{1}{v}\gdcd}\\
=& -\frac{1}{v^2}\DDd_2^\ast \lrpar{\Divdo \DDd_2^\ast + 1 + \di\Divdo} \Divdo \gdcd + \frac{1}{v} \DDd_2^\ast \lrpar{\Divdo \DDd_2^\ast + 1 + \di\Divdo} \Divdo \lrpar{\frac{2}{v^2}\chihd+\frac{1}{v^2}{\mfq_3}}.
\end{aligned} 
\end{align*}
Summing up the above and using that, see Section \ref{SECspecAnalysis},
\begin{align*} 
\begin{aligned} 
2\Divdo \DDd_2^\ast \di + \di (\Ldo+2)= 2 \lrpar{\Divdo \DDd_2^\ast + \half \di \Divdo+ 1}\di =0
\end{aligned} 
\end{align*}
yields the transport equation for $\abd$ in Lemma \ref{LEMtransportEQUStrans}.\\

\ni \textbf{Transport equation for $\Du \ombd$.} We have that
\begin{align*} 
\begin{aligned} 
&D\lrpar{ \Du\ombd -\frac{1}{6v^3} \lrpar{\Ldo-3}\QQ_2+ \frac{1}{2v^2} \Divdo \Divdo \QQ_3 +\frac{1}{v^2} \Divdo \Divdo \DDd_2^\ast \di \QQ_1}\\
=&\mfq_9 +\frac{3}{v} \lrpar{\Kd+\frac{1}{2v}\omtrchibd-\frac{1}{2v}\omtrchid + \frac{2}{v^2}\Omd} + \frac{1}{v^2} \Divdo \lrpar{\frac{1}{v^2}\Divdo \chibhd -\frac{1}{v}\etad-\half \di \omtrchibd}\\
&-\frac{1}{6v^3}(\Ldo-3)D\QQ_2 +\frac{1}{2v^2}\Divdo \Divdo D\QQ_3 +\frac{1}{v^2}\Divdo \Divdo \DDd_2^\ast \di D\QQ_1\\
&+\frac{1}{2v^4}(\Ldo-3)\lrpar{v^2\omtrchibd -\frac{2}{v}\Divdo \lrpar{v^2\etad+\frac{v^3}{2}\di \lrpar{\omtrchid-\frac{4}{v}\Omd}}-v^2 \lrpar{\omtrchid-\frac{4}{v}\Omd}+2v^3 \Kd}\\
&-\frac{1}{v^3} \Divdo \Divdo \lrpar{\frac{\chibhd}{v} -\half \lrpar{ \DDd_2^\ast \Divdo +1} \gdcd + \DDd_2^\ast \lrpar{ \etad + \frac{v}{2}\di \lrpar{\omtrchid-\frac{4}{v}\Omd}} - v \DDd_2^\ast \di \lrpar{\omtrchid-\frac{4}{v}\Omd}}\\
&-\frac{2}{v^3} \Divdo \Divdo \DDd_2^\ast \di \lrpar{\frac{v}{2}\lrpar{\omtrchid-\frac{4}{v}\Omd}+\frac{\phid}{v}},
\end{aligned} 
\end{align*}
as well as
\begin{align*} 
\begin{aligned} 
&-D\lrpar{ \frac{1}{4v^2} \lrpar{\lrpar{\Ldo-2}\Divdo+ \Divdo \Divdo \DDd_2^\ast} \lrpar{\etad+ \frac{v}{2}\di\lrpar{\omtrchid-\frac{4}{v}\Omd}}} \\
=& \frac{1}{v^3} \lrpar{\lrpar{\Ldo-2}\Divdo+ \Divdo \Divdo \DDd_2^\ast}\lrpar{\etad+ \frac{v}{2}\di\lrpar{\omtrchid-\frac{4}{v}\Omd}} \\
&-\frac{1}{4v^4} \lrpar{\lrpar{\Ldo-2}\Divdo+ \Divdo \Divdo \DDd_2^\ast}\lrpar{\Divdo \chihd +v^2\mfq_4+v^2 \di\mfq_1+\half \di \mfq_2 -D\lrpar{\frac{v}{2}\di\mfq_2}},
\end{aligned} 
\end{align*}
and
\begin{align*} 
\begin{aligned} 
D\lrpar{\frac{1}{8v^2} \Divdo \di \Divdo \Divdo \gdcd} = -\frac{1}{4v^3}\Divdo \di \Divdo \Divdo \gdcd + \frac{1}{8v^2}\Divdo \di \Divdo \Divdo\lrpar{\frac{2}{v^2}\chihd+\frac{1}{v^2}\mfq_3}.
\end{aligned} 
\end{align*}
Summing up the above and using Lemma \ref{LEMgaugeconservationlaws} yields the transport equation for $\Du\ombd$ and thus finishes the proof of Lemma \ref{LEMtransportEQUStrans}. 

\section{Linearization at Schwarzschild} \label{SEClinearizedCHARGEequationsSSAPP}

\ni In this section we first derive the linearizations of $\PP_f$ and $\PP_{(j^1,j^2)}$ at Schwarzschild of mass $M\geq0$, see Section \ref{SECPerturbationslinearizesd222111SCHWARZSCHILD}. The linearizations are used in \eqref{EQlinearizationAPPLSCHWARZSCHILD} in Section \ref{SECconclusion2} for proving the perturbation estimate \eqref{EQChargeEstimatesMainTheorem0} for $(\mathbf{E},\mathbf{P},\mathbf{L},\mathbf{G})$. Then in Sections \ref{SECderivationConstraintFunctions2SCHWARZSCHILD} and \ref{SEClinearizedTRANSPORTEPLGSCHWARZSCHILD}, we calculate the linearizations of the constraint functions and the null transport equations for the linearizations of $(\mathbf{E},\mathbf{P},\mathbf{L},\mathbf{G})$ at Schwarzschild of mass $M\geq0$, respectively. The latter are used for \eqref{EQlinearizedESSestimate292} and \eqref{EQAppAPPLSCHWARZSCHILD} in Section \ref{SECconclusion27778} to prove the transport estimate \eqref{EQChargeEstimatesMainTheorem} for $(\mathbf{E},\mathbf{P},\mathbf{L},\mathbf{G})$.
\subsection{Linearization of $\PP_f$ and $\PP_{(j^1,j^2)}$ at Schwarzschild of mass $M\geq0$} \label{SECPerturbationslinearizesd222111SCHWARZSCHILD} 

\ni In this section we define the linearization of $\PP_f$ and $\PP_{(j^1,j^2)}$. As visible in the proofs of Lemmas \ref{LEMlinearizedTransversalSCHWARZSCHILD} and \ref{LEMspherediffLINSCHWARZSCHILD} below, their linearization is closely connected to the more general \emph{linearized pure gauge solutions} of \cite{DHR}.

First, we have the following lemma for $\PP_f$.
\begin{lemma}[Linearization of $\PP_f$] \label{LEMlinearizedTransversalSCHWARZSCHILD} 

Let $\dot{\PP}_{f}^M$ denote the linearization of $\PP_{f}$ in $f$ at $f=0$ and Schwarzschild of mass $M\geq0$. For a given linearized perturbation function $\dot f$,
\begin{align*} 
\begin{aligned} 
\dot f := \lrpar{\dot{f}(0,\th^1,\th^2), \pr_u \dot{f}(0,\th^1,\th^2), \,\, \pr_u^2 \dot{f}(0,\th^1,\th^2), \,\, \pr_u^3 \dot{f}(0,\th^1,\th^2) },
\end{aligned} 
\end{align*}
the non-trivial components of $\dot{\PP}^M_{f}\lrpar{\dot{f}}$ are given by
\begin{align*} \begin{aligned} 
\Omd =& \frac{1}{2\Om_M} \pr_{u} \lrpar{\dot{f} \Om_M^2}, & \phid=&-\Om_M^2 \dot{f}, & \etad=&  \frac{r_M}{\Om_M^2} \di \lrpar{\pr_u \lrpar{\frac{\Om_M^2}{r_M}f}},\\
\chihd =& - 2\Om_M \DDd_2^\ast \di \dot{f}, & \omtrchibd =& -2\pr_u \lrpar{\frac{f\Om_M^2}{r_M}}, & \omtrchid=& \frac{2\Om_M^2}{r_M^2} \lrpar{\Ldo \dot{f} - \dot{f}\lrpar{1-2\Om_M^2}},
\end{aligned} \end{align*}
and
\begin{align*} 
\begin{aligned} 
\ombd =\pr_u\lrpar{\frac{1}{2\Om^2_M} \pr_{u} \lrpar{\dot{f} \Om_M^2}}, \,\,
\Du\ombd = \pr_u^2 \lrpar{\frac{1}{2\Om^2_M} \pr_{u} \lrpar{\dot{f} \Om_M^2}},
\end{aligned} 
\end{align*}
and
\begin{align*} 
\begin{aligned} 
\dot{\beta}=-\frac{6M\Om_M}{r_M^3} \di \dot{f}, \,\, \rhd=& -\frac{6M\Om_M^2}{r_M^4} \dot{f}.
\end{aligned} 
\end{align*}
where we tacitly evaluated at $u=0$.
\end{lemma}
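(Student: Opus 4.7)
The plan is to linearize, term by term, the explicit nonlinear formulas derived in Appendix~\ref{SECapp1231} for the transversal perturbation map $\PP_f$. Concretely, I set $\tilde{\underline{x}} = \underline{\mathfrak{m}}^M$ and expand each output component of the sphere data on $S_{0,2}$ to first order in $f$ at $f=0$. Two mechanisms contribute linear-in-$f$ terms: (a) direct differentiation of the explicit appearances of $f, \pr_u f, \pr_A f$, etc.\ in the formulas of Appendix~\ref{SECapp1231}; and (b) a Taylor expansion of each background Schwarzschild quantity $\tilde Q(\tilde u,\tilde v)$ evaluated at $\tilde u = u + f(u) = f(0)$ on $S_{0,2}$, which produces an extra contribution $(\pr_{\tilde u}\tilde Q)|_{\tilde u = 0}\cdot \dot f$. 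The essential background identities from \eqref{EQSSdoublenullEFstandardRelations} are $\pr_{\tilde u} r_M = -\Om_M^2$ and $\pr_{\tilde u}\Om_M = -\Om_M M/r_M^2$, together with the vanishing Schwarzschild values $\tilde\eta = \tilde\zeta = \tilde\chih = \tilde\chibh = \tilde\beta = \tilde\sigma = 0$ (see \eqref{EQspheredataSSM111222}).

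With this setup, I proceed quantity by quantity. For $\Omd$ and $\phid$ I linearize \eqref{EQexpressionOMEGASQUARED123} and the relation $\phi = r_M(\tilde u,\tilde v)\big|_{\tilde u = f(0)}$ at $f=0$; using $\pr_u \Om_M^2 = -2M\Om_M^2/r_M^2$ collapses $2\Om_M\dot\Om = 2\Om_M(\pr_{\tilde u}\Om_M)\dot f + \Om_M^2 \pr_u \dot f$ into the stated compact form $\Omd = (1/2\Om_M)\pr_u(\Om_M^2\dot f)$, while $\phid = (\pr_{\tilde u} r_M)\dot f = -\Om_M^2 \dot f$. For $\omtrchibd$ and $\omtrchid$, I linearize \eqref{EQformulatrchichih555} and the trace of \eqref{EQchiFINAL555expr}; the Schwarzschild vanishings eliminate most cross terms, leaving a trace Hessian contribution that, after combining with the Taylor expansion of $2\Om_M/r_M$, organizes into $\omtrchid = (2\Om_M^2/r_M^2)(\Ldo\dot f -(1-2\Om_M^2)\dot f)$. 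The $\etad$ formula comes from linearizing \eqref{EQetaFinal555}, where $\tilde\zeta = \tilde\chibh = 0$ kill most terms and the remaining pieces (the $\pr_A\pr_u f$ term and the Taylor contribution through $\pr_{\tilde u}\log\tilde\Om$) regroup into $(r_M/\Om_M^2)\di\,\pr_u(\Om_M^2 \dot f/r_M)$.

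For the tracefree tensor $\chihd$ the linearization of \eqref{EQchiFINAL555expr} reduces, modulo the background vanishings, to the trace-free part of $(2\Om_M/1)(\nabb_A\nabb_B \dot f)$, which by the definition of $\DDd_2^\ast$ in Appendix~\ref{SECellEstimatesSpheres} is exactly $-2\Om_M \DDd_2^\ast\di \dot f$. The components $\ombd$ and $\Du\ombd$ are obtained either by $\pr_u$-differentiating the formula for $\Omd$ directly, or equivalently by linearizing \eqref{EQombExpresssion4} and \eqref{EQformulaDUOMB55}. Finally, $\dot\rho$ and $\dot\beta$ are read off from the discussion of null curvature components in Section~\ref{SECnullCurvatureComponentsAPP}: at Schwarzschild the only linear-in-$f$ terms are those coming from Taylor-expanding $\tilde\rho = -2M/r_M^3$ (and its $\pr_A$-derivative) at the perturbed point $\tilde u = f(0)$, giving $\rhd = (\pr_{\tilde u}\tilde\rho)\dot f = -(6M\Om_M^2/r_M^4)\dot f$ and the analogous formula for $\dot\beta$.

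The main obstacle is bookkeeping rather than conceptual: the expression \eqref{EQchiFINAL555expr} in particular has many terms and one must systematically discard those that are quadratic in $f$ or vanish because of $\tilde\eta = \tilde\chih = \tilde\chibh = 0$, while simultaneously packaging the surviving trace/trace-free Hessian terms and the background Taylor contributions into the compact expressions stated in the lemma. Once these algebraic manipulations are executed using the Schwarzschild identities $\pr_u r_M = -\Om_M^2$, $\pr_u\Om_M = -\Om_M M/r_M^2$, all components fall into the claimed form, completing the proof.
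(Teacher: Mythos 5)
Your route is genuinely different from the paper's. The paper's proof of Lemma~\ref{LEMlinearizedTransversalSCHWARZSCHILD} does \emph{not} carry out the by-hand linearization of the formulas in Appendix~\ref{SECproofTEClemmasmoothness}; instead it observes that, to first order in $f$, the coordinates $(u,v,\theta^1,\theta^2)$ form a double null system, and then simply quotes Lemma~6.1.1 of \cite{DHR}, where the corresponding linearized pure gauge solutions at Schwarzschild are already computed. The paper explicitly acknowledges your approach as "the direct way" and then declines to pursue it. Your plan -- expand each Appendix~A expression to first order, track the two mechanisms (direct $f$-dependence plus the Taylor shift $\tilde u = f(0)$ of background quantities) using $\pr_u r_M = -\Om_M^2$ and $\pr_u\Om_M = -\Om_M M/r_M^2$, and exploit the background vanishings $\tilde\zeta = \tilde\eta = \tilde\chih = \tilde\chibh = 0$ -- is sound and does reproduce the stated formulas; I verified $\Omd$, $\phid$, $\etad$, and $\chihd$ go through exactly as you describe. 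The trade-off: your route is self-contained and clarifies exactly where each term comes from, while the paper's route is shorter and leverages a result proved once and for all in \cite{DHR}, at the cost of a coordinate-compatibility argument ((173) of \cite{DHR}) that you do not need.

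There is one genuine gap in your sketch, in the treatment of $\dot\beta$. You attribute it to "Taylor-expanding $\tilde\rho = -2M/r_M^3$ (and its $\pr_A$-derivative)," but $\tilde\rho$ is spherically symmetric in Schwarzschild, so $\pr_{\tilde A}\tilde\rho = 0$ and this mechanism produces nothing angular. The actual source of $\dot\beta$ is the mixing of frame vectors: from \eqref{EQpartialsRelation}, $\pr_{\th^A} = \pr_{\tth^A} + (\pr_A f)\pr_{\tilde u}$, and from \eqref{EQLvectorfieldFormula15}, $L$ acquires a tangential part $2\tilde\Om^2\tilde\gd^{AC}(\pr_C f)\pr_{\tth^A}$. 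Substituting these into $\beta_A = \tfrac12 \Rbf(\pr_A,\widehat L,\widehat\Lb,\widehat L)$ and using that the only nonvanishing Riemann component on the background is $\tilde\rho$ gives several terms proportional to $\tilde\rho\,\pr_A\dot f$; it is the \emph{sum} of these (not a single Taylor contribution) that yields the coefficient $-6M\Om_M/r_M^3$. Your heuristic as stated would undercount this. This is fixable with the same bookkeeping discipline you already apply to $\chihd$ and $\etad$, so it is a gap in the writeup rather than in the overall approach.
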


\begin{proof}[Proof of Lemma \ref{LEMlinearizedTransversalSCHWARZSCHILD}] The direct way to prove Lemma \ref{LEMlinearizedTransversalSCHWARZSCHILD} is to linearize $\PP_f$ by hand, using the explicit formulas of Appendix \ref{SECproofTEClemmasmoothness}. In the following we argue that $\dot{\PP}_f$ 
is readily calculated in \cite{DHR}.

Indeed, in \cite{DHR} the following mapping is studied. Let $(\tilde{u},\tilde{v},\tilde{\th}^1,\tilde{\th}^2)$ be Eddington-Finkelstein coordinates on the exterior region of a Schwarzschild spacetime of small mass $M>0$, see \eqref{EQSSformulaMETRIC}. Consider the sphere
\begin{align*} 
\begin{aligned} 
\tilde S = \tilde{S}_{0,2}:= \{ \tilde{u}=0, \tilde{v}=2 \}.
\end{aligned} 
\end{align*}
Given a smooth and sufficiently small scalar function $f=f(u,\th^1,\th^2)$, define new local coordinates $(u,v,\th^1,\th^2)$
\begin{align*} 
\begin{aligned} 
\tilde{u}= u+ f(u,\th^1,\th^2),\,\, \tilde{v}=v, \,\, \tilde{\th}^1=\th^1, \,\, \tilde{\th}^2=\th^2.
\end{aligned} 
\end{align*}
The coordinate system $(u,v,\th^1,\th^2)$ is not double null. However, it is shown in (173) in \cite{DHR} that $(u,v,\th^1,\th^2)$ is double null \emph{to first order in $f$}. 

Hence, to first order in $f$, the sphere data calculated with respect to $(u,v,\th^1,\th^2)$ on the sphere
\begin{align*} 
\begin{aligned} 
S= S_{0,2} := \{u=0, v=2\},
\end{aligned} 
\end{align*}
agrees with the sphere data $x_{0,2}:=\PP_f({\underline{\mathfrak{m}}}^M)$ constructed by our mapping $\PP_f$. Consequently, their linearizations in $f$ (evaluated at $f=0$ and Schwarzschild of mass $M>0$) agree. This linearization is calculated in Lemma 6.1.1 in \cite{DHR} and agrees with our expressions in Lemma \ref{LEMlinearizedTransversal}. We note that the expression for $\ombd$ follows from \eqref{EQssLinearizationRelations7778}.

We remark that in \cite{DHR} the linearization is calculated at Schwarzschild of mass $M>0$, but a straight-forward inspection shows that the calculation goes through for $M\geq0$. This finishes the proof of Lemma \ref{LEMlinearizedTransversalSCHWARZSCHILD}. \end{proof}

\ni Second, we have the following lemma for $\PP_{(j^1,j^2)}$. It is a corollary of Lemma 6.1.3 in \cite{DHR}, where we note that our notation connects to \cite{DHR} as follows,
\begin{align*} 
\begin{aligned} 
\widehat{\dot{\gd}} = r_M^2 \gdcd, \,\, \frac{\dot{\sqrt{\det\gd}}}{\sqrt{\det\gd}} = \frac{2\phid}{r_M}.
\end{aligned} 
\end{align*}
\begin{lemma}[Linearized angular perturbations] \label{LEMspherediffLINSCHWARZSCHILD} Let $\dot{\PP}_{(j_1,j_2)}^M$ denote the linearization of $\PP_{(j_1,j_2)}$ in $(j_1,j_2)$ at $(j_1,j_2)=(0,0)$ and Schwarzschild sphere data $\mathfrak{m}^M$. The nontrivial components of $\dot{\PP}^M_{(j_1,j_2)} (\dot{j}_1,\dot{j}_2)$ are given by
\begin{align*}
\phid= \frac{r_M}{2} \Ldo \dot{q}_1, \,\, \gdcd = 2 \DDd_2^\ast \DDd_1^\ast (\dot{q}_1,\dot{q}_2),
\end{align*}
where the scalar functions $q_1$ and $q_2$ are related to $\dot{j}^1$ and $\dot{j}^2$ by
\begin{align*} 
\begin{aligned} 
\dot{j}^1 d\th^1 + \dot{j}^2 d\th^2 = - r_M^2 \DDd_1^\ast(\dot{q}_1,\dot{q}_2).
\end{aligned} 
\end{align*}
\end{lemma}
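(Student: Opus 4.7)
The plan is to reduce the linearization of $\PP_{(j^1,j^2)}$ at Schwarzschild reference data to the corresponding Minkowski-type calculation by exploiting the fact that at $\mathfrak{m}^M$ almost all tensorial components of sphere data vanish. Inspecting the Schwarzschild reference values \eqref{EQspheredataSSM111222}, we see that
\begin{align*}
\chih = \chibh = \eta = \be = \beb = \a = \ab = \si = 0 \text{ at } \mathfrak{m}^M,
\end{align*}
while the scalar components $(\Om, \Om\trchi, \Om\trchib, \om, D\om, \omb, \Du\omb, \rh)$ are angular-independent functions. Since the transformation laws of Section \ref{SECapp1232} act on $1$-forms and symmetric $2$-tensors through multiplication by the Jacobian $J$ of \eqref{EQJacobianformula}, whose perturbation is linear in $(\dot{j}^1,\dot{j}^2)$, the first-order variation of any tensorial sphere-data component is proportional to the background value of that component itself, and thus vanishes. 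The only nontrivial contribution comes from the induced metric $\gd$, which is nonzero on $\mathfrak{m}^M$.

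First, I would linearize the transformation formula \eqref{EQchangeofInducedMetric} at $\tilde{\gd}_{AB} = r_M^2 \gac_{AB}$. Writing the perturbation of the Jacobian as $J = \mathrm{Id} + \dot J$ with $\dot J^A_{\,B} = \pr_B \dot{j}^A$, a straightforward expansion yields
\begin{align*}
\dot{\gd}_{AB} = r_M^2 \lrpar{\Nd^{\gac}_A \dot{j}_B + \Nd^{\gac}_B \dot{j}_A},
\end{align*}
where $\dot{j}_A := \gac_{AB} \dot{j}^B$ and $\Nd^{\gac}$ is the covariant derivative of $\gac$; this follows because the Christoffel correction terms match the derivatives of $\gac$ to leading order. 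Using \eqref{EQssLinearizationRelations7778}, I would decompose $\dot{\gd}$ into its conformal factor and its tracefree part, namely $\dot{\gd} = 2r_M \phid \gac + r_M^2 \gdcd$, so that
\begin{align*}
\phid = \frac{r_M}{2} \,\mathrm{div}^{\gac} \dot{j}, \qquad r_M^2 \gdcd_{AB} = r_M^2 \lrpar{\Nd^{\gac}\widehat{\otimes}\dot{j}}_{AB}.
\end{align*}

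Second, using the Hodge decomposition on $\SSS^2$ from Appendix \ref{SECellEstimatesSpheres}, I would write the $1$-form $\dot{j}_A d\th^A$ in the form $-r_M^2 \DDd_1^\ast(\dot{q}_1, \dot{q}_2)$ for uniquely determined scalar functions $(\dot{q}_1, \dot{q}_2)$ (orthogonal to the kernels of $\DDd_1^\ast$ acting on scalars). Applying the standard identities $\mathrm{div}^{\gac}\bigl(\DDd_1^\ast(\dot{q}_1, \dot{q}_2)\bigr) = -\Ldo \dot{q}_1$ and $\Nd^{\gac}\widehat{\otimes}\DDd_1^\ast(\dot{q}_1,\dot{q}_2) = -2\DDd_2^\ast\DDd_1^\ast(\dot{q}_1,\dot{q}_2)$ from the definitions of $\DDd_1^\ast,\DDd_2^\ast$ in Appendix \ref{SECellEstimatesSpheres} then yields the claimed identities
\begin{align*}
\phid = \frac{r_M}{2}\Ldo \dot{q}_1, \qquad \gdcd = 2\DDd_2^\ast\DDd_1^\ast(\dot{q}_1, \dot{q}_2).
\end{align*}

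Finally, as an alternative verification, I would observe that this computation is essentially the one carried out in Lemma 6.1.3 of \cite{DHR} under the identifications $\widehat{\dot{\gd}} = r_M^2 \gdcd$ and $\dot{\sqrt{\det\gd}}/\sqrt{\det\gd} = 2\phid/r_M$. Although \cite{DHR} states the result at Schwarzschild of positive mass, an inspection of their proof reveals that the argument only uses the angular-independence of $\Om_M$ and $r_M$ together with the Hodge decomposition on the round sphere, so it extends verbatim to $M \geq 0$. The main (mild) obstacle is bookkeeping of sign and normalization conventions for $\DDd_1^\ast$ and $\DDd_2^\ast$ relative to those of \cite{DHR}, but this is purely routine and determined uniquely by the requirement that $\phid$ and $\gdcd$ coincide with the expressions in \eqref{EQssLinearizationRelations7778}.
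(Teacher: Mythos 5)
Your alternative verification via Lemma 6.1.3 of \cite{DHR} is exactly the paper's own proof: the paper cites that lemma directly, supplying the same dictionary $\widehat{\dot{\gd}} = r_M^2\gdcd$, $\dot{\sqrt{\det\gd}}/\sqrt{\det\gd}=2\phid/r_M$. That part is fine.

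Your primary argument — the direct Lie-derivative computation — has the right shape (linearize the pullback to a Lie derivative, split off the conformal factor, then Hodge-decompose) but a concrete factor-of-$r_M^2$ bookkeeping error that you flag as ``routine'' and then gloss over. From your own displayed formulas $\phid = \tfrac{r_M}{2}\,\mathrm{div}^{\gac}\dot j$ and $\gdcd = \Nd^{\gac}\widehat{\otimes}\dot j = -2\DDd_2^\ast\dot j$, substituting $\dot j_A\,d\th^A = -r_M^2\DDd_1^\ast(\dot q_1,\dot q_2)$ with $\dot j_A := \gac_{AB}\dot j^B$ yields $\phid = \tfrac{r_M^3}{2}\Ldo\dot q_1$ and $\gdcd = 2r_M^2\DDd_2^\ast\DDd_1^\ast(\dot q_1,\dot q_2)$, both off by $r_M^2$ from the claimed identities. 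The identification that actually makes your computation close is $\gac_{AB}\dot j^B\,d\th^A = -\DDd_1^\ast(\dot q_1,\dot q_2)$ (no $r_M^2$); the $r_M^2$ in the paper's stated relation $\dot j^1 d\th^1 + \dot j^2 d\th^2 = -r_M^2\DDd_1^\ast(\dot q_1,\dot q_2)$ is absorbing the fact that the vectorfield is being dualized with the induced metric $\gd = r_M^2\gac$, not with the unit round $\gac$ you chose. So as written, steps one and two do not yield the claimed identities, and the claim that the normalization is ``determined uniquely'' by \eqref{EQssLinearizationRelations7778SCHWARZSCHILD} is exactly the point where the factor of $r_M^2$ has to be resolved rather than asserted.
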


\subsection{Linearized constraint functions at Schwarzschild of mass $M\geq0$}\label{SECderivationConstraintFunctions2SCHWARZSCHILD}
In this section we linearize the constraint functions $(\CC_i(x))_{1\leq i \leq 10}$ at Schwarzschild of mass $M\geq0$, that is, at $x=\mathfrak{m}^M$. The linearization procedure is adapted from \cite{DHR}: We expand the sphere data
\begin{align*} 
\begin{aligned} 
x =& \lrpar{\Om, \gd, \Om\trchi, \chih, \Om\trchib, \chibh, \eta, \om, D\om, \omb, \Du\omb, \a,  \ab} \\
=& \lrpar{\Om_M, r_M^2 \gac, \frac{2\Om_M}{r}, 0, -\frac{2\Om_M}{r}, 0, 0, \frac{M}{r_M^2}, -\frac{2M\Om_M^2}{r_M^3}, -\frac{M}{r_M^2}, -\frac{2M\Om_M^2}{r_M^3}, 0,0}\\
&+ \varep \cdot \lrpar{\dot\Om, \dot{\gd}, \omtrchid, \dot\chih, \omtrchibd, \dot\chibh, \dot\eta, \omd, D\omd, \ombd, \Du\ombd, \ad, \abd} + \mathcal{O}(\varep^2),
\end{aligned} 
\end{align*}
and differentiate in $\varep$ at $\varep=0$. Here, the Schwarzschild quantities $r_M=r_M(u,v)$ and $\Om_M=\Om_M(u,v)$ are defined in \eqref{EQdefRbyUV} and \eqref{EQSSnullLapseFormula}, respectively.

The proof of the next lemma follows by explicit calculation, see also Section 5 in \cite{DHR}.

\begin{lemma}[Linearization of constraint functions at Schwarzschild] \label{LEMlinearizedConstraintsSCHWARZSCHILD} Let $M\geq0$ be a real number and let $(\dot{\CC}_i^M)_{1\leq i \leq 10}$ denote the linearization of the constraint functions $({\CC}_i)_{1\leq i \leq 10}$ at Schwarzschild of mass $M$. Then it holds that
\begin{align*} 
\begin{aligned} 
\dot{\CC}^M_1=& D^2 \phid - 2 \Om_M^2 \omd - \frac{M}{r_M} \omtrchid - \frac{2M\Om_M^2}{r_M^3} \phid, \\
\dot{\CC}^M_2=& r_M^2 \lrpar{2D\lrpar{\frac{\phid}{r_M}}-\omtrchid} ,\\
\dot{\CC}^M_3=& r_M^2 D\gdcd - 2 \Om_M \chihd,\\
\dot{\CC}^M_4=& \frac{1}{r_M^2} D\lrpar{r_M^2 \etad} - \Om_M \lrpar{\frac{1}{r_M^2} \Divdo \chihd - \frac{1}{2\Om_M} \di \omtrchid + \frac{4}{r_M} \di \Omd}, \\
\dot{\CC}^M_5=& \frac{1}{r_M^2} D\lrpar{r_M^2 \omtrchibd} - \frac{2\Om_M^2}{r_M} \omtrchid + \frac{2\Om_M^2}{r_M^2} \Divdo \lrpar{\etad-\frac{2}{\Om_M}\di \Omd} + \frac{4\Om_0}{r_M^2} \Omd +2 \Om_M^2 \Kd,
\end{aligned} 
\end{align*}
and moreover
\begin{align*} 
\begin{aligned} 
\dot{\CC}^M_6=& \Om_M r_M D\lrpar{\frac{\chibhd}{r_M}} + \chibhd  \frac{\Om_M M}{r_M^2} - 2 \Om_M^2 \DDd_2^\ast \lrpar{\etad-\frac{2}{\Om_M}\di\Omd} -\frac{\Om_M^3}{r_M} \chihd ,\\
\dot{\CC}^M_7=& D\ombd - \frac{2\Om_M}{r_M^2} \Omd - \Om_M^2 \Kd + \frac{\Om_M^2}{2r_M} \lrpar{\omtrchid-\omtrchibd}, \\
\dot{\CC}^M_8=& D\abd - \frac{\Om_M^2}{r_M} \abd + \frac{2M}{r_M^2} \abd - 2 \DDd_2^\ast \lrpar{\frac{\Om_M}{r_M^2} \Divdo \chibhd - \half \di \omtrchibd - \frac{\Om_M^2}{r_M} \etad}- \frac{6M\Om_M}{r_M^3} \chibhd, \\
\dot{\CC}^M_9=& D \lrpar{\Du \ombd} + \frac{12 M \Om_M^3}{r_M^2} \lrpar{\frac{\di\Omd}{\Om_M}-\etad} + \frac{2M\Om_M^2}{r_M^3} \lrpar{\frac{3}{2} \omtrchibd - 2 \ombd} \\
&+ \lrpar{\frac{2\Om_M}{r_M^2} \Omd + \Om_M^2 \Kd - \frac{\Om_M^2}{2r_M} \lrpar{\omtrchid-\omtrchibd}} \lrpar{-\frac{3\Om_M^2}{r_M} + \frac{2M}{r_M^2}} \\
&- \frac{\Om_M^3}{r_M^2} \Divdo \lrpar{\frac{1}{r_M^2} \Divdo \chibhd - \frac{1}{2\Om_M} \di \omtrchibd - \frac{\Om_M}{r_M} \etad}.\\
\CCd^M_{10}=& \Om_M \ad + D\chihd - \frac{M}{r_M^2} \chihd.
\end{aligned} 
\end{align*}
\end{lemma}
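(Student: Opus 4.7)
\textbf{Proof plan for Lemma \ref{LEMlinearizedConstraintsSCHWARZSCHILD}.} The proof is a direct but lengthy computation: substitute the expansion $x = \mathfrak{m}^M + \varepsilon \dot{x} + O(\varepsilon^2)$ into each of the ten constraint functions $\CC_i$ defined in Section \ref{SECconformalMethod1112}, collect terms of order $\varepsilon$, and identify each resulting expression with the claimed $\dot{\CC}^M_i$. The Schwarzschild background values are recorded in \eqref{EQspheredataSSM111222}, and the key kinematic identities from \eqref{EQSSdoublenullEFstandardRelations} are $\pr_v \Om_M = \Om_M M / r_M^2$, $\pr_v r_M = \Om_M^2$, together with the fact that $D$ reduces to $\Lied_{\pr_v}$ on the Schwarzschild background (since $b \equiv 0$ for the reference foliation). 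These will be used repeatedly to commute $D$ with $r_M$- and $\Om_M$-factors.

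My plan is to proceed equation by equation, in the order in which the constraint functions were derived in Section \ref{SECderivationConstraintFunctions1}, so that later linearizations can draw on earlier ones. First I would handle $\CC_1$, $\CC_2$, $\CC_3$, which involve only $\phi$, $\gd$, $\chi$, $\Om$ and reduce to straightforward products once one uses $\Om_M \trchi(\mathfrak{m}^M) = 2\Om_M^2/r_M$ and $D\phi(\mathfrak{m}^M) = r_M\Om_M^2/r_M \cdot r_M = \Om_M^2 r_M$ (giving $D\phid - 2\Om_M^2\Omd$-type structures on linearization, plus curvature corrections proportional to $M/r_M^3$). For $\CC_4$ and $\CC_5$, the linearization of $\Nd, \Divd$ and of the Gauss curvature $K$ via \eqref{EQgaussDHR} is needed; since $\gd(\mathfrak{m}^M) = r_M^2 \gac$, the covariant derivatives at Schwarzschild coincide with the round ones on $r_M\gac$, which is what produces the $\Divdo, \di$ operators in the statement. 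For $\CC_6$ and $\CC_8$ the operator $\DDd_2^\ast$ appears by linearizing $\Nd \widehat{\otimes}$ at $\gd = r_M^2 \gac$, and the $M/r_M^2$-corrections stem from the background $\om = M/r_M^2$ (and the $-6M\Om_M/r_M^3 \chibhd$ term in $\dot{\CC}^M_8$ arises from $-3\Om\chibh \cdot \rho$ with $\rho = -2M/r_M^3$).

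The main obstacles are the two transport equations for $\omb$ and $\Du\omb$, namely $\dot{\CC}^M_7$ and $\dot{\CC}^M_9$. For $\dot{\CC}^M_7$ one linearizes $\CC_7 = D\omb - \Om^2(\dots + K + \tfrac14 \trchi \trchib - \tfrac12 (\chih,\chibh))$ using $\trchi \trchib(\mathfrak{m}^M) = -4\Om_M^2/r_M^2$; the linearized quadratic-in-$\trchi$ term produces exactly the $\tfrac{\Om_M^2}{2r_M}(\omtrchid - \omtrchibd)$ combination, while the background $K = 1/r_M^2$ yields the constant factors in front of $\Omd$. The equation $\dot{\CC}^M_9$ is the hardest: it requires linearizing the long nonlinear expression \eqref{EQDUOMU1} around a background where $\omb, \chi, \rho, \om$ are all nonzero. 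I would handle this by first grouping the right-hand side of $\CC_9$ into (i) the ``$\rho$-times-Raychaudhuri-type'' block, which at Schwarzschild linearizes to the factor $(-3\Om_M^2/r_M + 2M/r_M^2)$ multiplying the Gauss--Codazzi-like combination; (ii) the Bianchi-block $-\Om^3\Divd(\Divd\chibh - \cdots)$, which produces the last line; and (iii) the cross-terms $\chib(\eta,\cdot)$ and $(\beb,\cdot)$, whose linearizations at Schwarzschild (where $\beb = \eta = 0$ and $\chib = -\tfrac{\Om_M}{r_M}\gd$) give the factor $12M\Om_M^3/r_M^2$ multiplying $(\di\Omd/\Om_M - \etad)$. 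Collecting these three blocks gives the stated $\dot{\CC}^M_9$. No nontrivial cancellations beyond those already built into the definitions of $\CC_i$ are required, so the whole proof reduces to careful bookkeeping and can be stated in one line (``direct computation'').
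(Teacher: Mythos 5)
Your plan matches the paper's: the paper offers no details at all, remarking only that the lemma ``follows by explicit calculation, see also Section 5 in \cite{DHR}'', so a structured term-by-term linearization exactly as you outline is the intended argument, and the grouping you propose for $\dot{\CC}^M_9$ (the $\rho$-coefficient block, the $\Divd\Divd$ Bianchi block, and the cross-terms producing the $12M\Om_M^3/r_M^2$ coefficient) is the right way to organize that long computation. Two small slips in your sketch, neither of which harms the plan: $D\phi(\mathfrak{m}^M)=\Om_M^2$, not $\Om_M^2 r_M$ (since $\phi=r_M$ and $\pr_v r_M=\Om_M^2$); and the linearization of $\CC_1$ produces the coefficient $-2\Om_M^2\,\omd$, not $-2\Om_M^2\,\Omd$ (the distinction matters because $\omd=D(\Omd/\Om_M)$ at Schwarzschild, so $\omd\neq D\Omd$ when $M>0$). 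It would also be cleaner in the $\dot{\CC}^M_8$ discussion to note that the $-\tfrac{6M\Om_M}{r_M^3}\chibhd$ term comes from linearizing $-3\Om\chibh\bigl(K+\tfrac14\trchi\trchib-\tfrac12(\chih,\chibh)\bigr)=+3\Om\chibh\,\rho$ at $\chibh=0$, $\rho=-2M/r_M^3$, which fixes the sign you left implicit.
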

\begin{remark} In addition to the above, we have by \eqref{EQdefdecomposition1} and \eqref{DEFricciCoefficients} that
\begin{align} 
\begin{aligned} 
\dot{\gd} = 2r_M \phid \gac + r_M^2 \gdcd, \,\, \omd = D\lrpar{\frac{\Omd}{\Om_M}}, \,\, \ombd = \Du\lrpar{\frac{\Omd}{\Om_M}}, \,\, \etabd =& - \etad + \frac{2}{\Om_M} \di \Omd.
\end{aligned} \label{EQssLinearizationRelations7778SCHWARZSCHILD}
\end{align}
Moreover, by (242) in \cite{DHR} the linearization of the Gauss curvature $\Kd$ is given by
\begin{align} \label{EQgaussDHRSCHWARZSCHILD}
\Kd =& \frac{1}{2r_M^2} \Divdo \Divdo \gdcd - \frac{1}{r_M^3} (\Ldo +2) \phid.
\end{align}
Moreover, using that for Schwarzschild sphere data, $\phi= r_M$ and $r=r_M$, we have that $\dot{r}^{[\geq1]}=0$ and
\begin{align} 
\begin{aligned} 
\dot{r}^{(0)} = \phid^{(0)}.
\end{aligned} \label{EQareaRADIUSlinearizationSCHWARZSCHILD}
\end{align}
\end{remark}

\subsection{Linearized transport equations for $(\mathbf{E},\mathbf{P},\mathbf{L},\mathbf{G})$ at Schwarzschild} \label{SEClinearizedTRANSPORTEPLGSCHWARZSCHILD} In this section we linearize the charges $(\dot{\mathbf{E}},\dot{\mathbf{P}},\dot{\mathbf{L}},\dot{\mathbf{G}})$ at Schwarzschild of mass $M\geq0$, and analyze their transport equations along $\HH$.

First, from Definition \ref{DEFnonlinearcharges6} and \eqref{EQGaussEquation} and \eqref{EQgausscodazzinonlinear1}, we recall that

\begin{align} 
\begin{aligned} 
-\frac{8\pi}{\sqrt{4\pi}} \mathbf{E} :=& \lrpar{r^3 \lrpar{\rh+r\Divd \be}}^{(0)} \\
=&- \lrpar{r^3 \lrpar{K+\frac{1}{4} \trchi\trchib -\half (\chih,\chibh)}}^{(0)}\\
&- \lrpar{r^4 \lrpar{\Divd \Divd \chih - \half \Ld \trchi + \Divd \lrpar{\chih \cdot \zeta- \half \trchi \zeta}}}^{(0)},
\end{aligned} \label{EQlinEPLGKerr1}
\end{align}
and
\begin{align} 
\begin{aligned} 
- \frac{8\pi}{\sqrt{\frac{4\pi}{3}}} \mathbf{P}^m :=& \lrpar{r^3 \lrpar{\rh+r\Divd \be}}^{(1m)}\\
=& - \lrpar{r^3 \lrpar{K+\frac{1}{4} \trchi\trchib -\half (\chih,\chibh)}}^{(1m)}\\
&- \lrpar{r^4\lrpar{\Divd \Divd \chih - \half \Ld \trchi + \Divd \lrpar{\chih \cdot \zeta- \half \trchi \zeta}}}^{(1m)},\\
\frac{16\pi}{\sqrt{\frac{8\pi}{3}}}\mathbf{L}^m :=& \lrpar{r^3 \lrpar{\di\trchi+ \trchi \lrpar{\eta-\di \log \Om}}}^{(1m)}_H, \\
\frac{16\pi}{\sqrt{\frac{8\pi}{3}}} \mathbf{G}^m :=& \lrpar{ r^3 \lrpar{\di\trchi+ \trchi \lrpar{\eta-\di \log \Om}}}^{(1m)}_E. 
\end{aligned} \label{EQlinEPLGKerr2}
\end{align}

\ni Linearizing these expressions at Schwarzschild of mass $M\geq0$, see \eqref{EQspheredataSSM111222}, and using \eqref{EQgaussDHR} and \eqref{EQareaRADIUSlinearization}, we get the explicit expressions

\begin{align*}
\begin{aligned} 
-\frac{8\pi}{\sqrt{4\pi}}\dot{\mathbf{E}} =& -\frac{6M\phid^{(0)}}{ r_M} +2 \phid^{(0)}-2 r_M \Om_M \Omd^{(0)} + \frac{r_M^2}{2} \omtrchid^{(0)}-\frac{r_M^2}{2} \omtrchibd^{(0)}, \\
- \frac{8\pi}{\sqrt{\frac{4\pi}{3}}} \dot{\mathbf{P}}^m =& 2 r_M\lrpar{2-\Om_M} \Omd^{(1m)} + \frac{r_M^2}{2}\lrpar{1-\frac{2}{\Om_M}}\omtrchid^{(1m)} \\
&- \frac{r_M^2}{2}\omtrchibd^{(1m)} + r_M\Om_M \Divdo \etad^{(1m)} ,\\
\frac{16\pi}{\sqrt{\frac{8\pi}{3}}}\dot{\mathbf{L}}^m =&  2 r_M^2 \Om_M \etad^{(1m)}_H, \\
\frac{16\pi}{\sqrt{\frac{8\pi}{3}}}\dot{\mathbf{G}}^m =& \frac{r_M^3}{\Om_M} \di \omtrchid^{(1m)}_E +2 r_M^2 \Om_M \etad^{(1m)}_E - 4 r_M^2 \di \Omd^{(1m)}_E,
\end{aligned} 
\end{align*}
where we used that for scalar functions $f$, $(\Ldo f)^{[1]} = -2 f^{[1]}$, see Appendix \ref{SECellEstimatesSpheres}.

By applying the homogeneous linearized null structure at Schwarzschild, see Lemma \ref{LEMlinearizedConstraintsSCHWARZSCHILD}, together with \eqref{EQgaussDHRSCHWARZSCHILD} and \eqref{EQareaRADIUSlinearizationSCHWARZSCHILD}, it is straight-forward to derive transport equations for these linearized charges at Schwarzschild. The resulting equations are summarized in the following lemma.

\begin{lemma}[Linearized transport equations for charges at Schwarzschild] \label{LEMlinearizedtransportChargesSS} The following hold for $M\geq0$ and $m=-1,0,1$,
\begin{align*} 
\begin{aligned} 
-\frac{8\pi}{\sqrt{4\pi}} D\lrpar{\dot{\mathbf{E}}} =& M \omtrchid^{(0)}, \\
- \frac{8\pi}{\sqrt{\frac{4\pi}{3}}} D\lrpar{\dot{\mathbf{P}}^m} =& -2 \lrpar{2(1-\Om_M) -\frac{6M}{r_M}} \Omd^{(1m)}\\
&-  \lrpar{M\lrpar{\frac{1}{\Om_M}-3}+r_M (1-\Om_M)} \omtrchid^{(1m)} \\
&-  \lrpar{\frac{M}{r_M}\lrpar{2-3\Om_M} + (\Om_M -1)} (\Divdo \etad)^{(1m)}, \\
\frac{16\pi}{\sqrt{\frac{8\pi}{3}}}D\lrpar{\dot{\mathbf{L}}^m} =& 2\Om_M M \etad_H^{(1m)}, \\
\frac{16\pi}{\sqrt{\frac{8\pi}{3}}}D\lrpar{\dot{\mathbf{G}}^m} =& -\frac{M r_M}{\Om_M} (\di \omtrchid)_E^{(1m)}+2 \Om_M M \etad_E^{(1m)} \\
&- \frac{4M \Om_M}{r_M} (\di\phid)_E^{(1m)}-4M \lrpar{\di \Omd}_E^{(1m)}
\end{aligned} 
\end{align*}

\end{lemma}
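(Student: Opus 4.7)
The strategy is a direct computation: apply $D = \Lied_{\partial_v}$ to each of the four explicit expressions for $\dot{\mathbf{E}}, \dot{\mathbf{P}}^m, \dot{\mathbf{L}}^m, \dot{\mathbf{G}}^m$ displayed just above the lemma, and then substitute the homogeneous linearized null structure equations at Schwarzschild, i.e.\ $\dot{\CC}^M_i=0$ for $1\leq i\leq 10$ as recorded in Lemma \ref{LEMlinearizedConstraintsSCHWARZSCHILD}, together with the auxiliary relations \eqref{EQssLinearizationRelations7778SCHWARZSCHILD}, \eqref{EQgaussDHRSCHWARZSCHILD}, \eqref{EQareaRADIUSlinearizationSCHWARZSCHILD}, and the Schwarzschild derivative identities \eqref{EQSSdoublenullEFstandardRelations}, namely $\partial_v r_M=\Om_M^2$ and $\partial_v\Om_M=\Om_M M/r_M^2$. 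Since the charges are defined by spherical harmonic projections against the round metric $\gac$ (which is $v$\nobreakdash-independent) the $D$-derivative commutes with the projection, so one may work mode by mode.

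First I would handle $\dot{\mathbf{E}}$: differentiating
\[
-\tfrac{8\pi}{\sqrt{4\pi}}\dot{\mathbf{E}} = -\tfrac{6M\phid^{(0)}}{r_M}+2\phid^{(0)}-2r_M\Om_M\Omd^{(0)}+\tfrac{r_M^2}{2}\omtrchid^{(0)}-\tfrac{r_M^2}{2}\omtrchibd^{(0)},
\]
applying $D$ term by term, and using $\dot{\CC}^M_1=\dot{\CC}^M_2=0$ to replace $D\phid^{(0)}$, $D\Omd^{(0)}$, $D\omtrchid^{(0)}$, together with $\dot{\CC}^M_5=0$ to replace $D\omtrchibd^{(0)}$. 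The mode $l=0$ of $\Divdo\chihd$, $\Divdo\etad$, $\Divdo\Divdo\gdcd$ all vanish (divergence kills $l=0$), and so does $\Kd^{(0)}$ after using \eqref{EQgaussDHRSCHWARZSCHILD}. After substitution all pure Schwarzschild $\Om_M^2$-weighted terms must cancel (since in the background $\mathbf{E}=M$ is constant along $\HH$), leaving only the $M$-proportional remainder $M\,\omtrchid^{(0)}$. The same scheme applied to $\dot{\mathbf{P}}^m$ is more delicate because modes $l=1$ of $\Divdo$ of a symmetric tracefree $2$-tensor vanish but those of a vectorfield do not; I would use again $\dot{\CC}^M_2=\dot{\CC}^M_4=\dot{\CC}^M_5=0$ to rewrite $D\phid$, $D(r_M^2\etad)$, $D(r_M^2\omtrchibd)$ on the mode $l=1$, collect Schwarzschild background factors carefully (noting $\partial_v r_M=\Om_M^2$ produces the combination $(1-\Om_M)$ and $2M/r_M$), and verify the resulting coefficients agree with the statement.

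For $\dot{\mathbf{L}}^m$ the calculation is the shortest: $D(r_M^2\Om_M\etad_H^{(1m)})$ expands into $r_M^2\Om_M\,(D\etad)_H^{(1m)}$ plus background derivatives of $r_M^2\Om_M$. The equation $\dot{\CC}^M_4=0$ gives $D(r_M^2\etad)$ as the sum of a divergence of $\chihd$, a gradient of $\omtrchid$, and $\di\Omd$; when projected onto the magnetic mode $l=1$ the three gradient/curl-free terms all vanish because the magnetic projection of $\di f$ is zero for any scalar $f$, and the magnetic mode of $\Divdo\chihd$ vanishes by the mode\nobreakdash-$l=1$ identity for symmetric tracefree 2-tensors. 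What remains is a pure background contribution proportional to $\partial_v(r_M^2\Om_M)\cdot\etad_H^{(1m)}$, and $\partial_v(r_M^2\Om_M)=2r_M\Om_M^3+M$ reduces, after subtracting the Minkowski piece that cancels against the $r_M^2\Om_M\,D\etad$ contribution, to the advertised $2\Om_M M\,\etad_H^{(1m)}$. For $\dot{\mathbf{G}}^m$ I would repeat the same procedure on the electric mode; here the gradient terms $\di\omtrchid$, $\di\Omd$, $\di\phid$ survive and produce the four contributions on the right-hand side, again with the coefficient $M$ emerging from the failure of the Minkowski cancellation $1-\Om_M^2=2M/r_M$.

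The main obstacle will be bookkeeping rather than conceptual: in each of the four computations numerous terms involving $\Om_M^2=1-2M/r_M$ and $\partial_v r_M=\Om_M^2$ must be tracked and shown to cancel up to an overall factor of $M$, consistent with the fact that at $M=0$ the charges are conserved (so $D\dot{\mathbf{E}}=D\dot{\mathbf{P}}^m=D\dot{\mathbf{L}}^m=D\dot{\mathbf{G}}^m=0$ must follow from the Minkowski linearized conservation laws $D\QQ_0=0$ and $D\QQ_2^{[\leq 1]}=0$ proved in Section \ref{SEClinearizedCHARGESMinkowskiSTEF89}, cf.\ \eqref{EQREMchargeslinatMinkowski}). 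This $M=0$ consistency check provides a strong internal test of the algebra, and I would use it at the end of each computation to confirm that the coefficients have been collected correctly.
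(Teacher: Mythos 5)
Your plan takes precisely the same route as the paper, whose ``proof'' is just the single sentence that the transport equations follow straightforwardly from the homogeneous linearized Schwarzschild constraints of Lemma~\ref{LEMlinearizedConstraintsSCHWARZSCHILD} together with~\eqref{EQgaussDHRSCHWARZSCHILD} and~\eqref{EQareaRADIUSlinearizationSCHWARZSCHILD}. The $M=0$ sanity check you propose is a good one and is consistent with the structure the paper appeals to afterwards (cf.~\eqref{EQexpansionOMEGAc} and~\eqref{EQMsmalllinearizedEPLGSS99}).

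There is, however, a concrete computational error in the outline. You claim that the mode~$l=0$ projection $\Kd^{(0)}$ vanishes after using~\eqref{EQgaussDHRSCHWARZSCHILD}. That is false. From
$\Kd = \tfrac{1}{2r_M^2}\Divdo\Divdo\gdcd - \tfrac{1}{r_M^3}(\Ldo+2)\phid$,
only the $\Divdo\Divdo\gdcd$ term dies at $l=0$ (a double divergence of a tracefree symmetric $2$-tensor has no $l\leq1$ modes), whereas $(\Ldo+2)$ on $l=0$ acts as multiplication by $2$, so $\Kd^{(0)} = -\tfrac{2}{r_M^3}\phid^{(0)}\neq 0$. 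This term cannot be dropped: it enters $D\omtrchibd^{(0)}$ through $\dot{\CC}^M_5=0$, and the resulting $-\tfrac{r_M^2}{2}\cdot\tfrac{4\Om_M^2}{r_M^3}\phid^{(0)} = -\tfrac{2\Om_M^2}{r_M}\phid^{(0)}$ is exactly what cancels the $+\tfrac{2\Om_M^2}{r_M}\phid^{(0)}$ produced by $D(2\phid^{(0)})$ via $\dot{\CC}^M_2=0$. If you set $\Kd^{(0)}=0$ you would be left with a spurious $\phid^{(0)}$ residual that is not $O(M)$, contradicting both the statement and your own $M=0$ consistency check. There is also a small algebraic slip in your $\dot{\mathbf{L}}$ sketch: $\partial_v(r_M^2\Om_M) = 2r_M\Om_M^3 + \Om_M M$ (using $\partial_v r_M=\Om_M^2$ and $\partial_v\Om_M=\Om_M M/r_M^2$), not $2r_M\Om_M^3 + M$; it happens not to change the final cancellation in that case, but it should be fixed. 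Aside from these two points the strategy is sound and matches the paper's.
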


\ni By definition of $\Om_M$ in \eqref{EQspheredataSSM111222} it holds that for $M$ small,
\begin{align} 
\begin{aligned} 
\vert \Om_M-1 \vert \les M,
\end{aligned} \label{EQexpansionOMEGAc}
\end{align}
so that we can write the equations of Lemma \ref{LEMlinearizedtransportChargesSS} for $M\geq0$ small as
\begin{align} 
\begin{aligned} 
D\lrpar{\dot{\mathbf{E}}} =& \OO(M) \omtrchid^{(0)}, \\
D\lrpar{\dot{\mathbf{P}}^m} =& \OO(M)\Omd^{(1m)}+ \OO(M) \omtrchid^{(1m)} + \OO(M) (\Divdo \etad)^{(1m)}, \\
D\lrpar{\dot{\mathbf{L}}^m} =& \OO(M) \etad_H^{(1m)}, \\
D\lrpar{\dot{\mathbf{G}}^m} =& \OO(M) (\di \omtrchid)_E^{(1m)}+\OO(M) \etad_E^{(1m)} \\
&+ \OO(M)(\di\phid)_E^{(1m)}+\OO(M) \lrpar{\di \Omd}_E^{(1m)},
\end{aligned} \label{EQMsmalllinearizedEPLGSS99}
\end{align}
where $\OO(M)$ denotes terms that are bounded by $M$.

\section{Hodge systems and Fourier theory on $2$-spheres} \label{SECellEstimatesSpheres}
\ni In this Section \ref{SEChodgesystems} we recall the theory of $2$-dimensional Hodge systems, see also \cite{ChrKl93}. In Section \ref{SECfourierSpheres} we recapitulate the definition and properties of tensor spherical harmonics, following the notation of \cite{Czimek1}. In Section \ref{SECspecAnalysis} we use tensor spherical harmonics to analyse differential operators which appear in this paper.
\subsection{Hodge systems on Riemannian $2$-spheres}\label{SEChodgesystems}

\begin{definition}[Hodge operators] Let $(S, \gd)$ be a Riemannian $2$-sphere. Define
\begin{enumerate}
\item for a $1$-form $X_A$, 
\begin{align*}
\DDd_1(X) := (\Divd X, \Curld X).
\end{align*}
\item for a $2$-tensor $W_{AB}$, 
\begin{align*}
\DDd_2(W)_C := (\Divd W)_C.
\end{align*}
\item for a pair of functions $(f_1,f_2)$, 
\begin{align*}
\DDd_1^{\ast}(f_1,f_2):= -\di f_1 + {}^\ast\di f_2.
\end{align*}
\item for a $1$-form $X_A$,
\begin{align*}
\DDd_2^{\ast}(X)_{AB} := -\half \left( \Nd_A X_B + \Nd_B X_A - (\Divd X) \gd_{AB} \right).
\end{align*}
\end{enumerate}
Throughout the paper we abuse notation by denoting $\DD_2$ as $\Divd$. In the following, we use on the round sphere $(S_{u,v},(v-u)^2\gac)$ the notation
\begin{align*}
\DDdo_1 := (v-u)^2 \DDd_1, \,\, \DDdo_2 := (v-u)^2 \DDd_2.
\end{align*}
\end{definition}

\ni The following lemma is a paraphrase of the material in \cite{ChrKl93}.
\begin{lemma}The following holds.
\begin{enumerate}
\item The kernels of $\DDd_1$ and $\DDd_2$ are trivial.
\item The kernel of $\DDd_1^\ast$ consists of pairs of constant functions $(f_1,f_2)=(c_1,c_2)$.
\item The kernel of $\DDd_2^\ast$ consists of the set of conformal Killing vectorfields (a $6$-dimensional space on the round sphere).
\item The $L^2$-range of $\DDd_1$ consists of all pairs of functions $(f_1,f_2)$ on $S$ with vanishing mean.
\item The $L^2$-range of $\DDd_2$ consists of all $L^2$-integrable $1$-forms on $S$ which are orthogonal to the conformal Killing vectorfields.

\item The operators $\DDd_1^\ast$ and $\DDd_2^\ast$ are conformally invariant. 
\end{enumerate}
\end{lemma}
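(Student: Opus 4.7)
The plan is to address the six claims in the order (6), (2), (3), (1), (4), (5), since the conformal invariance statement is essentially metric-algebraic and makes the kernel computations coordinate-free, the pointwise kernel descriptions (2) and (3) are the most elementary, and the triviality statement (1) plus the range statements (4), (5) are naturally organized through the $L^2$ adjoint pairing $(\DDd_1, \DDd_1^\ast)$ and $(\DDd_2, \DDd_2^\ast)$.

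For (6), I would observe that $\di f$ on functions is metric-independent, and that the Hodge star $\ast$ on $1$-forms over a $2$-manifold is conformally invariant: in local coordinates its coefficients carry a factor $g^{ab}\sqrt{g}$, and under $\gd \mapsto e^{2\om}\gd$ the factors $e^{-2\om}$ from $g^{ab}$ and $e^{2\om}$ from $\sqrt{g}$ cancel. This shows $\DDd_1^\ast(f_1,f_2) = -\di f_1 + \dual\di f_2$ is conformally invariant. For $\DDd_2^\ast$, I would use the pointwise identity $2\DDd_2^\ast X = -(\LL_X\gd - (\Divd X)\gd)$ and verify that under $\gd\mapsto e^{2\om}\gd$ this expression picks up an overall factor of $e^{2\om}$ (the two pieces $2X(\om)e^{2\om}\gd$ coming from $\LL_X\gd$ and from the change $\Divd_{e^{2\om}\gd}X = \Divd_\gd X + 2X(\om)$ cancel), so the zero set is preserved — this is the content of ``conformal invariance'' in this context.

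For (2), I would simply note that $\DDd_1^\ast(f_1,f_2)=0$ forces $\di f_1 = \dual\di f_2$; pairing with $\di f_1$ in $L^2$ and using Stokes (equivalently: $\dual\di f_2$ is co-closed while $\di f_1$ is exact, hence $L^2$-orthogonal) yields $\di f_1 \equiv 0$, and then $\di f_2 \equiv 0$, so both are constants since $S$ is connected. For (3), the vanishing $\DDd_2^\ast X=0$ is by definition the conformal Killing equation $\LL_X\gd = (\Divd X)\gd$; on the round $2$-sphere the conformal algebra is $\mathfrak{so}(3,1)$, of dimension $6$ (three rotations and three boosts), giving the stated dimension count.

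For (1), I would invoke the standard Bochner--Weitzenb\"ock identities on $(S,\gd)$: for a $1$-form $X$,
\[
\int_S \lrpar{|\Divd X|^2 + |\Curld X|^2}\, d\mu = \int_S \lrpar{|\Nd X|^2 + K|X|^2}\, d\mu,
\]
and an analogous identity for symmetric traceless $2$-tensors $W$,
\[
\int_S |\Divd W|^2\, d\mu = \int_S \lrpar{\tfrac12|\Nd W|^2 + 2K|W|^2}\, d\mu,
\]
both obtained by integration by parts and commuting covariant derivatives against the curvature tensor, which on a surface is $R_{ABCD} = K(\gd_{AC}\gd_{BD}-\gd_{AD}\gd_{BC})$. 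Since $K>0$ on any Riemannian $2$-sphere sufficiently close to the round metric (and in any case $\int K>0$ by Gauss--Bonnet, combined with the explicit positivity on the round reference), $\DDd_1 X=0$ and $\DDd_2 W=0$ force $X\equiv 0$ and $W\equiv 0$ respectively. The main (mild) obstacle here is simply bookkeeping the Weitzenb\"ock identity for traceless symmetric $2$-tensors with the correct curvature coefficient; this is standard material in \cite{ChrKl93}, Chapter 2. Finally, (4) and (5) follow from the $L^2$-adjoint identities $\langle \DDd_1 X, (f_1,f_2)\rangle = \langle X, \DDd_1^\ast(f_1,f_2)\rangle$ and $\langle \DDd_2 W, X\rangle = \langle W, \DDd_2^\ast X\rangle$, together with Fredholm theory for the elliptic operators $\DDd_1\DDd_1^\ast$ and $\DDd_2\DDd_2^\ast$ on $S$: the closed $L^2$-range of $\DDd_i$ is the orthogonal complement of $\ker \DDd_i^\ast$, so (4) follows from (2) (complement of constants $=$ mean-zero functions) and (5) follows from (3) (complement of conformal Killing fields).
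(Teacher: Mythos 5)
The paper does not actually prove this lemma; it states it as ``a paraphrase of the material in \cite{ChrKl93}'' and refers the reader there. Your proposal is a faithful reconstruction of the Christodoulou--Klainerman argument: the Weitzenb\"ock identities for $\DDd_1$ and $\DDd_2$, the direct pointwise identification of $\ker\DDd_1^\ast$ (constants) and $\ker\DDd_2^\ast$ (conformal Killing fields), and then Fredholm/adjointness to read off the $L^2$-ranges as orthogonal complements. Items (2)--(5) are correct as you have written them, and your identification of the conformal covariance of $\DDd_2^\ast$ in item (6) --- $\tilde\DDd_2^\ast X = e^{2\om}\DDd_2^\ast X$ when $X$ is held fixed as a vector field, so the kernel is preserved --- is the right reading of what ``conformally invariant'' means here.

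Two small cautions. First, your Bochner argument for item (1) genuinely requires pointwise $K>0$: the identity $\int(|\Divd X|^2+|\Curld X|^2)=\int(|\Nd X|^2+K|X|^2)$ forces $X\equiv0$ from $\DDd_1X=0$ only when $K\geq0$ (with $K>0$ somewhere), and $\int K=4\pi$ from Gauss--Bonnet alone does not rescue the argument if $K$ changes sign. This matches what \cite{ChrKl93} actually assume (their spheres carry metrics uniformly close to round), and it is all the paper needs since it applies the lemma at and near $\gac$. But the lemma as stated allows an arbitrary Riemannian metric on $S\cong\SSS^2$, where the clean argument is topological rather than curvature-based: $\ker\DDd_1$ is the space of harmonic $1$-forms, which vanishes since $H^1(\SSS^2)=0$, and $\ker\DDd_2$ corresponds to holomorphic quadratic differentials, which vanish on genus zero. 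It would be worth flagging this so the reader sees that the triviality is really topological. Second, the exact numerical coefficient in the Weitzenb\"ock formula for traceless symmetric $2$-tensors is not load-bearing for your argument and matches the ChrKl normalization up to an overall factor of $2$; stating it as ``a positive multiple of $K$'' would sidestep the bookkeeping entirely.
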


\subsection{Tensor spherical harmonics} \label{SECfourierSpheres} Tensor spherical harmonics are defined on the standard round unit sphere as follows.
\begin{definition}[Tensor spherical harmonics] Introduce the following spherical harmonics functions, vectorfields and tracefree symmetric $2$-tensors.
\begin{enumerate}
\item For integers $l\geq0$, $-l \leq m \leq l$, let $Y^{(lm)}$ be the standard (real-valued) spherical harmonics on the round unit sphere $S_1$. 
\item For $l\geq1$, $-l \leq m \leq l$, define the vectorfields
\begin{align*} 
\begin{aligned} 
E^{(lm)} := \frac{1}{\sqrt{l(l+1)}} \DDd_1^\ast (Y^{(lm)},0), \,\, H^{(lm)} :=\frac{1}{\sqrt{l(l+1)}} \DDd_1^\ast (0, Y^{(lm)}).
\end{aligned} 
\end{align*}
The vectorfields $E^{(lm)}$ and $H^{(lm)}$ are called \emph{electric} and \emph{magnetic}, respectively.

\item For $l\geq2$, $-l \leq m \leq l$, define the tracefree symmetric $2$-tensors
\begin{align*} 
\begin{aligned} 
\psi^{(lm)}:= \frac{1}{\sqrt{\half l(l+1)-1}} \DDd_2^\ast \lrpar{E^{(lm)}}, \,\, \phi^{(lm)} := \frac{1}{\sqrt{\half l(l+1)-1}} \DDd_2^\ast \lrpar{H^{(lm)}}.
\end{aligned} 
\end{align*}
The tensors $\psi^{(lm)}$ and $\phi^{(lm)}$ are called \emph{electric} and \emph{magnetic}, respectively.

\end{enumerate}

\end{definition}

\ni The following lemma is a summary of properties of spherical harmonics, see for example \cite{Czimek1} for more details and proofs.
\begin{lemma} The following holds.
\begin{enumerate}
\item On the round unit sphere $S_1$, $L^2$-integrable functions $f$, vectorfields $X$ and tracefree symmetric $2$-tensors $V$ can be decomposed as follows,
\begin{align*}
f =& \sum\limits_{l\geq0} \sum\limits_{-l\leq m \leq l} f^{lm} Y^{(lm)}, \\
X=&  \sum\limits_{l\geq1} \sum\limits_{-l\leq m \leq l} X^{lm}_E E^{(lm)} + X^{lm}_H H^{(lm)},\\
V =&  \sum\limits_{l\geq2} \sum\limits_{-l\leq m \leq l} V^{lm}_\psi \psi^{(lm)} +V^{lm}_\phi \phi^{(lm)},
\end{align*}
where 
\begin{align*} 
\begin{aligned} 
f^{(lm)} :=& \int\limits_{S_1} f Y^{(lm)} d\mu_{\gac}, &&\\
X_E^{(lm)} :=& \int\limits_{S_1} X \cdot E^{(lm)} d\mu_{\gac}, &
X_H^{(lm)} :=& \int\limits_{S_1} X \cdot H^{(lm)} d\mu_{\gac}, \\
V_\psi^{(lm)} :=& \int\limits_{S_1} V \cdot \psi^{(lm)} d\mu_{\gac}, &
V_\phi^{(lm)} :=& \int\limits_{S_1} V \cdot \phi^{(lm)} d\mu_{\gac}, 
\end{aligned} 
\end{align*}
where $d\mu_\gac$ denotes the volume element of the standard round unit metric on $S_1$ and $\cdot$ denotes the product with respect to $\gac$.
\item It holds that for $l\geq1$,
\begin{align} \begin{aligned}
(\di f)_E^{(lm)}  =& - \sqrt{l(l+1)} f^{(lm)}, &(\di f)_H^{(lm)}  =& 0,\\
(\DDd_1^\ast(0,f))_E^{(lm)}  =& 0, &
(\DDd_1^\ast(0,f))_H^{(lm)}  =& \sqrt{l(l+1)} f^{(lm)},\\
(\Divdo X)^{(lm)} =& \sqrt{l(l+1)} X_E^{(lm)}, &&
\end{aligned}\label{EQFourierMultiplier7778}\end{align}
and for $l\geq2$,
\begin{align} \begin{aligned}
\DDd_2^\ast(X)_\psi^{(lm)} =& \sqrt{\half l(l+1)-1} \, X_E^{(lm)}, & \DDd_2^\ast(X)_\phi^{(lm)} =& \sqrt{\half l(l+1)-1} \, X_H^{(lm)}, \\
(\Divdo V)_E^{(lm)}=& \sqrt{\half l(l+1)-1} \, V_\psi^{(lm)}, &
(\Divdo V)_H^{(lm)}=& \sqrt{\half l(l+1)-1} \, V_\phi^{(lm)}.
\end{aligned}\label{EQFouriermultipliers333}\end{align}
\item The operator $\DDd_1$ is a bijection between vectorfields and pairs of functions $(f,g)$ with vanishing means,
\begin{align*}
\DDd_1: X^{[l\geq1]} \to (Y^{[l\geq1]}, Y^{[l\geq1]}).
\end{align*}
Moreover, the following restrictions are bijections:
\begin{align*}
\DDd_1:& E^{[l\geq1]} \to (Y^{[l\geq1]}, 0), \\
\DDd_1:& H^{[l\geq1]} \to (0, Y^{[l\geq1]}).
\end{align*}
The spherical harmonics vectorfields of mode $l=1$ form the space of conformal Killing vectorfields of the unit round sphere.
\item The operator $\DDd_2$ is a bijection between tracefree symmetric $2$-tensors and vectorfields of modes $l\geq2$, 
\begin{align*}
\DDd_2: V^{[l\geq2]} \to X^{[l\geq2]}.
\end{align*}
 Moreover, the following mappings are bijections:
\begin{align*}
\DDd_2: \psi^{[l\geq2]} \to E^{[l\geq2]}, \,\,
\DDd_2: \phi^{[l\geq2]} \to H^{[l\geq2]}.
\end{align*}
\item Let $k\geq0$ be an integer. There exists a constant $C_k>0$, depending only on $k$, such that for scalar functions $f$, vectorfields $X$ and symmetric tracefree $2$-tensors $V$ on $S_1$, we have the following equivalence of norms,
\begin{align*} 
\begin{aligned} 
\sum\limits_{0 \leq k' \leq k} \Vert \Nd^{k'} f \Vert^2_{L^2(S_1)} \sim& \sum\limits_{l\geq0} \sum\limits_{-l\leq m \leq l} (l+1)^{2k} \lrpar{ f^{(lm)}}^2, \\
\sum\limits_{0 \leq k' \leq k} \Vert \Nd^{k'} X \Vert^2_{L^2(S_1)} \sim& \sum\limits_{l\geq1} \sum\limits_{-l\leq m \leq l} (l+1)^{2k} \lrpar{ \lrpar{ X_E^{(lm)}}^2+ \lrpar{ X_H^{(lm)}}^2}, \\
\sum\limits_{0 \leq k' \leq k} \Vert \Nd^{k'} V \Vert^2_{L^2(S_1)} \sim& \sum\limits_{l\geq2} \sum\limits_{-l\leq m \leq l} (l+1)^{2k} \lrpar{ \lrpar{ V_\psi^{(lm)}}^2+ \lrpar{ V_\phi^{(lm)}}^2}.
\end{aligned} 
\end{align*}

\end{enumerate}
\end{lemma}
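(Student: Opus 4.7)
The statement is the classical Littlewood--Paley type characterization of Sobolev norms on the round two--sphere in terms of tensor spherical harmonic coefficients. My approach is to reduce all three norm equivalences to the scalar case via the Hodge operator formalism of parts (2)--(4) of the lemma, and then to iterate.

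\emph{Step 1 (base case $k=0$).} The collections $\{Y^{(lm)}\}$, $\{E^{(lm)},H^{(lm)}\}$ and $\{\psi^{(lm)},\phi^{(lm)}\}$ are $L^{2}$--orthonormal: orthonormality of $\{Y^{(lm)}\}$ is standard, and for the vector/tensor families it follows from integration by parts together with the Fourier multiplier identities \eqref{EQFourierMultiplier7778} and \eqref{EQFouriermultipliers333}. Completeness in the vector case follows from the Hodge decomposition $X = \DDd_1^{\ast}(f_1,f_2)$ with $(f_1,f_2)$ orthogonal to constants (part (4) of the preceding Hodge lemma), expanding $(f_1,f_2)$ in scalar spherical harmonics, and applying the definitions of $E^{(lm)},H^{(lm)}$; the tensor case is analogous using $\DDd_2^\ast$ and part (5) of the Hodge lemma, noting that the conformal Killing vector fields are exactly the mode $l=1$ electric and magnetic vector harmonics (so the obstructions match up). Parseval then yields the $k=0$ statement for each of the three spaces.

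\emph{Step 2 (from $k$ to $k+1$ via Bochner).} On the round unit sphere with Gauss curvature $K\equiv 1$, the Bochner--Weitzenb\"ock identities read, schematically,
\begin{align*}
\int_{S_1}|\Nd f|^{2}\,d\mu_{\gac} &= \int_{S_1}|\DDd_1^{\ast}(f,0)|^{2}\,d\mu_{\gac},\\
\int_{S_1}|\Nd X|^{2}\,d\mu_{\gac} &= 2\int_{S_1}|\DDd_2^{\ast}X|^{2}\,d\mu_{\gac} + \int_{S_1}|X|^{2}\,d\mu_{\gac} - (\text{lower order from }K),
\end{align*}
with an analogous identity for symmetric tracefree $2$--tensors relating $\|\Nd V\|_{L^2}^{2}$ to $\|\Divd V\|_{L^2}^{2}$ plus a curvature correction. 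Substituting the spherical harmonic expansions and using \eqref{EQFourierMultiplier7778}--\eqref{EQFouriermultipliers333} converts each integrated $\Nd$ into a Fourier multiplier of size $l+1$, so that one obtains
\begin{equation*}
\|\Nd f\|_{L^{2}(S_1)}^{2} \sim \sum_{l,m} l(l+1)\,(f^{(lm)})^{2},
\end{equation*}
and similarly for vectors (with $l(l+1)$ weight on $X_E^{(lm)},X_H^{(lm)}$) and tensors (with $l(l+1)(\tfrac12l(l+1)-1)$ weight on $V_{\psi}^{(lm)},V_{\phi}^{(lm)}$, which is comparable to $(l+1)^{4}$ for $l\geq 2$). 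Induction on $k$, applied to $\Nd^{k'}f$, $\Nd^{k'}X$, $\Nd^{k'}V$, then produces the general $(l+1)^{2k}$--weighted characterization.

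\emph{Main obstacle.} The delicate part is the induction step in the tensor case: $\Nd$ does not commute with $\DDd_2^{\ast}$, and iterated applications generate lower order commutator terms whose size is controlled by $K$ and its derivatives. On the round sphere these commutators produce only zeroth order polynomials in $l(l+1)$, so after summing the Fourier series one always obtains the same weight $(l+1)^{2k}$ up to multiplicative constants depending on $k$; nonetheless, tracking this bookkeeping carefully (keeping both directions of the equivalence $\sim$) is the one nontrivial point of the argument. Once done, the three statements follow uniformly, with $C_k$ depending only on $k$.
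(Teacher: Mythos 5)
The paper itself does not prove this lemma: it is stated as a ``summary of properties of spherical harmonics'' with a pointer to the reference~\cite{Czimek1}, so there is no in-text proof to compare against. Your outline is the standard one and is mathematically sound. The reduction of the norm equivalences in part (5) to the scalar case via orthonormality/Parseval at $k=0$, followed by iteration using Bochner--Weitzenb\"ock identities to trade powers of $\Nd$ for powers of $l(l+1)$, is exactly the argument one would write out in full.

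Two small remarks on the write-up. First, in Step 1 there is a mild circularity of presentation: you say orthonormality of $\{E^{(lm)},H^{(lm)}\}$ and $\{\psi^{(lm)},\phi^{(lm)}\}$ ``follows from integration by parts together with the Fourier multiplier identities,'' but those identities are themselves essentially a restatement of orthonormality together with the multiplier action of $\DDd_1, \DDd_2$. The cleaner statement is that orthonormality follows directly from integration by parts and the scalar identity $\DDd_1\DDd_1^\ast = -\Ldo$ on the diagonal (and the corresponding identity for $\DDd_2\DDd_2^\ast$), with no appeal to \eqref{EQFourierMultiplier7778}--\eqref{EQFouriermultipliers333}, which can then be \emph{deduced} once orthonormality is in hand. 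Second, your Bochner identities are written schematically (``lower order from $K$''). On the unit round sphere these are genuine identities with explicit constant coefficients — e.g.\ $\int |\Nd X|^2 + K|X|^2 = \int (\Divd X)^2 + (\Curld X)^2$ and $\int |\Nd V|^2 + 2K|V|^2 = 2\int |\Divd V|^2$ — and since $K\equiv 1$ the curvature terms are zeroth-order multipliers, exactly as you say. You correctly identify the commutator bookkeeping in the tensor induction as the one place requiring care, and on the round sphere it does go through as claimed. I see no substantive gap.
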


\ni \textbf{Notation.} Given a scalar function
\begin{align*} 
\begin{aligned} 
f= \sum\limits_{l\geq0} \sum\limits_{-l\leq m \leq l} f^{lm} Y^{(lm)},
\end{aligned} 
\end{align*}
we denote, for integers $l' \geq0$,
\begin{align*} 
\begin{aligned} 
f^{[l']} = \sum\limits_{l=l'} \sum\limits_{-l\leq m \leq l} f^{lm} Y^{(lm)}, \,\, f^{[\geq l']} = \sum\limits_{l\geq l'} \sum\limits_{-l\leq m \leq l} f^{lm} Y^{(lm)}, \,\,f^{[\leq l']} = \sum\limits_{0\leq l\leq l'} \sum\limits_{-l\leq m \leq l} f^{lm} Y^{(lm)}.
\end{aligned} 
\end{align*}
Similarly for vectorfields $X$ and symmetric tracefree $2$-tensors $V$. Moreover, denote the electric part and the magnetic part of a vectorfield $X$ by $X_E$ and $X_H$, respectively, and similarly for symmetric tracefree $2$-tensors $V$ by $V_\psi$ and $V_\phi$, respectively.
\subsection{Spectral analysis of differential operators} \label{SECspecAnalysis} In this section, we discuss the differential operators that appeared in Section \ref{SEClinearizedProblem}. \\ 

\ni \textbf{Analysis of $\DDd_2^\ast \Divdo +1$.} Let $V$ be a tracefree symmetric $2$-tensor,
\begin{align*}
V= \sum\limits_{l\geq2}\sum\limits_{-l\leq m \leq l} V_\psi^{(lm)} \psi^{lm} + V_\phi^{(lm)} \phi^{lm}.
\end{align*}
Then 
\begin{align*}
(\DDd_2^\ast \Divdo +1)V = \sum\limits_{l\geq2}\sum\limits_{-l\leq m \leq l} \underbrace{\lrpar{\half l(l+1) -1 +1}}_{>0 \text{ for } l\geq2}\lrpar{V_\psi^{(lm)} \psi^{lm} + V_\phi^{(lm)} \phi^{lm}}.
\end{align*}
Hence the operator has no kernel and we have the following elliptic estimate. Let $W$ be a given tracefree symmetric $2$-tensor. Then there exists a unique solutions $V$ to
\begin{align*} 
\begin{aligned} 
\lrpar{\DDd_2^\ast \Divdo +1}V = W,
\end{aligned} 
\end{align*}
and for integers $k\geq0$ we have the estimate,
\begin{align} 
\begin{aligned} 
\Vert V \Vert_{H^{k+2}(S_1)} \les \Vert W \Vert_{H^{k}(S_1)}.
\end{aligned}\label{EQestimate3app}
\end{align}

\ni \textbf{Analysis of $(\Divdo \DDd_2^\ast +1+ \di\Divdo)$.} Let $X$ be a vectorfield,
\begin{align*}
X= \sum\limits_{l\geq1}\sum\limits_{-l\leq m \leq l} X_E^{(lm)} E^{(lm)} + X_H^{(lm)} H^{(lm)}.
\end{align*}
Then it holds that
\begin{align*}
\Divdo \DDd_2^\ast X=& \sum\limits_{l\geq1}\lrpar{\half l(l+1)-1} \lrpar{X_E^{(lm)} E^{(lm)} + X_H^{(lm)} H^{(lm)}},
\end{align*}
and
\begin{align*}
\di \Divdo X=& \sum\limits_{l\geq1}\sum\limits_{-l\leq m \leq l}(-l(l+1)) X_E^{(lm)}E^{(lm)}.
\end{align*}
Therefore
\begin{align*}
&\lrpar{\Divdo \DDd_2^\ast  + 1 + \di \Divdo } X \\
=&\sum\limits_{l\geq1}\sum\limits_{-l\leq m \leq l} X_E^{(lm)} E^{(lm)} \lrpar{\lrpar{\half l(l+1)-1}  + 1 - l(l+1) } \\
&+\sum\limits_{l\geq1}\sum\limits_{-l\leq m \leq l} X_H^{(lm)} H^{lm} \lrpar{ \lrpar{\half l(l+1)-1} +1} \\
=&\sum\limits_{l\geq1}\sum\limits_{-l\leq m \leq l} X_E^{(lm)} E^{lm} \underbrace{\lrpar{-\half l(l+1) }}_{<0 \text{ for } l\geq1} +\sum\limits_{l\geq1} X_H^{(lm)} H^{lm} \underbrace{\lrpar{\half l(l+1) } }_{>0 \text{ for } l\geq1}.
\end{align*}
Hence the operator has no kernel and we have the following elliptic estimate. Let $Y$ be a given vectorfield. Then there exists a unique solution $X$ to
\begin{align*} 
\begin{aligned} 
(\Divdo \DDd_2^\ast +1+ \di\Divdo)X = Y,
\end{aligned} 
\end{align*}
and for integers $k\geq0$ we have the estimate,
\begin{align} 
\begin{aligned} 
\Vert X \Vert_{H^{k+2}(S_1)} \les \Vert Y \Vert_{H^{k}(S_1)}.
\end{aligned} \label{EQestimate4app}
\end{align}
By \eqref{EQFouriermultipliers333} and \eqref{EQestimate4app}, it follows in particular that the operator
\begin{align*} 
\begin{aligned} 
\DDd_2^\ast (\Divdo \DDd_2^\ast +1+ \di\Divdo) \Divdo
\end{aligned} 
\end{align*}
has no kernel and admits the following estimate. For any given symmetric tracefree $2$-tensor $W$, there exists a unique solution $V$ to 
\begin{align*} 
\begin{aligned} 
\DDd_2^\ast (\Divdo \DDd_2^\ast +1+ \di\Divdo) \Divdo V =W
\end{aligned} 
\end{align*}
satisfying
\begin{align} 
\begin{aligned} 
\Vert V \Vert_{H^{k+4}(S_1)} \les \Vert W \Vert_{H^{k}(S_1)}.
\end{aligned} \label{EQestimate42app}
\end{align}

\ni \textbf{Analysis of $(\Divdo\DDd_2^\ast+1+\half\di \Divdo)\di$.} Let $f$ be a scalar function,
\begin{align*}
f= \sum\limits_{l\geq0} \sum\limits_{-l\leq m \leq l} f^{(lm)} Y^{lm}.
\end{align*}
Then
\begin{align*}
\Divdo \DDd_2^\ast \di f = \sum\limits_{l\geq0} \sum\limits_{-l\leq m \leq l} \lrpar{\half l(l+1)-1} \lrpar{-\sqrt{l(l+1)}} E^{(lm)},
\end{align*}
and
\begin{align*}
\di \Divdo \di f = \sum\limits_{l\geq0} \sum\limits_{-l\leq m \leq l} \lrpar{-\sqrt{l(l+1)}} \lrpar{-l(l+1)} f^{(lm)} E^{(lm)}.
\end{align*}
Therefore,
\begin{align*}
&\lrpar{\Divdo \DDd_2^\ast + 1 + \half \di\Divdo } \di f \\
=& \sum\limits_{l\geq0} \sum\limits_{-l\leq m \leq l} \underbrace{\lrpar{\lrpar{\half l(l+1)-1}+1 +\half \lrpar{- l(l+1)}}}_{=0} \lrpar{-\sqrt{l(l+1)}} E^{lm}.
\end{align*}
This shows that 
\begin{align} \label{EQidentity5app}
\lrpar{ \Divdo \DDd_2^\ast + 1 + \half \di\Divdo } \di f = 0.
\end{align}

\ni \textbf{Analysis of the operator $ \lrpar{2-\Divdo \DDd_2^\ast} $.} Let $X$ be a vectorfield,
\begin{align*}
X= \sum\limits_{l\geq1}\sum\limits_{-l\leq m \leq l} X_E^{(lm)} E^{(lm)} + X_H^{(lm)} H^{(lm)}.
\end{align*}
Then it holds that
\begin{align*}
(2-\Divdo \DDd_2^\ast) X=& \sum\limits_{l\geq1}\lrpar{2- \lrpar{\half l(l+1)-1}} \lrpar{X_E^{(lm)} E^{(lm)} + X_H^{(lm)} H^{(lm)}} \\
=& \sum\limits_{l\geq1}\half \lrpar{6- l(l+1)} \lrpar{X_E^{(lm)} E^{(lm)} + X_H^{(lm)} H^{(lm)}}
\end{align*}
We conclude that the kernel of the operator $ \lrpar{2-\Divdo \DDd_2^\ast} $ is given by the set of vectorfields
\begin{align} 
\begin{aligned} 
\{ X: X = X^{[2]} \}.
\end{aligned} \label{EQkernelofDUomboperator}
\end{align}
Further, let $Y$ be a vectorfield such that $Y=Y^{[\geq3]}$. Then there exists a unique vectorfield $X$ such that $X=X^{[\geq3]}$ and
\begin{align*} 
\begin{aligned} 
(2-\Divdo \DDd_2^\ast) X=Y,
\end{aligned} 
\end{align*}
with the estimate
\begin{align*} 
\begin{aligned} 
\Vert X \Vert_{H^{k+2}(S_1)} \les \Vert Y \Vert_{H^{k}(S_1)}.
\end{aligned} 
\end{align*}

\ni In particular, it follows moreover that for any given function $f$ with $f=f^{[\geq3]}$, there is a unique solution $V$ with $V=V^{[\geq3]}$ of
\begin{align*} 
\begin{aligned} 
\Divdo \lrpar{2-\Divdo \DDd_2^\ast} \Divdo V = f.
\end{aligned} 
\end{align*}
with the estimate
\begin{align} 
\begin{aligned} 
\Vert X \Vert_{H^{k+4}(S_1)} \les \Vert f \Vert_{H^{k}(S_1)}.
\end{aligned} \label{EQappEstimateHodge6}
\end{align}


\end{document}